\renewcommand*{\backref}[1]{}
\renewcommand*{\backrefalt}[4]{%
\ifcase #1 %
(Not cited.)%
\or
(Cited on page~#2.)%
\else
(Cited on pages~#2.)%
\fi}
\definecolor{linkcol}{rgb}{0,0,0.4} 
\definecolor{citecol}{rgb}{0.5,0,0} 
\theoremstyle{definition}
\newtheorem*{lemma}{Lemma}
\newtheorem*{theorem}{Theorem}
\newtheorem*{property}{Property}
\newtheorem*{definition}{Definition}
\def\e     {\mathrm{e}}
\def\const {\mathrm{const}}
\def\diag  {\operatorname{diag}}
\def\sign  {\operatorname{sign}}
\def\rank  {\operatorname{rank}}
\def\med   {\operatorname{med}}
\def\tr    {\operatorname{Tr}}
\def\argmin{\operatornamewithlimits{arg\,min}}
\def\ft    {\mathrm{FT}} 
\newcommand\fd  [1][d]{   f^{ \scriptscriptstyle (#1)}} %
\newcommand\rod [1][d]{\rho^{ \scriptscriptstyle (#1)}} 
\newcommand\road[1][d]{\zeta^{\scriptscriptstyle (#1)}} 
\def\rob{\rod[1]}                                       
\def\ros{\rod[\text{s}]}                                
\def\roa{\road[1]}                                      
\def\roc{\road[2]}                                      
\def\rhoP  {\rho^{ \scriptscriptstyle (\text{P})}} 
\def\rhoS  {\rho^{ \scriptscriptstyle (\text{S})}} 
\def\tauK  {\tau^{ \scriptscriptstyle (\text{K})}} 
\def\betaB {\beta^{\scriptscriptstyle (\text{B})}} 
\def\rhoK  {\rho^{ \scriptscriptstyle (\text{K})}} %
\def\rhoB  {\rho^{ \scriptscriptstyle (\text{B})}} %
\def\tUU{\tau^{\scriptscriptstyle \text{UU}}}
\def\tLL{\tau^{\scriptscriptstyle \text{LL}}}
\def\tUL{\tau^{\scriptscriptstyle \text{UL}}}
\def\tLU{\tau^{\scriptscriptstyle \text{LU}}}
\renewcommand\ij{i\!j}
\renewcommand\imath{\mathrm{i}}
  \newcommand{\bra   }[1]{\langle #1 \vert}
  \newcommand{\ket   }[1]{\vert   #1 \rangle}
  \newcommand{\braket}[2]{\langle #1 \vert #2 \rangle}
  \newcommand{\vev   }[1]{\langle #1 \rangle}
  \newcommand{\esp   }[1]{\mathds{E}[#1]}
  \newcommand{\var   }[1]{\mathds{V}[#1]}
  \newcommand{\pr    }[1]{\mathds{P}[#1]}
  \newcommand{\1     }[1]{\mathds{1}_{\{#1\}}}
  \newcommand{\Esp   }[1]{\mathds{E}\!\left[#1\right]}
  \newcommand{\Var   }[1]{\mathds{V}\!\left[#1\right]}
\renewcommand{\Pr    }[1]{\mathds{P}\!\left[#1\right]}
  \newcommand{\Exp   }[1]{      \exp\!\left(#1\right)}
  \newcommand{\Erf  }[1]{\operatorname{erf}\!\left(#1\right)}
\renewcommand{\d     }[1]{\mathrm{d}#1} 
  \newcommand{\pdf }[1][]{\mathcal{P}_{\negthickspace\ifthenelse{\isempty{#1}}{}{#1}}}
  \newcommand{\cdf }[1][]{\mathcal{P}_{\!\negthickspace{\scriptscriptstyle <}\ifthenelse{\isempty{#1}}{}{,#1}}}
  \newcommand{\qdf }[1][]{\mathcal{P}_{\!\negthickspace{\scriptscriptstyle <}\ifthenelse{\isempty{#1}}{}{,#1}}^{-1}} 
  \newcommand{\tcdf}[1][]{\mathcal{P}_{\!\negthickspace{\scriptscriptstyle >}\ifthenelse{\isempty{#1}}{}{,#1}}}
  \newcommand{\tqdf}[1][]{\mathcal{P}_{\!\negthickspace{\scriptscriptstyle >}\ifthenelse{\isempty{#1}}{}{,#1}}^{-1}} 
  \newcommand{\cop  }[2][]{C_{#1}#2}
  \newcommand{\copN }[2][]{\mathcal{C}_{#1}#2} %
  \newcommand{\CopN }[2][]{\mathcal{C}_{#1}\!\left(#2\right)} %
\newcommand{\ribracket}{] \mspace{-3 mu} ]}
\newcommand{\libracket}{[\mspace{-3 mu} [}
 \def \be  {\begin{equation}}
 \def \ee  {\end{equation}}
  \newcommand{\vect}[1]{\boldsymbol{#1}} 
  \newcommand{\mat }[1]{\mathrm{#1}}
\renewcommand{\top}{\dagger}
\def\Wei{\beta}                      
\def\ret{x}                          
\def\Ret{\mat{\MakeUppercase{\ret}}} 
\let\headruleORIG\headrule
\renewcommand{\headrule}{\color{black} \headruleORIG}
\def\cleardoublepage{\clearpage\if@twoside \ifodd\c@page\else%
  \hbox{}%
  \thispagestyle{empty}
  \newpage%
  \if@twocolumn\hbox{}\newpage\fi\fi\fi}
\begin{document}

\newgeometry{left=3.5cm,right=3cm,top=1.75cm,bottom=1.5cm} 
\begin{titlepage}
\begin{center}

\includegraphics[trim=-15 0 0 0,clip]{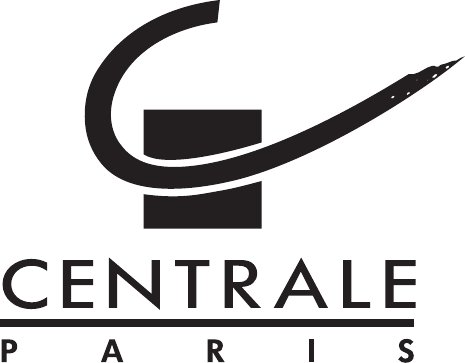} \\
\vspace*{.8cm}
\noindent {\Large \textbf{ECOLE CENTRALE\\ DES ARTS ET MANUFACTURES}} \\
\vfill
\noindent \fontsize{25}{30} \textbf{T H E S E} \\
\vfill
\noindent \normalsize pr\'esent\'ee par \\
\vspace*{0.2cm}
\noindent \Large R\'emy \textsc{Chicheportiche} \\
\vspace*{0.8cm}
\noindent \normalsize {pour l'obtention du} \\
\vspace*{0.2cm}
\noindent \Large \textsc{grade de Docteur} \\
\vfill
\noindent \large
\begin{tabular}{rl}
      \textbf{Specialit\'e :}& {Math\'ematiques appliqu\'ees}	    \\
      \textbf{Laboratoire :} & {Math\'ematiques Appliqu\'ees aux Syst\`emes (MAS) --- EA~4037}	    \\
      \textbf{Sujet :}       & \textbf{\textsc{D\'ependances non lin\'eaires en finance}}
\end{tabular}

\end{center}
\vfill
\noindent \normalsize {soutenue le \textbf{27 juin 2013} devant un jury compos\'e de:} \\[.5cm]
\noindent \normalsize 
\begin{tabular}{ll@{\quad -\quad }l}
      \textit{Directeurs :}     & Fr\'ed\'eric \textsc{Abergel}	        & Ecole Centrale Paris\\
                                & Anirban \textsc{Chakraborti}          & Ecole Centrale Paris\\\\
      \textit{Rapporteurs :}    & Yannick \textsc{Malevergne}           & Ecole de Management de Lyon\\
                                & Gr\'egory \textsc{Schehr}  		    & Universit\'e de Paris-Sud\\
      \textit{Examinateurs :}   & Jean-Philippe \textsc{Bouchaud}       & Capital Fund Management\\
                                & Jean-David \textsc{Fermanian}         & CREST\\
\end{tabular}
\vfill
\flushright{\texttt{2013-ECAP-0042}}
\end{titlepage}
\sloppy

\titlepage

\restoregeometry
\newgeometry{left=3.5cm,right=3cm,top=1.75cm,bottom=3.5cm} 

\begin{titlepage}
\begin{center}

\includegraphics[trim=-15 0 0 0,clip]{logo_nb_centraleparis-vectoriel} \\
\vspace*{.8cm}
\noindent {\Large \textbf{ECOLE CENTRALE\\ DES ARTS ET MANUFACTURES}} \\
\vspace*{0.3cm}
\noindent \textbf{PARIS} \\
\vfill
\noindent \fontsize{25}{30} \textbf{P H D\ \ T H E S I S} \\
\vspace*{1cm}
\noindent \normalsize {to obtain the title of} \\
\vspace*{0.5cm}
\noindent \Large \textsc{PhD of Science} \\
\noindent \Large {Specialty : {Applied Mathematics}}\\
\vfill
\noindent {\LARGE \textbf{{Non-linear~Dependences in~Finance}}} \\
\vfill
\noindent \normalsize Defended by \\
\vspace*{0.2cm}
\noindent \Large R\'emy \textsc{Chicheportiche} \\
\vspace*{0.3cm}
\noindent \normalsize on June 27 2013 \\
\end{center}
\vfill
\noindent \normalsize 
\begin{tabular}{ll@{\quad -\quad }l}
      \textit{Advisors :}	    & Fr\'ed\'eric \textsc{Abergel}	        & Ecole Centrale Paris\\
                                & Anirban \textsc{Chakraborti}          & Ecole Centrale Paris\\\\
      \textit{Reviewers :}	    & Yannick \textsc{Malevergne}           & Ecole de Management de Lyon\\
                                & Gr\'egory \textsc{Schehr}  		    & Universit\'e de Paris-Sud\\
      \textit{Examinators :}    & Jean-Philippe \textsc{Bouchaud}       & Capital Fund Management\\
                                & Jean-David \textsc{Fermanian}         & CREST\\
\end{tabular}

\end{titlepage}
\sloppy

\titlepage

\restoregeometry

\dominitoc

\pagenumbering{roman}

\cleardoublepage
\newgeometry{left=1.3in,right=0.9in,top=.85in,bottom=.95in} 

\noindent\rule[2pt]{\textwidth}{0.5pt}
\begin{center}
{\large\textbf{D\'ependances non-lin\'eaires en finance\\}}
\end{center}
{\large\textbf{R\'esum\'e:}}
La th\`ese est compos\'ee de trois parties. 
La partie I introduit les outils math\'ematiques et statistiques appropri\'es pour l'\'etude des d\'ependances, 
ainsi que des tests statistiques d'ad\'equation pour des distributions de probabilit\'e empiriques. 
Je propose deux extensions des tests usuels lorsque de la d\'ependance est pr\'esente dans les donn\'ees, 
et lorsque la distribution des observations a des queues larges.
Le contenu financier de la th\`ese commence \`a la partie II. 
J'y pr\'esente mes travaux concernant les d\'ependances transversales entre les s\'eries chronologiques 
de rendements journaliers d'actions, c'est \`a dire les forces instantan\'ees qui relient plusieurs actions entre elles
 et les fait se comporter collectivement plutôt qu'individuellement. 
 Une calibration d'un nouveau mod\`ele \`a facteurs est pr\'esent\'ee ici, 
 avec une comparaison \`a des mesures sur des donn\'ees r\'eelles.
Finalement, la partie III \'etudie les d\'ependances temporelles dans des s\'eries chronologiques individuelles, 
en utilisant les m\^emes outils et mesures de corr\'elations. 
Nous proposons ici deux contributions \`a l'\'etude du ``volatility clustering'', de son origine et de sa description: 
l'une est une g\'en\'eralisation du m\'ecanisme de r\'etro-action ARCH dans lequel les rendements sont auto-excitants, 
et l'autre est une description plus originale des auto-d\'ependances en termes de copule. 
Cette derni\`ere peut \^etre formul\'ee sans mod\`ele et n'est pas sp\'ecifique aux donn\'ees financi\`eres. 
En fait, je montre ici aussi comment les concepts de r\'ecurrences, records, r\'epliques et temps d'attente, 
qui caract\'erisent la dynamique dans les s\'eries chronologiques, peuvent \^etre \'ecrits dans la cadre unifi\'e des copules.
\\
{\large\textbf{Mots-cl\'es:}}
d\'ependances statistiques, copules, tests d'ad\'equation, processus stochastiques, s\'eries temporelles, models financiers, volatility clustering
\\
\noindent\rule[2pt]{\textwidth}{0.5pt}

\vfill

\noindent\rule[2pt]{\textwidth}{0.5pt}
\begin{center}
{\large\textbf{Non-linear dependences in finance\\}}
\end{center}
{\large\textbf{Abstract:}}
The thesis is composed of three parts. 
Part I introduces the mathematical and statistical tools that are relevant for the study of dependences, 
as well as statistical tests of Goodness-of-fit for empirical probability distributions. 
I propose two extensions of usual tests when dependence is present in the sample data 
and when observations have a fat-tailed distribution.
The financial content of the thesis starts in Part II. 
I present there my studies regarding the ``cross-sectional'' dependences among the time series of daily stock returns, 
i.e.\ the instantaneous forces that link several stocks together and 
make them behave somewhat collectively rather than purely independently. 
A calibration of a new factor model is presented here, together with a comparison to measurements on real data.
Finally, Part III investigates the temporal dependences of single time series, 
using the same tools and measures of correlation. 
I propose two contributions to the study of the origin and description of ``volatility clustering'': 
one is a generalization of the ARCH-like feedback construction where the returns are self-exciting, 
and the other one is a more original description of self-dependences in terms of copulas. 
The latter can be formulated model-free and is not specific to financial time series. 
In fact, I also show here how concepts like recurrences, records, aftershocks and waiting times, 
that characterize the dynamics in a time series can be written in the unifying framework of the copula.
\\
{\large\textbf{Keywords:}}
statistical dependences, copulas, goodness-of-fit tests, stochastic processes, time series, financial modeling, volatility clustering
\\
\noindent\rule[2pt]{\textwidth}{0.5pt}

\restoregeometry

\cleardoublepage


\section*{Remerciements / Acknowledgments}

Mes premiers remerciements vont \'evidemment \`a Jean-Philippe Bouchaud,
qui m'a offert l'opportunit\'e de m'engager sur la voie du doctorat \`a un moment o\`u j'en avais abandonn\'e l'id\'ee.
Je lui suis infiniment reconnaissant pour ses encouragements r\'ep\'et\'es, 
la profonde confiance qu'il m'a accord\'ee et pour son invraisemblable disponibilit\'e.

Je remercie Fr\'ed\'eric Abergel, directeur de la Chaire de finance quantitative de l'\'Ecole Centrale, 
 pour avoir accept\'e la supervision acad\'emique de mon doctorat, 
 et mis \`a ma disposition un environnement intellectuel et mat\'eriel favorable.
 Merci \`a Anirban Chakraborti pour sa co-supervision et son enthousiasme \`a me proposer de nouveaux projets.
 
J'adresse un remerciement particulier \`a mon jury de th\`ese:
\`a Yannick Maleverge et Gr\'egory Schehr pour avoir imm\'ediatement accept\'e la t\^ache de relecteur,
et \`a Jean-David Fermanian qui a port\'e un int\'er\^et \`a mon travail.

Je veux aussi exprimer ma gratitude aux dirigeants de Capital Fund Management pour s'\^etre impliqu\'es dans une convention CIFRE
et m'avoir procur\'e des conditions de travail des plus confortables,
ainsi qu'\`a tous les collaborateurs avec qui j'ai eu la chance d'interagir.
Je garderai longtemps le souvenir de la ``salle des stagiaires'' et de ses occupants
plus ou moins \'eph\'em\`eres, \`a commencer par Romain Allez et Bence T\'oth qui y ont s\'ejourn\'e avec moi le plus longtemps.

Mes passages \`a l'\'Ecole ont \'et\'e l'occasion d'\'echanges (scientifiques ou non) avec les 
membres de la chaire et du laboratoire MAS, en particulier avec Damien Challet et les autres doctorants:
 Alexandre Richard, Aymen Jedidi, Ban Zheng, Fabrizio Pomponio, Nicolas Huth, avec qui j'ai le plus interagi.
 Qu'ils soient ici cordialement salu\'es, de m\^eme que mes anciens 
 camarades
 Michael Kastoryano, Nicolas Cantale, Liliana Foletti, Stefanie Stantcheva et David Salfati.

Je n'aurais sans doute pas d'autre occasion d'exprimer ma gratitude 
\`a toutes les figures qui ont marqu\'e mon \'education.
Qu'il me soit donc permis de rendre ici un hommage tardif \`a
Mesdames Nelly Piguet, Regula Krattenmacher et Larissa Shargorodsky,
Messieurs Didier Deshusses, Christian Charvin, Frederic Hermann, Elie Prigent et Bernard Delez,
Monsieur Paco Carbonell, Ma\^itres Raymond Hyvernaud, Georges L\'eger, Jean-Marc Cagnet et Yanaki Altanov.

Je d\'edicace cette th\`ese \`a ma grande famille, proche et lointaine, qu'il est superflu de remercier.

\paragraph{}
\begin{flushright}Paris, juin 2013\end{flushright}%

\cleardoublepage


\defcitealias{chicheportiche2011goodness}{
    R.~Chicheportiche and J.-P.~Bouchaud, Goodness-of-fit tests with dependent observations, \emph{J.~Stat.~Mech.} \textbf{9}, P09003, 2011
}
\defcitealias{chicheportiche2012joint}{
    R.~Chicheportiche and J.-P.~Bouchaud, The joint distribution of stock returns is not elliptical, \emph{Int.~J.~Theo.~Appl.~Fin.} \textbf{15}(3), 1250019, 2012
}
\defcitealias{chicheportiche2012weighted}{
    R.~Chicheportiche and J.-P.~Bouchaud, Weighted Kolmogorov-Smirnov test: Accounting for the tails, \emph{Phys.~Rev.~E} \textbf{4}(86), 1115, 2012
}

\defcitealias{chicheportiche2012fine}{
    R.~Chicheportiche and J.-P.~Bouchaud, The fine structure of volatility feedback, arXiv preprint \mbox{qfin.ST/1206.2153}, Aug.~2012
}
\defcitealias{chicheportiche2013recurrences}{
    R.~Chicheportiche and A.~Chakraborti, A model-free characterization of recurrences in stationary time series, arXiv preprint \mbox{physics.data-an/1302.3704}, Feb.~2013
}
\defcitealias{remi}{
    R.~Chicheportiche and J.-P.~Bouchaud, A minimal factor model for non-linear dependences in stock returns
}
\defcitealias{chicheportiche2013fpt}{
    R.~Chicheportiche and J.-P.~Bouchaud, Some applications of first passage ideas in finance, to appear in \emph{First-Passage Phenomena and Their Applications} (R.~Metzler, G.~Oshanin and S.~Redner Eds.), World Scientific Publishers, May~2013
}

\defcitealias{chicheportiche2013timeseries}{
    R.~Chicheportiche and A.~Chakraborti, Non-linear and multi-points dependences in stationary time series: a copula approach
}
\defcitealias{chicheportiche2013multivariate}{
    A.~Richard, R.~Chicheportiche and J.-P.~Bouchaud, The sup of the 2D pinned Brownian sheet
}
\defcitealias{Pierre_inprep}{
    P.~Blanc, R.~Chicheportiche and J.-P.~Bouchaud, The fine structure of volatility II: intraday and overnight effects
}
\defcitealias{gould2013power}{
    M.~Gould, R.~Chicheportiche and J.-P.~Bouchaud, Re-assessing the tail distribution of financial returns
}
\defcitealias{chicheportiche2013ironing}{
    R.~Chicheportiche and J.-P.~Bouchaud, Ironing the copula surface
}

\section*{List of publications and submitted articles}\label{sec:publications}
\addcontentsline{toc}{section}{List of publications and submitted articles}

\subsubsection*{Published articles:}    
 \begin{list}{}
 {\leftmargin=3.5em \itemindent=-1em}
     \item[\cite{chicheportiche2011goodness}] \citetalias{chicheportiche2011goodness}
     \item[\cite{chicheportiche2012joint}]    \citetalias{chicheportiche2012joint}
     \item[\cite{chicheportiche2012weighted}] \citetalias{chicheportiche2012weighted}
 \end{list}

\paragraph{Pre-prints and submitted articles:}
 \begin{list}{}
 {\leftmargin=3.5em \itemindent=-1em}
     \item[\cite{chicheportiche2012fine}]         \citetalias{chicheportiche2012fine}
     \item[\cite{chicheportiche2013recurrences}]  \citetalias{chicheportiche2013recurrences}
     \item[\cite{remi}]                           \citetalias{remi}
     \item[\cite{chicheportiche2013fpt}]          \citetalias{chicheportiche2013fpt}
 \end{list}

\paragraph{Work in progress, working papers and collaborations:}
 \begin{list}{}
 {\leftmargin=3.5em \itemindent=-1em}
     \item[\cite{chicheportiche2013timeseries}]   \citetalias{chicheportiche2013timeseries}
     \item[\cite{Pierre_inprep}]                  \citetalias{Pierre_inprep}
     \item[\cite{chicheportiche2013multivariate}] \citetalias{chicheportiche2013multivariate}
     \item[\cite{chicheportiche2013ironing}]      \citetalias{chicheportiche2013ironing}
     \item[\cite{gould2013power}]                 \citetalias{gould2013power}
 \end{list}



\cleardoublepage
\nomenclature{RHS}{Right-hand side}%
\nomenclature{LHS}{ Left-hand side}%
\printnomenclature[2.5cm]

\tableofcontents

\mainmatter

\chapter{Introduction}\label{chap:intro}
\minitoc
\clearpage\section{Introduction en fran\c cais}

\subsection{Les march\'es financiers comme syst\`emes complexes}
L'économie pr\'esente des traits caract\'eristiques des syst\`emes complexes:
un grand nombre de variables (individus, entreprises, contrats, maturit\'es, etc.),
des agents h\'et\'erog\`enes (d'o\`u une difficult\'e \`a isoler des groupes),
de l'information partielle et de l'al\'ea, 
de l'auto-organisation et des ph\'enom\`enes spontan\'es (interactions fortes, non-perturbativit\'e),
une dynamique de r\'eseau (entres banques, pays, contreparties, entreprises, etc.) et des interactions \`a longue port\'ee, 
un quasi-continuum d'\'echelles de temps, de la non-stationnarit\'e, des trajectoires avec m\'emoire 
et de l'exposition \`a des forces ext\'erieures (r\'egulations, catastrophes, etc.).

Cette r\'ealit\'e contraste fortement avec les hypoth\`eses des th\'eories \'economique et financi\`ere classiques,
comme l'homog\'en\'eit\'e (l'existence d'un agent repr\'esentatif), une rationalit\'e parfaite,
une capacit\'e de pr\'evision \`a horizon infini, un monde \`a l'\'equilibre dont les perturbations sont instantan\'ement ajust\'ees, 
l'absence de m\'emoire (toute l'information pass\'ee est refl\'et\'ee dans l'\'etat actuel du monde), \ldots

Pour traiter ces enjeux sous un angle plus empirique, des communaut\'es de chercheurs avec des points de vue 
diff\'erents ont progressivement investi le champ des sciences \'economiques et sociales.
Parmi elles, on trouve des sociologues quantitatifs, des \'econophysiciens, des statisticiens (physiciens statistiques ou biostatisticiens),
des math\'ematiciens appliqu\'es.
En effet, beaucoup des traits list\'es ci-dessus ont d\'ej\`a \'et\'e \'etudi\'es en divers champs des sciences naturelles,
o\`u des m\'ethodes ont \'et\'e  mises au point pour prendre en compte l'al\'ea, les interactions de r\'eseaux, les comportements collectifs, la persistance et la r\'eversion, etc.
\cite{anderson1988economy,arthur1997economy,blume2006economy,challet2005minority,galam2008sociophysics,kirman2010economic,abergel2013econophysics,bouchaud2013crises}

Parmi tous les champs de l'activit\'e socio-\'economique, la finance de march\'e pourrait bien \^etre le plus proche 
de ce que l'on peut raisonnablement \'etudier comme un syst\`eme physique:
du point de vue th\'eorique, le nombre de degr\'es de libert\'e est quelque peu r\'eduit par les r\`egles des places de march\'e 
(m\'ecanismes d'actions, carnets d'ordres, r\'egulations, contrats standards, divulgation de l'information) alors que le nombre d'acteurs et d'actions 
reste assez large pour autoriser des moyennages dans des ensembles statistiques; 
et du point de vue empirique, la finance de march\'e g\'en\`ere des quantit\'es astronomiques de donn\'ees 
rendant possible la ``r\'ep\'etition d'exp\'eriences'' sous hypoth\`eses, et le traitement des mesures
avec des outils de statistique appropri\'es.
En fait, il est \'etabli que certaines propri\'et\'es des march\'es financiers ressemblent \`a
celles de syst\`emes \'etudi\'es en physique statistique:
intermittence dans les \'ecoulements turbulents, bruit de craquement de mat\'eriaux en r\'eponse \`a un changement de conditions ext\'erieures, 
lignes de fracture dans les solides, interactions \`a longue port\'ee ou de champ moyen dans les syst\`emes de spins sur r\'eseau,
transitions de phases et criticalit\'e spontan\'ee, pour en citer quelques-unes.

\subsection{Champ d'application de cette th\`ese}
Cette th\`ese \'etudie les d\'ependances statistiques au sens g\'en\'eral,
aussi bien ``transversalement'' (entre variables al\'eatoires) que temporellement (r\'ealisations successives d'une variable).
Tous les r\'esultats th\'eoriques s'appliquent donc en principe \`a tout champ o\`u plusieurs variables interagissent, ou
o\`u des processus stationnaires ont de la m\'emoire.
Plus sp\'ecifiquement, l'\'etude est restreinte \`a des processus discrets avec des intervalles de temps \'equidistants,
ce qui signifie que les s\'eries chronologiques correspondantes sont \'echantillonn\'ees et n'ont pas besoin d'\^etre ``marqu\'ees'' temporellement.

En ce qui concerne le contenu appliqu\'e, le focus est sur les donn\'ees financi\`eres, 
plus pr\'ecis\'ement les s\'eries chronologiques de prix (ou rendements) d'actions \`a la fr\'equence relativement basse du jour.
Il s'agit principalement des prix des actions les plus liquides n\'egoci\'ees sur les march\'es \'etats-uniens,
et dont les historiques peuvent \^etre t\'el\'echarg\'es librement de plusieurs sources.

\subsubsection*{Propri\'et\'es statistiques, d\'ependances et dynamique}
Les caract\'eristiques statistiques d'un ensemble de variables al\'eatoires sont de plusieurs ordres: 
propri\'et\'es distributionnelles, corr\'elations, et dynamique.

En certains endroits de cette th\`ese, les propri\'et\'es distributionnelles sont \`a l'\'etude:
la distribution de probabilit\'e empirique de mesures r\'ep\'et\'ees est typiquement la premi\`ere chose 
qu'un mod\`ele doit reproduire. Concr\`etement, les quatre premiers moments d'une distribution 
sont tr\`es informatifs quant au ph\'enom\`ene \`a l'\oe uvre, mais une perspective d'ensemble pr\'ecise n'est 
atteignable que si des outils puissants sont disponibles pour comparer les fonctions de r\'epartition.
D\'evelopper de tels outils est un des buts de ce travail de doctorat.

Les corr\'elations et la dynamique peuvent \^etre \'etudi\'ees comme deux faces d'une m\^eme pi\`ece,
comme des d\'ependances spatiales et temporelles. 
Elles peuvent \^etre vues respectivement comme une manifestation de propagation horizontale d'information, 
ou d'incorporation de l'information pass\'ee dans les r\'ealisations pr\'esentes.
Ces d\'ependances sont le sujet principal de cette th\`ese, avec des applications en finance
qui pourraient avoir d'importantes cons\'equences en gestion du risque.

\subsubsection*{Non-lin\'earit\'es}
Pour des raisons historiques et de commodit\'e math\'ematique, les corr\'elations lin\'eaires
sont souvent consid\'er\'ees par les professionnels (et quelques fois aussi dans le monde acad\'emique et l'enseignement)
comme l'alpha et l'om\'ega des d\'ependances.
Cette description est satisfaisante tant que l'on se trouve dans un r\'egime o\`u de petites 
perturbations ont de petites cons\'equences, mais l'exp\'erience enseigne que ce n'est pas toujours le cas.
Souvent, des effets de seuil, de la latence et de la m\'emoire (hyst\'er\`ese), des effects collectifs (avalanches),
du chaos (sensibilit\'e extr\^eme aux conditions initiales), etc.\ peuvent g\'en\'erer des comportements anormaux,
et il est alors n\'ecessaire de faire attention aux non-lin\'earit\'es dans les d\'ependances, 
en \'etudiant par exemple les probabilit\'es d'\'ev\`enements de queue joints, 
les corr\'elations conditionnelles, corr\'elations quadratiques, etc.
Dans des situations extr\^emes, les corr\'elations lin\'eaires peuvent m\^eme \^etre trompeuses 
pour la compr\'ehension des interactions sous-jacentes, et comme l'a \'ecrit D.~Helbing, 
``un mod\`ele non-lin\'eaire simple peut \^etre capable d'expliquer des ph\'enom\`enes,
que m\^eme des mod\`eles lin\'eaires compliqu\'es peuvent \'echouer \`a reproduire.
Les mod\`eles non-lin\'eaires pourraient apporter un \'eclairage nouveau sur les ph\'enom\`enes sociaux. 
Ils pourraient m\^eme conduire \`a un changement de paradigme dans la mani\`ere dont nous interpr\'etons la soci\'et\'e.''~\cite{helbing2012social}

\subsection{Plan g\'en\'eral de la th\`ese, et principales contributions originales}
La th\`ese est compos\'ee de trois parties. 
La partie \ref{part:partI} introduit les outils math\'ematiques et statistiques appropri\'es pour l'\'etude des d\'ependances.
La plupart de ce qui y est pr\'esent\'e est connu, mais je passe en revue les d\'efinitions et propri\'et\'es principales.
C'est aussi dans cette partie que j'introduis les notations que j'ai essay\'e de maintenir coh\'erentes tout au long de la th\`ese.
J'ai pass\'e beaucoup de temps \`a travailler sur des tests statistiques d'ad\'equation (``Goodness of Fit'', en anglais) 
pour des distributions de probabilit\'e empiriques. 
Cela \'etait originellement motiv\'e par le besoin de tester les mod\`eles bivari\'es finaux contre des hypoth\`eses nulles. 
Ce ``side project'' initial s'est av\'er\'e \^etre li\'e \`a un probl\`eme non-r\'esolu et connu pour \^etre difficile.
Je n'apporte pas de solution d\'efinitive, mais discute en annexe de possibles directions pour
des travaux futurs, qui ont \'emerg\'e au cours de sessions de remue-m\'eninges avec Jean-Philippe Bouchaud et Alexandre Richard.
Pour autant, le sujet est fascinant, \'etant \`a la fronti\`ere de la th\'eorie des probabilit\'es (processus stochastiques)
et la physique (particules quantiques dans des potentiels), et j'ai \'et\'e int\'eress\'e de travailler sur d'autres extensions
des tests d'ad\'equation usuels.
Ce qui \'etait cens\'e n'\^etre que le d\'eveloppement d'une bo\^ite \`a outils pour un usage imm\'ediat
a finalement donn\'e lieu \`a de tr\`es int\'eressants projets th\'eoriques qui se sont concr\'etis\'es
par la publication de deux articles et d'un chapitre de livre.

Le contenu financier de la th\`ese commence \`a la partie \ref{part:partII}.
J'y pr\'esente mes travaux concernant les d\'ependances transversales entre les s\'eries chronologiques 
de rendements journaliers d'actions, c'est \`a dire les forces instantan\'ees qui relient plusieurs actions entre elles
et les fait se comporter collectivement plut\^ot qu'individuellement.
Ce vaste sujet avait d\'ej\`a \'et\'e approch\'e dans mon travail de master \cite{chicheportiche2010master},
o\`u j'ai effectu\'e une \'etude empirique des d\'ependances non-lin\'eaires, 
et ai d\'ecrit une construction hi\'erarchique pour les mod\'eliser.
Une partie de ce travail a \'et\'e publi\'ee avec des d\'eveloppements cons\'ecutifs
dans un journal de finance, et a servi de pr\'emisse pour le mod\`ele \`a facteur 
sur lequel j'ai finalement abouti au cours du doctorat.
Une calibration de ce mod\`ele est pr\'esent\'ee ici, avec une comparaison aux donn\'ees r\'eelles.

Finalement, la partie \ref{part:partIII} \'etudie les d\'ependances temporelles dans des s\'eries chronologiques individuelles,
en utilisant les m\^emes outils et mesures de corr\'elations.
Il est bien connu que les rendements d'actions ont une tr\`es faible auto-corr\'elation lin\'eaire, 
ce qui se manifeste dans les alternances al\'eatoires de signes (rendements positifs ou n\'egatifs),
mais leurs amplitudes ont une longue m\'emoire. 
L'effet r\'esultant de ``volatility clustering'' est une des plus anciennes \'enigmes en finance,
et nous proposons ici deux contributions \`a l'\'etude de son origine et de sa description:
l'une est une g\'en\'eralisation du m\'ecanisme de r\'etro-action ARCH dans lequel les rendements sont auto-excitants,
et l'autre est une description plus originale des auto-d\'ependances en termes de copule.
Cette derni\`ere peut \^etre formul\'ee sans mod\`ele et n'est pas sp\'ecifique aux donn\'ees financi\`eres.
En fait, je montre ici aussi, apr\`es une longue collaboration avec Anirban Chakraborti, comment les concepts
de r\'ecurrences, records, r\'epliques et temps d'attente, qui caract\'erisent la dynamique dans les s\'eries chronologiques,
peuvent \^etre \'ecrits dans la cadre unifi\'e des copules.

\vfill
La th\`ese est r\'edig\'ee au pluriel, car la plupart de son contenu est le fruit de collaborations 
(\`a tout le moins avec mes directeurs), et parce que j'y inclus beaucoup de mat\'eriel publi\'e.
Je renvoie aux articles list\'es en page~\pageref{sec:publications} et aux remerciements qu'ils contiennent
pour l'identit\'e de mes co-auteurs et collaborateurs sur chaque sujet.

\clearpage\makeatletter
\newenvironment{Numeroter}[1][1]{%
    \renewcommand{\theenumi}{#1.\arabic{enumi}}
    \begin{enumerate}{}
}{  \end{enumerate}
    \renewcommand{\theenumi}{\arabic{enumi}}
}
\makeatother

\section{Introduction in English}

\subsection{The financial markets as complex systems}
The economy presents features that are characteristic of complex systems:
a large number of variables (agents, firms, contracts, maturities, etc.), 
heterogeneous agents (difficulty to isolate groups), 
partial information and randomness, 
self-organization and emergent phenomena (strong interactions, non-perturbativity), 
network dynamics (among banks, countries, counter-parties, companies, etc.) and long-ranged interactions,
quasi-continuum of time scales, 
non-stationarity, history dependence and sensitivity to external forces (regulations, catastrophes, etc.).

This reality is in sharp contrast with the hypotheses of classical economic and financial theory, like
homogeneity (representative agent), perfect rationality, infinite horizon forecasting capabilities, 
steady-state world where perturbations of the equilibrium are instantaneously adjusted,
absence of memory (all past information is reflected in the current state of the world), \ldots

To address these issues with a more empirical perspective, 
different communities with diverse point of views have progressively invested the social and economic sciences.
These include quantitative sociologists, econophysicists, statisticians (statistical physicists, biostatisticians),
applied mathematicians and others.
Indeed, many of the features listed above have already been encountered in various areas of natural sciences, 
where methods have been designed to account for randomness, network interactions, collective behaviors, persistence and reversion, etc.
\cite{anderson1988economy,arthur1997economy,blume2006economy,challet2005minority,galam2008sociophysics,kirman2010economic,abergel2013econophysics,bouchaud2013crises}

Among all the fields of socio-economic activity, market finance may however be the closest to what
can be reasonably studied like a physical system: 
on the theoretical side, the number of degrees of freedom is somewhat reduced by the rules of the exchanges 
(auctions, order books, regulations, standard contracts, information disclosure) 
while the number of ``players'' and ``actions'' is still very large allowing for averaging in statistical ensembles;
and on the empirical side, it produces huge amounts of data making it possible to ``repeat experiments'' under assumptions,
and treat the measurements with appropriate statistical tools.
In fact there are evidences that some properties of financial markets resemble those of systems studied in statistical physics:
intermittency in turbulent flows, crackling noise of materials responding to changing external conditions, lines of fractures in solids, 
long-ranged or mean-field interactions in Ising spins systems, phase transitions and self-organized criticality, to name a few.

\subsection{Field covered in the thesis}
This thesis studies statistical dependences in a general sense, 
both cross-sectionally (between random variables) and temporally (successive realizations of one variable).
All theoretical results thus apply in principle to any field where many variables are interacting or 
where stationary processes have memory. 
More specifically, the study is restricted to discrete processes with equidistant intervals,
meaning that the corresponding series are sampled and need not be ``time-stamped''.

As of the applied content, the focus is on financial data, 
more precisely time series of stock prices (or returns) at the rather low frequency of the day. 
Mainly, the prices of the most liquid stocks traded in US markets are used, whose histories can be downloaded freely from different sources.


\subsubsection*{Statistical properties, dependences and dynamics}
The statistical characteristics of a set of random variables are of many kinds:
distributional properties, correlations and dynamics.

In some places of this thesis, distributional properties are under the spotlights.
As a matter of principle, the empirical probability distribution function
of repeated measurement of a variable is the first thing a model should reproduce.
Concretely, the first four moments of a distribution are much informative
about the underlying phenomenon, but a precise global picture is only achieved
if powerful tools are available to compare distribution functions.
Developing such tools is one of the goals of this PhD work.

Correlations and dynamics can be studied as two faces of the same coin: 
they are spatial (or ``cross-sectional'') dependences and temporal dependences,
and can be regarded as a manifestation of horizontal information propagation, 
or incorporation of past information into current realizations, respectively.
These dependences are the main subject of the present thesis, 
with applications in finance that could have important consequences in risk management.

\subsubsection*{Non-linearities}
For historical reasons and mathematical convenience, linear correlations are often seen by practitioners 
(and sometimes also in academia and teaching) as the ``be-all and end-all'' of dependences.
This works fine as long as small perturbations have small effects, 
but experience teaches that this is not always the case. 
Often, threshold effects, latency and memory (hysteresis), collective effects (avalanches), chaos (sensitivity to initial conditions)
generate abnormal behavior, 
and it is necessary to care for non-linearities in dependences by studying, for example, tail joint probabilities, conditional correlations, quadratic correlations, etc.
In extreme situations, linear correlations may even be misleading the understanding of the underlying interactions, 
and, as D.~Helbing wrote, ``a simple nonlinear model may explain phenomena, which even complicated linear models may fail to reproduce.
Nonlinear models are expected to shed new light on [...] social phenomena. 
They may even lead to a paradigm shift in the way we interpret society.''~\cite{helbing2012social}


\subsection{Detailed outline of the thesis and main original contributions}

The thesis is composed of three parts.
Part~\ref{part:partI} introduces the mathematical and statistical tools that are relevant for the study of dependences.
Most of what is presented there is common knowledge, 
but I review the main definitions and properties for completeness and later usage. 
It is also the place where I introduce the notations, which I tried (but probably not succeeded completely) to keep coherent along the thesis.
I have spent quite some time working on statistical tests of Goodness-of-fit for empirical probability distributions.
This was originally motivated by the need to test the final multivariate models, or at least the bivariate marginals, against null hypotheses.
This initial ``side project'' turned out to be a tough one and no definite solution is provided here
--- I still give in appendix an account of the literature and of some possible directions for future endeavor, 
that came out during brainstorming sessions with Jean-Philippe Bouchaud and Alexandre Richard.
Yet the subject was fascinating, being at the frontier of probability theory (stochastic processes) and plain physics (quantum particles in potentials),
and I was interested in working on other extensions of the usual GoF tests.
What was supposed to be only the development of a toolkit for immediate use 
gave rise to very interesting theoretical projects and in fact led to the publication of two articles and a textbook chapter.

The financial content of the thesis starts in Part~\ref{part:partII}. I present there
 my studies regarding the ``cross-sectional'' dependences among the time series
of daily stock returns, i.e.\ the instantaneous forces that link several stocks together
and make them behave somewhat collectively rather than purely independently.
Some work was already done in my Master's thesis \cite{chicheportiche2010master},
where I performed an empirical study of non-linear dependences, 
and described a hierarchical construction to model them. 
Part of that work was published with subsequent developments in a financial journal,
and served as a premise for the final factor model that I have come with.
A calibration of this model is presented here, together with a comparison to measurements on real data.

Finally, Part~\ref{part:partIII} investigates the temporal dependences of single time series, 
using the same tools and measures of correlation.
As is well known, stock returns have a vanishing linear autocorrelation that manifests itself in random sign alternations, 
but their magnitudes have a long memory. 
The resulting ``volatility clustering'' effect is one of the oldest puzzles in finance,
and we propose here two contributions to the study of its origin and description:
one is a generalization of the ARCH-like feedback construction where the returns are self-exciting,
and the other one is a more original description of self-dependences in terms of copulas.
The latter can be formulated model-free and is not specific to financial time series.
In fact, I also show here, after a longstanding collaboration with Anirban Chakraborti and undergrad students at Ecole Centrale,
how concepts like recurrences, records, aftershocks and waiting times, that characterize the dynamics in a time series
can be written in the unifying framework of the copula.

After this brief non-technical overview of the landscape, 
I give below a more detailed outline with an abstract of each chapter.
I quit now the first person and switch to the `we' as most of what follows is the 
result of collaborations (at the very least with my advisors), and because I include much material of published work.
I refer to the articles listed in page~\pageref{sec:publications} and acknowledgments therein 
for the identity of my co-authors on each topic.

\paragraph{Part I} \label{abstracts:begin}
\begin{Numeroter}[I]
\setcounter{enumi}{1}
\item There exist many measures of dependence between random variables,
      some related to joint amplitudes, others related to joint occurrence probabilities.
      We recall the definition and properties of some of them, and introduce the unifying framework of the copula,
      an object that embeds all the linear and non-linear dependences that are invariant under a rescaling of the marginals,
      i.e.\ those that count joint occurrences.
      We then rewrite the important measures of dependences in terms of the copula, 
      and suggest ways to efficiently visualize and compare them.
      An interesting point of comparison is the elliptical copula, that includes the Gaussian and Student cases.
      We define this class of multivariate distributions, and summarize its main features.
      
      The content of this chapter needs not be read straight from the beginning, 
      but can be accessed ``on demand'' when referred to, later in the text.
      The original material is limited to one theorem characterizing the asymptotic 
      tail dependence coefficient of a Student pair.
\item 
Usual goodness-of-fit (GoF) tests are designed for \emph{independent} samples,
and are not suited to investigate \emph{tail regions}, because of the universal limit 
properties of the cumulative distribution functions.
We extend the range of applicability of the GoF tests to these two cases.
\textbf{Section~\ref{sect:weightedKS}}:   
Accurate tests for the extreme tails of empirical distributions is a very important issue, 
relevant in many contexts, including geophysics, insurance, and finance. 
We have derived exact asymptotic results for a generalization of the large-sample Kolmogorov-Smirnov test, 
well suited to testing a distribution with constant weight at any point of the domain, 
and in particular provide an improved resolution in the tail regions. 
In passing, we have rederived and made more precise the approximate limit solutions found originally in unrelated fields.
\textbf{Section~\ref{sec:GoF}}:   
We revisit the Kolmogorov-Smirnov and Cram\'er-von~Mises goodness-of-fit tests 
and propose a generalization to identically distributed, but dependent univariate random variables.
We show that the dependence leads to a reduction of the ``effective'' number of independent observations. 
The generalized GoF tests are not distribution-free but rather depend on all the lagged bivariate copulas.
Hence, a precise formulation of the test must rely on a modeled or estimated copula.

\end{Numeroter}

\paragraph{Part II} 
\begin{Numeroter}[II]
\setcounter{enumi}{3}
\item
Using a large set of daily US and Japanese stock returns, we test in detail the relevance of Student models,
and of more general elliptical models, for describing the joint distribution of returns. 
We find that while Student copulas provide a good approximation for strongly correlated pairs of stocks, 
systematic discrepancies appear as the linear correlation between stocks decreases, that rule out \emph{all} elliptical models.  
Intuitively, the failure of elliptical models can be traced to the inadequacy of the assumption of a single volatility mode for all
stocks. 
We suggest several ideas of methodological interest to efficiently visualize and compare different copulas. 
We identify the rescaled difference with the Gaussian copula and the central value of the copula as strongly discriminating observables. 
We insist on the need to shun away from formal choices of copulas with no financial interpretation.
\item
We propose a non-hierarchical multi-factor model for the joint description of daily stock returns.
The model intends to better reproduce linear and non-linear dependences of empirical time series.
The non-Gaussianity of the factors is an important ingredient of the model, 
that allows for the empirically observed anomalous copula at the medial point and other stylized facts 
(quadratic covariances, copula diagonals).
A spectral analysis of the factor series suggests that a structure is present in the \emph{volatilities},
with a dominant mode clearly affecting all linear factors and residuals.
The model embedding this feature is calibrated on US stocks over several periods,
and reproduces qualitatively all the non-trivial empirical observations.
A systematic out-of-sample prediction over a long period confirms quantitatively the power of the model
and of the proposed estimation methodology, both for linear and non-linear properties.
\end{Numeroter}

\paragraph{Part III} 
\begin{Numeroter}[III]
\setcounter{enumi}{6}
\item
We attempt to unveil the fine structure of volatility feedback effects in the context of general quadratic autoregressive (QARCH) models,
which assume that today's volatility can be expressed as a general quadratic form of the past daily returns. 
The standard ARCH or GARCH framework is recovered when the quadratic kernel is diagonal. 
The calibration of these models on US stock returns reveals several unexpected features. 
The off-diagonal (non ARCH) coefficients of the quadratic kernel are found to be highly significant both In-Sample and Out-of-Sample, 
but all these coefficients turn out to be one order of magnitude smaller than the diagonal elements. 
This confirms that daily returns play a special role in the volatility feedback mechanism, as postulated by ARCH models. 
The feedback kernel exhibits a surprisingly complex structure, incompatible with models proposed so far in the literature.
Its spectral properties suggest the existence of volatility-neutral patterns of past returns. 
The diagonal part of the quadratic kernel is found to decay as a power-law of the lag, in line with the long-memory of volatility. 
Finally, QARCH models suggest some violations of Time Reversal Symmetry in financial time series, 
which are indeed observed empirically, although of much smaller amplitude than predicted. 
We speculate that a faithful volatility model should include both ARCH feedback effects and a stochastic component. 
\item
A discrete time series is a collection of successive realizations of a random variable,
every realization being conditional on the previous state.
Seen collectively and {\it ex ante}, it can also be seen as one realization of a random vector
with (directed) dependence.
We show how the copula introduced in Part~\ref{part:partI} in the context of cross-sectional dependences
can also describe appropriately the non-linear temporal dependences in time series:
in this context, we call them ``self-copulas''. 
\item
We introduce a specific, log-normal model for these self-copulas, for which a number of analytical results are derived. 
An application to financial time series is provided. 
As is well known, the dependence is to be long-ranged in this case, a finding that we confirm using self-copulas. 
As a consequence, the acceptance rates for GoF tests are substantially higher than if the returns were iid random variables.
\end{Numeroter}

\paragraph{Appendices} 
\renewcommand{\theenumi}{\Alph{enumi}}
\begin{enumerate}
\item Empirical counterparts of the copula (estimated together with the marginals on a sample of size $T$) 
      have good properties only on discrete values in $[0,1]$:
      for finite $T$, the naive copula estimates are biased.
      We provide an approximate bias-correction mechanism.      
\item Gaussian sheets (the generalization of Gaussian process indexed by two ``times'') arise
      as limits of empirical multivariate cumulative distribution functions, when the sample size tends to infinity.
      Some properties of such sheets are of importance for distribution testing, 
      since the Kolmogorov-Smirnov test statistic is related to the law of the supremum of their absolute value.
      We study some time changes and discretization schemes that may allow to reach a better knowledge of this law.
\item Two contributions of methodological interest are made for models with a single common linear factor.
      In \ref{sec:APXrank1} the estimator of optimal weights is found perturbatively in the limit of large dimension.
      In \ref{chap:APXnested} we reverse-engineer the constitution of the common factor, 
      and embed the one-factor model in a hierarchical nested structure.
\item We provide here technical details for the perturbative expansion of the pseudo-elliptical copula with log-normal scale
      around the independence copula, when all dependence parameters are small.
\item Two appendices related to the continuous-time limit of the QARCH construction studied in Chapter~\ref{part:partIII}.\ref{chap:QARCH}.
      In \ref{QARCHapx:A} we compute the exact spectrum of the Borland-Bouchaud model of Ref.~\cite{borland2005multi}, 
      for three different kernel functions.
      In \ref{QARCHapx:B} we relate the power-law behavior in the volatility correlation function, 
      to the coefficient of the power-law decay in the kernel function of the FIGARCH model.
\end{enumerate}
\renewcommand{\theenumi}{\arabic{enumi}}
\label{abstracts:end}

\part{Mathematical and statistical tools}\label{part:partI}
\chapter{Characterizing the statistical dependence}\label{chap:statdep}
\minitoc


\section{Bivariate measures of dependence}\label{sec:bi_meas_dep}

In this first section, we recall several bivariate measures of statistical dependence between two random variables.
Of course, as we discuss in the next section, many-points dependences among $N$ variables do not reduce 
to the pairwise dependences in the general case.
But in the perspective of empirically measuring the dependences on a dataset, 
at least four issues motivate the present restriction to 2-points coefficients:
(i) the number of triplets, quadruplets, etc.\ that should be considered when measuring coefficients of dependence 
involving 3-points, 4-points, etc.\ gets rapidly huge when considering even as few as some dozens of variables,
whereas the number of undirected 2-points dependence coefficients is ``only'' $N(N-1)/2$;
(ii) error estimates on many-points coefficients are typically larger than on 2-points measures;
(iii) in some classes of usual multivariate probability distributions, 
some information about the full structure can be inferred (or excluded) from the knowledge of the bivariate marginals;
(iv) collective behavior and ``effective'' many-points dependences can be uncovered by
collecting 2-points coefficients into a matrix or tensor and performing algebraic and spectral analysis.

Let $\vect{X}=(X_1,X_2,\ldots,X_N)^{\top}$ be a random vector with 
joint probability distribution function (pdf)\nomenclature{PDF, pdf}{Probability distribution function} $\pdf[\vect{X}](\vect{x})$ and 
joint cumulative  distribution function (cdf)\nomenclature{CDF, cdf}{Cumulative distribution function}  $\cdf[\vect{X}](\vect{x})=\int \pdf[\vect{X}](\vect{z})\mathds{1}\left\{\vect{z}\leq \vect{x}\right\}\d\vect{z}$.
Each component $X_i$ is a random variable with marginal pdf $\pdf[i](x_i)$ and cdf $\cdf[i](x_i)$.
Throughout the text (unless otherwise stated), we assume $\cdf[\vect{X}]$ is continuous and $\qdf[i]$ are well-defined inverse marginal cdfs.

As the coefficients defined in this section are all measures of pairwise dependence,
we define the pair $\vect{X_{\ij}}=(X_i,X_j)^{\top}$.
Similarly, all the transformations of $\vect{X}$ defined later in the text can
be restricted to the 2-dimensional space describing the pair.

\subsection{Usual correlation coefficients}

\subsubsection*{Bravais-Pearson's correlation coefficient}
The Bravais-Pearson's correlation coefficient of $\vect{X_{\ij}}$ is the usual normalized linear covariance
\begin{equation}\label{eq:pearson_2d}
	\rhoP[\vect{X_{\ij}}]=\frac{\esp{X_iX_j}-\esp{X_i}\esp{X_j}}{\sqrt{\var{X_i}\var{X_j}}}
\end{equation}
Often in what follows, it will be simply denoted $\rho_{\ij}$.
This correlation measure involves the probabilities \emph{and} amplitudes of the possible realizations of $\vect{X_{\ij}}$.
It can also be understood as a measure of asymmetry between the sum and the difference of the normalized variables:
\begin{equation}\label{eq:pearson_2d_asym}\tag{\ref{eq:pearson_2d}$'$}
	\rhoP[\vect{X_{\ij}}]=\frac{\Var{\tfrac{X_i}{\sqrt{\var{X_i}}}+\tfrac{X_j}{\sqrt{\var{X_j}}}}-\Var{\tfrac{X_i}{\sqrt{\var{X_i}}}-\tfrac{X_j}{\sqrt{\var{X_j}}}}}{\Var{\tfrac{X_i}{\sqrt{\var{X_i}}}+\tfrac{X_j}{\sqrt{\var{X_j}}}}+\Var{\tfrac{X_i}{\sqrt{\var{X_i}}}-\tfrac{X_j}{\sqrt{\var{X_j}}}}}
\end{equation}
(the expression at the denominator is just an explicit way of writing the number $2$).

\subsubsection*{Spearman's rho}
Let $\vect{X^{\scriptscriptstyle (i)}}\stackrel{L}{=}(0,\ldots,X_i,\ldots,0)$ be distributed like the $i$-th marginal
and independent of $\vect{X}$.
Then the vector $\displaystyle\sum_{i=1}^N\vect{X^{\scriptscriptstyle (i)}}$ has independent components with the same marginals as $\vect{X}$.
The difference $\widetilde{\vect{X}}=\vect{X}-\displaystyle\sum_{i=1}^N\vect{X^{\scriptscriptstyle (i)}}$ 
is centered and has symmetric univariate marginals by construction.
It allows to characterize the dependence in $\vect{X}$ by counting how often both components of $\vect{X_{\ij}}$ are \emph{simultaneously} below
the realizations of the independent couple with same marginals. Spearman's rho is defined as
\begin{equation}\label{eq:spearman_2d}
	\rhoS[\vect{X_{\ij}}]=3\left(\Pr{\widetilde{X}_i \widetilde{X}_j>0}-\Pr{\widetilde{X}_i \widetilde{X}_j<0}\right)
\end{equation}
and thus involves only the signs of $\widetilde{\vect{X}}$:
\begin{equation}\tag{\ref{eq:spearman_2d}$'$}
	\rhoS[\vect{X_{\ij}}]
	                            =3\,\rhoP[\sign(\vect{\widetilde{X}_{\ij}})]
\end{equation}

%

\subsubsection*{Kendall's tau}
Let $\vect{X}'\stackrel{L}{=}\vect{X}$ be a random vector with same joint distribution as $\vect{X}$:
compared to the previous case, $\vect{X}'$ has not only the same univariate marginals as $\vect{X}$, but also its dependence structure.
Then $\widetilde{\vect{X}}=\vect{X}-\vect{X}'$ is 
the difference between two independent and identically distributed random vectors. It is centered and symmetric by construction.
Kendall's tau is defined as
\begin{equation}\label{eq:kendall_2d}
	\tauK[\vect{X_{\ij}}]=\Pr{\widetilde{X}_i \widetilde{X}_j>0}-\Pr{\widetilde{X}_i \widetilde{X}_j<0}
	                     =\rhoP[\sign(\vect{\widetilde{X}_{\ij}})]
\end{equation}
and it measures concordances between realizations of identically distributed couples.

\subsubsection*{Blomqvist's beta}
Denote now $\widetilde{\vect{X}}=\vect{X}-\med\vect{X}$. Blomqvist's beta coefficient of dependence is defined as
\begin{align}\label{eq:blomqvist_2d}
	\betaB[\vect{X_{\ij}}]&=\Pr{\widetilde{X}_i \widetilde{X}_j>0}-\Pr{\widetilde{X}_i \widetilde{X}_j<0}
                           =\rhoP[\sign(\vect{\widetilde{X}_{\ij}})].
\end{align}
If all $X_i$'s are centered around their median value (in particular if symmetric),
 then Blomqvist's beta is equal to the Pearson's correlation of the signs:
\[
	\betaB_{\ij}=\rhoP[\sign(\vect{X_{\ij}})].
\]

\subsection{Non-linear correlation coefficients}
Beyond the standard correlation coefficient, one can characterize the dependence structure 
through the correlation between functions $g(\vect{X_{\ij}})=(g_i(X_i),g_j(X_j))$ of the random variables.
In particular, we define the following coefficients:
\begin{align}
	\rod_{ \ij}&=\rhoP[g(\vect{X_{\ij}})], \quad\text{when}\quad g_i(x)=g_j(x)=\sign(x)|x|^d\\
	\road_{\ij}&=\rhoP[g(\vect{X_{\ij}})], \quad\text{when}\quad g_i(x)=g_j(x)=|x|^d
\end{align}
provided the variances in the denominators are finite. 
The case $\rod[1]$ corresponds to the usual linear correlation coefficient, 
for which we will use the standard notation $\rho$, 
whereas $\road[2]$ is the correlation of the squared returns, 
that would appear in the Gamma-risk of a $\Delta$-hedged option portfolio. 
However, high values of $d$ are expected to be very noisy in the presence of power-law tails 
(as is the case for financial returns) and one should seek for lower order moments, 
such as $\road[1]$ which also captures the correlation between the {\it amplitudes} (or the volatility) of price moves, 
or even $\ros \equiv \rod[0]\label{eq:ros}$ that measures the correlation of the signs.
Note that when $g_i(x)=\cdf[i](x)$, one recovers Spearman's rho:
\begin{equation}\label{eq:spearman_2d_ranks}
	\rhoS_{\ij}=\rhoP[g(\vect{X_{\ij}})], \quad\text{when}\quad g_i(x)=\cdf[i](x).\phantom{\quad g_j(x)=}
\end{equation}

\begin{proof}[Proof of Eq.~\eqref{eq:spearman_2d_ranks}]
    The LHS is decomposed onto four contributions:
    \begin{align*}
        \pr{X_i>X^{\scriptscriptstyle (i)}_i, X_j>X^{\scriptscriptstyle (j)}_j}&=
        \pr{U_i>U^{\scriptscriptstyle (i)}_i, U_j>U^{\scriptscriptstyle (j)}_j}\\
	&=\iint_0^1 \pr{u_i>U^{\scriptscriptstyle (i)}_i, u_j>U^{\scriptscriptstyle (j)}_j}\,\d{C_{\vect{X_{\ij}}}(u_i,u_j)}\\
	&=\iint_0^1 u_iu_j\,\d{C_{\vect{X_{\ij}}}(u_i,u_j)}\qquad\text{since } U^{\scriptscriptstyle (i)}_i\perp U^{\scriptscriptstyle (j)}_j\\
    \pr{X_i<X^{\scriptscriptstyle (i)}_i, X_j<X^{\scriptscriptstyle (j)}_j}&=\iint_0^1 (1\!-\!u_i)(1\!-\!u_j)\,\d{C_{\vect{X_{\ij}}}(u_i,u_j)}\\
    \pr{X_i>X^{\scriptscriptstyle (i)}_i, X_j<X^{\scriptscriptstyle (j)}_j}&=\iint_0^1 u_i\,(1\!-\!u_j)\,\d{C_{\vect{X_{\ij}}}(u_i,u_j)}\\
    \pr{X_i<X^{\scriptscriptstyle (i)}_i, X_j>X^{\scriptscriptstyle (j)}_j}&=\iint_0^1 (1\!-\!u_i)\,u_j\,\d{C_{\vect{X_{\ij}}}(u_i,u_j)}.
    \end{align*}
    Define then $U_i=\cdf[i](X_i)\sim \mathcal{U}[0,1]$;
    since $\esp{U_i}=\frac12$, the RHS is
    \[
        \rhoP[\vect{U_{\ij}}]=\iint_0^1(u_i\!-\!\tfrac12)(u_j\!-\!\tfrac12)\,\d{C_{\vect{X_{\ij}}}(u_i,u_j)}.
    \]
    The result follows, noting that 
    \[
        u_i\,u_j+(1\!-\!u_i)(1\!-\!u_j)-u_i(1\!-\!u_j)-(1\!-\!u_i)u_j=4(u_i\!-\!\tfrac{1}{2})(u_j\!-\!\tfrac{1}{2})
    \]
    and $\var{U_i}=\var{U_j}=\int_0^1 (u-\tfrac{1}{2})^2\,\d{u}=\tfrac{1}{12}$.
\end{proof}

Although the motivation of the definition \eqref{eq:spearman_2d} is quite intuitive (compare dependent and independent couples with same distributions), 
the definition of Spearman's rho as Pearson's correlation of the ranks \mbox{$U_i=\cdf[i](X_i)$} makes
 it clear that it is invariant under any continuous stretching or shrinking of the axis, 
and that it involves the dependence structure regardless of the marginals.

\subsection{Tail dependence}

Another characterization of dependence, of great importance for risk management purposes, 
is the so-called \emph{coefficient of tail dependence} which measures the joint probability of extreme events. 
More precisely, the upper tail dependence is defined as \cite{embrechts02correlation,malevergne2006extreme}:
\begin{equation}\label{eq:def_tauUU}
	\tUU_{\ij}(p)=\Pr{X_i>\qdf[i](p) \left\lvert\, X_j>\qdf[j](p) \right.},
\end{equation}
where $\cdf[k]$ is the cumulative distribution function (cdf) of $X_k$, 
and $p$ a certain probability level. In spite of its seemingly asymmetric definition, it is easy to show that 
$\tUU$ is in fact symmetric in $X_i\leftrightarrow{}X_j$. 
When $p \to 1$, $\tUU_*\equiv\tUU(1)$ measures the probability that $X_i$ takes a very 
large positive value knowing that $X_j$ is also very large, and defines the asymptotic tail dependence.
Random variables can be strongly dependent from the point of view of linear correlations, 
while being nearly independent in the extremes. 
For example, bivariate Gaussian variables are such that $\tUU_*=0$ for any value of $\rho < 1$. 
The lower tail dependence $\tLL$ is defined similarly:
\begin{equation}\label{eq:def_tauLL}
	\tLL_{\ij}(p)=\Pr{X_i<\qdf[i](1-p) \left\lvert\, X_j<\qdf[j](1-p) \right.},
\end{equation}
and is equal to $\tUU(p)$ for symmetric bivariate distributions. 
One can also define mixed tail dependences:
\begin{align}\label{eq:def_tauUL}
	\tLU_{\ij}(p)&=\Pr{X_i<\qdf[i](1-p) \left\lvert\, X_j>\qdf[j](  p) \right.},\\
	\tUL_{\ij}(p)&=\Pr{X_i>\qdf[i](  p) \left\lvert\, X_j<\qdf[j](1-p) \right.},
\end{align}
with obvious interpretations.

\section{Copulas}

\subsection{Definition and properties}
A copula is a $N$-variate cdf $\cop(\vect u)$ on the hypercube $[0,1]^N$, with uniform univariate marginals.

\begin{theorem}[\cite{sklar1959fonctions}]
A unique copula can be associated with any multivariate cdf $\cdf[\vect{X}]$ by uniformizing its marginals:
\begin{align}\label{eq:def_cop_mult}
	\cop[\vect{X}](u_1,\ldots,u_N)
                                   &=\cdf[\vect{X}]\left(\qdf[1](u_1),\ldots, \qdf[N](u_N)\right)
\end{align}
\end{theorem}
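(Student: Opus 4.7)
The plan is to build $\cop[\vect{X}]$ directly from the right-hand side of the announced formula and verify two things: first, that it is indeed a copula in the sense of the definition given just above (a multivariate cdf on $[0,1]^N$ with uniform univariate marginals), and second, that no other copula can represent $\cdf[\vect{X}]$ in the canonical way. The single engine powering both steps is the \emph{probability integral transform}: under the standing continuity assumption on the marginals $\cdf[i]$, the random variables $U_i := \cdf[i](X_i)$ are uniformly distributed on $[0,1]$, and the equivalence $\{X_i\leq\qdf[i](u_i)\}=\{U_i\leq u_i\}$ holds for every $u_i\in[0,1]$.

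For existence, I would define $\cop[\vect{X}](u_1,\ldots,u_N)$ as the joint cdf of the transformed vector $\vect{U}=(U_1,\ldots,U_N)^{\top}$. By the probability integral transform, its univariate marginals are uniform on $[0,1]$, so $\cop[\vect{X}]$ qualifies as a copula. The claimed formula is then a one-line rewriting:
\begin{align*}
\cop[\vect{X}](u_1,\ldots,u_N)
 &= \Pr{U_1\leq u_1,\ldots,U_N\leq u_N} \\
 &= \Pr{X_1\leq \qdf[1](u_1),\ldots,X_N\leq \qdf[N](u_N)} \\
 &= \cdf[\vect{X}]\left(\qdf[1](u_1),\ldots,\qdf[N](u_N)\right).
\end{align*}

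For uniqueness, let $C'$ be any copula satisfying the representation $\cdf[\vect{X}](\vect{x})=C'(\cdf[1](x_1),\ldots,\cdf[N](x_N))$ for all $\vect{x}$ --- the natural compatibility condition required of any candidate copula of $\vect{X}$. Evaluating at $x_i=\qdf[i](u_i)$ and using the identity $\cdf[i](\qdf[i](u_i))=u_i$, which holds \emph{exactly} because the marginals are continuous, yields $C'(u_1,\ldots,u_N)=\cop[\vect{X}](u_1,\ldots,u_N)$ throughout $[0,1]^N$. The only genuine obstacle would appear if some $\cdf[i]$ had flat regions or jumps, in which case $\cdf[i]\circ\qdf[i]$ fails to be the identity and $C$ gets pinned down only on the product of the ranges of the $\cdf[i]$, leaving a Carath\'eodory-style extension to handle the rest. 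That difficulty is neutralized here by the continuity hypothesis stated at the top of the section, so the argument above is in fact complete.
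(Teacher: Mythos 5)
The paper does not actually prove this statement: it presents Sklar's theorem as a cited classical result (\cite{sklar1959fonctions}) and immediately moves on to its interpretation, so there is no in-text proof to compare yours against. On its own merits, your argument is the standard and correct proof of Sklar's theorem in the continuous case, and it is fully consistent with the paper's standing assumption (stated at the top of Section 2.1) that the joint cdf is continuous and the inverse marginal cdfs are well defined --- which is exactly the hypothesis under which both your existence step and your uniqueness step close without the Carath\'eodory-type extension you correctly flag as the obstacle in the general (discontinuous) case. The only point worth tightening is the claimed event identity $\{X_i\leq F_i^{-1}(u_i)\}=\{U_i\leq u_i\}$: with the generalized inverse this is an equality of events only when $F_i$ is strictly increasing; if $F_i$ is continuous but has flat stretches the two events can differ on the set where $X_i$ falls in a flat region, which however carries zero probability, so the equality of the \emph{probabilities} --- which is all your computation uses --- still holds. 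A one-line remark to that effect would make the existence step airtight without changing its structure.
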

According to this view, the copula of a random vector is the joint cdf of the marginal \emph{ranks} $U_i=\cdf[i](X_i)$.
It hence encodes all the dependence between the individual random variables that is invariant under increasing transformations.
Since the marginals of $U_i$ are uniform by construction, the copula only captures their degree of ``entanglement''. 

Of course, the converse of Sklar's theorem is not true: 
infinitely many multivariate distributions can have the same copula !
In fact, it is in principle possible to build joint distributions with given copula and \emph{any} univariate (continuous) marginals,
for example a bivariate Gaussian copula with chi-2 marginals.
Of course one sees that although possible, such exotic constructions barely make sense, 
as one would intuitively expect that random variables with positive support do not have a copula with the reversal symmetry $u_i\leftrightarrow 1-u_i$.


\paragraph{Construction}
A copula can be constructed in three manners:
\begin{enumerate}
\item From the very definition, by explicitly specifying a function $\cop(\vect u)$ satisfying all required properties. 
      Examples include
\begin{itemize}\label{pg:cop_examples}
\item the independence (product) copula: $\Pi(\vect{u})=u_1\,u_2\,\ldots\,u_N$;
\item the upper Fr\'echet-Hoeffding bound copula $M(\vect{u})=\min\{u_1,\ldots,u_N\}$;
\item the lower Fr\'echet-Hoeffding bound $W(\vect{u})=\max\{u_1+\ldots+u_N-(N\!-\!1),0\}$\newline which is a copula only for $N=2$;
\item Archimedean copulas: $\cop(\vect{u})=\phi^{-1}\left(\phi(u_1)+\ldots+\phi(u_N)\right)$,\\ where $\phi$ (the generator function) is $N-2$ times continuously differentiable, 
and such that $\phi(1)=0,\quad\lim\limits_{u\to 0}\phi(u)=\infty$ and $\phi^{{\scriptscriptstyle (N\!-\!2)}}$ is decreasing convex 
    (more on this is Sect.~\ref{sec:archCop}, page~\pageref{sec:archCop}).
\end{itemize}

\item Applying Sklar's theorem to usual classes of parametric multivariate distributions. 
      Examples include the Gaussian copula
       \[
            \cop[\text{G}](\vect{u})=\Phi_{\Sigma}(\Phi^{-1}(u_1),\ldots,\Phi^{-1}(u_N))
       \]
       where $\Sigma$ is a symmetric positive definite matrix.
       And the more general Elliptically Contoured copulas, that we introduce in the coming Sect.~\ref{sec:elliptical_model}.
\item Implicitly defined as the dependence function of a random vector described through a structural model.
      Example: factor models.
\end{enumerate}

\paragraph{Fr\'echet-Hoeffding bounds}
Copulas are multivariate cumulative distribution functions, 
and hence take values between 0 and 1,
but there are in fact stronger constraints accounting for the properties of probability distributions.
It is obvious that a univariate cdf tends to 0 at the extreme low values of its domain, 
and to 1 at the extreme high values, reflecting roughly the facts that
the probability of an empty set is 0 and the probability of the whole universe is 1.
Much in this spirit, the extreme values of the copulas are ``pinned'' to 
$\cop(0,\ldots,0)=0$ and $\cop(1,\ldots,1)=1$.
More precisely, if \emph{any} component is 0, then the copula is 0, too.
Furthermore, by reduction to the univariate marginals, if \emph{all but one} components are 1,
then the copula takes the value of the remaining component.
Beyond these properties related to the marginals, the copula is subject to bounds
imposed by the dependence structure: in short, 
a random pair cannot be more positively correlated than if it was built of twice the same variable,
and similarly it cannot be more negatively correlated than if it was composed of a variable and its own opposite.
This is more precisely stated by the following inequalities, for a copula of arbitrary dimension:
\begin{equation}\label{eq:frechet}
    W(\vect{u})\leq \cop(\vect u) \leq M(\vect{u}),
\end{equation}
where the definitions of the upper and lower bounds were given in the previous page.
We show in Fig.~\ref{fig:bornes_cop} how these bounds look like on the diagonal and anti-diagonal of a bivariate copula.

\subsection{Dependence coefficients and the copula}

The copula can be restricted to subsets of the studied random vector.
For example, the bivariate copula of any random pair $(X_i,X_j)$ is 
\begin{equation}\label{eq:def_copula2}
	\cop(u_i,u_j)= \Pr{\cdf[i](X_i)\leq{}u_i \text{ and }  \cdf[j](X_j)\leq{}u_j}.
\end{equation}
It is then natural to ask if and how all the coefficients of bivariate dependence introduced earlier
are related to the description in terms of copula.
It turns out that, whereas the $\rod$'s and $\road$'s depend on the marginal distributions, 
Spearman's rho, Kendall's tau and Blomqvist's beta can be expressed in terms of the copula only:
\begin{align*}
    \rhoS &=12\int_{[0,1]^2}\!      u_i u_j \,\d{\cop(u_i,u_j)}-3 \\
    \tauK &= 4\int_{[0,1]^2}\! \cop(u_i,u_j)\,\d{\cop(u_i,u_j)}-1 \\
    \betaB&= 4\cop(\tfrac12,\tfrac12)-1.
\end{align*}
Similarly, the tail dependence coefficients can be expressed, for $u\lesssim 1$, as:
\begin{subequations}\label{eq:tau_C}
\begin{align}
	\tUU(u)&=\frac{1-2u+\cop(u,u)}{1-u}        & 
	\tLL(u)&=\frac{\cop(1\!-\!u,1\!-\!u)}{1-u} \\ 
	\tUL(u)&=\frac{1-u-\cop(u,1\!-\!u)}{1-u}   & 
	\tLU(u)&=\frac{1-u-\cop(1\!-\!u,u)}{1-u}   .
\end{align}
\end{subequations}

\subsection{Visual representations}
	\begin{figure}
		\center
        \subfigure[Diagonal]{     \includegraphics[scale=0.45,trim=0 0 30 30,clip]{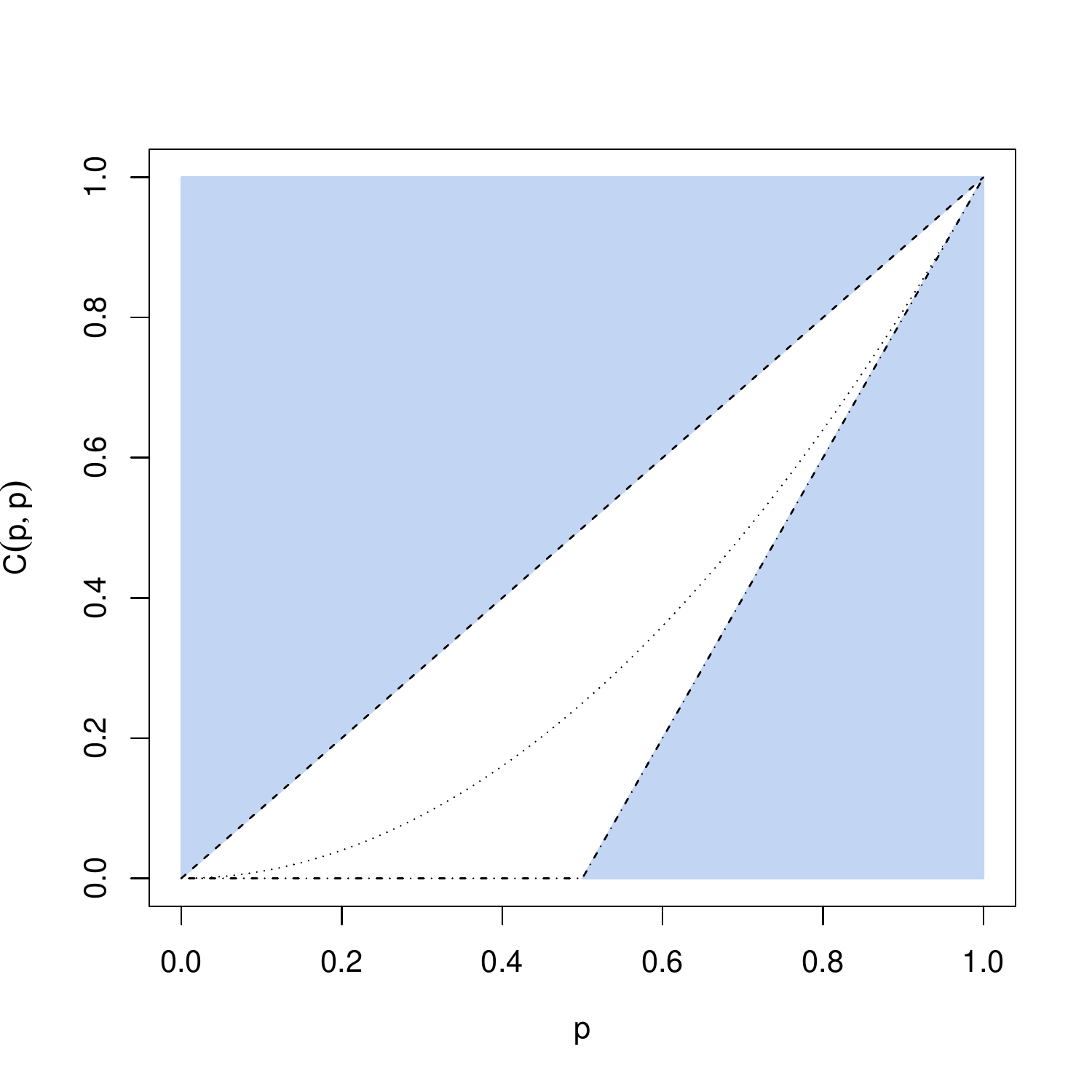}}\hfill
        \subfigure[Anti-diagonal]{\includegraphics[scale=0.45,trim=0 0 30 30,clip]{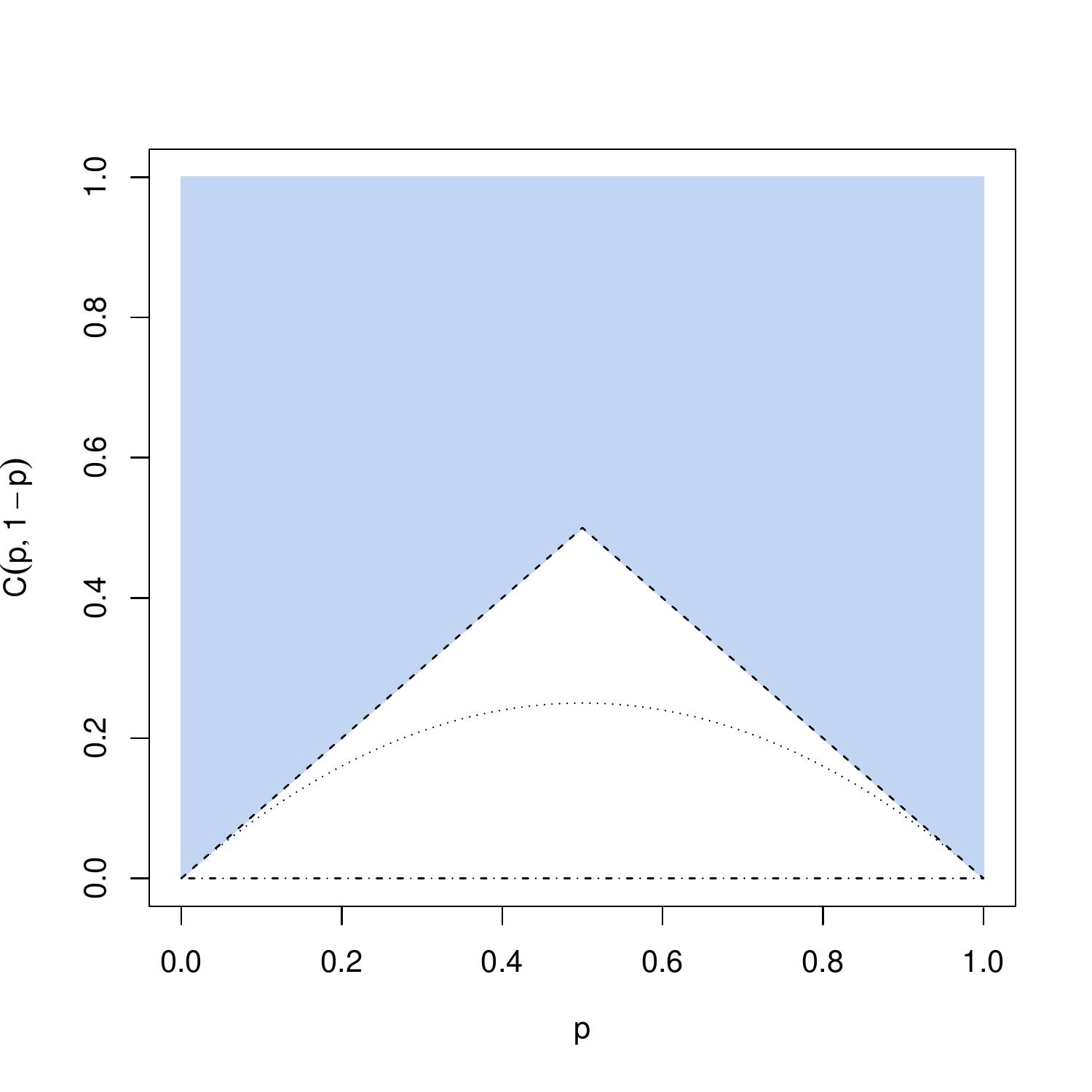}}
		\caption{Fr\'echet-Hoeffding bounds for the bivariate copula. 
        The allowed space is the triangle delimited by dashed and dash-dotted lines,
        corresponding to the upper bound $M(\vect{u})$ (maximal dependence) 
        and the lower bound $W(\vect{u})$ (maximal opposition), respectively.
		The product copula for independent variables is shown as a dotted line.
		}
		\label{fig:bornes_cop}
	\end{figure}
    
Copulas are not easy to visualize, first because they need to be represented as 3D plots of a surface in two dimensions 
and second because the bounds \eqref{eq:frechet}, within which $\cop(u,v)$ is constrained to live, 
compress the difference between two arbitrary copulas --- the situation is even worse in higher dimensions. 
Estimating copula densities, on the other hand, is even more difficult than estimating univariate densities, 
especially in the tails. 
We therefore propose to focus on the diagonal of the copula, $\cop(u,u)$ and the anti-diagonal, $\cop(u,1-u)$, 
that capture part of the information contained in the full copula, and can be represented as 1-dimensional plots. 
Furthermore, in order to emphasize the difference with the case of independent variables, 
it is expedient to consider the following quantities:
\begin{subequations}\label{eq:copdiff}
\begin{align}
	\label{eq:def_cop_diag}\frac{\cop(u,u)-u^2}{u\,(1\!-\!u)}&=\tUU(u)+\tLL(1-u)-1\\
	\label{eq:def_cop_adiag}\frac{\cop(u,1\!-\!u)-u\,(1\!-\!u)}{u\,(1\!-\!u)}&=1-\tUL(u)-\tLU(1-u),
\end{align}
\end{subequations}
where the normalization is chosen such that the tail correlations appear naturally. 
Another important reference point is the Gaussian copula $\cop[\text{G}](u,v)$, 
and we will consider below the normalized differences along the diagonal and the anti-diagonal:
\begin{equation}\label{eq:Delta}
	\Delta_{\scriptscriptstyle \text{d}}(u)=\frac{\cop(u,u)-\cop[\text{G}](u,u)}{u\,(1\!-\!u)}; \qquad \Delta_{\scriptscriptstyle \text{a}}(u)=\frac{\cop(u,1\!-\!u)-\cop[\text{G}](u,1\!-\!u)}{u\,(1\!-\!u)}.
\end{equation}

The incentive to focus on the diagonals of the copula is motivated in the first place by the above argument
that the whole copula is a particularly complex object that is hard to manipulate, estimate and visualize,
and that some means have to be found in order to investigate its subtleties while reducing the number of parameters. 
But it also has a profound meaning in terms of physical observables that can be empirically measured.
In particular, the diagonal quantity \eqref{eq:def_cop_diag} tends to the asymptotic tail dependence 
coefficients $\tUU_*$ when $u \to 1$ ($\tLL_*$ when $u \to 0$). 
Similarly, the anti-diagonal quantity \eqref{eq:def_cop_adiag} tends to $-\tUL_*$ as $u \to 1$ and to $-\tLU_*$ as $u \to 0$.
This still holds for the alternative definition in Eq.~\eqref{eq:Delta}:
 $\Delta_{\scriptscriptstyle \text{d,a}}(u)$ tend to the asymptotic tail dependence coefficients in the limits $u \to 1$ and $u \to 0$, 
owing to the fact that these coefficients are all zero in the Gaussian case.
As we show in Chapter~\ref{part:partIII}.\ref{chap:copulas_time}, 
the diagonal copula plays also an important role in the characterization of temporal dependences in time series:
observables such as recurrence times, sequences lengths, waiting times, etc.\ are fundamentally many-points objects,
and their statistics are characterized by the diagonal $n$-points copulas.

The center-point of the copula, $\cop(\tfrac12,\tfrac12)$, is particularly interesting: 
it is the probability that both variables are simultaneously below their respective median. 
For bivariate Gaussian variables, one can show that:
\be\label{eq:ell_cop_rho}
\cop[\text{G}](\tfrac12,\tfrac12) = \frac14 + \frac{1}{2 \pi} \arcsin \rho \equiv \frac14 \left(1 + \ros\right),
\ee
where $\ros$ is the sign correlation coefficient defined on page \pageref{eq:ros}. 
The trivial but remarkable fact is that the above expression holds for any elliptical models, 
that we define and discuss in the next section. 
This will provide a powerful test to discriminate between empirical data 
and a large class of copulas that have been proposed in the literature.

\paragraph{Generalized Blomqvist's beta}
We report for completeness a possible multivariate generalization of the bivariate Blomqvist's beta,
proposed in Ref.~\cite{schmid2007nonparametric}: 
\begin{equation}
	\beta_{\vect{X}}(\vect{u},\vect{v})=\frac{\cop[\vect{X}](\vect{u})-\Pi(\vect{u})+\overline{\cop[\vect{X}]{}}(\vect{v})-\overline\Pi(\vect{v})}{M(\vect{u})-\Pi(\vect{u})+\overline{M}(\vect{v})-\overline\Pi(\vect{v})},
\end{equation}
where $\overline{\cop[\vect{X}]{}}$ is the survival copula defined similarly to Eq.~\eqref{eq:def_cop_mult} as 
\begin{align}\label{eq:def_tcop_mult}
	\overline{\cop[\vect{X}]{}}(u_1,\ldots,u_N)&=\tcdf[\vect{X}]\left(\tqdf[1](1-u_1),\ldots, \tqdf[N](1-u_N)\right).
\end{align}
$\beta_{\vect{X}}$ has an immediate interpretation as a normalized distance between $\cop[\vect{X}]$ and $\Pi$:
it measures the effective global departure from independence normalized by the maximum departure from independence.
When $N=2$ and the copula is symmetric, it also nicely interpolates between several of the coefficients defined above, for example
\begin{align*}
    \vect{u}=\vect{v}=(\tfrac12,\tfrac12)   &\Longrightarrow    \beta_{\vect X}(\vect{u},\vect{v}) =4\cop[\vect{X}](\tfrac12,\tfrac12)-1    =\betaB_{\vect{X}}\\
    \vect{u}=(u,u),\vect{v}=(1,1)           &\Longrightarrow    \beta_{\vect X}(\vect{u},\vect{v}) =\frac{\cop[\vect{X}](u,u)-u^2}{u\,(1-u)}=\tUU(u)+\tLL(1-u)-1.
\end{align*}

\subsection{A word on Archimedean copulas}\label{sec:archCop}
In the universe of all possible copulas, a particular family has become increasingly popular in finance: that of 
``Archimedean copulas''. These copulas are defined as follows \cite{wuvaldez}:  
  \be
  \cop[\phi](u,v) \equiv  \phi^{-1} \left[\phi(u) + \phi(v)\right], \qquad u,v\in[0,1]
  \ee
  where $\phi(u): [0,1] \to [0,1]$ is a function such that $\phi(1)=0$ and $\phi^{-1}$ is decreasing and completely monotone. 
  For example, Frank copulas are such that $\phi_{\text{F}}(u) = \hbox{$\ln [\e^\theta - 1]$} - \hbox{$\ln [\e^{\theta u} - 1]$}$ where $\theta$ is a real parameter, or
  Gumbel copulas, such that $\phi_{\text{G}}(u) = (-\ln u)^\theta$, $0 < \theta \leq 1$. The asymptotic coefficient of tail dependence are all zero for
  Frank copulas, whereas $\tUU_*=2-2^\theta$ (and all other zero) for the Gumbel copulas. The case of general multivariate copulas is obtained 
  as natural generalization of the above definition.
  
  One can of course attempt to fit empirical copulas with a specific Archimedean copula. By playing enough with the function $\phi$, it is obvious that
  one will eventually reach a reasonable quality of fit. What we take issue with is the complete lack of intuitive interpretation or plausible 
  mechanisms to justify why the returns of two correlated assets should be described with Archimedean copulas. This is particularly clear after 
  recalling how two Archimedean random variables are generated: first, take two $\mathcal{U}[0,1]$ random variables $s,w$. 
  Set $w'=K^{-1}(w)$ with $K(x)=x-\phi(x)/\phi'(x)$. %
  Now, set:
  \be
  u = \phi^{-1}\!\left[s\,\phi(w')\right]; \qquad v = \phi^{-1}\!\left[(1\!-\!s)\,\phi(w')\right],
  \ee
  and finally write $X_1= \qdf[1](u)$ and $X_2= \qdf[2](v)$ 
  to obtain the two Archimedean returns with the required marginals ${\mathcal P}_1$ and ${\mathcal P}_2$ \cite{genest1993statistical,wuvaldez}.
  Unless one finds a natural economic or micro-structural interpretation for such a labyrinthine construction, 
  we content that such models should be discarded {\it a priori}, for lack of plausibility.
  In the same spirit, one should really wonder why the financial industry has put so much faith in Gaussian copulas
  models to describe correlation between {\it positive} default times, that were then used to price CDO's and other credit derivatives. 
  We strongly advocate the need to build models bottom-up: mechanisms should come before any mathematical representation
  (see also \cite{mikosch2006copulas} and references therein for a critical view on the use of copulas).
 


\section{Assessing temporal dependences}

In the context of discrete time series, where $i$ is a time index and $j=i+\ell$ ($\ell>0$, say), 
the realization of the random variable $X_j$ occurs after that of $X_i$.
Nevertheless, before $X_i$ is realized, the {\it a priori} probability distribution of $X_j$ is not
\emph{conditional} but \emph{joint} to that of $X_j$, and it is possible to use all
the measures of dependence defined above in this context.

For example, $\rho_{\ij}$ is the coefficient of linear \emph{auto}-correlation at lag $\ell=j-i$:
it says in essence ``how much of $X_i$ is expected to be reproduced, $\ell$ time steps later, in the realization of $X_j$''.
When positive, it indicates persistence, while reversion is revealed by negative autocorrelation.

Blomqvist's beta (defined in Eq.~\eqref{eq:blomqvist_2d}, page~\pageref{eq:blomqvist_2d}) is somehow more intuitive, 
as it quantifies directly the difference in probability that the signs will persist or alternate.
 
When one wants to be more specific in the characterization of temporal dependences and study for example 
the persistence and reversion probabilities of one realization of arbitrary magnitude,
one needs to resort to the copula. 
Typically, persistence of extreme events ($p$-th quantile, where $p$ is close to 1) separated by a time lag $\ell$
is measured by the tail dependence coefficients  $\tUU_\ell(p)$ and $\tLL_\ell(p)$, 
while reversion of extreme events is measured by $\tUL_\ell(p)$ and $\tLU_\ell(p)$.
We will elaborate on the topic of copulas in time series in Chapter~\ref{chap:copulas_time} of Part~\ref{part:partIII}.

All these coefficients will typically be of decreasing amplitude as $\ell$ gets larger, 
reflecting the loss of memory as time goes on and randomness comes in.
There nevertheless exist systems where the $\ell$-dependence is not monotonous
but rather exhibits oscillations due to periodic activity 
(see for example the rising interest in electroencephalogram data analysis).

\section{Review of elliptical models}\label{sec:elliptical_model}

\subsection{General elliptical models}

An elliptical distribution is characterized by a location parameter $\vect\mu$,
a positive definite dispersion matrix $\Sigma$ and a function $\phi$ with positive support: 
let us denote this class by $E_N(\vect\mu,\Sigma,\phi)$.
Ellipticity of a random variable is a property of its characteristic function,
as explicited in the following definition.

\begin{definition}
  \begin{equation}
	\vect{X}\sim E_N(\vect\mu,\Sigma,\phi) 
    \quad\Longleftrightarrow\quad
    \Esp{\Exp{\imath\vect{t}^{\top}(\vect{X}-\vect\mu)}}=\phi(\vect{t}^{\top}\Sigma^{-1}\vect{t})
  \end{equation}
\end{definition}
The quadratic form in the argument of $\phi$ means that the points of constant characteristic function 
draw an ellipsoid in the Fourier space, whence the name
(see \cite{cambanis1981theory} for the construction and properties of elliptically contoured multivariate distributions).
This definition in Fourier space is neither very convenient to build such variables,
nor comfortable for its interpretation. 
The following theorem decomposes the elliptical random vector in terms of a uniformly distributed $N$-dimensional sphere
and a random radial scale.

\begin{theorem}[\citet*{cambanis1981theory}]
If $\mat{A}$ is a $(N\times k)$ matrix with $k\leq N$, and $\vect{U}\sim\mathcal{U}(S^k)$, then
\begin{subequations}\label{eq:theorem_cambanis}
  \begin{equation}
    \vect{X}=\vect\mu+\sigma \mat{A}\vect{U}
    \quad\Longrightarrow\quad\exists \phi \text{ s.t.\ }
  	\vect{X}\sim E_N(\vect\mu,\mat{A}\mat{A}^{\top},\phi)
  \end{equation}
  The converse holds true and calls, for a positive definite $\Sigma$ of rank $k$, 
  \begin{equation}
  	\vect{X}\sim E_N(\vect\mu,\Sigma,\phi) 
    \quad\Longrightarrow\quad \exists \sigma\text{ and } \mat{A} \text{ s.t.\ }\mat{A}\mat{A}^{\top}=\Sigma\quad\text{and}\quad
    \vect{X}=\vect\mu+\sigma \mat{A}\vect{U}
  \end{equation}
\end{subequations}
\end{theorem}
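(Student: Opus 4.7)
The plan is to prove both implications by working with characteristic functions and exploiting the orthogonal invariance of the uniform measure on the sphere $S^k$, reducing everything to the spherically symmetric case via an affine change of variables.

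For the forward implication, I would compute the characteristic function of $\vect{X}-\vect{\mu} = \sigma\mat{A}\vect{U}$ directly:
\begin{equation*}
    \Esp{\Exp{\imath\vect{t}^{\top}(\vect{X}-\vect\mu)}}
    = \Esp{\Exp{\imath\sigma(\mat{A}^{\top}\vect{t})^{\top}\vect{U}}}
    = \psi_k\!\left(\|\sigma \mat{A}^{\top}\vect{t}\|^2\right),
\end{equation*}
where $\psi_k(\|\vect{s}\|^2) := \Esp{\Exp{\imath\vect{s}^{\top}\vect{U}}}$ is the characteristic function of the uniform distribution on $S^k$, which depends only on $\|\vect{s}\|$ by rotational invariance of $\vect{U}$ (for any orthogonal $\mat{O}$, $\mat{O}\vect{U} \stackrel{L}{=} \vect{U}$). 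Since $\|\sigma\mat{A}^{\top}\vect{t}\|^2 = \sigma^2\vect{t}^{\top}\mat{A}\mat{A}^{\top}\vect{t}$, we can read off $\phi$ and the resulting dispersion matrix $\mat{A}\mat{A}^{\top}$ as required.

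For the converse, given $\vect{X}\sim E_N(\vect\mu,\Sigma,\phi)$ with $\rank\Sigma = k$, the plan is: (i)~factorize $\Sigma = \mat{A}\mat{A}^{\top}$ with $\mat{A}$ an $N\times k$ matrix of full column rank, e.g.\ via the spectral decomposition; (ii)~note that $\vect{X}-\vect\mu$ is concentrated on the column space of $\mat{A}$, so we may define $\vect{Y} = \mat{A}^{+}(\vect{X}-\vect\mu)\in\mathds{R}^k$ via the Moore--Penrose pseudoinverse; (iii)~check that the characteristic function of $\vect{Y}$ is a function of $\|\vect{s}\|^2$ alone, i.e.\ $\vect{Y}$ is spherically symmetric in $\mathds{R}^k$; (iv)~perform a polar decomposition $\vect{Y} = R\vect{U}$ with $R = \|\vect{Y}\|$ and $\vect{U} = \vect{Y}/R$ (on $\{R>0\}$); (v)~recover $\vect{X} = \vect\mu + \mat{A}\vect{Y} = \vect\mu + R\,\mat{A}\vect{U}$, and identify $\sigma = R$.

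The delicate step is step (iv): proving simultaneously that $\vect{U}$ is uniform on $S^k$ and independent of $R$. The argument I would use is that spherical symmetry of $\vect{Y}$ means $\mat{O}\vect{Y}\stackrel{L}{=}\vect{Y}$ for every orthogonal $\mat{O}\in O(k)$, hence $(R,\mat{O}\vect{U})\stackrel{L}{=}(R,\vect{U})$. Conditioning on $R=r$, the law of $\vect{U}\mid R=r$ is invariant under all orthogonal transformations, and the Haar uniqueness theorem on $S^k$ forces this conditional law to be the uniform measure on $S^k$ — independently of $r$ — which simultaneously yields uniformity and independence. The main obstacle I anticipate is the bookkeeping in the rank-deficient case $k<N$: one must verify that $\vect{X}-\vect\mu$ almost surely lies in $\operatorname{Im}(\mat{A})$ (which follows from the characteristic function being constant in directions orthogonal to that subspace), and that $\mat{A}\mat{A}^{+}$ acts as the identity on that subspace so that the reconstruction $\vect{X}=\vect\mu+\mat{A}\vect{Y}$ is faithful.
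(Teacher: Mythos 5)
The paper does not actually prove this theorem: it is quoted from \cite{cambanis1981theory} and used as a black box, so there is no in-paper argument to compare yours against. Your proof is the standard one (a Schoenberg-type argument via characteristic functions and the rotational invariance of the uniform law on the sphere), and it is essentially correct, including the two genuinely delicate points --- the almost-sure concentration of $\vect{X}-\vect\mu$ on $\operatorname{Im}(\mat{A})$ in the rank-deficient case, and the simultaneous derivation of the uniformity of $\vect{U}$ and of its independence from $R$ through the $O(k)$-invariance of the conditional law given $R$.

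One step should nevertheless be tightened. In the forward direction the displayed identity
\begin{equation*}
\Esp{\Exp{\imath\sigma(\mat{A}^{\top}\vect{t})^{\top}\vect{U}}}=\psi_k\!\left(\|\sigma\mat{A}^{\top}\vect{t}\|^2\right)
\end{equation*}
cannot hold as written when $\sigma$ is random: the left-hand side is a number while the right-hand side is a random variable. You must first invoke the (implicit but essential) hypothesis that $\sigma$ is independent of $\vect{U}$, condition on $\sigma$ to obtain $\psi_k(\sigma^2\,\vect{t}^{\top}\mat{A}\mat{A}^{\top}\vect{t})$, and only then average over $\sigma$, defining $\phi(u)=\Esp{\psi_k(\sigma^2 u)}$; without the independence of the radial and angular parts the forward implication is simply false, so state it explicitly. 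A last bookkeeping remark: your computation yields a characteristic function of the form $\phi(\vect{t}^{\top}\mat{A}\mat{A}^{\top}\vect{t})$, whereas the Definition preceding the theorem writes $\phi(\vect{t}^{\top}\Sigma^{-1}\vect{t})$; the two are consistent only if that definition is read with $\Sigma$ in place of $\Sigma^{-1}$ (the usual convention, and the one the rest of the chapter tacitly uses), so you should say which convention you adopt.
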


  The class of elliptical distributions comprises the Gaussian and Student distributions:
\begin{enumerate}
	\item{The Gaussian distribution:} If the scaling variable has a chi-2 distribution with 
    $k=\rank{\Sigma}$ degrees of freedom, i.e.\ $\sigma\sim\chi^2(k)$ then the resulting $\vect{X}$ is normally distributed:
    \[
    \sigma\sim\chi^2(k)\quad\Longrightarrow\quad\sigma \mat{A}\vect{U}\sim\mathcal{N}(0,\Sigma)
    \]
	\item{The Student distribution:}  As is well known, the quotient of a Gaussian variable over a chi-2 variable is Student-distributed.
    Hence, 
    \[
    \sigma=\sqrt{\frac{\nu}{2}}\frac{\sigma_k}{\sigma_{\nu}}\quad\text{and}\quad\sigma_{\nu}\sim\chi^2(\nu) \quad\Longrightarrow\quad \sigma \mat{A}\vect{U}\sim t_{\nu}(0,\Sigma)
    \]
\end{enumerate}

As a consequence of the theorem above and the of the example 1, all
elliptical random variables $X_i$ can be simply generated by multiplying 
standardized Gaussian random variables $\epsilon_i$ with a common random 
(strictly positive) factor $\sigma$, drawn independently from the $\epsilon$'s 
from an arbitrary distribution: \cite{embrechts02correlation}
\begin{equation}\label{eq:repres_ellipt}
	\vect{X}=\vect\mu+\sigma\,\vect\epsilon
\end{equation}
where $\mu$ is a location parameter, 
and \mbox{$\vect\epsilon\sim\mathcal{N}(\mathbf{0},\Sigma)$}.

\begin{property}[stability]
Let $\vect{X}\sim E_N(\vect\mu_X,\Sigma,\phi_X)$ and $\vect{Y}\sim E_N(\vect\mu_Y,c\Sigma,\phi_Y)$ be independent random vectors.
Their sum is elliptical, too:
\[
a\vect{X}+b\vect{Y}\sim E_N(a\vect\mu_X+b\vect\mu_Y,\Sigma,\phi)
\quad\text{with}\quad
\phi(\vect{\cdot})=\phi_X(a^2\vect{\cdot})\phi_Y(b^2c\vect{\cdot})
\]
\end{property}

As an example, consider the difference of two independent and elliptical random vectors: $\vect{X}-\vect{X}'\sim E_N(\vect{0},\Sigma,\phi^2)$
Remark that the stability also holds whenever the random vectors $\vect{X}$ and $\vect{Y}$ are dependent only through their radial part $\sigma$, see \cite{hult2002multivariate}.

\begin{property}[bivariate marginals] 
Let $\vect{X}\sim E_N(\vect\mu,\Sigma,\phi)$. Then, for all pair $(i,j)$,
\[
    \vect{X_{\ij}}\sim E_2(\vect{\mu_{\ij}},\Sigma_{(\ij)},\phi), \quad\text{with}\quad\Sigma_{(\ij)}=\left(\begin{array}{cc}\Sigma_{\!i\!i}&\Sigma_{\ij}\\\Sigma_{\ij}&\Sigma_{\!j\!j}\end{array}\right)
\]
In words, the bivariate marginals of a multivariate elliptical random variable
are also elliptical, and inherit the parameters of the joint distribution.
\end{property}
This is clear from either representations \eqref{eq:theorem_cambanis} or \eqref{eq:repres_ellipt}.
This property has an important converse: 
if the bivariate marginals are not elliptical with the same radial characteristic function, 
then the joint probability density is not elliptical.
We will make use of this corollary in our empirical study of multivariate distribution for stock returns in Chapter~\ref{chap:IJTAF} of Part~\ref{part:partII}.

\subsection{Predicted dependence coefficients}
The measures of dependence defined in Section~\ref{sec:bi_meas_dep} can be explicitly computed for elliptical models.
Let $\vect{X}\sim E_N(\vect\mu,\Sigma,\phi)$ and $r_{\ij}={\Sigma_{\ij}}/{\sqrt{\Sigma_{i\!i}\Sigma_{j\!j}}}$.
Then 
\begin{subequations}
\begin{align}\label{eq:ell_rho1}
   \rob_{\ij}&=r_{\ij}\\\label{eq:coeffs_ell}
   \roa_{\ij}&=\frac{\fd[1]\cdot D(r_{\ij})-1}{\frac{\pi}{2}\fd[1]-1}\\
   \roc_{\ij}&=\frac{\fd[2]\cdot(1+2r_{\ij}^2)-1}{3 \fd[2]-1}\\\label{eq:ell_betaB}
 \betaB_{\ij}=
\rod[0]_{\ij}&=\frac{2}{\pi}\arcsin r_{\ij}\\\label{eq:ell_tauK}
  \tauK_{\ij}&=\frac{2}{\pi}\arcsin r_{\ij}\qquad\text{proof in \cite{lindskog2001kendall}},
\end{align}
\end{subequations}
where $\fd=\esp{\sigma^{2d}}/\esp{\sigma^d}^2$ and $D(r)=\sqrt{1-r^2}+r\,\arcsin r$.
Some remarks regarding these equations are of importance:
\begin{itemize}
\item $\rob_{\ij}$, $\betaB_{\ij}$ and $\tauK_{\ij}$ do not depend on $\phi$ (i.e.\ on $\sigma$): 
they are invariant in the class of elliptical distributions 
with continuous marginals and given dispersion matrix $\Sigma$.
\item No such invariance exists for Spearman's rho $\rhoS$ (\cite{hult2002multivariate}) and for $\rod$ and $\roa$!
\item For conclusions \eqref{eq:ell_tauK} and \eqref{eq:ell_betaB}, $\vect{X}\sim E_N(\vect\mu,\Sigma,\phi)$ is a sufficient condition
but not a necessary one ! Every random vector $\vect{X}$ with elliptical \emph{copula} (see below) verifies \eqref{eq:ell_tauK} and \eqref{eq:ell_betaB}, 
and is hence said to follow a multivariate meta-elliptical distribution.
\item Relations \eqref{eq:ell_tauK} and \eqref{eq:ell_betaB} can be used to define alternative ways of measuring the correlation matrix elements $r_{\ij}$.
Indeed,
\[
	\rhoK\equiv\sin\left(\tfrac{\pi}{2}\tauK\right)\qquad\text{and}\qquad\rhoB\equiv\sin\left(\tfrac{\pi}{2}\betaB\right)
\]
provide very low-moments (though slightly negatively biased) estimators --- see \cite{lindskog2001kendall} for a study of $\rhoK$.
Said differently, it is always possible to define the effective correlation $\rhoB$ which, 
according to Eq.~\eqref{eq:ell_cop_rho} and the subsequent discussion on page~\pageref{eq:ell_cop_rho},
can also be written as 
\be\label{eq:def_rhoB}
    \rhoB=\cos\!\left(2\pi \cop(\tfrac12,\tfrac12)\right).
\ee
Now Eq.~\eqref{eq:ell_betaB} states together with Eq.~\eqref{eq:ell_rho1} that for all elliptical models, 
this effective correlation is just equal to the usual linear correlation $\rob$ !
This property provides a very convenient and simple testable prediction to check whether a given copula is 
compatible with an elliptical model or not, and will be at the heart of our empirical study in Chapters~\ref{chap:IJTAF} and~\ref{chap:multifact} of part~\ref{part:partII}.

\item $\fd[2]$ is related to the kurtosis of the $X$'s through the relation $\kappa=3(\fd[2]-1)$.
\end{itemize}

The calculation of the tail correlation coefficients depends on the specific form of the function $\phi$
or, said differently, on the distribution $\pdf[\sigma]$ of $\sigma$ in Eq.~\eqref{eq:repres_ellipt}, 
for which several choices are possible. We will focus in the following on two of them, 
corresponding to the Student model and the log-normal model.

\subsubsection*{The Gaussian ensemble} 

In the case of bivariate Gaussian variables of variance $\sigma_0^2$, i.e.\ when
\[
    \pdf[\sigma](\sigma)=\delta(\sigma-\sigma_0),
\]
all $\fd=1$ and all the coefficients of dependence depend on $\rho$ only ! 
For example, the quadratic correlation $\roc$ is given by:
\be
\roc = \rho^2;
\ee
whereas the correlation of absolute values $\roa$ is given by:
\be
\roa = \frac{D(\rho)-1}{\frac{\pi}{2}-1},
\ee

For some other classes of distributions, the higher-order coefficients $\rod$ and $\road$ are explicit functions of the
coefficient of linear correlation. This is for example the case of Student variables (see Fig.~\ref{fig:dep}) 
and more generally for all elliptical distributions.

\subsubsection*{The Student ensemble} 

When the distribution of the square volatility is inverse Gamma, i.e.\ when
\[
    \pdf[\sigma^2](u)=\frac{1}{\Gamma(\frac{\nu}{2})}u^{-\frac{\nu}{2}-1}\,\e^{-\frac{1}{u}},
\]
the joint pdf of the returns turns out to have an explicit form \cite{demarta2005t,embrechts02correlation,malevergne2006extreme}:
\begin{equation}
   t_{\nu}(\vect{x})=\frac{1}{\sqrt{(\nu\pi)^N\det\Sigma}}\frac{\Gamma(\frac{\nu+N}{2})}
   {\Gamma(\frac{\nu}{2})}\left(1+\frac{\vect{x}^\dagger\Sigma^{-1}\vect{x}}{\nu}\right)^{-\frac{\nu+N}{2}}
\end{equation}
This is the multivariate Student distribution with $\nu$ degrees of freedom for $N$ random variables with dispersion matrix $\Sigma$.
Clearly, the marginal distribution of $t_{\nu}(\vect{x})$
is itself a Student distribution with a tail exponent equal to $\nu$, which is well known to describe satisfactorily the univariate pdf of high-frequency returns
(from a few minutes to a day or so), with an exponent in the range $[3,5]$ (see e.g.\ Refs.~\cite{bouchaud2003theory,cont2001empirical,fuentes2009universal,plerou1999scaling}).
The multivariate Student model is therefore a rather natural choice; its corresponding copula defines the Student copula. 
For $N=2$, it is entirely parameterized by $\nu$ and the correlation coefficient $\rho=\Sigma_{12}$ ($\Sigma_{11}=\Sigma_{22}=1$).

The moments of $\sigma$ are easily computed and lead to the following expressions for the coefficients $\fd$:
\begin{subequations}
\begin{align}
  \fd[1]&=\frac{2}{\nu-2}\left(\frac{\Gamma(\frac{\nu}{2})}{\Gamma(\frac{\nu-1}{2})}\right)^2&
  \fd[2]&=\frac{\nu-2}{\nu-4}
\end{align}
\end{subequations}
when $\nu>2$ (resp. $\nu>4$).  Note that in the limit $\nu \to \infty$ at fixed $N$, the multivariate Student distribution boils down to a 
multivariate Gaussian. The shape of $\roa(\rho)$ and $\roc(\rho)$ for $\nu=5$ is given in Fig.~\ref{fig:dep}.

\begin{figure}[t!]
	\center
    \includegraphics[scale=0.45,trim=0 0 0 50,clip]{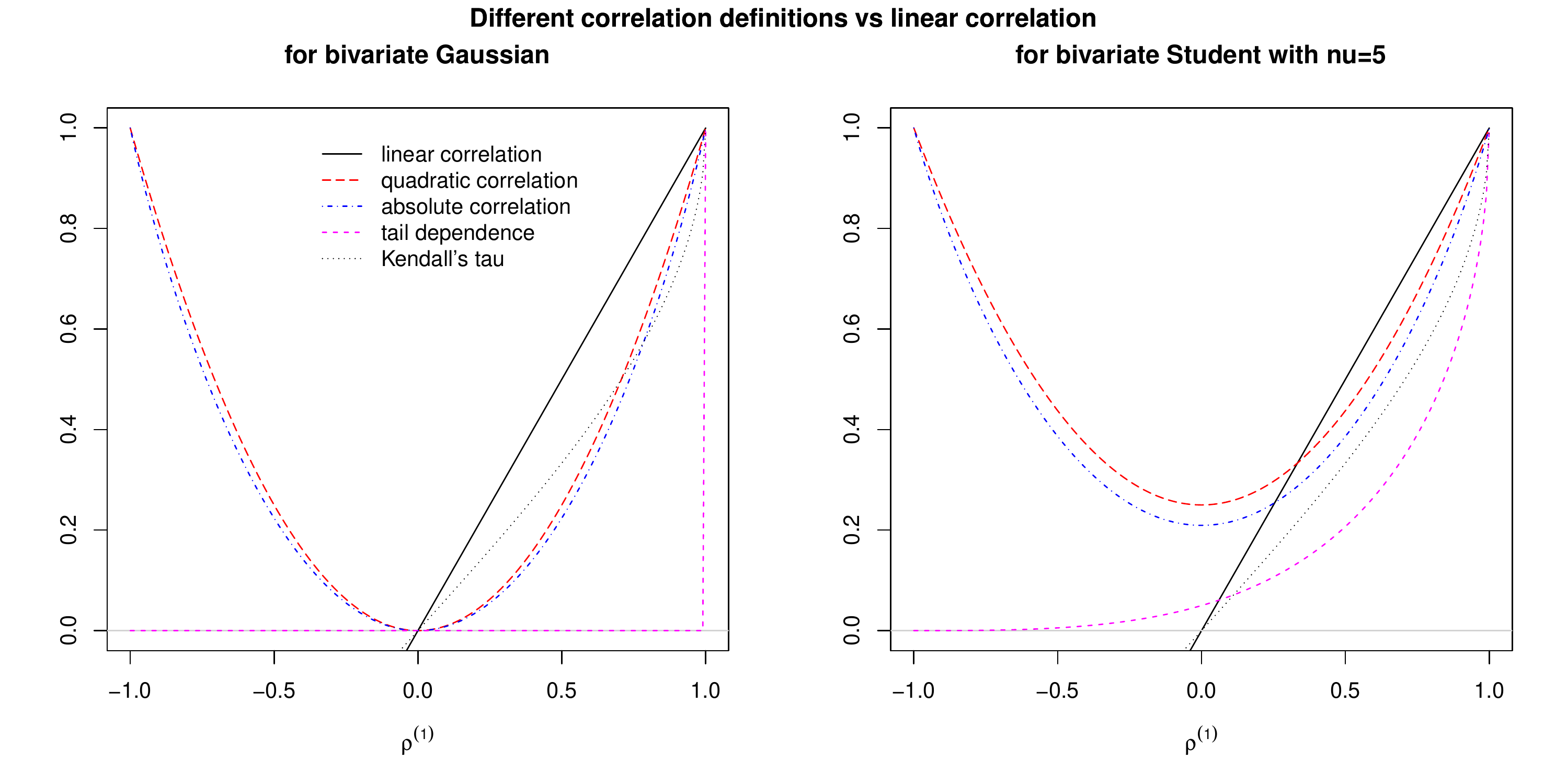}
	\caption{For all elliptical distributions, the different measures of dependence are functions of $\rho$.
	         Illustration for a pair with a bivariate Gaussian distribution (left), 
             and a bivariate Student distribution with $\nu=5$ (right).}
	\label{fig:dep}
\end{figure}

One can explicitly compute the coefficient of tail dependence for Student variables,
which only depends on $\nu$ and on the linear correlation coefficient $\rho$. 
By symmetry, one has $\tUU(p;\nu,\rho)=\tLL(p;\nu,\rho)=\tUL(p;\nu,-\rho)=\tLU(p;\nu,-\rho)$. 
When $p = 1 - \epsilon$ with $\epsilon \to 0$, the result is given by the following theorem:%
\footnote{
This result was simultaneously found by Manner and Segers \cite{manner2009tails} (in a somewhat more general context). 
We still sketch our proof because it follows a different route (uses the copula), 
and the final expression looks quite different, although of course numerically identical.
}


\begin{theorem}[\cite{chicheportiche2012joint}]
Let $(X_1,X_2)$ follow a bivariate student distribution with $\nu$ degrees of freedom and correlation $\rho$,
and denote by $T_{\nu}(x)$ the univariate Student cdf with $\nu$ degrees of freedom.
The pre-asymptotic behavior of its tail dependence when $p\to 1$ is approximated by the following expansion in (rational) powers of $(1-p)$:
\begin{equation}\label{eq:tau_puissance}
	\tUU(p;\nu,\rho)=\tUU_*(\nu,\rho) + \beta(\nu,\rho)\cdot \epsilon^{\frac{2}{\nu}} + \mathcal{O}(\epsilon^{\frac{4}{\nu}}),
\end{equation}
with
\begin{align*}
	\tUU_*(\nu,\rho)&=2-2\:T_{\nu+1}(k(1))\\
	 \beta(\nu,\rho)&=\frac{\nu^{\frac{2}{\nu}+1}}{\frac{2}{\nu}+1}\,k(1)\,t_{\nu+1}\big(k(1)\big)\,L_{\nu}^{-\frac{2}{\nu}}\\\nonumber
\end{align*}
where $k(1)={\sqrt{(\nu+1)(1-\rho)}}/{\sqrt{1+\rho}}$
and   $L_{\nu}=\pi^{-\frac{1}{2}}\nu^{\frac{\nu}{2}}{\Gamma(\frac{\nu+1}{2})}/{\Gamma(\frac{\nu}{2})}$.
\end{theorem}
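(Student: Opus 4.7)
The plan is to express $\tUU(p)$ directly via the joint survival function and exploit a basic symmetry of the bivariate Student. Writing $t=T_\nu^{-1}(p)$, so that $p=T_\nu(t)$, one has $\tUU(p)=\Pr{X_1>t,X_2>t}/(1-p)$. Using the exchange symmetry $(X_1,X_2)\leftrightarrow(X_2,X_1)$ together with $\Pr{X_1=X_2}=0$, I would start from the identity $\Pr{X_1>t,X_2>t}=2\,\Pr{X_2>X_1>t}$, which allows conditioning on the smaller of the two variables.

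Next I would condition on $X_1=x$. For the bivariate Student with parameters $(\nu,\rho)$, it is classical that the conditional law of $X_2$ given $X_1=x$ is a location-scale Student with $\nu+1$ degrees of freedom, location $\rho x$ and squared scale $(1-\rho^2)(\nu+x^2)/(\nu+1)$. Hence
\[
\Pr{X_2>X_1>t}=\int_t^\infty t_\nu(x)\,\bar{T}_{\nu+1}\!\bigl(\psi(x)\bigr)\,\d{x},\qquad \psi(x)=\frac{(1-\rho)\,x\,\sqrt{\nu+1}}{\sqrt{(1-\rho^2)(\nu+x^2)}},
\]
where $\bar{T}_{\nu+1}=1-T_{\nu+1}$. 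The key observation is that $\psi(x)\to k(1)$ as $x\to\infty$, so at leading order $\bar{T}_{\nu+1}(\psi(x))$ factors out of the integral and the remaining factor is just $1-T_\nu(t)$; dividing by $1-p$ immediately yields $\tUU_*(\nu,\rho)=2\bar{T}_{\nu+1}(k(1))=2-2\,T_{\nu+1}(k(1))$.

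For the first correction I would Taylor-expand $\psi(x)^2=k(1)^2/(1+\nu/x^2)$ to get $\psi(x)=k(1)\bigl(1-\nu/(2x^2)+O(x^{-4})\bigr)$, and hence $\bar{T}_{\nu+1}(\psi(x))=\bar{T}_{\nu+1}(k(1))+t_{\nu+1}(k(1))\,k(1)\,\nu/(2x^2)+O(x^{-4})$. Combined with the tail expansion $t_\nu(x)=L_\nu x^{-\nu-1}+O(x^{-\nu-3})$, integration of the $x^{-2}$ Taylor term against $L_\nu x^{-\nu-1}$ over $[t,\infty)$ produces an extra contribution of order $t^{-\nu-2}$ with prefactor $L_\nu\,k(1)\,\nu\,t_{\nu+1}(k(1))/(2(\nu+2))$. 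Dividing by $1-T_\nu(t)=L_\nu t^{-\nu}/\nu+O(t^{-\nu-2})$ turns this into a relative correction of order $t^{-2}$, which via the inversion $t^{-2}=(\nu/L_\nu)^{2/\nu}\,\epsilon^{2/\nu}+O(\epsilon^{4/\nu})$ --- itself read off from $\epsilon=L_\nu t^{-\nu}/\nu+O(t^{-\nu-2})$ --- becomes the announced $\beta(\nu,\rho)\,\epsilon^{2/\nu}$, reproducing the stated closed form after collecting constants.

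The main obstacle is the consistent bookkeeping of three asymptotic expansions that all contribute at the same relative order $t^{-2}$: the expansion of $\psi(x)$ around $k(1)$, the subleading $x^{-\nu-3}$ correction to the Student density, and the inversion $\epsilon\leftrightarrow t$. I expect the cleanest route is to check that the density correction to $t_\nu$ only feeds the $O(\epsilon^{4/\nu})$ remainder --- its integral against the leading constant $\bar{T}_{\nu+1}(k(1))$ is absorbed by the matching correction to $1-T_\nu(t)$ that appears in the denominator, and its convolution with the $O(x^{-2})$ Taylor piece of $\psi$ is already $O(t^{-\nu-4})$ --- so that $\beta(\nu,\rho)$ is determined solely by the Taylor coefficient $t_{\nu+1}(k(1))\,k(1)\,\nu/2$ of $\bar{T}_{\nu+1}\circ\psi$ integrated against the pure power tail $L_\nu x^{-\nu-1}$.
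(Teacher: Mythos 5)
Your proposal is correct and is essentially the paper's own argument in different clothing: the substitution $q=T_\nu(x)$ turns the paper's $\int_p^1\d{\cop(q,q)}=2\int_p^1T_{\nu+1}(k(q))\,\d{q}$ into your $2\int_t^\infty t_\nu(x)\,\bar{T}_{\nu+1}(\psi(x))\,\d{x}$ (note $\psi(x)=k(1)/\sqrt{1+\nu/x^2}$ is exactly the paper's $k(p)$ at $x=x_p$), and both rest on the same conditional Student-$(\nu+1)$ lemma, the same Taylor expansion around $k(1)$, and the same power-law inversion $\epsilon\leftrightarrow t$. Your bookkeeping of the constant indeed reproduces $\beta=\frac{\nu^2}{\nu+2}\,\nu^{2/\nu}\,k(1)\,t_{\nu+1}(k(1))\,L_\nu^{-2/\nu}$, which matches the stated $\frac{\nu^{2/\nu+1}}{2/\nu+1}\,k(1)\,t_{\nu+1}(k(1))\,L_\nu^{-2/\nu}$.
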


The following Lemma, is needed for the proof of the Theorem:
\begin{lemma}
Let $(X_1,X_2)$ and $T_{\nu}(x)$ be like in the Theorem above,
and $x_p\equiv{}T_{\nu}^{-1}(p)$. Define
\begin{align*}
	K   &=\sqrt{\frac{\nu+1}{1-\rho^2}}\,\frac{X_1-\rho{}x_p}{\sqrt{\nu+X_2^2}}\\
	k(p)&=\frac{\sqrt{\nu+1}\sqrt{1-\rho}}{\sqrt{1+\rho}}\left[\nu\cdot{}x_p^{-2}+1\right]^{-\frac{1}{2}}
	     =\frac{k(1)}{\sqrt{1+\frac{\nu}{x_p^{2}}}}.
\end{align*}
Then, 
\begin{equation}
	\pr{K\leq k(p)\mid X_2=x_p}=T_{\nu+1}\big(k(p)\big)
\end{equation}
\end{lemma}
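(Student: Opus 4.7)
The strategy is to show directly that the conditional distribution of $K$ given $X_2=x_p$ is exactly a univariate Student with $\nu+1$ degrees of freedom; the equality of CDFs at the point $k(p)$ then follows trivially.

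The plan is to start from the explicit bivariate Student pdf,
\[
    t_\nu(x_1,x_2)\;\propto\;\left(1+\frac{x_1^2-2\rho x_1 x_2+x_2^2}{\nu(1-\rho^2)}\right)^{-\frac{\nu+2}{2}},
\]
and then perform the key algebraic step: complete the square to write $x_1^2-2\rho x_1 x_2+x_2^2=(x_1-\rho x_2)^2+(1-\rho^2)x_2^2$, so that the quadratic form in the bracket factorizes as
\[
    \left(1+\tfrac{x_2^2}{\nu}\right)\!\left(1+\frac{(x_1-\rho x_2)^2}{(\nu+x_2^2)(1-\rho^2)}\right).
\]
Dividing $t_\nu(x_1,x_2)$ by the univariate Student marginal $t_\nu(x_2)\propto(1+x_2^2/\nu)^{-(\nu+1)/2}$ gives the conditional density of $X_1\mid X_2=x_2$, which carries the factor $(1+x_2^2/\nu)^{-1/2}$ together with $\bigl(1+(x_1-\rho x_2)^2/[(\nu+x_2^2)(1-\rho^2)]\bigr)^{-(\nu+2)/2}$.

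Next I would introduce the affine change of variable $y=\sqrt{\tfrac{\nu+1}{1-\rho^2}}\,\tfrac{x_1-\rho x_2}{\sqrt{\nu+x_2^2}}$, so that $(x_1-\rho x_2)^2/[(\nu+x_2^2)(1-\rho^2)]=y^2/(\nu+1)$ and $\d x_1=\sqrt{(1-\rho^2)(\nu+x_2^2)/(\nu+1)}\,\d y$. The Jacobian cancels the remaining $x_2$-dependent prefactor and leaves $(1+y^2/(\nu+1))^{-(\nu+2)/2}\,\d y$, i.e.\ the density of $t_{\nu+1}$. Specializing $x_2=x_p$, this says exactly that $K\mid X_2=x_p\sim t_{\nu+1}$, and evaluating the CDF at $y=k(p)$ yields the claim. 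A small sanity check is to verify that the definition of $k(p)$ in the lemma equals $\sqrt{\tfrac{\nu+1}{1-\rho^2}}\,\tfrac{x_p-\rho x_p}{\sqrt{\nu+x_p^2}}$, which reduces to $\tfrac{\sqrt{\nu+1}\sqrt{1-\rho}}{\sqrt{1+\rho}}(1+\nu/x_p^2)^{-1/2}$ after factoring $x_p$ out of the square root.

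There is no real obstacle here: the proof is essentially bookkeeping. The only non-routine step is spotting that the quadratic form factorizes multiplicatively into an $x_2$-only factor and an $(x_1-\rho x_2)$-factor, which is what makes the Student family closed under conditioning with a shift of one in the number of degrees of freedom. An alternative and slightly more conceptual route, which I would mention in a remark, is the scale-mixture representation $\vect X=\vect Z/\sqrt{S/\nu}$ with $\vect Z\sim\mathcal{N}(0,\Sigma)$ and $S\sim\chi^2_\nu$: conditioning on $X_2=x_2$ updates the posterior on $S$ to a $\chi^2_{\nu+1}$-like law (absorbing the factor $\nu+x_2^2$), and the Gaussian conditional $Z_1\mid Z_2$ combined with this updated scale produces directly the $t_{\nu+1}$ distribution for $K$.
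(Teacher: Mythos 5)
Your proposal is correct and follows essentially the same route as the paper: the paper's proof consists precisely of the observation that $t_{\nu+1}(k)=\pdf[X_1|X_2](x_1|x_2)\,\partial x_1/\partial k$ at $x_2=x_p$, i.e.\ that the conditional density of $X_1$ given $X_2=x_p$ becomes the $t_{\nu+1}$ density under the affine change of variable $x_1\mapsto K$, which is exactly the computation you carry out (completing the square, dividing by the marginal, and checking the Jacobian). You merely make explicit the bookkeeping the paper leaves implicit, and your verification that $k(p)$ is the image of $x_1=x_p$ under this map is a useful sanity check.
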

\begin{proof}[Proof of the Lemma]
The proof proceeds straightforwardly by showing that $t_{\nu+1}(k)=\pdf[X_1|X_2](x_1|x_2)\frac{\partial{}x_1}{\partial{}k}$ when $x_2=x_p$.
The particular result for the limit case $p=1$ is stated in \cite{embrechts02correlation}.
\end{proof}

\begin{proof}[Proof of the Theorem]
Recall from Eq.~\eqref{eq:tau_C} that
\begin{equation}\label{eq:app1}
	\tUU(p)=2-\frac{\cop(p,p)-\cop(1,1)}{p-1}=2-\frac{1}{1-p}\int_p^1\frac{\d{\cop(p,p)}}{\d{p}}\,\d{p}
\end{equation}
One easily shows that
\begin{align*}
	\d{\cop(p,p)}&=2\:\pr{X_1\leq{}x_p\mid{}X_2=x_p}\,\d{p}=2\:T_{\nu+1}\big(k(p)\big)\d{p}
\end{align*}
where the second equality holds in virtue of the aforementioned lemma.\newline
Now, for $p$ close to $1$, $k(p)=k(1)\left(1-\frac{1}{2}\frac{\nu}{x_p^2}\right)+\mathcal{O}(x_p^{-4})$, and
\begin{equation*}
	\frac{\d{\cop(p,p)}}{\d{p}}=2\,T_{\nu+1}\big(k(p)\big)\approx{}2\,T_{\nu+1}\big(k(1)\big)-2\big(k(1)-k(p)\big)\cdot{}t_{\nu+1}\big(k(1)\big)
\end{equation*}
But since the Student distribution behaves as a power-law precisely in the region $p\lesssim 1$, 
we write $t_{\nu}(x)\approx L_{\nu}/x^{\nu+1}$ and immediately get 
$
	1-p=x_p^{-\nu}{L_{\nu}}/{\nu}
$.
The result follows by collecting all the terms and performing the integration in \eqref{eq:app1}.
\end{proof}

The notable features of the above results are:
\begin{itemize}
\item For large $\nu$, the exponent is almost zero and the correction term is of order $\mathcal{O}(1)$, and the expansion ceases to hold.
This is particularly true for the Gaussian distribution ($\nu=\infty$) for which the behavior is radically different at the limit $p\to 1$
(where there's strictly no tail correlation) and at $p<1$ where a dependence subsists, see Fig.~\ref{fig:tau_stud1}.

\item The asymptotic tail dependence $\tUU_*(\nu,\rho)$ is strictly positive for all $\rho > -1$ and finite $\nu$, 
and tends to zero in the Gaussian limit $\nu \to \infty$ (see Fig.~\ref{fig:tau_stud1}). 
The intuitive interpretation is quite clear: large events are caused by large occasional bursts in volatility. 
Since the volatility is common to all assets, the fact that one return is positive extremely large is enough to infer that the volatility is itself large.
Therefore that there is a non-zero probability $\tUU_*(\nu,\rho)$ that another asset also has a large positive return 
(except if $\rho=-1$ since in that case the return can only be large and negative!). 
It is useful to note that  the value of $\tUU_*(\nu,\rho)$ does not depend on the explicit shape of $\pdf[\sigma]$ 
provided the asymptotic tail of $\pdf[\sigma]$ decays as $L(\sigma)/\sigma^{1+\nu}$, where $L(\sigma)$ is a slow function. 

\item The coefficient $\beta(\nu,\rho)$ is also positive, indicating that estimates of $\tUU_*(\nu,\rho)$ 
based on measures of $\tUU(p;\nu,\rho)$ at finite quantiles (e.g.\ $p=0.99$) are biased upwards. 
Note that the correction term is rapidly large because $\epsilon$ is raised to the power $2/\nu$. 
For example, when $\nu=4$ and $\rho=0.3$, $\beta\approx 0.263$ and one expects a first-order correction $0.026$ for $p=0.99$.
This is illustrated in Figs.~\ref{fig:tau_stud1},~\ref{fig:tau_stud2}.
The form of the correction term (in $\epsilon^{2/\nu}$) is again valid as soon as $\pdf[\sigma]$ decays asymptotically as $L(\sigma)/\sigma^{1+\nu}$.

\item Not only is the correction large, but the accuracy of the first order expansion is very bad, 
since the ratio of the neglected term to the first correction is itself $\sim \epsilon^{2/\nu}$. 
The region where the first term is accurate is therefore exponentially small in $\nu$ --- see Fig.~\ref{fig:tau_stud2}. 
\end{itemize}

\begin{figure}[t!]
\center
    \subfigure[$\tUU$ vs $\nu$ at several thresholds $p$ for bivariate Student variables with $\rho=0.3$. Note that $\tUU_* \to 0$ when $\nu \to \infty$, but rapidly 
    grows when $p \neq 1$.]{\label{fig:tau_stud1}\includegraphics[scale=0.45]{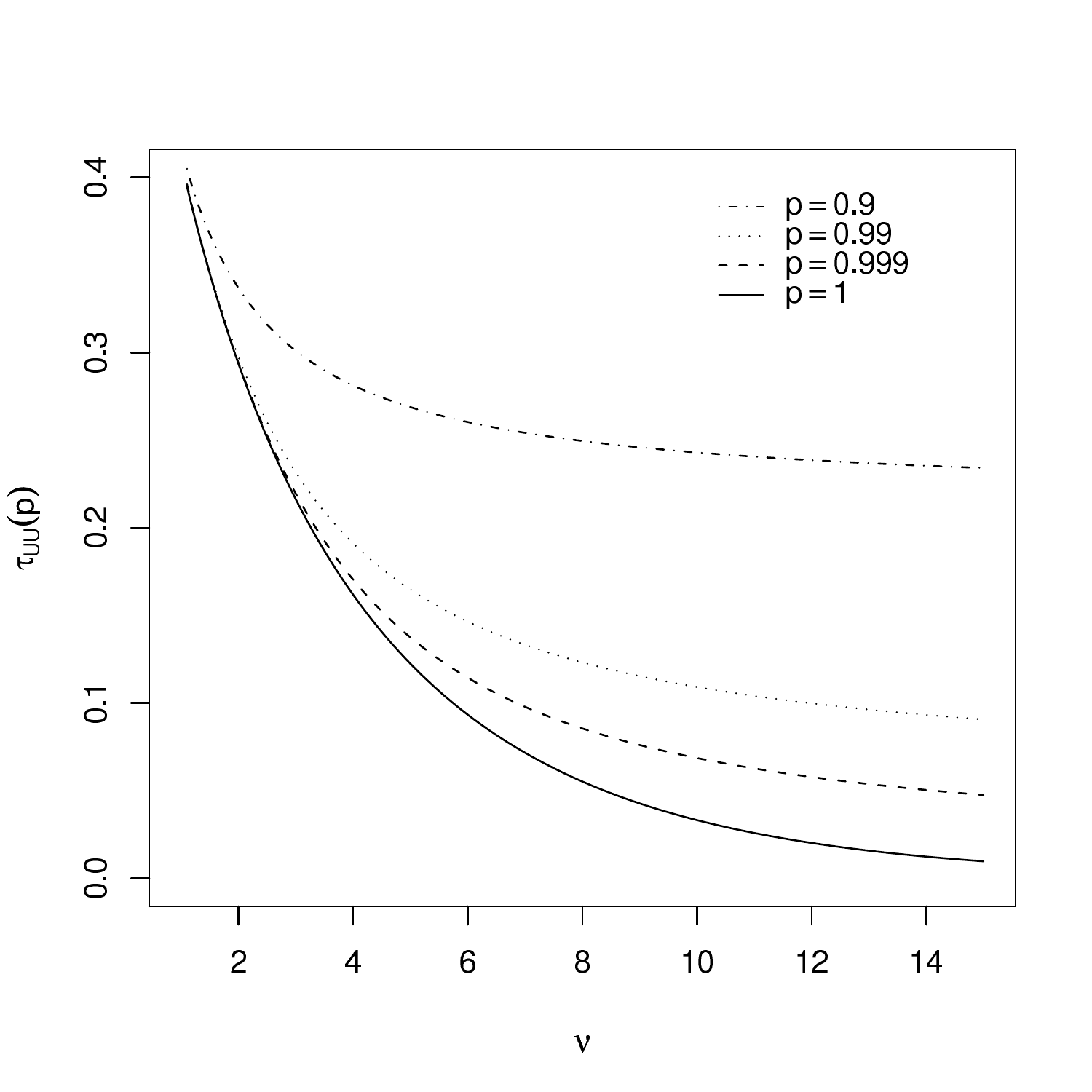}}\hfill
    \subfigure[$\tUU$ vs $p$ for bivariate Student variables with $\rho=0.3$ and $\nu=5$: exact curve (plain), first order power-law expansion (dashed) and simulated series with different lengths $T$ 
    (symbols)]{\label{fig:tau_stud2}\includegraphics[scale=0.45,trim=0 0 420 0,clip]{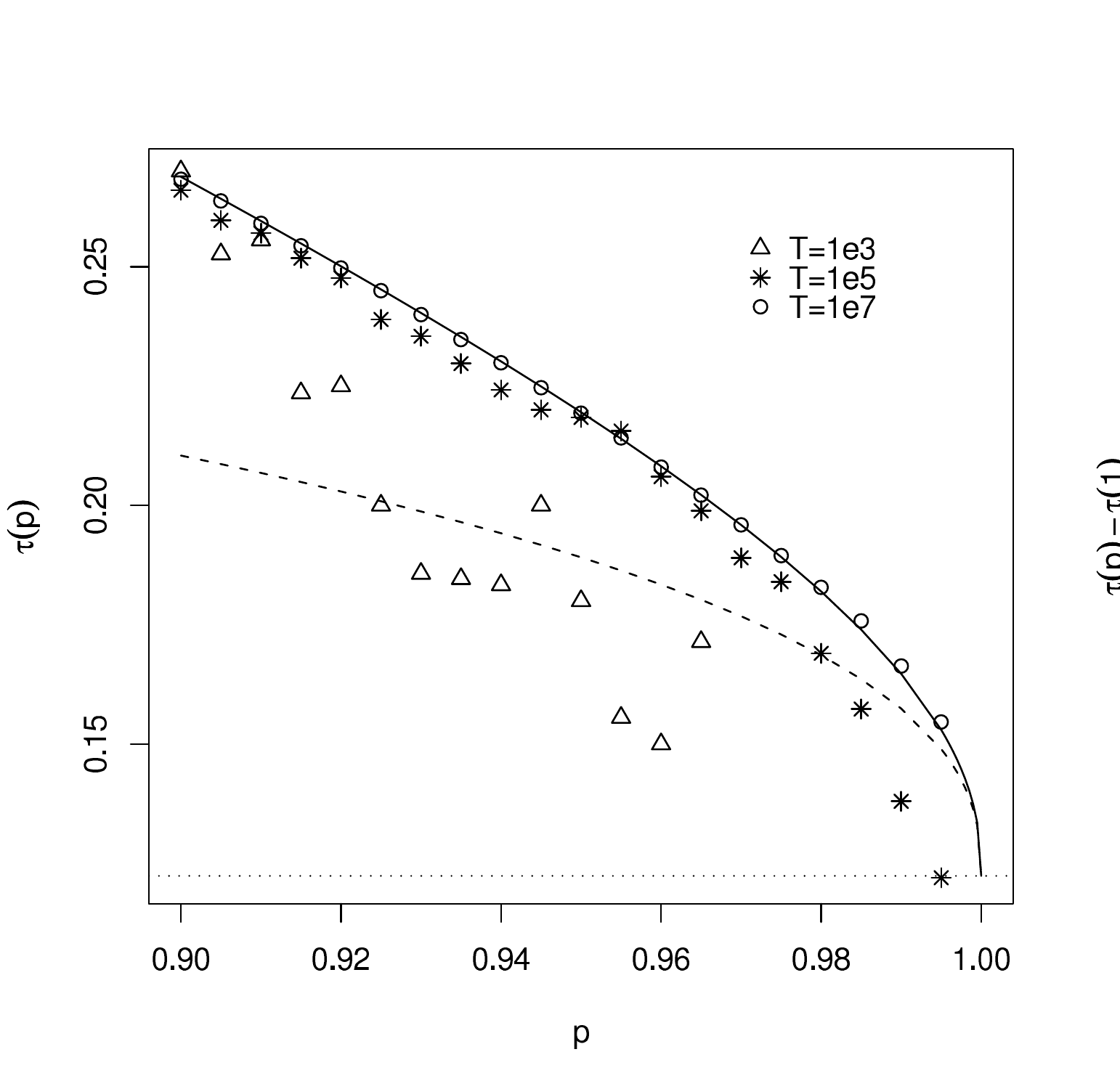}}
    \caption{}	
    \label{fig:tau_stud}
\end{figure}

Finally, we plot in Fig.~\ref{fig:dev_gauss_stud} the rescaled difference between the Student copula and the Gaussian copula 
both on the diagonal and on the anti-diagonal,
for several values of the linear correlation coefficient $\rho$ and for $\nu=5$. 
One notices that the difference is zero for $p=\tfrac12$, as expected from the expression of $\cop^*(\rho)$ for general elliptical models. 
Away from $p=\tfrac12$ on the diagonal, the rescaled difference has a positive convexity and non-zero limits when $p \to 0$ and $p \to 1$,
corresponding to $\tLL_*$ and $\tUU_*$.

\begin{figure}[ht]
\center
    \includegraphics[scale=0.45]{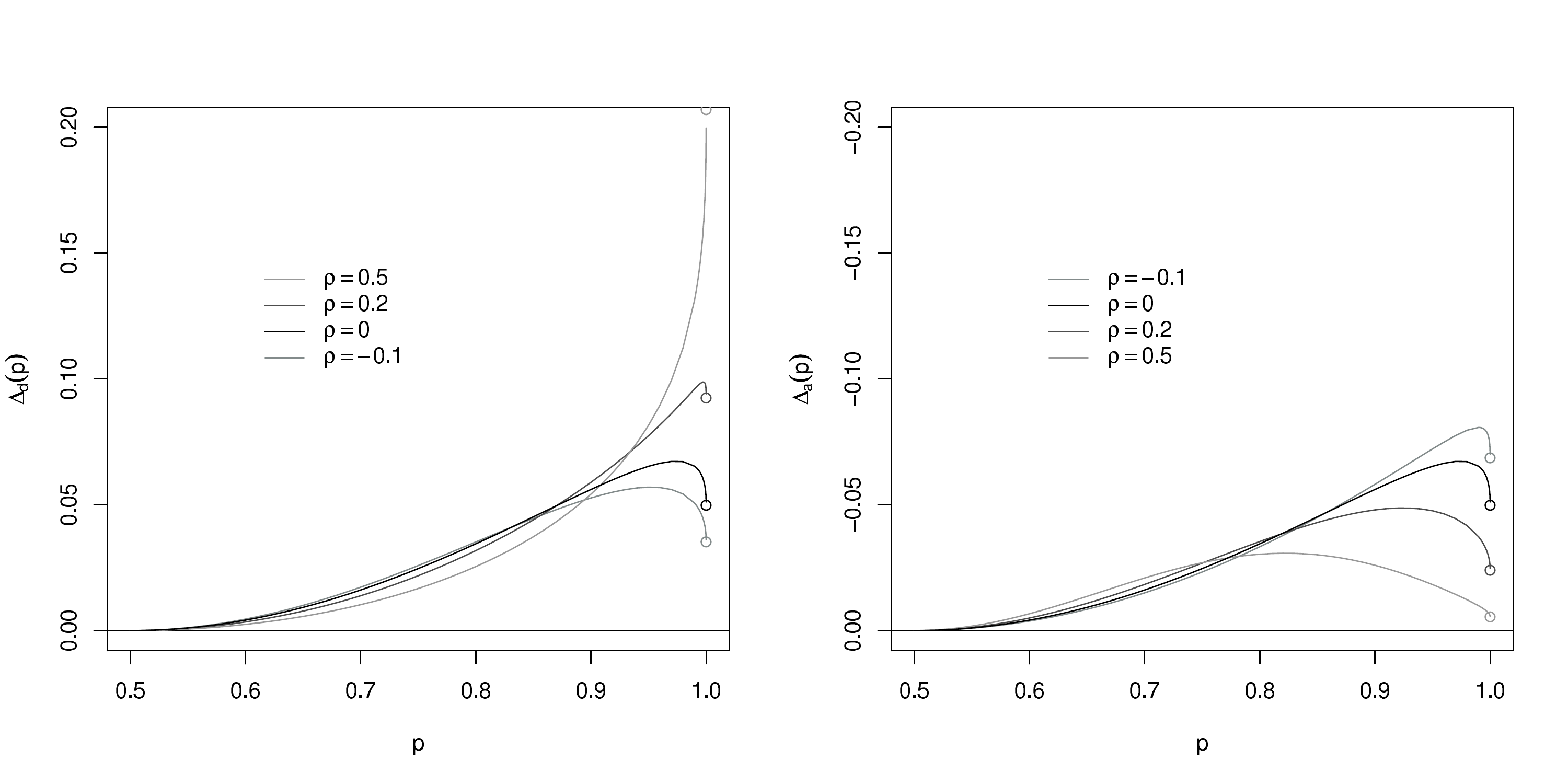}
    \caption{Diagonal and anti-diagonal of the Student copula: 
    the quantities $\Delta_{\scriptscriptstyle \text{d}}(p)$ and $\Delta_{\scriptscriptstyle \text{a}}(p)$ defined in Eq.~\eqref{eq:Delta} are plotted versus $p$ for several values of $\rho$ and fixed $\nu=5$.
    (The curves are identical in the range $p\in[0,0.5]$ due to the symmetry $p\leftrightarrow 1\!-\!p$).}	
    \label{fig:dev_gauss_stud}
\end{figure}

\subsubsection*{The log-normal ensemble} 

If we now choose $\sigma=\sigma_0\, \e^\xi$ with $\xi \sim \mathcal{N}(0,s)$, 
the resulting multivariate model defines the log-normal ensemble and the log-normal copula. 
The factors $\fd$ are immediately found to be: $\fd=\e^{d^2s^2}$ with no restrictions on $d$.
The Gaussian case now corresponds to the limit $s \equiv 0$.

Although the inverse Gamma and the log-normal distributions are very different, it is well known that the tail region of the log-normal is in practice
very hard to distinguish from a power-law. In fact, one may rewrite the pdf of the log-normal distribution as:
\begin{equation}
	\widehat \sigma^{-1} \e^{-\frac{\ln^2 \widehat \sigma}{2 s^2}} = \widehat \sigma^{-(1+\nu_{\text{eff}})} 
    \quad\text{with}\quad
    \nu_{\text{eff}} = \frac{\ln \widehat \sigma}{2 s^2},
\end{equation}
where we have introduced $\widehat \sigma = \sigma/\sigma_0$. The above equation means that for large $\widehat \sigma$ a log-normal is like a power-law with a 
slowly varying exponent. If the tail region corresponds to $\widehat \sigma \in [4,8]$ (say), the effective exponent $\nu$ lies in the range $[0.7/s^2,1.05/s^2]$.
Another way to obtain an approximate dictionary between the Student model and the log-normal model is to fix $\nu_{\text{eff}}(s)$ such that the coefficient
$\fd[2]$ (say) of the two models coincide. Numerically, this leads to $\nu_{\text{eff}}(s) \approx 2 + 0.5/s^2$, leading to $s \approx 0.4$ for $\nu=5$. 
In any case, our main point here is that from an empirical point of view, one does not expect striking differences between a Student model and a log-normal model --- even 
though from a strict mathematical point of view, the models are very different. In particular, the asymptotic tail dependence coefficients $\tUU_*$ or $\tLL_*$ are always zero
for the log-normal model (but $\tLL(p)$ or $\tUU(p)$ converge exceedingly slowly towards zero as $p \to 1$).

\subsection{Pseudo-elliptical generalization}\label{ssec:pseudo-ell}

In the previous elliptical description of the returns, all stocks are subject to 
the exact same stochastic volatility, what leads to non-linear dependences like tail effects and residual higher-order correlations even for $\rho=0$ (see Fig.~\ref{fig:dep}).
In order to be able to fine-tune somewhat this dependence due to the common volatility, a simple generalization is to let each stock be 
influenced by an idiosyncratic volatility, thus allowing for a more subtle structure of dependence. 
More specifically, we write\footnote{{
In vectorial form, we collect the individual (yet dependent) stochastic volatilities $\sigma_i$ 
on the diagonal of a matrix $\mat{D}$, 
and $\vect{X}=\mat{D}\vect{\epsilon}$ can be decomposed as $\mat{X}=R\mat{D}\mat{A}\vect{U}$
where $\vect{U}$ is a random vector uniformly distributed on the unit hypersphere, 
the radial component $R$ is a chi-2 random variable independent of $U$ with $\rank(\mat{A})$ degrees of freedom, 
and $\mat{A}$ is a matrix with appropriate dimensions such that $\mat{A}\mat{A}^{\dagger}=\Sigma$.
This description can be contrasted with the one proposed in Ref.~\cite{kring2009multi} under the term ``Multi-tail Generalized Elliptical Distribution'':
$\vect{X}=R(\vect{u})\mat{A}\vect{U}$ with $R$ now depending on the unit vector $\vect{u}=\mat{A}\vect{U}/||\mat{A}\vect{U}||$.
In other words, the description in $\eqref{eq:pseudo_ell_coeff}$ provides a different radial amplitude for each component, whereas \cite{kring2009multi}
characterizes a direction-dependent radial part identical for every component.
The latter allows for a richer phenomenology than the former, but lacks financial intuition.}}: 
\begin{equation}\label{eq:pseudo_ell_coeff}
	X_i=\sigma_i\cdot\epsilon_i\qquad{}i=1\ldots{}N
\end{equation}
where the Gaussian residuals $\epsilon_i$ have the same joint distribution as before
(in particular with linear correlations $\esp{\epsilon_i\epsilon_j}=r_{\ij}$), and are independent of the $\sigma_i$, 
but we now generalize the definition of the ratios $\fd$.
The random vector $\vect\sigma$ has a non trivial dependence structure which is partly described by the mixed $d$-moments
\begin{equation}\label{eq:fd_general}
\fd_{\ij}=\frac{\esp{\sigma_i^d\sigma_j^d}}{\esp{\sigma_i^d}\,\esp{\sigma_j^d}}.
\end{equation}
When the $\sigma_i$'s are dependent but identically distributed, 
the diagonal elements $\fd_{i\!i}$ do not depend on $i$.
Within this setting, the generalization of coefficients \eqref{eq:coeffs_ell} can be straightforwardly calculated:
\begin{subequations}
\begin{align}
\rob_{\ij}&=r_{\ij}\frac{\fd[1]_{\ij}}{\fd[1]_{i\!i}}\\
\roa_{\ij}&=\frac{\fd[1]_{\ij} \cdot D(r_{\ij})-1}{\tfrac{\pi}{2}\fd[1]_{i\!i}-1}\\
\roc_{\ij}&=\frac{\fd[2]_{\ij} \cdot(1+2r_{\ij}^2)-1}{3\,\fd[2]_{i\!i}-1}
\end{align}
\end{subequations}
Importantly, Kendall's tau and Blomqvist's beta remain invariant in this class of Pseudo-elliptical distributions !\label{page:invariantbeta}

As an explicit example, we consider the natural generalization of the log-normal 
model and write $\sigma_i = \sigma_{0i}\, \e^{\xi_i}$, 
with\footnote{A further generalization that allows for stock dependent ``vol of vol'' $s_i$ is also possible.} 
$\xi_i\stackrel{\text{\tiny id}}{\sim}\mathcal{N}(0,s)$, and some correlation structure of the $\xi$'s: $\esp{\xi_i \xi_j}=s^2 c_{\ij}$. 
One then finds:
\begin{equation}\label{eq:fd_lognorm}
\fd_{\ij}\equiv\fd(c_{\ij})=\e^{d^2 s^2 c_{\ij}}.
\end{equation}

Using the generic notation $c$ for the correlation of the log-volatilities, 
and $r$ for the correlation of the residuals (now different from $\rho$) we find:
\begin{subequations}\label{eq:coeffs_pseudo_ell}
\begin{align}
	\rho(r,c)&=\frac{\fd[1](c)}{\fd[1](1)}\cdot{}r\\
	\roc(r,c)&=\frac{\fd[2](c)\,(1+2r^2)-1}{3\fd[2](1)-1}\\
	\roa(r,c)&=\frac{\fd[1](c)\,D(r)-1}{\frac{\pi}{2}\fd[1](1)-1}
\end{align}
\end{subequations}
When $c$ is fixed (e.g.\ for the elliptical case $c=1$), $\rho$ and $r$ are proportional, and 
all measures of non-linear dependences can be expressed as a function of $\rho$. But this ceases to be true as soon
as there exists some non trivial structure $c$ in the volatilities. In that case, $c$ and $r$ are ``hidden'' underlying variables, that
can only be reconstructed from the knowledge of $\rho$, $\roc$, $\cop(\tfrac{1}{2},\tfrac{1}{2})$, assuming of course that the model is accurate.

Notice that the result on $\cop(\frac12,\frac12)$ is totally independent of the structure of the volatilities\footnote{ 
This property holds even when $\sigma_i$ depends on the sign of $\epsilon_i$, which might be useful to model the leverage
effect that leads to some asymmetry between positive and negative tails, as the data suggests.}.
Indeed, what is relevant for the copula at the central point is not the amplitude of the returns, 
but rather the number of $+/-$ events, 
which is unaffected by any multiplicative scale factor as long as the median of the univariate marginals is nil. 
An important consequence of this result is that for all elliptical or pseudo-elliptical model,
$\rho=0$ implies that \mbox{$\cop^*(\rho=0)=\frac14$}.

\section{Conclusion}
We have suggested several ideas of methodological interest to efficiently visualize and  compare different copulas. 
We recommend in particular the rescaled difference with the Gaussian copula along the diagonal $\Delta_{\scriptscriptstyle \text{d,a}}(p)$ and 
the central value of the copula $\cop(\tfrac12,\tfrac12)$ as strongly discriminating observables. 
We have studied the dependence of these quantities, as well as other non-linear correlation coefficients like 
the higher-order correlations $\road$, 
with the linear correlation coefficient $\rho$.

The case of elliptical copulas is emphasized, with the lognormal and Student models as examples.
We provide an original characterization of the pre-asymptotic tail dependence coefficient when $p\to 1$
as a development in rational powers of $(1\!-\!p)$.

An explicit prediction of elliptical models is empirically tested on multivariate financial data 
in Chapter~\ref{chap:IJTAF} of part~\ref{part:partII}.

\chapter{Goodness-of-fit testing}\label{chap:GOF}
\minitoc

The problem of testing whether a null-hypothesis theoretical probability distribution
is compatible with the empirical probability distribution of a sample of observations
is known as goodness-of-fit (GoF)\nomenclature{GoF}{Goodness of fit} testing 
and is ubiquitous in all fields of science and engineering.
Goodness-of-Fit tests are designed to assess quantitatively whether a sample of $N$ observations
can statistically be seen as a collection of $N$ realizations of a given probability law, 
or whether two such samples are drawn from the same hypothetical distribution.

The best known theoretical result is due to Kolmogorov and Smirnov (KS)\nomenclature{KS}{Kolmogorov-Smirnov} \cite{kolmogorov1933sulla,smirnov1948table}, 
and has led to the eponymous statistical test for an \emph{univariate} sample of \emph{independent} draws.
The major strength of this test lies in the fact that the asymptotic distribution of its test statistic 
is completely independent of the null-hypothesis cdf.

Several specific extensions have been studied (and/or are still under scrutiny), including: 
different choices of distance measures \cite{darling1957kolmogorov}, 
multivariate samples \cite{fasano1987multidimensional,cabana1994goodness,cabana1997transformed,Fermanian2005119}, 
investigation of different parts of the distribution domain \cite{anderson1952asymptotic,deheuvels2005weighted,noe1968calculation,chicheportiche2012weighted},
dependence in the successive draws \cite{chicheportiche2011goodness},  etc. 

This class of problems has a particular appeal for physicists since the works of Doob \cite{doob1949heuristic} and Khmaladze \cite{khmaladze1982martingale},
who showed how GoF testing is related to stochastic processes. 
Finding the law of a test amounts to computing a survival probability in a dissipative system. 
In a Markovian setting, this is often achieved by treating a Fokker-Planck problem,
which in turn maps into a Schr\"odinger equation for a particle in a certain potential confined by walls.

\subsection{Empirical cumulative distribution and its fluctuations}\label{sec:GoFintro}

Let $\vect{X}$ be a random vector of $N$ independent and identically distributed variables, 
with marginal cumulative distribution function (cdf) $F$. 
One realization of $\vect{X}$ consists of a time series $\{x_1,\ldots,x_n, \ldots, x_N\}$ that exhibits no persistence (see 
Sect.~\ref{sec:GoF} when some non trivial dependence is present).
The empirical cumulative distribution function 
\begin{equation}\label{eq:F_N}
    F_N(x)=\frac{1}{N}\sum_{n=1}^N\1{X_n\leq x}
\end{equation}
converges to the true CDF $F$ as the sample size $N$ tends to infinity.
For finite $N$, the expected value and fluctuations of $F_N(x)$ are
\begin{align*}
	\esp{F_N(x)}&=F(x),\\
    \mathrm{Cov}(F_N(x),F_N(x'))&=\frac{1}{N}\left[F(\min(x,x'))-F(x)F(x')\right].
\end{align*}
The rescaled empirical CDF
\begin{equation}\label{eq:Y_N}
    Y_N(u)=\sqrt{N}\,\left[F_N(F^{-1}(u))-u\right]
\end{equation}
measures, for a given $u\in[0,1]$, the difference between the empirically determined cdf 
of the $X$'s and the theoretical one, evaluated at the $u$-th quantile.
It does not shrink to zero as $N\to\infty$, and is therefore the quantity on which any statistic for GoF testing is built.

\subsubsection*{Limit properties}
One now defines the process $Y(u)$ as the limit of $Y_N(u)$ when $N \to \infty$.
According to the Central Limit Theorem (CLT)\nomenclature{CLT}{Central limit theorem}, 
it is Gaussian and its covariance function is given by:
\begin{equation}\label{eq:Itheo}
	{I}(u,v)=\min(u,v)-uv,
\end{equation}
which characterizes the so-called Brownian bridge, i.e.\  a Brownian motion $Y(u)$  such that \mbox{$Y(u\!=\!0)=Y(u\!=\!1)=0$}. 

Interestingly, $F$ does not appear in Eq.~(\ref{eq:Itheo}) anymore, 
so the law of any functional of the limit process $Y$ is independent of the law of the underlying finite size sample.
This property is important for the design of \emph{universal} GoF tests.

\subsubsection*{Norms over processes}
In order to measure a limit distance between distributions, a norm $||.||$ over the space of continuous bridges needs to be chosen. 
Typical such norms are the norm-2 (or `Cram\'er-von~Mises' distance)
\[
	||Y||_2=\int_0^1Y(u)^2\d{u},
\]
as the bridge is always integrable, or the norm-sup (also called the Kolmogorov distance)
\[
	||Y||_\infty=\sup_{u\in[0,1]}|Y(u)|,
\]
as the bridge always reaches an extremal value.

Unfortunately, both these norms mechanically overweight the core values $u \approx 1/2$ and disfavor the tails $u \approx 0,1$:
since the variance of $Y(u)$ is zero at both extremes and maximal in the central value, the major contribution to $||Y||$ indeed
comes from the central region. To alleviate this effect, in particular when the GoF test is intended to investigate 
a specific region of the domain, it is preferable to introduce additional weights and study $||Y\sqrt{\psi}||$ rather than $||Y||$ itself.
Anderson and Darling show in Ref.~\cite{anderson1952asymptotic} that the solution to the problem with the Cram\'er-von~Mises norm and arbitrary weights $\psi$
is obtained by spectral decomposition of the covariance kernel, and use of Mercer's theorem. 
They design an eponymous test \cite{darling1957kolmogorov} with the specific choice of weights $\psi(u)=1/{I(u,u)}$
equal to the inverse variance, which equi-weights all quantiles of the distribution to be tested.
We analyze here the case of the same weights, but with the Kolmogorov distance.

\newcommand\FF[3]{%
  {\vphantom{#2}}#1#2#3%
}

\section{Weighted {K}olmogorov-{S}mirnov tests}\label{sect:weightedKS}

So again $Y(u)$ is a Brownian bridge, i.e.\ a centered Gaussian process on $u\in[0,1]$ with covariance function
${I}(u,v)$ given in Eq.~(\ref{eq:Itheo}).
In particular, $Y(0)=Y(1)=0$ with probability equal to 1, no matter how distant $F$ is from the sample cdf around the core values.
In order to put more emphasis on specific regions of the domain, we weight the Brownian bridge as follows: 
for given $a\in]0,1[$ and $b\in [a,1[$, we define
\begin{equation}\label{eq:weightedY}
	\tilde{Y}(u)=Y(u)\cdot\left\{\begin{array}{cl}\sqrt{\psi(u)}&,\,a\leq u\leq b\\0&,\,\text{otherwise.}\end{array}\right.
\end{equation}
We will characterize the law of the supremum $K(a,b)\equiv\sup_{u\in[a,b]}\left|\tilde{Y}(u)\right|$:
\[
	\mathcal{P}_{\!{\scriptscriptstyle <}}(k|a,b)\equiv\pr{K(a,b)\leq k}
                                                =\Pr{|\tilde{Y}(u)|\leq k, \forall u\in[a,b]}.
\]

\subsection{The equi-weighted {B}rownian bridge: {K}olmogorov-{S}mirnov}

In the case of a constant weight, corresponding to the classical KS test, the probability 
\mbox{$\mathcal{P}_{\!{\scriptscriptstyle <}}(k;0,1)$} is well defined and has the well known KS form \cite{kolmogorov1933sulla}:
\begin{equation}\label{eq:KSprob}
    \mathcal{P}_{\!{\scriptscriptstyle <}}(k;0,1) = 1 - 2 \sum_{n=1}^{\infty} (-1)^{n-1} \e^{-2 n^2 k^2},
\end{equation}
which, as expected, grows from $0$ to $1$ as $k$ increases, see Fig.~\ref{fig:KS} on page \pageref{fig:KS}. 
The value $k^*$ such that this probability is $95 \%$ is $k^* \approx 1.358$ \cite{smirnov1948table}. 
This can be interpreted as follows: if, for a data set of size $N$, 
the maximum value of $|Y_N(u)|$ is larger than $\approx 1.358$, 
then the hypothesis that the proposed distribution is a ``good fit'' can be rejected with $95 \%$ confidence.

\subsubsection*{Diffusion in a cage with fixed walls}

The Brownian bridge $Y$ is nothing else than a Brownian motion with imposed terminal condition, and can be written as
$Y(u)=X(u)-u\,X(1)$ where $X$ is a Brownian motion.
The test law $\mathcal{P}_{\!{\scriptscriptstyle <}}(k|a,b)=\Pr{-k\leq Y(u)\leq k, \forall u\in[a,b]}$ 
is in fact the survival probability of $Y$ in a cage with absorbing walls, 
and can be determined by counting the number of Brownian paths $X(u)$ that
go from 0 to 0 without ever hitting the barriers.
More precisely, the survival probability of the Brownian bridge in the stripe $[-k,k]$ can be computed as $f_1(0;k)/f_1(0;\infty)$,
where $f_u(y;k)$ is the transition kernel of the Brownian motion within the allowed region, 
and it satisfies the simple Fokker-Planck\nomenclature{FP}{Fokker-Planck} equation 
\[
    \left\{
    \begin{array}{rl}
        \partial_uf_u(y;k)&=\displaystyle\frac{1}{2}\partial_y^2f_u(y;k)\\
        f_u(\pm k;k)&=0
    \end{array}
    \right. ,\quad\forall u\in[0,1].
\]
By spectral decomposition of the Laplace operator, the solution is found to be
\[
    f_u(y;k)=\frac{1}{k}\sum_{n\in\mathds{Z}}\e^{-E_n u}\,\cos\!\left(\sqrt{2E_n}\, y\right), \quad\text{where}\quad E_n=\frac{1}{2}\left(\frac{(2n\!-\!1)\pi}{2k}\right)^2
\]
and the free propagator in the limit $k\to\infty$ is the usual
\[
    f_u(y;\infty)=\frac{1}{\sqrt{2\pi u}}\,\e^{-\frac{y^2}{2u}},
\]
so that the survival probability of the constrained Brownian bridge is
\begin{equation}\label{eq:survival_prob_constr_brown_bridge}
    \mathcal{P}_{\!{\scriptscriptstyle <}}(k;0,1)=\frac{f_1(0;k)}{f_1(0;\infty)}=\frac{\sqrt{2\pi}}{k}\sum_{n\in\mathds{Z}}\Exp{-\frac{(2n\!-\!1)^2\pi^2}{8k^2}}.
\end{equation}
Although it looks different from the historical solution Eq.~\eqref{eq:KSprob}, the two expressions can be shown to be exactly identical. %
          First write the summand in Eq.~\eqref{eq:survival_prob_constr_brown_bridge} as the Fourier transform of $\e^{-2k^2x^2}$.
          Then, adding a small imaginary part to $x$ allows to perform the summation of the series over $n$,
          what makes poles $[1-\e^{\pm\imath \pi x}]^{-1}$ appear in the integrand.
          Finally use Cauchy's residue theorem to perform the integral, obtaining a sum over all residues evaluated at integer values of $x$.

The computation of Kolmogorov's distribution $\mathcal{P}_{\!{\scriptscriptstyle <}}(k;0,1)$ performed above is way easier than the canonical ones \cite{anderson1952asymptotic}.

\subsubsection*{Diffusion in a cage with moving walls}
An appropriate change of variable and time 
\[
    W(t)=(1+t)\, Y\!\left(\frac{t}{1+t}\right),\quad t\in[0,\infty[
\]
leads to the problem of a Brownian diffusion 
inside a box with walls \emph{moving at constant velocity}. Since the walls expand as $\sim t$ faster than the 
diffusive particle can move ($\sim\sqrt{t}$), the survival probability converges to a positive value, 
which turns out to be given by the usual Kolmogorov distribution \eqref{eq:KSprob} \cite{krapivsky1996life}.

A way of addressing this problem of a diffusing particle in an expanding cage was 
suggested in Refs.~\cite{bray2007survival2,bray2007survival1}.
The time-dependent boundary conditions of the usual (forward) Fokker-Planck equation
make it difficult to solve as such, and a nice way out is to consider instead 
the backward Fokker-Planck equation for the transition density, 
with the initial position of the absorbing wall as an additional independent variable !
This eliminates the time dependence in the boundary condition at the cost of introducing the initial wall parameter.

\begin{figure}[!h!]
    \includegraphics[scale=.6,trim=0 0 0 25,clip]{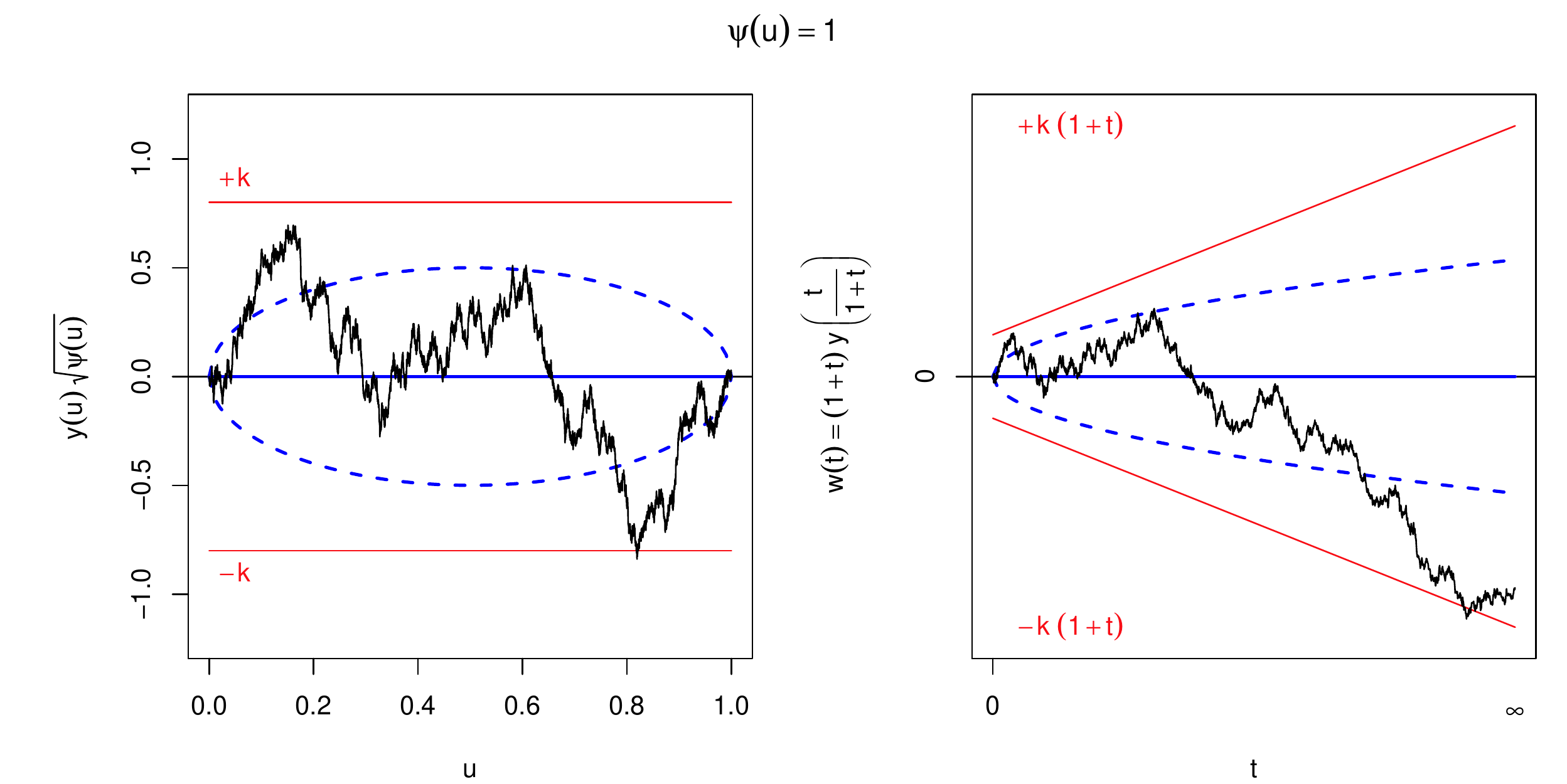}
    \caption{The equi-weighted Brownian bridge, $\psi(u)=1$.
    The time-changed rescaled process lives in a geometry with boundaries receding at constant speed.}\label{fig:BraySmith}
\end{figure}

\subsection[The variance-weighted {B}rownian bridge: accounting for the tails]{The variance-weighted {B}rownian bridge:\\ accounting for the tails}
The classical KS test suffers from an important flaw: the test is only weakly sensitive to the quality of the fit in the tails of the tested
distribution \cite{mason1983modified,mason1992correction}, when it is often these tail events (corresponding to centennial floods, devastating earthquakes, financial crashes, etc.) 
that one is most concerned with (see, e.g., Ref.~\cite{clauset2009power}). 

A simple and elegant GoF test for the tails \emph{only} can be designed starting with digital weights in the form
$\psi(u;a)=\1{u\geq a}$ or $\psi(u;b)=\1{u\leq b}$ for upper and lower tail, respectively.
The corresponding test laws can be read off Eq.~(5.9) in Ref.~\cite{anderson1952asymptotic}.%
\footnote{The quantity $M$ appearing there is the volume under the normal bivariate surface between specific bounds,
and it takes a very convenient form in the unilateral cases $\tfrac12\leq a \leq u \leq 1$ and $0\leq u \leq b\leq\frac12$.
Mind the missing $j$ exponentiating the alternating $(-1)$ factor.}
Investigation of both tails is attained with
$\psi(u;q)=\1{u\leq 1-q}+\1{u\geq q}$ (where $q>\tfrac12$).

Here we rather focus on a GoF test for a univariate sample of size $N\gg 1$, with the Kolmogorov distance but equi-weighted quantiles, 
which is equally sensitive to \emph{all regions} of the distribution.%
\footnote{Other choices of $\psi$ generally result in much harder problems.} 
We unify two earlier attempts at finding asymptotic solutions, one by Anderson and Darling in 1952 \cite{anderson1952asymptotic} 
and a more recent, seemingly unrelated one that deals with ``life and death of a particle in an expanding cage'' 
by Krapivsky and Redner \cite{krapivsky1996life,redner2007guide}. 
We present here the exact asymptotic solution of the corresponding stochastic problem, 
and deduce from it the precise formulation of the GoF test, which is of a fundamentally different nature than the KS test.

So in order to zoom on the tiny differences in the tails of the Brownian bridge, we weight it as explained in the introduction,
with its variance
\[
	\psi(u)=\frac{1}{u(1-u)}.
\]
Solutions for the distributions of such variance-weighted Kolmogorov-Smirnov statistics were studied by No\'e,
leading to the laws of the one-sided \cite{noe1968calculation} and two-sided \cite{noe1972calculation} finite sample tests.
They were later generalized and tabulated numerically by Niederhausen \cite{niederhausen1981sheffer,niederhausen1981tables,wilcox1989percentage}.
However, although exact and appropriate for small samples, these solutions rely on recursive relations and are not in closed form.
We instead come up with an analytic closed-form solution for large samples that relies on an elegant analogy from statistical physics.

\subsubsection*{Diffusion in a cage with moving walls}

Define the time change $t=\frac{u}{1-u}$. The variable $W(t)=(1+t)\,Y\!\left(\frac{t}{1+t}\right)$  
is then a Brownian motion (Wiener process) on $[\tfrac{a}{1-a},\tfrac{b}{1-b}]$, since one can check that:
\[
	{\rm Cov}\big(W(t),W(t')\big)=\min(t,t').
\]
$\mathcal{P}_{\!{\scriptscriptstyle <}}(k|a,b)$ can be now written as
\[
	\mathcal{P}_{\!{\scriptscriptstyle <}}(k|a,b)=\Pr{|W(t)|\leq k\sqrt{t},\forall t\in[\tfrac{a}{1-a},\tfrac{b}{1-b}]}.
\]

The problem with initial time $\frac{a}{1-a}=0$ and horizon time $\frac{b}{1-b}=T$ 
has been treated by Krapivsky and Redner in Ref.~\cite{krapivsky1996life}
as the survival probability $S(T;k=\sqrt{\frac{A}{2D}})$ of a Brownian particle 
diffusing with constant $D$ in a cage with walls expanding as $\sqrt{At}$. 
Their result is that for large $T$, 
\[
S(T;k)\equiv\mathcal{P}_{\!{\scriptscriptstyle <}}(k|0,\tfrac{T}{1+T})\propto T^{-\theta(k)}.
\]
They obtain analytical expressions for $\theta(k)$ in both limits \mbox{$k\to 0$} and \mbox{$k\to\infty$}.
The limit solutions of the very same differential problem were found earlier by Turban for the critical behavior of the directed self-avoiding walk in parabolic 
geometries \cite{turban1992anisotropic}.

We take here a slightly different route, suggested by Anderson and Darling in Ref.~\cite{anderson1952asymptotic} 
but where the authors did not come to a conclusion. Our contributions are: 
(i) we treat the general case $a>0$ for \emph{any} $k$;
(ii) we explicitly compute the $k$-dependence of both the exponent \emph{and} the prefactor of the power-law decay; and
(iii) we provide the link with the theory of GoF tests and 
compute the pre-asymptotic distribution of the weighted Kolmogorov-Smirnov test statistic for large sample sizes $N\to\infty$, 
i.e.\ when $]a,b[\to]0,1[$ .

\begin{figure}[h]
    \includegraphics[scale=.6,trim=0 0 0 25,clip]{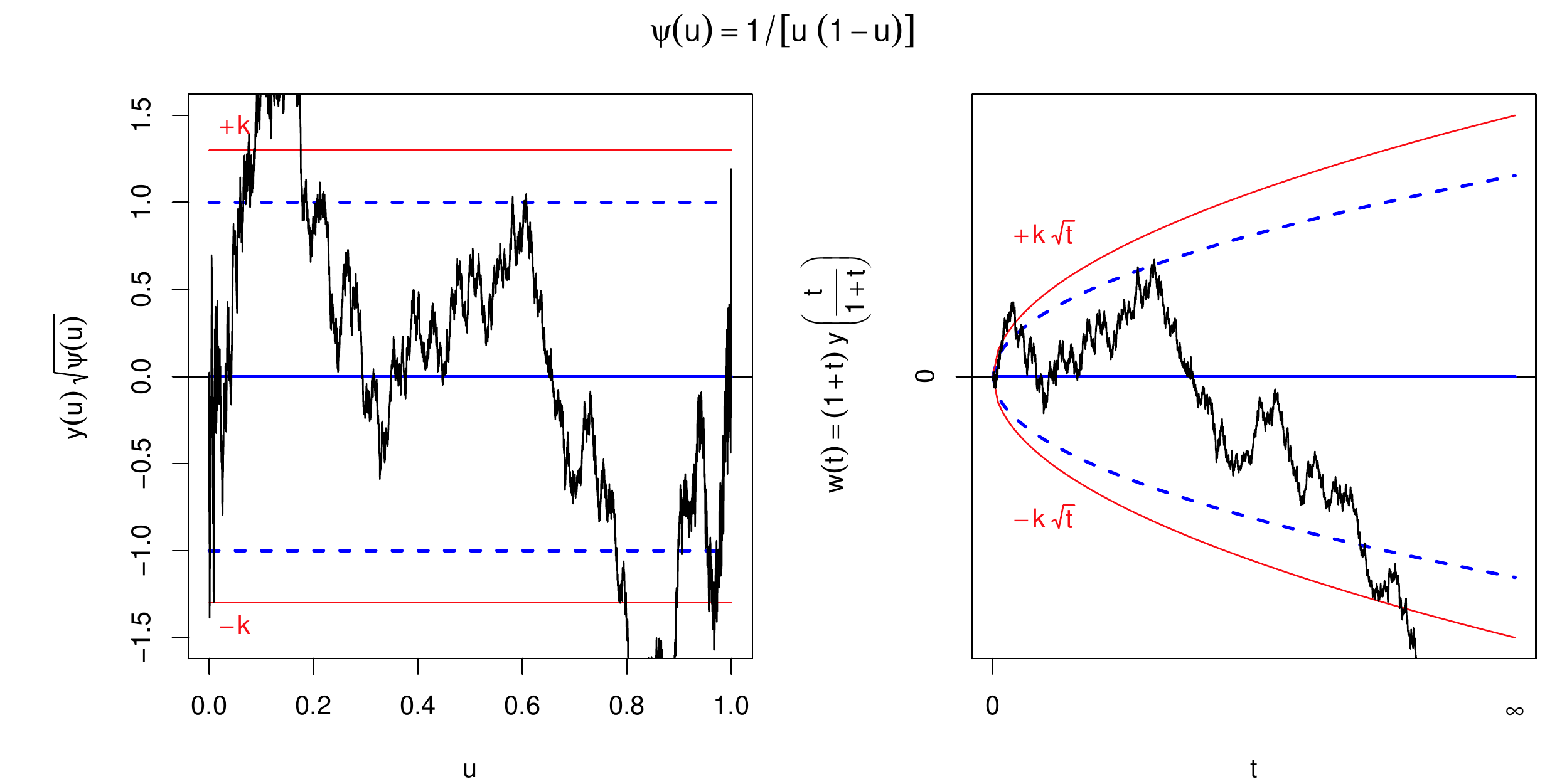}
    \caption{The variance-weighted Brownian bridge, $\psi(u)=1/[u(1-u)]$.
             The time-changed rescaled process lives in a geometry with boundaries receding as $\sim\sqrt{t}$.}\label{fig:KrapRedner}
\end{figure}

\subsubsection*{Mean-reversion in a cage with fixed walls}

Introducing now the new time change $\tau=\ln\sqrt{\frac{1-a}{a}\,t}$, the variable $Z(\tau)=W(t)/\sqrt{t}$ is a stationary 
Ornstein-Uhlenbeck (OU)\nomenclature{OU}{Ornstein-Uhlenbeck} process on $[0,T]$ where 
\begin{equation}\label{eq:Tchange}
    T=\ln\sqrt{\frac{b\,(1-a)}{a\,(1-b)}},
\end{equation}
and
\[
	{\rm Cov}\big(Z(\tau),Z(\tau')\big)=\e^{-|\tau-\tau'|}.
\]
Its dynamics is described by the stochastic differential equation (SDE)\nomenclature{SDE}{Stochastic differential equation}
\begin{equation}
	\d{Z(T)}=-Z(T)\d{T}+\sqrt{2}\,\d{B(T)},
\end{equation}
with $B(T)$ an independent Wiener process.
The initial condition for $T=0$ (corresponding to $b=a$) is $Z(0)=Y(a)/\sqrt{\var{Y(a)}}$,
a random Gaussian variable of zero mean and unit variance. The distribution $\mathcal{P}_{\!{\scriptscriptstyle <}}(k|a,b)$ can now be understood as 
the unconditional survival probability of a mean-reverting particle in a cage with fixed absorbing walls, see Fig.~\ref{fig:KrapRedner}:
\begin{align*}
    \mathcal{P}_{\!{\scriptscriptstyle <}}(k|T)
	      &=\Pr{-k\leq Z(\tau)\leq k, \forall \tau\in[0,T]}\\
	      &=\int_{-k}^{k}f_T(z;k)\,\d{z},
\end{align*}
where 
\[
    f_T(z;k)\,\d{z}=\mathds{P}\big[Z(T)\in[z,z+\d{z}[ \left| \{Z(\tau)\}_{\tau<T}\right.\big]
\]
is the density probability of the particle being at $z$
at time $T$, when walls are in $\pm k$. Its dependence on $k$, although not explicit on the right hand side, is due to the boundary condition 
associated with the absorbing walls (it will be dropped in the following for the sake of readability)%
\footnote{In particular, $\mathcal{P}_{\!{\scriptscriptstyle <}}(k|0)=\Erf{\tfrac{k}{\sqrt{2}}}$.}.

The Fokker-Planck equation governing the evolution of the density $f_T(z)$ reads
\[
	\partial_{\tau}f_{\tau}(z)=\partial_z\left[z\,f_{\tau}(z)\right]+\partial_z^2\left[f_{\tau}(z)\right],\quad 0<\tau\leq T.
\]
Calling $\mathcal{H}_{\text{FP}}$ the second order differential operator $-\left[\mathds{1}+z\partial_z+\partial_z^2\right]$, 
the full problem thus amounts to finding the general solution of
\[
    \bigg\{
	\begin{array}{rcl}
	-\partial_{\tau}f_\tau(z)&=&\mathcal{H}_{\text{FP}}(z)f_\tau(z)\\
	f_{\tau}(\pm k)&=&0, \forall \tau\in[0,T]
	\end{array}
    \bigg. .
\]
We have explicitly introduced a minus sign since we expect that the density decays with time in an absorption problem.
Because of the term $z\partial_z$, $\mathcal{H}_{\text{FP}}$ is not hermitian and thus cannot be diagonalized.
However, as is well known, one can define $f_{\tau}(z)=\e^{-\frac{z^2}{4}}\phi_{\tau}(z)$ and the Fokker-Planck equation becomes
\[
    \bigg\{
	\begin{array}{rcl}
	-\partial_{\tau}\phi_\tau(z)&=&\left[-\partial_z^2+\frac{1}{4}z^2-\frac12\mathds{1}\right]\phi_{\tau}(z)\\
	\phi_{\tau}(\pm k)&=&0, \forall \tau\in[0,T]
	\end{array}
    \bigg. ,
\]
and its Green's function, i.e.\ the (separable) solution \emph{conditionally on the initial position} $(z_{\text{i}},T_{\text{i}})$, 
is the superposition of all modes
\[
	G_{\phi}(z,T\mid z_{\text{i}},T_{\text{i}})=\sum_{\nu}\e^{-\theta_{\nu}(T-T_{\text{i}})}\widehat\varphi_{\nu}(z)\widehat\varphi_{\nu}(z_{\text{i}}),
\]
where $\widehat\varphi_{\nu}$ are the normalized solutions of the stationary Schr\"odinger equation
\[
    \Bigg\{
	\begin{array}{rcl}
	\left[-\partial_z^2+\frac{1}{4}z^2\right]\varphi_{\nu}(z)&=&\left(\theta_{\nu}+\frac12\right)\varphi_{\nu}(z)\\
	\varphi_{\nu}(\pm k)&=&0
	\end{array}
    \Bigg.,
\]
each decaying with its own energy $\theta_\nu$, 
where $\nu$ labels the different solutions with increasing eigenvalues,
and the set of eigenfunctions $\{\widehat\varphi_{\nu}\}$ defines an orthonormal basis of the Hilbert space on which $\mathcal{H}_{\text{S}}(z)=\left[-\partial_z^2+\frac{1}{4}z^2\right]$ acts.
In particular, 
\begin{equation}\label{eq:orthonormality}
	\sum_{\nu}\widehat\varphi_{\nu}(z)\widehat\varphi_{\nu}(z')=\delta(z-z'),
\end{equation}
so that indeed $G_\phi(z,T_{\text{i}}\mid z_{\text{i}},T_{\text{i}})=\delta(z-z_{\text{i}})$, and the general solution writes
\begin{align*}
	f_T(z_T;k)
		       &=\int_{-k}^{k}\e^{\frac{z_{\text{i}}^2-z_T^2}{4}}G_{\phi}(z_T,T\mid z_{\text{i}},T_{\text{i}})\,f_0(z_{\text{i}})\,\d{z_{\text{i}}},
\end{align*}
where $T_{\text{i}}=0$, which corresponds to the case $b=a$ in Eq.~(\ref{eq:weightedY}), and $f_0$ is the distribution of the initial value $z_{\text{i}}$
which is here, as noted above, Gaussian with unit variance.

$\mathcal{H}_{\text{S}}$ figures out an harmonic oscillator of mass $\frac12$ and frequency $\omega=\frac{1}{\sqrt{2}}$ within 
an infinitely deep well of width $2k$: its eigenfunctions are parabolic cylinder functions \cite{mei1983harmonic,gradshteyn1980table}
\begin{align*}
    y_+(\theta;z)&=\phantom{z\,}\e^{-\frac{z^2}{4}}\,\FF{_1}{F}{_1}\!\left(-\tfrac{  \theta}{2},\tfrac{1}{2},\tfrac{z^2}{2}\right)\\
    y_-(\theta;z)&=         z\, \e^{-\frac{z^2}{4}}\,\FF{_1}{F}{_1}\!\left( \tfrac{1-\theta}{2},\tfrac{3}{2},\tfrac{z^2}{2}\right)
\end{align*}
properly normalized. 
The only acceptable solutions for a given problem are the linear combinations of $y_+$ and $y_-$ 
which satisfy orthonormality (\ref{eq:orthonormality})  and the boundary conditions:
for periodic boundary conditions, only the integer values of $\theta$ would be allowed, 
whereas with our Dirichlet boundaries \mbox{$|\widehat\varphi_{\nu}(k)|=-|\widehat\varphi_{\nu}(-k)|=0$}, real non-integer eigenvalues $\theta$ are allowed.%
\footnote{A similar problem with a \emph{one-sided} barrier leads to a continuous spectrum; 
this case has been studied originally in Ref.~\cite{mei1983harmonic} and more recently in Ref.~\cite{lladser2000domain} 
(it is shown that there exists a quasi-stationary distribution for any $\theta$)
and generalized in Ref.~\cite{aalen2004survival}.
} 
For instance, the fundamental level $\nu=0$ is expected to be the symmetric solution 
\mbox{$ \widehat\varphi_0(z)\propto y_+(\theta_0;z)$}
with $\theta_0$ the smallest possible 
value compatible with the boundary condition:
\begin{equation}
	\theta_0(k)=\inf_{\theta>0}\big\{\theta:y_+(\theta;k)=0\big\}.
\end{equation}
In what follows, it will be more convenient to make the $k$-dependence explicit, and
a hat will denote the solution with the normalization relevant to our problem, 
namely 
 $\widehat{\varphi}_{0  }(z;k)=y_+(\theta_0(k);z)/||y_+||_k$,
with the norm
\[
    ||y_+||_k^2 \equiv\int_{-k}^ky_+(\theta_0(k);z)^2\,\d{z},
\]
so that
$
    \int_{-k}^k\widehat{\varphi}_{\nu}(z;k)^2\,\d{z}=1.
$

\subsubsection*{Asymptotic survival rate}

Denoting by $\Delta_{\nu}(k)\equiv[\theta_{\nu}(k)-\theta_0(k)]$ the gap between the excited levels and the fundamental, 
the higher energy modes $\widehat\varphi_\nu$ cease to contribute to the Green's function when $\Delta_{\nu}T\gg 1$, 
and their contributions to the above sum die out exponentially as $T$ grows.
Eventually, only the lowest energy mode $\theta_0(k)$ remains, and the solution tends to 
\[
	f_{T}(z;k)=A(k)\,\e^{-\frac{z^2}{4}}\,\widehat\varphi_{0}(z;k)\,\e^{-\theta_{0}(k)T},
\]
when $T\gg (\Delta_1)^{-1}$, with 
\begin{equation}\label{eq:A_tilde}
	A(k)=\int_{-k}^k\e^{\frac{z_{\text{i}}^2}{4}}\widehat\varphi_0(z_{\text{i}};k)f_0(z_{\text{i}})\,\d{z_{\text{i}}}.
\end{equation}

Let us come back to the initial problem of the weighted Brownian bridge reaching its extremal value in $[a,b]$.
If we are interested in the limit case where $a$ is arbitrarily close to $0$ and $b$ close to $1$, then $T \to \infty$ and
the solution is thus given by
\begin{align*}
    \mathcal{P}_{\!{\scriptscriptstyle <}}(k|T)&=A(k)\,\e^{-\theta_{0}(k)T}\int_{-k}^{k}\e^{-\frac{z^2}{4}}\widehat\varphi_{0}(z;k)\,\d{z}\\
            &=\widetilde{A}(k)\, \e^{-\theta_{0}(k)T},  
\end{align*}
with $\widetilde{A}(k) \equiv \sqrt{2\pi}A(k)^2$.

We now compute explicitly the limit behavior of both $\theta_0(k)$ and $\widetilde{A}(k)$.
\paragraph*{$\boxed{k\to \infty}$} 
As $k$ goes to infinity, the absorption rate $\theta_0(k)$ is expected to converge toward $0$: intuitively, an infinitely 
far barrier will not absorb anything. At the same time, $\mathcal{P}_{\!{\scriptscriptstyle <}}(k|T)$ must tend to 1 in that limit. 
So $\widetilde{A}(k)$ necessarily tends to unity. Indeed,
\begin{align}
	       \theta_0(k)&\xrightarrow{k\to\infty} \sqrt{\frac{2}{\pi}}k\,\e^{-\frac{k^2}{2}} \to 0,\\\nonumber
	  \widetilde{A}(k)&\xrightarrow{k\to\infty}\left(\int_{-\infty}^{\infty}\widehat{\varphi}_{0}(z;\infty)^2\,\d{z}\right)^2 = 1.
\end{align}
In principle, we see from Eq.~(\ref{eq:A_tilde}) that corrections to the latter arise both (and jointly) from 
the functional relative difference of the solution 
\mbox{$\epsilon(z;k)=y_+(\theta_0(k);z)/y_+(0;z)-1$}, 
and from the finite integration limits ($\pm k$ instead of $\pm \infty$).
However, it turns out that the correction of the first kind is of second order in $\epsilon$.%
\footnote{From Eq.~(\ref{eq:A_tilde}) we have, when \mbox{$k\to\infty$},
\[
    A(k)=(2\pi)^{-1/2}\frac{\int_{-k}^k\e^{-z^2/2}[1+\epsilon(z;k)]\d{z}}{\sqrt{\int_{-k}^k\e^{-z^2/2}[1+\epsilon(z;k)]^2\d{z}}}.
\]
The result follows by keeping only the dominant terms in the expansion in powers of $\epsilon(z;k)$.
A similar computation for the asymptotic analysis by expanding the wave function in $\theta$ was performed in Ref.~\cite{krapivsky1996life}.
Alternatively, algebraic arguments allow to understand that, to first order in the energy correction $\theta_0(k)-\theta_0(\infty)$, 
the perturbation of the wave function is orthogonal to $\widehat{\varphi}_{0}(z;\infty)$.} 
The correction to $A(k)$ 
is thus dominated by the finite integration limits $\pm k$, so that
\begin{equation}
    \widetilde{A}(k\to\infty)\approx \Erf{\frac{k}{\sqrt{2}}}^2.
\end{equation}
\paragraph*{$\boxed{k\to 0}$}
For small $k$, the system behaves like a free particle in a sharp and infinitely deep well, since the quadratic potential is almost flat around 0.
The fundamental mode becomes then 
\[
	\widehat\varphi_0(z;k\to 0)=\frac{1}{\sqrt{k}}\cos\!\left(\frac{\pi z}{2k}\right),
\]
and consequently
\begin{align}
	       \theta_0(k)&\xrightarrow{k\to 0}\frac{\pi^2}{4k^2}-\frac12,\\\nonumber
	  \widetilde{A}(k)&\xrightarrow{k\to 0}\left(\int_{-k}^{k}\frac{\e^{-\frac{z^2}{4}}}{(2\pi)^{\frac{1}{4}}}
                                                              \frac{1}{\sqrt{k}}\cos\!\left(\frac{\pi z}{2k}\right)\d{z}\right)^2\\
                        &\approx\frac{1}{\sqrt{2\pi}\,k}
	  \left(\frac{4k}{\pi}\right)^2=\frac{16}{\pi^2\sqrt{2\pi}}k.
\end{align}

We show in Fig.~\ref{fig:A_theta} the functions $\theta_0(k)$ and $\widetilde{A}(k)$ computed numerically from the exact solution,
together with their asymptotic analytic expressions. In intermediate values of $k$ (roughly between 0.5 and 3) these limit
expressions fail to reproduce the exact solution.
\begin{figure}[p] 
	\center
	\includegraphics[scale=0.75,trim=0 0 -25 0,clip]{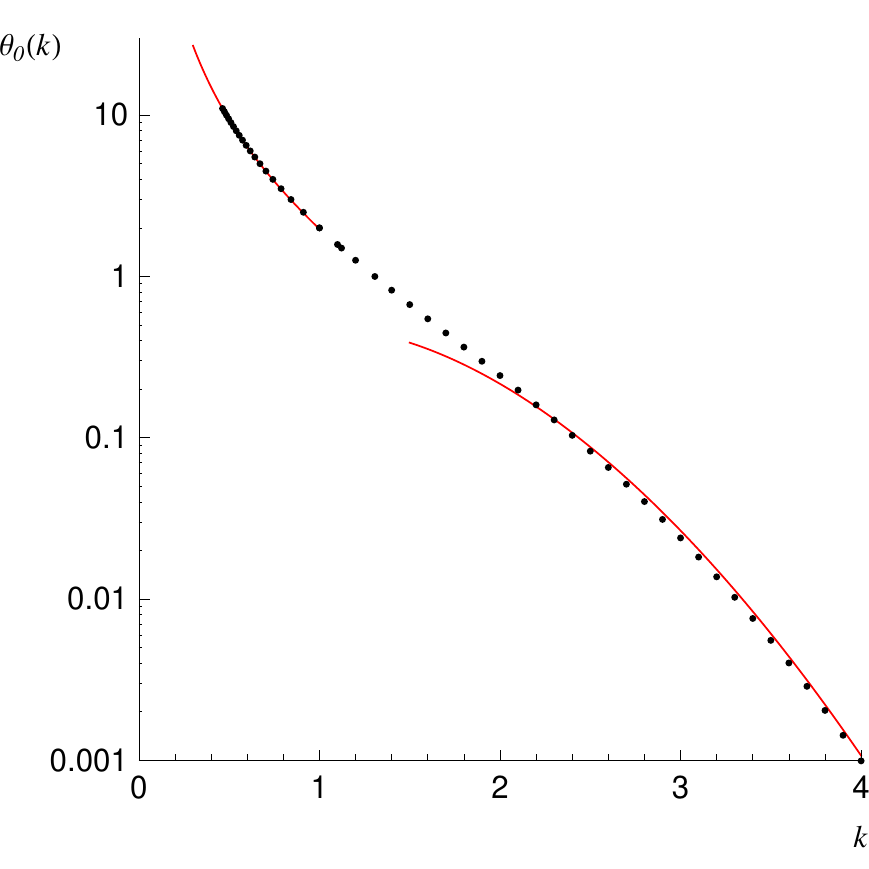}
	\includegraphics[scale=0.75,trim=0 0 -25 0,clip]{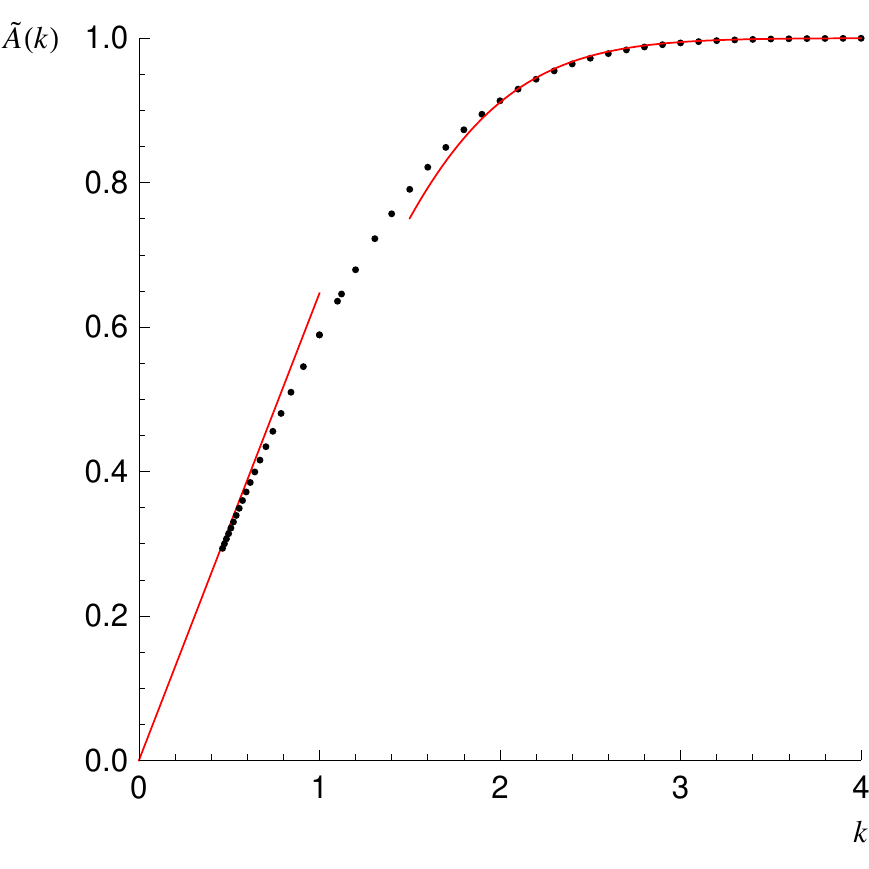}
	\caption{\textbf{Left:  } Dependence of the exponent       $\theta_0$ on $k$; similar to Fig.~2 in Ref.~\cite{krapivsky1996life}, but in lin-log scale; see in particular Eqs.~(9b) and (12) there.
	         \textbf{Right: } Dependence of the prefactor $\widetilde{A}$ on $k$. The red solid lines illustrate the analytical behavior in the
	         limiting cases $k\to 0$ and $k\to\infty$.}
	\label{fig:A_theta}
	\center
	\includegraphics[scale=0.85,trim=0 0 -25 0,clip]{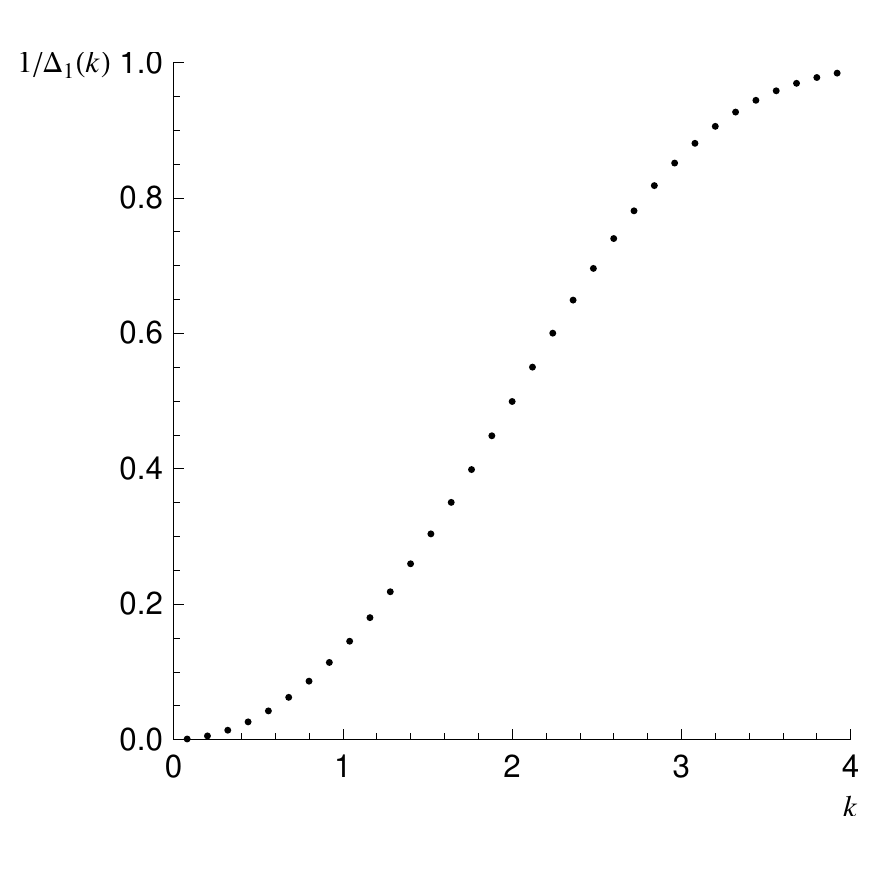}
	\caption{$1/\Delta_1(k)$ saturates to 1, so that the condition \mbox{$N\gg\exp[1/\Delta_1(k)]$} is virtually always satisfied.}
	\label{fig:delta1}
\end{figure}

\subsubsection*{Higher modes and validity of the asymptotic ($N\gg 1$) solution}
Higher modes $\nu>0$ with energy gaps $\Delta_\nu\lesssim 1/T$ must in principle be kept in the pre-asymptotic computation.
This, however, is irrelevant in practice since the gap \mbox{$\theta_1-\theta_0$} is never small.
Indeed, $\widehat\varphi_1(z;k)$ is proportional to the asymmetric solution $y_-(\theta_1(k);z)$
and its energy
\[
    \theta_1(k)=\inf_{\theta>\theta_0(k)}\big\{\theta:y_-(\theta;k)=0\big\}
\]
is found numerically to be very close to \mbox{$1+4\theta_0(k)$}. 
In particular, $\Delta_1>1$ (as we illustrate in Fig.~\ref{fig:delta1}) 
and thus $T \Delta_1 \gg 1$ will always be satisfied in cases of interest.

\subsection{Back to {GoF} testing and conclusion}

Let us now come back to GoF testing. 
In order to convert the above calculations into a meaningful test, one must specify values of $a$ and $b$. 
The natural choice would be $a=1/N$, corresponding to the min of the sample series since 
$F(\min x_n)\approx F_N(\min x_n)=\frac{1}{N}$. 
Eq.~(\ref{eq:Tchange}) above motivates a slightly different value of $a=1/(N\!+\!1)$ and $b=1-a$, such that 
the relevant value of $T$ is given correspondingly by 
\[
	T=\ln\sqrt{\frac{b\,(1-a)}{a\,(1-b)}} = \ln N.
\]
This leads to our central result for the cdf of the weighted maximal Kolmogorov distance $K(\tfrac{1}{N\!+\!1},\tfrac{N}{N\!+\!1})$ 
under the hypothesis that the tested and the true distributions coincide:
\begin{equation}\label{eq:final_res}
	\boxed{S(N;k)=\mathcal{P}_{\!{\scriptscriptstyle <}}(k|\ln N)=\widetilde{A}(k)N^{-\theta_0(k)}},
\end{equation}
which is valid whenever $N \gg 1$ since, as we discussed above, the energy gap $\Delta_1$ is greater than unity.

The final cumulative distribution function (the test law) is depicted in Fig.~\ref{fig:S} for different values of the sample size $N$:
As $N$ grows toward infinity, the curve is shifted to the right, and eventually $S(\infty;k)$ is zero for any $k$.
The red solid lines illustrate the analytical behavior in the limiting cases $k\to 0$ and $k\to\infty$.
Contrarily to the standard KS case (Fig.~\ref{fig:KS}), this distribution \emph{still depends on $N$}. 
In particular, the threshold value $k^*$ corresponding to a $95 \%$ confidence level increases with $N$. 
Since for large $N$, $k^* \gg 1$ one can use the asymptotic expansion above, which soon becomes quite accurate, as shown in Fig.~\ref{fig:S}. 
This leads to:
\[
\theta_0(k^*) \approx - \frac{\ln 0.95}{\ln N} \approx \sqrt{\frac{2}{\pi}}k^*\,\e^{-\frac{k^{*2}}{2}},
\]
which gives $k^* \approx 3.439, 3.529, 3.597, 3.651$ for, respectively, $N=10^3,10^4,10^5,10^6$. 
For exponentially large $N$ and to logarithmic accuracy, one has: $k^* \sim \sqrt{2 \ln (\ln N)}$. 
This variation is very slow, but one sees that as a matter of principle, the ``acceptable'' maximal 
value of the weighted distance is much larger (for large $N$) than in the KS case.


\begin{figure}
	\center
	\subfigure[Kolmogorov asymptotic distribution.]{\includegraphics[scale=0.75]{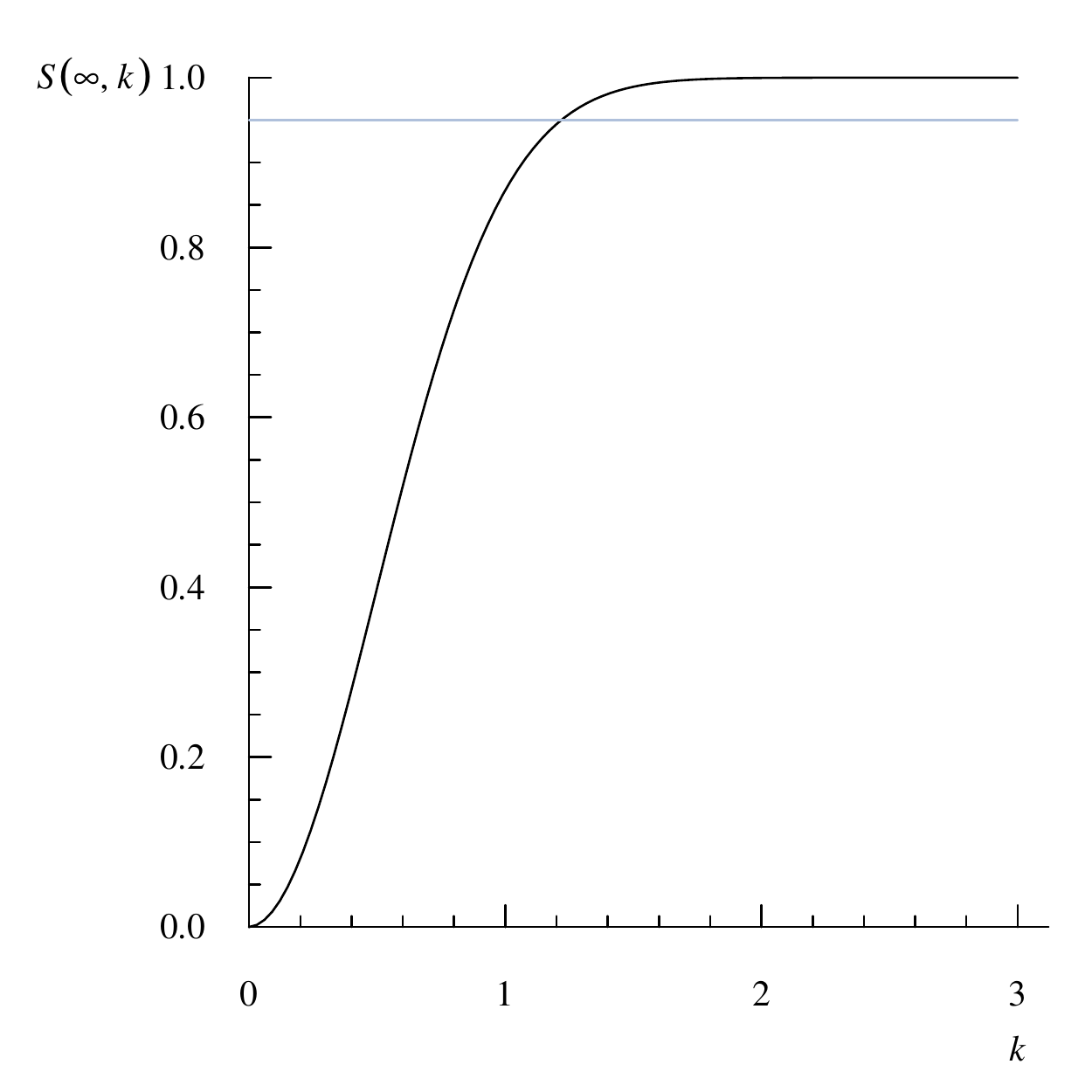}\label{fig:KS}}
	\subfigure[Dependence of $S(N;k)$ on $k$ for $N=10^3,10^4,10^5,10^6$ (from left to right).]{\includegraphics[scale=1.,trim=0 10 -25 0,clip]{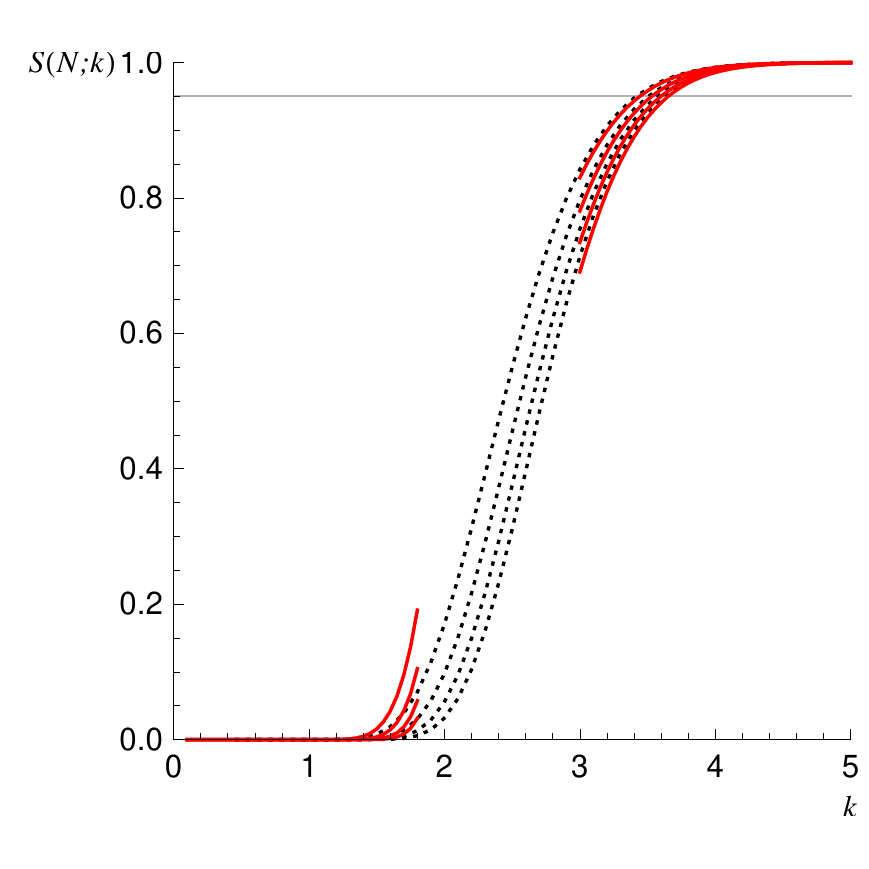}\label{fig:S}}
	\caption{The horizontal grey line corresponds to a $95 \%$ confidence level.}
	
\end{figure}

In conclusion, we believe that accurate GoF tests for the extreme tails of empirical distributions is a very important issue, relevant in many 
contexts. We have derived exact asymptotic results for a generalization of the Kolmogorov-Smirnov test, 
well suited to testing the whole domain up to these extreme tails. 
Our final results are summarized in Eq.~(\ref{eq:final_res}) and Fig.~\ref{fig:S}.
In passing, we have rederived and made more precise the result of 
Krapivsky and Redner \cite{krapivsky1996life} concerning the survival probability of a diffusive particle in an expanding cage. It would be 
interesting to exhibit other choices of weight functions that lead to soluble survival probabilities. It would also be interesting to extend the 
present results to multivariate distributions, and to dependent observations, along the lines of Ref.~\cite{chicheportiche2011goodness}.



\section{Goodness-of-fit tests for a sample of dependent draws}\label{sec:GoF}

Usual GoF tests, and even the generalization that we just worked out in the previous section of this chapter,
are designed for samples of identically distributed \emph{and independent} draws.
It however so happens that in certain fields (physics, finance, geology, etc.) 
the random variables under scrutiny have some dependence, be it \emph{spatial} or \emph{temporal} (in which case one speaks of ``memory'').
Applying naively GoF tests to such samples leads to a pessimistic evaluation of the null hypothesis of the test,
and eventually causes a too high rejection rate, as will be made more precise below.
It is thus natural to ask how these dependences effectively affect the law of the test statistic.
A minimal departure from independence is to allow for homogeneous dependences between draws $i$ and $j$,
for which the two-points copula is translationally invariant.
In this chapter, we adopt the point of view of temporal dependences because it has immediate 
applications in financial time series (see later in Chapter \ref{part:partIII}.\ref{chap:cop_fin}),
but the theory applies as well to observables equally spaced on the line, 
and whose dependences rely on their respective distances.

Whereas the unconditional law of the variable may well be unique and independent of time, 
the conditional probability distribution of an observation \emph{following} a previous observation 
exhibits specific patterns, and in particular long-memory, even when the linear correlation is short-ranged or trivial. 
Examples of such phenomena can be encountered in fluid mechanics (the velocity of a turbulent fluid) 
and finance (stock returns have small auto-correlations but exhibit strong volatility clustering, a form of heteroskedasticity).
The long-memory nature of the underlying processes makes it inappropriate to use standard GoF tests in these cases. 
Still, the determination of the unconditional distribution of returns is a classic problem in quantitative finance, with obvious 
applications to risk control, portfolio optimization or derivative pricing. 
Correspondingly, the distribution of stock returns (in particular the behavior of its tails) 
has been the subject of numerous empirical and theoretical papers (see e.g. \cite{plerou1999scaling,dragulescu2002probability} 
and for reviews \cite{bouchaud2003theory,malevergne2006extreme} and references therein). 
Clearly, precise statements are only possible if meaningful GoF tests are available \cite{weiss1978modification}.

In the rest of this section, we study theoretically how to account for general dependence in GoF tests: 
we first describe the statistical properties of the empirical cdf of a non-iid vector of observations of finite size,
as well as measures of its difference with an hypothesized cdf. 
We then study the limit properties of this difference and the asymptotic distributions of two norms.
Examples and financial applications are relegated to Chapter~\ref{chap:cop_fin} in part~\ref{part:partIII}.


\subsection{Empirical cumulative distribution and its fluctuations}
Contrarily to section~\ref{sec:GoFintro}, we now consider
 $\vect{X}$ to be a random vector with $N$ identically distributed but dependent variables, with marginal cdf $F$. 
One realization of $\vect{X}$ consists of a time series $\{x_1,\ldots,x_n, \ldots, x_N\}$ that exhibits some sort of persistence.
Due to the persistence, the fluctuations of the empirical CDF \eqref{eq:F_N} depend on the joint distribution of all couples $(X_n,X_m)$,
so that the covariance of the rescaled empirical CDF $Y_N(u)$ defined in Eq.~\eqref{eq:Y_N} is now
\begin{equation}\label{eq:CovYbar}
	{\rm Cov}(Y_N(u),Y_N(v))=\min(u,v)-uv + \frac{1}{N} \sum_{n, m \neq n}^N \big(\cop[nm](u,v)-uv\big)
\end{equation}
and makes the pairwise copulas $\cop[nm]$ explicitly appear.
It is more convenient to introduce
\begin{equation}\label{eq:Psi}
	\Psi_N(u,v)= \frac{1}{N} \sum_{n, m \neq n}^N \Delta_{nm}(u,v),\quad\text{where}\quad\Delta_{nm}(u,v)= \frac{\cop[nm](u,v)-uv}{\min(u,v)-uv},
\end{equation}
which measures the relative departure from the independent case $\Psi_N(u,v) \equiv 0$, 
corresponding to the pairwise copula being independence $\cop[nm](u,v)=uv$. 
Note that decorrelated but dependent variables may lead to a non zero value of $\Psi_N$, 
since the whole pairwise copula enters the formula and not only the linear correlation coefficients.
When the denominator is zero, the fraction should be understood in a limit sense;
we recall in particular that \cite{chicheportiche2012joint}
\begin{equation}
	\Delta_{nm}(u,u) = \frac{\cop[nm](u,u)-u^2}{u\,(1-u)}=\tUU_{nm}(u)+\tLL_{nm}(1-u)-1
\end{equation}
tends to the upper/lower tail dependence coefficients $\tUU_{nm}(1)$ when $u$ tends to $1$, and to $\tLL_{nm}(1)$ when $u$ tends to $0$.
Intuitively, the presence of $\Psi_N(u,v)$ in the covariance of $Y_N$ above leads to a reduction of the 
number of effectively independent variables, 
but a more precise statement requires some further assumptions that we detail below. 

In the following, we will restrict to the case of {strong-}stationary random vectors, 
for which the copula $\cop[nm]$ only depends on the lag $t=m-n${, i.e. $\cop[t]\equiv \cop[n,n+t]$}.   
The average of $\Delta_{nm}$ over $n,m$ can be turned into an average over $t$:
\begin{equation}\label{eq:avgDelta_t_homo}
	\Psi_N(u,v)=\sum_{t=1}^{N-1} (1-\frac{t}{N})\big(\Delta_{t}(u,v)+\Delta_{-t}(u,v)\big)
\end{equation}
with $\Delta_t(u,v) = \Delta_{n,n+t}(u,v)$ and $\Delta_{-t}(u,v) = \Delta_{n+t,n}(u,v)$. 
Note that in general $\Delta_t(u,v) \neq \Delta_{-t}(u,v)$, but clearly $\Delta_{t}(u,v)=\Delta_{-t}(v,u)$, which implies that
$\Psi_N(u,v)$ is symmetric in $u \leftrightarrow v$. 

We will assume in the following that the dependence encoded by $\Delta_t(u,v)$ has a limited range in time, 
or at least that it decreases sufficiently fast for the above sum to converge when $N \to \infty$. 
If the characteristic time scale for this dependence is $T$, we assume in effect that $T \ll N$. 
In the example worked out in Section~\ref{sec:example} below, one finds:
\[
\Delta_{t}(u,v) = f\!\left(\frac{t}{T}\right) \frac{A(u,v)}{I(u,v)}, \qquad I(u,v) \equiv \min(u,v)-uv
\]
where $f(\cdot)$ is a certain function. If $f(r)$ decays faster than $r^{-1}$, one finds (in the limit $T \gg 1$):
\[
\Psi_\infty(u,v) = \lim_{N \to \infty} \Psi_N(u,v) = T\, \frac{A(u,v)+A(v,u)}{I(u,v)} \int_0^\infty \d{r} f(r),
\]
with corrections at least of the order of $T/N$ when $N \gg T$.

\subsection{Limit properties}
We now define the process $\tilde{y}(u)$ as the limit of $Y_N(u)$ when $N \to \infty$.
For a given $u$, it represents the asymptotics of the difference 
between the empirically determined cdf of the underlying $X$'s and the theoretical one, at the $u$-th quantile.
According to the Central Limit Theorem under weak dependences, 
it is Gaussian as long as the strong mixing coefficients, 
\[
	\alpha_{\text{SM}}(t)=\sup_{\tau}\sup_{A,B}\left\{\big|\mathds{P}(A\cap B)-\mathds{P}(A)\mathds{P}(B)\big|: A\in \sigma(\{Z_n(u)\}_{n\leq \tau}),B\in \sigma(\{Z_n(v)\}_{n\geq \tau+t})\right\}
\]
associated to the sequence $\{Z_n(u)\}=\{\1{F(X_n)\leq u}-u\}$, vanish at least as fast as $\mathcal{O}(t^{-5})$%
\footnote{This condition means that the occurrence of any two realizations of the underlying variable 
can be seen as independent for sufficiently long time between the realizations.
Since the copula induces a measure of probability on the Borel sets,
it amounts in essence to checking that $|\cop[t](u,v)-uv|$ converges quickly towards 0.
See Refs.~\cite{bradley2007introduction,chen2010nonlinearity,beare2010copulas} for definitions of 
$\alpha-$, $\beta-$, $\rho-$mixing coefficients and sufficient conditions on copulas for geometric mixing (fast exponential decay)
in the context of copula-based stationary Markov chains.}.
We will assume this condition to hold in the following. 
For example, this condition is met if the function $f(r)$ defined above decays exponentially, or if $f(r \geq 1)=0$.

The covariance of the process $\tilde{y}(u)$ is given by:
\begin{equation}\label{eq:Htheo}
	H(u,v)=I(u,v)\,\left[1+\Psi_{\infty}(u,v)\right]
\end{equation}
and characterizes a Gaussian bridge since $\var{\tilde{y}(0)}=\var{\tilde{y}(1)}=0$, 
or equivalently $\pr{\tilde{y}(0)=y}=\pr{\tilde{y}(1)=y}=\delta(y)$.
Indeed, $I(u,v)=\min(u,v)-uv$ is the covariance function of the Brownian bridge, 
and $\Psi_{\infty}(u,v)$ is a non-constant scaling term.

By Mercer's theorem, the covariance $H(u,v)$ can be decomposed on its eigenvectors
and $\tilde{y}(u)$ can correspondingly be written as an infinite sum of Gaussian variables:
\begin{equation}\label{eq:y_sum_z}
	\tilde{y}(u)=\sum_{j=1}^{\infty}U_j(u)\sqrt{\lambda_j} \, z_j
\end{equation}
where $z_j$ are independent centered, unit-variance Gaussian variables, 
and the functions $U_j$ and the numbers $\lambda_j$ are solutions to the eigenvalue problem:
\begin{equation}
	\int_0^1 H(u,v) U_i(v)\,\d{v}=\lambda_i \, U_i(u)\quad\textrm{with}\quad\int_0^1 U_i(u)U_j(u)\,\d{u}= \delta_{ij}.
\end{equation}

In practice, for every given problem, the covariance function in Equation~(\ref{eq:Htheo}) has a specific shape, 
since $\Psi_\infty(u,v)$ is copula-dependent. Therefore, contrarily to the case of independent random variables,
the GoF tests will not be characterized by universal (problem independent) distributions.

\subsection{Law of the norm-2 (Cram\'er-von~Mises)}
\nomenclature{CvM, CM}{Cram\'er-von~Mises}
The norm-2 of the limit process is the integral of $\tilde{y}^2$ over the whole domain:
\begin{subequations}
\begin{equation}
	C\!M=\int_0^1\tilde{y}(u)^2\,\d{u}.
\end{equation}
In the representation (\ref{eq:y_sum_z}), it has a simple expression: 
\begin{equation}\label{eq:CM_sum_lambda}
	C\!M=\sum\limits_{j=1}^{\infty}\lambda_j z_j^2.
\end{equation}
\end{subequations}
and its law is thus the law of an infinite sum of squared independent Gaussian variables 
weighted by the eigenvalues of the covariance function. 
Diagonalizing $H$ is thus sufficient to find the distribution of $C\!M$, 
in the form of the Fourier transform of the characteristic function
\begin{equation}\label{eq:charctCM}
	\phi(t)=\Esp{\e^{\imath t\, C\!M}}
	       =\prod_{j}\left(1-2\imath t\lambda_j\right)^{-\frac12}.
\end{equation}
The hard task consists in finding the infinite spectrum of $H$ (or some approximations, if necessary).

Ordering the eigenvalues by decreasing amplitude, 
Equation~(\ref{eq:CM_sum_lambda}) makes explicit the decomposition of $C\!M$ over contributions of decreasing importance 
so that, at a wanted level of precision, only the most relevant terms can be retained.
In particular, if the top eigenvalue dominates all the others, we get the chi-square law with a single degree of freedom:
\begin{equation}
	\pr{C\!M\leq k}=\operatorname{erf}\sqrt{\frac{k}{\lambda_0}}.
\end{equation}

Even if the spectrum cannot easily be determined but $H(u,v)$ is known, 
all the moments of the distribution can be computed exactly. For example:
\begin{subequations}\label{eq:momentsCvM}
\begin{align}
	\esp{C\!M}&=\tr H= \int_0^1H(u,u)\,\d{u},\\
	\var{C\!M}&\equiv 2\,\tr H^2=2\iint_0^1H(u,v)^2\,\d{u}\,\d{v}.
\end{align}
\end{subequations}

\subsection{Law of the supremum (Kolmogorov-Smirnov)}
The supremum of the difference between the empirical cdf of the sample and the target cdf under the null-hypothesis
has been used originally by Kolmogorov and Smirnov as the measure of distance.
The variable 
\begin{equation}
	K\!S=\sup_{u\in[0,1]}|\tilde{y}(u)|
\end{equation}
describes the limit behavior of the GoF statistic. 
In the case where $1 + \Psi_\infty(u,v)$ can be factorized as $\sqrt{\psi(u)}\sqrt{\psi(v)}$, 
the procedure for obtaining the limiting distribution was worked out in \cite{anderson1952asymptotic}, 
and leads to a problem of a diffusive particle in an expanding cage, for which some results are known. 
There is however no general method to obtain the distribution of $K\!S$ for an arbitrary covariance function $H$. 

Nevertheless, if $H$ has a dominant mode, the relation (\ref{eq:y_sum_z}) becomes approximately: 
$
	\tilde{y}(u)=U_0(u)\sqrt{\lambda_0}\,z_0\equiv\kappa_0(u)z_0
$, and 
\begin{equation}
	K\!S=\sqrt{\lambda_0}\,|z_0|\sup_{u\in[0,1]}|U_0(u)|\equiv\kappa_0(u_0^*)|z_0|.
\end{equation}
The cumulative distribution function is then simply
\begin{equation}\label{eq:KS_dep_final}
	\pr{K\!S\leq k}=\Erf{\frac{k}{\sqrt{2}\,\kappa_0(u_0^*)}},\qquad k\geq 0.
\end{equation}
This approximation is however not expected to work for small values of $k$, since in this case $z_0$ must be small, 
and the subsequent modes are not negligible compared to the first one.
A perturbative correction --- working also for large $k$ only --- can be found when the second eigenvalue is small,
or more precisely when $\tilde{y}(u)=\kappa_0(u)z_0+\kappa(u)z_1$ with $\epsilon=\kappa/\kappa_0 \ll 1$.
The first thing to do is find the new supremum
\begin{equation}
	u^*=\arg\sup(\tilde{y}(u)^2)=u_0^*+\frac{\kappa'(u_0^*)}{|\kappa_0''(u_0^*)|}\frac{z_1}{z_0}.
\end{equation}
Notice that it is dependent upon $z_0,z_1$ so that $K\!S$ is no longer exactly the absolute value of a Gaussian.
However it can be shown (after lengthy but straightforward calculation) that, to second order in $\epsilon$, 
$\tilde{y}(u^*)$ \emph{remains  Gaussian}, albeit with a new width
\begin{equation}\label{eq:kappa_star}
	\kappa^* \approx \sqrt{\kappa_0^2+\kappa^2}>\kappa_0,
\end{equation}
where all the functions are evaluated at $u_0^*$. 
In fact, this approximation works also with more than two modes, provided
\begin{equation}\label{eq:kappa_u_perturb}
	\kappa(u)^2\equiv\sum_{j\neq 0}\lambda_jU_j(u)^2 \ll \kappa_0(u)^2=\lambda_0U_0(u)^2,
\end{equation}
in which case:
\begin{equation}\label{eq:kappa_star_sum}
	\kappa^*\approx\sqrt{\sum_j\lambda_jU_j(u_0^*)^2},
\end{equation}
where the sum runs from 0 to the farthest mode still satisfying the inequality in Eq.~\eqref{eq:kappa_u_perturb}.

\subsection{Conclusion}
We have introduced a framework for the study of statistical tests of Goodness-of-Fit 
with dependent samples, that heavily relies on the notion of bivariate copulas. 

In summary, GoF testing on persistent series cannot be universal as is the case for i.i.d.\ variables, 
but requires a careful estimation of the self-copula at all lags.
Correct asymptotic laws for the test statistics can be found 
as long as dependences are short ranged, i.e.\ $T\ll N$. 
These laws depend on the spectral properties of the covariance kernel.
The resulting distribution of the Cram\'er-von~Mises statistic is given in Eq.~\eqref{eq:charctCM} (in Fourier form),
and the law of the  Kolmogorov-Smirnov test can be read off Eq.~\eqref{eq:KS_dep_final} when a single eigenmode is dominant, 
or substituting Eq.~\eqref{eq:kappa_star_sum} to take into account the contribution of subsequent modes.

An application of the modified Goodness-of-Fit tests developed in this chapter
is provided in Chapter~\ref{chap:cop_fin} of part~\ref{part:partII}, where we test the distribution of stock returns.

We conclude with two remarks of methodological interest. 

1)~The method presented for dealing with self-dependences while using statistical tests of Goodness-of-Fit
is computationally intensive in the sense that it requires to estimate empirically the self-copula for
all lags over the entire unit square. 
In the non-parametric setup, discretization of the space must be chosen so as to provide a good approximation 
of the continuous distance measures while at the same time not cause too heavy computations.
Considering that fact, it is often more appropriate to use the Cram\'er-von~Mises-like test rather 
than the Kolmogorov-Smirnov-like, as numerical error on the evaluation of the integral will typically 
be much smaller than on the evaluation of the supremum on a grid,
more so when the grid size is only about $\frac{1}{M}\approx \frac{1}{100}$.

2)~The case with long-ranged dependence $T \gg N \gg 1$ cannot be treated in the framework presented here.
First because the Central Limit Theorem does not hold in that case, 
and finding the limit law of the statistic may require more advanced mathematics.
But even pre-asymptotically, summing the lags over the available data up to $t\approx N$ means that 
a lot of noise is included in the determination of $\Psi_N(u,v)$ (see Equation~\ref{eq:Psi}).
This, in turn, is likely to cause the empirically determined kernel $H(u,v)$ not to be positive definite. 
One way of addressing this issue is to follow a semi-parametric procedure: 
the copula $\cop[t]$ is still estimated non-parametrically, but the kernel $H$ sums the lagged copulas $\cop[t]$ 
only up to a scale where the linear correlations and leverage correlations vanish, 
and only one long-ranged dependence mode remains. 
This last contribution can be fitted by an analytical form, that can then be summed up to its own scale, or even to infinity.

\part{Cross-sectional dependences}\label{part:partII}
\chapter{The joint distribution of stock returns is not elliptical}\label{chap:IJTAF}
\minitoc

\section{Introduction}

The most important input of portfolio risk analysis and portfolio optimization 
is the correlation matrix of the different assets. 
In order to diversify away the risk, one must avoid allocating on bundles of correlated assets, 
an information in principle contained in the correlation matrix. 
The seemingly simple mean-variance Markowitz program is however well known to be full of thorns. 
In particular, the empirical determination of large correlation matrices turns out to be difficult, 
and some astute ``cleaning'' schemes must be devised before using it for constructing optimal allocations \cite{ledoit2004well}. 
This topic has been the focus of intense research in the recent years, some inspired by Random Matrix Theory 
(for a review see \cite{elkaroui2009,potters2009financial}) or clustering ideas \cite{marsili2004,tola2008cluster,tumminello2007hierarchically}. 

There are however many situations of practical interest where the (linear) correlation matrix is 
inadequate or insufficient \cite{bouchaud2003theory,malevergne2006extreme,mashal2002beyond}. 
For example, one could be more interested in minimizing the probability of large negative returns 
rather than the variance of the portfolio. 
Another example is the Gamma-risk of option portfolios, where the correlation of the squared-returns of the underlyings is needed. 
Credit derivatives are bets on the probability of simultaneous default of companies or of individuals; 
again, an estimate of the correlation of tail events is required (but probably very hard to ascertain empirically) 
\cite{frey2001modelling}, and for a recent interesting review \cite{brigo}. 

Apart from the case of multivariate Gaussian variables, the description of non-linear dependence 
is not reducible to the linear correlation matrix. The general problem of parameterizing the full 
joint probability distribution of $N$ random variables can be ``factorized'' into the specification 
of all the marginals on the one hand, and of the dependence structure (called the `copula') of 
$N$ standardized variables with uniform distribution in $[0,1]$, on the other hand. 
The nearly trivial statement that all multivariate distributions can be represented in that way is 
called Sklar's Theorem \cite{embrechts02correlation,sklar1959fonctions}. 
Following a typical pattern of mathematical finance, the introduction of copulas ten years ago 
has been followed by a calibration spree, with academics and financial engineers frantically looking 
for copulas to best represent their pet multivariate problem. 
But instead of trying to understand the economic or financial mechanisms that lead to some particular 
dependence between assets and construct copulas that encode these mechanisms, the methodology has been
--- as is sadly often the case --- to brute force calibrate on data copulas straight out from statistics 
handbooks \cite{durrleman2000copula,fermanian2005some,fischer2009empirical,fortin2002tail,patton2001estimation}.
The ``best'' copula is then decided from some quality-of-fit criterion, irrespective of whether the 
copula makes any intuitive sense (some examples are given below). 
This is reminiscent of the `local volatility models' for option markets \cite{dupire1994pricing}: 
although the model makes no intuitive sense and cannot describe the actual dynamics of the underlying asset, 
it is versatile enough to allow the calibration of almost any option smile (see \cite{hagan2002woodward}). 
This explains why this model is heavily used in the financial industry. Unfortunately, a blind calibration of 
some unwarranted model (even when the fit is perfect) is a recipe for disaster. 
If the underlying reality is not captured by the model, it will most likely derail in rough times --- a particularly 
bad feature for risk management! Another way to express our point is to use a Bayesian language: 
there are families of models for which the `prior' likelihood is clearly extremely small 
--- we discuss below the case of Archimedean copulas. 
Statistical tests are not enough --- intuition and plausibility are mandatory. 

The aim of this chapter is to study in depth the family of elliptical copulas, in particular Student copulas, 
that have been much used in a financial context and indeed have a simple intuitive interpretation 
\cite{embrechts02correlation,frahm2003elliptical,hult2002multivariate,luo2009t,malevergne2006extreme,shaw2007copula}. 
We investigate in detail whether or not such copulas can faithfully represent the
joint distribution of the returns of US stocks. (We have also studied other markets as well). 
We unveil clear, systematic discrepancies between our empirical data and the corresponding predictions of elliptical models. 
These deviations are qualitatively the same for different tranches of market capitalizations, 
different time periods and different markets. 
Based on the financial interpretation of elliptical copulas, we argue that such discrepancies are actually expected, 
and propose {the ingredients of} a generalization of elliptical copulas 
that should capture adequately non-linear dependences in stock markets. 
The full study of this generalized model, together with a calibration procedure and stability tests, 
is provided in the next chapter.

\section{Empirical study of the dependence of stock pairs}\label{sec:empirics}
We carefully compare our comprehensive empirical data on stock markets 
with the predictions of elliptical models, with special emphasis on Student copulas and log-normal copulas.
We conclude that elliptical copulas fail to describe the full dependence structure of equity markets. 

\subsection{Methodology: what we do and what we do not do}

As this chapter's title stresses, the object of this study is to dismiss the elliptical copula as description
of the multivariate dependence structure of stock returns.
Concretely, the null-hypothesis ``$\text{H}_0$: the joint distribution is elliptical''
admits as corollary ``all pairwise bivariate marginal copulas are elliptical and differ only by their linear correlation coefficient ''.
In other words, all pairs with the same linear correlation $\rho$ are supposed to have identical values of non-linear dependence measures, 
and this value is predicted by the elliptical model. Focusing the empirical study on the pairwise measures of dependence, we will reject $\text{H}_0$ by
showing that their average value over all pairs with a given $\rho$ is different from the value predicted by elliptical models.

Our methodology differs from usual hypothesis testing using statistical tools and goodness of fit tests, as can be encountered for example in \cite{malevergne2003testing} 
for testing the Gaussian copula hypothesis on financial assets. Indeed, the results of such tests are often not valid because financial time series are \emph{persistent} (although almost not linearly autocorrelated),
so that successive realizations cannot be seen as independent draws of an underlying distribution, see the recent discussion on this issue in \cite{chicheportiche2011goodness}.

\subsection{Data set and time dependence}

The dataset we considered is composed of daily returns of 1500 stocks labeled in USD in the period 1995--2009 (15 full years). 
We have cut the full period in three sub-periods (1995--1999, 2000--2004 and 2005--2009), and also the stock universe into three
pools (large caps, mid-caps and small caps). We have furthermore extended our analysis to the Japanese stock markets. Qualitatively,
our main conclusions are robust and do not depend neither on the period, nor on the capitalization or the market. Some results, on the
other hand, do depend on the time period, but we never found any strong dependence on the capitalization. 
All measures of dependence are calculated pairwise with non-parametric estimators, and
using all trading dates shared by both equities in the pair.

We first show the evolution of the linear correlation coefficients (Fig.~\ref{fig:rho_evol}) and the upper and lower tail dependence coefficients for $p=0.95$ (Fig.~\ref{fig:tau_evol}) as a function of
time for the large-cap stocks. One notes that (a) the average linear correlation ${\overline \rho}(t)$ fluctuates quite a bit, from around $0.1$ in the mid-nineties to around $0.5$ during 
the 2008 crisis; (b) the distribution of correlation coefficients shifts nearly rigidly around the moving average value ${\overline \rho}(t)$; (c) there is a marked, time 
dependent asymmetry between the average upper and lower tail dependence coefficients; (d) overall, the tail dependence tracks the behavior of ${\overline \rho}(t)$. More
precisely, we show the time behavior of the tail dependence assuming either a Gaussian underlying copula or a Student ($\nu=5$) copula with the same average correlation ${\overline \rho}(t)$.
Note that the former model works quite well when ${\overline \rho}(t)$ is small, whereas the Student model fares better when ${\overline \rho}(t)$ is large. We will repeatedly come back to this point below.

\begin{figure}[p]
\center
\includegraphics[scale=0.42,trim=0 0 -30 0,clip]{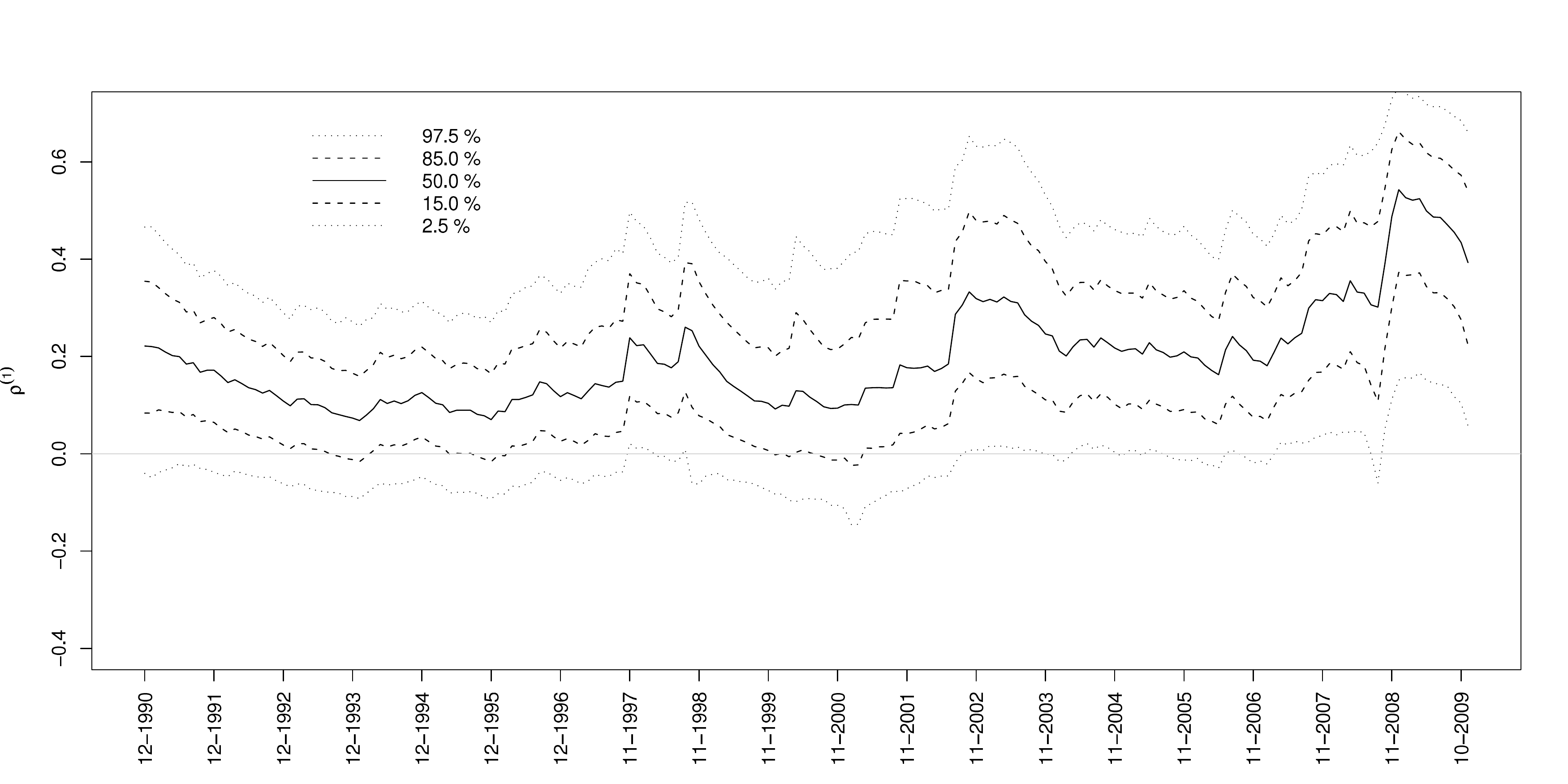}
\caption{Time evolution of the coefficient of linear correlation,
         computed for the stocks in the S\&P500 index, with an exponentially moving average of 125 days
         from January 1990 to December 2009.
         Five quantiles of the $\rho$ distribution are plotted according to the legend, showing that this distribution moves quasi-rigidly around its median value.}
\label{fig:rho_evol}
\end{figure}

\begin{figure}[p]
\center
\includegraphics[scale=0.42,trim=-32 0 -30 0,clip]{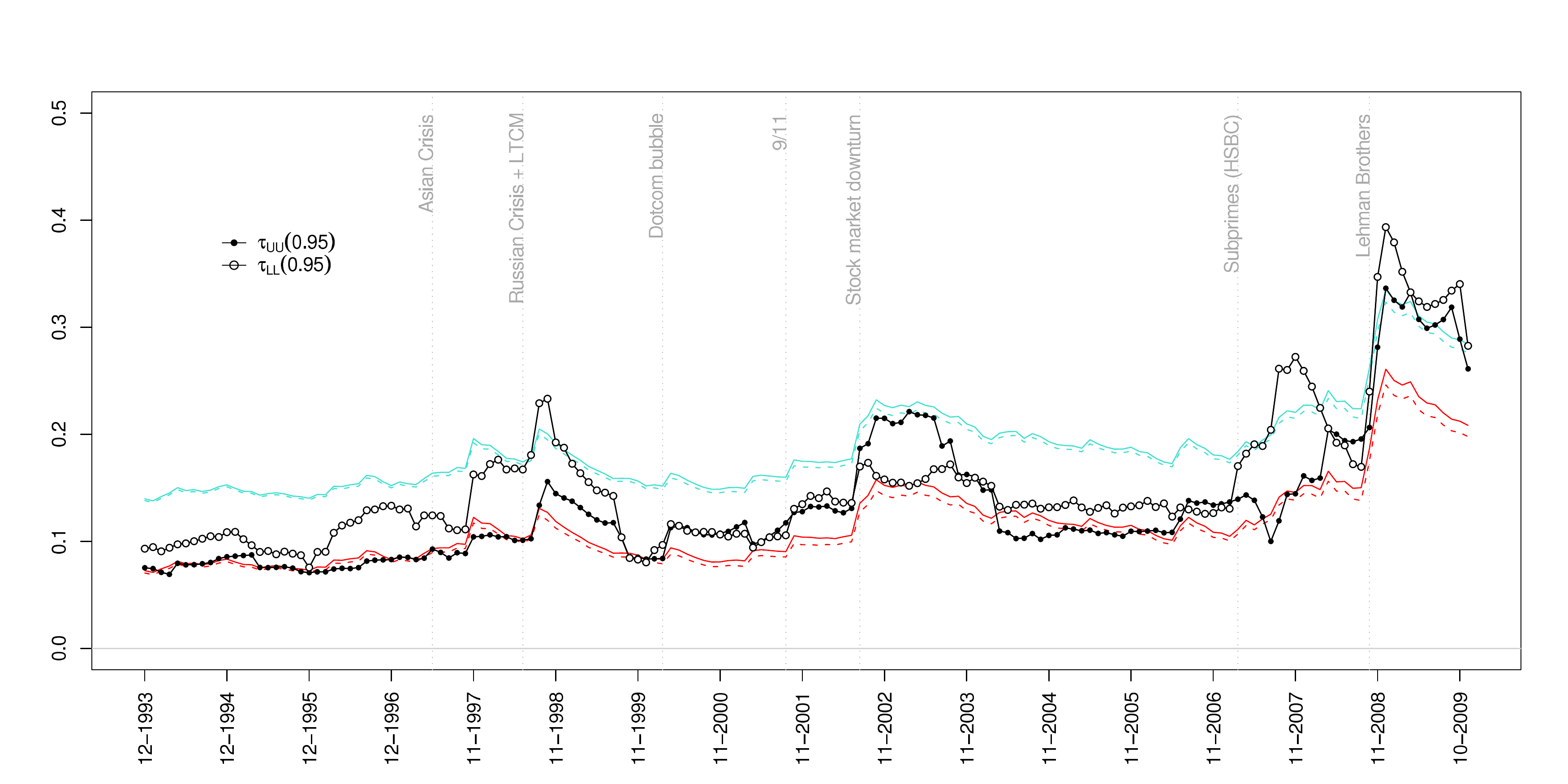}
\caption{Time evolution of the coefficient of tail dependence at $p=0.95$,
         computed for the stocks in the S\&P500 index. 
         A sliding flat window of 250 days moves monthly (25 days) from January 1990 to December 2009.
         We depict in turquoise and red the predictions for Student pair with $\nu=5$ and $\nu=\infty$ respectively, 
         based on the evolution of the mean value $\bar\rho(t)$ of the cross-sectional linear correlation (dashed) or averaged over the full instantaneous distribution of $\rho(t)$ (plain), 
         with little difference.}
\label{fig:tau_evol}
\end{figure}

We computed the quadratic $\roc$ and absolute $\roa$ correlation coefficients for each pair in the pool,
as well as the whole rescaled diagonal and anti-diagonal of the empirical copulas 
(this includes the coefficients of tail dependence\footnote{
The empirical relative bias for $p=0.95$ is $\lesssim{}1\%$ even for series with only $10^3$ points --- 
typically daily returns over 4 years.}, see Part~\ref{part:partI}).
We then average these observables over all pairs with a given linear coefficient $\rho$, within bins of varying width in order to take account of the frequency of observations in each bin. 
We show in Fig.~\ref{fig:all_vs_rho} $\roa$ and $\tUU(0.95), \tLL(0.95)$ as a function of $\rho$ for all stocks in the period 2000--2004, 
together with the prediction of the Student copula model for various values of $\nu$, 
including the Gaussian case $\nu = \infty$. 
For both quantities, we see that the empirical curves systematically cross the Student predictions, 
looking more Gaussian for small $\rho$'s and compatible with Student $\nu \approx 6$ for large $\rho$'s, 
echoing the effect noticed in Fig.~\ref{fig:tau_evol} above. 
The same effect would appear if one compared with the log-normal copula: 
elliptical models imply a residual dependence due to the common volatility, 
\emph{even when the linear correlation goes to zero}. 
This property is not observed empirically, since we rather see that higher-order correlations almost vanish together with $\rho$. 
The assumption that a common volatility factor affects all stocks is therefore too strong,
although it seems to make sense for pairs of stocks with a sufficiently large linear correlation. 
This result is in fact quite intuitive and we will expand on this idea in Sect.~\ref{ssec:pseudo_ell_log} below. 

The above discrepancies with the predictions of elliptical models are qualitatively the same for all periods, market caps and is also found for the Japan data set.
Besides, elliptical models predict a symmetry between upper and lower tails, whereas the data suggest that the tail dependence is asymmetric. Although most of 
the time the lower tail dependence is stronger, there are periods (such as 2002) when the asymmetry is reversed.

\begin{figure}[t!h]\center
    \subfigure[\mbox{$\roa$       vs $\rho$}]{\label{fig:abscor_rho_data}\includegraphics[scale=0.5]{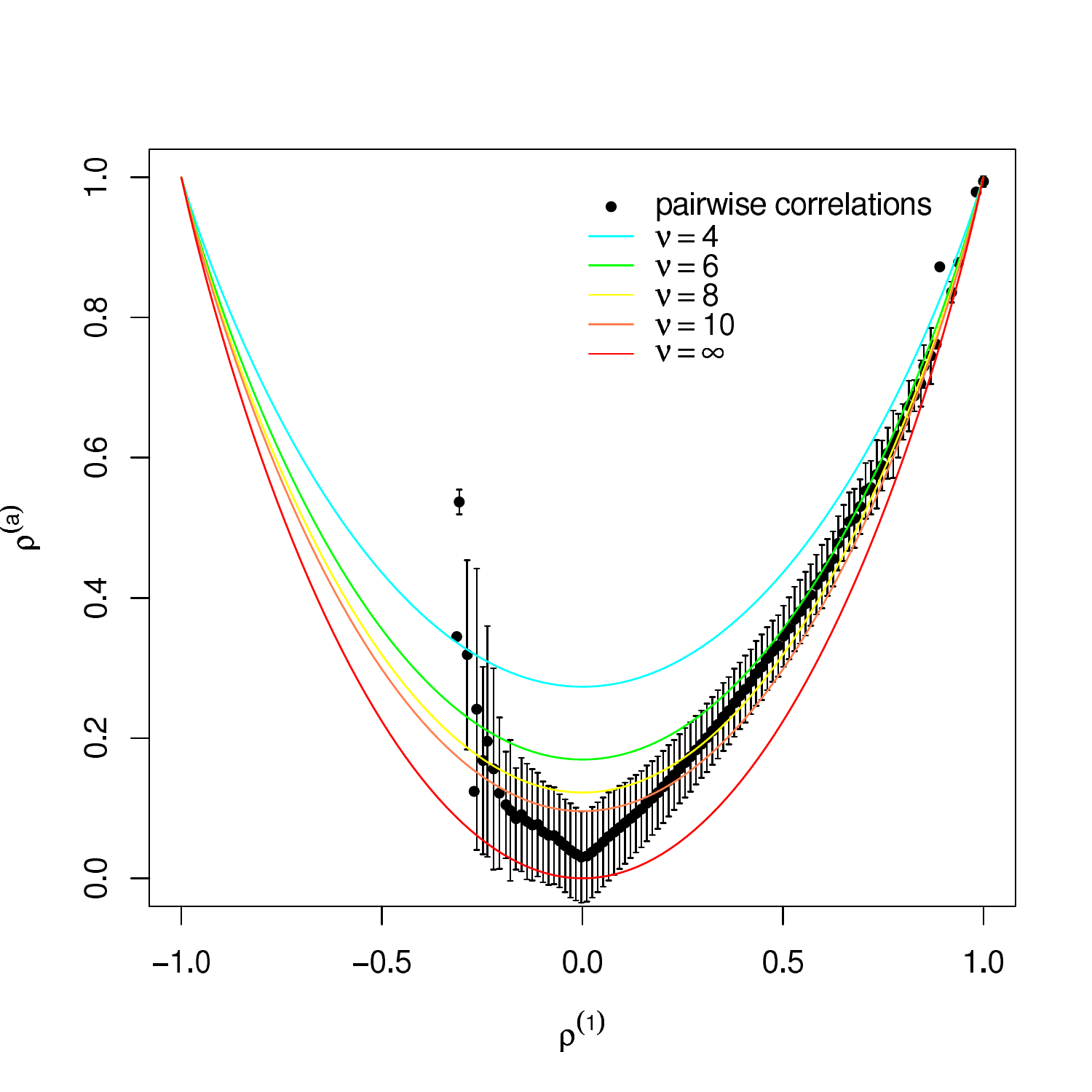}}\hfill
    \subfigure[\mbox{$\tau(0.95)$ vs $\rho$}]{\label{fig:tau_rho_data   }\includegraphics[scale=0.5]{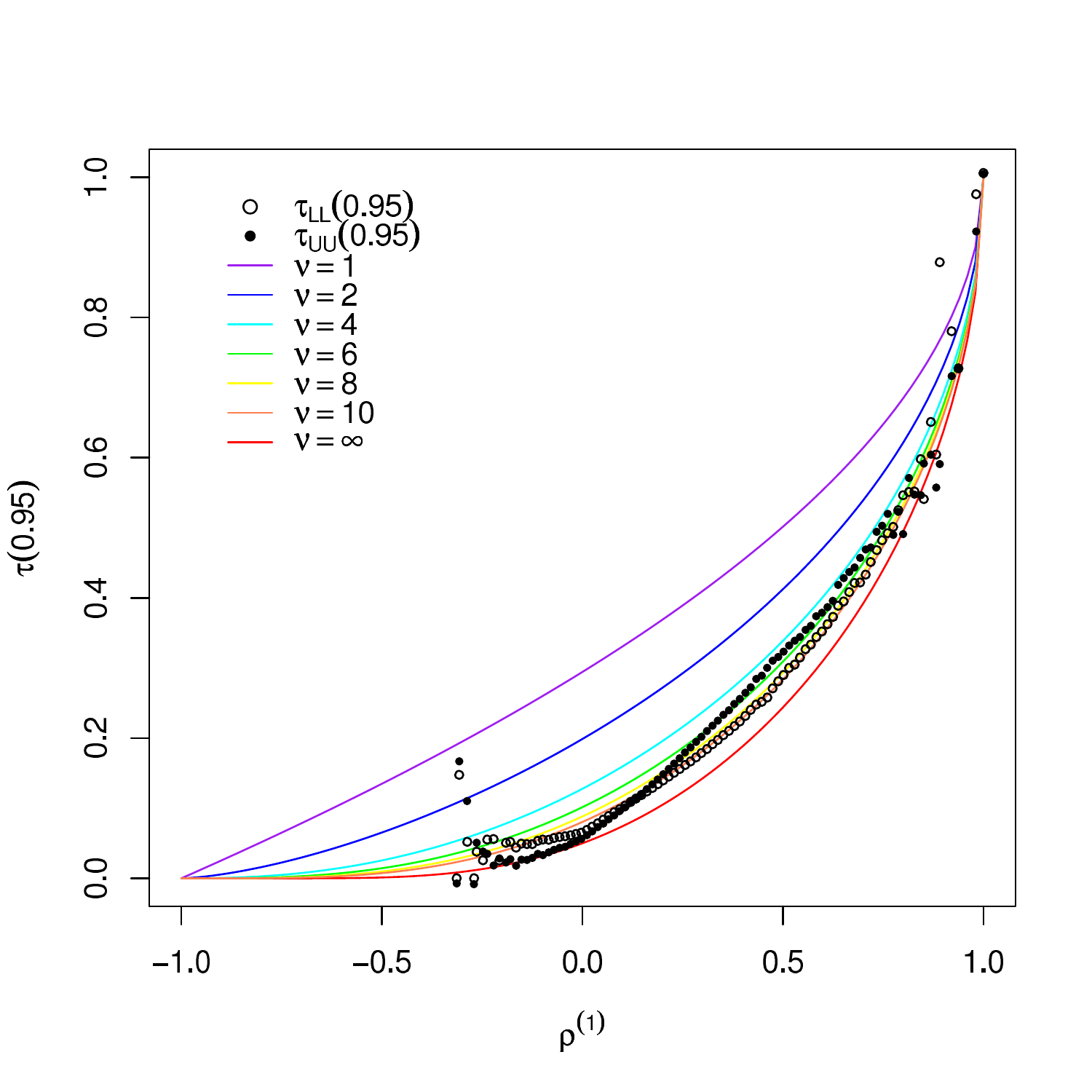}}
    \caption{Empirical (2000--2004) and elliptical. The fact that the empirical points ``cross'' the theoretical lines
  corresponding to elliptical models reveals that there is a dependence structure in the volatility, more complex
  than what is suggested by a single common stochastic scale factor.}	
  \label{fig:all_vs_rho}
\end{figure}

\subsection{Copula diagonals}

As argued in Chapter~\ref{part:partI}.\ref{chap:statdep} above, it is convenient to visualize the rescaled difference $\Delta_{\scriptscriptstyle \text{d,a}}(p)$ 
between the empirical copula and the Gaussian copula, along the diagonal $(p,p)$ and the anti-diagonal $(p,1\!-\!p)$, see Eqs.~\eqref{eq:Delta} page~\pageref{eq:Delta}. 
The central point $p=\frac12$ is special: $\Delta_{\scriptscriptstyle \text{d,a}}(\frac12)=\betaB-\frac{2}{\pi}\arcsin\rod[1]$ 
is zero for all pseudo-elliptical models. 

We show in Fig.~\ref{fig:cop_dev} the diagonal and anti-diagonal copulas for all pairs of stocks in the period 2000--2004, 
and for various values of $\rho$ (similar plots for other periods are available in Ref.~\cite{chicheportiche2010master}). 
We also show the prediction of the Student $\nu=5$ model and of a Frank copula model with Student $\nu=5$ marginals and the appropriate value of $\rho$. 
What is very striking in the data is that $\Delta_{\scriptscriptstyle \text{d}}(p)$ is concave for small $\rho$'s and becomes convex for large $\rho$'s, 
whereas the Student copula diagonal is {\it always convex}. 
This trend is observed for all periods, and all caps, and is qualitatively very similar in Japan as well for all periods between 1991 and 2009. 
We again find that the Student copula is a reasonable representation of the data only for large enough $\rho$. 
The Frank copula is always a very bad approximation -- see the wrong curvature along the diagonal and the inaccurate behavior along the anti-diagonal.

Let us now turn to the central point of the copula, $\cop^*=\cop(\frac12,\frac12)$. 
We plot in Fig.~\ref{fig:C05} the quantity $-\cos\left(2\pi \cop^*\right)-\rho$ as a function of $\rho$, 
which should be zero for all elliptical models, as discussed in pages~\pageref{eq:ell_betaB} and~\pageref{page:invariantbeta}.
The data here include the 284 equities constantly member of the S\&P500 index in the period {2000--2009}. 

We again find a clear systematic discrepancy, that becomes stronger for smaller $\rho$'s: 
the empirical value of $\cop^*$ is too large compared with the elliptical prediction. 
In particular, for stocks with zero linear correlation ($\rho = 0$), we find $\cop^* > \frac14$, 
i.e.\ even when two stocks are uncorrelated, the probability that they move in the same direction\footnote{
A more correct statement is that both stocks have returns below their median with probability larger than $\frac14$. 
However, the median of the distributions are very close to zero, justifying our slight abuse of language.} is larger than $\frac14$. 
The bias shown in Fig.~\ref{fig:C05} is again found for all periods and all market caps, and for Japanese stocks as well. 
Only the amplitude of the effect is seen to change across the different data sets.

Statistical errors in each bin are difficult to estimate since pairs containing the same asset are mechanically correlated. 
In order to ascertain the significance of the previous finding, we have compared the empirical value of $\cop^*(\rho)$ with the result of a numerical
simulation, where we generate time series of Student ($\nu=5$) returns using the empirical correlation matrix. 
In this case, the expected result is that all pairs with equal correlation have the same bivariate copula and thus the same $\cop^*$ 
so that the dispersion of the results gives an estimate of measurement noise. 
We find that, as expected, $\cop^*(\rho)$ is compatible with the Student prediction, at variance with empirical results.
We also find that the dispersion of the empirical points is significantly larger than that of the simulated elliptical pairs with identical linear correlations,
suggesting that all pairs {\it cannot be described by the same bivariate copula} (and definitively not an elliptical copula, as argued above).

All these observations, and in particular the last one, clearly indicate that Student copulas, or any elliptical copulas, are inadequate to represent the full 
dependence structure in stock markets.
 Because this class of copulas has a transparent interpretation, we in fact know why this is the case: assuming a common random volatility 
factor for all stocks is oversimplified. This hypothesis is indeed only plausible for sufficiently correlated stocks, in agreement with the set of observations we made above. As a
first step to relax this hypothesis, we now turn to the pseudo-elliptical log-normal model, that allows further insights but still has unrecoverable failures.

\begin{figure}[tp]
	\center\hfill
  	\subfigure[]{\label{fig:C05a}\includegraphics[scale=0.33,clip]{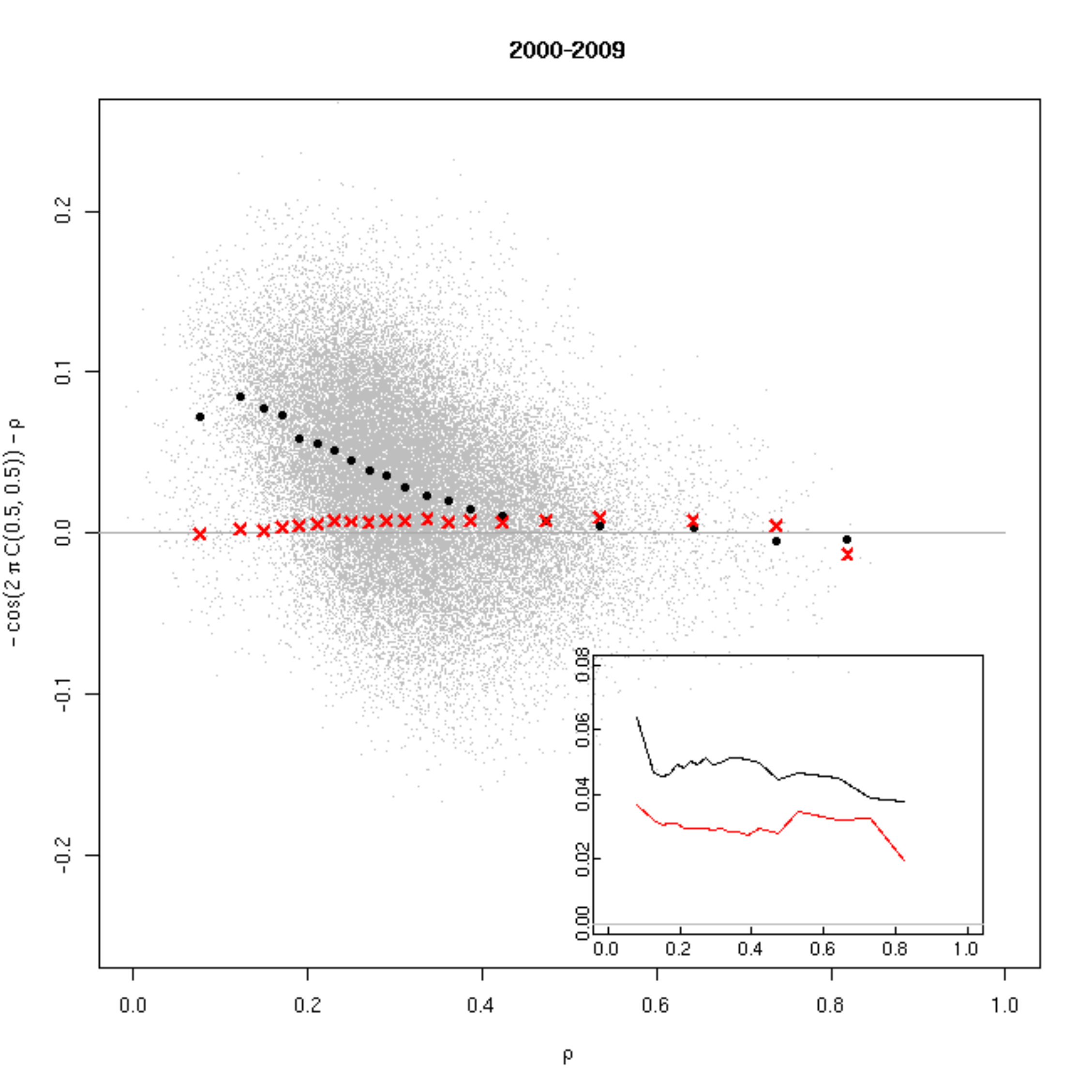}}\hfill
  	\subfigure[]{\label{fig:C05b}\includegraphics[scale=0.33,clip]{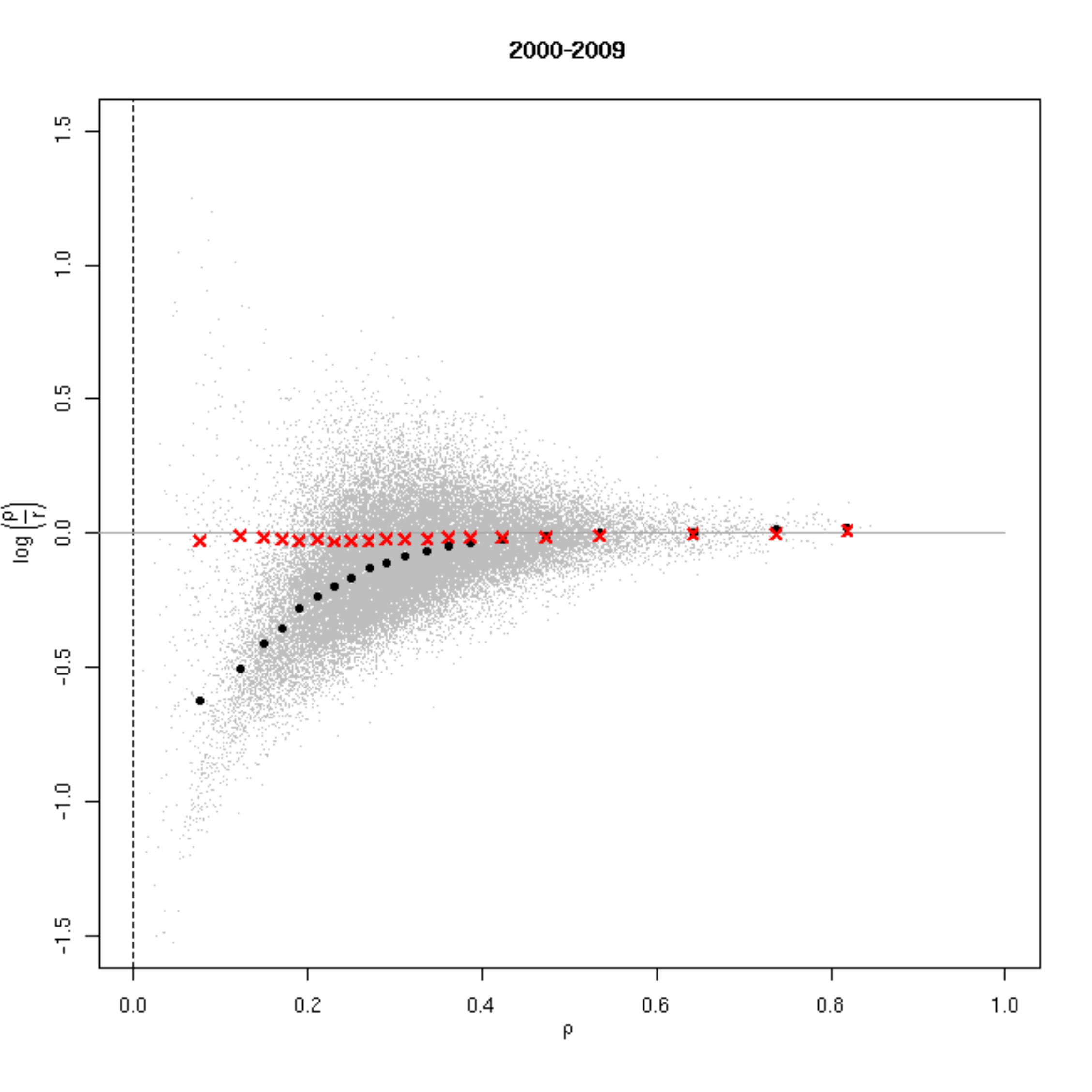}}\hfill
	\caption{{\textbf{Left:} $-(\cos\left(2\pi \cop^*\right)+\rho)$ versus the coefficient $\rho$ of linear correlation.
	The grey cloud is a scatter plot of all pairs of stocks.
	Black dots correspond to the average value of empirical measurements within a correlation bin,
	and red crosses correspond to the numerically generated elliptical (Student, $\nu=5$) data, compatible with the 
	expected value zero. 
	Statistical error are not shown, but their amplitudes is of the order of the fluctuations of the red crosses, 
	so that empirics and elliptical prediction don't overlap for $\rho\lesssim 0.4$.
	This representation allows to visualize very clearly the systematic discrepancies for small values of $\rho$, beyond the average value.
	Inset: the 1\,s.d. dispersion of the scattered points inside the bins. 
	\textbf{Right:} 
	\mbox{$z=\ln \left(\rho/|\cos\left(2\pi \cop^*\right)|\right)$} as a function of $\rho$, with the same data and the same conventions. 
	This shows more	directly that the volatility correlations of weakly correlated stocks is overestimated by elliptical models, whereas strongly 
	correlated stocks are compatible with the elliptical assumption of a common volatility factor.}}
	\label{fig:C05}
\end{figure}

\begin{figure}[!p!]
  \center
  \includegraphics[scale=0.45,clip]{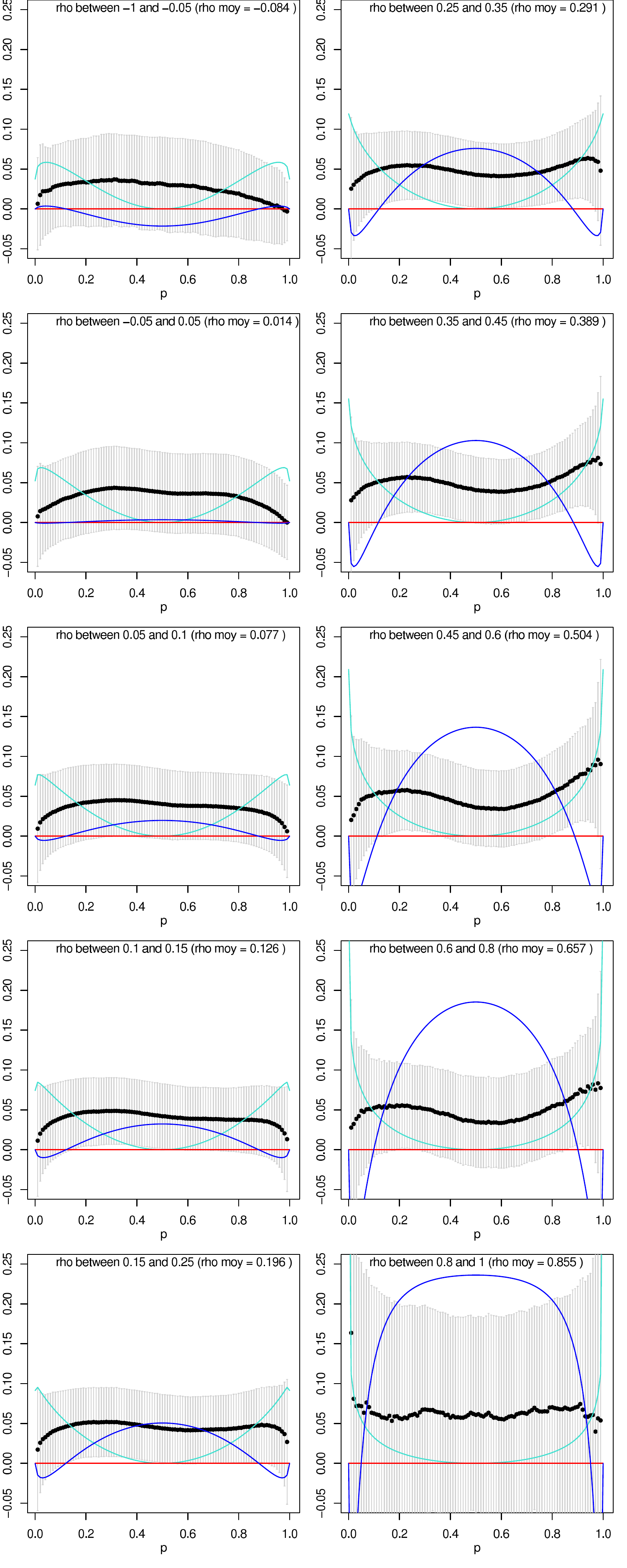}\hfill
  \includegraphics[scale=0.45,clip]{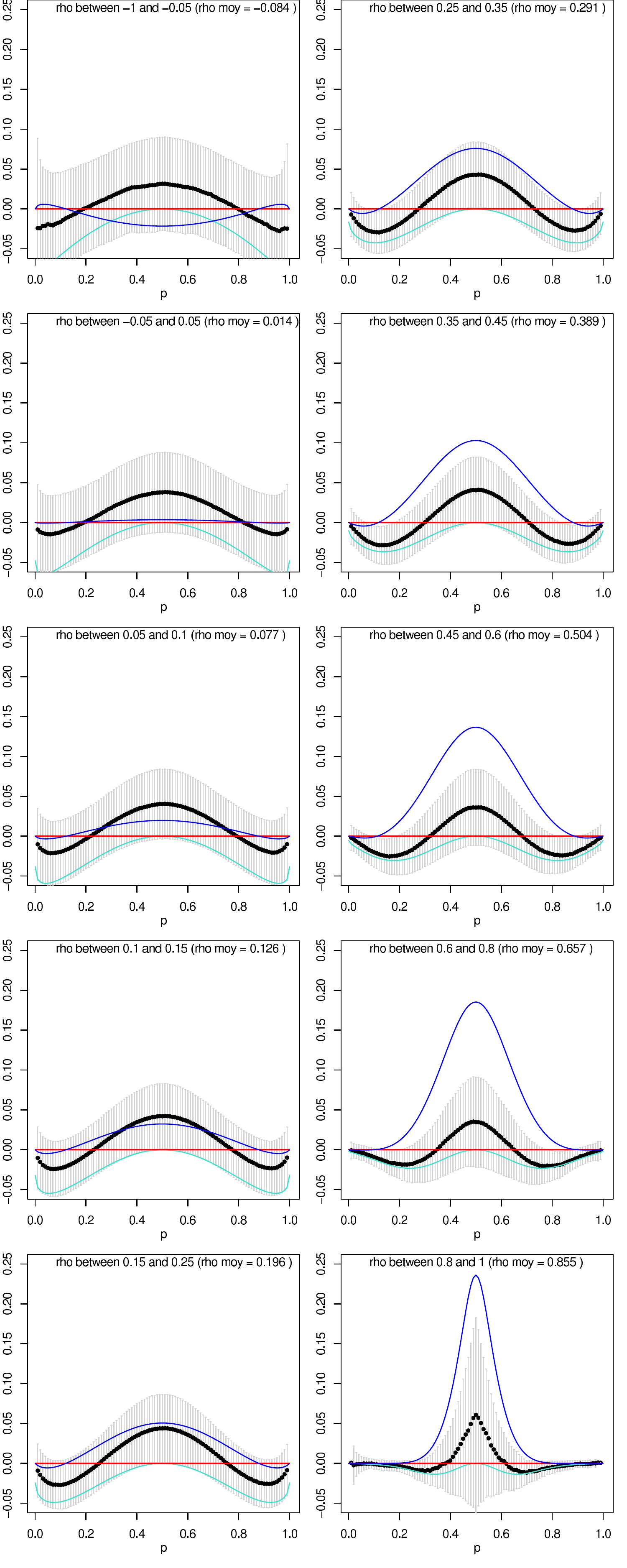}
  \caption{Diagonal (left) and anti-diagonal (right) of the copula vs $p$.
        Real data (2000-2004) averaged in 10 bins of $\rho$. The vertical lines corresponds to 1-sigma dispersion of the results (i.e.\ not 
        the statistical error bar).
	The turquoise line corresponds to an elliptical model (Student with $\nu=5$):
	it predicts too low a value at the center point and too large tail dependences (left and right limits).
	The blue line depicts the behavior of Frank's copula.
	It is worth noticing that the tail dependence coefficient is always measured empirically at some $p<1$, for example $p=0.95$.
	In this case, the disagreement between data and an elliptical model may be accidentally small.
	It is important to distinguish the interpretation of the limit coefficient from that of a penultimate value [6].}
  \label{fig:cop_dev}
\end{figure}

\subsection{Interpretation}

The incentive to focus on bivariate measures comes from the theoretical 
property that all the marginals, including bivariate, of a multivariate elliptical distribution are themselves elliptical.
In turn, a motivated statement that the pairwise distributions are not elliptical is enough to claim the non-ellipticity of the joint multivariate distribution.
This is the basic line of argumentation of the present paper.

In a financial context, elliptical models are basically stochastic volatility models with arbitrary dynamics such that the 
ergodic distribution is $\pdf[\sigma]$ (since we model only single-time distributions, the time ordering is irrelevant here).
The important assumption, however, is that the random amplitude factor $\sigma$ is \emph{equal for all stocks} (or all assets in 
a more general context). In other words, one assumes that the mechanisms leading to increased levels of volatility 
affect all individual stocks identically. There is a unique market volatility factor. Of course, this is a very restrictive 
assumption since one should expect {\it a priori} other, sector specific, sources of volatility. This, we believe, is the 
main reason for the discrepancies between elliptical models and the empirical data reported below.

\subsection{Pseudo-elliptical log-normal model}\label{ssec:pseudo_ell_log}

As we just showed, the fact that the central value of the copula $\cop^*$ does not obey the relation 
$-\cos \left(2\pi \cop^*\right) = \rho$ rules out all elliptical models. 
One possible way out is to consider a model where random volatilities are stock dependent, as explained in Sect.~\ref{ssec:pseudo-ell}. 
Choosing for simplicity a log-normal model of correlated volatilities and inserting \eqref{eq:fd_lognorm} into \eqref{eq:coeffs_pseudo_ell}, the new prediction is:
\be
\rho = - \e^{s^2(c - 1)} \cos\left(2\pi \cop^*\right),
\ee
where $c$ is the correlation of the log-volatilities. 
This suggests to plot \mbox{$z=\ln \left(\rho/|\cos\left(2\pi \cop^*\right)|\right)$} as a function of $\rho$, as shown in Fig.~\ref{fig:C05b}. 
For a purely elliptical model, $z$ should be identically zero, corresponding to perfectly correlated volatilities ($c=1$).
What we observe in Fig.~\ref{fig:C05b}, on the other hand, 
is that $c$ is indeed close to unity for large enough $\rho$'s, 
but systematically decreases as $\rho$ decreases; 
in other words, the volatilities of weakly correlated stocks are themselves weakly correlated. 
This is, again, in line with the conclusion we reached above. 

However, this pseudo-elliptical model still predicts that $\cop^* = \frac14$ for stocks with zero linear correlations, 
in disagreement with the data shown in Fig.~\ref{fig:C05a}. 
This translates, in Fig.~\ref{fig:C05b}, into a negative divergence of $z$ when $\rho \to 0$. 
This finding means that we should look for other types of constructions. 
How can one have at the same time $\rho = 0$ and $\cop^* > \frac14$ ? 
A toy-model, that serves as the basis for a much richer model that we will report elsewhere \cite{remi}, is the following. 
Consider two independent, symmetrically distributed random factors $\psi_1$ and $\psi_2$ with equal volatilities, 
and construct the returns of assets $1$ and $2$ as:
\be
X_{1,2} = \psi_1 \pm \psi_2.
\ee
Clearly, the linear correlation is zero. 
Using a cumulant expansion, one easily finds that to the kurtosis order, the central value of the copula is equal to:
\be
\cop^* = \frac14 + \frac{\kappa_2 - \kappa_1}{24 \pi} + \dots
\ee
Therefore, if the kurtosis $\kappa_1$ of the factor to which both stocks are positively exposed is smaller than the kurtosis $\kappa_2$ of the spread, 
one does indeed find $\cop^*(\rho=0) > \frac14$. 

We will investigate in Chapter~\ref{chap:multifact} an additive model such as the above one, 
with random volatilities affecting the factors $\psi_1$ and $\psi_2$, 
that generalize elliptical models in a way to capture the presence of several volatility modes.

\section{Conclusion}\label{sec:conclusion}

The object of this chapter was to discuss the adequacy of Student copulas, or more generally elliptical copulas, 
to describe the multivariate distribution of stock returns. 
We have shown, using a very large data set, with the daily returns of 1500 US stocks over 15 years, 
that elliptical models fail to capture the detailed structure of stock dependences. 

In a nutshell, the main message elicited by our analysis is that Student copulas provide a good approximation 
to describe the joint distribution of strongly correlated pairs of stocks, but badly miss their target for weakly correlated stocks. 
We believe that the same results hold for a wider class of assets: 
it is plausible that highly correlated assets do indeed share the same risk factor. 
Intuitively, the failure of elliptical models to describe weakly correlated assets can be traced to the 
inadequacy of the assumption of a single ``market''  volatility mode.
We expect that exactly as for returns, several factors are needed to capture sectorial volatility modes and idiosyncratic modes as well. 
The precise way to encode this idea into a workable model that faithfully captures the non-linear 
dependence in stock markets is reported in the next chapter. 
We strongly believe that such a quest cannot be based on a formal construction of mathematically convenient models 
(such as Archimedean copulas that, in our opinion, cannot be relevant to describe asset returns). 
The way forward is to rely on intuition and plausibility and come up with models that make financial sense. 
This is, of course, not restricted to copula modeling, but applies to all quarters of quantitative finance.

\chapter[Volatility dependences: a factor model]{A minimal factor model\\ for volatility dependences\\ in stock returns}\label{chap:multifact}
\minitoc
\section{Introduction}
Dependences among financial assets or asset classes stand at the heart of 
modern portfolio selection theories.
Whatever the (concave) utility of an investor and its risk measure, 
diversification is profitable but optimal diversification is only reached 
if the underlying dependence structure is well understood.

\subsubsection*{Financial risk and portfolio selection: the linear covariances}
The standard Markowitz theory \cite{markowitz1952portfolio,markowitz1959portfolio,bouchaud2003theory} of optimal portfolio design 
aims at finding the optimal weights $w_i$ to attribute to each stock of a pool. 
It assumes that stock returns are correlated random variables $\ret_i$, and that
the optimizing agent has a ``mean-variance'' quadratic utility function in the form $U(\vect{w})=\esp{\vect{\ret}\cdot\vect{w}}-\mu\var{\vect{\ret}\cdot\vect{w}}$.
It hence relies on the linear covariance matrix $\rho=\esp{\vect{\ret}\vect{\ret}^\dagger}$
of the stock returns, and more importantly on its inverse $\rho^{-1}$.
Indeed, with no further constraints (budget, transaction costs, operational risk constraint, prohibition of short selling, etc.), 
the optimal weights are given by
\[
    \vect{w}_\rho^*\propto {\rho^{-1}\vect{g}}=\vect{g}+\mat{V}\left(\Lambda^{-1}-\mathds{1}\right)\mat{V}^\dagger\vect{g}
\]
where $\vect{g}$ is the vector of gain targets for the assets in the basket, 
and $\rho=\mat{V}\Lambda\mat{V}^\dagger$ is the spectral decomposition of the covariance matrix
with $\mat{V}$ being the square matrix of eigenvectors and $\Lambda$ the diagonal matrix of eigenvalues.
Empirical estimates of $\rho$ and its spectrum $\Lambda$ are typically very noisy, 
and cleaning schemes need to be applied before inversion if one wants to avoid 
artificially enhancing the weights of low-risk in-sample modes (i.e.\ with eigenvalue smaller than 1)
that turn into high-risk realized out-of-sample modes.

All this is fairly standard practice now, and several cleaning schemes have been designed, 
in view of modeling either the signal (parametric models, factor models, Principal Components Analysis),
 or the noise (RMT-based \cite{laloux1999noise, laloux2000random, ledoit2004well,potters2005financial,bartz2012directional}).
 
\subsubsection*{Non-linear dependences and risk}
However, it is now established that markets operate beyond the linear regime, that implicitly assumes gaussianity.

For one thing, individual stock returns are known to be non-Gaussian, 
and moments beyond the mean and variance have gained considerable interest (e.g.\ the excess kurtosis, or low-moment estimates thereof).
But more importantly, stock returns are \emph{jointly} not Gaussian: 
the structure of dependence between pairs of stocks is not compatible with the Gaussian copula,
and as a consequence the penalty in the utility function should be more subtle than just
the portfolio variance and include non-linear measures of risk (like tail events, quadratic correlations, etc.) in order to better fit the agent's risk aversion profile.
Only in a multivariate Gaussian setting do these non-linear dependences resume to the linear correlations.

Non-linear dependences are also very important in the pricing and risk management of structured products and portfolios of derivatives.
For example, the payoff of a hedged option has a V-shape with linear asymptotes and quadratic core, see Fig.~\ref{fig:payoff_option}.
A portfolio of several such hedged options has thus a variance characterized by the absolute and quadratic correlations of the underlying stocks (gamma risk).
These correlations of amplitudes are noisier than linear correlations, 
whence the need for a reliable model of both linear and non-linear dependences.

\begin{figure}
    \center
    \includegraphics[scale=.6,trim=0 15 0 0,clip]{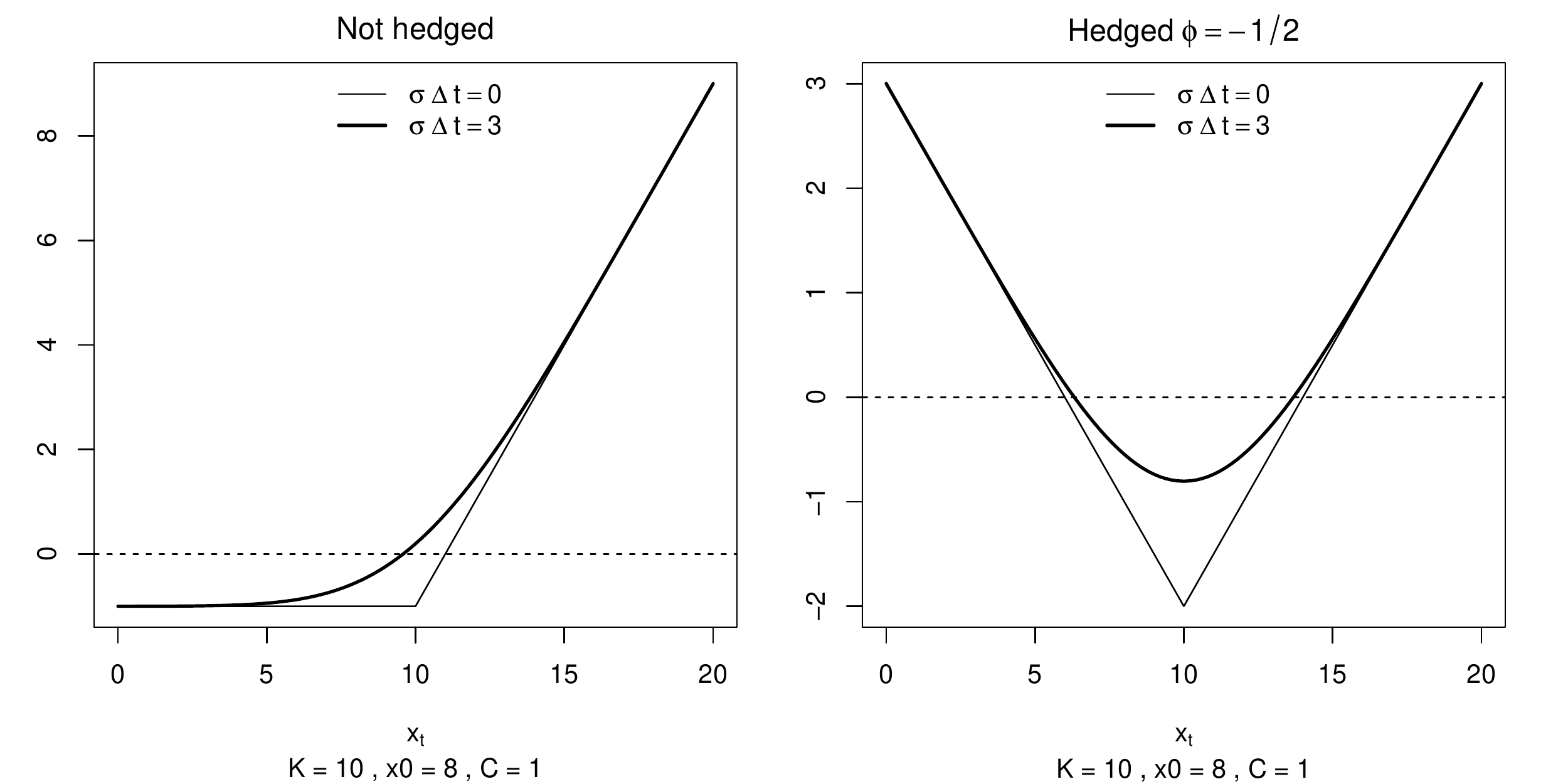}
    \caption{Expected payoff of an option as a function of the current price of the underlying stock.
             \textbf{Left:} unhedged; \textbf{Right:} hedged by short selling $\phi=1/2$ shares of the underlying.
             The illustration is for a call option of strike $x_K=10$, price $\mathcal{C}=1$ on a stock of initial price $x_0=8$ following a Bachelier diffusion with volatility $\sigma$.
             The thin line is the payoff at expiry ($\sigma\,\Delta t=0$) and the thick curve is the expected value of the payoff before expiry ($\sigma\,\Delta t=3$). }
    \label{fig:payoff_option}
\end{figure}

\paragraph{}
In Chapter~\ref{chap:IJTAF}, we showed that the daily returns of stocks are not
exposed to a unique mode of volatility affecting all individuals at once.
We thus ruled out all models with a single stochastic volatility $\sigma$, of the form
\begin{equation}\label{eq:ellipt}
    \ret_i=\sigma\,\epsilon_i,
\end{equation}
with jointly Gaussian (and correlated) residuals $\epsilon_i$'s.
This, we argued, was revealing a finer structure in the non-linear dependences,
and opened the way for a description taking into account several modes of volatility.
However, we also showed that any description in the form of individual volatilities 
\begin{equation}\label{eq:pseudo_ellipt}
    \ret_i=\sigma_i\,\epsilon_i,
\end{equation}
with possible dependences between the $\sigma_i$'s,
would not be able to explain successfully the departure of $\cop(\frac12,\frac12)$ from the elliptic prediction either.
We called models like Eq.~\eqref{eq:pseudo_ellipt} ``pseudo-elliptical'', 
with one typical example being the multiplicative decomposition of $\sigma_i$
onto the market volatility $\sigma$, a sectorial volatility $\hat\sigma_s$ (where stock $i$ belongs to sector $s$), 
and a residual volatility $\widetilde\sigma_i$:
\[
    \sigma_i=\sigma\hat\sigma_s\widetilde\sigma_i.
\]
It hence turned out that tuning only the radial amplitudes of a Gaussian vector
would not be enough to reproduce some non-linear dependences relying on the \emph{ranks} of the realizations in the data.
Instead, additive non-Gaussian factors are thought to be able to generate anomalous copula values,
because of the interplay of factor kurtosis and residual kurtosis,
as motivated by the toy model for $\cop(\frac12,\frac12)>0$ with uncorrelated variables,
presented in the previous chapter, page~\pageref{ssec:pseudo_ell_log}.


All pseudo-empirical models have a simple prediction linking the medial copula to the coefficient of linear correlation, 
see the discussion in page~\pageref{page:invariantbeta}:
\begin{equation}
    \cop(\tfrac12,\tfrac12)=\frac{1}{4}+\frac{1}{2\pi}\arcsin\,\rho
\end{equation}
Said differently, the effective correlation%
\footnote{The superscript (B) stands for ``Blomqvist'', as $\rhoB$ is related to 
Blomqvist's beta coefficient. 
The properties of $\rhoB$ were discussed in Chapter~\ref{part:partI}.\ref{chap:statdep},
see in particular the definition~\eqref{eq:def_rhoB} and the comments in page~\pageref{eq:def_rhoB}.}%
\begin{equation}\label{eq:hatrho}
    \rhoB\equiv\cos\!\left(2\pi \cop(\tfrac12,\tfrac12)\right)
\end{equation}
is equal to $\rho$ for these models, whereas in general it is not.

\subsubsection*{Summary of empirical stylized facts to be addressed by the model}
More generally, any model for a joint description of stock returns must 
address the stylized facts:
\begin{itemize}
    \item Generate fat-tailed return series, and even non-Gaussian factors and residuals
    \item Reproduce the structure of linear correlations with a reduced number of factors. 
          In particular exhibit a market mode of linear dependences.
    \item Allow for a dependence between the volatilities of the residuals and the volatilities of the factors \cite{cizeau2001correlation,allez2011individual}.
    \item Reproduce the anomalous copula structure (diagonals, and in particular medial point), see Fig.~\ref{fig:emp_cop}.
          It was already noted in chapter~\ref{chap:IJTAF} that stock pairs with a high correlation are ``more elliptical'',
          and that in periods of high turmoil like the financial crisis (b) stock pairs are both more correlated and more elliptical,
          revealing a strong exposure to a common mode of volatility.
          \begin{figure}
          \center
          \subfigure[2000--2004]{\includegraphics[scale=.38,trim=0 0 0 25,clip]{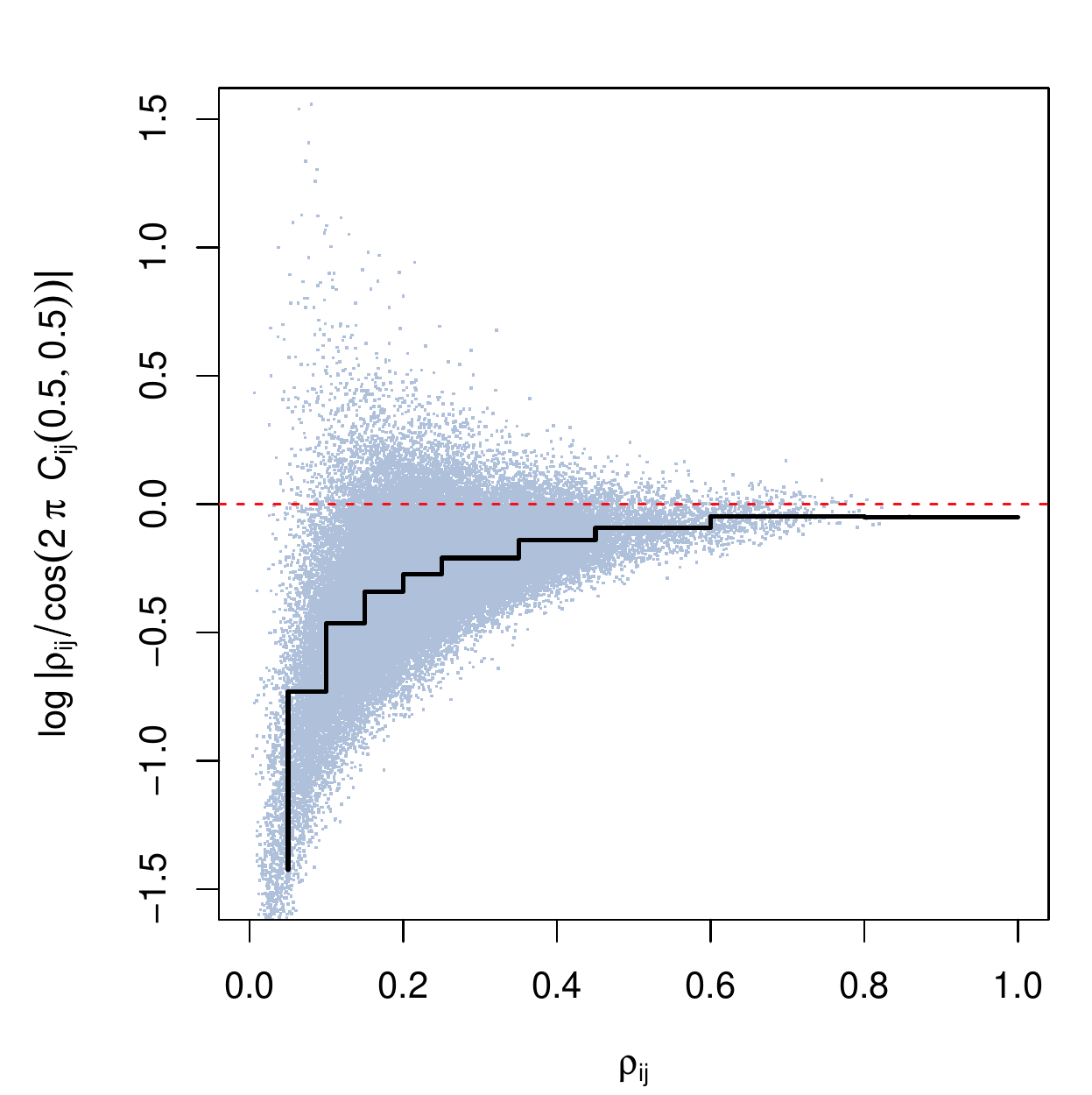}}
          \subfigure[2005--2009]{\includegraphics[scale=.38,trim=0 0 0 25,clip]{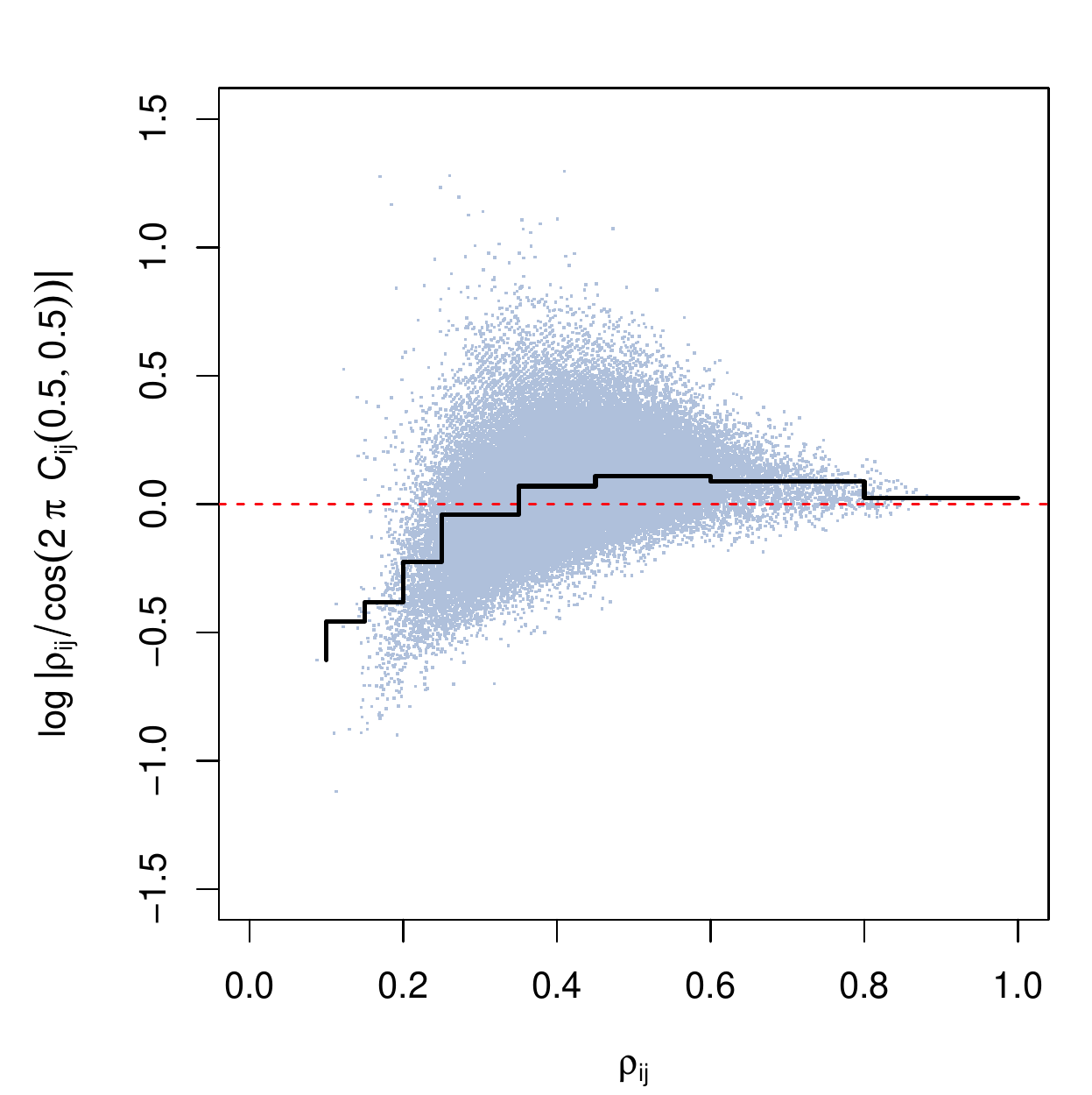}}
          \subfigure[2009--2012]{\includegraphics[scale=.38,trim=0 0 0 25,clip]{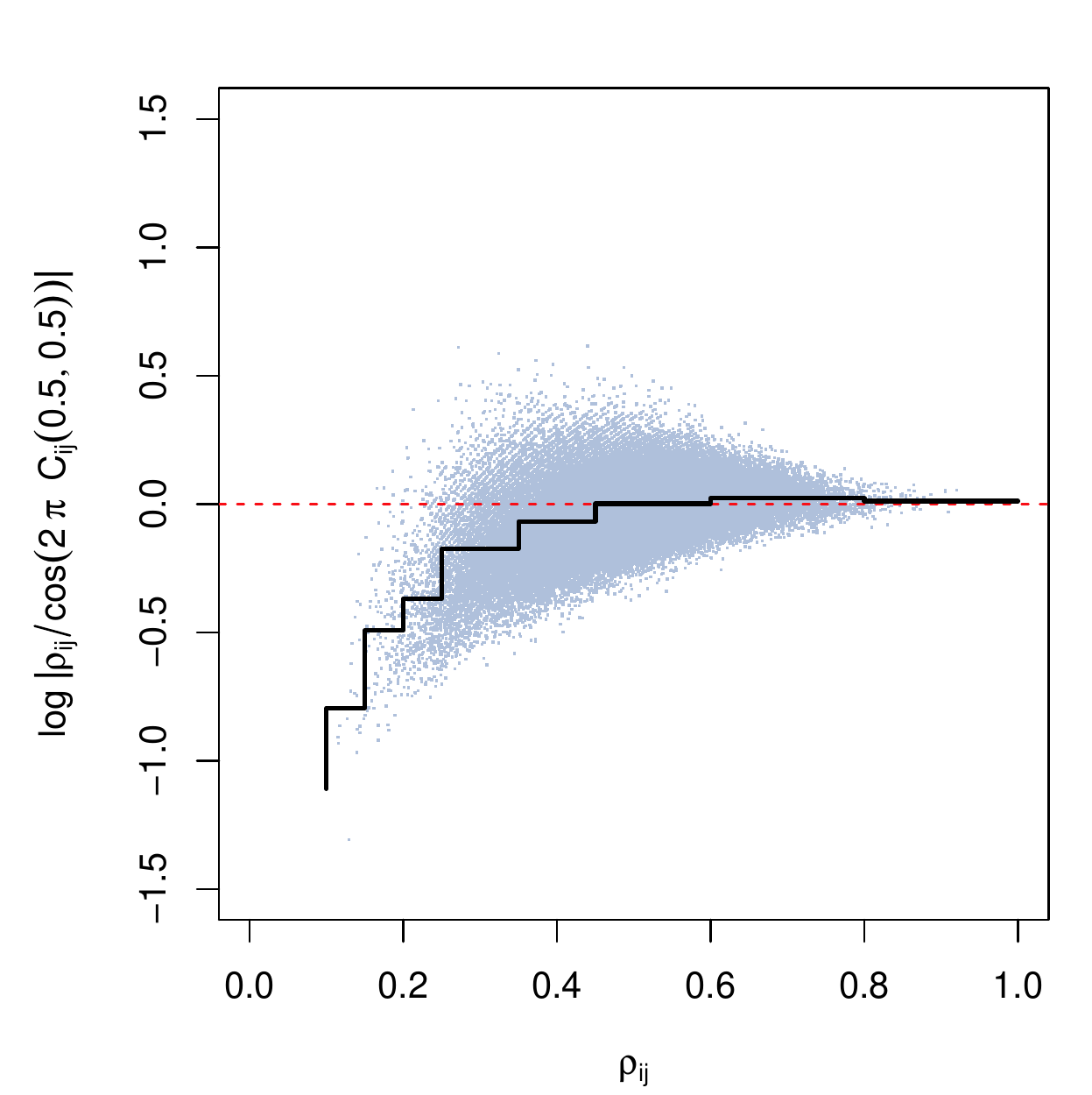}}
          \caption{Empirics: scatter plot of $\ln|\rho/\rhoB|$ vs $\rho$ for each stock pair, see Eq.~\eqref{eq:hatrho}. 
                   This figure quantifies the departure of the medial point of the copula $\cop(\tfrac12,\tfrac12)$
                   from the elliptical benchmark, for which the prediction is a straight horizontal line at 0 (dashed red).}
          \label{fig:emp_cop}
          \end{figure}
    \item Predict the structure of non-linear (typically quadratic) correlations with a reduced number of new parameters, 
          in order to clean the empirically measured (and much noisy) corresponding dependence coefficients.
\end{itemize}
We show below, that this can be achieved with a factor structure in the factor-volatilities themselves, 
with again a common mode and idiosyncratic volatility modes for the linear factors.

\subsubsection*{Data set}
We study daily close-to-close log-returns of stock prices of companies 
that are present in the S\&P500 index during the whole of the period studied.
We will be considering three periods for the empirical study and the model calibration: 
before the financial crisis (Jan, 2000 -- Dec, 2004);
during the financial crisis (Jan, 2005 -- Dec, 2009);
after  the financial crisis (Aug, 2009 -- Dec, 2012).
A longer dataset is used for the sliding windows procedure of In-sample/Out-of-sample testing 
in the last section: there we consider the ten years period 2000--2009.

It is enlightening to group the companies according to their sector of activity,
in order to see if patterns appear. 
When needed, we will make use of Bloomberg's classification as summarized in Tab.~\ref{tab:sectors}.
\begin{table}
    \center
    \begin{tabular}{lc||c|c|c||c|}
        Bloomberg sector            & Code          & 2000--04 & 2005--09& 2009--12 & 2000--09\\\hline\hline
        Communications              & \# 3          &  33    &  25     &  29      &  18\\
        Consumer, Cyclical          & \# 4          &  60    &  49     &  33      &  40\\
        Consumer, Non-Cyclical      & \# 5          &  67    &  75     &  75      &  53\\
        Diversified                 & \# 6          &   0    &   0     &   1      &   0\\
        Energy                      & \# 7          &  19    &  21     &  34      &  15\\
        Financial                   & \# 8          &  57    &  55     &  75      &  37\\
        Industrial                  & \#11          &  51    &  50     &  50      &  42\\
        Technology                  & \#13          &  38    &  43     &  35      &  33\\
        Utilities                   & \#14          &  27    &  27     &  28      &  24\\\hline
        Total number of firms ($N$) &               & 352    & 345     & 360      & 262\\\hline
        Total number of days ($T$)  &               &1255    &1258     & 755      &2514
    \end{tabular}
    \caption{Economic sectors according to Bloomberg classification, 
             with corresponding number of individuals for each period.}
    \label{tab:sectors}
\end{table}


\subsubsection*{Outline}
This chapter is made of four sections. 
In Section~2, we study the linear correlations of pairs of stocks and 
discuss the design and estimation of a factor model for their description.
The factors and residuals generated by the calibration of the model are
studied in Section~3, and motivate  the specification of the volatility content
of the model that we present in Section~4.
The resulting non-Gaussian model is calibrated, and Section~5 is dedicated to an Out-of-sample stability analysis,
validating its usefulness for the description of non-linear dependences.

\section{Linear factors}\label{sec:MFlin}

We study a simple one-level (not hierarchical) model for the joint description of the stock returns $\ret_i$ of $N$ firms, 
as a combination of $M$ shared factors $f_k$:
\begin{equation}\label{eq:MODEL}
    \ret_i=\sum_{k=1}^M \Wei_{ki}f_k+e_i.
\end{equation}
The weight $\Wei_{ki}$ parameterizes the linear exposure of stock $i$ to factor $k$.
At this stage, we do not yet specify the statistical properties of the factors $f_k$ and residuals $e_i$,
except that we impose their linear uncorrelatedness:
\begin{subequations}\label{eq:ortho_f_e}
\begin{align}
    \vev{f_kf_l}&=\delta_{kl}\\
    \vev{e_ie_j}&=\delta_{ij}\,\Big(1-\sum_l\Wei_{li}^2\Big)\\
    \vev{f_ke_j}&=0.
\end{align}
\end{subequations}
This way, the residuals can be understood as idiosyncrasies and all the linear dependence is 
supposed to be taken into account by the factors.
This, as we will shortly see, is already not trivial, 
for statistical factor models usually rely on the Principal Components Analysis (PCA)\nomenclature{PCA}{Principal components analysis}%
 decomposition which does not generate orthogonal residuals.

\subsection{Linear correlations}\label{sec:lincor_MF}
The predictions of the model in terms of covariances of the returns $\ret_i$ do not need additional assumptions,
and only depend on the matrix of linear weights $\Wei$: 
\begin{equation}\label{eq:lincor_predict}
    \rho_{ij}=\vev{\ret_i\ret_j}=\begin{cases}(\Wei^{\dagger}\Wei)_{ij}&, i\neq j\\1&,i=j\end{cases}
\end{equation}

When calibrating the model on real data, 
we want to find the $M$ most relevant {\emph{uncorrelated}} and {\emph{common}} unit-variance factors $\mat{F}$ $(T\times M)$,
    and the exposures $\Wei$ $(M\times N)$ of every stock to every factor, such that 
    \begin{equation}\label{eq:MODEL_mat}\tag{\ref{eq:MODEL}$'$}\Ret=\mat{F}\Wei+\mat{E},\end{equation}
    and at the same time, $\mat{E}$ be orthogonal $T\times(N\!-\!M)$.
Notice that the factors series $\mat{F}_{tk}$ are \emph{not} inputs of the estimation problem, 
but rather are obtained as a byproduct. 
This is at contrast with the usual econometric determination of elasticities $\Wei$ in linear regressions
of the explanatory (thus known) variables $\mat{F}$ onto the explained variables $\Ret$.

    Positing that the model \eqref{eq:MODEL_mat} is a faithful representation of real returns, 
we are able to uncover the factor structure and calibrate the corresponding weights $\Wei$.
Then, the estimate of the correlations \eqref{eq:lincor_predict} is a cleaned version of the brute force and noisy sample covariances $\frac{1}{T}\Ret^\dagger\Ret$.

\subsubsection*{A proxy: the Principal Components Analysis (PCA)} 
    Diagonalization of the sample correlation matrix yields
    \[
        \frac{1}{T}\Ret^\dagger \Ret = \mat{V}\Lambda \mat{V}^\dagger
    \]
    where $\Lambda$ is the diagonal matrix of eigenvalues, and the columns of $\mat{V}$ are the corresponding eigenvectors.
    Hence, there always exist (linearly orthogonal) factor series $\widetilde{\mat{F}}$ such that the return series $\Ret$ can be decomposed as
    \begin{equation}\label{eq:PCA_sol}
        \Ret=\widetilde{\mat{F}}\Lambda^{\frac12}\mat{V}^\dagger\qquad\text{where}\qquad\frac{1}{T}\widetilde{\mat{F}}^\dagger\widetilde{\mat{F}}=\mathds{1}_N.
    \end{equation}
    In order to reconciliate this decomposition in terms of statistical uncorrelated modes $\widetilde{\mat{F}}$
    with the factors $\mat{F}$ of the model, the PCA solution \eqref{eq:PCA_sol} needs to be identified with Eq.~\eqref{eq:MODEL_mat}.
    
    The factors should explain as much as possible of the returns covariances (thus of the portfolio variance), 
    leaving only idiosyncratic residual volatility to be explained by the $e_i$'s.
    Said differently, only those eigenvalues having a significant signal-to-noise ratio (compared f.ex.\ to an RMT benchmark)
    should be kept in the identification of the spectral decomposition with the factor model.
    This procedure is known as ``eigenvalue clipping'', and typically the most relevant eigenvalues are the largest ones
    and the smallest ones, on the right and the left of the noise bulk in the empirical spectrum.
    Since the smallest eigenvalues do not contribute much to the portfolio variance despite being statistically significant, 
    we will focus on the modes with large eigenvalues.
    Ordering the eigenvalues in decreasing order, and splitting the first $M$ 
    (subscript $_{M|}$) from the last $(N-M)$ (subscript $_{|N\!-\!M}$),
    it is straightforward to obtain the identification
    \begin{equation}\label{eq:X_PCA}
        {\Wei_{\text{PCA}}=\Lambda_{M|}^{\frac12}\mat{V}_{M|}^\dagger}.
    \end{equation}
    At this stage, the series of factors can be formally identified as the first $M$ spectral modes
    \begin{equation}\label{eq:id_f}
        \mat{F}=\widetilde{\mat{F}}_{M|}=(\Ret\mat{V}\Lambda^{-\frac12})_{M|}
    \end{equation}
    such that indeed $\frac{1}{T}\mat{F}^\dagger \mat{F}=\mathds{1}_M$.
    But {the residuals are \emph{not} orthogonal}:
    \begin{equation}\label{eq:id_e}
        \mat{E}=\widetilde{\mat{F}}_{|N\!-\!M}\Lambda_{|N\!-\!M}^{\frac12}\mat{V}_{|N\!-\!M}^\dagger\quad\text{s.t.}\quad \frac{1}{T}\mat{E}^\dagger \mat{E}=\mat{V}_{|N\!-\!M}\Lambda_{|N\!-\!M}\mat{V}_{|N\!-\!M}^\dagger
    \end{equation}
    and thus cannot be understood as idiosyncrasies of the returns series.

\subsubsection*{Factor weights}
The PCA can alternatively be thought of as the solution of 
\[
    \frac{1}{T}\Ret^\dagger \Ret=\Wei^\dagger \Wei\quad\text{with}\quad \Wei\Wei^\dagger\text{ diagonal}
\]
where importantly $\Wei$ is full rank, and the $M$ modes with largest amplitude $(\Wei\Wei^\dagger)_{kk}$ are kept 
\emph{after} the equation is solved.
When one rather wants to optimize the contribution of \emph{only} $M$ modes, 
one can try to minimize a distance between the LHS and the RHS with a matrix of weights of rank $M$:
\[
    \argmin \left\|\frac{1}{T}\Ret^\dagger \Ret-\Wei^\dagger \Wei\right\|
\]
In this case an exact solution does not exist as there are $NM$ variables for $N(N+1)/2$ equations, 
yet one does not throw away information of subsequent modes, so that in the end the information
content of that solution is typically bigger than that of the first $M$ columns of $\Wei_{\text{PCA}}$.

On the top of this, the estimation of the $\Wei$'s can be further improved if one \emph{knows}
the standard deviation of the series of returns $\Ret$, and there only remains to estimate
the correlations (rather than the covariances).
After renormalization, the matrix $\frac{1}{T}\Ret^\dagger \Ret$ has only ones on the diagonal,
and since this is also what the model predicts according to Eq.~\eqref{eq:lincor_predict}, 
the optimal estimation can be performed on the off-diagonal content only:
\begin{equation}\label{eq:offdiag_content}
    \argmin \left\|\frac{1}{T}\Ret^\dagger \Ret-\Wei^\dagger \Wei\right\|_{\text{off-diag}}.
\end{equation}
By lowering the number of equations to $N(N-1)/2$ while keeping the same number of variables $NM$,
one improves the identifiability of the model parameters.

The underlying view of this estimation method as of the collective dynamics is that
 variances and correlations are better 
estimated separately than jointly through the covariance matrix.
As soon as one claims having a ``good'' estimator of the variances, 
one is better off estimating the covariances of the normalized time series,
rather than having estimated the covariances in the first place. 
%
Notice that  orthogonality of the lines of $\Wei$ obtained with this method is not granted,
as opposed to the PCA, but what matters is rather uncorrelatedness of the factors.

\subsubsection*{Linear model calibration on financial data}
\begin{figure}
    \center
    \subfigure[2000--2004]{\includegraphics[scale=.75,trim=25 25 10 25,clip]{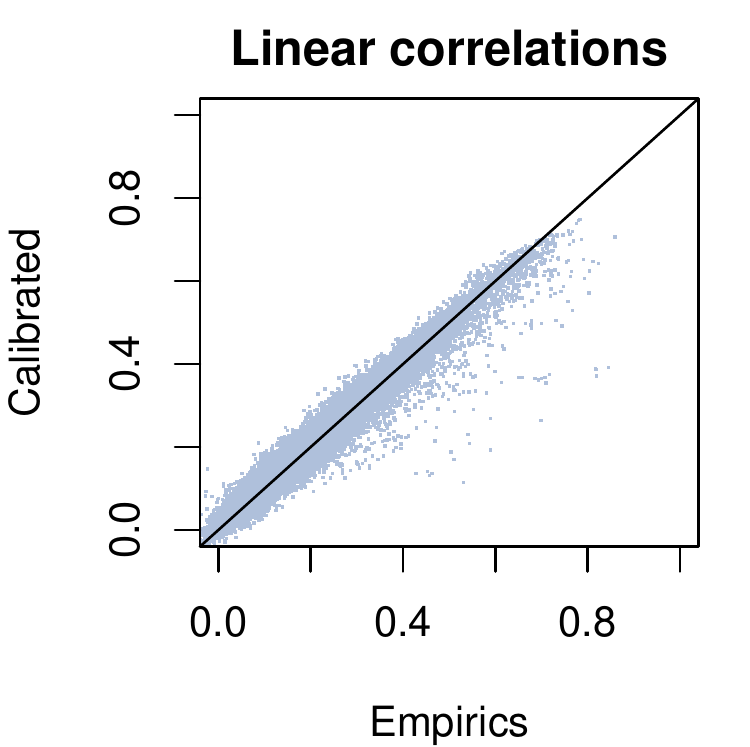}}
    \subfigure[2005--2009]{\includegraphics[scale=.75,trim=25 25 10 25,clip]{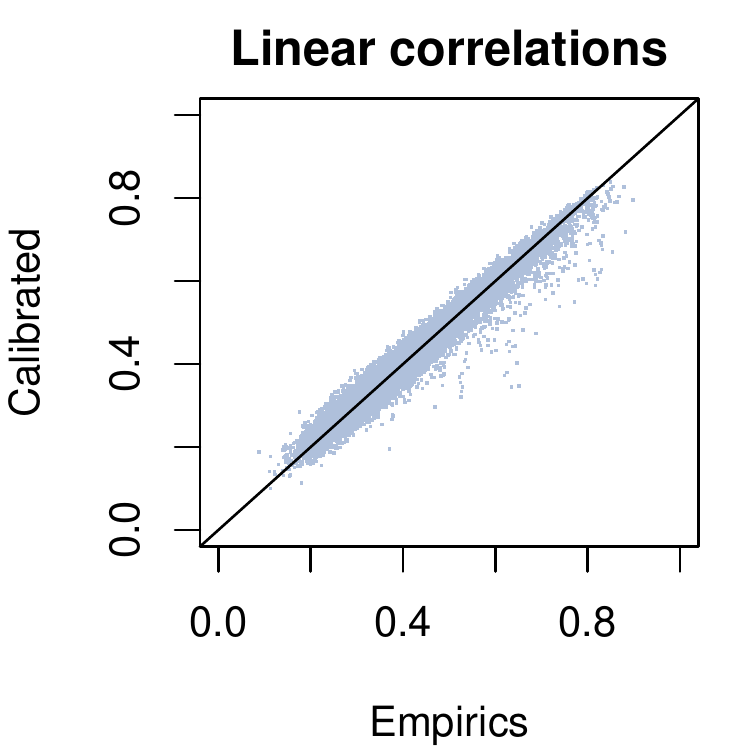}}
    \subfigure[2009--2012]{\includegraphics[scale=.75,trim=25 25 10 25,clip]{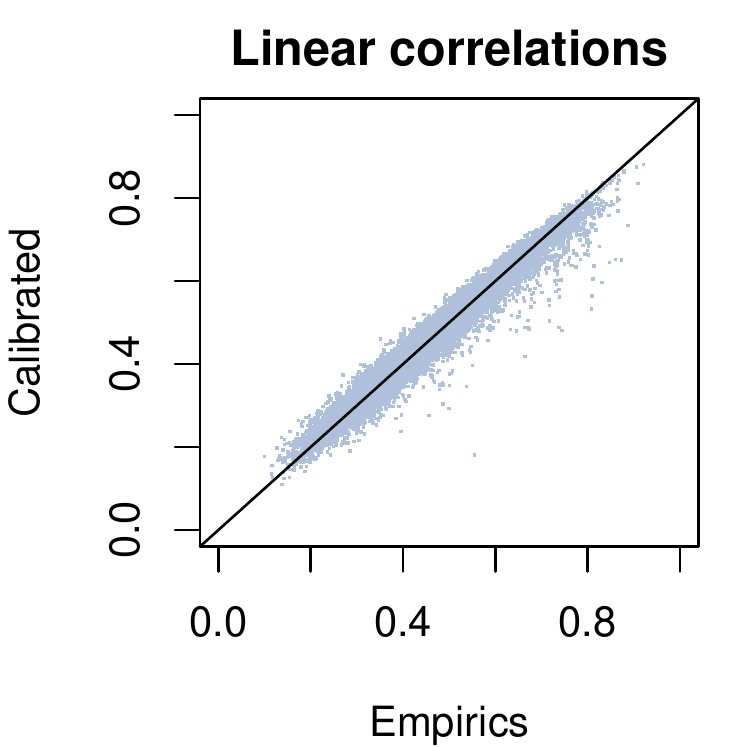}}
    \caption{Linear correlations: calibrated ($M=10$) vs sample correlations.
             Outliers that stand far from the identity line (in particular for the period (a)) are strongly correlated pairs,
             typically in the same sector. Such pair correlations are poorly taken into account by factor models.}\label{fig:cal_lin}
\end{figure}
The linear weights $\Wei$ are calibrated on the three data sets by solving Eq.~\eqref{eq:offdiag_content} 
with $M=10$, corresponding to the number of Bloomberg sectors plus one. 
The resulting correlations $(\Wei^\dagger\Wei)_{\ij}$ are shown as a scatter plot versus the 
sample correlation $\frac{1}{T}(\Ret^\dagger\Ret)_{\ij}$ on Fig.~\ref{fig:cal_lin}.
Although the calibration seems to provide a ``good fit'' (whatever this might mean),
a precise characterization of the benefit of using the method discussed above over the PCA
is only achieved by a criterion of model selection, or by out-of-sample risk measurements.

We discuss this in detail in Sect.~\ref{sec:stability}, where we use Markowitz' portfolio selection framework
to compute the \emph{realized} risk of optimal baskets of stocks.
We show there that when using the proposed calibration method, the out-of-sample risk 
can be more than 25\% lower than PCA%
\footnote{The PCA method is also known as eigenvalue clipping in the context of cleaning schemes for matrix inversion,
 and is one of the best generic cleaning scheme known so far, see Ref.~\cite{potters2009financial}}, 
with an in-sample risk almost unchanged, see Fig.~\ref{fig:ISOS_lin} below.

\subsubsection*{Orthogonal residuals}
As already said, the byproduct of the PCA identification \eqref{eq:id_f} is the matrix $\mat{E}$ of overlapping residuals given in Eq.~\eqref{eq:id_e}.
When the weights $\Wei$ are known, it is however possible to design a different identification scheme
that generates orthogonal residuals. 
Consider indeed the date-by-date regression
\begin{equation}\label{eq:dbdR}
    \Ret_{t\cdot}=\mat{F}_{t\cdot}\Wei+\mat{E}_{t\cdot}
\end{equation}
where the explained variables are the input returns, the explanatory variables are the (freshly estimated) weights $\Wei$
and the regression parameters to be estimated are the factors.
Then a GLS\nomenclature{GLS}{Generalized least squares} solution of the regression yields the wanted factors and residual series.
It is only {approximate} in the sense that $\frac{1}{T}\mat{E}^\dagger \mat{E}$ is only ``as close as can be'' to a diagonal matrix,
and $\frac{1}{T}\mat{F}^\dagger \mat{F}$ is only approximately $\mathds{1}_M$.

In the next section, we discuss in details the properties of the factors series $\mat{F}_{t\cdot}$ 
and residual series $\mat{E}_{t\cdot}$, obtained after calibration of the model on the datasets.

\section{Properties of the reconstructed factors and residuals}\label{sec:MFspectral}
After the calibration of the weights $\Wei$, we are able to perform 
a statistical analysis on the reconstructed series of factors and residuals.
Because we have designed a calibration procedure improving over the PCA,
the obtained factors are not pure eigenmodes of the correlation matrix.
Still, we can interpret them looking at their statistical signatures.
    In particular, the first factor, $f_1$ is clearly a market mode (the ``index'', a collective mode of fluctuations).
    Indeed, introducing the average daily return $\overline{\MakeUppercase{\ret}}=\frac{1}{N}\Ret\vect{1}$,
    we find that it has an almost perfect overlap with the first factor: $\text{Cor}(\mat{F}_{t1},\overline{\MakeUppercase{\ret}}_t)=99.9\%$ !
    



The non-linear properties of the reconstructed factors and residuals can be investigated
through the higher-order correlations
\begin{subequations}\label{eq:allcorabs}
\begin{align}
    \label{eq:facfac}\frac{1}{p^2}\ln\frac{\vev{|\mat{F}_{tk}\mat{F}_{tl}|^p}}{\vev{|\mat{F}_{tk}|^p}\vev{|\mat{F}_{tl}|^p}}\\
    \label{eq:resres}\frac{1}{p^2}\ln\frac{\vev{|\mat{E}_{ti}\mat{E}_{tj}|^p}}{\vev{|\mat{E}_{ti}|^p}\vev{|\mat{E}_{tj}|^p}}\\
    \label{eq:facres}\frac{1}{p^2}\ln\frac{\vev{|\mat{F}_{tk}\mat{E}_{tj}|^p}}{\vev{|\mat{F}_{tk}|^p}\vev{|\mat{E}_{tj}|^p}}
\end{align}
\end{subequations}
for any value of $p>0$. As an example, we show in Fig.~\ref{fig:matelems} the off-diagonal matrix elements of the 
factor-factor correlations \eqref{eq:facfac} for a value of $M=10$ 
(again, the factors series are obtained from the procedure described in the previous section). 
On each figure corresponding to a value of $k$, the curves represent the values of Eq.~\eqref{eq:facfac}
for all $l\neq k$, as a function of $p$.
We observe that the curves depart from the Gaussian benchmark for which the correlation 
computed and shown would all be 0.
Anticipating over the forthcoming theoretical description,
we also see that the curves are not compatible with factors having a lognormal volatility,
in which case all the curves would be exactly horizontal.
The concavity of the curves is a signature of non-Gaussianity in log-volatilities, 
while their splitting (in particular as $p\to 0$) reveals a complex structure 
that we will uncover using a model in Section~\ref{sec:modeling_vol} below.

\begin{figure}
    \center
    \includegraphics[scale=.55,trim=  0 165  710 0,clip]{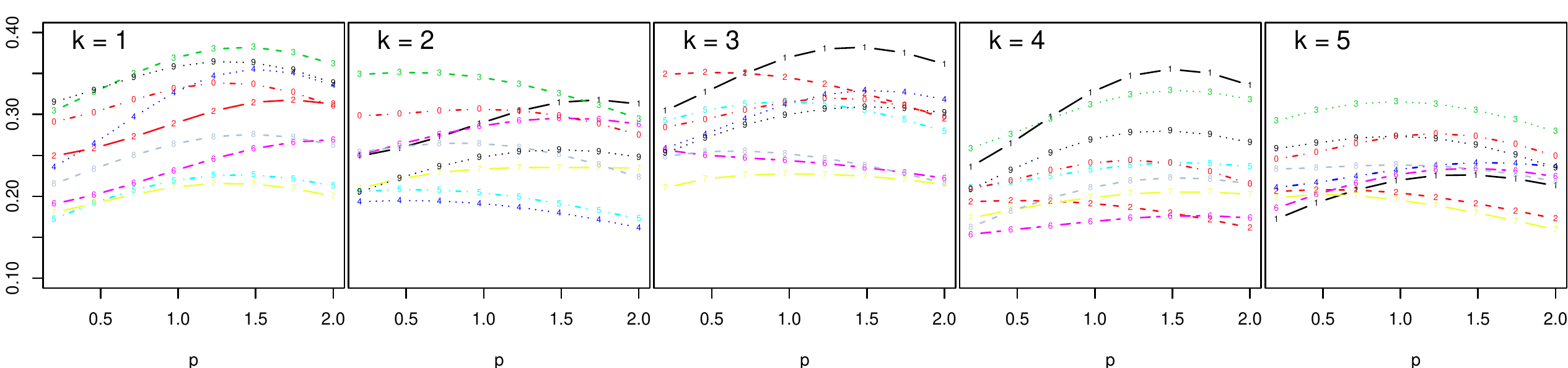}
    \includegraphics[scale=.55,trim=  0 165 1416 0,clip]{MultiFact/pict_2000-2009/emp_logcorabs_fact}%
    \includegraphics[scale=.55,trim=733 165    0 0,clip]{MultiFact/pict_2000-2009/emp_logcorabs_fact}
    \caption{Visual representation of the estimated factors-factors dependences, for $M=10$, on the period 2000--2009. 
    The correlation~\eqref{eq:facfac} is shown for every factor $k$ with all other factors $l\neq k$,
    as a function of the order $p$ of the absolute moment considered.}
    \label{fig:matelems}
\end{figure}

This ``naked eye'' analysis cannot be reproduced for large matrices like the 
factors-residuals and residuals-residuals correlations,
for which it turns much more convenient to use a spectral approach.
Spectral decomposition of the corresponding matrices of coefficients reveals 
eigenmodes of amplitude fluctuations.
Fig.~\ref{fig:spectra} illustrates the content of the spectra for
the factor-factor coefficients (top), the residual-residual coefficients (middle),
and the singular values of the rectangular matrix of the factors-residuals cross-correlations (bottom),
for the period 2000--2009.
The first three values of each spectrum are shown in dark, while the others are represented as a grey bulk.
Notice how the bulk widens for the res-res spectrum as the moment $p$ increases:
this is due to the high dimensionality of this matrix, and the large resulting noise.

\begin{figure}
    \center
    \subfigure[Singular values spectra (vertical axis), for several values of $p$ (horizontal axis).
               The three largest values are in black, while the remaining values are shown as a grey bulk.]{\label{fig:spectra}\includegraphics[scale=.75,angle=-90             ]{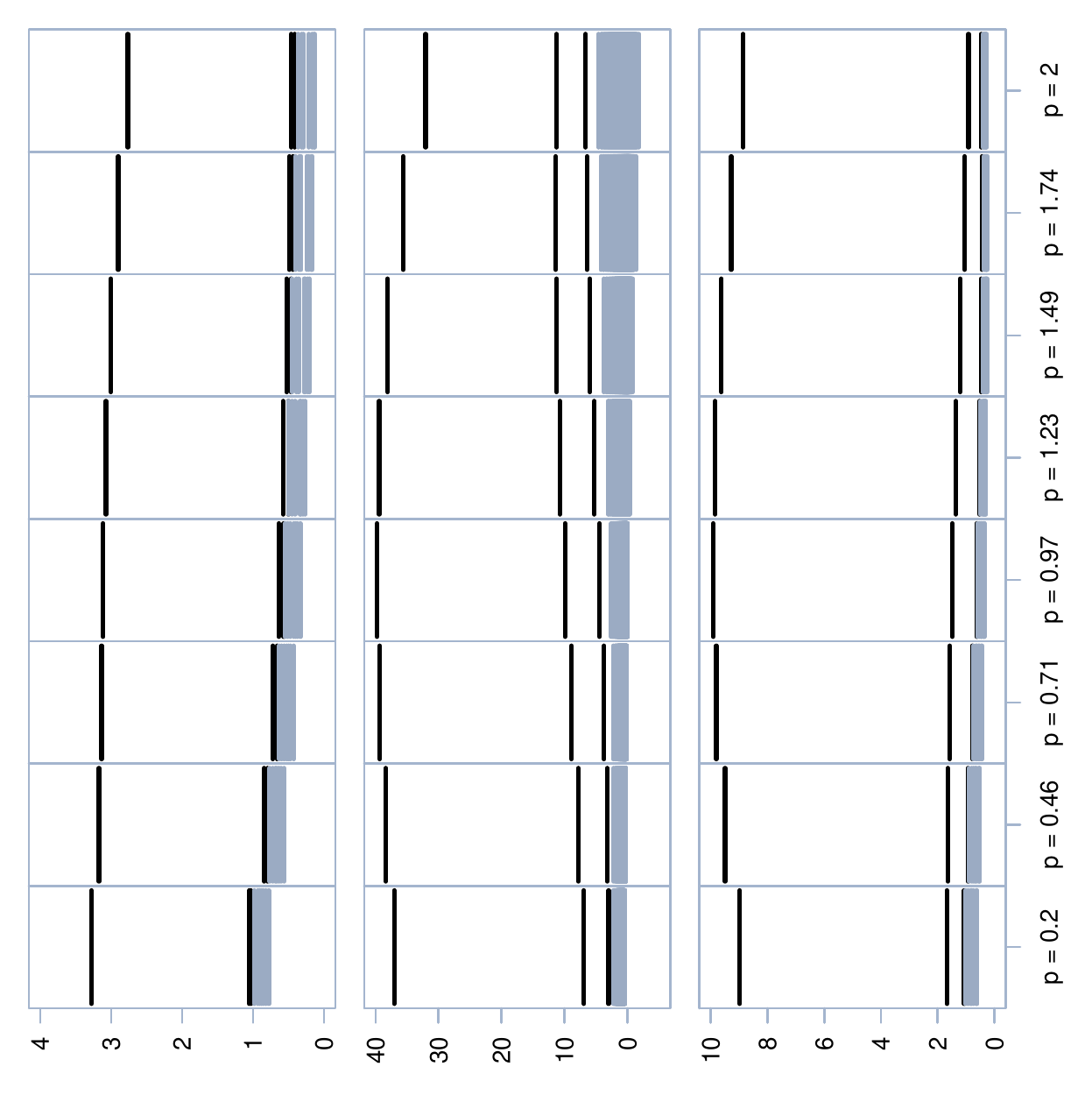}}
    \vfill
    \subfigure[Corresponding eigenvectors: first three, from top to bottom, averaged over $p$.            ]{\label{fig:evects} \includegraphics[scale=.68,trim=  0 0  600 0,clip]{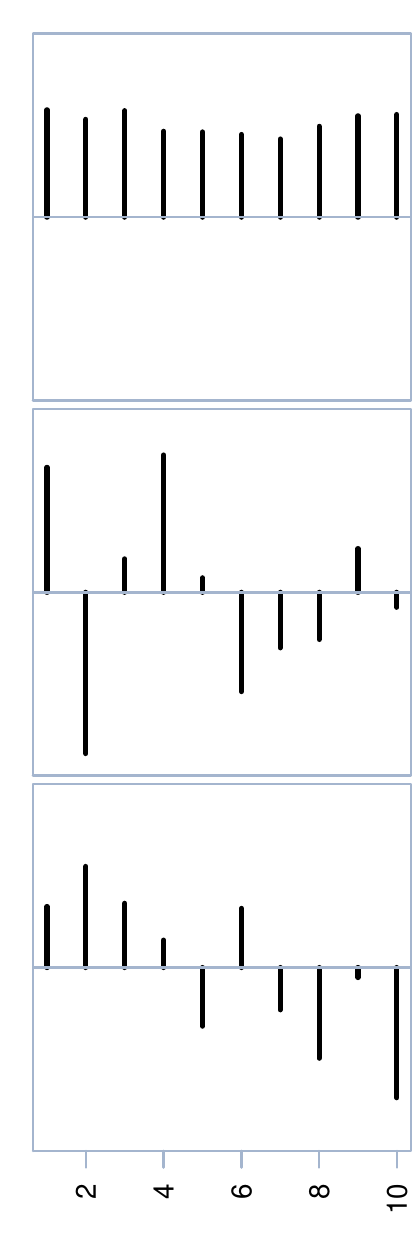}\hspace{.8cm}
                                                                                                                               \includegraphics[scale=.68,trim=120 0  480 0,clip]{MultiFact/pict_2000-2009/svd_vects}\hspace{.8cm}
                                                                                                                               \includegraphics[scale=.68,trim=360 0  120 0,clip]{MultiFact/pict_2000-2009/svd_vects}}
    \caption{$M=10$, 2000-2009. Spectral decomposition of the 
        factor-factor,     
    residual-residual, and 
      factor-residual correlations.} 
\end{figure}

\begin{figure}
    \center
    \subfigure[]{\includegraphics[scale=.5,trim=  0 0  600 0,clip]{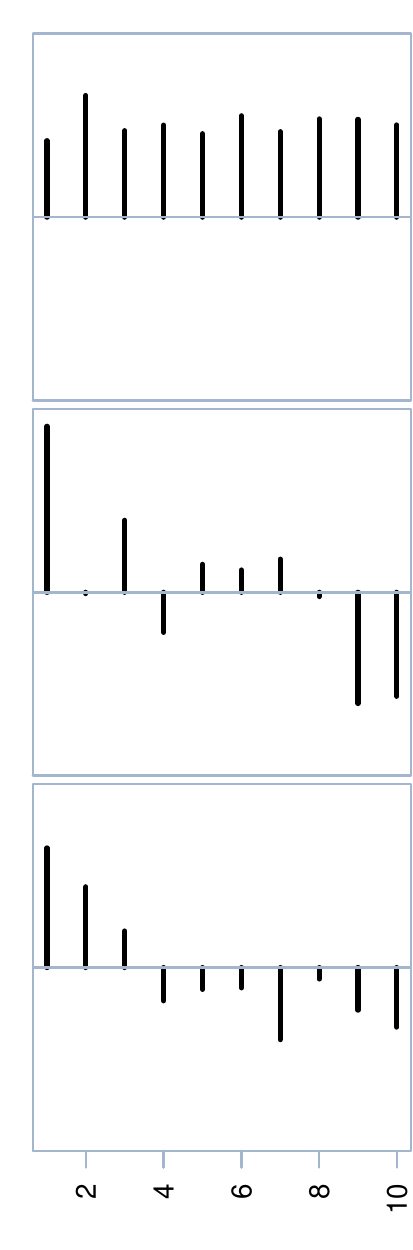}\label{fig:0004EVff}}\hspace{.5cm}
    \subfigure[]{\includegraphics[scale=.5,trim=240 0  360 0,clip]{MultiFact/pict_2000-2004/svd_vects}\label{fig:0004EVee}}\hspace{.5cm}
    \subfigure[]{\includegraphics[scale=.5,trim=360 0  240 0,clip]{MultiFact/pict_2000-2004/svd_vects}
                 \includegraphics[scale=.5,trim=600 0    0 0,clip]{MultiFact/pict_2000-2004/svd_vects}\label{fig:0004EVef}}
    \caption{$M=10$, 2000-2004.}\label{fig:2000-2004}
    \vfill
    \subfigure[]{\includegraphics[scale=.5,trim=  0 0 600 0,clip]{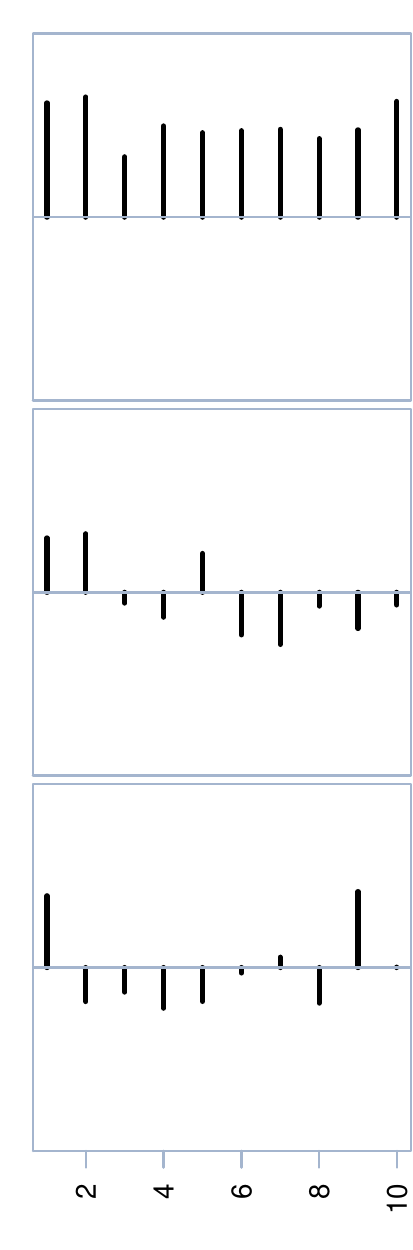}\label{fig:0509EVff}}\hspace{.5cm}
    \subfigure[]{\includegraphics[scale=.5,trim=240 0 360 0,clip]{MultiFact/pict_2005-2009/svd_vects}\label{fig:0509EVee}}\hspace{.5cm}
    \subfigure[]{\includegraphics[scale=.5,trim=360 0 240 0,clip]{MultiFact/pict_2005-2009/svd_vects}
                 \includegraphics[scale=.5,trim=600 0   0 0,clip]{MultiFact/pict_2005-2009/svd_vects}\label{fig:0509EVef}}
    \caption{$M=10$, 2005-2009}\label{fig:2005-2009}
    \vfill
    \subfigure[]{\includegraphics[scale=.5,trim=  0 0 600 0,clip]{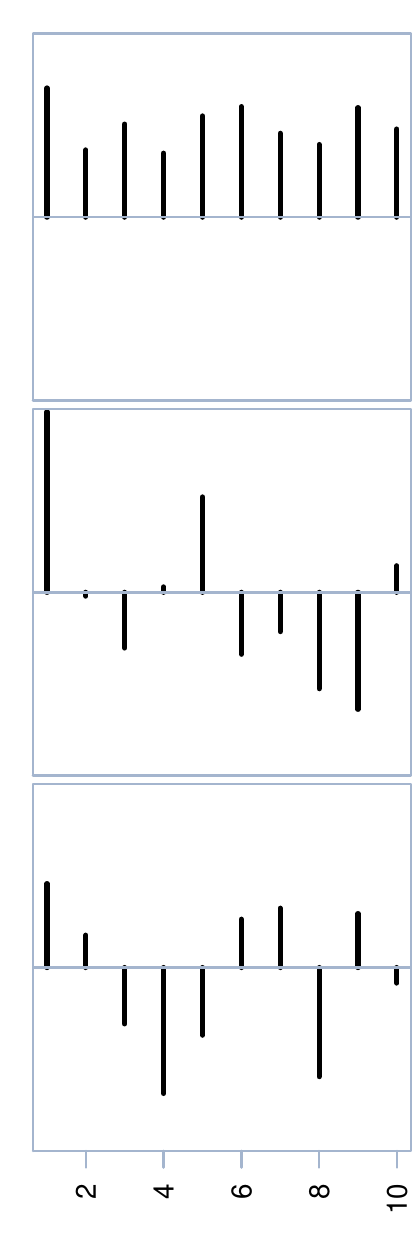}\label{fig:0912EVff}}\hspace{.5cm}
    \subfigure[]{\includegraphics[scale=.5,trim=240 0 360 0,clip]{MultiFact/pict_2009-2012/svd_vects}\label{fig:0912EVee}}\hspace{.5cm}
    \subfigure[]{\includegraphics[scale=.5,trim=360 0 240 0,clip]{MultiFact/pict_2009-2012/svd_vects}
                 \includegraphics[scale=.5,trim=600 0   0 0,clip]{MultiFact/pict_2009-2012/svd_vects}\label{fig:0912EVef}}
    \caption{$M=10$, 2009-2012}\label{fig:2009-2012}
\end{figure}

This model-free description highlights the presence of few dominant modes of \emph{amplitude} 
(not to be confused with the factors of \emph{linear} correlations), in particular a common mode.
This is confirmed by the study of the corresponding eigenvectors on Fig.~\ref{fig:evects}:
the first eigenvector of both the factor-factor amplitudes and the residual-residual amplitudes 
has all components with same sign, revealing a collective mode driving the amplitudes of the factors and the residuals.
The second eigenvector of res-res reveals ``finance against all'' behavior in the given period.
The third eigenvector of res-res reveals ``energy against all'' behavior (energy + utilities).
This mode is not observed on the fact-res correlations, thus indicating that the impact of 
energy on the volatility is only present in the residuals, and not shared by the linear factors.

Of course this structure is not stable over time, where it is expected and observed that 
the number and meaning of significant modes varies according to the studied period.
The spectral decomposition for the three sub-periods 2000--2004, 2005--2009, 2009--2012
is shown in Figs.~\ref{fig:2000-2004}, \ref{fig:2005-2009}, \ref{fig:2009-2012}, respectively.
We represent the eigenvectors of the factor-factor (a), residual-residual (b), and the left/right singular vectors of the factor-residual (c).
When applicable, we have averaged the components of the vectors over Bloomberg sector in order to see if a structure emerges (see Tab.~\ref{tab:sectors} for the labels).
As a general observation, the collective mode of amplitude fluctuations is stable over time, 
and strongly dominant over any other mode.
As can be seen in Fig.~\ref{fig:2005-2009}, it seems to be impacting/impacted by the financial sector more strongly 
in the 2005--2009 period (containing the financial crisis)
whereas in other periods it is almost uniformly spread over sectors.
This financial sector is also a stand-alone mode of fluctuations in the crisis period, 
as revealed by the structure of the second eigenvector in Figs.~\ref{fig:0509EVee} and \ref{fig:0509EVef}, 
where the stocks in that sector oscillate with opposing phase with respect to the other sectors.
In the pre-crisis period, the second relevant mode is rather composed of commodities. 
Indeed, the second eigenvector features the opposition of utilities (\#~14), energy (\#~7) and communications (\#~3) against the rest.
This already illustrates that the structure is not stable over time, 
what is confirmed by the post-crisis period where there is not even a clear signature
 for the existence of a significant second mode 
(the apparent structure in the second mode of Fig.~\ref{fig:0912EVef} is not significant 
because the sector \#~6 is populated with a single individual).

    
\section{Modeling the volatility content}\label{sec:modeling_vol}
The univariate and multivariate pseudo-empirical properties of the reconstructed factors series 
motivate the construction of a model that integrates the following characteristics:
\begin{itemize}
    \item all factors amplitudes are driven by a common mode;
    \item all residuals amplitudes are driven by \emph{that same} common mode;
    \item factors and residuals series are non-Gaussian, and their amplitudes are not lognormal either.
\end{itemize}

The spectral analysis suggests the presence of two volatility drivers $\sigma_0$ and $\sigma_{0'}$ in the factors
and the residuals.%
\footnote{
It is very easy to add a possible third driver $\sigma_{0''}$ in the equation \eqref{eq:model_ej_0} for the residuals,
but its statistical significance is not granted and the estimation of the associated parameters could turn quite noisy.}
Concretely, we aggregate the volatility determinants multiplicatively as follows:
\begin{subequations}\label{eq:model_fkej_0}
\begin{align}
    f_k&=\epsilon_k \,\exp(A_{k0} \omega_0)\,\exp(A_{k0'} \omega_{0'})\,\exp(A_{kk} \omega_k)\\\label{eq:model_ej_0}
    e_j&=\eta_j     \,\exp(B_{j0} \omega_0)\,\exp(B_{j0'} \omega_{0'})\,\exp(B_{jj} \widetilde\omega_j),
\end{align}
\end{subequations}
where the $\omega$'s are stochastic log-volatilities (all independent of each other and independent of the $\epsilon$'s and $\eta$'s),
and the parameters $A$'s and $B$'s weight the contribution of every volatility mode.
In particular, we expect $A_{k0}$ and $B_{k0}$ to reflect approximately the coordinates of the first eigenvectors 
of the ``log-abs'' correlations on Fig.~\ref{fig:2000-2004}--\ref{fig:2009-2012}(a) and Fig.~\ref{fig:2000-2004}--\ref{fig:2009-2012}(b) respectively,
and $A_{k0'}$ and $B_{k0'}$ the coordinates of the second eigenvectors.
$A_{kk}$ and $B_{jj}$ are the standard-deviations of the remaining log-volatilities not explained by the common drivers (if any).

In the next subsection, we estimate a restricted model where only a single volatility driver is active.
\subsection{A dominant volatility mode}
The simplest improvement over the independent factors assumption, while keeping uncorrelatedness,
is to allow for a common fluctuation of amplitudes.
The multiplicative structure of Eqs.~\eqref{eq:model_fkej_0} is better rendered using the summation in exponential notation:
\begin{subequations}\label{eq:model0}
\begin{align}
    \label{eq:f_k_0}f_k&=\epsilon_k \exp(A_{k0} \omega_0 + A_{kk} \omega_k)\\
    \label{eq:e_j_0}e_j&=\eta_j     \exp(B_{j0} \omega_0 + B_{jj} \widetilde\omega_j)
\end{align}
\end{subequations}
with $\epsilon_k,\eta_j$ Gaussian with variance such that Eqs.~\eqref{eq:ortho_f_e} hold,
and we have dropped the tildes and hats when no confusion was possible.
It is important to keep the intuitive meaning of the log-volatility factors in mind:
\begin{align*}
    \omega_0 &= \text{dominant and common driver of log-volatility across factors and residuals} \\
    \omega_1 &= \text{log-volatility of the index (market linear mode $f_1$), net of $\omega_0$ contribution}
\end{align*}
and the subsequent $\omega_k, \widetilde\omega_k, $ ($k>1$) characterize the ``residual volatility''
not explained by the common driver $\omega_0$ in the amplitude of the factors $f_k$ and residuals $e_j$.

The model is completely characterized from a probabilistic point of view 
when the law of the log-volatilities is specified.
Rather than using their Probability Distribution Functions, we resort to the 
Moment Generating Function (MGF)\nomenclature{MGF, mgf}{Moment generating function} $M_{\omega}(p)\equiv\esp{\exp(p\omega)}$,
as we find it more appropriate to characterize the $\omega$'s by their moments, in the perspective of calibration.
The $p$-dependence of the curves in Fig.~\ref{fig:matelems} suggests that the non-Gaussianity in the log-volatilities 
is approximately homogeneous across the factors, and thus possibly due to the common volatility driver $\omega_0$ only,
while the residual volatilities $\omega_k$ and $\widetilde\omega_j$ can be taken as Gaussian.
For those, the MGF is $M_{\text{G}}(p)=\exp(p^2/2)$.
But in the general case, developing in cumulants, $M_{\omega_0}$ is the exponential of a polynomial.
Typically, with
    \[
    \esp{\omega_0  }=0\qquad
    \esp{\omega_0^2}=1\qquad
    \esp{\omega_0^3}=\zeta_0\qquad
    \esp{\omega_0^4}=3+\kappa_0\qquad
    \]
    one gets
    \begin{align}\label{eq:MGF}
        M_{\omega_0}(p) &=\Exp{\frac12 p^2+\frac{\zeta_0}{6}p^3+\frac{\kappa_0}{24}p^4}.
    \end{align}

\subsubsection*{Parameters estimation}
    The model defined by Eqs.~(\ref{eq:MODEL},\ref{eq:model0}) contains the following parameters:
    $MN$ linear weights $\Wei_{ki}$ (previously estimated, see section~\ref{sec:MFlin}), 
    $M$ coefficients $A_{k0}$ and $N$ coefficients $B_{j0}$ of exposure to the common volatility mode,
    the standard-deviations $A_{kk}$ and $B_{jj}$ of the residual log-volatilities, 
    and the skewness $\zeta_0$ and kurtosis $\kappa_0$ of the volatility mode $\omega_0$.
    So there are overall $NM+2(N+M)+2$ parameters, for a dataset of size $NT$. 
    More importantly the number of parameters is only marginally enhanced with respect to a typical linear factor model 
    (where only the $NM$ linear weights enter into account): only $2(N+M+1)$ new parameters intended to 
    improve the description of all pairwise dependences coefficients.

    In order to estimate the latter we express the model predictions for the observables defined in Eq.~\eqref{eq:allcorabs}.
    It is convenient to introduce the following ratio of MGF's:
    \begin{align}\label{eq:defPhi}
        \Phi_0(a,b) &=\frac{M_{\omega_0}\!(a\!+\! b)}{M_{\omega_0}\!(a)M_{\omega_0}\!(b)},
    \end{align}
    as well as the Gaussian equivalent $\Phi_{\text{G}}(a,b)=\exp(ab)$.
    In logarithmic form, $\phi_0(a,b;p)=\frac{1}{p^2}\ln\!\Phi_0(pa,pb)$ is a polynomial in $p$ when expanding in cumulants.
    Indeed, with Eq.~\eqref{eq:MGF}, 
            \[\phi_0(a,b;p)=ab + \frac{p}{2}\zeta_0(a^2b+ab^2)+\frac{p^2}{12}\kappa_0(2a^3b+3a^2b^2+2ab^3)\]
            and $\phi_{\text{G}}(a,b;p)=ab$ is independent of $p$.
    Then the theoretical prediction for the matrix elements can be computed analytically:
    \begin{subequations}
    \label{eq:fkfleiej}
    \begin{align}
    \label{eq:fkfl0}
        \frac{1}{p^2}\ln\frac{\Esp{|f_k|^p|f_l|^p}}{\Esp{|f_k|^p}\Esp{|f_l|^p}}&=
        \phi_0(A_{k0},A_{l0};p)+\Big(\gamma(p) +A_{kk}^2 \Big)\delta_{kl}\\
    \label{eq:fkel0}
        \frac{1}{p^2}\ln\frac{\Esp{|f_k|^p|e_i|^p}}{\Esp{|f_k|^p}\Esp{|e_i|^p}}&=
        \phi_0(A_{k0},B_{i0};p)\\
    \label{eq:ekel0}
        \frac{1}{p^2}\ln\frac{\Esp{|e_i|^p|e_j|^p}}{\Esp{|e_i|^p}\Esp{|e_j|^p}}&=
        \phi_0(B_{i0},B_{j0};p)+\Big(\gamma(p) +B_{ii}^2 \Big)\delta_{ij}
    \end{align}
    \end{subequations}
    where 
    \[
         \gamma(p)=\frac{1}{p^2}\ln\frac{\Esp{|\epsilon|^{2p}}}{\Esp{|\epsilon|^p}^2}
               =\frac{1}{p^2}\ln\!\left(\sqrt{\pi}\frac{\Gamma(\tfrac{1}{2}+p)}{\Gamma(\tfrac{1+p}{2})^2}\right)
    \]
    is the normalized $2p$-moment of the absolute value of Gaussian variables.
    
    Clearly, if even $\omega_0$ was Gaussian, and the diagonal elements were regular, 
    the matrices described in Eqs.~\eqref{eq:fkfl0} and \eqref{eq:ekel0} would be trivially of rank 1,
    and the identification of $A_{\cdot0}$ and $B_{\cdot0}$ with the first eigenvectors of the corresponding matrices
    would be straightforward.
    Non-Gaussianity and specificities on the diagonal perturb this identification,
    but the overall picture is essentially the same story, as we will shortly show with the calibration results.
    
    The model estimation procedure is as follows (the linear weights $\Wei$ are previously estimated).
    As discussed just above, there are $2(N+M+1)$ parameters to be estimated.
    Because the equations \eqref{eq:fkfleiej} are coupled through \eqref{eq:fkel0}, all parameters should in principle be estimated jointly.
    The corresponding optimization program would however be computer intensive, 
    and the stability of the solution would not be granted in such a large dimensional space.
    We proceed stepwise instead, by first estimating the parameters $A_{k0}, A_{kk}, \zeta_0, \kappa_0$
    using the fac-fac predictions~\eqref{eq:fkfl0}, and then estimate the remaining parameters $B_{i0}, B_{ii}$ from
    the res-res \emph{and} fac-res correlations for consistence.
    Note that there is an overall sign degeneracy, 
    as well as a sign indetermination for the idiosyncratic parameters $A_{kk}, B_{ii}$ (which we then arbitrarily take positive).
    \begin{enumerate}
        \item Estimate $A_{k0}$, $A_{kk}$ and the non-Gaussianity parameters from Eq.~\eqref{eq:fkfl0}:
        \begin{equation}\label{eq:costfct_A}
            \min\left\{\sum_p\sum_{k,l}\left(
            \frac{1}{p^2}\ln\frac{\vev{|\mat{F}_{tk}\mat{F}_{tl}|^p}}{\vev{|\mat{F}_{tk}|^p}\vev{|\mat{F}_{tl}|^p}}
            -
            \frac{1}{p^2}\ln\frac{\Esp{|f_k|^p|f_l|^p}}{\Esp{|f_k|^p}\Esp{|f_l|^p}}
            \right)^2\right\}
        \end{equation}
        The sum on $p$ runs over eight values between $p=0.2$ and $p=2$ (see x-axis of Fig.~\ref{fig:spectra}) 
        and is crucial here to the estimation of the non-Gaussianity parameters, 
        since the loss function is independent of $p$ for Gaussian variables.
        This amounts to performing a best (joint!) quadratic fit of the curves similar to Fig.~\ref{fig:matelems}, 
        for each period.
        \item Estimate $B_{i0}$ from Eq.~\eqref{eq:fkel0}:
        \begin{subequations}\label{eq:loss_B}
        \begin{equation}
            \min\left\{\sum_{k,i}\left(
            \frac{1}{p^2}\ln\frac{\vev{|\mat{F}_{tk}\mat{E}_{ti}|^p}}{\vev{|\mat{F}_{tk}|^p}\vev{|\mat{E}_{ti}|^p}}
            -
            \frac{1}{p^2}\ln\frac{\Esp{|f_k|^p|e_i|^p}}{\Esp{|f_k|^p}\Esp{|e_i|^p}}
            \right)^2\right\}
        \end{equation}
        or jointly with $B_{ii}$ from Eq.~\eqref{eq:ekel0}, as the vector solution of
        \begin{equation}
            \min\left\{\sum_{i,j}\left(
            \frac{1}{p^2}\ln\frac{\vev{|\mat{E}_{ti}\mat{E}_{tj}|^p}}{\vev{|\mat{E}_{ti}|^p}\vev{|\mat{E}_{tj}|^p}}
            -
            \frac{1}{p^2}\ln\frac{\Esp{|e_i|^p|e_j|^p}}{\Esp{|e_i|^p}\Esp{|e_j|^p}}
            \right)^2\right\}
        \end{equation}
        \end{subequations}
        (here it is too intensive to calculate the optimum in the $N$-dimensional 
        space for all values of $p$ so we take a single value, 
        typically $p=1$ if we intend to reproduce best absolute correlations, 
               or $p=2$ if we favor quadratic correlations).
        
    \end{enumerate}
    The convergence is ensured by starting close to the solution, namely taking as prior the first eigenvector of the corresponding matrices.
    The calibration results are given graphically on Figs.~\ref{fig:AB2000-2004}, \ref{fig:AB2005-2009}, \ref{fig:AB2009-2012},
    where we show, separately for each sub-period, the estimated parameters $A_{0k}$ and $B_{0j}$.
    As expected and discussed above, they are very close to the first eigenvector of the corresponding matrix of ``log-abs'' correlations shown in Figs.~\ref{fig:2000-2004}, \ref{fig:2005-2009} and~\ref{fig:2009-2012}.
    We also show the resulting parameters $A_{kk}$ and $B_{jj}$: notably, some factors $k$ seem to have all their volatility explained by the common driver $\omega_0$
    so that there is no left residual volatility.
\begin{figure}
    \center
    \subfigure[$A_{k0}$ and $A_{kk}$]{\includegraphics[scale=.5]{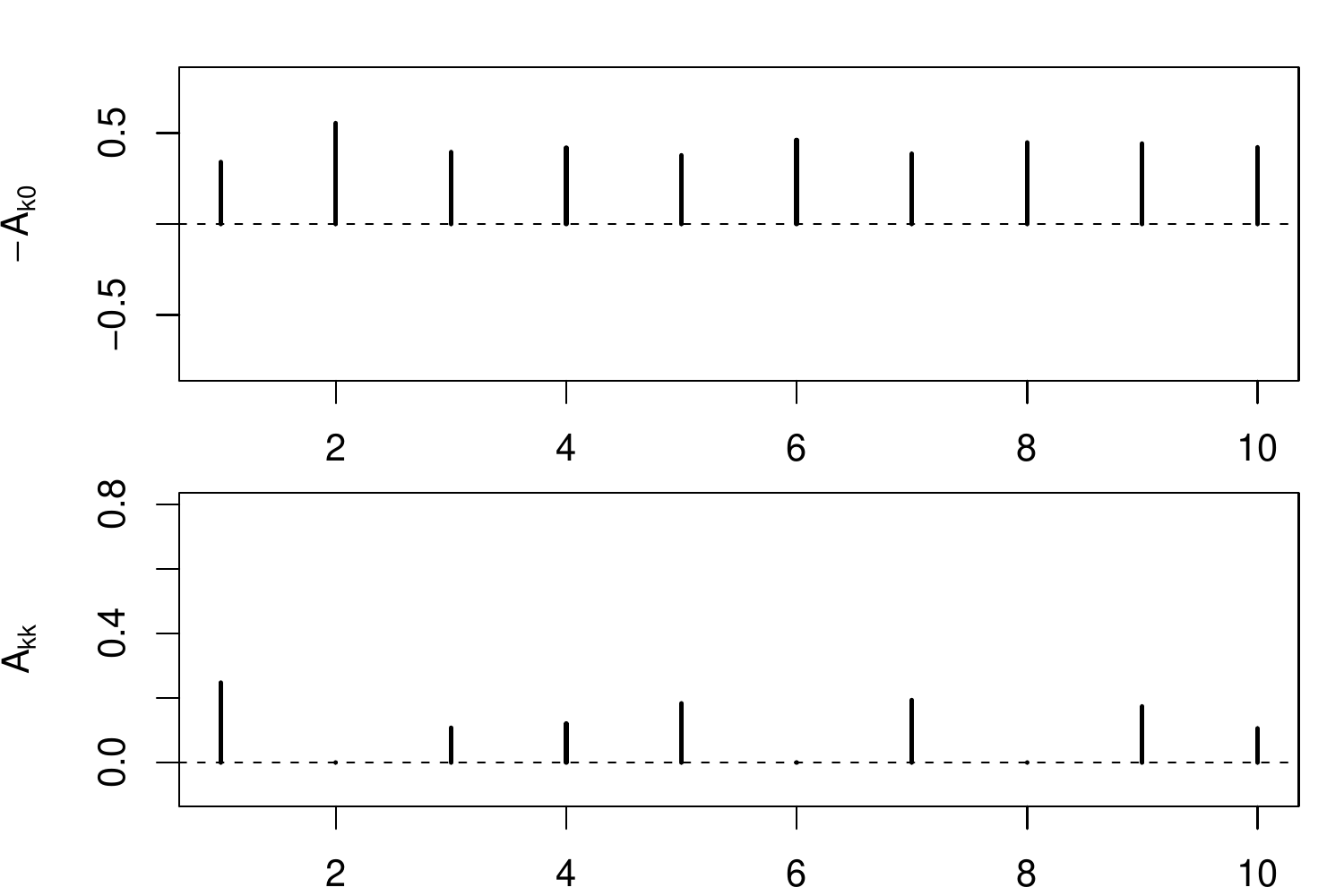}} 
    \subfigure[$B_{j0}$ and $B_{jj}$]{\includegraphics[scale=.5]{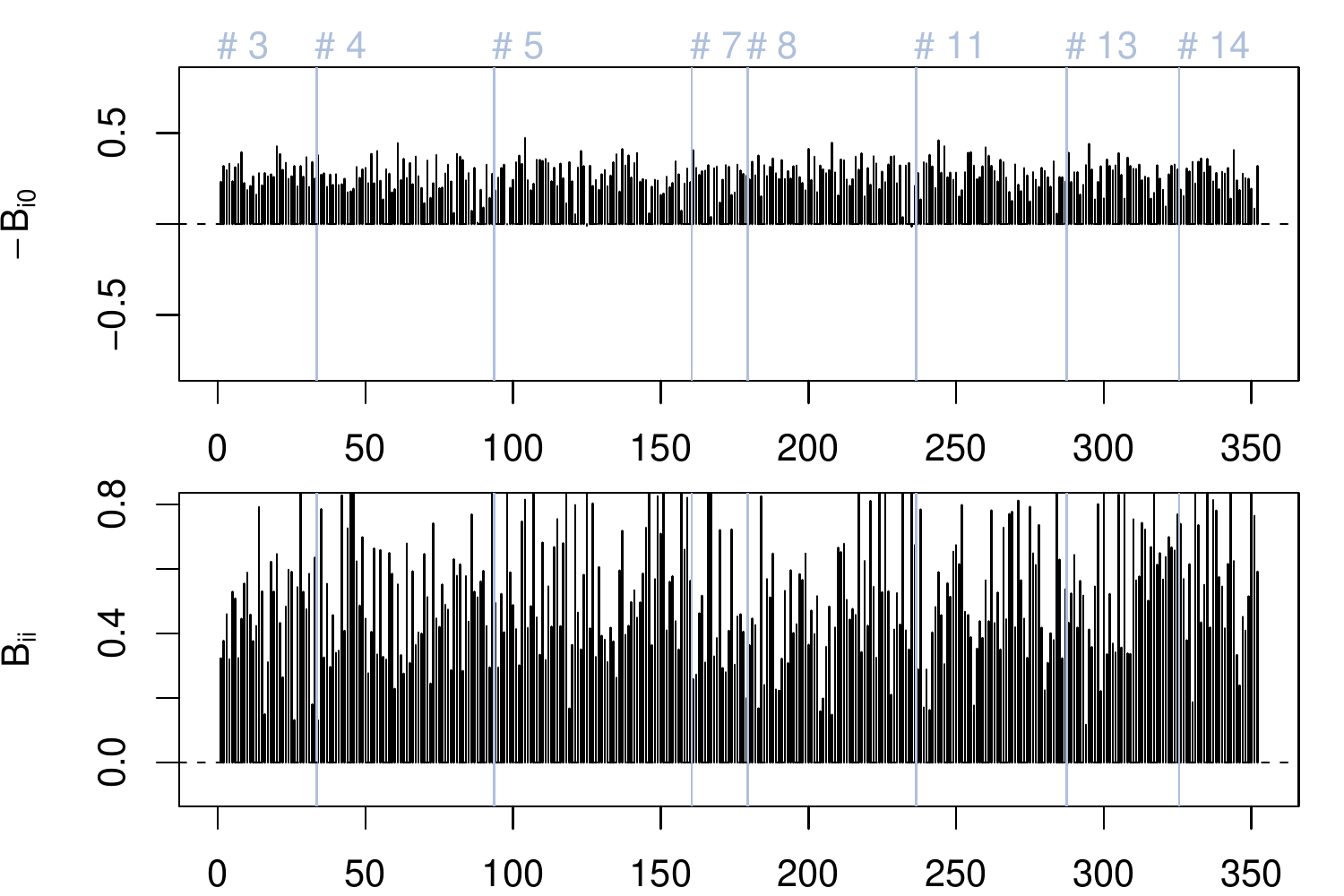}} 
    \caption{$M=10$, 2000-2004.}\label{fig:AB2000-2004}
    \subfigure[$A_{k0}$ and $A_{kk}$]{\includegraphics[scale=.5]{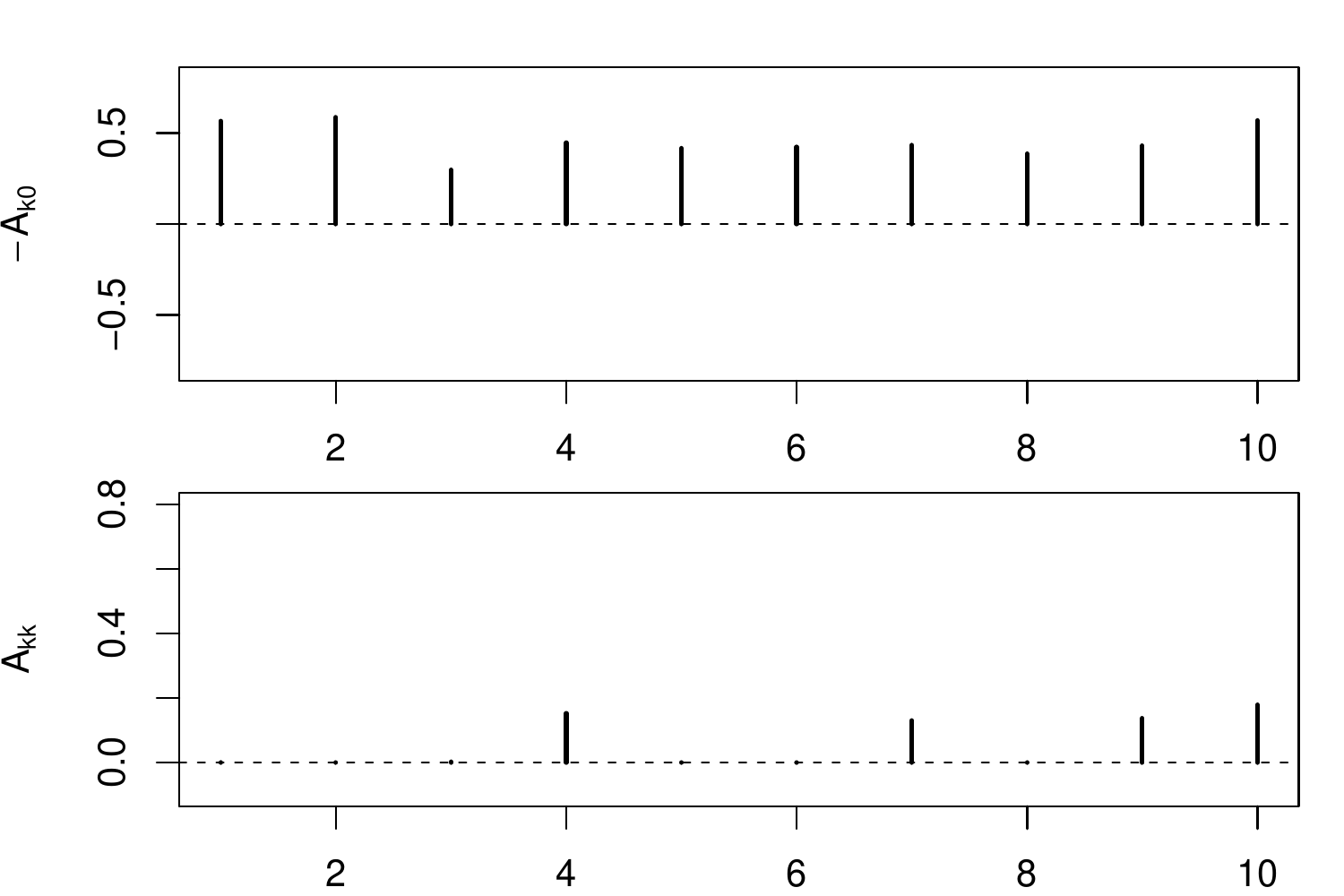}} 
    \subfigure[$B_{j0}$ and $B_{jj}$]{\includegraphics[scale=.5]{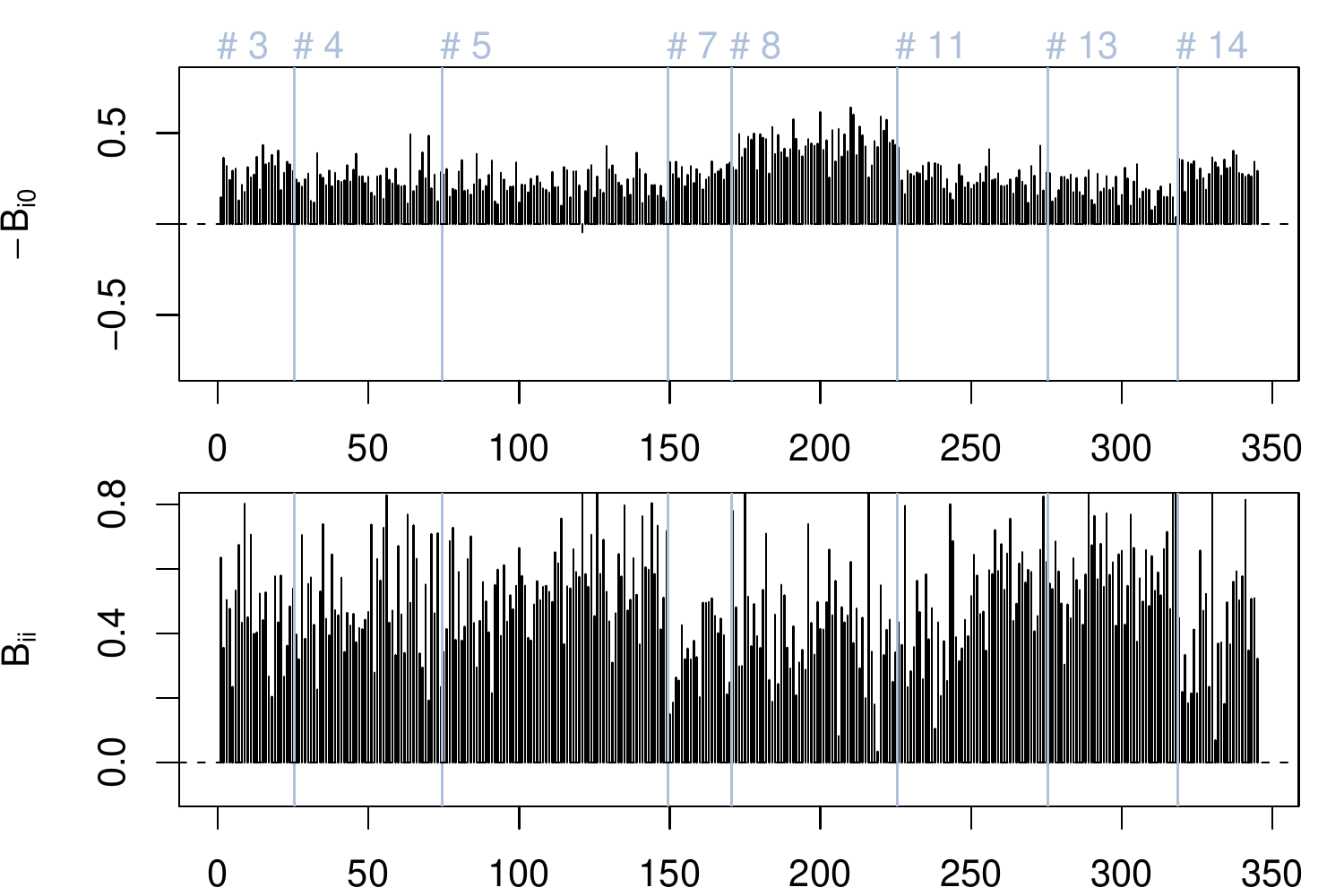}} 
    \caption{$M=10$, 2005-2009}\label{fig:AB2005-2009}
    \subfigure[$A_{k0}$ and $A_{kk}$]{\includegraphics[scale=.5]{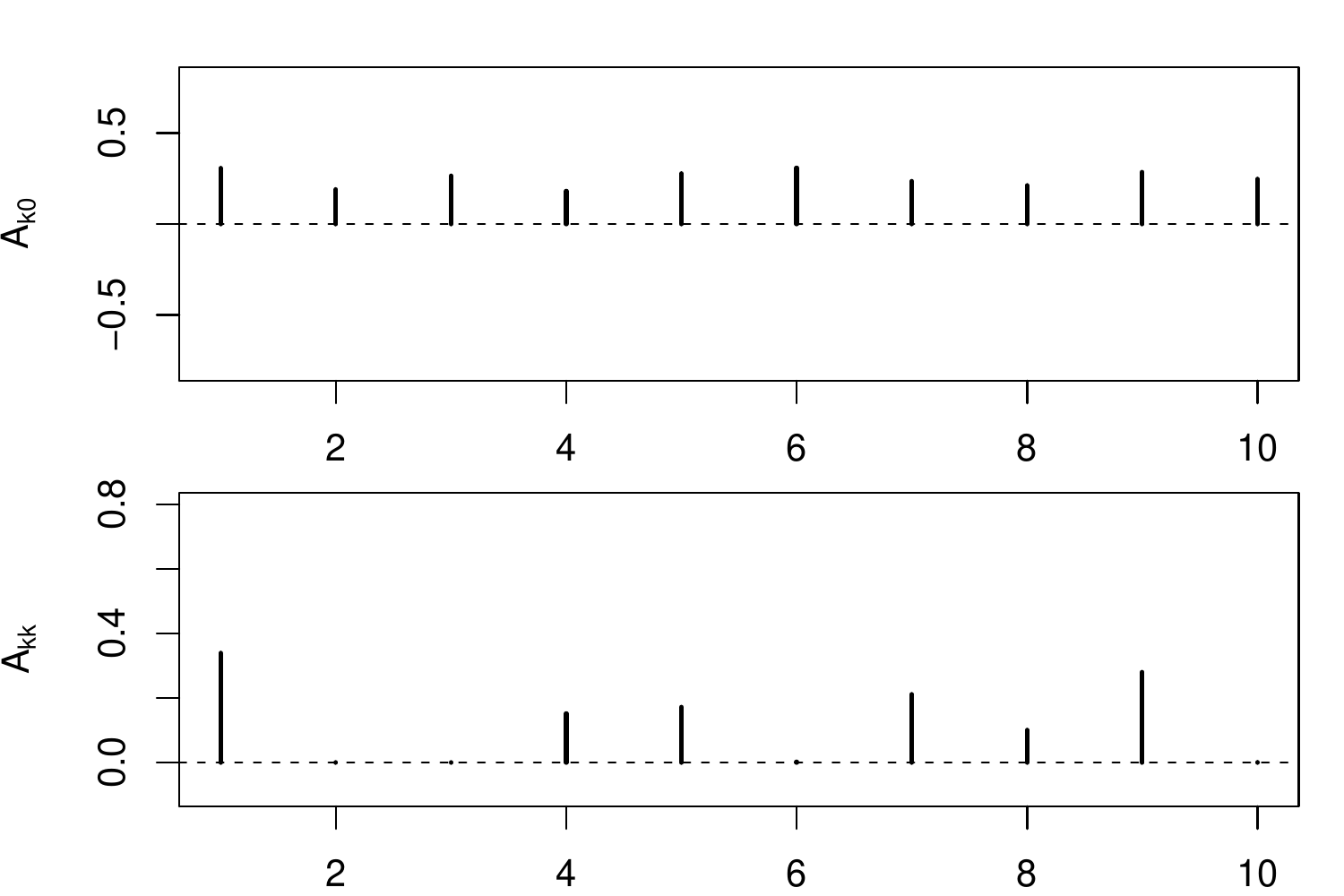}} 
    \subfigure[$B_{j0}$ and $B_{jj}$]{\includegraphics[scale=.5]{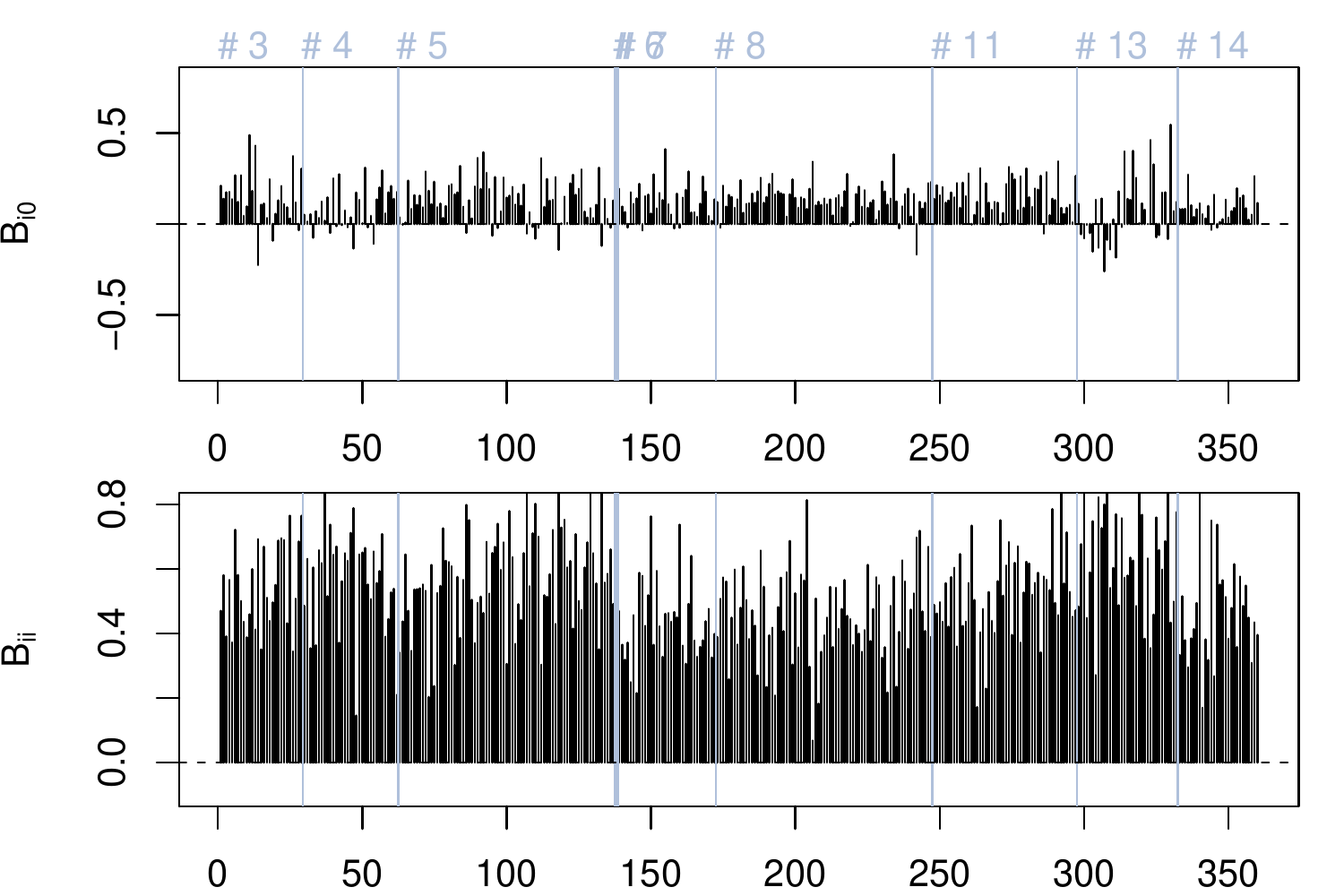}} 
    \caption{$M=10$, 2009-2012}\label{fig:AB2009-2012}
\end{figure}
    The estimated values of the non-gaussianity parameters of the log-volatilities are reported in Tab.~\ref{tab:ILE}.
    noticeably, the kurtosis of the common driver $\omega_0$ is found to be negative in every period:
    the log-volatility is less kurtic than a Gaussian.
    This was already revealed by the concavity of the curves in Fig.~\ref{fig:matelems}.

\begin{table}
    \center
    \begin{tabular}{c||c|c|c|}
                                       &2000--2004&2005--2009&2009--2012\\\hline\hline
                             $\zeta_0$ & -0.072   & -1.492   &  0.607   \\
                            $\kappa_0$ & -0.129   & -1.916   & -0.608   \\\hline
    \end{tabular}
    \caption{$M=10$. Estimated non-Gaussianity parameters.}
    \label{tab:ILE}
\end{table}

\subsubsection*{Reconstructing the volatility factors}
We are interested in reconstructing the series of the common volatility mode $\omega_{t0}$
out of the model equations and the estimated parameters.
Similarly to what was done in Sect.~\ref{sec:MFlin} to recover the series of linear factors, 
we perform here two date-by-date regressions motivated by the Eqs.~\eqref{eq:model0}:   
\begin{align*}
    \ln|\mat{F}_{tk}|-\vev{\ln|\mat{F}_{tk}|}&=\omega_{t0} A_{k0} + A_{kk} \omega_{tk}\\
    \ln|\mat{E}_{tj}|-\vev{\ln|\mat{E}_{tj}|}&=\omega_{t0} B_{j0} + B_{jj} \widetilde\omega_{tj}
\end{align*}
Whereas the first regression is performed over only $M$ variables $A_{k0}$,
the second one is realized over the $N$ variables $B_{j0}$ and thus much less noisy
(we will always use the second determination in the following).
The overlap of the series of $\omega_0$ estimated with the two regressions is good, 
with a correlation coefficient between 0.55 and 0.75 depending on the period studied.
We show in Fig.~\ref{fig:firstfact} the series of first factor 
 straight from the linear program $\mat{F}_{t1}$, and also 
reconstructed from the procedure above as $\exp{(A_{10}\omega_{t0})}$, after estimation of the parameters.
    
\begin{figure}
    \center
    \subfigure[2000-2004]{\includegraphics[scale=.62,trim=230 0  200 0,clip]{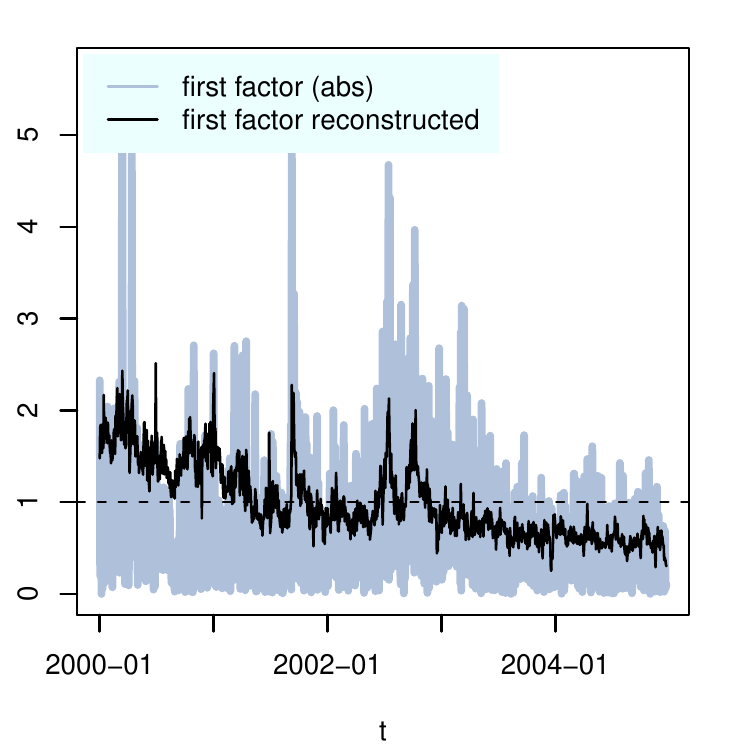}\label{fig:firstfact1mode}}
    \subfigure[2005-2009]{\includegraphics[scale=.62,trim=230 0  200 0,clip]{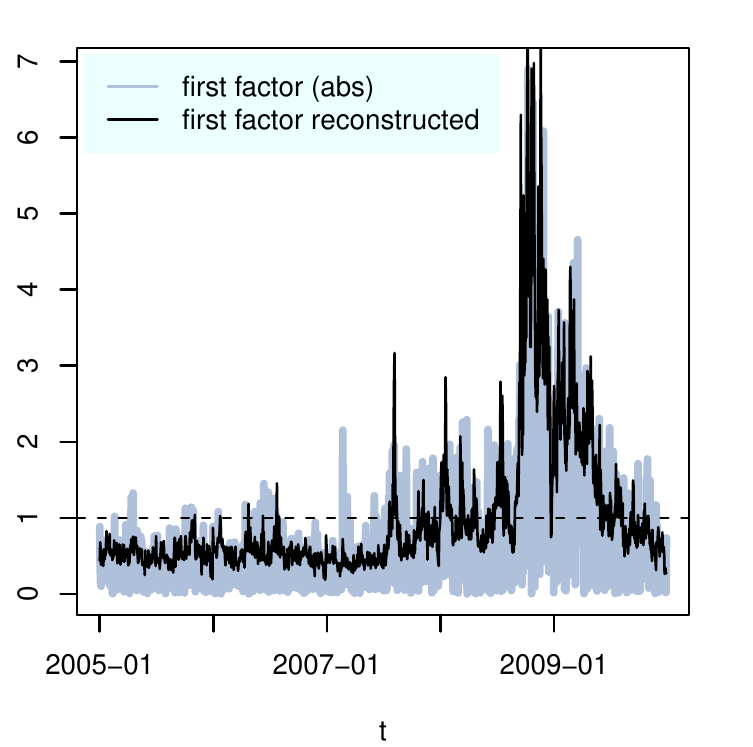}}
    \subfigure[2009-2012]{\includegraphics[scale=.62,trim=230 0  200 0,clip]{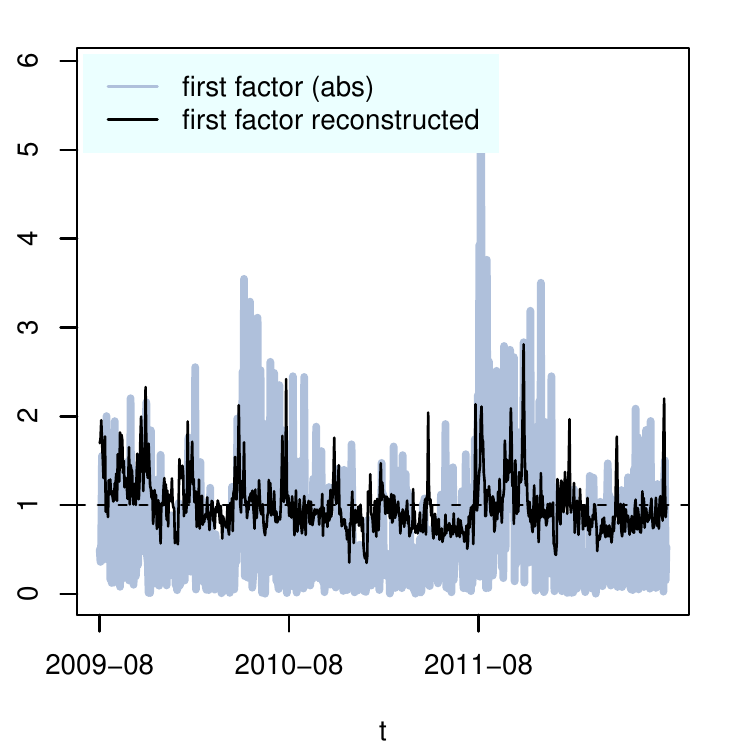}}
    \caption{Original and reconstructed first factor ($M=10$)}
    \label{fig:firstfact}
\end{figure}


\subsubsection*{Predicted non-linear dependences}
As a consistency check of the estimation procedure, we analyze
the model prediction with estimated parameters and compare them with 
empirical measurements of the same quantities.
Of particular interest are the quadratic correlations and the diagonal copulas,
whose anomalies observed in a previous study \cite{chicheportiche2012joint} motivated the present model.

The quadratic correlations can be computed from the model definition and write
\begin{align}\nonumber
\esp{\ret_i^2\ret_j^2}  &=\sum_{kl}\left(\Wei_{ki}^2\Wei_{lj}^2+2\Wei_{ki}\Wei_{kj}\Wei_{li}\Wei_{lj}\right)\Phi_0(A_{k0},A_{l0};2)\Big(\tfrac{1}{3}\cdot 3\cdot \Phi_{\text{G}}(A_{kk},A_{ll};2)\Big)^{\delta_{kl}} \\\nonumber
                        &+(1+2\delta_{ij})\Big(1-\sum_l\Wei_{li}^2\Big)\sum_{k}\Wei_{kj}^2\Phi_0(A_{k0},B_{i0};2)\\\nonumber
                        &+(1+2\delta_{ij})\Big(1-\sum_l\Wei_{lj}^2\Big)\sum_{k}\Wei_{ki}^2\Phi_0(A_{k0},B_{j0};2)\\\label{eq:quad_cor_model}
                        &+\Big(1-\sum_l\Wei_{li}^2\Big)\Big(1-\sum_l\Wei_{lj}^2\Big)\Phi_0(B_{i0},B_{j0};2)\Big(3\Phi_{\text{G}}(B_{ii},B_{jj};2)\Big)^{\delta_{ij}}.
\end{align}
When all parameters $A,B$ are zero, the prediction for Gaussian factors and residuals is retrieved: $\esp{\ret_i^2\ret_j^2}=1+2\esp{\ret_i\ret_j}^2$.
We illustrate in the left panel of Fig.~\ref{fig:cal_quad} a scatter plot of 
the left-hand side (calibrated) versus the right-hand side (empirical) of Eq.~\eqref{eq:quad_cor_model}, 
for all periods. They show a good agreement of model and sample quadratic correlations.
Furthermore, the middle and right panels of the same figure illustrate the fact that the
pairs of stock returns cannot be described by a bivariate Student distribution, 
for which a regular curve should be observed instead of the scattered cloud 
in the plane of quadratic vs linear correlations, 
a conclusion already drawn in chapter~\ref{part:partII}.\ref{chap:IJTAF} but which the present model highlights.%
\footnote{
Notice that the choice of $p$ in the estimation procedure of the parameters $B_{i0}$, $B_{i0'}$ and $B_{ii}$ is important here.
Although the model has identical predictions for any $p$, estimation biases and errors are in practice different for low moments $p\approx 0.2$ or high moments $p\approx 2$.
Obviously, best fits for the quadratic correlations are obtained with $p=2$ since in this case 
the same quantities appear in Eq.~\eqref{eq:quad_cor_model} and in the loss function \eqref{eq:loss_B}.
}
\begin{figure}
    \center
    \subfigure[2000--2004]{\includegraphics[scale=.38,trim=0 0 0 25,clip]{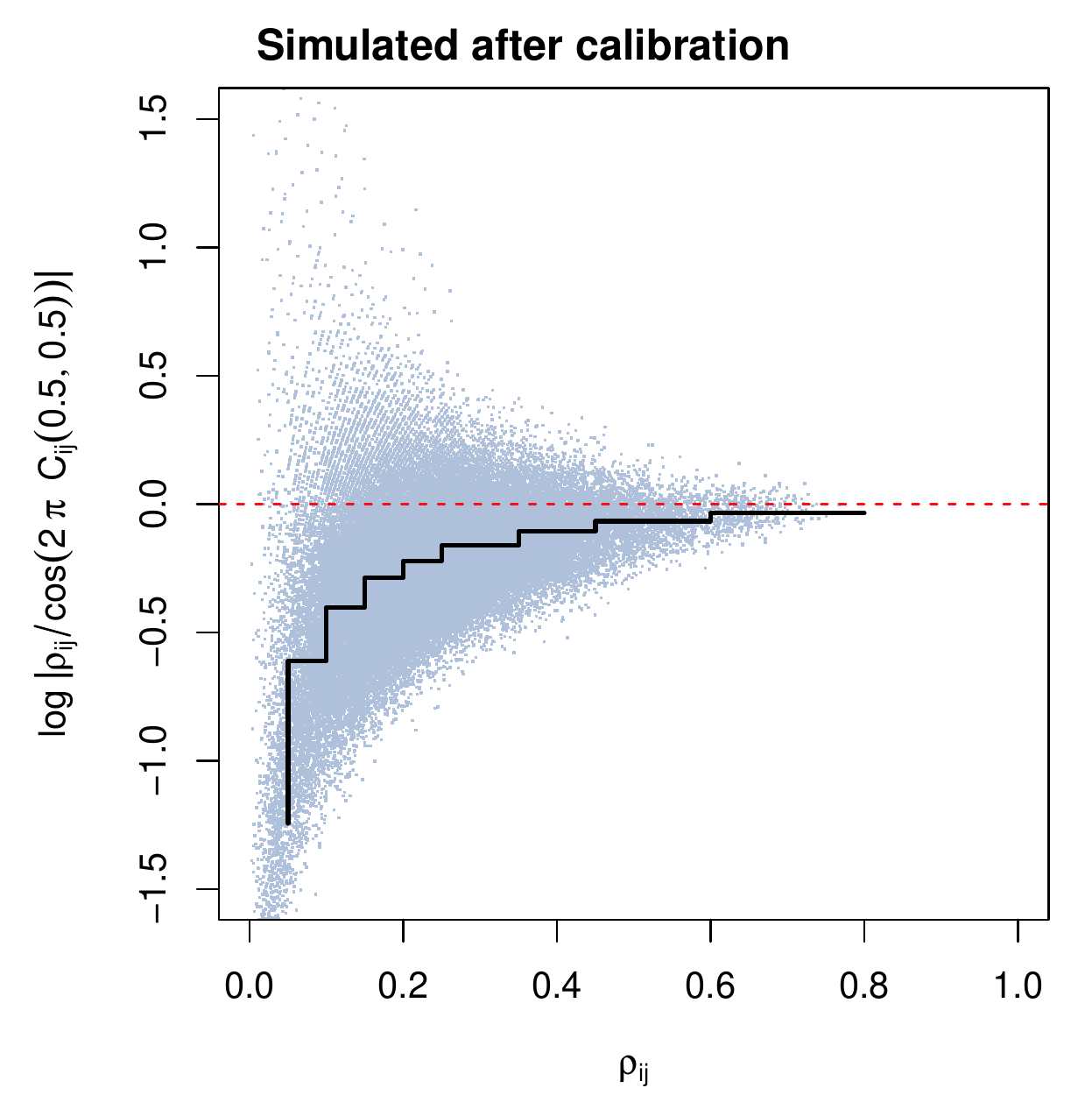}}
    \subfigure[2005--2009]{\includegraphics[scale=.38,trim=0 0 0 25,clip]{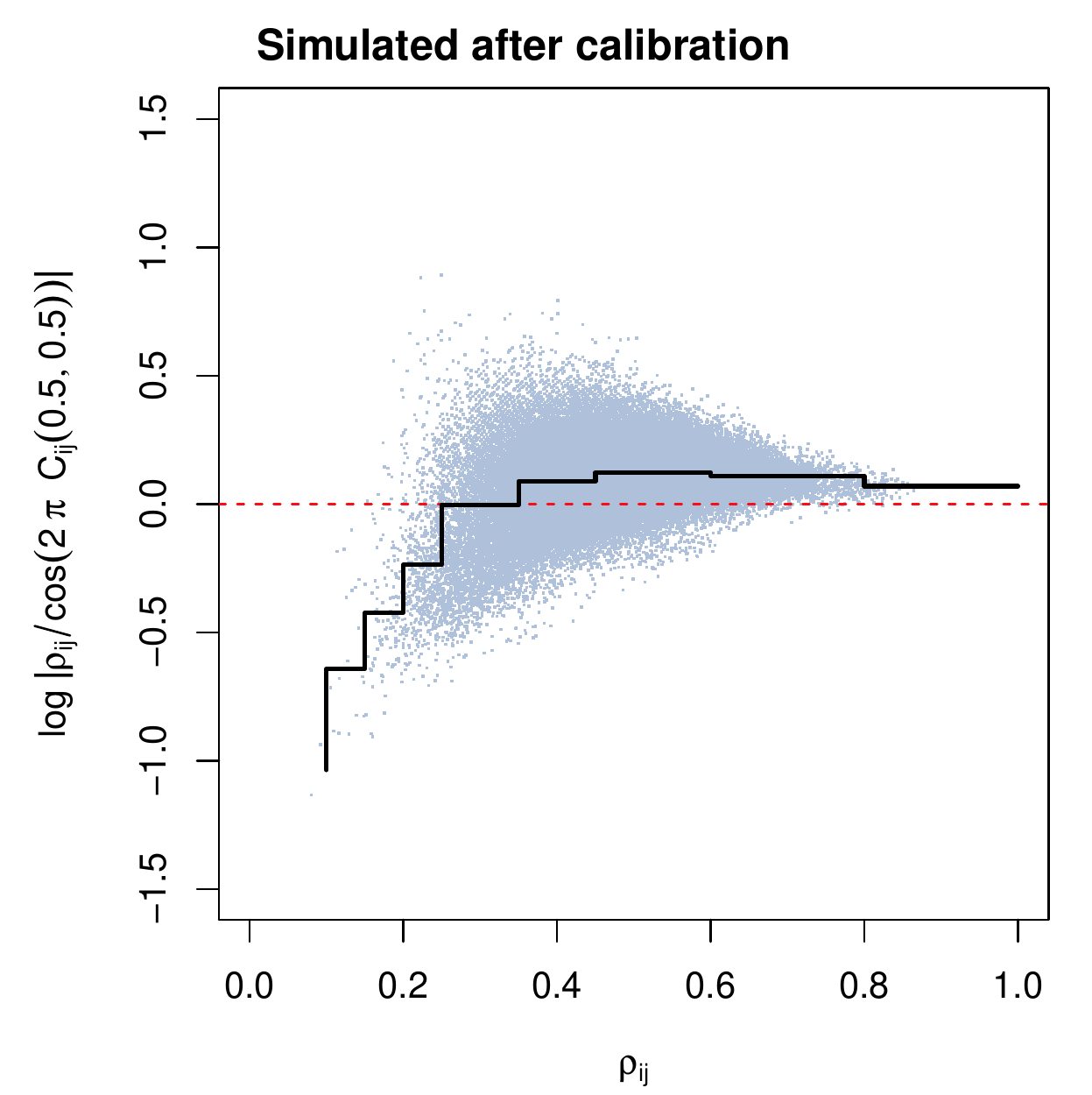}}
    \subfigure[2009--2012]{\includegraphics[scale=.38,trim=0 0 0 25,clip]{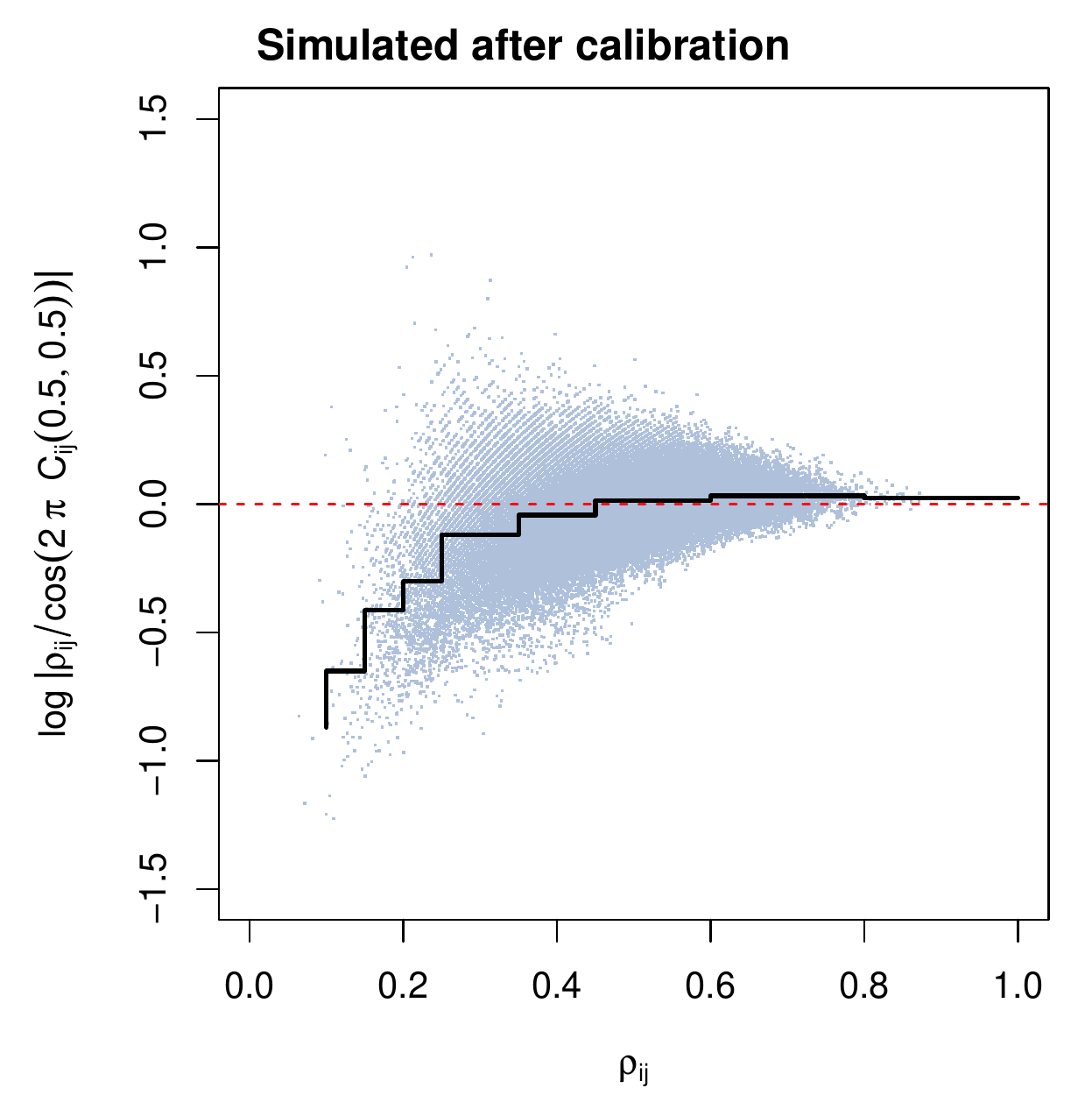}}
    \caption{Calibrated model. $\ln|\rho/\rhoB|$ vs $\rho$, see Eq.~\eqref{eq:hatrho}.}\label{fig:sim_cop}
\end{figure}

\begin{figure}[p]
    \center
    \subfigure[2000--2004]{\includegraphics[scale=.2,trim=750 50 0 50,clip]{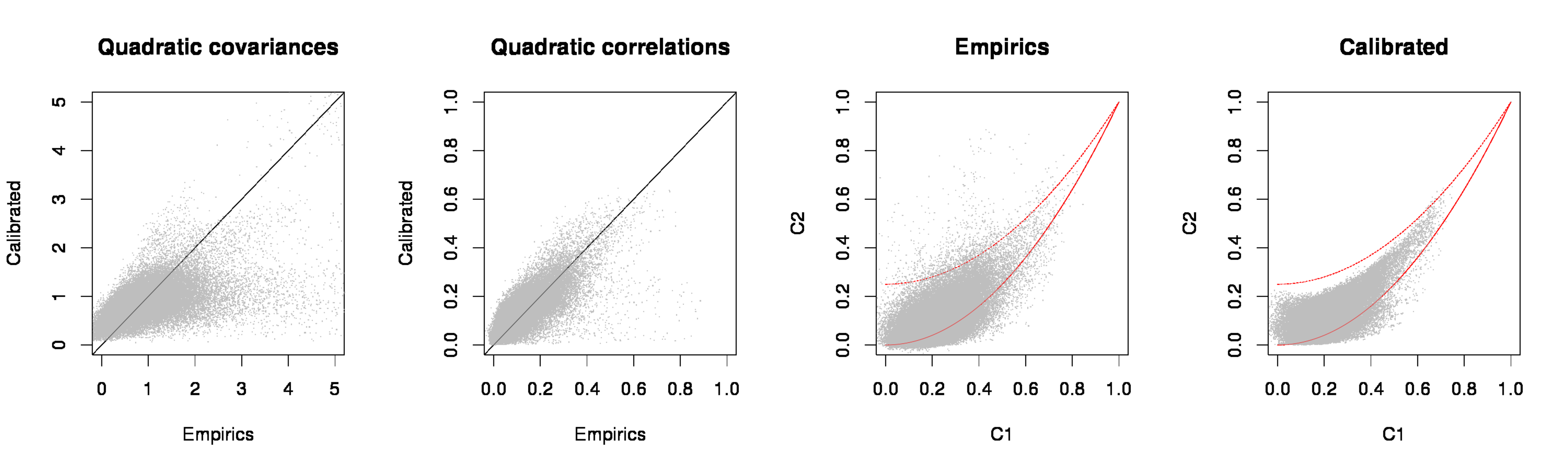}}\\
    \subfigure[2005--2009]{\includegraphics[scale=.2,trim=750 50 0 50,clip]{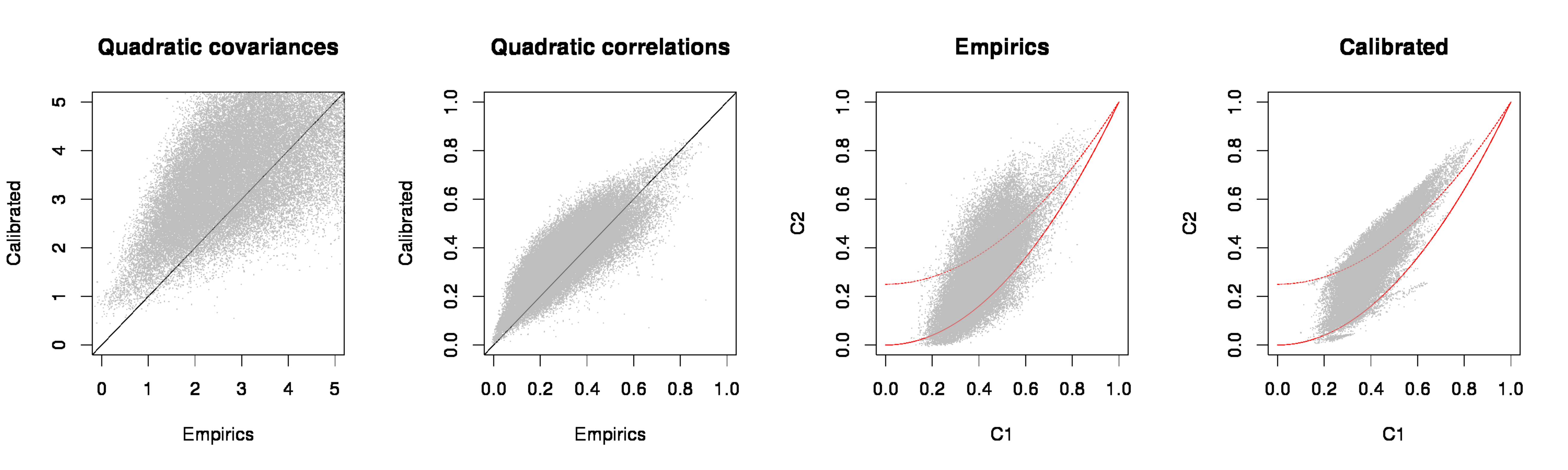}}\\
    \subfigure[2009--2012]{\includegraphics[scale=.2,trim=750 50 0 50,clip]{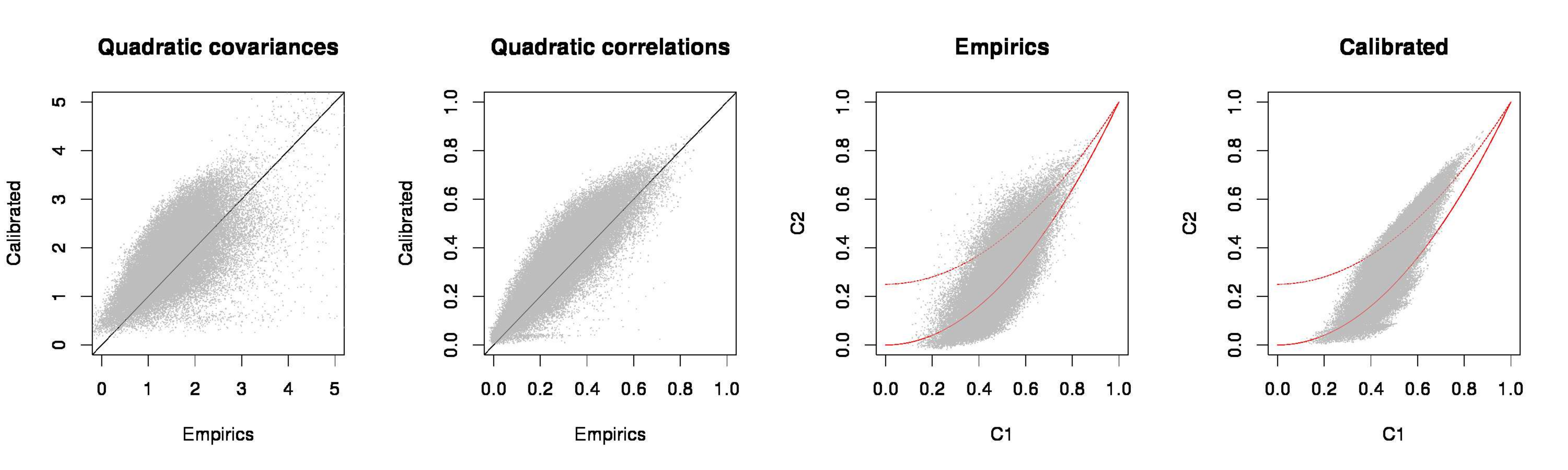}}
    \caption{\textbf{Left: }   calibrated vs sample quadratic correlations. 
             \textbf{Middle: } sample quadratic correlations vs sample linear correlations;
             \textbf{Right: }  calibrated quadratic correlations vs calibrated linear correlations.
             Two benchmark curves are added in red: the Gaussian case (solid) and the Student case with $\nu=5$ d.o.f.\ (dashed)\label{fig:cal_quad}}
\end{figure}
\nomenclature{d.o.f.}{Degree(s) of freedom}

The middle point $\cop(\frac12,\frac12)$ of the copula, which was shown on Fig.~\ref{fig:emp_cop} to be incompatible with any elliptical prediction,
is also better explained by the model.
Although an analytical expression linking  $\cop(\frac12,\frac12)$ to the model parameters is out of reach, 
it is possible to reproduce its predicted value by simulating long time series according to the model with estimated parameters.%
\footnote{The non-Gaussian series of log-volatility $\omega_{t0}$ is generated as independent realizations of a Beta distribution
whose coefficients are determined so that the first four moments match those of $\omega_0$.
This class of distributions allows for negative kurtosis.
It is known that the realizations of volatility exhibit strong persistence, 
a characteristic that our simulated series do not reproduce.
This however does not generate a bias in the obtained coefficients, but rather makes them ``too un-noisy''.
}
Fig.~\ref{fig:sim_cop} shows the obtained coefficients in the same form as the empirical measurements of Fig.~\ref{fig:emp_cop}, 
which they have to be compared with. 
The results are in very good agreement, and emphasize the capacity of our non-Gaussian factor model to cope with the
non-trivial behavior of the medial point of the copula.

This is confirmed and even generalized by the reproduction of the copula along the whole diagonals.
Figs.~\ref{fig:side_cop20002004}, \ref{fig:side_cop20052009} and~\ref{fig:side_cop20092012}
compare empirically measured and model-predicted values of the quantities
\[
    \Delta_{\scriptscriptstyle \text{d}}(p)=\frac{\cop(p,p)-\cop[\text{G}](p,p)}{p\,(1-p)}
    \quad\text{and}\quad
    \Delta_{\scriptscriptstyle \text{a}}(p)=\frac{\cop(p,1-p)-\cop[\text{G}](p,1-p)}{p\,(1-p)}
\]
versus $p$, for several values of the linear correlation and over the three periods.
A direct visual comparison reveals that the main non-trivial qualitative features of the empirical diagonal copulas
are reproduced by our model: the evolution of the concavity as $\rho$ changes, some of the tail behaviors, and the middle-point behavior as already discussed above.
One may note however that asymmetries $u\leftrightarrow 1-u$ are not reproduced by our fully symmetric model.
\begin{sidewaysfigure}
\center
\subfigure[Empirical]{ \includegraphics[scale=.5]{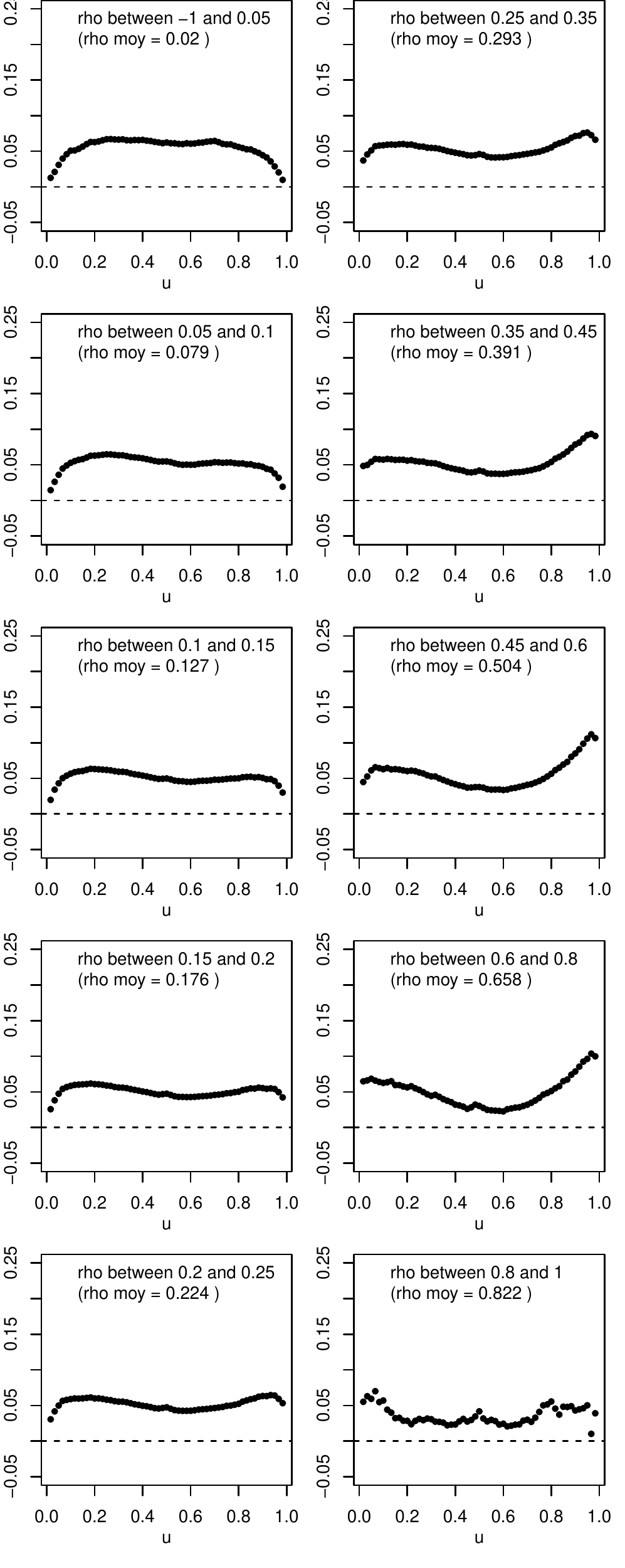}}
\subfigure[Calibrated]{\includegraphics[scale=.5]{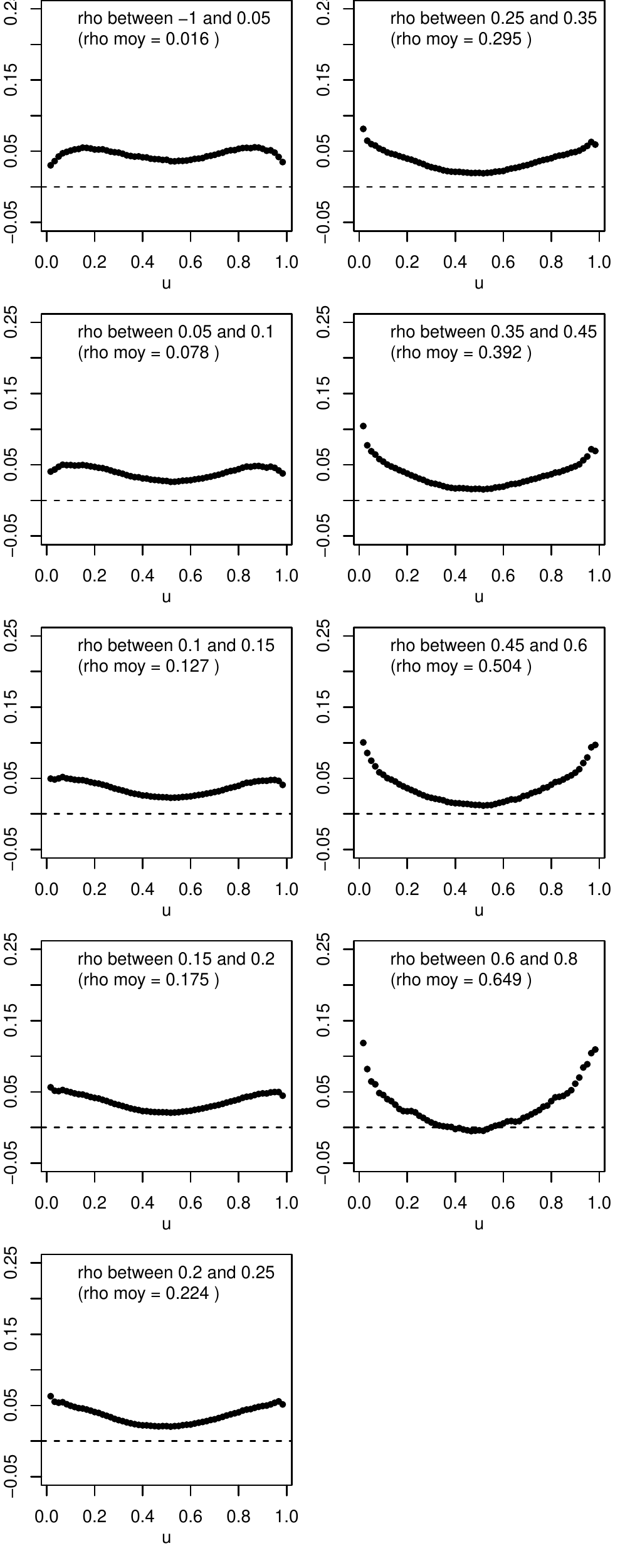}}
\hfill
\subfigure[Empirical]{ \includegraphics[scale=.5]{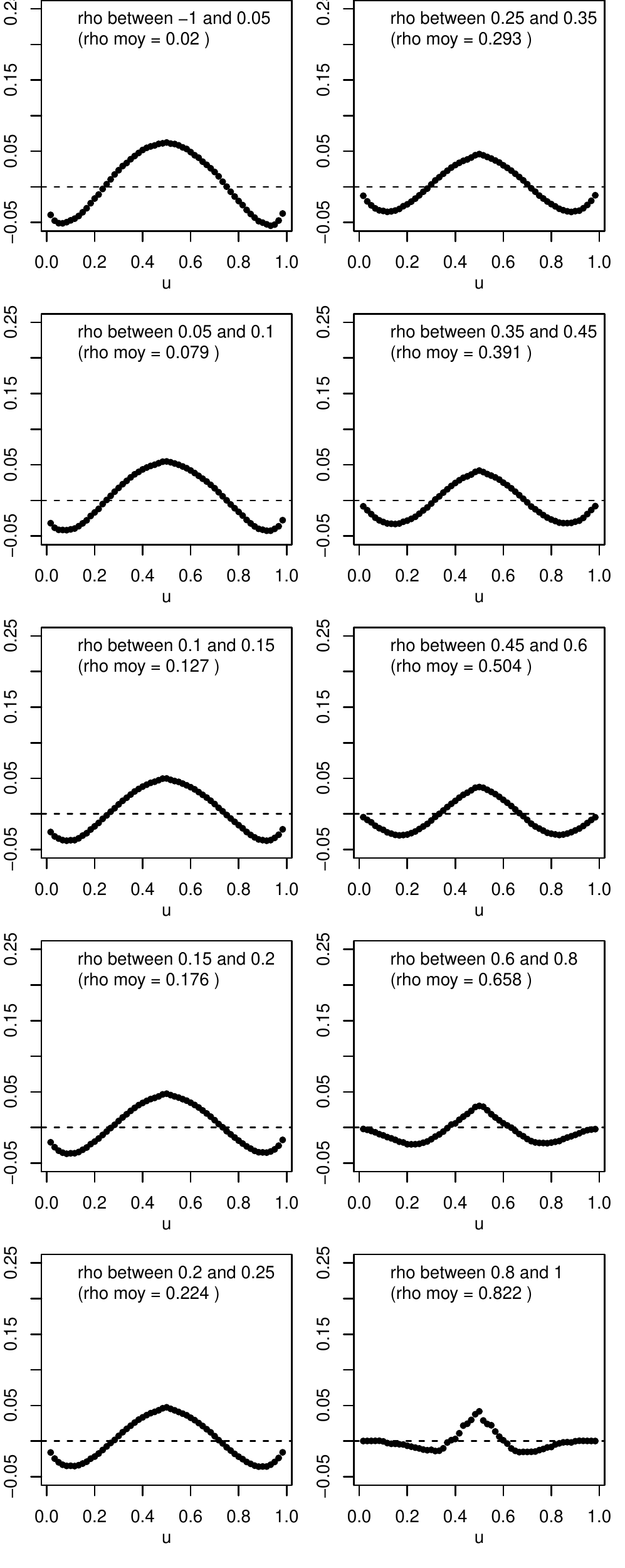}}
\subfigure[Calibrated]{\includegraphics[scale=.5]{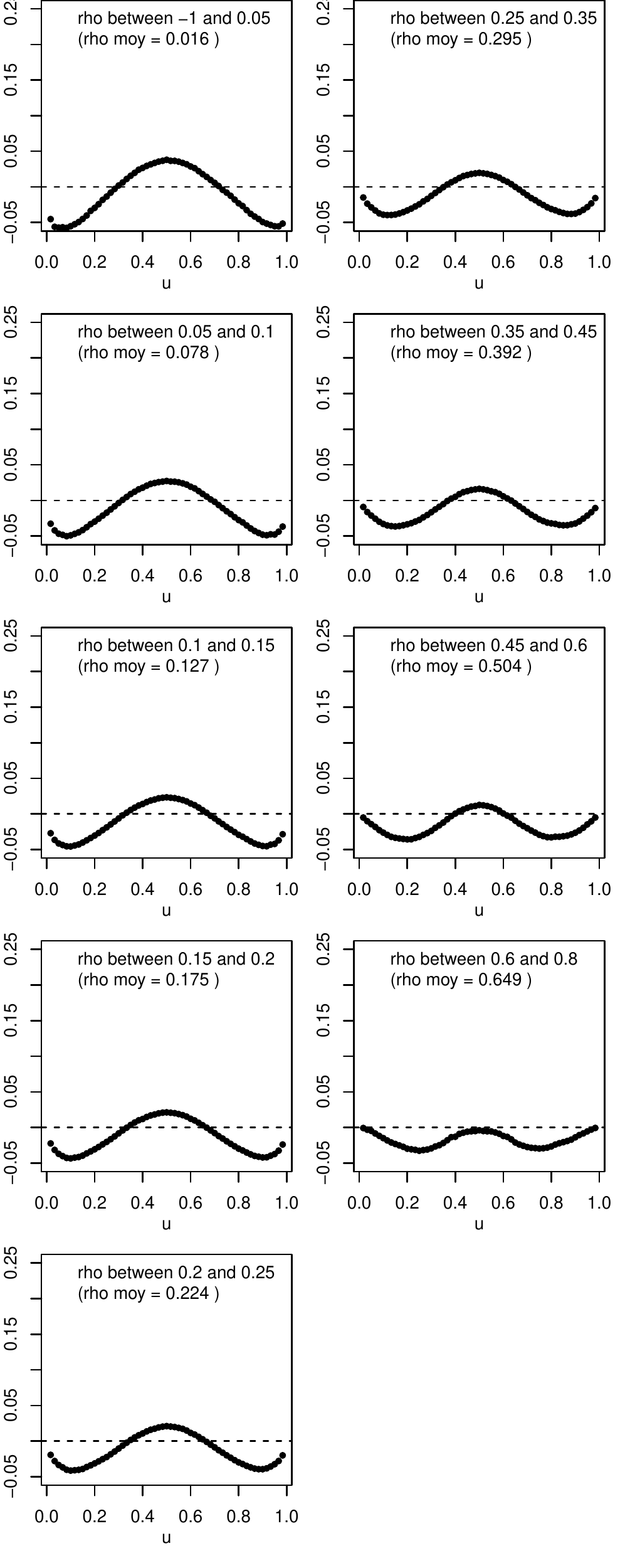}}
\caption{2000--2004, diagonal (left) and anti-diagonal (right)}
\label{fig:side_cop20002004}
\end{sidewaysfigure}
\begin{sidewaysfigure}
\center
\subfigure[Empirical]{ \includegraphics[scale=.5]{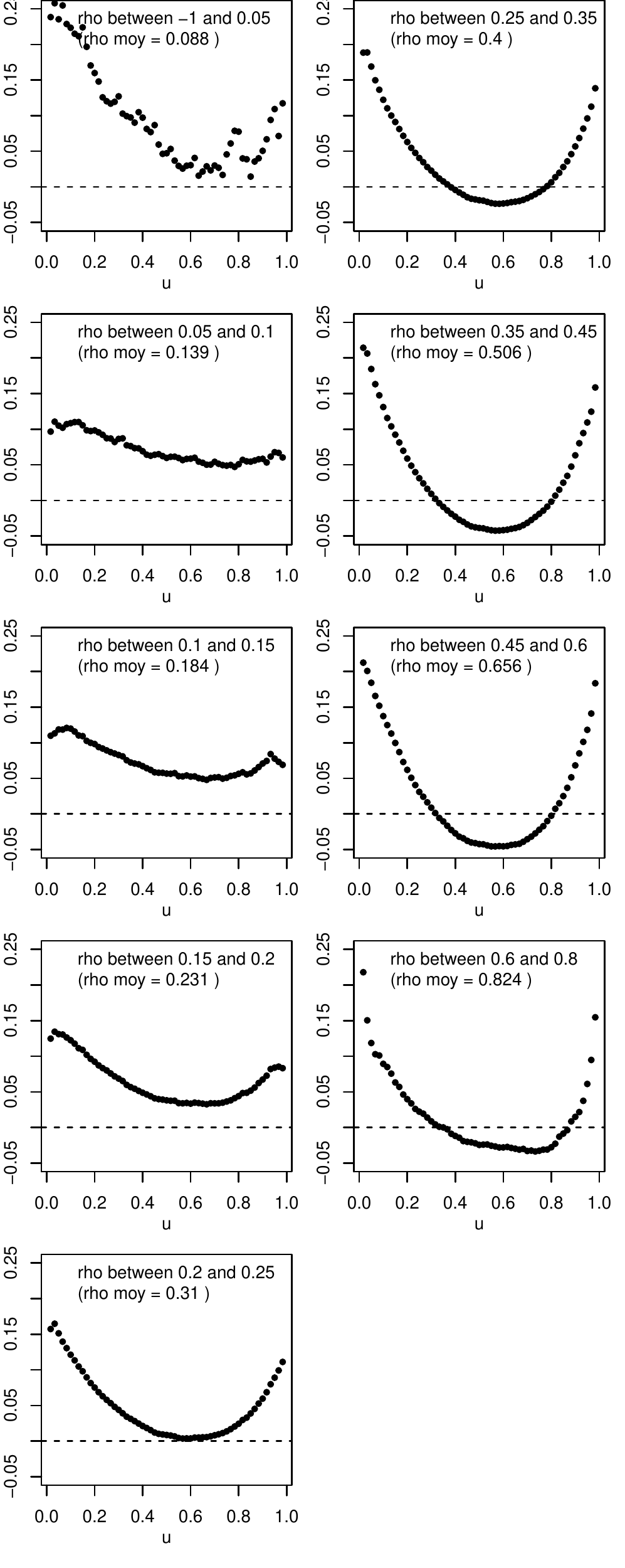}}
\subfigure[Calibrated]{\includegraphics[scale=.5]{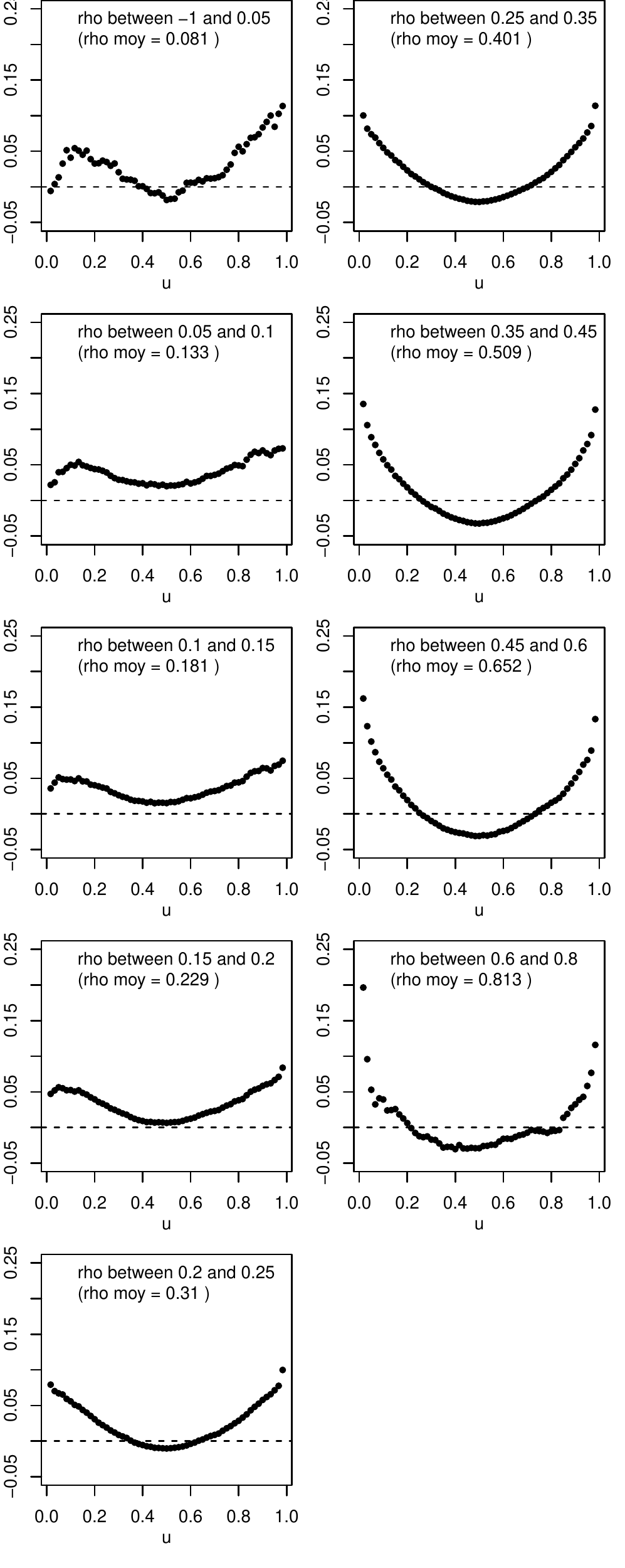}}
\hfill
\subfigure[Empirical]{ \includegraphics[scale=.5]{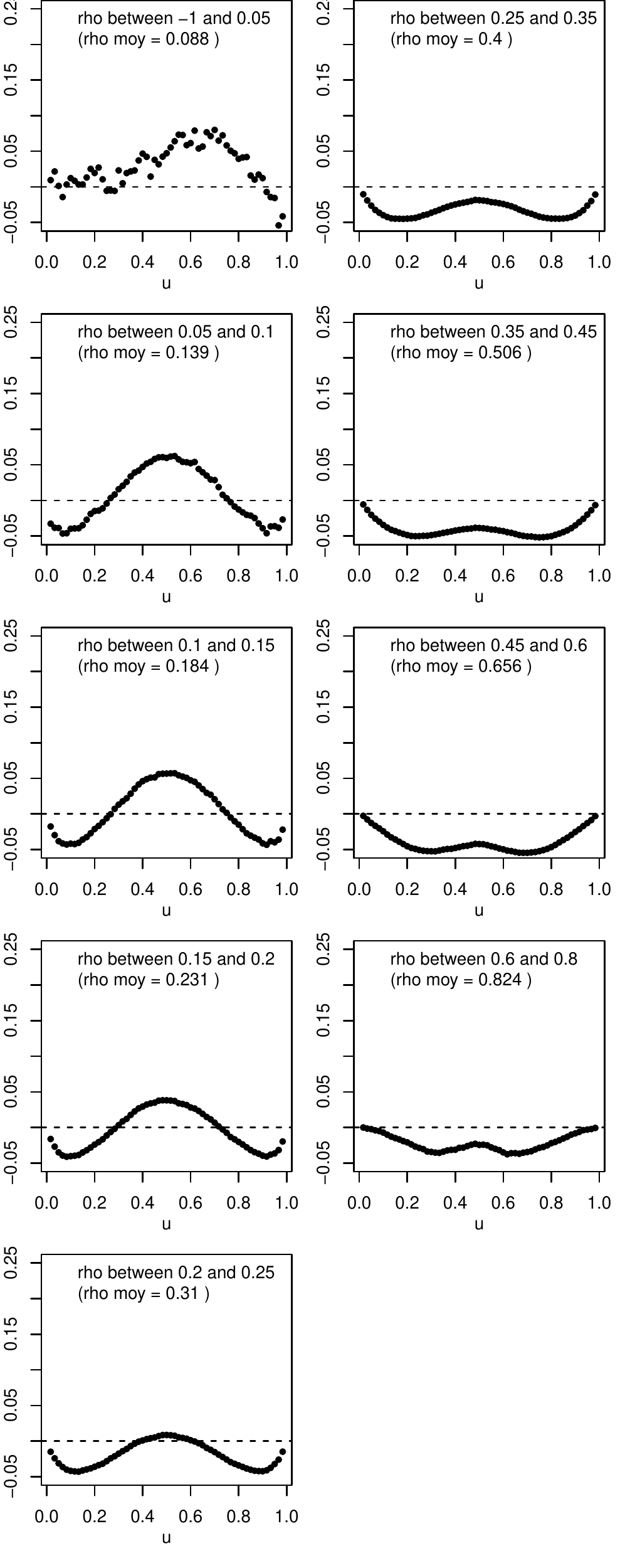}}
\subfigure[Calibrated]{\includegraphics[scale=.5]{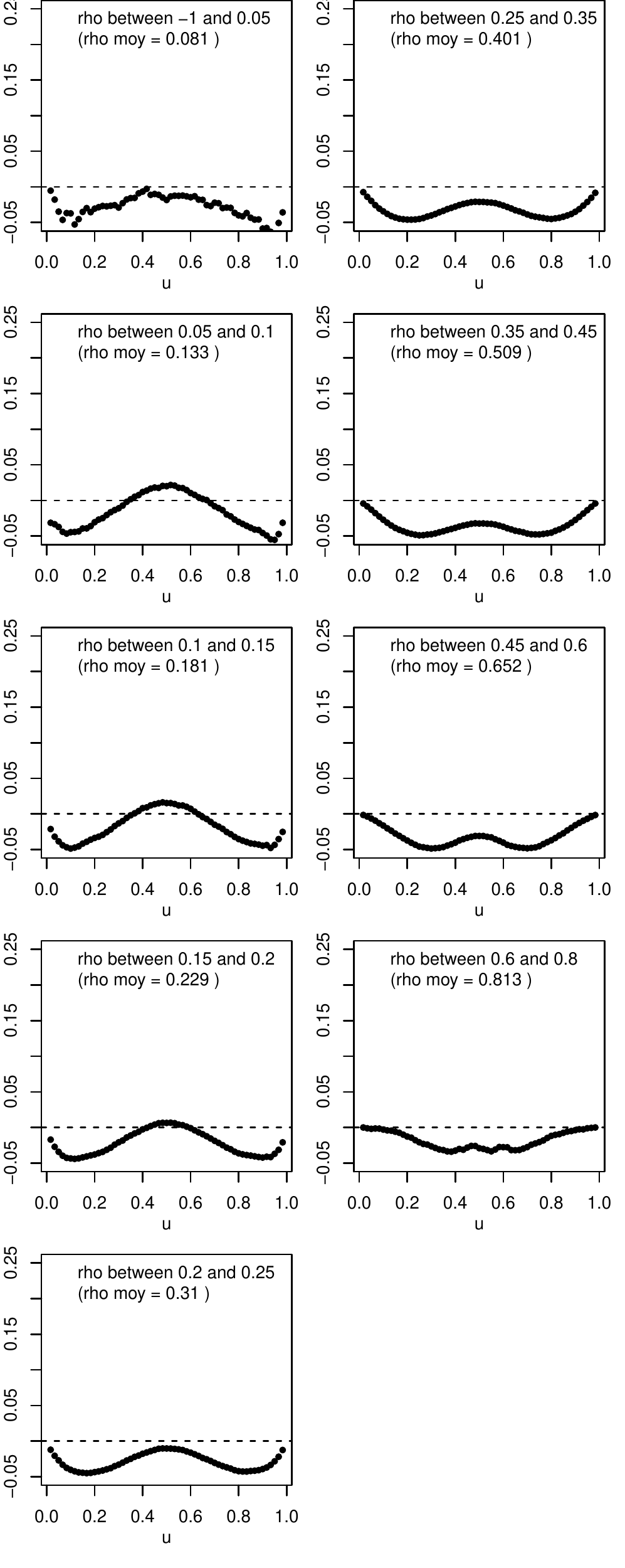}}
\caption{2005--2009, diagonal (left) and anti-diagonal (right)}
\label{fig:side_cop20052009}
\end{sidewaysfigure}
\begin{sidewaysfigure}
\center
\subfigure[Empirical]{ \includegraphics[scale=.5]{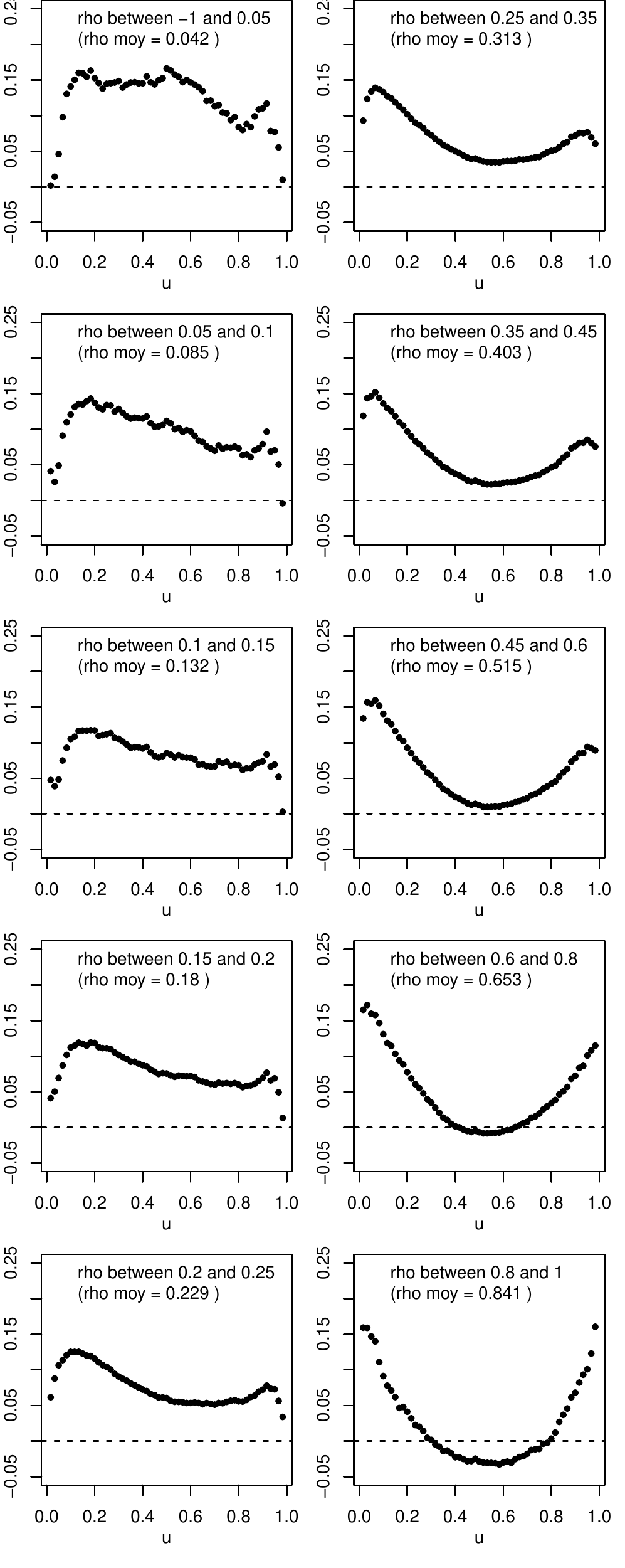}}
\subfigure[Calibrated]{\includegraphics[scale=.5]{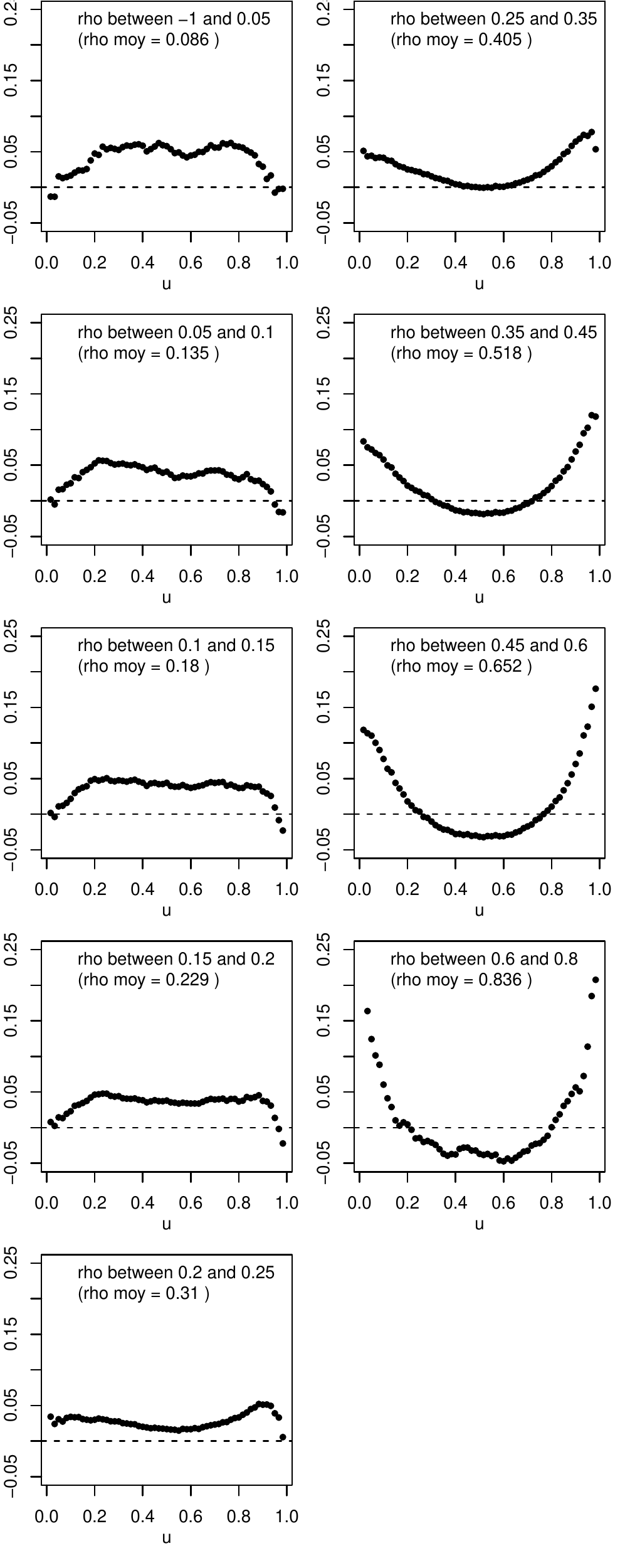}}
\hfill
\subfigure[Empirical]{ \includegraphics[scale=.5]{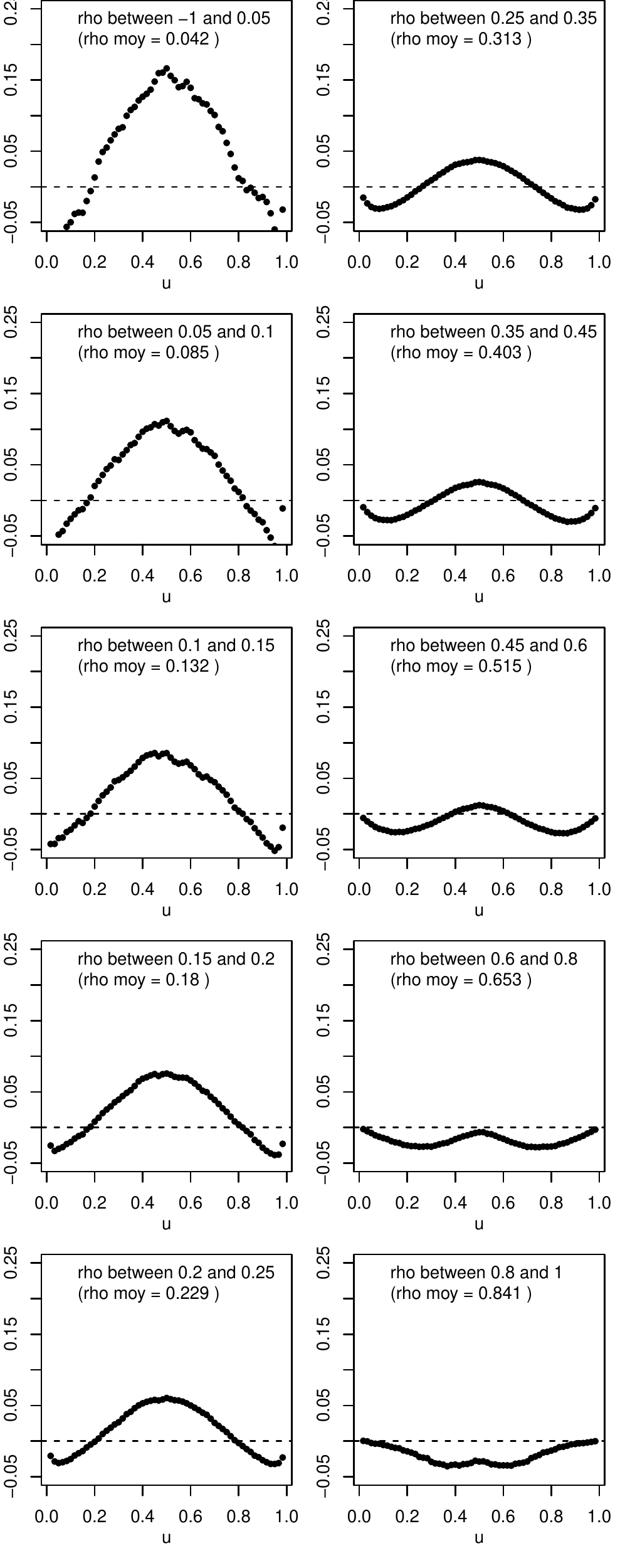}}
\subfigure[Calibrated]{\includegraphics[scale=.5]{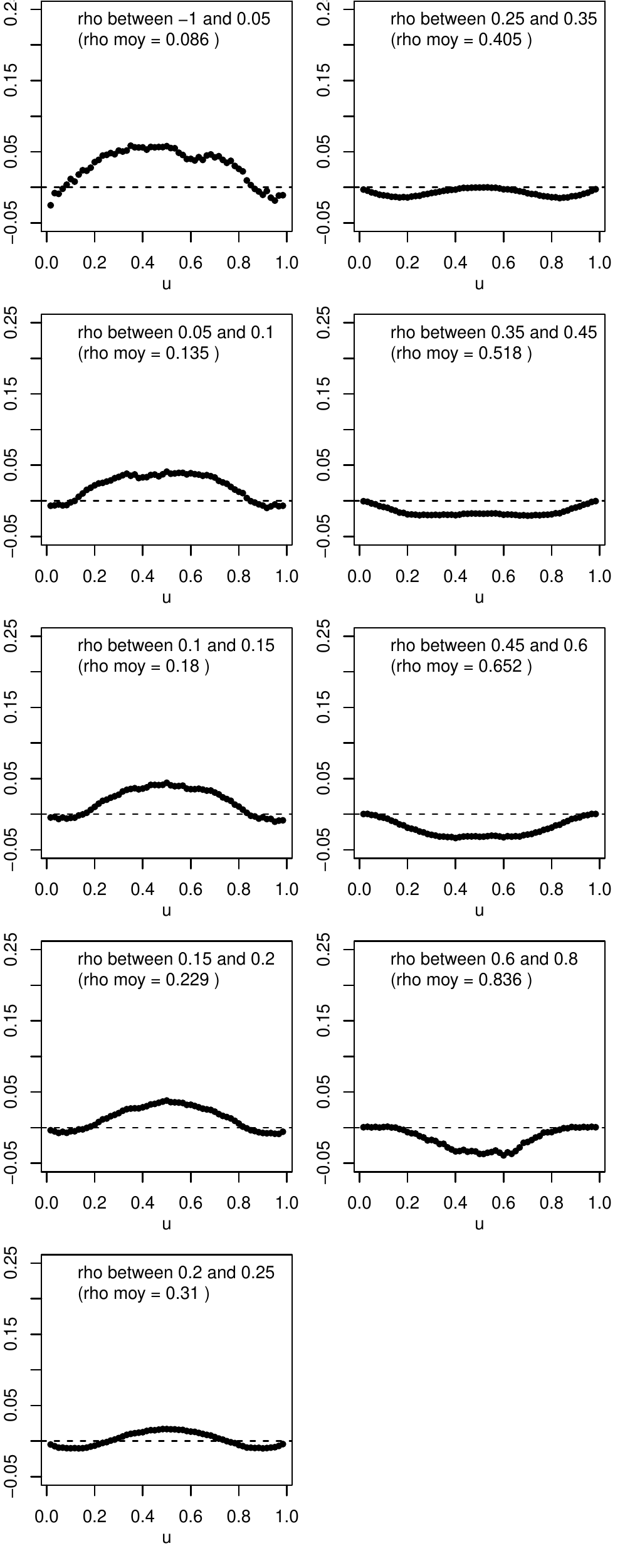}}
\caption{2009--2012, diagonal (left) and anti-diagonal (right)}
\label{fig:side_cop20092012}
\end{sidewaysfigure}

Before proceeding to the out-of-sample evaluation of the model in Sect.~\ref{sec:stability},
we briefly present how the model is improved (though marginally)
by introducing a second volatility driver $\omega_{0'}$, as suggested by the spectral analysis of Sect.~\ref{sec:MFspectral}.

\subsection{Another volatility driver}
The spectral analysis of the factor and residual absolute correlations has revealed
that there exists a small but significant second mode of volatility.
The model~\eqref{eq:model0} can be improved accordingly in order to account for this additional source of collective amplitude fluctuations:
\begin{align}\label{eq:anothervolatility}
    f_k&=\epsilon_k \exp(A_{k0} \omega_0 + A_{k0'} \omega_{0'} + A_{kk} \omega_k)\\
    e_j&=\eta_j     \exp(B_{j0} \omega_0 + B_{j0'} \omega_{0'} + B_{jj} \widetilde\omega_j),
\end{align}
and as a consequence the RHS of Eqs.~\eqref{eq:fkfleiej} get an additional term each, namely 
$\phi_{0'}(A_{k0'},A_{l0'};p)$, $\phi_{0'}(A_{k0'},B_{i0'};p)$ and $\phi_{0'}(B_{i0'},B_{j0'};p)$, respectively.

The whole estimation procedure runs identically.
However, for the determination of the parameters $A_{k0}$ and $A_{k0'}$, 
the reduced number of observations ($M(M-1)/2$ factor-factor correlations, times 8 values of $p$)
provides only a low resolution, and the minimization program does not succeed in distinguishing the two volatility drivers:
it outputs an hybrid where both $\omega_0$ and $\omega_{0'}$ contribute to the same mode
(this issue is not relevant in the determination of the parameters $B_{j0}$, $B_{j0'}$ 
thanks to the sufficient number of observations).
In order to break the degeneracy and ``orthogonalize'' the modes, we add an overlap term $\left(\sum_k A_{k0}A_{k0'}\right)^2$ 
in the cost function Eq.~\eqref{eq:costfct_A}.

As an example, we report in Fig.~\ref{fig:2modes0004} the results for the period 2000--2004.
As expected, the parameters $A_{k0}$ and $A_{k0'}$ are very close to the first two 
eigenvectors of the factor-factor ``log-abs'' correlation matrix (Fig.~\ref{fig:0004EVff}), 
and the parameters $B_{j0}$, $B_{j0'}$ look like the first two eigenvectors 
of the residual-residual  matrix (Fig.~\ref{fig:0004EVee}, averaged over sectors).
Clearly, taking this additional second volatility driver into consideration
improves the theoretical description of the returns.
We illustrate this on Fig.~\ref{fig:0004_2_O} where we show how $\omega_0$ et $\omega_{0'}$ contribute respectively
to the volatility of the market mode of linear correlation, $f_1$.
\begin{figure}
    \center
    \subfigure[$A_{k0}$ and $A_{k0'}$]{\label{fig:0004_2_A}\includegraphics[scale=.5,trim=  0 140 0 0,clip]{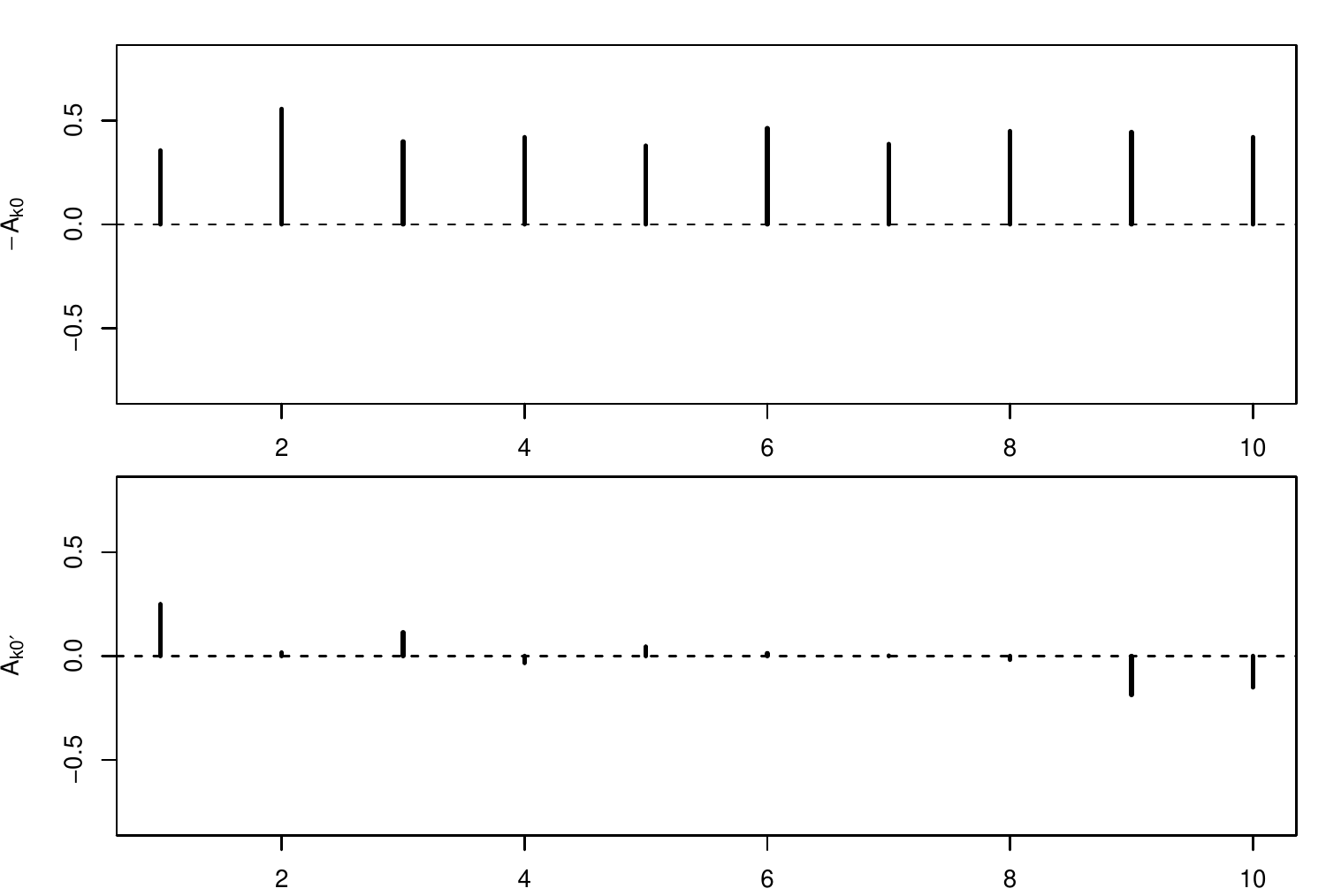}}
    \subfigure[$B_{j0}$ and $B_{j0'}$]{\label{fig:0004_2_B}\includegraphics[scale=.5,trim=  0 140 0 0,clip]{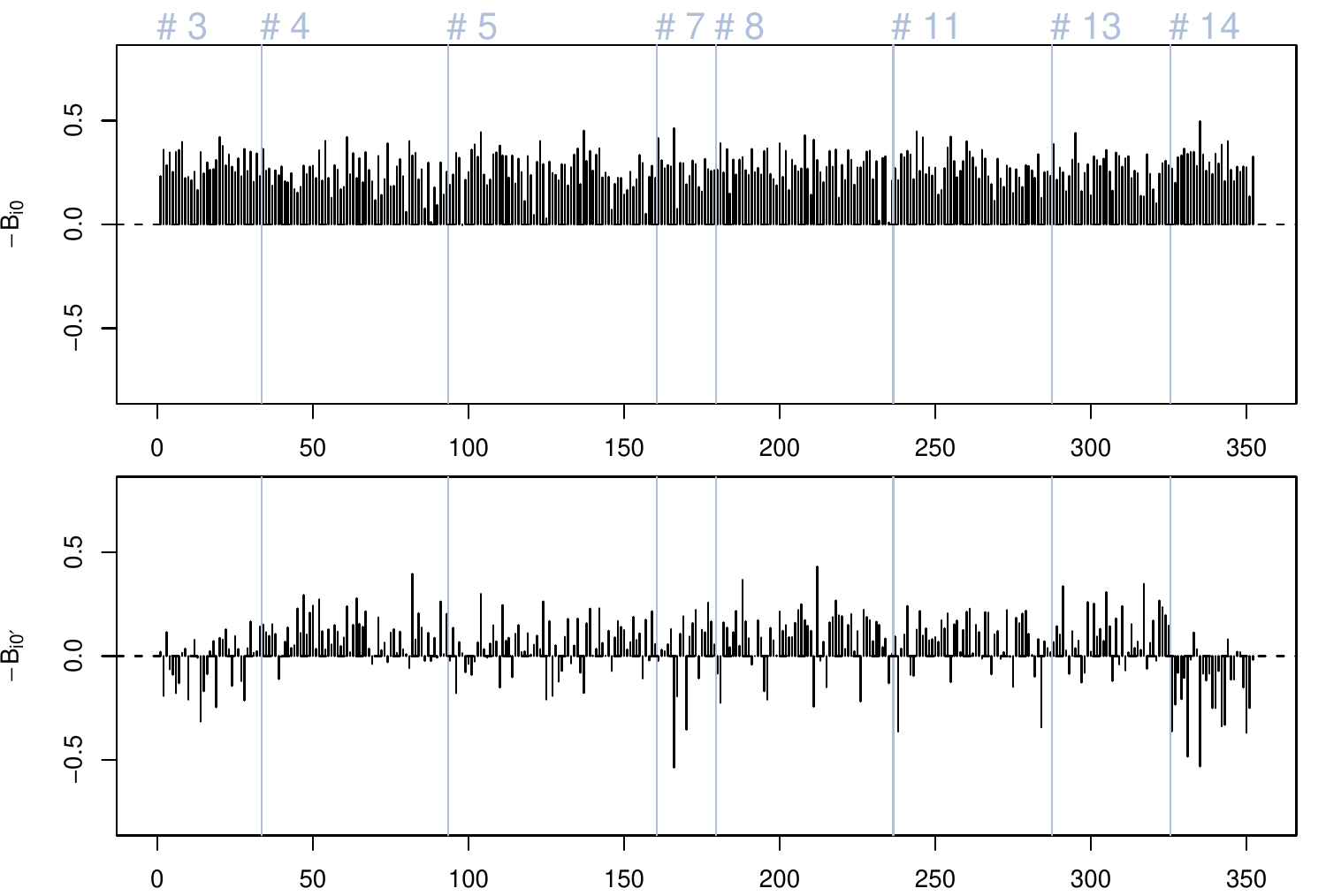}} 
    \vfill
    \subfigure[Index amplitudes reproduced with two volatility drivers $\omega_0$ and $\omega_{0'}$, to be compared with Fig.~\ref{fig:firstfact1mode}]{\label{fig:0004_2_O}\includegraphics[scale=1,trim=220 0 0 0,clip]{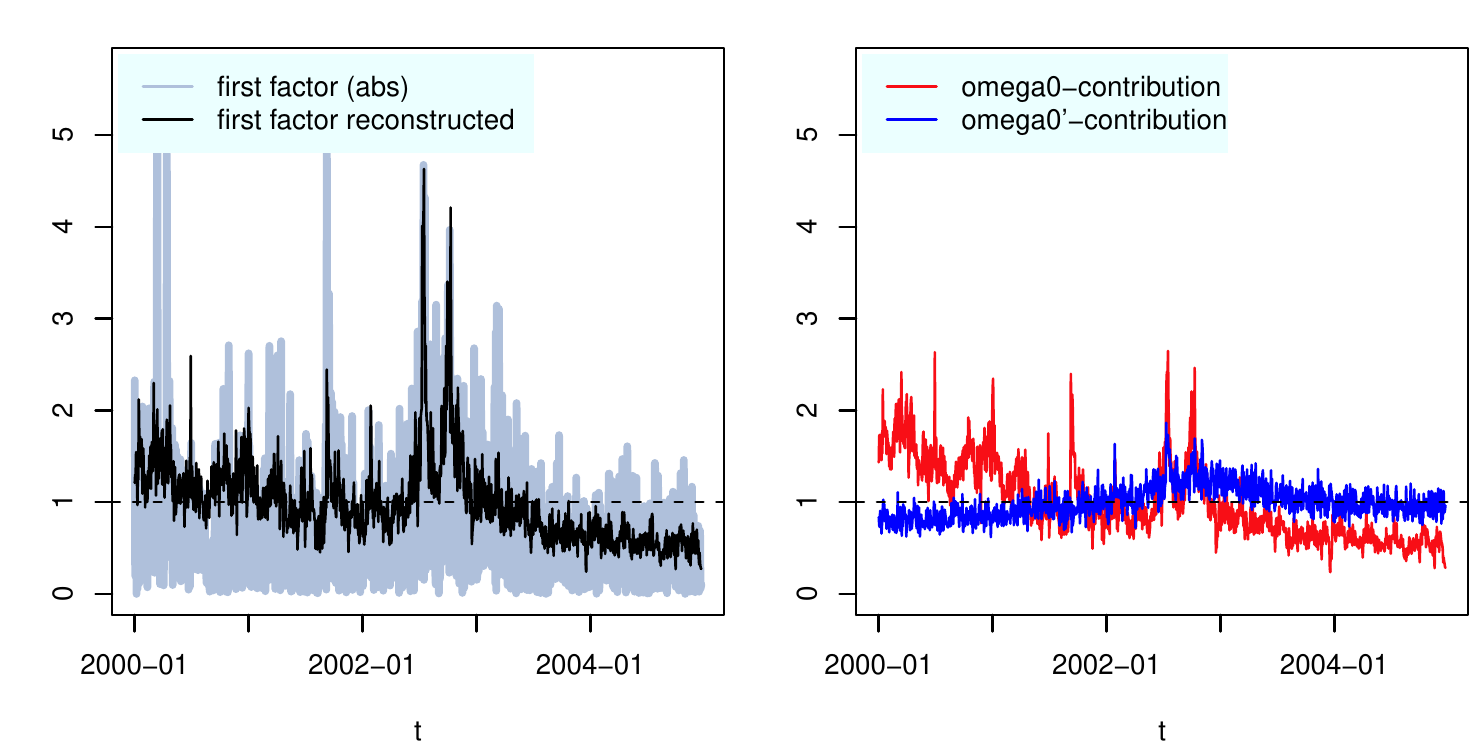}}
    \caption{$M=10$, 2000--2004}\label{fig:2modes0004}
\end{figure}

\section{Out-of-sample analysis}\label{sec:stability}
All the results presented above are ``in-sample'', 
in the sense that we have shown the predicted dependence coefficients with estimated parameters on a period
and compared them to the realized coefficients in that same period.
The ultimate test for a model that aims at describing joint financial returns (and more generally of any risk model), 
is to perform good ``out-of-sample'', i.e.\ use a model calibrated on a period to predict a (average) behavior in a \emph{subsequent} period.
Refer to Ref.~\cite{potters2009financial}!

      We consider a long period 2000--2009 on which we perform an In-sample/Out-of-sample analysis over
      sliding windows ($N=262$ returns series are kept, see Tab.~\ref{tab:sectors}).
      We rely on the procedure used in Ref.~\cite{potters2009financial}.
\begin{enumerate}
\item The model is calibrated in windows of $T^{\scriptscriptstyle \text{IS}}=2N=524$ days.
      An optimal portfolio is built and a corresponding risk measure is computed over the window used for estimation:
      this is the In-sample risk. 
      We consider below two kinds of risks corresponding to two different portfolios:
      (i)  the quadratic risk of a basket of returns, that will assess the quality of the \emph{linear} elements of the model; and
      (ii) the quadratic risk of a basket of (centered and normalized) absolute returns, that will assess the quality of the \emph{volatility} description of the model.
\item The same risk measures are computed Out-of-sample on a small period of $T^{\scriptscriptstyle \text{OS}}=59$ days (three months) following the estimation period.
\item The sliding lags are chosen so that the control samples are non-overlapping,
      i.e.\ at dates $\tau=T^{\scriptscriptstyle \text{IS}}+n\times T^{\scriptscriptstyle \text{OS}}+1$, $n=0,1,2,\ldots$.
      Sliding windows will be indexed with parenthesis notation `$(\tau)$', 
      in order to avoid confusion with regular time stamps $t$ of the running dates.
\end{enumerate}

\subsection{Linear correlations}
For a given covariance matrix $\rho$, optimal portfolio weights can be computed in the sense of Markowitz:
\begin{equation}\label{eq:markowitz_w}
                \vect{w}^*(\tau) =\frac{\rho^{-1}\vect{g}(\tau)}{\vect{g}(\tau)^\dagger\rho^{-1}\vect{g}(\tau)}
\end{equation}
where we consider an omniscient stationary predictor of returns
\begin{equation}\label{eq:markowitz_g}
g_i(\tau) =\frac{\Ret_{\tau i}}{\sqrt{\frac{1}{N}\sum_j \Ret_{\tau j}^2}}
\end{equation}
and a unit total gain $\mathcal{G}\equiv\vect{g}^\dagger\vect{w}^*=1$. 
This means that the in-sample/out-of-sample test procedure applied below is intended to measure only {risk}
and not the risk-return trade-off (Sharp ratio) as is usual e.g.\ when back-testing financial strategies.
Indeed what we ultimately want to conclude is whether our model of stock returns 
allows to have a better view of dependences and thus to better diversify away the risk 
(since we work with normalized returns, we are not concerned with individual variances but only care for dependences).

Quadratic risk, is essentially a measure of expected small fluctuations of the portfolio value:
\begin{equation}\label{eq:def_risk}
    \mathcal{R}^2(\tau)=\frac{1}{T'}\sum_{t'}\frac{1}{N}\sum_{i=1}^N\left[\Ret_{t'i}\frac{w_i^*(\tau)}{\sigma_i^{\scriptscriptstyle \text{IS}}(\tau)}\right]^2
\end{equation}
where, for later convenience, the returns are normalized by a rolling in-sample estimate of their volatility
$\sigma^{\scriptscriptstyle \text{IS}}(\tau)$ 
(although the returns have been normalized over the whole period, they may not be close to unit-variance in-sample
because of low-frequency regime switches in the volatility).
This risk is computed both 
         in-sample (in which case $T'=T^{\scriptscriptstyle \text{IS}}$ and $t'=\tau-T', \ldots, \tau-1 $) 
 and out of sample ($T'=T^{\scriptscriptstyle \text{OS}}$ and $t'=\tau+1 , \ldots, \tau+T'$),
 for different input correlation matrices in Eq.~\eqref{eq:markowitz_w}:
 \begin{itemize}
 \item{Empirical: }     the in-sample raw correlation matrix, $$\displaystyle \rod[1]_\text{Emp}(\tau)=\frac{1}{T^{\scriptscriptstyle \text{IS}}}\sum_{t'=\tau-T^{\scriptscriptstyle \text{IS}}}^{\tau-1}\Ret_{t'\cdot}\cdot\Ret_{t'\cdot};$$
 \item{Clipped: }       the PCA solution of Eq.~\eqref{eq:X_PCA}, $\rod[1]_\text{Clip}(\tau)=\Wei_\text{PCA}(\tau)^\dagger\Wei_\text{PCA}(\tau)$, for several values of the number of retained modes $M$ with largest eigenvalues;
 \item{MultiFactor: }   the improved solution of Eq.~\eqref{eq:offdiag_content}, $\rod[1]_\text{MF}(\tau)=\Wei(\tau)^\dagger\Wei(\tau)$, for several values of the number of factors $M$.
 \end{itemize}

 All these scenarios can furthermore be compared to the benchmark of a full-rank pure noise Wishart correlation matrix.
 In this case, Random Matrix Theory predicts the values of the average in-sample and out-of-sample risks,
 in the limit of large matrices with quality factor $q=N/T^{\scriptscriptstyle \text{IS}}$:
 \[
    \vev{\mathcal{R}^2_\text{RMT}}_\text{IS}=\mathcal{R}^2_\text{True} \cdot (1-q)
    \qquad\text{and}\qquad
    \vev{\mathcal{R}^2_\text{RMT}}_\text{OS}=\mathcal{R}^2_\text{True}  \,/\,(1-q).
 \]
 Moreover, the true risk (i.e.\ the value of $\mathcal{R}^2$ when the optimal weights are determined using the correlation matrix 
 of the process that generates the realized returns $\Ret_{\cdot i}/\sigma_i^{\scriptscriptstyle \text{IS}}$) 
 can be shown to be $\mathcal{R}^2_\text{True}=1$ with the definition \eqref{eq:def_risk}.
 
We show graphically the results of the testing procedure on Fig.~\ref{fig:ISOS_lin}:
In-sample and Out-of-sample average risks of every cleaning scheme are plotted versus the control parameter $\alpha=M/N$,
 where averages are performed over the sliding windows $(\tau)$.

When only a very reduced number of factors ($M\approx 1,2,3$) is kept, eigenvalue clipping performs better (although quite bad), 
and similarly when keeping also the very last modes: 
this is because the linear factors are only good when the eigenmodes are statistically significant, on the left and right of the RMT noise bulk.
In the limit $\alpha\to 1$ (i.e.\ $M=N$) all cleaning schemes collapse to the risk values associated with
the raw sample correlation matrix (black solid level line).
The RMT predictions are shown for reference as dashed grey lines.

In intermediate values instead
(where the Out-of-sample risk is minimal because of marginal gain in signal being higher than marginal risk increase due to added ``false positive information''),
the procedure worked out in Sect.~\ref{sec:lincor_MF} to better estimate the linear correlations 
provides an improved determination of average Out-of-sample risk.
In fact, Fig.~\ref{fig:ISOS_lin_gain} shows that the relative gain
\begin{equation}\label{eq:rel_gain}
    [\vev{\mathcal{R}^2_{\text{Clip}}}-\vev{\mathcal{R}^2_{\text{MF}}}]/[\vev{\mathcal{R}^2_{\text{Clip}}}-\mathcal{R}^2_{\text{True}}]
\end{equation}
 can reach up to 25\%, while not dramatically increasing over-fitting (the In-sample risk is only slightly artificially lowered).

\begin{figure}
    \center
    \includegraphics[scale=0.6]{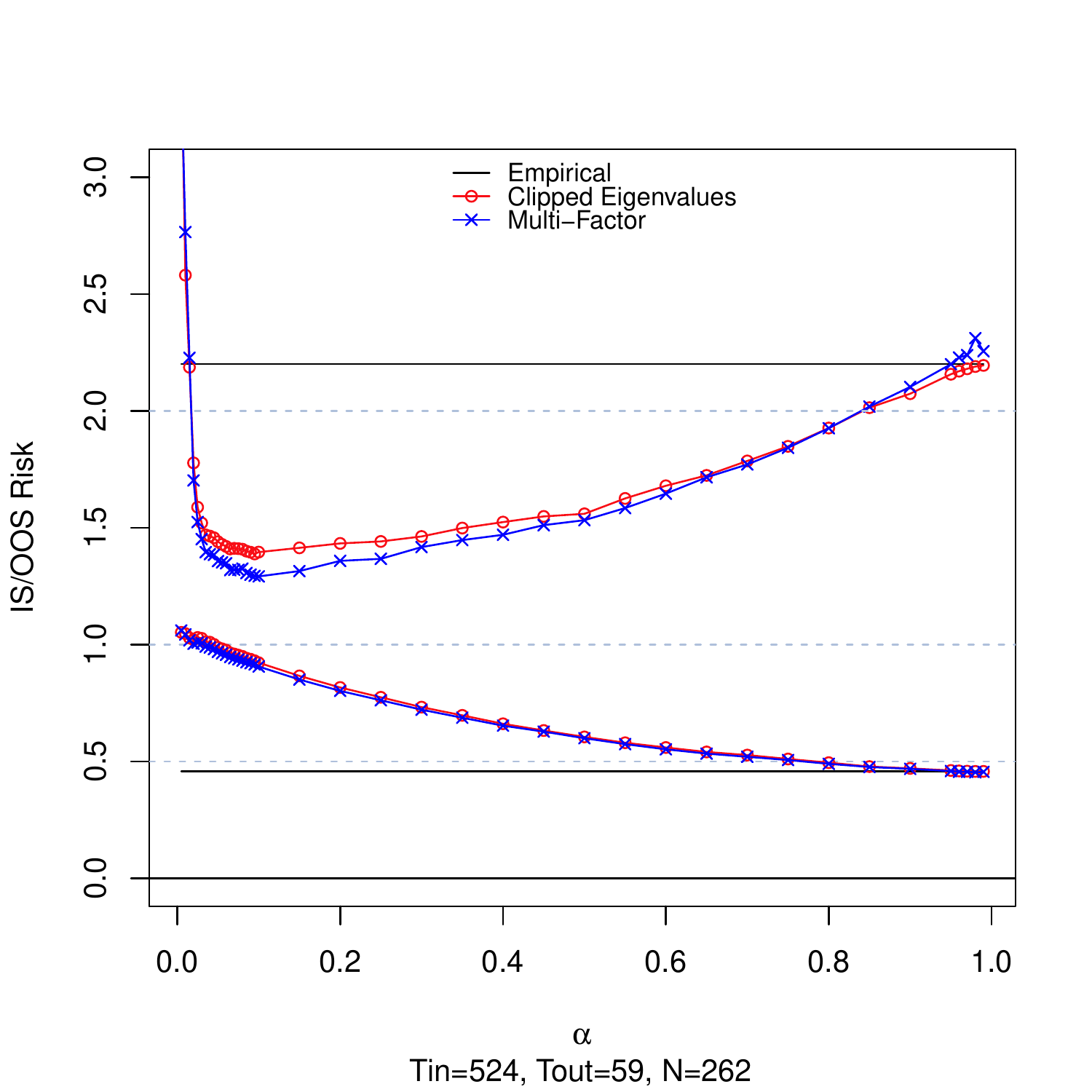}
    \caption{Linear correlations: In-sample risk (lower curves) and out-of-sample risk (upper curves) defined in Eq.~\eqref{eq:def_risk}
    and averaged over sliding windows in 2000--2009, 
    for two cleaning schemes: eigenvalue clipping (red circles) and calibrated multi-factor model (blue crosses), 
    both with $M=\alpha N$ linear factors.}
    \label{fig:ISOS_lin}
    \includegraphics[scale=0.6]{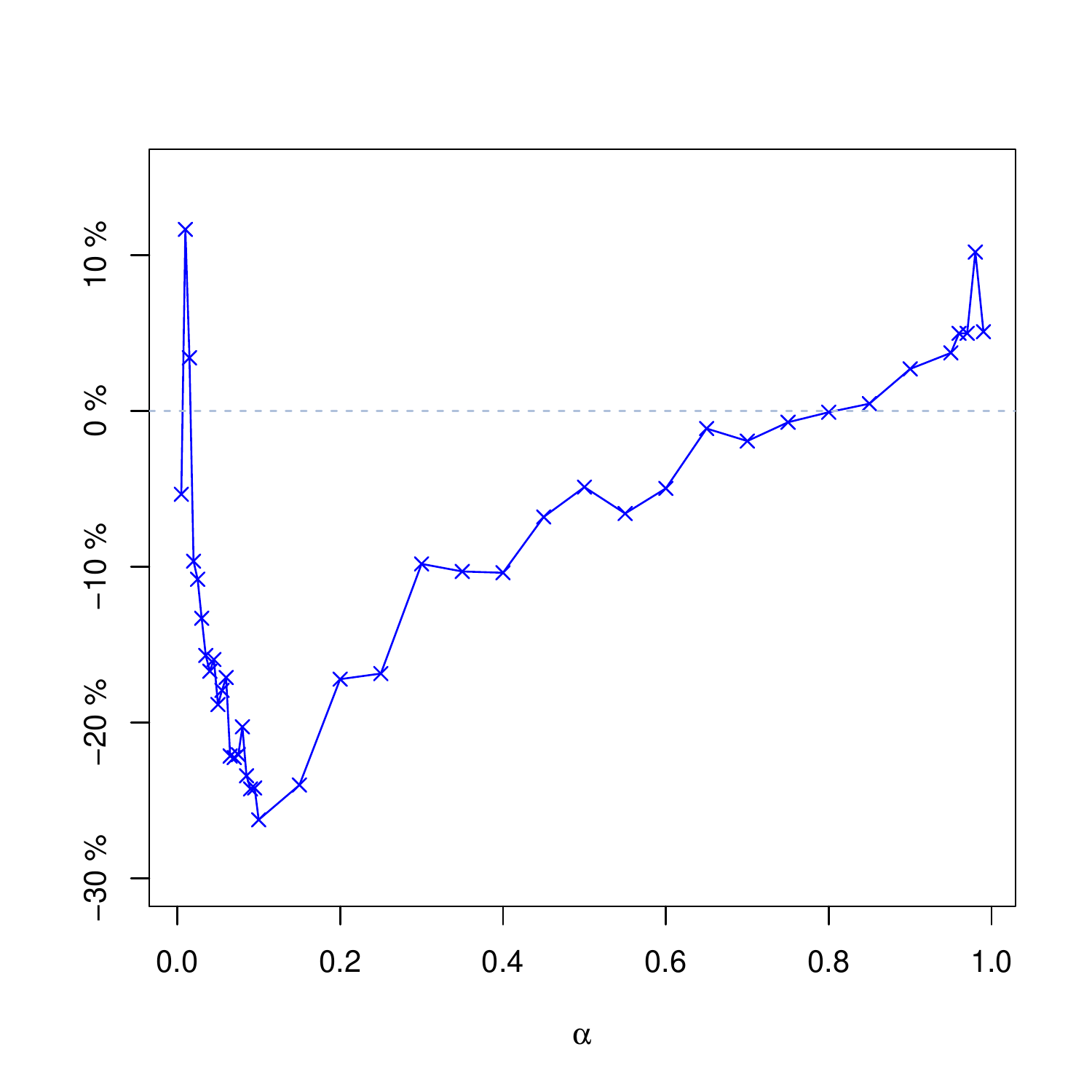}
    \caption{Linear correlations: Out-of-sample risk is lowered by more than $25\%$.
    The location of the minimum coincides with that of minimum risk ($\alpha\approx 0.1$).}
    \label{fig:ISOS_lin_gain}
\end{figure}


\subsection{Absolute correlations}
We now turn to the core properties of the model: the non-linear dependences.
We have already shown that the model is able to reproduce, after calibration, several empirically observed quantities
like the copula,
and want now to perform an out-of-sample assessment of the volatility-driven dependence in the absolute correlations
$\vev{\mat{Y}_{ti}\mat{Y}_{tj}}_t$, where
\[
    \mat{Y}_{ti} = \frac{|\Ret_{ti}|-\vev{|\Ret_{ti}|}_t}{\sqrt{\left(|\Ret_{ti}|-\vev{|\Ret_{ti}|}_t\right)^2}}
\]
The definitions of the gain predictor $\vect{g}$ and the risk measure $\mathcal{R}^2(\tau)$ are identical
to Eqs.~\eqref{eq:markowitz_g} and \eqref{eq:def_risk} respectively, with $\mat{Y}$ in place of $\Ret$.
The different cleaning schemes considered are:
 \begin{itemize}
 \item{Empirical: }     the in-sample raw correlation matrix, $$\displaystyle \rod[\text{a}]_\text{Emp}(\tau)=\frac{1}{T^{\scriptscriptstyle \text{IS}}}\sum_{t'=\tau-T^{\scriptscriptstyle \text{IS}}}^{\tau-1}\mat{Y}_{t'\cdot}\cdot\mat{Y}_{t'\cdot};$$
 \item{Clipped: }       the PCA solution $\rod[\text{a}]_\text{Clip}(\tau)=\mat{V}_{M|}\Lambda_{M|}\mat{V}_{M|}^\dagger$, keeping the $M$ eigenmodes of $\rod[\text{a}]_\text{Emp}$ with largest eigenvalues;
 \item{Gaussian factors: }   the Gaussian prediction $\rod[\text{a}]_\text{MFG}(\tau)$ obtained as the sample absolute correlations of long time series simulated
                             according to the $M$-factors model \eqref{eq:anothervolatility} where all volatility parameters $A$ and $B$ are set to 0.
 \item{Multifactor (model):} the model prediction $\rod[\text{a}]_\text{MFnG}(\tau)$ obtained as the sample absolute correlations of long time series simulated
                             according to the $M$-factors model \eqref{eq:anothervolatility} calibrated with the volatility parameters (for each of the two drivers) turned on.
 \end{itemize}
Notice that the meaning of $M$ is not comparable in all cleaning schemes:
while for the ``clipped eigenvalues'' it corresponds to the number of relevant modes in the 
matrix of \emph{absolute correlations}, 
for the multi-factor models it instead counts the the number of \emph{linear} factors.
This can be seen immediately on Fig.~\ref{fig:ISOS_abs}, where the red curve corresponding to ``Clipping''
has the usual U-shape discussed above, while the blue curves corresponding to ``multi-factor'' saturate as $\alpha$ increases above $\approx 0.1$,
a threshold above which letting additional linear factors barely affects the volatility dependences.

More importantly, this figure shows that multi-factor models offer a better optimal Out-of-sample risk
together with less In-sample over-fitting.
(Notice that the RMT benchmark is not justified in the case of absolute returns, which are much skewed).
The role of volatility dependences is put forward by the better performance of the final non-Gaussian
multi-factor level over the Gaussian multi-factor cleaning scheme: Fig.~\ref{fig:ISOS_abs_gain} shows
that while the Gaussian model offers an Out-of-sample risk about 20--25\% lower than eigenvalue clipping,
the non-Gaussian model performs up to 50\% better, at the location of the minimal risk ($\alpha\approx 0.1$).
These figures are calculated using the risk difference defined in Eq.~\eqref{eq:rel_gain},
but with a reference $\mathcal{R}^2_\text{ref}=1.5$ 
(arbitrarily chosen between the highest average In-sample risk and the lowest Out-of-sample risk),
since the true risk for absolute returns is not 1.
The absolute numbers are not informative, but we also show in inset of Fig.~\ref{fig:ISOS_abs_gain}
the relative over-performance of the non-Gaussian multi-factor model over the Gaussian multi-factor model:
\[
        [\vev{\mathcal{R}^2_{\text{MFnG}}}-\vev{\mathcal{R}^2_{\text{MFG}}}]/[\vev{\mathcal{R}^2_{\text{MFnG}}}-\mathcal{R}^2_{\text{ref}}].
\]
The improvement is maximal again around $\alpha\approx 0.1$. 

\begin{figure}
    \center
    \includegraphics[scale=0.6]{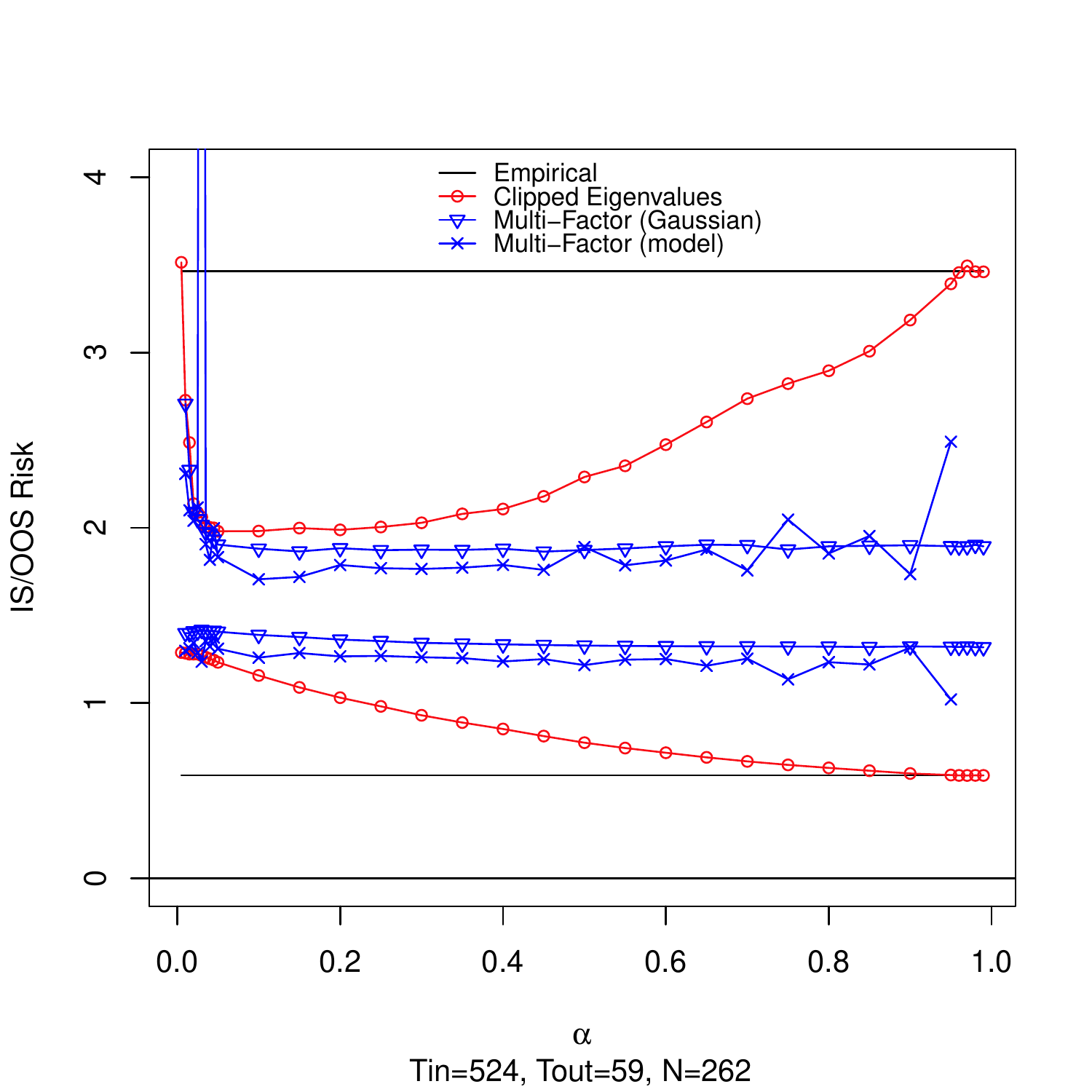}
    \caption{Absolute correlations: In-sample risk (lower curves) and out-of-sample risk (upper curves) defined in Eq.~\eqref{eq:def_risk}
    and averaged over sliding windows in 2000--2009, 
    for three cleaning schemes: eigenvalue clipping of $M=\alpha N$ modes of quadratic correlation (red circles),
    Gaussian multi-factor (blue triangles) and calibrated multi-factor model (blue crosses), 
    both with $M=\alpha N$ linear factors.}
    \label{fig:ISOS_abs}
    \includegraphics[scale=0.6]{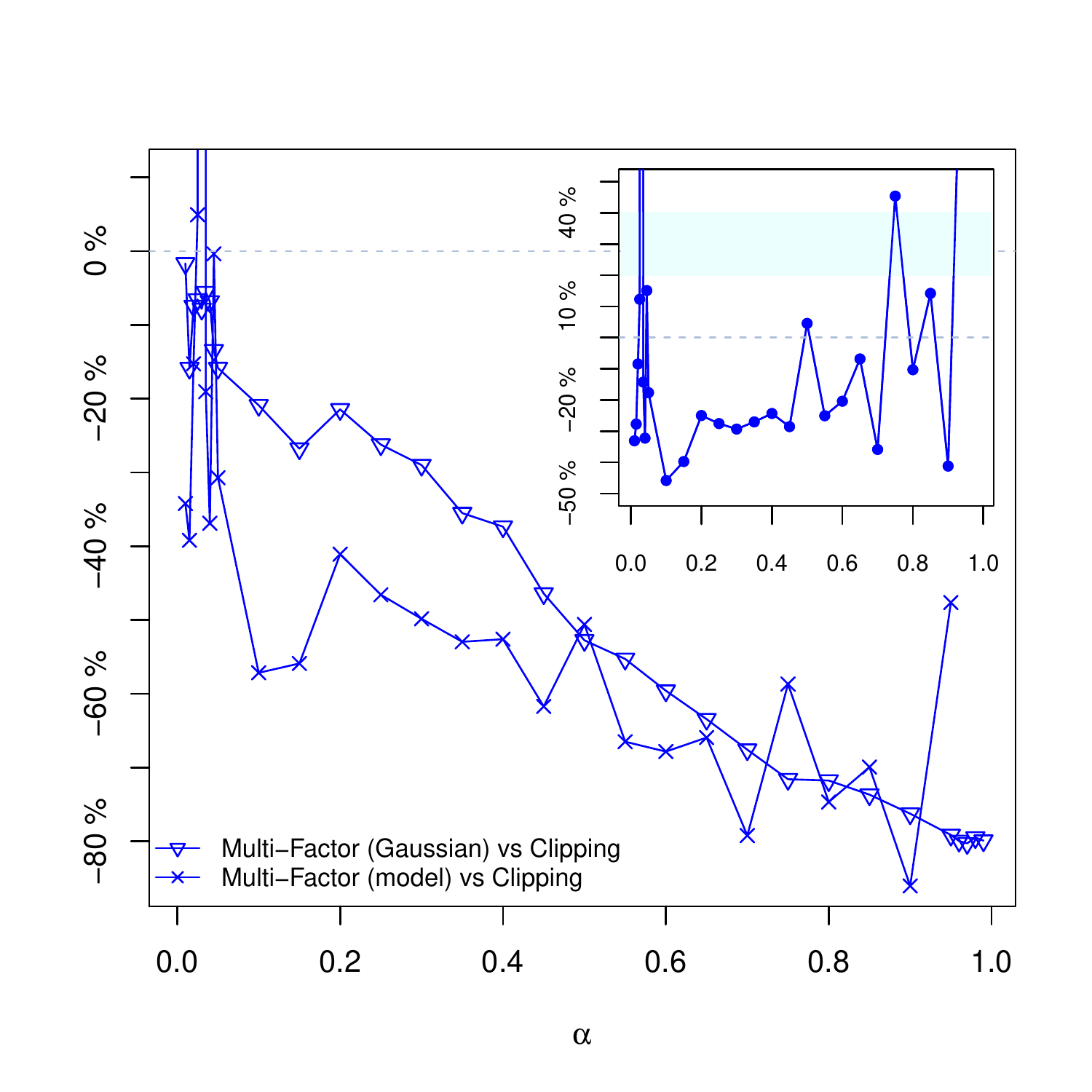}
    \caption{Absolute correlations: Out-of-sample risk is lowered by more than $50\%$ where the risk is minimal ($\alpha\approx 0.1$).
             The role of volatility dependences is put forward by the better performance of the non-Gaussian
             multi-factor level over the Gaussian multi-factor cleaning scheme: the relative risk difference is shown in the inset.}
    \label{fig:ISOS_abs_gain}
\end{figure}


\subsection{How many factors to keep in the model ?}
The number $M$ of linear factors in the description \eqref{eq:MODEL} is an important input of the model.
The intuition that statistical factors are somewhat related to economic sectors does not stand
the identification of algebraic modes of fluctuations to sectorial or other macroeconomic factors,
beyond the first two or three modes.
Still, even if there is no one-to-one identification, 
the number of sectors can be regarded as a reasonable prior for $M$.
In our calibration, we have retained $M=10$ factors corresponding to the number of Bloomberg sectors plus one,
with satisfactory results at reproducing the main empirical stylized facts.

A more justified determination of $M$ is reached by considering the In-Sample / Out-of-sample risk 
measurement described just above. 
      The In-sample risk is always ``too good'' because it is measured using a model calibrated so as to match
      the particular realizations of the noise in the same period: this phenomenon is known as ``over-fitting''
      and is more pronounced the higher the number of parameters is.
      The Out-of-sample risk is always ``too bad'' because it is measured using a model that can never account 
      completely for the reality, and unexplained noise remains. 
      A too small number of parameters makes the model oversimplified and unrealistic, 
      and leaves room for much uncertainty that manifests itself in a high risk. 
      On the other hand, calibrating too many parameters will cause the model to over-fit the data of the estimation period, 
      and depart significantly from a ``typical'' realization of noise; when applying this over-fitted model out-of-sample,
      the realized risk can be very large as well.
      Therefore, there typically exists an optimal number of parameters,
      for which the model fits reasonably the data and is stable when applied out-of-sample.

Adjusting the trigger parameter $\alpha=M/N$ allowed us to find an optimal configuration
where the out-of-sample risk is minimized while the in-sample risk is not artificially lowered.
A unique value of $\alpha\approx 0.1$ is found to be optimal both for the quadratic risk and for the risk associated with absolute returns,
and thus suggests a number of $M\approx 24$ factors to be kept.
    
\section{Conclusions}
We have presented a factor model for stock returns aiming at reproducing observed non-linear dependences
like the quadratic correlations and the copula diagonals,
the latter being sensitive to the relative ranks of the joint realizations.
Simple generalizations of elliptical model, with idiosyncratic radial parts, were still unable 
to explain important features like the medial copula departing from a standard value.
Such properties can only be accounted for by an interplay of the kurtosis of factors and residual parts.

The main features of our model are the following:
(i) A linear factor structure, that is able to generate very accurately the linear correlations, 
up to pair effects that are not to be explained by any factor model;
(ii) independent residuals which allow for an interpretation in terms of indiosyncracies;
(iii) non-Gaussianity of the factors with specific kurtosis;
(iv) non-Gaussianity of the residuals with specific kurtosis;
(v) volatility correlations among the factors generated by common modes of log-volatilities, in particular a ``volatility market mode'';
(vi) volatility correlations between factors and residuals.

Precise predictions are formulated, and a calibration of all parameters is performed over different periods.
An original calibration scheme is provided, with powerful results both in the linear sector and in the volatility parameters.
The calibrated models exhibit dependences that show a very good overall fit to empirical observations, 
even for the non-trivial observables like the copula medial point.
The stability analysis performed in the last section illustrates the ability of the model not only to fit the data in-sample
but also to reproduce the typical behavior of out-of-sample realizations of the noise.
It points toward an optimal number of factors $M\approx 24$.

To conclude this part about the cross-sectional dependences and provide a link to the 
next part dedicated to temporal dependences, notice that the model studied in this chapter
has no temporal content.
In particular, the description of the log-volatilities does not exhibit the time-dependence
responsible for the so called ``volatiilty clustering'' effect that will be discussed in several places in the next part.
This however was not crucial here for the characterization of dependences among stocks.
In terms of inference and parameters estimation, it generates larger uncertainties, but no bias is expected as long as
the time series still have a self-averaging property.
    

\part{Temporal dependences}\label{part:partIII}
\chapter{Volatility dynamics}\label{chap:QARCH}
\minitoc

\nocite{cont2010encyclopedia}
\nocite{berd2011lessons}
\nocite{engle1986handbook}

\section{Introduction}

One of the most striking universal stylized facts of financial returns is the volatility clustering effect,
which was first reported by Mandelbrot as early as 1963 \cite{mandelbrot1963variation}. He noted that \ldots
{\it large changes tend to be followed by large changes, of either sign, and small changes tend to be followed by small changes.}
The first quantitative description of this effect was the ARCH\nomenclature{ARCH}{Auto-Regressive Conditional Heteroskedastic} 
model proposed by Engle in 1982 \cite{engle1982autoregressive}. 
It formalizes Mandelbrot's hunch in the simplest possible way, 
by postulating that returns $r_t$ are conditionally Gaussian random variables, 
with a time dependent volatility (RMS)\nomenclature{RMS}{Root mean square} $\sigma_t$ that evolves according to:
\be
\sigma^2_t = s^2 + g r_{t-1}^2.
\ee
In words, this equation means that the (squared) volatility today is equal to a baseline level $s^2$, 
plus a self-exciting term that describes the influence of yesterday's {\it perceived} volatility $r_{t-1}^2$ on today's activity, 
through a feedback parameter $g$. 
Note that this ARCH model was primarily thought of as an econometric model that needs to be calibrated on data, 
while a more ambitious goal would be to {\it derive} such a model from a more fundamental theory 
--- for example, based on behavioural reactions to perceived risk. 

It soon became apparent that the above model is far too simple to account for empirical data. 
For one thing, it is unable to account for the long memory nature of volatility fluctuations. 
It is also arbitrary in at least two ways:
\begin{itemize} 

\item First, there is no reason to limit the feedback effect to the previous day only. 
      The Generalized ARCH model (GARCH) \cite{bollerslev1986generalized}, which has become a classic in quantitative finance, 
      replaces $r_{t-1}^2$ by an exponential moving average of past squared returns. 
      Obviously, one can further replace the exponential moving average by any weighting kernel $k(\tau)\geq 0$, 
      leading to a large family of models such as:
      \be
	\sigma^2_t = s^2 + \sum_{\tau=1}^\infty k(\tau) r_{t-\tau}^2,
      \ee
      which includes all ARCH and GARCH models. 
      For example, ARCH($q$) corresponds to a kernel $k(\tau)$ that is strictly zero beyond $\tau=q$. 
      A slowly (power-law) decaying kernel $k(\tau)$ is indeed able to account for the long memory of volatility 
      --- this corresponds to the so-called FIGARCH model (for Fractionally Integrated GARCH) \cite{bollerslev1994arch}. 

\item Second, there is no {\it a priori} reason to single out the day as the only time scale to define the returns. 
      In principle, returns over different time scales could also feedback on the volatility today 
      \cite{muller1997volatilities,borland2005multi,lynch2003market}, 
      leading to another natural extension of the GARCH model as:
      \be
	\sigma^2_t = s^2 + \sum_{\ell} \sum_{\tau=1}^\infty g_\ell(\tau) R^{(\ell)2}_{t-\tau},
      \ee
      where $R^{(\ell)}_t$ is the cumulative, $\ell$ day return between $t-\ell$ and $t$. 
      The first model in that category is the HARCH model of the Olsen group \cite{muller1997volatilities}, 
      where the first ``H'' stands for Heterogeneous. 
      The authors had in mind that different traders are sensitive to and react to returns on different time scales.
      Although this behavioural interpretation was clearly expressed, 
      there has been no real attempt%
      \footnote{See however the very recent stochastic volatility model with heterogeneous time scales of \cite{delpini2012stochastic}.} 
      to formalize such an intuition beyond the hand-waving arguments given in \cite{borland2005multi}.
\end{itemize}

The common point to the zoo of generalizations of the initial ARCH model is that 
the current level of volatility $\sigma_t^2$ is expressed as a quadratic form of past realized returns. 
The most general model of this kind, called QARCH (for Quadratic ARCH), is due to Sentana \cite{sentana1995quadratic}, and reads:
\be\label{QARCHeq:quadraticARCH}
\sigma^2_t = s^2 + \sum_{\tau=1}^\infty L(\tau) \, r_{t-\tau} + \sum_{\tau,\tau'=1}^\infty K(\tau,\tau') \, r_{t-\tau}r_{t-\tau'},
\ee
where $L(\tau)$ and $K(\tau,\tau')=K(\tau',\tau)$ are some kernels that should satisfy technical conditions for $\sigma^2_t$ to be always positive 
(see below and \cite{sentana1995quadratic}).
The QARCH can be seen as a general discrete-time model for the dependence of $\sigma_t^2$ on all past returns $\left\{r_{t'}\right\}_{t'<t}$, truncated to second order. 
The linear contribution, which involves $L(\tau)$, captures a possible dependence of the volatility on the sign of the past returns. 
For example, negative past returns tend to induce larger volatility in the future 
--- this is the well-known leverage effect \cite{black1976studies,bouchaud2001leverage,bekaert2000asymmetric}, see also \cite{reigneron2011principal} and references therein.%
\footnote{(G)QARCH and alternative names such as Asymmetric (G)ARCH, Nonlinear (G)ARCH, Augmented ARCH, etc.\ 
often refer to this additional leverage (asymmetry) contribution,
whereas the important innovation of QARCH is in fact the possibility of off-diagonal terms in the kernel $K$.} 
The quadratic contribution, on the other hand, contains through the matrix $K(\tau,\tau')$ all ARCH models studied in the literature. 
For example, ARCH($q$), GARCH and FIGARCH models all correspond to a purely {\it diagonal} kernel, $K(\tau,\tau')=k(\tau) \delta_{\tau,\tau'}$
where $\delta_{\tau,\tau'}$ is Kronecker's delta. 

In view of the importance of ARCH modelling in finance, it is somewhat surprising that the general framework provided by QARCH has not been fully explored. 
Only versions with very short memories, corresponding to at most $2 \times 2$ matrices for $K$, seem to have been considered in the literature. 
In fact, Sentana's contribution is usually considered to be the introduction of the linear contribution in the GARCH framework, 
rather than unveiling the versatility of the quadratic structure of the model. 
The aim of this chapter is to explore in detail the QARCH framework, both from a theoretical and empirical point of view. 
Of particular interest is the empirical determination of the structure of the feedback kernel $K(\tau,\tau')$ for the daily returns of stocks, 
which we compare with several proposals in the literature, 
including the multi-scale model of \cite{borland2005multi} and the trend-induced volatility model of \cite{zumbach2010volatility}. 
Quite surprisingly, we find that while the off-diagonal elements of $K(\tau,\tau')$ are significant, 
they are at least an order of magnitude smaller than the diagonal elements $k(\tau) := K(\tau,\tau)$.  
The latter are found to decay very slowly with $\tau$, in agreement with previous discussions. 
Therefore, in a first approximation, the dominant feedback effect comes from the amplitude of {\it daily returns} only, 
with minor corrections coming from returns computed on large time spans, at variance with the assumption of the model 
put forward in \cite{borland2005multi}. We believe that this finding is unexpected and far from trivial. It is a strong constraint on any attempt 
to justify the ARCH feedback mechanism from a more fundamental point of view.

In parallel with ARCH modelling,  stochastic volatility models represent another strand of the literature that has vigorously grown in the last twenty years. 
Here again, a whole slew of models has emerged \cite{henry2008analysis}, with the Heston model \cite{heston1993closed} and the SABR model \cite{hagan2002managing} as the best known examples. 
These models assume that the volatility itself is a random process, governed either by a stochastic differential equation (in time) 
or an explicit cascade construction in the case of more recent multifractal models \cite{bacrymuzy,calvetfisher,lux}
(again initiated by Mandelbrot as early as 1974! \cite{mandelbrot1974intermittent}). 
There is however a fundamental difference between most of these stochastic volatility models and the ARCH framework: 
while the former are {\it time-reversal invariant} (TRI), the latter is explicitly {\it backward looking}. 
This, as we shall discuss below, implies that certain correlation functions are not TRI within QARCH models, 
but are TRI within stochastic volatility models. 
This leads to an empirically testable prediction; we report below that TRI is indeed violated in stock markets, as also documented in \cite{zumbach2009time}. 

The outline of this chapter is as follows. 
We first review in Section~\ref{QARCHsec:section2} some general analytical properties of QARCH models, 
in particular about the existence of low moments of the volatility. 
We then introduce in Section~\ref{QARCHsec:section3} several different sub-families of QARCH,
 that we try to motivate intuitively. 
The consideration of these sub-families follows from the necessity of reducing the dimensionality of the problem, 
but also from the hope of finding simple regularities that would suggest a plausible interpretation (behavioural or else) of the model, 
beyond merely best fit criteria. 
In Section~\ref{QARCHsect:emp_stocks}, we attempt to calibrate ``large'' QARCH models on individual stock returns, 
first without trying to impose any a priori structure on the kernel $K(\tau,\tau')$, 
and then specializing to the various sub-families mentioned above. 
The same analysis is done in Section~\ref{QARCHsect:emp_index} for the returns of the stock index.
We isolate in Section~\ref{QARCHsect:TRI_0} the discussion on the issue of TRI for stock returns, 
both from a theoretical/modeling and an empirical point of view.
We give our conclusions in Section~\ref{QARCHsect:Concl}, and relegate to Appendices~\ref{QARCHapx:A} and~\ref{QARCHapx:B} more technical issues. 

\section{General properties of QARCH models}\label{QARCHsec:section2}

Some general properties of QARCH models are discussed in Sentana's seminal paper \cite{sentana1995quadratic}. 
We review them here and derive some new results. 
The QARCH model for the return at time $t$, $r_t$, is such that:
\be
\ln p_{t} - \ln p_{t-1} = r_t = \sigma_t \xi_t,
\ee
where $p_t$ is the price at time $t$, $\sigma_t$ is given by the QARCH specification, Eq.~\eqref{QARCHeq:quadraticARCH} above, 
while the $\xi$'s are i.i.d.\ \nomenclature{i.i.d.}{Independent and identically distributed} random variables, of zero mean and variance equal to unity. 
While many papers take these $\xi$'s to be Gaussian, it is preferable to be agnostic about their univariate distribution. 
In fact, several studies including our own (see below), suggest that the $\xi$'s themselves have fat-tails: 
asset returns are {\it not} conditionally Gaussian and ``true jumps'' do occur.%
\footnote{There seems to be a slowly growing consensus on this point (see e.g.\ \cite{ait2009analyzing}): 
Gaussian processes with stochastic volatility cannot alone account for 
the discontinuities observed in market prices.}

In this section, we will focus on the following non-linear correlation functions 
(other correlations will be considered below, when we turn to empirical studies):
\begin{subequations}
\begin{align}
	           \mathcal{C}^{(2)}(\tau)   &\equiv\vev{\left(r^2_t-\vev{r_{t'}^2}_{t'}\right)             r^2_{t-\tau}}_t\\
	\widetilde{\mathcal{C}}^{(2)}(\tau)  &\equiv\vev{\left(\sigma^2_t-\vev{\sigma_{t'}^2}_{t'}\right)   r^2_{t-\tau}}_t\\
	           \mathcal{D} (\tau',\tau'')&\equiv\vev{\left((r^2_{t}-\vev{r_{t'}^2}_{t'}\right)          r_{t-\tau'}r_{t-\tau''}}_t\\
	\widetilde{\mathcal{D}}(\tau',\tau'')&\equiv\vev{\left((\sigma^2_{t}-\vev{\sigma_{t'}^2}_{t'}\right)r_{t-\tau'}r_{t-\tau''}}_t.
\end{align}
\end{subequations}
Here and below, we assume stationarity and correspondingly $\vev{\dots}_t$ refers to a sliding average over $t$. 
The following properties are worth noticing: by definition, 
$\mathcal{D}(\tau,\tau) \equiv \mathcal{C}^{(2)}(\tau)$ and 
$\widetilde{\mathcal{D}}(\tau,\tau) \equiv \widetilde{\mathcal{C}}^{(2)}(\tau)$.
Furthermore, whereas $\mathcal{C}^{(2)}(\tau) = \mathcal{C}^{(2)}(-\tau)$ by construction, 
the same is not true in general for $\widetilde{\mathcal{C}}^{(2)}(\tau)$. 
However, using the QARCH causal construction and the independence of the $\xi$'s, 
one can easily convince oneself that when $\tau > 0$, $\widetilde{\mathcal{C}}^{(2)}(\tau) \equiv \mathcal{C}^{(2)}(\tau)$. 
Similarly, for \mbox{$\tau'>\tau''>0$}, $\widetilde{\mathcal{D}}(\tau',\tau'') \equiv \mathcal{D}(\tau',\tau'')$, 
while in general, $\mathcal{D}(\tau',\tau'') \neq \mathcal{D}(-\tau',-\tau'') \equiv 0$. 

\subsection{Second moment of the volatility and stationarity}\label{QARCHssec:second_moment}

QARCH models only make sense if the expected volatility does not diverge to infinity. 
The criterion for stability is easy to establish if the $\xi$'s are IID and of zero mean, and reads:
\be
\tr K < 1.
\ee
In this case, the volatility is a stationary process such that $\vev{\sigma^2}\equiv\mathds{E}[\sigma^2]=s^2/(1 - \tr K)$: 
the feedback-induced increase of the volatility only involves the diagonal elements of $K$. 
Note also that the leverage kernel $L(\tau)$ does not appear in this equation. 
As an interesting example, we consider kernels with a power-law decaying diagonal: 
$K(\tau,\tau)=g\,\tau^{-\alpha}\1{\tau\leq q}$. 
For a given decay parameter $\alpha$, the amplitude $g$ must be smaller than a certain value $g_c(\alpha,q)$ for $\vev{\sigma^2}$ to be finite. 
Fig.~\ref{QARCHfig:sig2crit} shows the critical frontier $g_c(\alpha,q)$ for $q=1,32,256$ and $q \to \infty$. 
The critical frontier in the limit case $q=\infty$ is given by $g_c=1/\zeta(\alpha)$, 
where $\zeta(\alpha)$ is Riemann's zeta function (solid red). 
Note in particular that the model is always unstable when \mbox{$\alpha < 1$}, 
i.e.\ when the memory of past realized volatility decays too slowly.%
\footnote{In the context of fractionally integrated processes $I(d)$ , the condition $\alpha\leq 1$
is equivalent to the `difference parameter' $d=\alpha-1$ being positive.} 
At the other extreme, $q=1$, the constraint is well known to be $g=k(1)\leq 1$ (solid red).

Within a strict interpretation of the QARCH model, there are additional constraints on the kernels $K$ and $L$ 
that arise from the fact that $\sigma_t^2$ should be positive for any realization of price returns. 
This imposes that a) all the eigenvalues of $K$ should be non-negative, and b) that the following inequality holds:
\be\label{QARCHeq:nonneg}
\sum_{\tau,\tau'=1}^q L(\tau) K^{-1}(\tau,\tau') L(\tau') \leq 4\,s^2,
\ee
where $K^{-1}$ is the matrix inverse of $K$. 
However, these constraints might be too strong if one interprets the QARCH model as a generic expansion of $\sigma^2_t$ in powers of past returns,
truncated to second order \cite{sentana1995quadratic}. 
It could well be that higher order terms are stabilizing and lead to a meaningful, stable model beyond the limits quoted here.

\begin{figure}[p]
	\center
	\includegraphics[scale=0.7,trim=  0 0 864 0,clip]{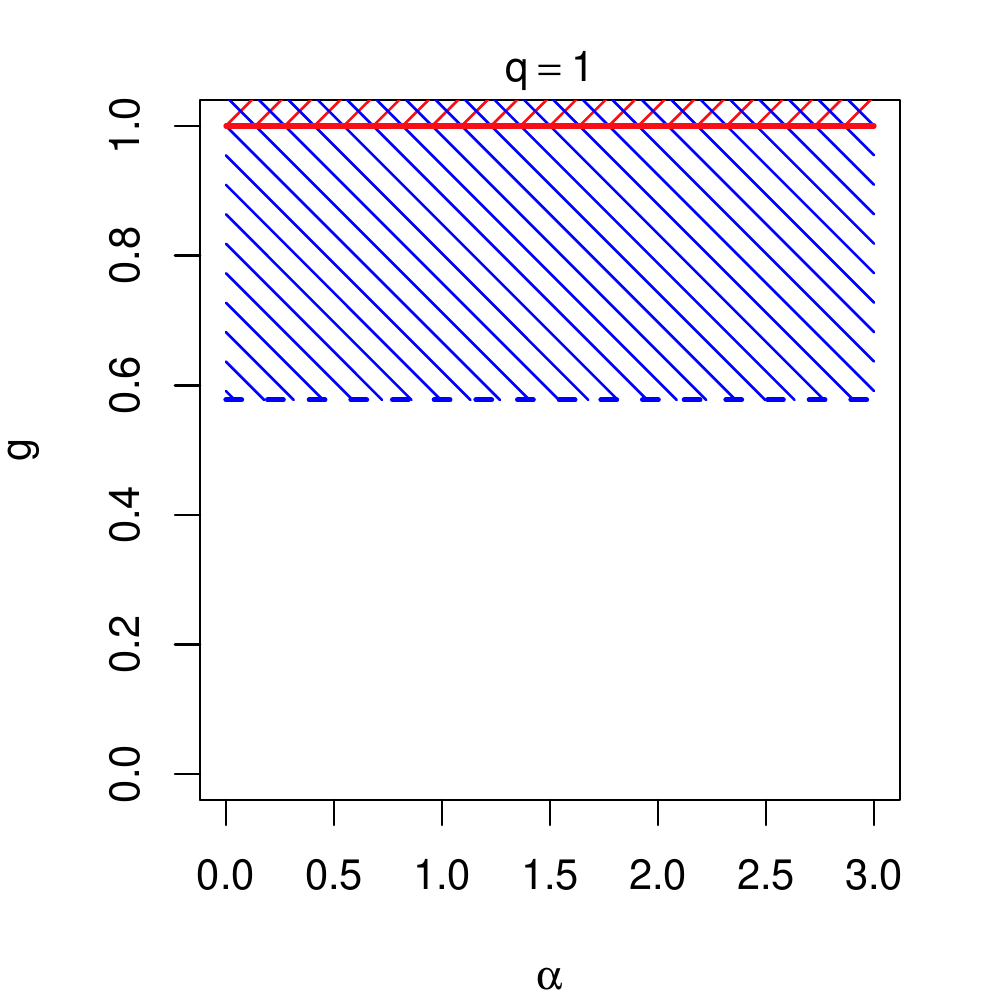}
	\includegraphics[scale=0.7,trim=287 0 576 0,clip]{QARCH/critfront24_allq}
	\includegraphics[scale=0.7,trim=575 0 288 0,clip]{QARCH/critfront24_allq}
	\includegraphics[scale=0.7,trim=863 0   0 0,clip]{QARCH/critfront24_allq}
	\caption{Allowed region in the parameter space for $K(\tau,\tau)=g\,\tau^{-\alpha}\1{\tau\leq q}$ and $L(\tau)=0$,
	         according to the finiteness of $\vev{\sigma^2}$ and $\vev{\sigma^4}$. 
             Divergence of $\vev{\sigma^2}$ is depicted by  $45^\circ$ (red)  hatching, while 
             divergence of $\vev{\sigma^4}$ is depicted by $-45^\circ$ (blue) hatching.
             In the wedge between the dashed blue and solid red lines, 
             \mbox{$\vev{\sigma^2} < \infty$} while $\vev{\sigma^4}$ diverges.}
	\label{QARCHfig:sig2crit}
\end{figure}

\subsection{Fourth moment of the volatility}\label{QARCHsec:fourth_moment}

The existence of higher moments of $\sigma$ can also be analyzed, leading to more and more cumbersome algebra 
\cite{he1999fourth,ling2002stationarity,teyssiere2010long}. 
In view of its importance, we have studied in detail the conditions for the existence of the fourth moment of $\sigma$, 
which allows one to characterize the excess kurtosis $\kappa$ of the returns, traditionally defined as:
\be
\kappa = \frac{\langle r^4 \rangle}{\langle r^2 \rangle^2} - 3 \equiv \frac{\langle \sigma^4 \rangle \langle \xi^4 \rangle}{\langle \sigma^2 \rangle^2} - 3.
\ee
In the general case, $\langle \sigma^4 \rangle$, $\mathcal{C}^{(2)}(\tau)$ and $\mathcal{D}(\tau',\tau'')$ 
are related by the following set of self-consistent equations:
\begin{subequations}
\begin{align}
	\vev{\sigma^4}-\vev{\sigma^2}^2 =&
	\vev{\sigma^2}^2\Big(\tr(K^2)-\tr(K^{\bullet 2})\Big)+
	   \sum_{\tau>0}K(\tau,\tau){\mathcal{C}}^{(2)}(\tau)\\\nonumber%
	+&2\!\sum_{0<\tau_2<\tau_1}\!K(\tau_1,\tau_2)\left[{\mathcal{D}}(\tau_1,\tau_2)-\!\sum_{0<\tau<\tau_2}\!K(\tau,\tau)\mathcal{D}(\tau_1\!-\!\tau,\tau_2\!-\!\tau)\right]\\
	{\mathcal{C}}^{(2)}(\tau>0)=&K(\tau,\tau)\Big(\vev{\sigma^4}\vev{\xi^4}-\vev{\sigma^2}^2\Big)+
	\sum_{\tau'\neq\tau}K(\tau',\tau')\mathcal{C}^{(2)}(\tau\!-\!\tau')\\\nonumber
	+&2\!\sum_{\tau'>\tau''=\tau+1}^q\!K(\tau',\tau'')\mathcal{D}(\tau'\!-\!\tau,\tau''\!-\!\tau)\\
	{\mathcal{D}}(\tau_1 >0,\tau_2>0)=&2K(\tau_1,\tau_2)\Big(\mathcal{C}^{(2)}(\tau_1\!-\!\tau_2)+\vev{\sigma^2}^2\Big)\\\nonumber
    +& \phantom{2}\,\sum_{0<\tau'<\tau_2}\,K(\tau',\tau')\mathcal{D}(\tau_1\!-\!\tau',\tau_2\!-\!\tau')\\\nonumber
	+&2\!\sum_{\tau'>\tau_2,\tau'\neq\tau_1}\!K(\tau',\tau_2)\mathcal{D}(\tau'\!-\!\tau_2,|\tau_1\!-\!\tau_2|).
\end{align}
\end{subequations}
where we assume for simplicity here that the leverage effect is absent, i.e.\ $L(\tau) \equiv 0$, 
and $K^{\bullet 2}$ means the square of $K$ in the Hadamard sense (i.e.\ element by element). 
For a QARCH with maximum horizon $q$, we have thus a set of $1+q+q(q-1)/2$ linear equations for $k(\tau\geq 0)$
that can be numerically solved for an arbitrary choice of the kernel $K$. 
These equations simplify somewhat in the case of a purely diagonal kernel $K(\tau,\tau')=k(\tau) \delta_{\tau,\tau'}$. 
One finds:
\begin{subequations}\label{QARCHeq:vol4}
\begin{align}
	\vev{\sigma^4}-\vev{\sigma^2}^2=&\sum_{\tau'>0}k({\tau'})\mathcal{C}^{(2)}(\tau')\\
	{\mathcal{C}}^{(2)}(\tau)=&
    \sum_{\tau'>0}k({\tau'})\mathcal{C}^{(2)}(\tau'-\tau)
\end{align}
\end{subequations}
By substituting $\vev{\sigma^4}$, it is easy to explicit the linear system in matrix form $\nabla\,\mathcal{C}^{(2)}=S$ with
\begin{subequations}
\begin{align}\label{eq:nabla_elements}
	\nabla(\tau,\tau')&=\delta_{\tau{\tau'}}-\vev{\xi^4}k({\tau})k({\tau'})-\left[k({\tau-\tau'})+k({\tau+\tau'})\right]\\
	S(\tau)&= k({\tau})\vev{\sigma^2}^2\left(\vev{\xi^4}-1\right)
\end{align}
\end{subequations}
and the convention that $k({\tau})=0, \forall\tau\leq 0$.

Let us examine this in more detail for ARCH($q$). 
For simplicity, we assume here that $\xi$ is Gaussian ($\vev{\xi^4}=3$) and $s$ is chosen such that $\vev{\sigma^2}=1$. 
The condition on $k(\tau)$ for which $\vev{\sigma^4}$ diverges is given by $\det \nabla =0$, 
where $\nabla$ is the matrix whose entries are defined in Eq.~\ref{eq:nabla_elements}. 
For different $q$'s, this reads:
\begin{itemize}
\item{for $q=1$}, one recovers the well known result that ARCH(1) has a finite fourth moment only when \mbox{$k_1<1/\sqrt{3}$}.
\item{for $q=2$}, the stability line is given by $k_1 + k_2 = 1$, while the existence of a finite fourth moment is given by the condition \mbox{$k_1^2 < (1/3-k_2^2)(1-k_2)/(1+k_2)$}.
\item{for $q \to \infty$}, we again assume the $\tau$ dependence of $k(\tau)$ to be a power-law, $g\,\tau^{-\alpha}$ (corresponding to the FIGARCH model). 
                  The critical line for which the fourth moment diverges is shown in dashed blue in Fig.~\ref{QARCHfig:sig2crit}. 
                  After a careful extrapolation to $q = \infty$, we find that whenever \mbox{$1 < \alpha < \alpha_c \approx 1.376$}, 
		  the fourth moment exists as soon as the model is stationary, i.e.\ when \mbox{$g<1/\zeta(\alpha)<1/\zeta(\alpha_c)\approx 0.306$}. 
\end{itemize}
The last result is quite interesting and can be understood from Eq.~\eqref{QARCHeq:vol4}, which shows that to lowest order in $g$, one has:
\be\label{QARCHeq:sigma4pert}
\frac{\langle \sigma^4 \rangle}{\langle \sigma^2 \rangle^2} - 1 \approx (\vev{\xi^4}-1) g^2 \sum_{\tau > 0} \frac{1}{\tau^{2\alpha}}.
\ee
The above expression only diverges if \mbox{$\alpha < 1/2$}, but this is far in the forbidden region \mbox{$\alpha < 1$} where $\langle \sigma^2 \rangle$ 
itself diverges. Therefore, perhaps unexpectedly, a FIGARCH model with a long memory (i.e.\ \mbox{$\alpha < 1.376$}) cannot lead to a 
large kurtosis of the returns, {\it unless the $\xi$ variables have themselves fat tails.} We will come back to this important point below.
By the way, FIGARCH models with long memory and $1<\alpha<3/2$ are able to generate power-law correlations of the volatility $\mathcal{C}^{(2)}(\tau)\propto \tau^{-\beta}$,
with a slow decay characterized by an exponent $\beta=3-2\alpha$, as we demonstrate in Appendix~\ref{QARCHapx:B}.

\paragraph{}
As we alluded to in the introduction, ARCH($q$) models posit that today's volatility is only sensitive to past daily returns. This assumption can be relaxed in several natural ways,
each of which leading to a specific structure of the feedback kernel $K$. We will present these extensions in increasing order of complexity.  

\subsection{Returns over different time scales}

Let us define the $\ell$-day return between $t-\ell$ and $t$ as $R^{(\ell)}_t$, such that:
\be
R^{(\ell)}_t = \sum_{\tau=1}^{\ell} r_{t-\tau}; \qquad R^{(1)}_t = r_{t-1} = \ln p_{t-1} - \ln p_{t-2},
\ee
where $p_t$ is the price at time $t$. The simplest extension of ARCH($q$) is to allow all past 2-day returns to play a role as well, i.e.:
\be
\sigma_t^2 = s^2 + \sum_{\tau=0}^{q-1} \, g_1(\tau) [R^{(1)}_{t-\tau}]^2 + \sum_{\tau=0}^{q-2} \, g_2(\tau) [R^{(2)}_{t-\tau}]^2,
\ee
where $g_1(\tau)$ and $g_2(\tau)$ are coefficients, all together $2q-1$ of them. Upon identification with the QARCH kernel, one finds:
\begin{align}\nonumber
K(\tau,\tau)   &=  g_1(\tau-1) + g_2(\tau-1) + g_2(\tau-2),\\
K(\tau,\tau+1) &= g_2(\tau-1)\\\nonumber
K(\tau,\tau+\ell) &= 0 \, {\text{ for }}\, \ell \geq 2,
\end{align}
with the convention that $g_2(-1)=0$. The model can thus be re-interpreted in the following way: 
the square volatility is still a weighted sum of past {\it daily} squared returns, 
but there is an extra contribution that picks up the realized one-day covariance of returns. 
If $g_2(\tau) \geq 0$, the model means that the persistence of the same trend two days in a row leads to increased volatilities. 
A schematic representation of this model is given in Fig.~\ref{QARCHfig:matrices}(a).

One can naturally generalize again the above model to include 2-day, 3-day, $\ell$-day returns, 
with more coefficients $g_1(\tau), g_2(\tau), \dots, g_{\ell}(\tau)$, with a total of $\ell(2q+1-\ell)/2$ parameters. 
Obviously, when $\ell=q$, all possible time scales are exhausted, and the number of free parameters is $q(q+1)/2$, 
i.e.\ exactly the number of independent entries of the symmetric $q \times q$ feedback kernel $K$. 

\section{Some special families of QARCH models}\label{QARCHsec:section3}
\begin{figure}
	\center
	\includegraphics[scale=0.5]{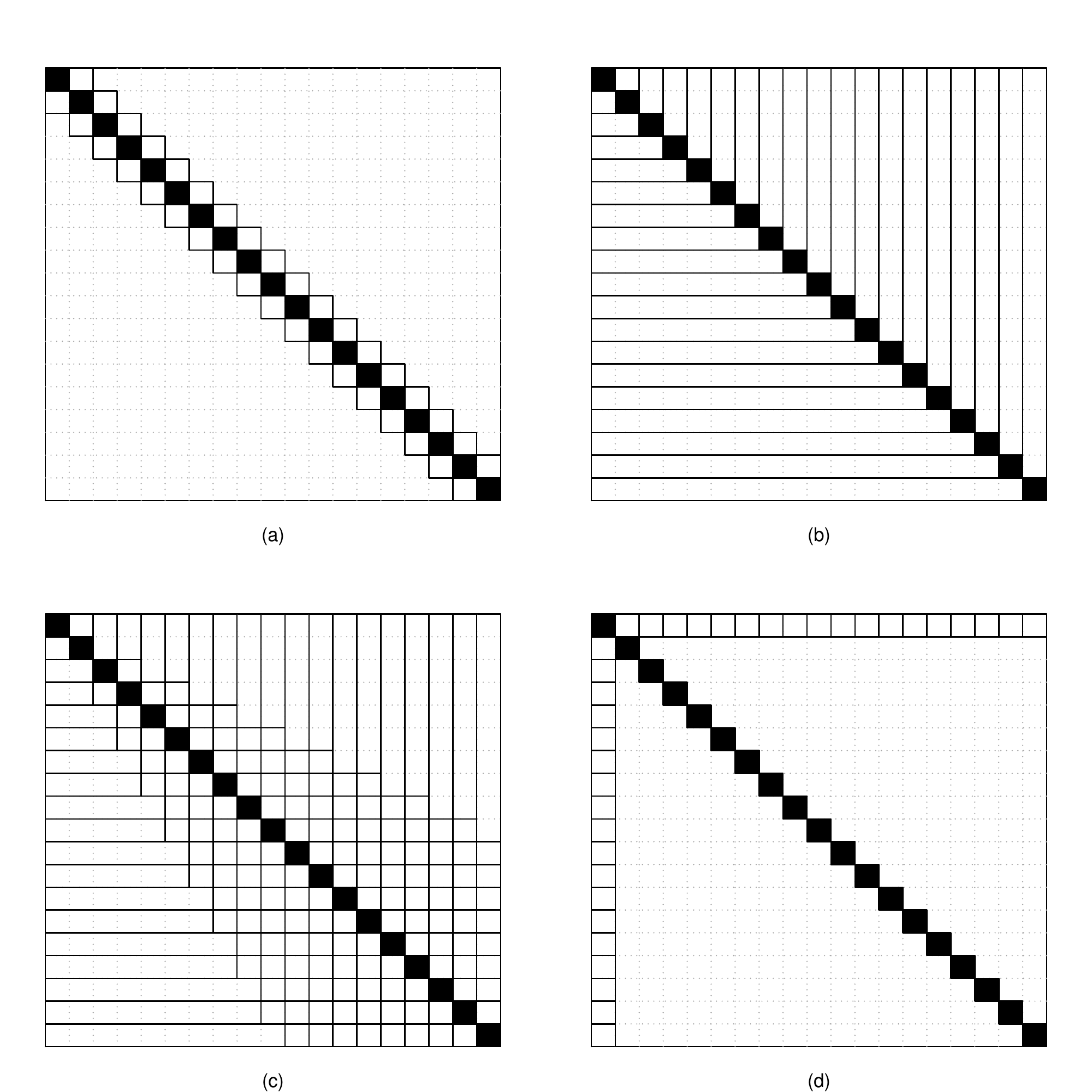}
	\caption{Matrix structures. \textbf{(a)} Overlapping two-scales; \textbf{(b)} Borland-Bouchaud multi-scale; 
	                            \textbf{(c)} Zumbach; \textbf{(d)} Long-Trend.}
	\label{QARCHfig:matrices}
\end{figure}

\subsection{Multi-scale, cumulative returns}

Another model, proposed in \cite{muller1997volatilities,borland2005multi}, is motivated by the idea that traders may be influenced not only by yesterday's return, but also by the change of 
price between today and 5-days ago (for weekly traders), or today and 20-days ago (for monthly traders), etc. In this case, the natural extension of the ARCH framework is to 
write:
\be
\sigma^2_t = s^2 + \sum_{\ell=1}^q g_{\text{BB}}(\ell) [R^{(\ell)}_{t}]^2,
\ee
where the index BB refers to the model put forward in \cite{borland2005multi}. The BB model requires a priori $q$ different parameters. 
It is simple to see that in this case, the kernel matrix can be expressed as:
\be
K_{\text{BB}}(\tau',\tau'') = G[\max(\tau',\tau'')], \quad {\mbox{with}} \quad G[\tau] = \sum_{\ell=\tau}^q g_{\text{BB}}(\ell).
\ee
The spectral properties of these matrices are investigated in detail in Appendix~\ref{QARCHapx:A}.
One can also consider a mixed model where both cumulative returns and daily returns play a role. 
This amounts to taking the off-diagonal elements of $K$ as prescribed by the above equation, 
but to specify the diagonal elements $K(\tau,\tau)$ completely independently from $G[\tau]$. 
This leads to a matrix structure schematically represented in Fig.~\ref{QARCHfig:matrices}(b), parameterized by $2q-1$ independent coefficients. 

\subsection{Zumbach's trend effect (ARTCH)}

Zumbach's model \cite{zumbach2010volatility} is another particular case in the class of models described by Eq.~\eqref{QARCHeq:quadraticARCH}.
It involves returns over different lengths of time and characterizes the effect of past trending aggregated returns on future volatility.
The quadratic part in the volatility prediction model is
\be
	\text{ARCH}+\sum_{\ell= 1}^{\lfloor{q/2\rfloor}} g_{\text{Z}}({\ell})\,R^{(\ell)}_{t}R^{(\ell)}_{t\!-\!\ell}
\ee
When relevant, only specific time scales like the day ($\ell=1$), the week ($\ell= 5$), the month ($\ell = 20$), etc.\ can be retained in the summation.
The off-diagonal elements of the kernel $K$ now take the following form:
\begin{equation}\label{QARCHeq:K.Z}
	K_{\text{Z}}(\tau',\tau'' > \tau')=\sum_{\ell = \max(\tau',\tfrac{\tau''}{2})}^{\min(\tau''\!-\!1,\lfloor{q/2\rfloor})}g_{\text{Z}}({\ell})
\end{equation}
Since it is upper triangular by construction, we symmetrize it as $\tfrac{1}{2}(K+K^{\dagger})$, and the diagonal is filled with the ARCH parameters.
This model contains $q+ \lfloor{q/2\rfloor}$ independent coefficients, and is schematically represented in Fig.~\ref{QARCHfig:matrices}(c). 

\subsection{A generalized trend effect}

In Zumbach's model, the trend component is defined by comparing returns computed over the same horizon $\ell$. 
This of course is not necessary. 
As an extreme alternative model, we consider a model where the volatility today is affected by 
the last return $r_{t-1}$ confirmation (or the negation) of a long trend.
In more formal terms, this writes:
\be
	\text{ARCH}+  r_{t-1} \times \sum_{\ell=1}^{q-1} g_{\text{LT}}({\ell})  r_{t-1-\ell}, 
\ee
where $g_{\text{LT}}(\ell)$ is the sequence of weights that defines the past ``long trend'' (hence the index LT). 
This now corresponds to a kernel $K$ with diagonal elements corresponding to the ARCH effects 
and a single non trivial row (and column) corresponding to the trend effect: $K(1,\tau>1) = g_{\text{LT}}(\tau-1)$.
This model has again $2q-1$ free parameters.

Of course, one can consider QARCH models that encode some, or all of the above mechanisms --- 
for example, a model that schematically reads $\text{ARCH} + \text{BB} + \text{LT}$ would require $3q-2$ parameters. 

\subsection{Spectral interpretation of the QARCH}\label{QARCHsec:spectral}

Another illuminating way to interpret QARCH models is to work  
in the diagonal basis of the $K$ matrix, where the quadratic term in Eq.~\eqref{QARCHeq:quadraticARCH} reads:
\begin{equation}\label{QARCHeq:eigen_decomp}
	 \sum_{\tau',\tau''=1}^{q}\left(\sum_{n}\lambda_nv_n(\tau')v_n(\tau'')\right)\,r_{t-\tau'}\,r_{t-\tau''}
	\equiv \sum_{n}\lambda_n\, \langle r | v_n \rangle_t^2
\end{equation}
with $(\lambda_n,v_n)$ being, respectively, the $n$-th eigenvalue and eigenvector of $K$, 
and $\langle r | v_n \rangle_t = \sum_{\tau=1}^{q}v_n(\tau)\,r_{t-\tau}$
 the projection of the pattern created by the last $q$ returns on the $n$-th eigenvector. 
One can therefore say that the square volatility $\sigma_t^2$ picks up contributions from various past returns eigenmodes. 
The modes associated to the largest eigenvalues $\lambda$ are those which have the largest contribution to volatility spikes. 

The ARCH($q$) model corresponds to a diagonal matrix; in this case the modes are trivially individual daily returns. 
Another trivial case is when $K$ is of rank one and its spectral decomposition is simply
\begin{equation}\label{QARCHeq:K.rank1}
	K(\tau',\tau'')=\lambda v(\tau')v(\tau'')
\end{equation}
where $\lambda=\tr(K)$ is the only non-null eigenvalue, 
and $v(\tau)=\sqrt{K(\tau,\tau)/\tr(K)}$ the eigenvector associated with this non-degenerate mode.
The corresponding contribution to the increase in volatility \eqref{QARCHeq:eigen_decomp} is therefore $\lambda \widehat R^2_t$,
where
\be
\widehat R_t = \vev{r|v}_t=\sum_{\tau=1}^qv(\tau)\,r_{t-\tau},
\ee
can be interpreted as an average return over the whole period, with a certain set of weights $v(\tau)$. 

The pure BB model (without extra ARCH contributions) can also be diagonalized analytically in the large $q$ limit for certain choices of the function $g_{\text{BB}}(\tau)$. 
We detail these calculations (which are mostly of theoretical interest) in Appendix~\ref{QARCHapx:A}.

\section{Empirical study: single stocks}\label{QARCHsect:emp_stocks}

We now turn to the calibration on real data of ``large'' QARCH models, i.e.\ models that take into account $q$ past returns with $q$ large (20 or more). 
The difficulty is that the full calibration of the matrix $K$ requires the determination of $q(q+1)/2$ parameters, which is already 
equal to $210$ when $q=20$! This is why the discussion of the previous section is important: imposing some {\it a priori} structure on the
matrix $K$ may help limiting the number of parameters, and gaining robustness and transparency. However, perhaps surprisingly, we will find that
none of the above model seem to have enough flexibility to reproduce the subtle structure of the empirically determined $K$ matrix.

\subsection{Dataset and methodology}

Equation~\eqref{QARCHeq:quadraticARCH} is a prediction model for the predicted variable 
$\sigma_t$ with explanatory variables past returns $r$ at all lags. The dataset we will use 
to calibrate the model is composed of daily stock prices (Open, Close, High, Low) 
for $N=280$ names present in the S\&P-500 index from Jan.\ 2000 to Dec.\ 2009 ($T=2515$ days), without 
interruption. The reference price for day $t$ is defined to be the close of that day $C_t$, and the return 
$r_t$ is given by $r_{t} = \ln C_{t} - \ln C_{t-1}$. The true volatility is of course unobservable; we 
replace it by the standard Rogers-Satchell (RS) estimator \cite{rogers1991estimating,floros2009modelling}: 
\be
	\widehat \sigma_t^2=\ln(H_t/O_t)\ln(H_t/C_t)+\ln(L_t/O_t)\ln(L_t/C_t).
\ee
As always in this kind of studies over extended periods of time, 
our dataset suffers from a selection bias since we have retained only 
those stock names that have remained alive over the whole period.

There are several further methodological points that we need to discuss right away:
\begin{itemize}
\item {\it Universality hypothesis}. We assume that the feedback 
matrix $K$ and the leverage kernel $L$ are identical for all stocks, once returns are standardized to get rid of 
the idiosyncratic average level of the volatility. This will allow us to use the whole data set (of size $N \times T$) to calibrate the model. 
Some dependence of $K$ and $L$ on global properties of firms (such as market cap, liquidity, etc.) may be
possible, and we leave this for a later study. However, we will see later that the universality hypothesis appears to 
be a reasonable first approximation.
\item {\it Removal of the market-wide volatility}. We anticipate that the volatility of a single 
stock has a market component that depends on the return of the index, and an idiosyncratic 
component that we attempt to account for with the returns of the stock itself. As a proxy for
the instantaneous market volatility, we take the cross-sectional average of the squared returns of individual
stocks, i.e.
\be
\Sigma_t = \sqrt{\frac{1}{N} \sum_{i=1}^N r_{i,t}^2 }
\ee
and redefine returns and volatilities as $r_t/\Sigma_t$ and $\widehat \sigma_t/\Sigma_t$. Finally, as 
announced above, the return time series are centered and standardized, and the RS volatility time series
are standardized such that $\langle \widehat \sigma_{i,t}^2 \rangle = 1$ for all $i$s. 
(This also gets rid of the multiplicative bias of the Rogers-Satchell estimator when used with non-Gaussian returns.)
\item {\it Calibration strategy}. The standard procedure used to calibrate ARCH models is maximum-likelihood, which 
relies on the choice of a family of distributions for the noise component $\xi$, often taken to be Student-t 
distributions (which include, in a limit, the Gaussian distribution). However, this method cannot be used 
directly in our case because there are far too many parameters and the numerical optimization of the log-likelihood 
is both extremely demanding in computer time and unreliable, as many different optima are expected in general. 
An alternative method, the Generalized Method of Moments (GMM\nomenclature{GMM}{Generalized method of moments}),
 is to determine the \mbox{$1+q+q(q+1)/2$} 
parameters of the model using empirically determined correlation functions that depend on $s^2$, $L(\tau)$ and $K(\tau,\tau')$.
This latter method is however sensitive to tail events and can lead to biases. 
We will therefore use a hybrid strategy, where a first estimate of these parameters, 
obtained using the GMM, serves as the starting point of a one-step likelihood maximization, 
which determines the set of most likely parameters in the vicinity of the GMM estimate 
(more details on this below). 
\item {\it Choice of the horizon $q$}. In principle, the value of the farthest relevant lag $q$ is an additional free parameter 
of the model, and should be estimated jointly with all the others. However, this would lead to a huge computational effort 
and to questionable conclusions. In fact, we will find that the diagonal elements $K(\tau,\tau)$ decay quite slowly with 
$\tau$ (in line with many previous studies) and can be accurately determined up to large lags using the GMM.
Off-diagonal elements, on the other hand, turn out to be much smaller and rather noisy. We will therefore restrict 
the horizon for these off-diagonal elements to $q=10$ (two weeks) or $q=20$ (four weeks). Longer horizons, although possibly
still significant, lead to very small out-of-sample extra predictability (but note that longer horizons {\it are} needed for the
diagonal elements of $K$).
\end{itemize}

\subsection{GMM estimation based on correlation functions}\label{QARCHsec:first_est}

On top of the already defined four-point correlation functions $\mathcal{C}^{(2)}(\tau)$ and $\mathcal{D}(\tau',\tau'')$ (and their 
corresponding ``tilde'' twins), we will introduce two- and three-point correlation functions that turn out to be useful (note that
the $r_t$s are assumed to have zero mean):
\begin{subequations}
\begin{align}
	           \mathcal{C}^{(1)}(\tau)        &\equiv\vev{r_t r_{t-\tau}}_t\\
	           \mathcal{C}^{(a)}(\tau)        &\equiv\vev{\left(r^2_t-\vev{r^2}\right)|r_{t-\tau}|}_t\\
	\widetilde{\mathcal{C}}^{(a)}(\tau)       &\equiv\vev{\left(\sigma^2_t-\vev{\sigma^2}\right)|r_{t-\tau}|}_t\\
	           \mathcal{L}      (\tau)        &\equiv\vev{\left(r^2_{t}-\vev{r^2}\right)r_{t-\tau}}_t\\
	\widetilde{\mathcal{L}}     (\tau)        &\equiv\vev{\left(\sigma^2_{t}-\vev{\sigma^2}\right)r_{t-\tau}}_t\\
	           \mathcal{L}^{(a)}(\tau)        &\equiv\vev{      |r_t| r_{t-\tau}}_t\\
	           \mathcal{D}^{(a)}(\tau',\tau'')&\equiv\vev{\left(|r_t|-\vev{|r|}\right)r_{t-\tau'}r_{t-\tau''}}.
\end{align}
\end{subequations}
The $\mathcal{C}^{(1)}(\tau)$ correlation function is by definition equal to $\vev{r_t^2}_t=1$ for $\tau=0$, 
and is usually considered to be zero for $\tau > 0$. 
However, as is well known, there are small anti-correlations of stock returns. 
On our data set, we find that these linear correlations are very noisy but significant, and can be fitted by:
\be
\mathcal{C}^{(1)}(\tau \geq 1) \approx - 0.04 \, \e^{-0.39 \tau},
\ee
corresponding to a decay time of \mbox{$\approx 2.5$} days.
The values of $\mathcal{C}^{(a)}$ characterize volatility correlations and are similar in spirit to $\mathcal{C}^{(2)}$, but they only involve 
third order moments of $r$, instead of fourth order moments, and are thus more robust to extreme events. 
The $\mathcal{L}$ correlations, on the other hand, characterize the leverage effect, 
i.e.\ the influence of the {\it sign} of past returns on future volatilities. 

These correlation functions allow us to define a well-posed problem of solving a system with \mbox{$1+q+\frac{q(q+1)}{2}$} unknowns
$\left(s^2, L(\tau), K(\tau',\tau'')\right)$ using the following \mbox{$1+q+q+\frac{q(q-1)}{2}$} equations that involve empirically 
measured correlation functions (in calligraphic letters), for \mbox{$1\leq\tau\leq q$} and \mbox{$1\leq\tau_2<\tau_1\leq q$}:
\begin{subequations}\label{QARCHeq:model_predictions}
\begin{align}\label{QARCHeq:2points}
	\vev{\sigma^2}&=s^2+\sum_{\tau',\tau''}K(\tau',\tau'')\mathcal{C}^{(1)}(\tau'\!-\!\tau'')\\
	\widetilde{\mathcal{L}}(\tau)&=\sum_{\tau'}L(\tau')\mathcal{C}^{(1)}(\tau\!-\!\tau')\\
	                         &+ \sum_{\tau'}K(\tau',\tau')\mathcal{L}(\tau\!-\!\tau')
	                         +2\sum_{\tau'\neq \tau}K(\tau',\tau)\mathcal{L}(\tau'\!-\!\tau)\\\label{QARCHeq:3pointsC}
	\widetilde{\mathcal{C}}^{(a)}(\tau)&\approx\sum_{\tau'}L(\tau')\mathcal{L}^{(a)}(\tau'\!-\!\tau)\\\nonumber
	&+\sum_{\tau'}K(\tau',\tau')\mathcal{C}^{(a)}(\tau-\tau')
	 +2\!\sum_{\tau'>\tau''>\tau>0}\!K(\tau',\tau'')\mathcal{D}^{(a)}(\tau'\!-\!\tau,\tau''\!-\!\tau)\\\label{QARCHeq:4points}
	\widetilde{\mathcal{D}}(\tau_1,\tau_2) &\approx L(\tau_2)\mathcal{L}(\tau_1\!-\!\tau_2)+L(\tau_1)\mathcal{L}(\tau_2\!-\!\tau_1)\\\nonumber
	                              +&2\sum_{\tau'>\tau_2}K(\tau',\tau_2)\left[\mathcal{D}(\tau_1\!-\!\tau_2,\tau'\!-\!\tau_2)
	                              + \mathcal{C}^{(1)}(\tau_1\!-\!\tau')-\mathcal{C}^{(1)}(\tau'\!-\!\tau_2)\mathcal{C}^{(1)}(\tau_1\!-\!\tau_2)\right]\\\nonumber
	                              +&\sum_{\tau'\leq\min(\tau_1,\tau_2)}\!K(\tau',\tau')\mathcal{D}(\tau_1\!-\!\tau',\tau_2\!-\!\tau'),
\end{align}
\end{subequations}
where all the sums only involve positive $\tau$s.
These equations are exact if all 3-point and 4-point correlations that involve $r$s at 3 (resp.\ 4) distinct times are strictly zero.
But since the linear correlations $\mathcal{C}^{(1)}(\tau>0)$ are very small, it is a safe approximation to neglect these higher order 
correlations. 

Note that the above equations still involve fourth order moments (the off-diagonal elements of $\mathcal{D}$),
that in turn lead to very noisy estimators of the off-diagonal of $K$.
In order to improve the accuracy of these estimators, 
we have cut-off large events by transforming the returns $r_t$ into $r_{\text{cut}} \tanh(r_t/r_{\text{cut}})$, 
which leaves small returns unchanged but caps large returns. 
We have chosen to truncate events beyond $3-\sigma$, i.e.\ $r_{\text{cut}}=3$. 
In any case, we will use the above equations in conjunction with maximum likelihood (for which no cut-off is used) to obtain more robust estimates of these off-diagonal elements. 

When solving the set of equations~\eqref{QARCHeq:model_predictions}, 
we find that the diagonal elements $K(\tau,\tau)$ are an order of magnitude larger than the 
corresponding off-diagonal elements $K(\tau,\tau'\neq \tau)$. 
This was not expected {\it a priori} and is in fact one of the central result of this study, 
and confirms that {\it daily returns} indeed play a special role in the volatility feedback mechanism, as postulated by ARCH models. 
Returns on different time scales, while significant, can be treated as a perturbation of the main ARCH effect.  
\begin{figure}
	\center
	\includegraphics[scale=0.6]{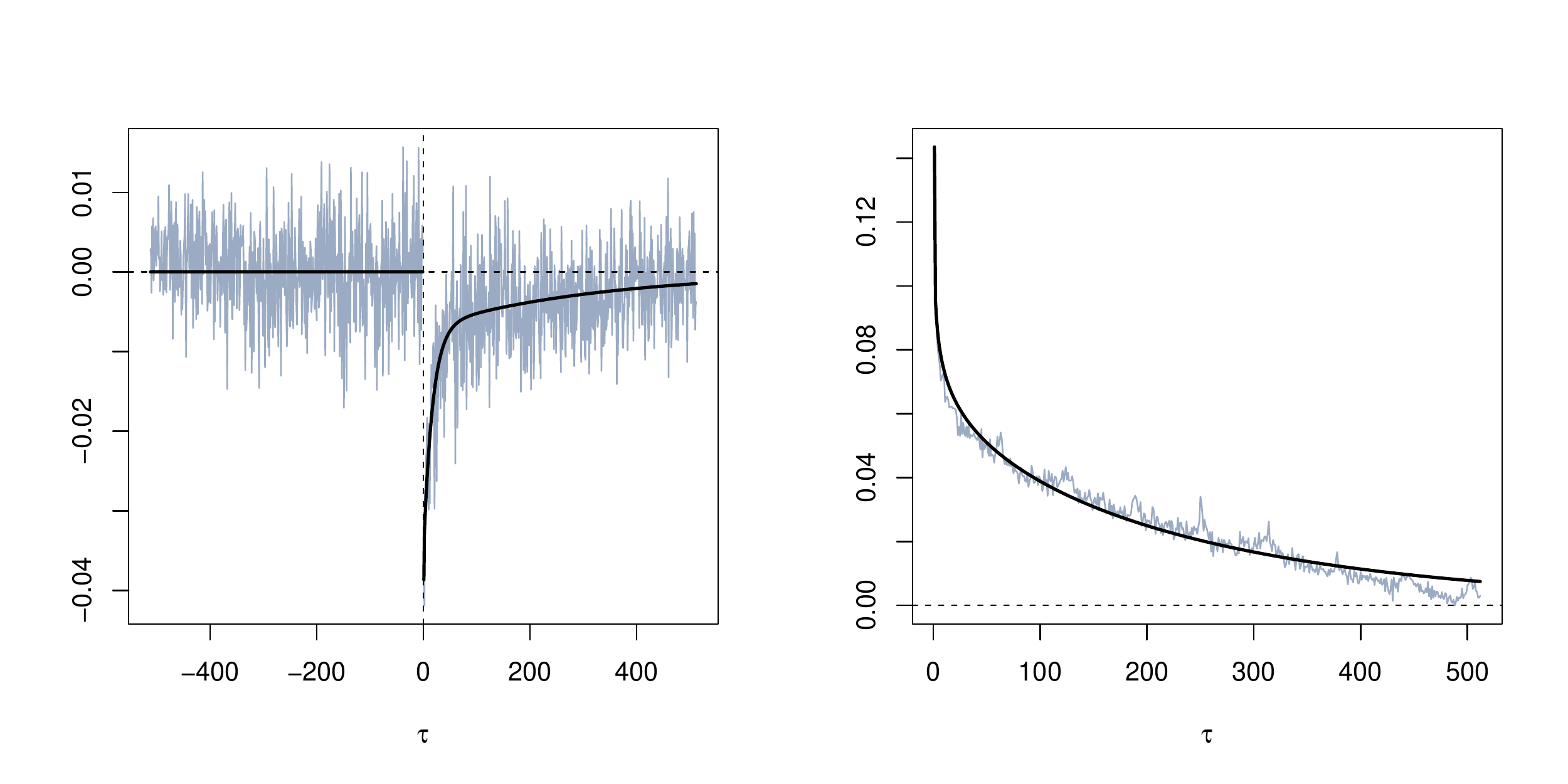}
	\caption{Two empirical correlation functions: the leverage $\mathcal{L}(\tau)$ and
	         the correlation of amplitudes $\widetilde{C}^{(a)}(\tau)$, together with their fits.
	         $\mathcal{L}(\tau)$ is fitted by the sum of two exponentials $-a\,\e^{-\tau/b} -c\,\e^{-\tau/d}$, with $a=0.007$, $b=327$ days, $c=0.029$, $d=17$ days; 
	         whereas $\widetilde{C}^{(a)}(\tau)$ is 
	         fitted by a power-law truncated by an exponential: $B \tau^{-\beta}\e^{-\tau/\tau_0}$, with $\beta=0.14, B=0.106, \tau_0=290$ days.}
	\label{QARCHfig:LCa}

\vfill
	\center
	\includegraphics[scale=0.66,trim=0 0 305 0,clip]{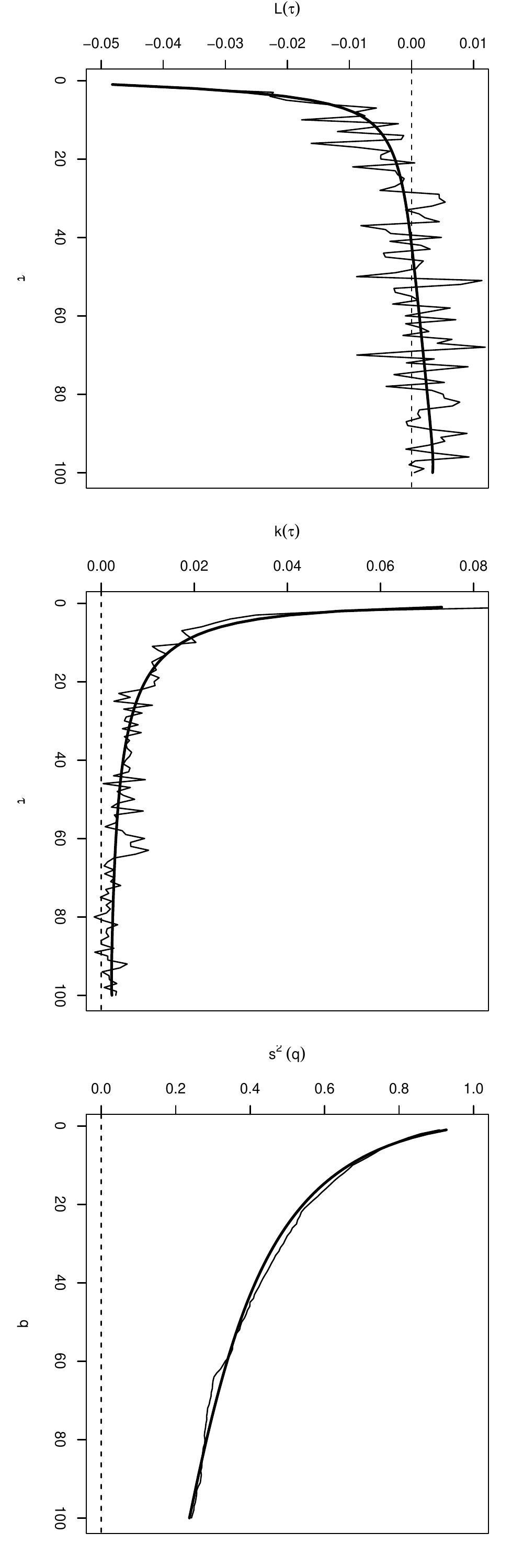}
	\caption{Calibration of the diagonal kernels for stocks, with $q=100$.
             \textbf{Left:} $L(\tau)$. \textbf{Right:} $K(\tau,\tau)$.
             The thick curves are obtained with fitted instead of raw input correlation functions, see Fig.~\ref{QARCHfig:LCa}.}
	\label{QARCHfig:L_stocks}
\end{figure}
\subsection{The diagonal kernels}

The above remark suggests to calibrate the model in two steps. 
We first neglect off-diagonal effects altogether, and determine the $2q+1$ parameters
$s^2, L(\tau)$ and $k(\tau)=K(\tau,\tau)$ through the following reduced set of equations:

\begin{subequations}\label{QARCHeq:model_simplified}
\begin{align}\label{QARCHeq:2points_simplified}
	\vev{\sigma^2}=1&=s^2+\sum_{\tau} k(\tau) \\
	\widetilde{\mathcal{L}}(\tau)&=L(\tau) + \sum_{\tau' \neq \tau}L(\tau')\mathcal{C}^{(1)}(\tau\!-\!\tau')
	                         + \sum_{\tau'} k(\tau')\mathcal{L}(\tau\!-\!\tau')
	                         \\
	                         \label{QARCHeq:3pointsC_simplified}
	\widetilde{\mathcal{C}}^{(a)}(\tau)&\approx\sum_{\tau'}L(\tau')\mathcal{L}^{(a)}(\tau'\!-\!\tau)+
	\sum_{\tau'}k(\tau')\mathcal{C}^{(a)}(\tau-\tau').
\end{align}
\end{subequations}

The input empirical correlation functions ${\mathcal{L}}(\tau)$ and $\widetilde{\mathcal{C}}^{(a)}(\tau)$ are plotted in Fig.~\ref{QARCHfig:LCa}, 
together with, respectively, a double-exponential fit and a truncated power-law fit (see legend for parameters values).
$\widetilde{\mathcal{L}}$ and $\mathcal{L}^{(a)}$ look very similar to $\mathcal{L}(\tau)$; 
note that all these functions are approximately zero for \mbox{$\tau < 0$}. 
The above equations are then solved using these analytical fits, 
which leads to the kernels $k(\tau)$ and $L(\tau)$ that we report in bold in Fig.~\ref{QARCHfig:L_stocks}. 
Using the raw data --- instead of the fits --- for all the correlation functions results in more noisy $L(\tau)$ and $k(\tau)$ (thin lines),
but still very close to the bold curves shown in Fig.~\ref{QARCHfig:L_stocks}. 
As expected, $L(\tau)$ is very close to $\widetilde{\mathcal{L}}(\tau)$: 
there is a weak, but significant leverage effect for individual stocks.

\begin{figure}
	\center
	\includegraphics[scale=0.425]{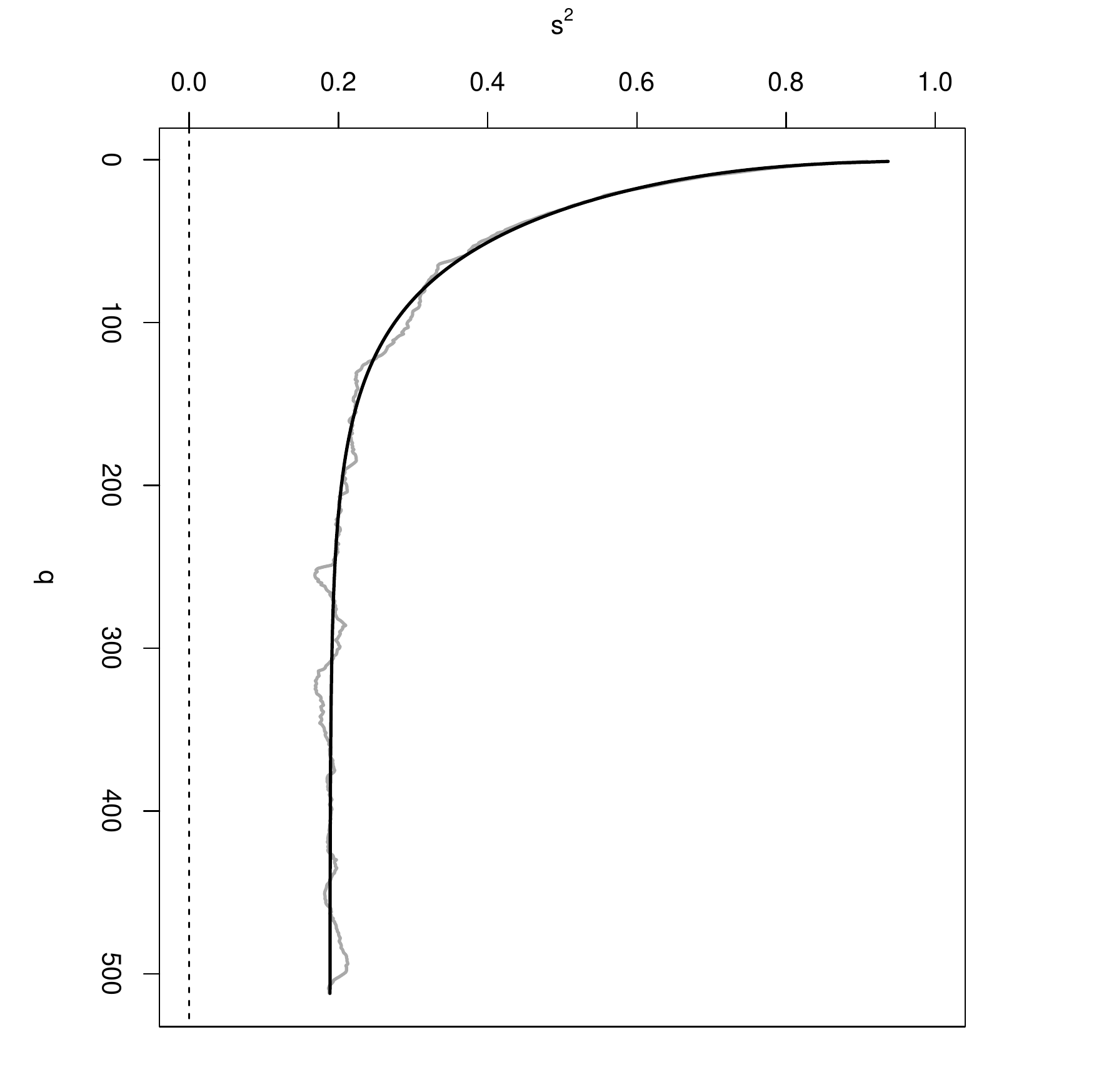}
    \caption{$s^2$ as a function of the farthest lag;
	the solid line is a fit according to the formula~\eqref{QARCHeq:fit_s2} (see text for parameter values).}
	\label{QARCHfig:s2_stocks}
\end{figure}

We then show in Fig.~\ref{QARCHfig:s2_stocks} a plot of $s^2(q) = 1 - \sum_{\tau=1}^q k(\tau)$ as a function of $q$. 
Including the feedback of the far away past  progressively decreases the value of the baseline level $s^2$.
In order to extrapolate to $q \to \infty$, we have found that the following fit is very accurate:
\be\label{QARCHeq:fit_s2}
	s^2(q) = s_\infty^2 + g\,\frac{q^{1-\alpha}}{\alpha-1} \,\e^{-q/q_0},
\ee
with $s_\infty^2 \approx 0.21, \alpha \approx 1.11, g \approx 0.081$ and $q_0 \approx 53$. 
Several comments are interesting at this stage:
\begin{itemize}
\item The asymptotic value $s_\infty^2$ is equal to $20 \%$ of the observed squared volatility, 
meaning that volatility feedback effects increase the volatility by a factor \mbox{$\approx 2.25$} 
(but remember that we have scaled out the market wide volatility). 
Such a strong amplification of the volatility resonates with Shiller's ``excess volatility puzzle'' 
and gives a quantitative estimate of the role of self-reflexivity and feedback effects in financial markets 
\cite{shiller1981stock,cutler1989moves,fair2002events,soros1994alchemy,bouchaud2011endogeneous,filimonov2012quantifying}.
\item The above fit corresponds to a power-law behavior, \mbox{$k(\tau) \approx g \tau^{-\alpha}$} for $\tau \ll q_0$, 
and an exponential decay for larger lags.
Therefore, a characteristic time scale of $q_0 \approx 3$ months appears, 
beyond which volatility feedback effects decay more rapidly. 
\item With a diagonal positive kernel $K$, the condition for positive definiteness \eqref{QARCHeq:nonneg} of the quadratic form reads $\sum_{\tau}L(\tau)^2/k(\tau)\leq4s^2$.
The estimated values of $L$ and $k$ yield a left-hand side equal to 0.595, while the right-hand side amounts to 0.823.
\item Using the results of Sect.~\ref{QARCHsec:fourth_moment}, one can compute the theoretical value of $\langle \sigma^4 \rangle$ 
that corresponds to the empirically determined $k(\tau)$. 
As expected from the fact that $g$ is small and $\alpha$ close to unity, one finds that the fluctuations of volatility induced by the long-memory feedback are weak: 
$\langle \sigma^4 \rangle = 1.156$ (see Eq.~\eqref{QARCHeq:sigma4pert} above).
Including the contribution of the leverage kernel $L(\tau)$ to $\langle \sigma^4 \rangle$ does not change much the final numerical value, 
that shifts from 1.156 to 1.161.
\item The smallness of $\vev{\sigma^4}-\vev{\sigma^2}^2$ shows that most of the kurtosis of the returns $r_t = \sigma_t \xi_t$ must come from the noise $\xi_t$, 
which cannot be taken as Gaussian. Using the diagonal ARCH model with the kernels determined as above to predict $\sigma_t$, 
one can study the distribution of $\xi_t = r_t/\sigma_t$ and find the most likely Student-t distribution that accounts for it. 
We find that the optimal number of degrees of freedom is \mbox{$\nu \approx 6.4$}, and the resulting fit is shown in Fig.~\ref{QARCHfig:mu_xi}. 
Note that while the body and `near-tails' of the distribution are well reproduced by the Student-t, 
the far-tails are still fatter than expected. 
This is in agreement with the commonly accepted tail index of \mbox{$\nu_\text{tail} \approx 4$}, significantly smaller than $6.4$. 
\end{itemize}

 \begin{figure}
 	\center
 	\includegraphics[trim=500 0 0 0,clip,scale=0.425]{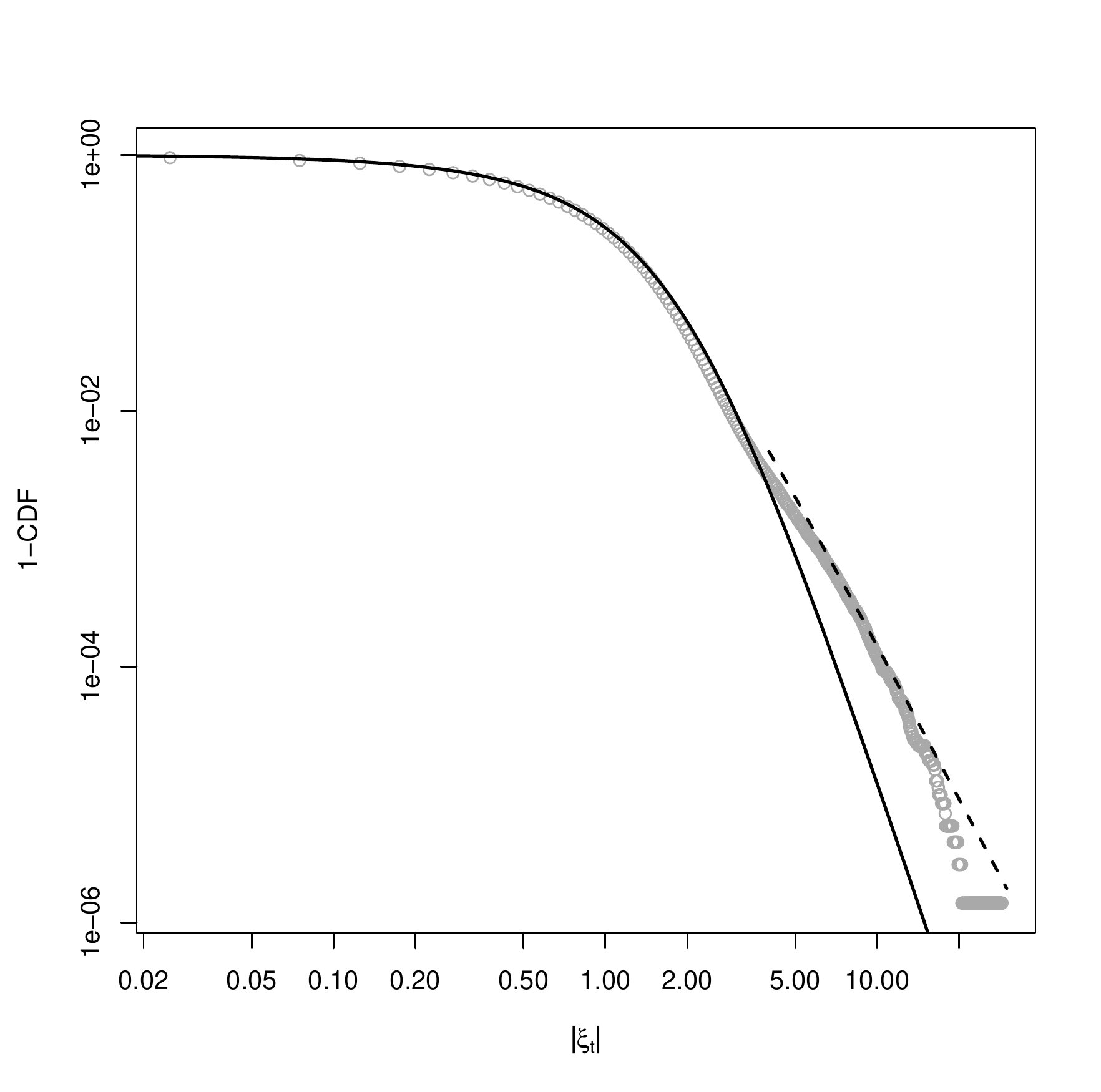}
 	\caption{Cumulative distribution function of the residuals $\xi_t=r_t/\sigma_t$. 
 	The plain line is the Student distribution with best (maximum-likelihood) fitting tail parameter $\nu=6.4$. 
 	Far tails suggest a fatter distribution with a smaller value of $\nu_\text{tail} \approx 4$ (dashed).}
 	\label{QARCHfig:mu_xi}
 \end{figure}

Assuming $\xi_t$ to be a Student-t random variable with $\nu=6.4$ degrees of freedom, 
we have re-estimated $k(\tau)$ and $L(\tau)$ using maximum-likelihood (see below). 
The final results are more noisy, but close to the above ones after fitting. Our  
strategy is thus to fix both $k(\tau)$ and $L(\tau)$, and only focus on the off-diagonal elements of $K$ henceforth. 

\subsection{The off-diagonal kernel, GMM \& maximum likelihood estimation}

We can now go back to Eq.~\eqref{QARCHeq:4points} that allows one to solve for $K(\tau',\tau'' > \tau')$, once $k(\tau)$ and $L(\tau)$ are 
known. As announced above, we choose $q=20$ for the time being. Because $\mathcal{D}$ involves the fourth moment of the returns, this procedure
is not very stable, even with a lot of data pooled together, and even after the truncation of large returns. Maximum likelihood estimates 
would be more adapted here, but the dimensionality of the problem prevents a brute force determination of the $q(q-1)/2$ parameters. 
In order to gain some robustness, we use the following strategy. The Student log-likelihood per point ${\mathcal{I}}$, with $\nu$ degrees of freedom, is given by:%
\footnote{In the following we do {\it not} truncate the large returns, but completely neglect the weak linear correlations $\mathcal{C}^{(1)}(\tau)$ that are present for small lags, and
that should in principle be taken into account in the likelihood estimator.}
\be\label{QARCHeq:loglike}
{\mathcal{I}}_\nu(L,K,\{r_t\})=\frac{1}{2T}\sum_t\left[\nu\,\ln{a_t^2}-(\nu+1)\,\ln(a_t^2+r_t^2)\right],\quad a_t^2 \equiv(\nu-2)\,\sigma_t^2,
\ee
where $r_t\equiv \sigma_t\xi_t$ and $\sigma_t^2$ is given by the QARCH model expression, Eq.~\eqref{QARCHeq:quadraticARCH}, 
and in this section $K$ is a notation for the off-diagonal content only. 
We fix $\nu=6.4$ and determine numerically the gradient 
$\partial {\mathcal{I}}_\nu$
and the Hessian $\partial \partial {\mathcal{I}}_\nu$ of ${\mathcal{I}}_\nu$ as a function of {\it all} the $q(q-1)/2$ off-diagonal coefficients 
$K(\tau',\tau'' > \tau')$, computed either around the GMM estimates of these parameters, or around the ARCH point where all these coefficients are zero. 
Note that $\partial {\mathcal{I}}_\nu$ is a vector with $q(q-1)/2$ entries and $\partial \partial {\mathcal{I}}_\nu$ is a $q(q-1)/2 \times q(q-1)/2$
matrix. It so happens that the eigenvalues of the Hessian are all found to be {\it negative}, i.e.\ the starting point is in the vicinity of a local
maximum. This allows one to find easily the values of the $q(q-1)/2$ parameters that maximize the value of ${\mathcal{I}}_\nu$; they are (symbolically) given by:
\be
K^* = K_0 - \left(\,\overline{\partial \partial {\mathcal{I}}_\nu}\,\right)^{-1} \cdot\overline{ \partial {\mathcal{I}}_\nu},
\ee
where $K_0$ is the chosen starting point --- either the GMM estimate $K_0=K_{\text{GMM}}$ based on Eq.~\eqref{QARCHeq:4points}, 
or simply $K_0=0$ if one starts from a diagonal ARCH model --- and the overline on top of $\partial\partial\mathcal{I},\partial\mathcal{I}$ indicates averaging over stocks. 
This one step procedure is only approximate but can be iterated; 
it however assumes that the maximum is close to the chosen initial point, 
and would not work if some eigenvalues of the Hessian become positive. 
In our case, both starting points ($K_0=0$ or $K_0=K_{\text{GMM}}$) lead to nearly exactly the same solution; 
furthermore the Hessian recomputed at the solution point is very close to the initial Hessian, 
indicating that the likelihood is a locally quadratic function of the parameters, 
and the gradient evaluated at the solution point is very close to zero in all directions,
confirming that a local maximum as been reached.

Before exposing our results, we briefly go through a digression to discuss the bias and error 
on the estimated parameters $K^*$ as well as on the resulting maximal average likelihood $\overline{\mathcal{I}}^*$.
The likelihood $\mathcal{I}$, its gradient $\partial {\mathcal{I}}$ and Hessian matrix $\partial \partial {\mathcal{I}}$
are generic functions of the set of parameters $K$ to be estimated, and of the dataset, of size $n$.
As the number $n$ of observations tends to infinity, the covariance matrix of the ML \nomenclature{ML}{Maximum likelihood} estimator of the parameters is well known to be $(nI)^{-1}$,
where $I$ is the Fisher Information matrix
\[
	I=\Esp{-\partial \partial {\mathcal{I}}}\approx -\overline{\partial \partial {\mathcal{I}}}(K^*)
\]
while the asymptotic bias scales as $n^{-1}$ and is thus much smaller than the above error ($\sim n^{-1/2}$). 
As a consequence, ML estimates of $K$ exceeding $\pm \diag(-n\,\overline{\partial\partial\mathcal{I}}^*)^{-1/2}$ will be deemed significant.
The {\it bias} on the average in-sample (IS) value of the maximum likelihood itself can be computed to first order in $\frac{1}{n}$, 
and is very generally found to be $+M/2n$, where $M$ is the number of parameters to be determined.
Similarly, the bias on the average out-of-sample value of the maximum likelihood is $-M/2n$.%
\footnote{These corrections to the likelihoods 
lead to the (per point) Akaike Information Criterion \cite{akaike1974new} \mbox{$AIC=-2(\mathcal{I}-M/n)$}, 
that trades off the log-likelihood and the number of parameters. AIC is used for model selection purposes mainly. 
When comparing parametric models with the same number of parameters, AIC is not more powerful than the likelihood.}
Since each sampling of our data set will contain $n = T\cdot N/2\approx 350,000$ observations, 
differences of likelihood smaller than $M/2n \approx 3 \cdot 10^{-4}$ are 
insignificant when $M=190$ (corresponding to all off-diagonal elements when $q=20$). 
This number is $\approx 5$ times smaller when one considers the restricted models introduced above (which contain $\approx 40$ parameters). 

The most likely off-diagonal coefficients of $K^*$ are found to be highly significant (see Tab.~\ref{QARCHtab:IS-OOS}): 
the IS increase of likelihood from the purely diagonal ARCH($q$) model is $\Delta \mathcal{I} \approx 10^{-3}$ per point. 
This is confirmed by an out-of-sample (OS) experiment, where we determine $K^*$ on half the pool of stocks 
and use it to predict the volatility on the other half (whence the above factor ${1}/{2}$ in the numerical estimation of $n$). 
The experiment is performed over $N_{\text{samp}}=150$ random pool samplings. 
The average OS likelihood is very significantly better for the full off-diagonal kernel $K^*$ than for the diagonal ARCH($q$), 
itself being better than the GMM estimate $K_{\text{GMM}}$ based on Eq.~\eqref{QARCHeq:4points}, and probably subject to biases due to the
truncation procedure. Note that the full off-diagonal kernel $K^*$ has many more parameters than the diagonal ARCH($q$); it therefore 
starts with a handicap out-of-sample since the bias on the OS likelihood is, as recalled above, $-M/2n$.

However, as announced above, the off-diagonal elements of $K^*$ are a factor ten smaller than the corresponding diagonal values. 
We give a heat-map representation of the matrix $K^*$ in Fig.~\ref{QARCHfig:heatmap}. Various surprises are immediately apparent. 

First, while the off-diagonal elements are mostly positive for small $\tau',\tau''$, 
clear negative streaks appear for intermediate and large $\tau$s. 
This is unexpected, since one would have naively guessed that any trend 
(i.e.\ positive realized correlations between returns) should {\it increase} future volatilities. 
Here we see that some quadratic combinations of past returns contribute negatively to the volatility. 
This will show up in the spectral properties of $K^*$ (see Sect.~\ref{QARCHsec:spectral2} below).

The second surprise is that there does not seem to be any obvious structure of the matrix, 
that would be reminiscent of one of the simple models represented in Fig.~\ref{QARCHfig:matrices}.
This means that the fine structure of volatility feedback effects is much more subtle than anticipated. 

 \begin{figure}
 	\center
 	\includegraphics*[scale=0.35,angle=-90]{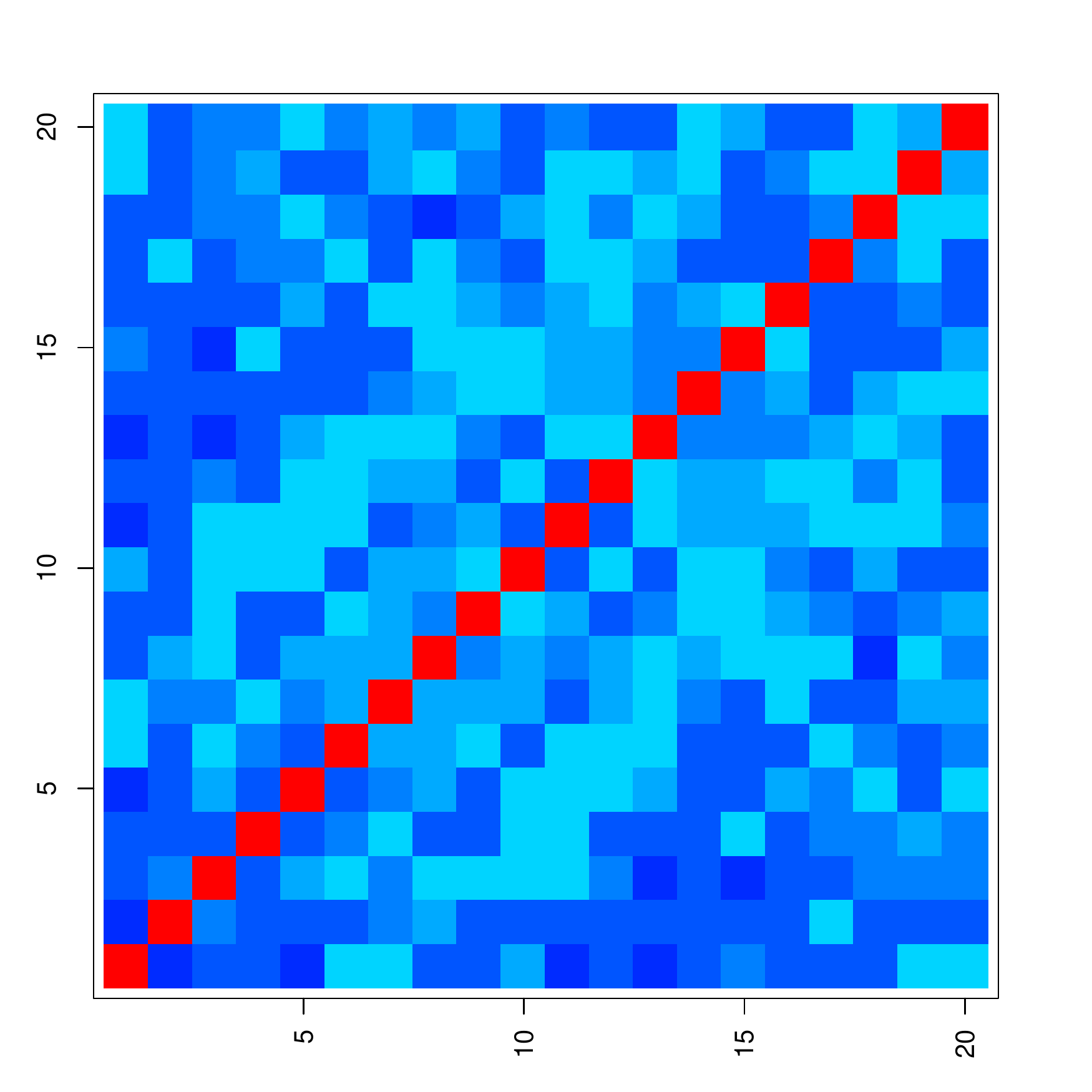}
 	\includegraphics*[scale=0.35,angle=-90]{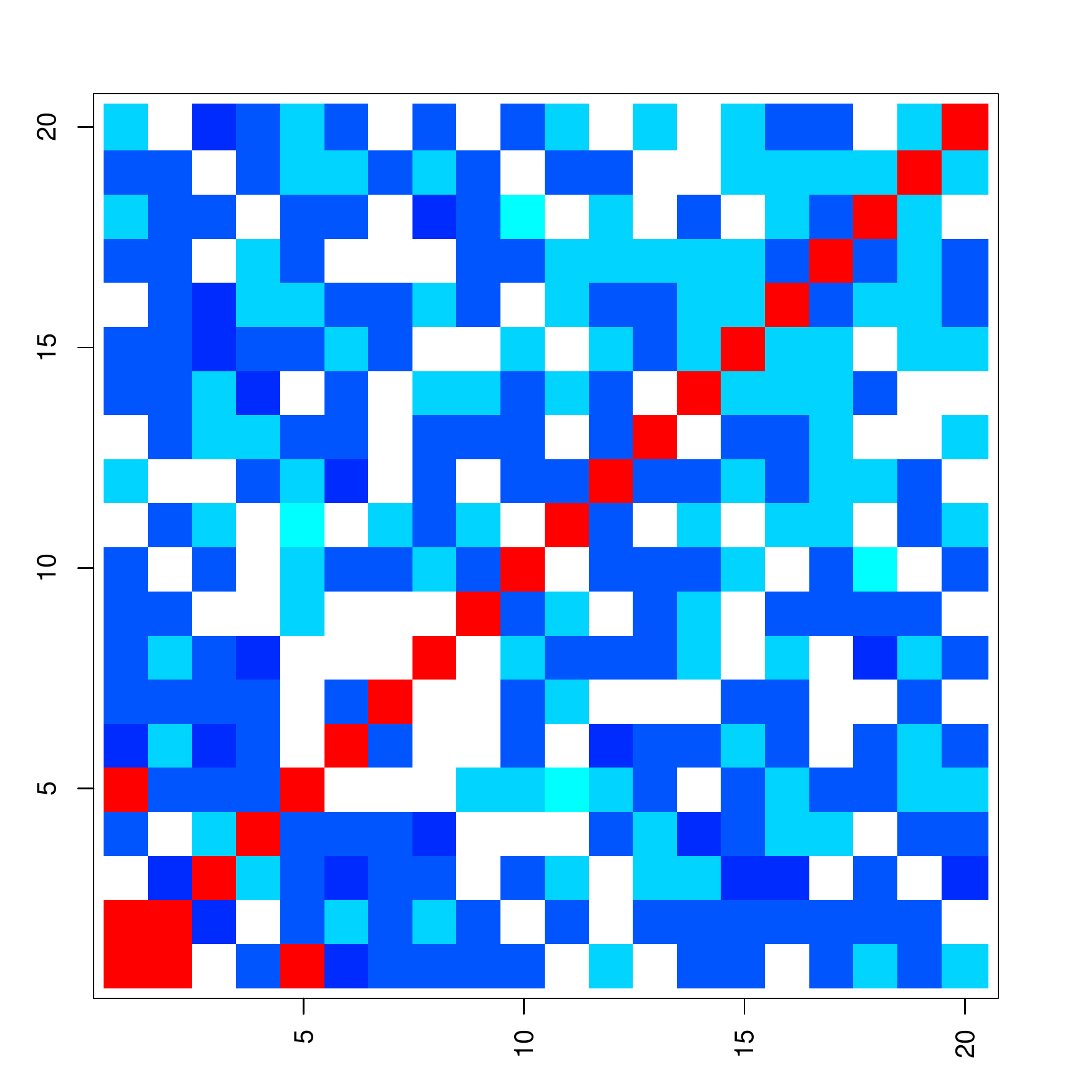}
 	\includegraphics*[scale=0.35,angle=-90]{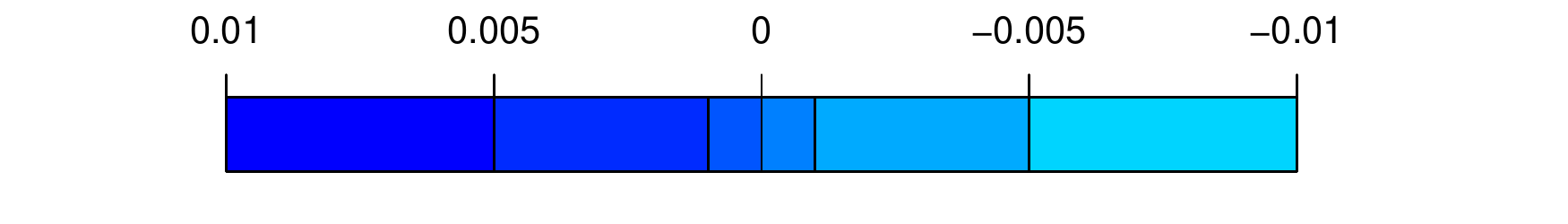}
 	\caption{Heatmap of the unconstrained model.
 	\textbf{Left: } GMM estimation; \textbf{Right: } ML estimation (white spots correspond to values smaller than their corresponding error margin). Red spots correspond to values 
 	larger than $0.01$. Note the 
 	negative streaks at large lags, and the significant off-diagonal entry for $\tau=1,\tau'=2$ days. }
 	\label{QARCHfig:heatmap}
 \end{figure}

We have nevertheless implemented a {\it restricted} maximum-likelihood estimation 
that imposes the structure of one of the models considered in Sect.~\ref{QARCHsec:section3}. 
We find that all these models are equally ``bad'' --- 
although they lead to a significant increase of likelihood compared to the pure ARCH case, both IS and OS, 
they are all superseded by the unconstrained model shown in Fig.~\ref{QARCHfig:heatmap}, again both IS and OS (see Tab.~\ref{QARCHtab:IS-OOS}). 
The best OS model is ``Long-trend'', with a kernel $g_{\text{LT}}(\ell)$ shown in Fig.~\ref{QARCHfig:structure_fcts}, 
together with the functions $g_2(\ell)$, $g_{\text{BB}}(\ell)$, $g_{\text{Z}}(\ell)$. 
While $g_{\text{LT}}(\ell)$ looks roughly like an exponential with memory time $10$ days, the two-day return kernel $g_2(\ell)$ reveals intriguing oscillations. 
Two-day returns influence the volatility quite differently from one day returns!
On the other hand, we do not find any convincing sign of the multi-scale ``BB'' structure postulated in \cite{borland2005multi}. 

\begin{figure}
	\center
	\includegraphics[scale=0.45]{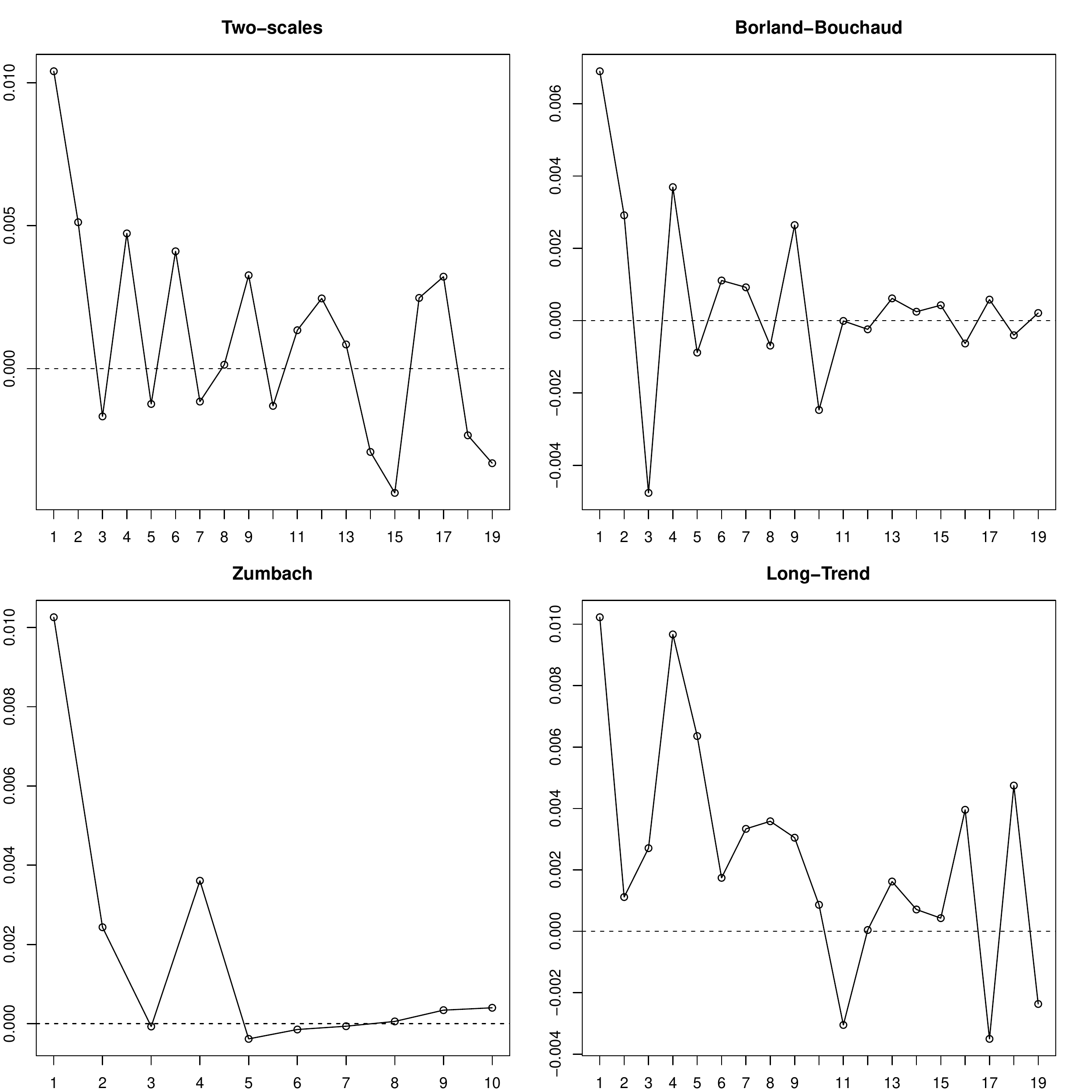}
	\caption{Plot of the empirically determined kernels $g_2(\ell)$, $g_{\text{BB}}(\ell)$, $g_{\text{Z}}(\ell)$ and $g_{\text{LT}}(\ell)$ for the 
	restricted models of Sect.~\ref{QARCHsec:section3}.}
	\label{QARCHfig:structure_fcts}
\end{figure}

Note that the structure shown in Fig.~\ref{QARCHfig:structure_fcts} is found to be stable when $q$ is changed. 
It would be interesting to subdivide the pool of stocks in different categories 
(for example, small caps/large caps) or in different sub-periods, and study how the off-diagonal structure of $K$ is affected.
However, we note that the dispersion of likelihood over different samplings of the pool of stocks is only $50\%$ larger than 
the ``true'' dispersion, due only to random samplings of a fixed QARCH model with parameters calibrated to the data (see caption of Tab.~\ref{QARCHtab:IS-OOS}). 
This validates, at least as a first approximation, the assumption of homogeneity among all the stock series that have been averaged over.


To conclude this empirical part, we have performed several ex-post checks to be sure that our assumptions and preliminary estimations are justified.
First, we have revisited the most likely value of the Student parameter $\nu$ 
(tail index of the distribution of the residuals $\xi(t)=r(t)/\sigma_{\text{QARCH}}(t)$)
with now the full matrix $K^*$, plus diagonal terms up to $q=100$, and found again $\nu=6.4$. 
This shows that our procedure is consistent from that point of view.
Second, we have computed the quadratic correlation of the residuals $\xi_t$, which are assumed in the model to be IID random variables with, in 
particular, no variance autocorrelation: $\vev{\xi_t^2\xi_{t-\tau}^2}-1=0$ for $\tau \neq 0$. 
Empirically, we observe a negative correlation of weak magnitude exponentially decaying with time. 
This additional dependence of the amplitude of the residuals, together with the excess fat tails of their probability distribution,
is probably a manifestation of the truly exogenous events occurring in financial markets that have different statistical properties \cite{joulin2008stock} 
and not captured by the endogenous feedback mechanism. Finally, about the universality hypothesis, we discuss in the caption of Tab.~\ref{QARCHtab:IS-OOS} how the assumption of homogeneous stocks
is validated by comparing the cross-sectional dispersion of the likelihoods obtained empirically and on simulated series.

\begin{table}
\begin{center}
\begin{tabular}{c||cc|c|}
  &GMM        &ARCH(20)    &ARCH+ML\\\hline\hline
IS&$-1.31533$ &$-1.31503$ &$\mathbf{-1.31405}$\\
OS&$-1.32003$ &$-1.31971$ &$\mathbf{-1.31914}$\\\hline
\end{tabular}

\vspace{2em}

\begin{tabular}{c|cccccc|}
ARCH+   &   BB       &  Z        &  2s       &  LT       &  2s+Z    &   2s+LT   \\\hline\hline
IS      &$-1.31486 $ &$-1.31490$ &$-1.31490$ &$-1.31487$ &$-1.31488$&$-1.31489$ \\
OS      &$-1.31960 $ &$-1.31957$ &$-1.31962$ &$-1.31957$ &$-1.31956$&$-1.31957$ \\\hline
\end{tabular}
\caption{Log-likelihoods, according to Eq.~\eqref{QARCHeq:loglike}. 
In-sample and out-of-sample likelihoods are computed as follows: 
for each of $N_{\text{samp}}=150$ iterations, half of the stock names are randomly chosen for the calibration of $K,L$ 
and the likelihood is computed with the obtained $K^*,L^*$ on each series of the same sample (`In-sample' likelihoods).
Then, the likelihood is again computed with the same parameters but on the series of the other sample (`Out-of-sample' likelihoods).
While the former quantify how much the estimated model succeeds in reproducing the given sample, 
the latter measure the reliability of the model on \emph{other} similar datasets. 
In order to quantify the validity of the model in an absolute way, the likelihood can be compared with the ``true'' value,  
obtained with simulated data (since an analytical treatment is out of reach). 
The average likelihood per point $\overline{\mathcal{I}}^*(r_t)$ with $r_t$ simulated as a QARCH with parameters $K^*,L^*$, 
and $\nu=6.4$ is equal to $-1.34019$, which is 1.5\% away from the empirical values\protect\footnotemark.
The likelihoods reported in the table are averages over all samplings, and the corresponding 1-s.d.\ \nomenclature{s.d.}{Standard deviation}
dispersion is found to be $\approx 3\cdot 10^{-3}$ in all cases, 
to be compared to  $2\cdot 10^{-3}$ for random samplings of a fixed QARCH model with the same parameters.
}
\label{QARCHtab:IS-OOS}
\end{center}
\end{table}
\footnotetext{Note that the true likelihood is not necessarily larger than the realized one under a misspecified model.}

\subsection{Spectral properties of the empirical kernel $K$}
\label{QARCHsec:spectral2}
\begin{sidewaysfigure}
	\center
    \subfigure[The difference between the ranked eigenvalues of the estimated kernels and those of $K_{\text{ARCH}}$ 
	           as a function of the latter (again ranked).
	           The dashed oblique line has slope $-1$ and separates positive eigenvalues from negative ones.]{
	\includegraphics[scale=0.6]{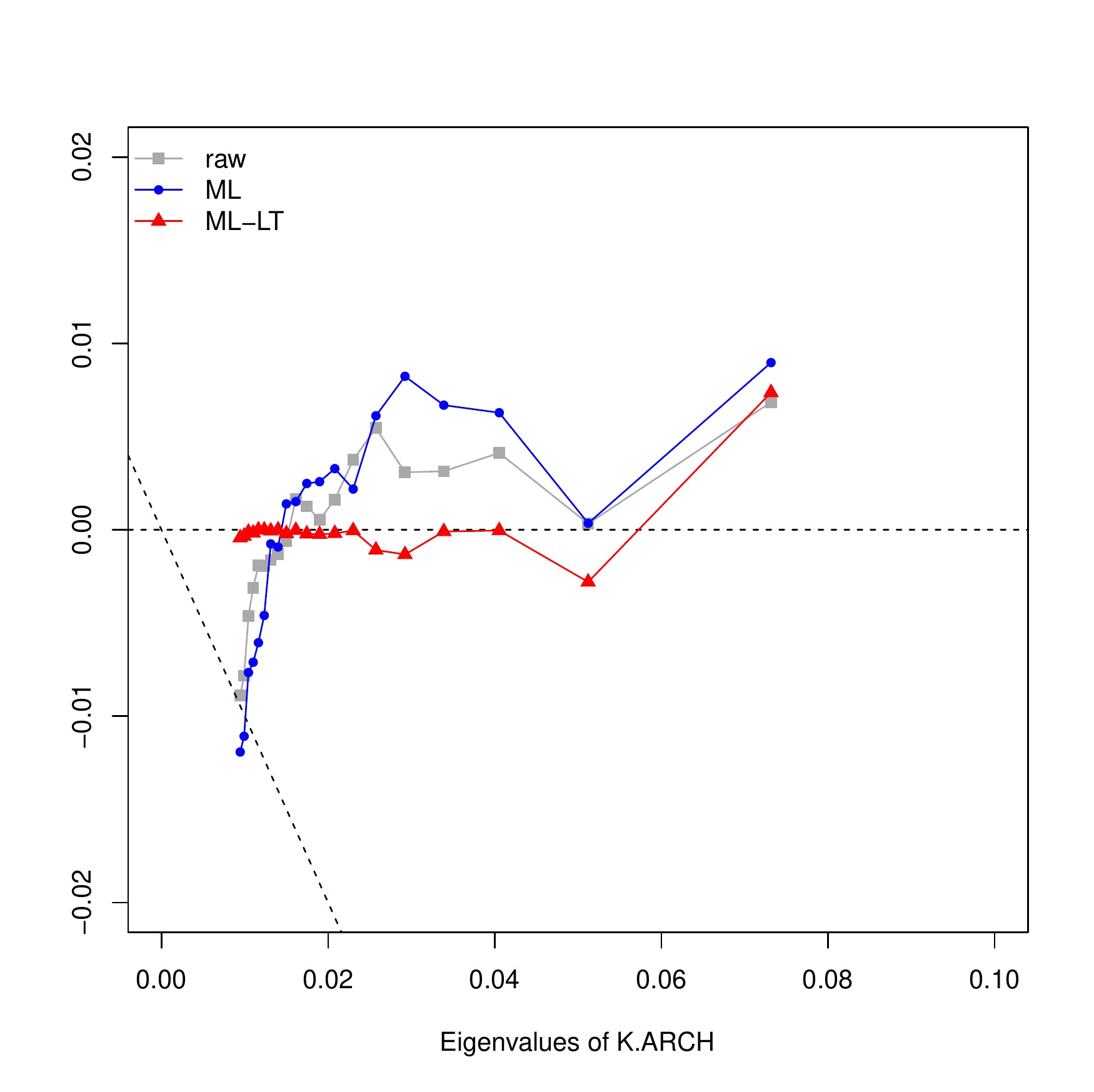} 
    }
    \hfill
    \subfigure[Structure of the first three and last three eigenvectors. 
	           Whatever the estimation method, the first eigenvector has a non trivial structure, with mostly positive components, 
               indicating a genuine departure from the diagonal ARCH benchmark, for which we would find a single peak at $\tau=1$.]{
	\includegraphics[scale=0.4]{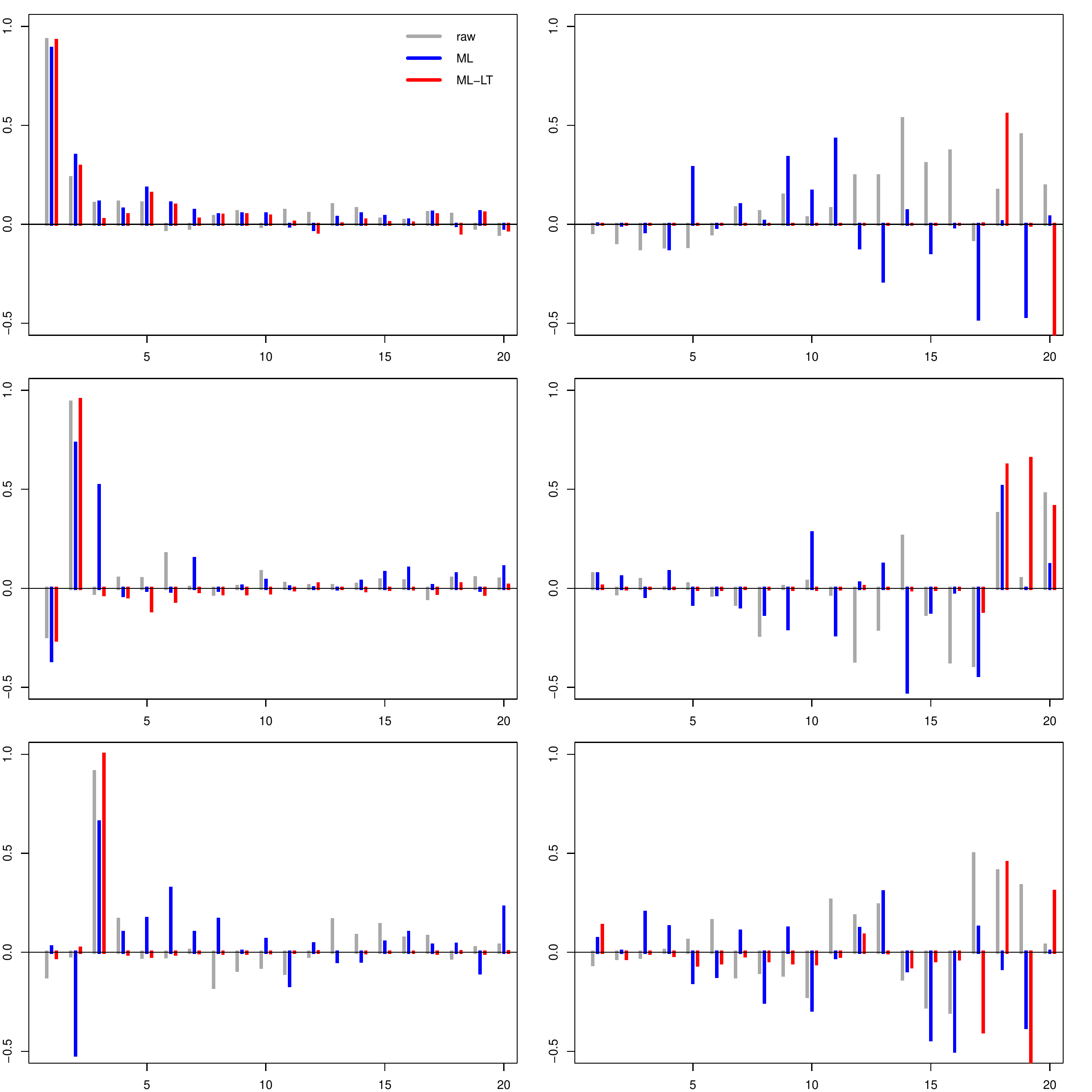} 
    }
	\caption{Spectral decomposition of the feedback kernel $K$, for the GMM, ML and ML+LT estimates.}
	\label{QARCHfig:eigen_stocks}
\end{sidewaysfigure}

As discussed in Sect.~\ref{QARCHsec:spectral}, another way to decipher the structure of $K$ is to look at its eigenvalues and eigenvectors. 
We show in Fig.~\ref{QARCHfig:eigen_stocks} the eigenvalues of $K^*$ as a function of the eigenvalues of the purely diagonal ARCH model. 
We see that a) the largest eigenvalue is clearly shifted upwards by the off-diagonal elements; 
            b) the structure of the top eigenvector is non-trivial, and has positive contributions at all lags (up to noise);
            c) the unconstrained estimations --- both GMM and ML --- lead to 6 very small eigenvalues (perhaps even slightly negative) 
               that all constrained models fail to reproduce.
               
The positiveness of all eigenvalues is not granted a priori, 
because nothing in the calibration procedure imposes the positivity of the matrix $K^*$. 
Although we would naively expect that past excitation could only lead to an amplification of future volatility
(i.e.\ that only strictly positive modes should appear in the feedback kernel), we observe that quasi-neutral modes do occur. 
This is clearly related to the negative streaks noted above at large lags, 
but we have no intuitive interpretation for this effect at this stage.

\section{Empirical study: stock index}\label{QARCHsect:emp_index}
 \begin{figure}
 	\center
 	\includegraphics*[scale=0.6,angle=-90]{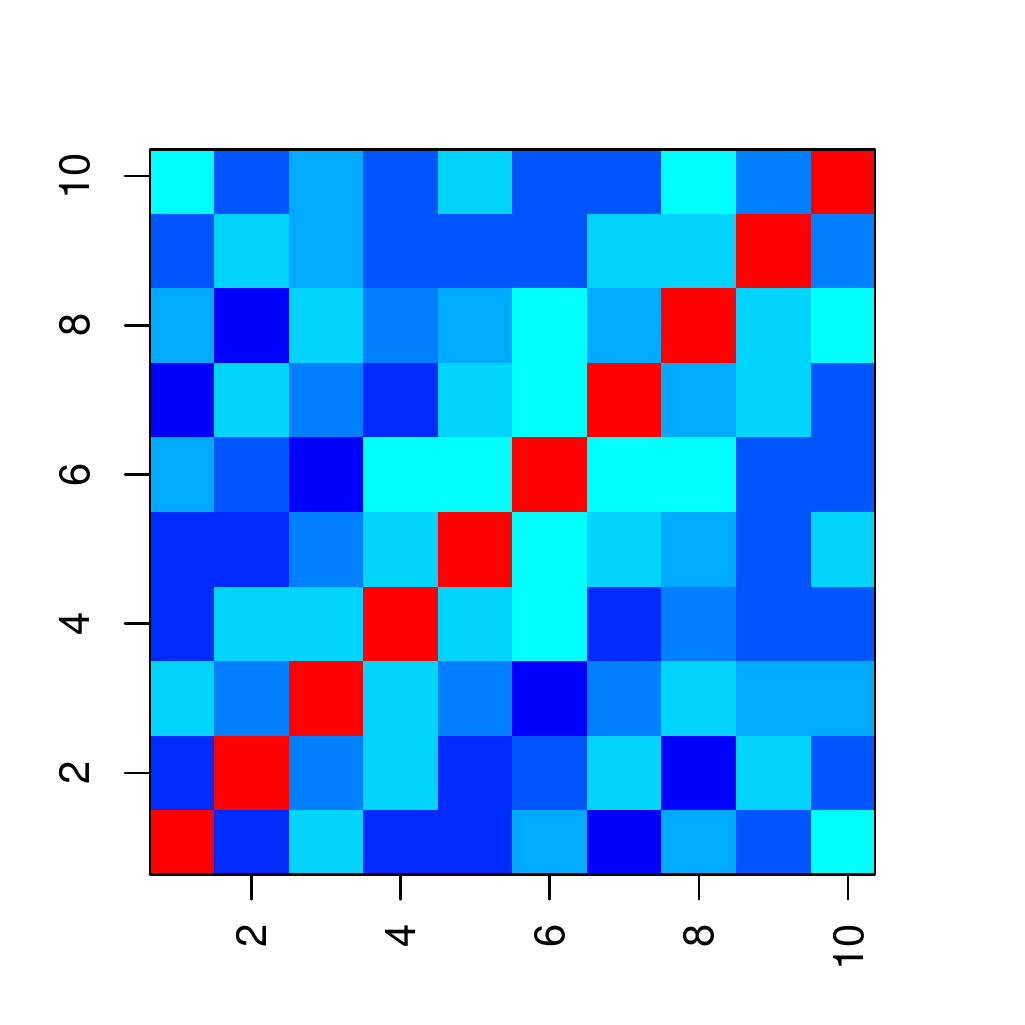}
 	\includegraphics*[scale=0.6,angle=-90]{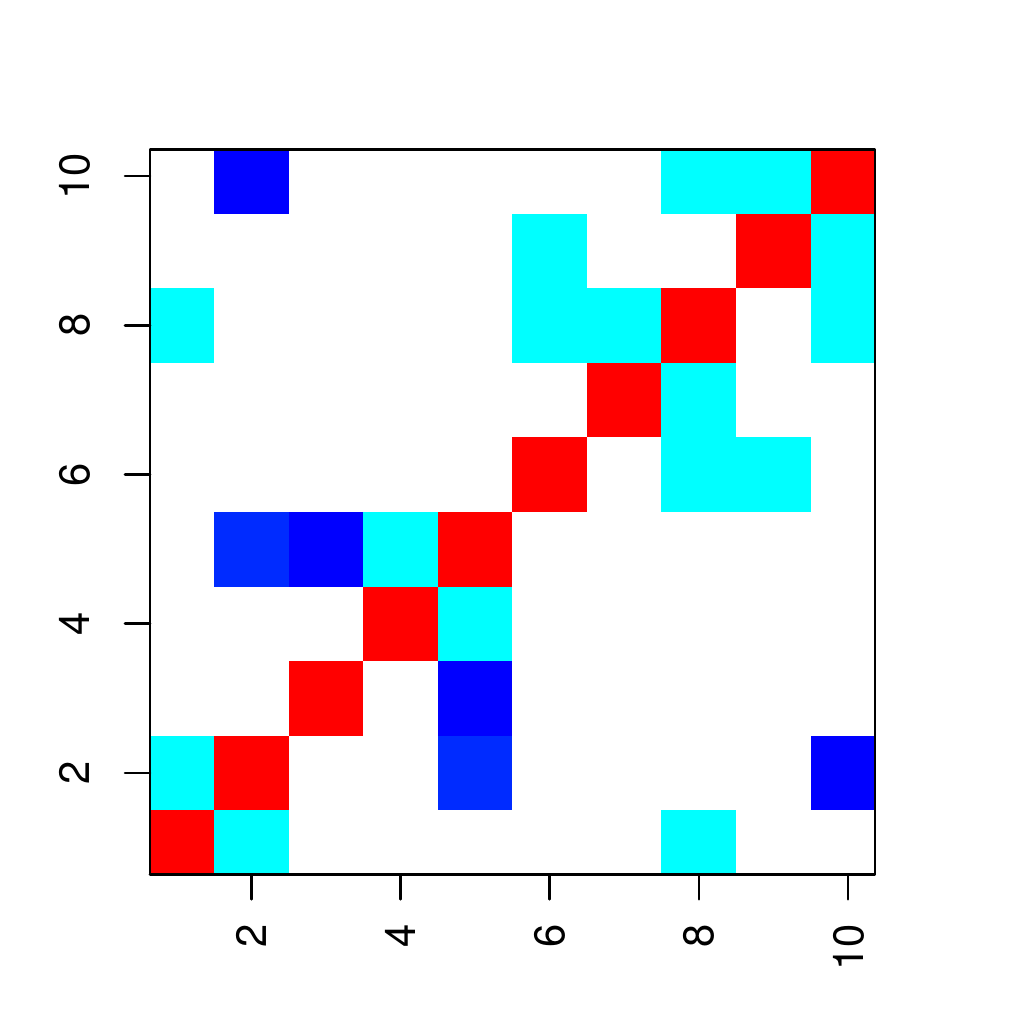}
 	\includegraphics*[scale=0.35,angle=-90]{QARCH/colkeys}
 	\caption{Heatmap of the $q=10$ kernel for the index volatility.
 	\textbf{Left: }the GMM estimation; \textbf{Right: }the ML estimation with GMM prior 
 	(again, we have checked that the ARCH prior leads to very close results).}
 	\label{QARCHfig:heatmap_idx}
 \end{figure}

We complete our analysis by a study of the returns of the  S\&P-500 index in the QARCH framework.
We use a long series of more than 60 years, from Oct.\ 1948 to Sept.\ 2011 ($15\,837$ days).

The computation of the correlation functions and the GMM calibration of a long ARCH(512) yields a $s^2(q)$ 
that can be fitted with the formula~\eqref{QARCHeq:fit_s2} and the following parameters: 
$s_{\infty}^2\approx0.20, \alpha\approx 1.28, g\approx 0.162$ and $q_0\approx262$.
Contrarily to single stocks, the One-step Maximum Likelihood estimation failed with $q=20$,
as the gradient and Hessian matrix evaluated at the arrival point are, respectively, large and not negative definite. 
Although the starting point appears to be close to a local maximum (the Hessian matrix is negative definite), 
the one-step procedure does not lead to that maximum.

In order to control better the Maximum Likelihood estimation, we lower the dimensionality of the parameter space 
and estimate a QARCH(10), although still with a long memory diagonal ($q=50$).
Here the procedure turns out to be legitimate, and the resulting kernel $K$ is depicted in Fig.~\ref{QARCHfig:heatmap_idx} (right). 
Interestingly, the off-diagonal content in the QARCH model for index returns is mostly not significant 
(again, white regions correspond to values not exceeding their theoretical uncertainty) 
apart from a handful of negative values around $\tau=8-10$ and one row/column at $\tau=5$. 
The contribution of the latter to the QARCH feedback can be made explicit as
\[
	2\,r_{t-5}\sum_{\tau<5}K(5,\tau)r_{t-\tau}
\]
and describes a trend effect between the daily return last week $r_{t-5}$ and the (weighted) cumulated return since then 
$\sum_{\tau<5}K(5,\tau)r_{t-\tau}$. 
It would be interesting to know whether this finding is supported by some intuitive feature of the trading activity on the S\&P-500 index. 
Note that, again, none of the ``simple'' structures discussed in Sect.~\ref{QARCHsec:section3} is able to account for the structure of $K^*$
(compare Fig.~\ref{QARCHfig:heatmap_idx} with Fig.~\ref{QARCHfig:matrices}).

The spectral analysis reveals a large eigenvalue much above the ARCH prediction, and a couple of very small eigenvalues,
similarly to what was found for the stock data.
However, the eigenvectors associated with them exhibit different patterns: 
the first eigenvector does not reveal the expected collective behavior, but rather a dominant $\tau=1$ component, 
with a significant $\tau=2$ component of opposite sign.
The other modes do not show a clear signature and are hard to interpret.

The procedure for computing in-sample/out-of-sample likelihoods is similar to the case of the stock data, 
but the definitions of the universes differs somewhat since we only have a single time series at our disposal. 
Instead of randomly selecting half of the series, we select half of the dates (in block, to avoid breaking the time dependences) 
to define the in-sample universe $\Omega$,
on which the correlation functions are computed and both estimation methods (GMM, and one-step maximum likelihood) are applied.
Then we evaluate the likelihoods of the calibrated kernels, first on $\Omega$ to obtain the `in-sample' likelihoods, 
and then on the complement of $\Omega$ to get the `out-of-sample' likelihoods.
We iterate $N_{\text{samp}}=150$ times and draw a random subset of dates every time, then average the likelihoods, 
that we report in the Table~\ref{QARCHtab:IS-OOS.idx}. 
The 1-s.d.\ systematic dispersion of the samplings is now $\approx 7 \cdot 10^{-3}$.
Because of the fewer number of observations in the index data compared to the stock data, 
corrections for the bias induced by the number of parameters $M$ become relevant. 
Adjusting the out-of-sample likelihood by subtracting the bias $-M/2n \approx 3 \cdot 10^{-3}$ (with $n\approx T/2=7.5\cdot 10^3$ and $M=q(q-1)/2=45$), 
brings the ARCH+ML result to a level competitive with ARCH (but not obviously better), and certainly better than the GMM estimate.

\begin{table}
\begin{center}
\begin{tabular}{c||cc|c|}
  &GMM        &ARCH(10)   &ARCH+ML     \\\hline\hline	
IS&$-1.16750$ &$-1.16666$ &${-1.16522}$\\		
OS&$-1.16972$ &$-1.16704$ &${-1.17079}$\\\hline		
\end{tabular}
\caption{Average log-likelihoods, according to Eq.~\eqref{QARCHeq:loglike}, for the stock index.
}
\label{QARCHtab:IS-OOS.idx}
\end{center}
\end{table}

\section{Time-reversal invariance}
\label{QARCHsect:TRI_0}

As noticed in the introduction, QARCH models violate, by construction, time-reversal invariance. 
Still, the correlation of the squared returns ${\mathcal{C}}^{(2)}(\tau)$ is trivially invariant 
under time-reversal, i.e.\  ${\mathcal{C}}^{(2)}(\tau)={\mathcal{C}}^{(2)}(-\tau)$. 
However, the correlation of the true squared volatility with past squared returns $\widetilde{\mathcal{C}}^{(2)}(\tau)$ 
is in general not (see \cite{pomeau1985symetrie,zumbach2009time} for a general discussion). 
A measure of TRI \nomenclature{TRI}{Time-reversal invariance} violations is therefore provided by the integrated difference $\Delta(\tau)$:
\be\label{QARCHeq:tra}
\Delta(\tau) = \sum_{\tau'=1}^\tau \left[\widetilde{\mathcal{C}}^{(2)}(\tau')- \widetilde{\mathcal{C}}^{(2)}(-\tau')\right].
\ee
The empirical determination of $\widetilde{\mathcal{C}}^{(2)}(\tau)$ and $\Delta(\tau)$ for stock returns is shown in Fig.~\ref{QARCHfig:TRI_emp}. 
Although less strong than for simulated data (see Fig.~\ref{QARCHfig:TRI}), we indeed find a clear signal of TRI violation for stock returns, 
in agreement with a related study by Zumbach~\cite{zumbach2009time}.
We compare in Fig.~\ref{QARCHfig:TRI} the quantity $\Delta(\tau)$ obtained from a {\it bona fide} numerical simulation of the model, 
with previously estimated parameters. 
Note that any measurement noise on the volatility $\sigma_t^2$ tends to reduce the TRI violations, 
but we have performed the numerical simulation in a way to reproduce this measurement noise as faithfully as possible.

However, the alert reader should worry that the existence of asymmetric leverage correlations ${\mathcal{L}}(\tau >0) \neq 0$ 
between past returns and future volatilities is in itself a TRI-violating mechanism, which has nothing to do with the ARCH feedback mechanism. 
In order to ascertain that the effect we observe is not a spurious consequence of the leverage effect, 
we have also computed the contribution of ${\mathcal{L}}(\tau)$ to $\Delta(\tau)$, which reads to lowest order and schematically:
\be
\Delta(\tau) = \sum\limits_{\tau'=1}^\tau L(\tau')\left[\mathcal{L}(\tau'\!-\!\tau)- \mathcal{L}(\tau'\!+\!\tau)\right] + K \,\,\text{contributions}.
\ee
The first term on the right-hand side is plotted in the inset of Fig.~\ref{QARCHfig:TRI_emp}, 
and is found to have a {\it negative} sign, and an amplitude much smaller than $\Delta(\tau)$ itself. 
It is therefore quite clear that the TRI-violation reported here is genuinely associated to the ARCH mechanism and not to the leverage effect, 
a conclusion that concurs with that of \cite{zumbach2009time}.

Still, the smallness of the empirical asymmetry compared with the simulation results suggests that the ARCH mechanism is ``too deterministic''.
It indeed seems reasonable to think that the baseline volatility $s^2$ has no reason to be constant, but may contain an extra random contribution.
Writing
\[
	\sigma^2(t)=\sigma_{\text{A}}^2(t) + \omega(t); \qquad \vev{\omega}=0; \qquad\vev{\omega(t)\omega(t-\tau)}\equiv\mathcal{C}_{\omega}(\tau)=\mathcal{C}_{\omega}(-\tau)
\]
with $\omega_t$ a noise contribution and $\sigma_{\text{A}}$ the ARCH volatility%
\footnote{For the sake of clarity we consider here a diagonal ARCH framework, 
          but the argument is straightforwardly generalized for a complete QARCH.}
(i.e.\ deterministic when conditioned on past returns), then the asymmetry is found to be given by:
\[
	\Delta(\tau)=\Delta_{\text{A}}(\tau)-\sum_{\tau''=1}^\tau\sum_{\tau'=1}^qk(\tau')\left[\mathcal{C}_{\omega}(\tau'-\tau'')-\mathcal{C}_{\omega}(\tau'+\tau'')\right].
\]
If one assumes that the correlation function $\mathcal{C}_{\omega}$ is positive and decays with time, 
the extra contribution to the asymmetry is negative, and reduces the observed TRI.
This conclusion speaks in favor of a mixed approach to volatility modeling, 
bringing together elements of autoregressive QARCH models with those of stochastic volatility models.
It would in fact be quite surprising that (although unobservable) the volatility should be a purely deterministic function of past returns.
Although the behavioral interpretation of the above construction is not clear at this stage, 
the uncertainty on the baseline volatility level $s^2$ could come, for example, 
from true exogenous factors that mix in with the volatility feedback component described by the QARCH framework.

\begin{figure}
	\center
	\includegraphics[scale=0.4]{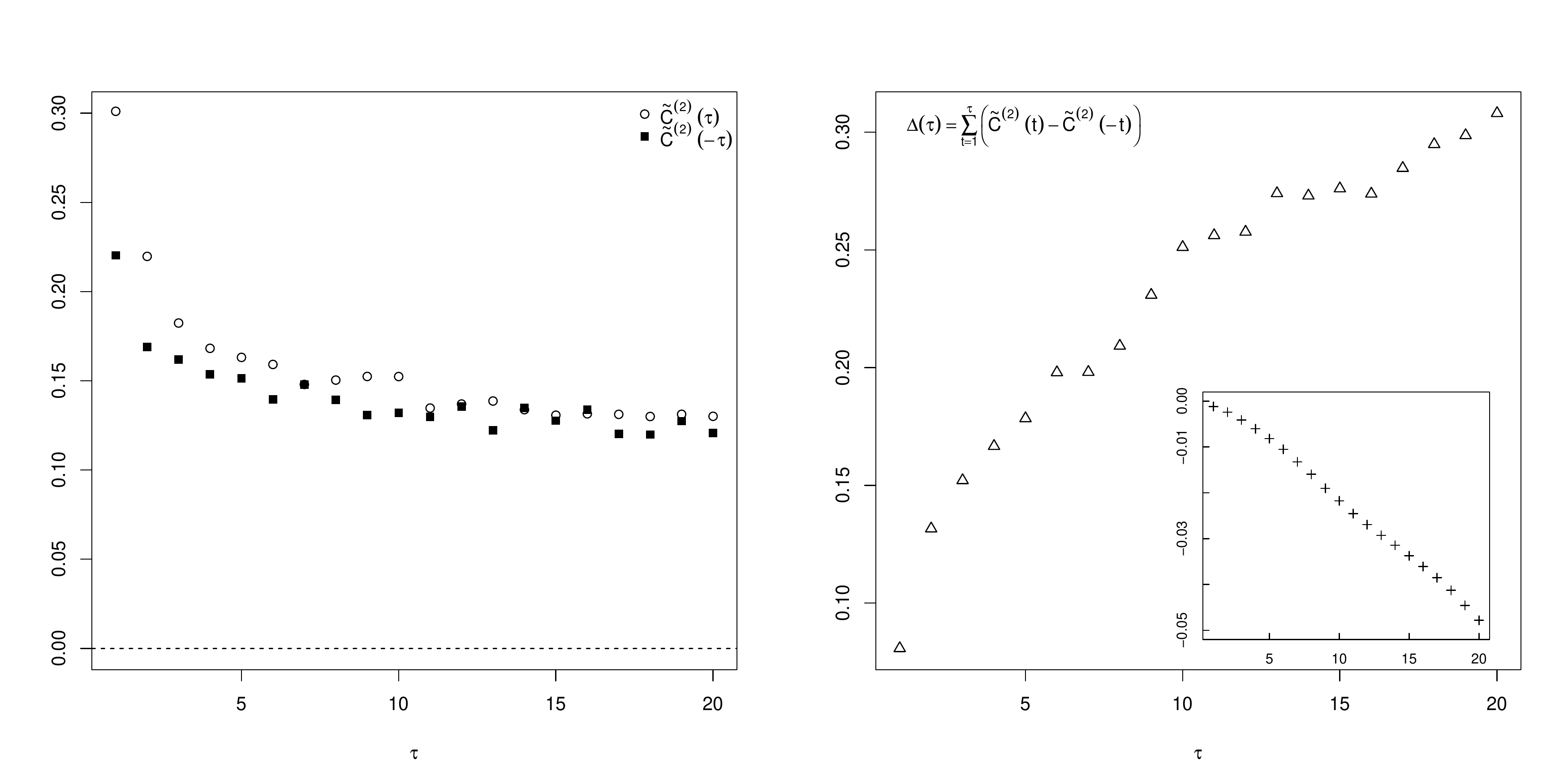}
	\caption{Measure of time-reversal asymmetry (Eq.~\ref{QARCHeq:tra}) for the stock data.
	         \textbf{Inset: } The contribution to $\Delta(\tau)$ stemming from the leverage, i.e.\ the quantity 
	         $\sum\limits_{\tau'=1}^\tau L(\tau')\left[\mathcal{L}(\tau'\!-\!\tau)- \mathcal{L}(\tau'\!+\!\tau)\right]$.
	         Note that this contribution is negative, and an order of magnitude smaller than $\Delta(\tau)$ itself.}
	\label{QARCHfig:TRI_emp}
	\includegraphics[scale=0.4]{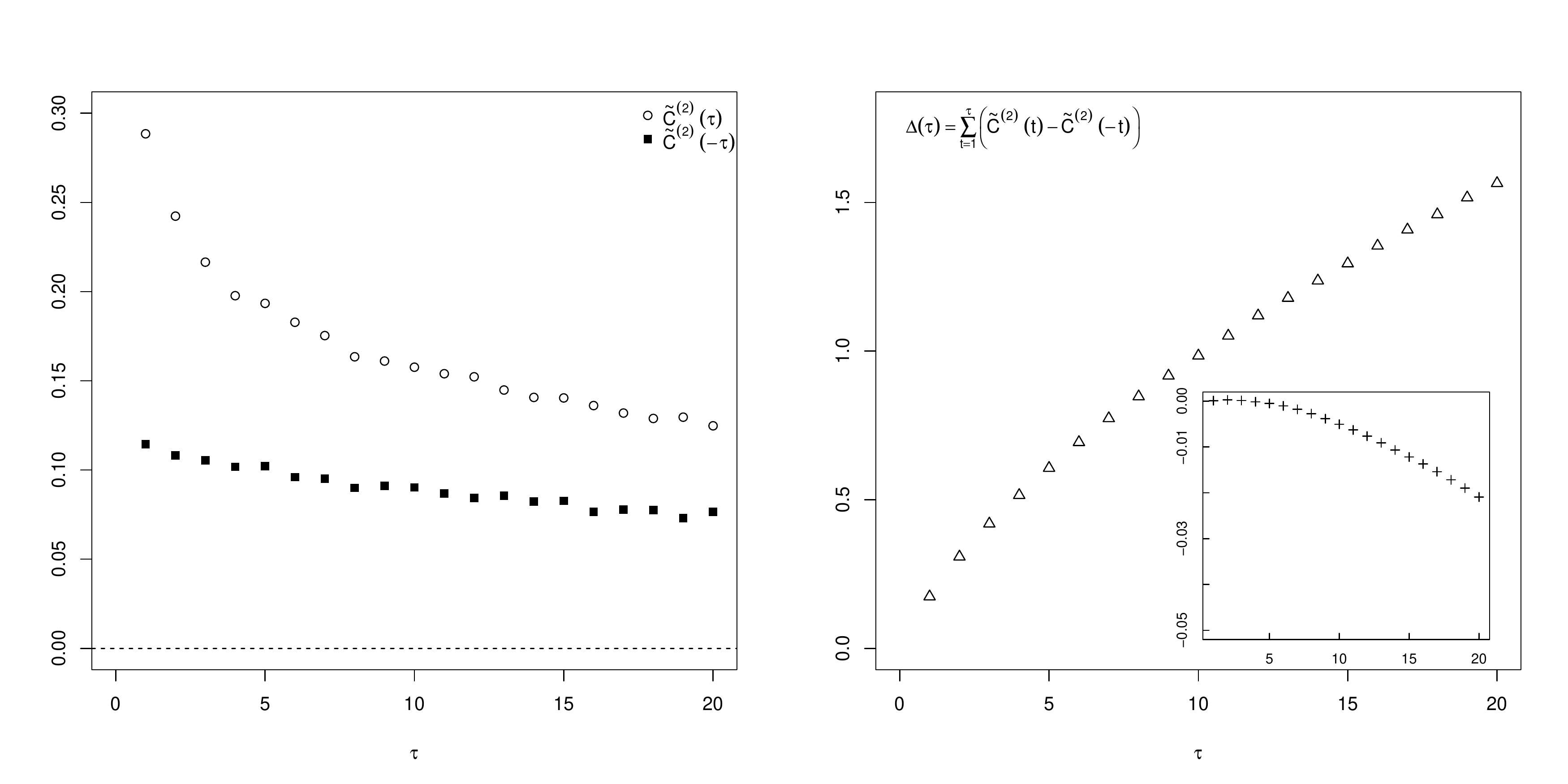}
	\caption{Measure of time-reversal asymmetry (Eq.~\ref{QARCHeq:tra}) for a simulated ARCH model with Student ($\nu=4$) residuals on the 5 minute scale.
	The parameters of the simulation are the estimated kernel $k^*(\tau)$ and $L^*(\tau)$ for stocks, with $q=20$.
	At each date, 100 intraday prices are simulated (corresponding to the number of 5-minutes bins inside 8 hours) with the {\it same} $\sigma_t^2$ 
	given by the QARCH model. The volatility is then computed using Rogers-Satchell's estimator, exactly as for empirical data.}
	\label{QARCHfig:TRI}
\end{figure}

\section{Conclusion, extensions}\label{QARCHsect:Concl}

We have revisited the QARCH model, which postulates that the volatility today can be expressed as a general quadratic form of the past daily returns $r_t$. 
The standard ARCH or GARCH framework is recovered when the quadratic kernel is diagonal, 
which means that only past squared daily returns feedback on today's volatility. 
This is a very restrictive {\it a priori} assumption, and the aim of the present study was
to unveil the possible influence of other quadratic combinations of past returns, such as, for example, square weekly returns. 
We have defined and studied several sub-families of QARCH models that make intuitive sense, 
and with a restricted number of free parameters. 

The calibration of these models on US stock returns has revealed several features that we did not anticipate. 
First, although the off-diagonal (non ARCH) coefficients of the quadratic kernel are found to be highly significant 
both in-sample and out-of-sample, they are one order of magnitude smaller than the diagonal elements. 
This confirms that daily returns indeed play a special role in the volatility feedback mechanism, as postulated by ARCH models. 
Returns on different time scales can be thought of as a perturbation of the main ARCH effect. 
The second surprise is that the structure of the quadratic kernel does not abide with any of the simpler QARCH models proposed so-far in the literature. 
The fine structure of volatility feedback is much more subtle than anticipated. 
In particular, neither the model proposed in \cite{borland2005multi} (where returns over several horizons play a special role), 
nor the trend model of Zumbach  in \cite{zumbach2010volatility} are supported by the data. 
The third surprise is that some off-diagonal coefficients of the kernel are found to be negative for large lags, 
meaning that some quadratic combinations of past returns contribute negatively to the volatility. 
This also shows up in the spectral properties of the kernel, which is found to have very small eigenvalues, 
suggesting the existence of unexpected volatility-neutral patterns of past returns.  

As for the diagonal part of the quadratic kernel, our results are fully in line with all past studies: 
the influence of the past squared-return $r_{t-\tau}^2$ on today's volatility $\sigma_t^2$ 
decays as a power-law $g\, \tau^{-\alpha}$ with the lag $\tau$, at least up to $\tau \approx 2$ months, 
with an exponent $\alpha$ close to unity ($\alpha \approx 1.11$), 
which is the critical value below which the volatility diverges and the model becomes non-stationary. 
As emphasized in \cite{borland2005multi}, markets seem to operate close to criticality 
(this was also noted in different contexts, see \cite{bouchaud2004fluctuations,bouchaud2006random, bacry2012non,toth2011anomalous, filimonov2012quantifying, parisi2012universality} for example). 
The smallness of $\alpha - 1$ has several important consequences: first, this leads to long-memory in the volatility; 
second, the average square volatility is a factor 5 higher than the baseline volatility, 
in line with the excess volatility story \cite{shiller1981stock}: 
most of the market volatility appears to be endogenous and comes from self-reflexive, 
feedback effects (see e.g.\ \cite{soros1994alchemy,bouchaud2011endogeneous,filimonov2012quantifying} and references therein). 
Third, somewhat paradoxically, the long memory nature of the kernel leads to \emph{small} fluctuations of the volatility. 
This is due to a self-averaging mechanism occurring in the feedback sum, that kills fat tails.
This means that the high kurtosis of the returns in ARCH models cannot be ascribed to volatility fluctuations 
but rather to leptokurtic residuals, also known as unexpected price jumps. 

Related to price jumps, we should add the following interesting remark that stresses the difference 
between endogenous jumps and exogenous jumps within the ARCH framework. 
Several studies have revealed that the volatility relaxes after a jump as a power-law, 
akin to Omori's law for earthquake aftershocks: 
$\sigma^2_{\tau}\sim \sigma^2_{0}\tau^{-\theta}$, where $t=0$ is the time of the jump. 
The value of the exponent $\theta$ seems to depend on the nature of the initial price jump. 
When the jump occurs because of an exogenous news, $\theta \approx 1$ \cite{lillo2003power,joulin2008stock}, 
whereas when the jump appears to be of endogenous origin, the value of $\theta$ falls around 
$\theta \approx \frac12$ \cite{zawadowski2006short,joulin2008stock}. 
In other words, as noted in \cite{joulin2008stock}, the volatility seems to resume its normal course faster after a news 
than when the jump seems to come from nowhere. 
A similar difference in the response to exogenous and endogenous shocks was also reported in \cite{sornette2004endogenous} for book sales. 
Now, if one simulates long histories of prices generated using an ARCH model with a diagonal kernel decaying as $g\, \tau^{-\alpha}$, 
one can measure the exponent $\theta$ by conditioning on a large price jump (which can only be endogenous, by definition!). 
One finds that $\theta$ varies continuously with the amplitude of the initial jump, 
and saturates around $\theta \approx \frac12$ for large jumps (we have not found a way to show this analytically). 
A similar behavior is found within multifractal models as well \cite{sornette2003what}. 
If on the other hand an exogenous jump is artificially introduced in the time series by imposing a very large value of $\xi_{t=0}$, 
one expects the volatility to follow the decay of the kernel and decay as $g\,\tau^{-\alpha} \xi_0^2$, leading to $\theta=\alpha \approx 1$. 
Therefore, the dichotomy between endogenous and exogenous shocks seem to be well reproduced within the ARCH framework. 

Finally, we have emphasized the fact that QARCH models are by construction backward looking, 
and predict clear Time-Reversal Invariance (TRI) violations for the volatility/square-return correlation function. 
Such violations are indeed observed empirically, although the magnitude of the effect is quite a lot smaller than predicted. 
This suggests that QARCH models, which postulate a {\it deterministic} relation between volatility and past returns, 
discard another important source of fluctuations. 
We postulate that ``the'' grand model should include both ARCH-like feedback effects and stochastic volatility effects, 
in such a way that TRI is only weakly broken. 
The stochastic volatility component could be the source of the extra kurtosis of the residuals 
noted above.%
\footnote{This discussion might be related to the interesting observation made by Virasoro in \cite{virasoro2011non}.}

This study is, to the best of our knowledge, the first attempt at unveiling the fine structure volatility feedback effects 
in autoregressive models. We believe that it is a step beyond the traditional econometric approach of postulating a convenient mathematical model,
which is then brute-force calibrated on empirical data. What we really need is to identify the {\it underlying mechanisms} that would justify, at 
a deeper level, the use of a QARCH family of model rather than any another one, for example the multifractal random walk model. 
From this point of view, we find remarkable that the influence of daily returns is so strongly singled out by our empirical results, when we 
expected that other time scales would emerge as well. The quandary lies in the unexpectedly complex structure of the off-diagonal 
feedback component, for which we have no interpretation. 

A natural extension of our study that should shed further light on a
possible behavioral interpretation of volatility feedback is to decompose daily returns into higher frequency components, 
for example overnight and intraday returns. 
Preliminary investigations in this direction point toward a clear distinction of the intraday and overnight mechanisms of volatility dynamics.%
\footnote{A detailed study of intraday and overnight effects in the feedback mechanisms of volatility dynamics will be reported elsewhere \cite{Pierre_inprep}}
In particular, the self-excitement at night is characterized by both a smaller baseline intensity and fatter-tailed residuals than during the day,
indicating that there are extreme exogenous shocks occurring overnight, which however are not propagated
since the dynamics results even more from the endogenous feedback at night than intraday.

The feedback mechanism of the volatility at the intraday scale (e.g.\ 5 minute returns) might be related to the self-excitement of order arrivals in the book \cite{bacry2012non}.
Many other remaining questions should be addressed empirically, for example 
the dependence of the feedback effects on market capitalization, average volatility, etc. We have chosen to scale out the market-wide volatility, 
but other choices would be possible, such as a double regression on the past returns of stocks and of the index. Finally, other financial assets, such 
as currencies or commodities, should be studied as well. Stocks, however, offer the advantage that the data is much more abundant, specially if 
one chooses to invoke some structural universality, and to treat all stocks as different realizations of the same underlying process.


\chapter{Copulas in time series}\label{chap:copulas_time}
\minitoc
  \newcommand{\Xq}{X^{(q)}}
\newcommand{\Xqq}{X^{(1\!-\!q)}}
\newcommand{\Xth}{X^{\mathrm{\scriptscriptstyle{(th)}}}}
\newcommand{\Xm}{X^{\mathrm{\scriptscriptstyle{(-)}}}}
\newcommand{\Xp}{X^{\mathrm{\scriptscriptstyle{(+)}}}}
\newcommand{\Xpm}{X^{\mathrm{\scriptscriptstyle{(\pm)}}}}


\section{Introduction}

As a tool to study the --- possibly highly non-linear --- correlations between returns, ``copulas'' have long been used in actuarial 
sciences and finance to describe and model cross-dependences of assets, often in a risk management perspective \cite{embrechts2003modelling_art,embrechts2002correlation_art,malevergne2006extreme}.
Although the widespread use of simple analytical copulas to model multivariate dependences is more and more criticized \cite{mikosch2006copulas,chicheportiche2012joint}, 
copulas remain useful as a tool to investigate empirical properties of multivariate data \cite{chicheportiche2012joint}.

More recently, copulas have also been studied in the context of self-dependent univariate time series, 
where they find yet another application range: just as Pearson's $\rho$ coefficient is commonly used to 
measure both linear cross-dependences and temporal correlations, 
copulas are well-designed to assess non-linear dependences both across assets or in time 
\cite{beare2010copulas,ibragimov2008copulas,patton2009copula_art} --- we will speak of ``self-copulas'' in the latter case. 

\subsection*{Notations}
We consider a time series of length $T$, 
that can be seen as one realization of a corresponding discrete stochastic process $\{X_t\}_{t=1\ldots T}$.
The joint cumulative distribution function (CDF) of $n$ occurrences ($1\leq t_1<\ldots<t_n<T$) of the process is 
\begin{equation}\label{eq:def_multiF}
    \mathcal{F}_{t_1,\ldots,t_n}(\mathbf{x})=\pr{X_{t_1}<x_{t_1},\ldots,X_{t_n}<x_{t_n}}.
\end{equation}
We assume that the generating process $\{X_t\}$ of the series has a strong-stationary marginal CDF $F$,
and let the joint distribution $\mathcal{F}$ have long-ranged dependences. 
Strictly speaking, most of the discussion that follows only relies on \emph{time-homogeneity} of the copula,
i.e.\ translational invariance of the dependence structure, and does not need 
stationarity of the marginal.
However, in this class of problems much related to empirical data, it is more convenient 
(although sometimes not justified) to assume at least a weak form of stationarity, 
in order to give sense to the notion of empirical CDF.
In the following, we will always assume stationarity of the marginal and homogeneity of the copula.

A realization of $X_t$ at date $t$ will be called an ``event'' when its value exceeds a threshold $\Xpm$:
``negative event'' when $X_t<\Xm$, and ``positive event'' when $X_t>\Xp$.
The probability $p_-$ of such a `negative event' is $F(\Xm)$, and similarly, 
the probability that $X_t$ is above a threshold $\Xp$ is the tail probability $p_+=1-F(\Xp)$.

If a unique threshold $\Xp=\Xm$ is chosen, then obviously $p_+=1-p_-$.
This is appropriate when the distribution is one-sided, typically for positive only signals,
and one wishes to distinguish between two regimes: extreme events (above the unique threshold), and regular events (below the threshold).
This case is illustrated schematically in Fig.~\ref{fig:schema_a}.
When the distribution is two-sided, it is more convenient to define, $\Xp$ as the $q$-th quantile of $F$, 
and $\Xm$ as the $(1\!-\!q)$-th quantile, for a given $q\in[\tfrac{1}{2},1]$,
so that $p_+=p_-=1-q$. 
This allows to investigate persistence and reversion effects in \emph{signed} extreme events,
while excluding a neutral zone of regular events between $\Xm$ and $\Xp$, see Fig.~\ref{fig:schema_b}

When the threshold for the recurrence is defined in terms of quantiles like above (a \emph{relative} threshold), 
stationarity is not needed theoretically but much wanted empirically as already said
(otherwise the height of the threshold might change every time). 
In contrast, when the threshold is set as a number (an \emph{absolute} threshold),
 there's no issue on the empirical side,
but the theoretical discussion makes sense only under stationary marginal.

\begin{figure}
\center
    \subfigure[ \, $F(\Xp)=1-F(\Xm)=q$]{\label{fig:schema_b}\includegraphics[scale=0.9,trim=0   0 220 0,clip]{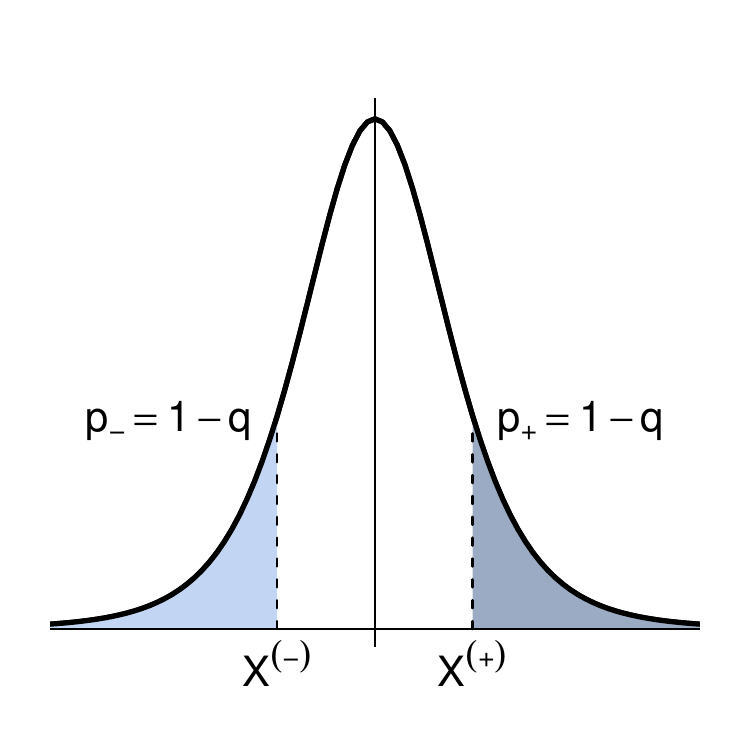}}\hfill
    \subfigure[ \, $F(\Xp)=  F(\Xm)=q$]{\label{fig:schema_a}\includegraphics[scale=0.9,trim=220 0   0 0,clip]{self-copulas/schema}}
    \caption{Two possible definitions of events: 
            either $p_-$ and $p_+$ are probabilities of extremes (negative and positive, respectively),
            or only $p_+$ is a probability of extreme and $p_-=1-p_+$.}
	\label{fig:schema}
\end{figure}

\section{Discrete time}
In this section, we consider the case where the discrete times $t_n$ 
in the definition \eqref{eq:def_multiF} are equidistant (``regularly sampled'').
A short discussion of the case with arbitrary (continuous) time stamps is provided in the next section.

\subsection{Two-points dependence measures}
Typical measures of dependences in stationary processes are two-points expectations
that only involve one parameter: the lag $\ell$ separating the points in time.
For example, the usefulness of the linear correlation function
\begin{align}\label{eq:def_rho_2points}
    \rho(\ell)&=\esp{X_tX_{t+\ell}}-\esp{X_t}\,\esp{X_{t+\ell}}
\end{align}
is rooted in the analysis of Gaussian processes, as
those are completely characterized by their covariances, and
multi-linear correlations are reducible to all combinations of $2$-points expectations.
Some non-linear dependences, like the tail-dependence for example (defined in part~\ref{part:partI}), 
are however not expressed in terms of simple correlations, but involve the whole bivariate copula:
\begin{equation}
    \cop[\ell](u,v)=\mathcal{F}_{t,t+\ell}(F^{-1}(u),F^{-1}(v)),
\end{equation}
where $(u,v)\in[0,1]^2$.
${C}_\ell$ contains the full information on bivariate dependence that is invariant under increasing transformations of the marginal.
For example, the conditional probability
\[
    p_{++}^{(\ell)}=\pr{X_{t+\ell}>\Xp|X_t>\Xp},
\]
which is a measure of \emph{persistence} of the ``positive'' events, 
can be written in terms of copulas, 
together with all three other cases of conditioning
\begin{subequations}\label{eq:allp}
\begin{align}
     p_{++}^{(\ell)}&=[{2p_+-1+\cop[\ell](1\!-\!p_+,1\!-\!p_+)}]/{p_+},\\
     p_{--}^{(\ell)}&= {\cop[\ell](p_-,p_-)}/{p_-},\\
     p_{-+}^{(\ell)}&=[{p_--\cop[\ell](p_-,1\!-\!p_+)}]/{p_-},\\
     p_{+-}^{(\ell)}&=[{p_--\cop[\ell](1\!-\!p_+,p_-)}]/{p_+}.
\end{align}
\end{subequations}
When $\Xp=\Xm=0$ and $\ell=1$, this is exactly the definition of Bog\`una and Masoliver in Ref.~\citet{boguna2004conditional}, 
with accordingly $p_-=p_+=F(0)$, see Fig.~\ref{fig:schema}.
Note also that $p_{\pm\pm}^{(\ell)}$ and $p_{\pm\mp}^{(\ell)}$ are straightforwardly related to 
the tail dependence coefficients through the relations~\eqref{eq:copdiff} of page~\pageref{eq:copdiff}.

In addition to caring for frequencies of conditional events, 
one can try to characterize their magnitude.
For a single random variable with cdf $F$, this can be quantified for example by the 
Expected Shortfall (or Tail Conditional Expectation), i.e.\ the average loss when conditioning on large events:
\[
    \text{ES}(p_-)=\esp{X_t|X_t<\Xm }=\frac{1}{p_-}\int_{-\infty}^{F^{-1}(p_-)}x\,\d{F(x)}
                                     =\frac{1}{p_-}\int_0^{p_-}F^{-1}(p)\,\d{p}
\]
In the same spirit, for bivariate distributions, the mean return conditionally on preceding return `sign' is defined:
\begin{subequations}\label{eq:condES}
\begin{align}
    \vev{X}^{(\ell)}_+&= \esp{X_t|X_{t-\ell}>\Xp }\\
    \vev{X}^{(\ell)}_-&= \esp{X_t|X_{t-\ell}<\Xm }.
\end{align}
\end{subequations}
%
As an example, consider the Gaussian bivariate copula of the pair $(X_t,X_{t+\ell})$,
whose whole $\ell$-dependence is in the linear correlation coefficient $\rho(\ell)$.
Fig.~\ref{fig:condprobGauss} illustrates the conditional probabilities \eqref{eq:allp} as a function of the threshold when $p_+=p_-=1-q$,
and Fig.~\ref{fig:condESGauss} shows the conditional Expected Shortfall that can be computed exactly from Eqs.~\eqref{eq:condES},
and is proportional to the inverse Mill ratio:
\[
    \vev{X}_\pm=\pm\rho(\ell)\frac{\Phi'(\Xpm)}{p_\pm},
\]
where $\Phi$ denotes the CDF of the univariate standard normal distribution.

\begin{figure}
\center
    \subfigure[Gaussian copula]                  {\label{fig:condprobGauss} \includegraphics[scale=.55,trim=30 0 25 0,clip]{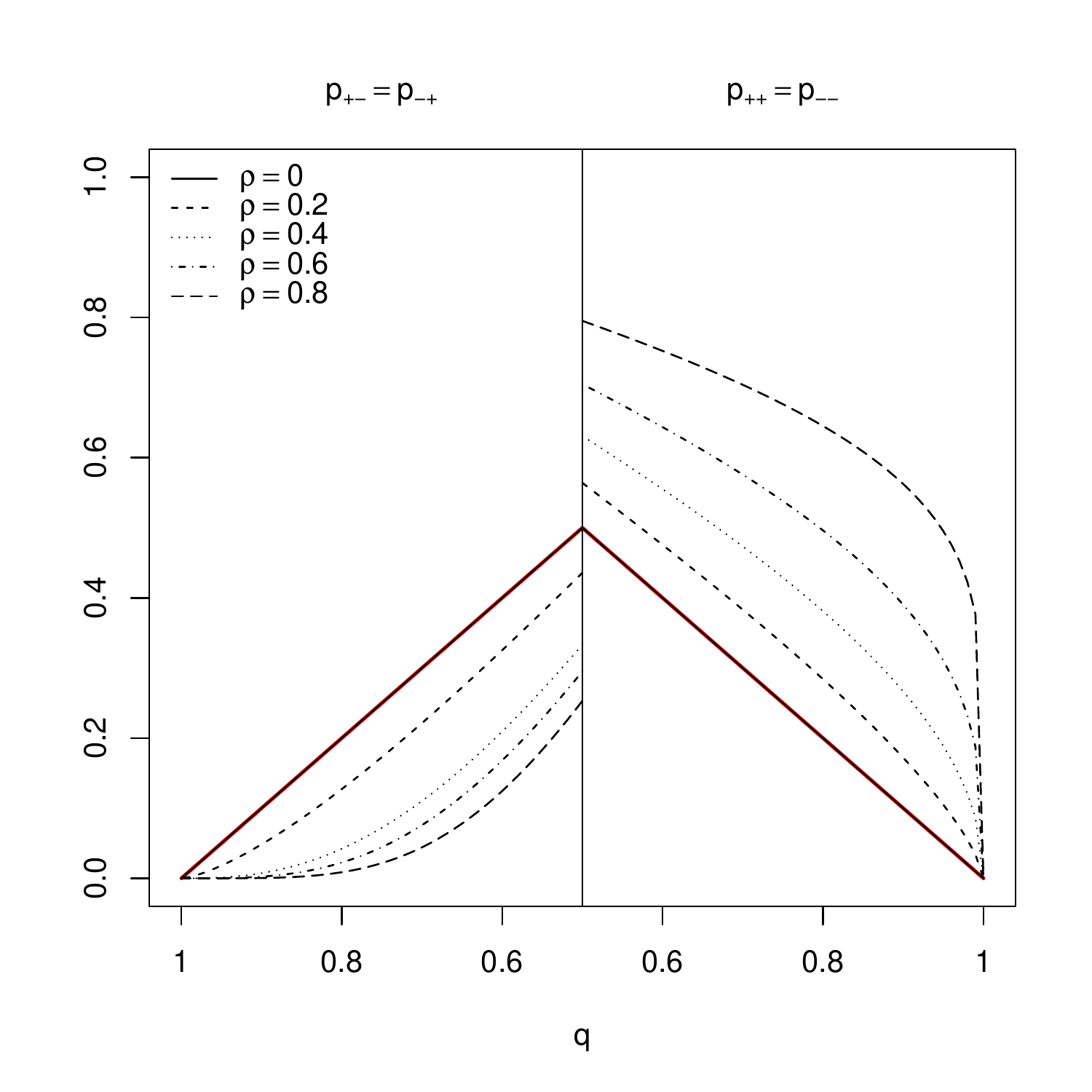}}\hfill
    \subfigure[Student copula with $\nu=5$ 
    (see also Fig.~\ref{fig:dev_gauss_stud} 
           on page~\pageref{fig:dev_gauss_stud})]{\label{fig:condprobStud5} \includegraphics[scale=.55,trim=30 0 25 0,clip]{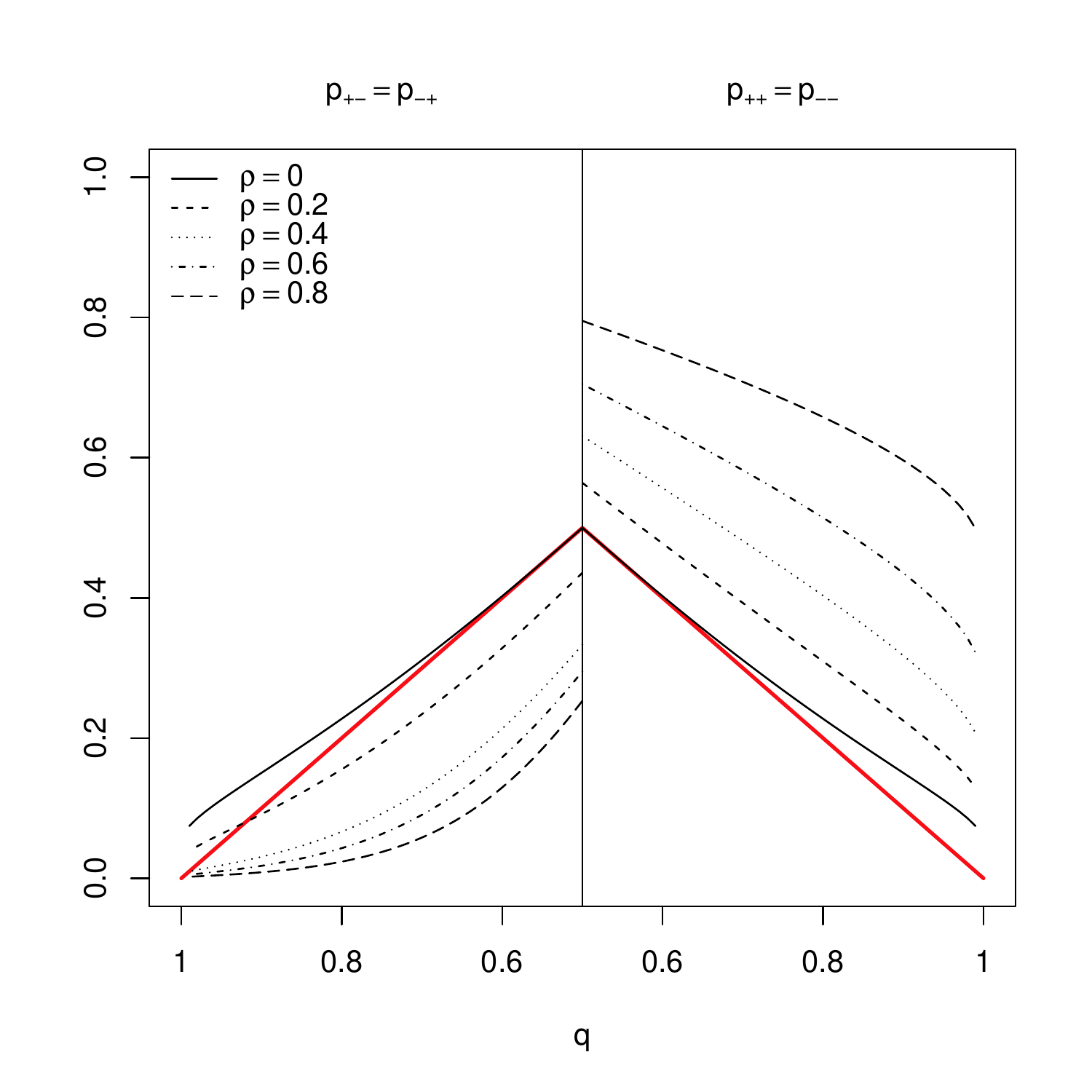}}
    \caption{Conditional probabilities $p_{\pm\mp}^{(\ell)}$ and $p_{\pm\pm}^{(\ell)}$ for different values of $\rho(\ell)$ with thresholds at $p_+=p_-=1-q$.
             The value at $q=0.5$ is $\tfrac{1}{2}+\tfrac{1}{\pi}\arcsin \rho(\ell)$.}\label{fig:condprob}
    \vfill
    \includegraphics[scale=.6]{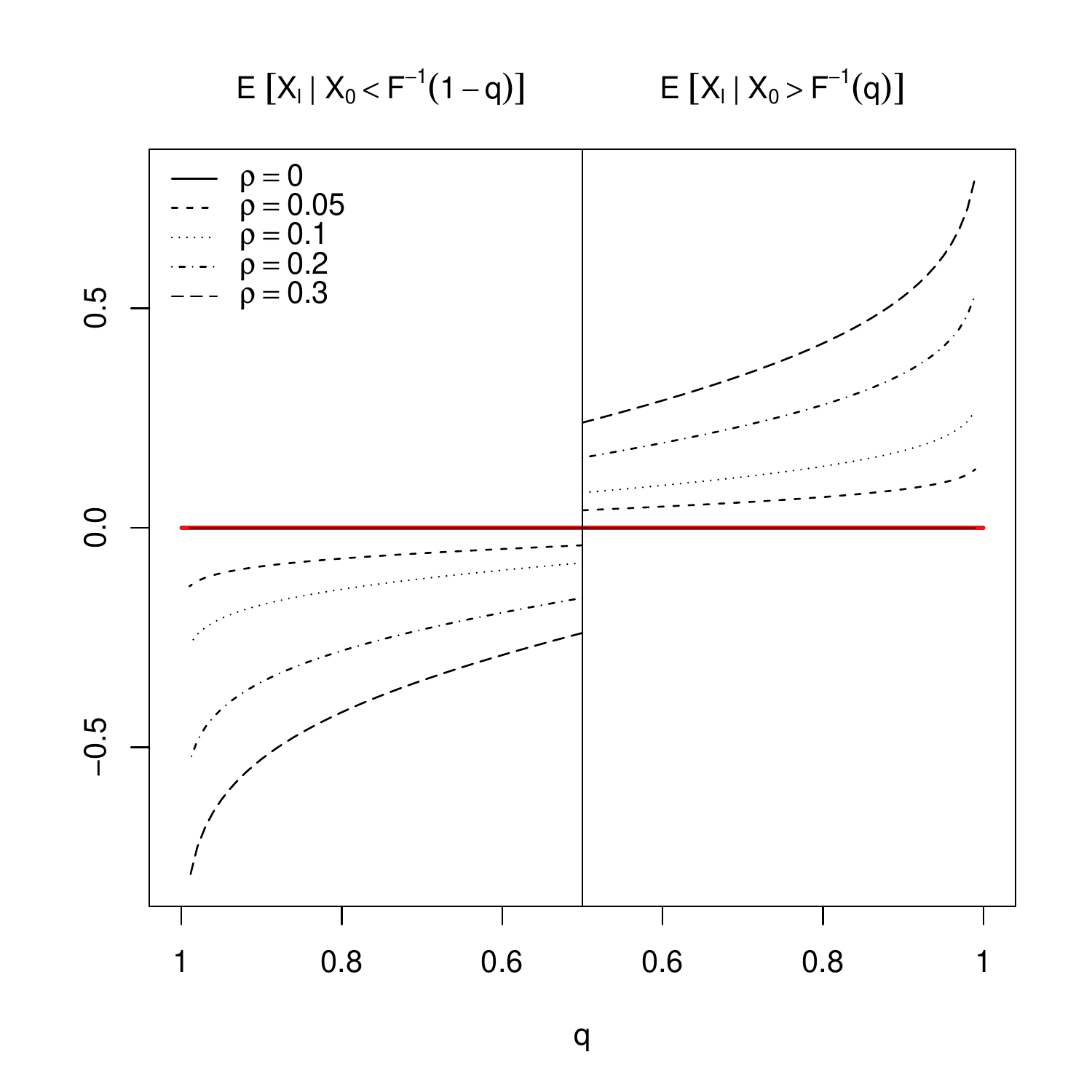}
    \caption{Conditional Expected Shortfall of a Gaussian pair $(X_0,X_\ell)$ for different values of $\rho(\ell)$.
             The value at $q=0.5$ is $\sqrt{\tfrac{2}{\pi}}\,\rho(\ell)$.}
    \label{fig:condESGauss}
\end{figure}

\subsubsection*{Aftershocks}

Omori's law characterizes the $\ell$ dependence of $p_{++}^{(\ell)}$,
i.e.\ the average frequency of events occurring $\ell$ time steps after a main event.
It has been stated first in the context of earthquakes occurrences \cite{omori1895}, where this time dependence is power-law:
\begin{equation}\label{eq:omori}
    p_{++}^{(\ell)}= \lambda \cdot \ell^{-\alpha}.
\end{equation}
Notice that any dependence on the threshold must be hidden in $\lambda$ according to this description.
The average \emph{cumulated} number $N_\ell$ of these aftershocks until $\ell$ is thus 
\[
    \vev{N_\ell}_+=\lambda \cdot {\ell^{1-\alpha}}/({1\!-\!\alpha}),
\]
with in fact $\lambda\equiv p_+$ since, when $\alpha\to 0$, $N_{\ell}$ has no time-dependence i.e.\ it counts independent events (white noise),
and $p_{++}^{(\ell)}$ must thus tend to the unconditional probability.

In order to give a phenomenological grounding to this empirical law also later observed in finance \cite{PhysRevE.76.016109,PhysRevE.82.036114},
the authors of Ref.~\citet{lillo2003power} model the aftershock volatilities in financial markets as a decaying scale $\sigma(\ell)$ times an independent stochastic amplitude $r_\ell$ with CDF $\phi$.
As a consequence, $p_{++}^{(\ell)}\sim 1-\phi(\Xp/\sigma(\ell))$ and the power-law behavior of Omori's law results from 
(i) power-law marginal $\phi(r)\sim r^{-\gamma}$, {and}
(ii) scale decaying as power-law $\sigma(t)\sim t^{-\beta}$, 
so that relation \eqref{eq:omori} is recovered with $\alpha=\beta\gamma$.
The non-stationarity described by $\sigma$ is only introduced in a \emph{conditional} sense,
and might be appropriate for aging systems or financial markets, but we believe that
Omori's law  can be accounted for in a stationary setting and without necessarily having power-law distributed amplitudes.

{The scaling of $p_{++}^{(\ell)}$ with the magnitude of the main shock is encoded in the prefactor $\lambda\equiv p_+$,
which, for example, accounts for the exponentially distributed magnitudes of earthquakes (Gutenberg-Richter law \cite{Gutenberg21021936}).}
The linear dependence of $p_{++}^{(\ell)}$ on $p_+$ shall be reflected in the diagonal of the underlying copula:
\[
    \cop[\ell](p,p)=p^2 \,\ell^{-\alpha},
\]
a prediction that can be tested empirically.

Note that Omori's law is a measure involving only the two-points probability.
In the next subsection, we show what additional information many-points probability can reflect.

\subsection{Multi-points dependence measures}
Although the $n$-points expectations of Gaussian processes reduce to all combinations of $2$-points expectations \eqref{eq:def_rho_2points},
{their full dependence structure is not reducible to the bivariate distribution}, 
unless the process is also Markovian (i.e.\ only in the particular case of exponential correlation).
Furthermore, when the process is not Gaussian, even the multi-linear correlations are irreducible.
In the general case, the whole multivariate CDF is needed, but many measures of dependence that we introduce below only 
involve the diagonal $n$-points copula:%
\footnote{We use a calligraphic $\mathcal{C}$ in order to make it clearly distinct from the bivariate copula discussed in the previous section.}
\begin{equation}\label{eq:def_cop_n}
    \CopN[n]{p}=\mathcal{F}_{t+\libracket 1,n\ribracket}(F^{-1}(p),\ldots,F^{-1}(p)),
\end{equation}
which measures the joint probability that all $n\geq 1$ consecutive variables $X_{t+1},\ldots,X_{t+n}$ are below 
the upper \mbox{$p$-th} quantile of the stationary distribution 
($p\in[0,1]$, and $t+\libracket 1,n\ribracket$ is a shorthand for $\{t\!+\!1,\ldots,t\!+\!n\}$).
Clearly, $\CopN[1]{p}=p$ and we set by convention $\CopN[0]{p}\equiv 1$.

As an example, the Gaussian diagonal copula is
\begin{equation}\label{eq:GaussCop}
    \CopN[n]{p}=\Phi_\rho\big(\Phi^{-1}(p),\ldots,\Phi^{-1}(p)\big)
\end{equation}
where $\Phi^{-1}$ is the univariate inverse CDF, and $\Phi_\rho$
denotes the multivariate CDF with $(n\times n)$ covariance matrix $\rho$,
which is Toeplitz with symmetric entries
\[
    \rho_{tt'}=\rho(|t-t'|),\quad t,t'=1,\ldots,n.
\]
The White Noise (WN)\nomenclature{WN}{White noise} product copula
 $\CopN[n]{p}=p^n$ is recovered in the limit of vanishing correlations $\rho(\ell)=0\, \forall\ell$, and
other examples include the exponentially correlated Markovian Gaussian noise, 
the power-law correlated (thus scale-free) fractional Gaussian noise, 
and the logarithmically correlated multi-fractal Gaussian noise.

Empirically, the $n$-points probabilities are very hard to measure due to the large noise associated
with such rare joint occurrences.
However, there exist observables that embed many-points properties and are more easily
measured, such as the length of sequences (clusters) of thresholded events,
and the recurrence times of such events, that we study next.

\subsubsection*{Recurrence intervals}
The probability $\pi(\tau)$ of observing a recurrence interval $\tau$ between two events 
is the conditional probability of observing a sequence of $\tau-1$ ``non-events'' bordered by two events:
\[
    \pi(\tau)=\pr{X_{\tau}>\Xp, X_{\libracket 1;\tau\libracket}<\Xp|X_{0}>\Xp}.
\]
(We focus on positive events, but the recurrence of negative events can be studied with the substitution $X\to -X$,
and the case of recurrence in amplitudes with the substitution $X\to |X|$).
After a simple algebraic transformation flipping all `$>$' signs to `$<$', 
it can be written in the language of copulas as:
\begin{align}\label{eq:distrecint}\nonumber
        \pi(\tau)&=\frac{\pr{X_{\tau}>\Xp, X_{\libracket 1;\tau\libracket}<\Xp; X_{0}>\Xp}}{\pr{X_{0}>\Xp}}\\\nonumber
                 &=\frac{\pr{X_{\libracket 1;\tau\libracket}<\Xp}}{p_+}
                  -\frac{\pr{X_{\tau}<\Xp, X_{\libracket 1;\tau\libracket}<\Xp}}{p_+}\\\nonumber
                 &-\frac{\pr{X_{\libracket 1;\tau\libracket}<\Xp; X_{0}<\Xp}}{p_+}
                  +\frac{\pr{X_{\tau}<\Xp, X_{\libracket 1;\tau\libracket}<\Xp; X_{0}<\Xp}}{p_+}\\
        \pi(\tau)&=\frac{\CopN[\tau-1]{1\!-\! p_+}-2\,\CopN[\tau]{1\!-\! p_+}+\CopN[\tau+1]{1\!-\! p_+}}{p_+}.
\end{align}
The cumulative distribution
\begin{equation}\label{eq:cumul_Pi}
    \Pi(\tau)=\sum_{n=1}^\tau \pi(n)=1-\frac{\CopN[\tau]{1\!-\! p_+}-\CopN[\tau+1]{1\!-\! p_+}}{p_+}
\end{equation}
is more appropriate for empirical purposes, being less sensitive to noise.
These exact expressions make clear --- almost straight from the definition --- that 
  (i)~the distribution of recurrence times \emph{depends only on the copula} of the underlying process 
      and not on the stationary law, in particular its domain or its tails 
      (this is because we take a relative definition of the threshold as a quantile); 
 (ii)~\emph{non-linear} dependences are highly relevant in the statistics of recurrences, so that 
linear correlations can in the general case by no means explain alone the properties of $\pi(\tau)$ \cite{altmann2005recurrence}; and 
(iii)~recurrence intervals have a \emph{long memory} revealed by the $(\tau\!+\!1)$-points copula being involved, 
so that only when the underlying process $X_t$ is Markovian will the recurrences themselves be memoryless.%
\footnote{It may be mentioned that in a non-stationary context, renewal processes are also able to produce independent consecutive recurrences \cite{PhysRevE.78.051113,Sazuka20092839}.}
 Hence, when the copula is known (Eq.~\eqref{eq:GaussCop} below for Gaussian processes), 
 the distribution of recurrence times is exactly characterized by the analytical expression in Eq.~\eqref{eq:distrecint}.

The average recurrence time is found straightforwardly by summing the series
\begin{equation}\label{eq:avrecint}
{
    \mu_\pi=\vev{\tau}=\sum_{\tau=1}^{\infty}\tau\,\pi(\tau)=\frac{1}{p_+},
}
\end{equation}
and is \emph{universal} whatever the dependence structure.
This result was first stated and proven by Kac in a similar fashion~\cite{kac1947notion}. 
It is intuitive as, for a given threshold, the whole time series is the succession of a fixed number $p_+T$ of recurrences 
whose lengths $\tau_i$ necessarily add up to the total size $T$, so that $\vev{\tau}=\sum_i\tau_i/(p_+T)=1/p_+$.
Note that Eq.~\eqref{eq:avrecint} assumes an infinite range for the possible lags $\tau$, which is achieved either 
by having an infinitely long time series, or more practically when the translational-invariant copula is periodic at the boundaries of the time series, as
is typically the case for artificial data which are simulated using numerical Fourier Transform methods.
Introducing the copula allows to emphasize the validity of the statement even in the presence of non-linear long-term dependences,
as Eq.~\eqref{eq:avrecint} means that the average recurrence interval is \emph{copula}-independent.

More generally, the $m$-th moment can be computed as well by summing $\tau^m \pi(\tau)$ over $\tau$:
\begin{equation}
    \vev{\tau^m}=\frac{1}{p_+}\left[1+\sum\limits_{\tau=1}^\infty\left(|\tau\!+\!1|^m-2\tau^m+|\tau\!-\!1|^m\right)\copN[\tau](1\!-\!p_+)\right].
\end{equation}
In particular, the variance of the distribution is
\begin{equation}\label{eq:secmon}
    \sigma_\pi^2\equiv\vev{\tau^2}-\mu_\pi^2=\frac{2}{p_+}\sum_{\tau=1}^{\infty}\CopN[\tau]{1\!-\!p_+}-\frac{1\!-\!p_+}{p_+^2},
\end{equation}
It is not universal, in contrast with the mean, 
and can be related to the average unconditional waiting time, see below.
Notice that in the independent case the variance $\sigma_\pi^2=(1-p_+)/p_+^2$ is not equal to the mean $\mu_\pi=1/p_+$
 (as would be the case for a continuous-time Poisson process) because of discreteness effects.

\subsubsection*{Waiting times}

The conditional mean residual time to next event, when sitting $\tau$ time steps after a (positive) event, is
\begin{equation}\label{eq:residual_time}
    \vev{w|\tau}=\sum_{w=1}^\infty w\, \pi(\tau\!+\!w)=\frac{1}{p_+}\copN[\tau](1\!-\!p_+).
\end{equation}

One is often more concerned with unconditional waiting times,
which is equivalent to asking what the size of a sequence of $w$ `non-events' starting now will be,
\emph{regardless of what happened previously}. 
The distribution of these waiting times is
\[
    \phi(w)={\copN[w](1\!-\!p_+)-\copN[w+1](1\!-\!p_+)},
\]
and its expected value is 
\[
    \mu_{\phi}=\vev{w}=\sum_{w=1}^{\infty}\copN[w](1\!-\!p_+),
\]
consistently to what would be obtained by averaging $\vev{w|\tau}$  over
all possible elapsed times in Eq.~\eqref{eq:residual_time}.
%
From Eq.~\eqref{eq:secmon}, we have the following relationship between the variance of the distribution $\pi$
of recurrence intervals, and the mean waiting time:
\begin{equation}
    \sigma_\pi^2=\mu_\pi\,\big[2\mu_{\phi} +1 \big]-\mu_\pi^2
\end{equation}

\subsubsection*{Sequences lengths}
The dependence in the process is also revealed by the distribution of sequences sizes.
The probability that a sequence of consecutive negative events%
\footnote{We consider the case of ``negative'' events, i.e.\ those with $X_t<\Xm$
because it expresses simply in terms of diagonal copulas. 
The mirror case with ``positive'' events has the exact same expression but $\copN[n]$ must be inverted around the median.
For a symmetric $F$, this distinction is irrelevant.}, starting just after a `non-event',
will have a size $n$ is
\begin{align}\nonumber
    \psi(n)
               =\frac{\copN[n](p_-)-2\,\copN[n+1](p_-)+\copN[n+2](p_-)}{p_-\,(1-p_-)}
\end{align}
and the average length of a random sequence 
\begin{equation}
{
    \mu_\psi=\vev{n}=\sum_{n=1}^{\infty}n\,\psi(n)=\frac{1}{1-p_-}
}
\end{equation}
is universal, just like the mean recurrence time.
This property rules out the analysis of \citet{boguna2004conditional} who claim to be able to distinguish 
the dependence in processes according to the average sequence size.

\subsubsection*{Conditional recurrence intervals, clustering}
The dynamics of recurrence times is as important as their statistical properties,
and in fact impacts the empirical determination of the latter%
\footnote{Distribution testing for $\pi(\tau)$ involving Goodness-of-fit tests \cite{ren2010recurrence} 
should be discarded because those are not designed for dependent samples and rejection of the null
 cannot be relied upon. 
 See Section~\ref{sec:GoF} of Chapter~\ref{part:partII}.\ref{chap:GOF} for an extension of GoF tests when some dependence is present. 
}.
It is now clear, both from empirical evidences and analytically from the discussion on Eq.~\eqref{eq:distrecint}, 
that recurrence intervals have a long memory.
In dynamic terms, this means that their occurrences show some clustering. The natural question is then:
``Conditionally on an observed recurrence time, what is the probability distribution of the next one?''
This probability  of observing an interval $\tau'$ immediately following an observed recurrence time $\tau$
is
\begin{equation}
    \pr{X_{\tau+\tau'}>\Xp, X_{\tau+\libracket 1;\tau'\libracket}<\Xp|X_\tau>\Xp, X_{\libracket 1;\tau\libracket}<\Xp,X_0>\Xp}.
\end{equation}
Again, flipping the `$>$' to '$<$' allows to decompose it as
\[
     \frac{\copN[\tau-1;\tau'-1]{}-\copN[\tau;\tau'-1]{}-\copN[\tau-1;\tau']{}+\copN[\tau;\tau']{}}{\copN[\tau-1]{}-2\copN[\tau]{}+\copN[\tau+1]{}}
    -\frac{\pi(\tau+\tau')}{\pi(\tau)},
\]
where the $(\tau\!+\!\tau')$-points probability
\[
    \CopN[\tau;\tau']{p}=\mathcal{F}_{\libracket 0;\tau\!+\!\tau'\ribracket\backslash\{\tau\}}(F^{-1}(p),\ldots,F^{-1}(p))
\]
shows up.
Of course, this exact expression has no practical use, 
again because there is no hope of empirically measuring any many-points 
probabilities of extreme events with a meaningful signal-to-noise ratio.
We rather want to stress that non-linear correlations and multi-points dependences are relevant, 
and that a characterization of clustering based on the autocorrelation 
of recurrence intervals is an oversimplified view of reality.

\subsubsection*{Record statistic}
We conclude this theoretical section on multi-points non-linear dependences
by mentioning that the diagonal $n$-points copula
$\copN[n]$ can alternatively be understood as the distribution of the maximum of 
$n$ realizations of $X$ {in a row}, since 
\[
    \Pr{\max_{\tau\leq n}\{X_{t+\tau}\}<F^{-1}(p)}=\Pr{X_{t+\libracket 1,\tau\ribracket}<F^{-1}(p)}
                                                  =\copN[n](p).
\]
Studying the statistics of such ``local'' maximas in short sequences (see e.g.\ \cite{PhysRevE.73.016130}) 
can thus provide information on the multi-points properties of the underlying process.
The CDF of the {running} maximum, or \emph{record}, is $\copN[t](F(x))$ 
and the probability that $t>1$ will be a record-breaking time is the joint probability
\[
	R(t)=\Pr{\max_{\tau< t}\{X_{\tau}\}<X_t},
\]
which is \emph{irrespective of the marginal law} !


\begin{table}
\center
\begin{tabular}{||c|c||c|c||c|c||c||}
\hline
  $\pi_+(\tau)$              &$\vev{\tau}_+$     &$\phi_+(w)$       &$\vev{w}_+$     &$\psi_-(n)$              &$\vev{n}_-$    &$R(t)$\\\hline\hline
  $ (1\!-\!q)\,q^{\tau-1}$   &${1}/({1\!-\!q})$  &$(1\!-\!q)\,q^w$  &$q/(1\!-\!q)$   &$ q\,(1\!-\!q)^{n-1}$    &$ {1}/{q}$     &$1/t$
\end{tabular}
\caption{Different probabilities introduced, with thresholds defined as $F(\Xp)=q=1-F(\Xm)$,
for the White Noise process.}
\end{table}

\clearpage
\section{Continuous time}
The $n$-points copula of a continuous-time process $X_t$
is the counting associated with the corresponding arrival process,
whose events are defined like Bernoulli variables with a latent condition
$\1{F(X_t)\leq u_t}$.
Indeed,
\[
    \esp{\1{F(X_{t_1})\leq u_{t_1}},\ldots,\1{F(X_{t_n})\leq u_{t_n}}}
\]
is a straightforward generalization of the discrete $n$-points copula 
\[
    \mathcal{F}_{\libracket 1,n\ribracket}(F^{-1}(u_{t_1}),\ldots,F^{-1}(u_{t_n})),
\]
when the time-stamps $t_i$ are not equidistant.

More generally, one can thus study either directly the arrivals times $t_i$ of events to be defined,
or more fundamentally the latent physical process $X_t$ underlying the firing of the events.
The latter description can be modeled by diffusions or any kind of continuous processes (see Sect.~\ref{ssec:cont_time} below), 
while the former is achieved using point processes and is studied next.

\subsection{Arrival times: the duration process}
In the context of arrivals of a point process, a recurrence interval $R_t$ is larger than 
$\tau$ if the number of events $N_{t+\tau}$ has not increased since last event time $t$:
\[
    \pr{R_t>\tau}=\pr{N_{t+\tau}\!-\!N_t=0}.
\]
When the arrivals are independent and occur with a deterministic intensity $\lambda(t)$ per unit of time,
then the process is Poisson
\begin{equation}\label{eq:P_lambda}
    \pr{N_{t+\tau}\!-\!N_t=0|\lambda}=\e^{-\int_t^{t\!+\!\tau}\lambda(s)\d{s}},
\end{equation}
and the associated durations have a distribution
$
    \pi(\tau|t)\equiv-{\d\pr{R_t>\tau}}/{\d\tau}
$
equal to
\begin{equation}\nonumber
    \pi(\tau|t)=\lambda(t\!+\!\tau)\,\e^{-\int_t^{t\!+\!\tau} \lambda(s)\d{s}}.
\end{equation}
When the underlying process is stationary, the intensity $\lambda$ is constant and
in fact equal to the unconditional probability of events.
The usual exponential distribution of recurrence times for independent arrivals is retrieved,
and one verifies in particular that the average duration is equal to the inverse intensity:
\[
    \vev{\tau}=1/\lambda.
\]
When $\lambda(\cdot)$ is a stochastic process, Eq.~\eqref{eq:P_lambda} must be understood conditionally,
and the correct probabilities are found averaging over all possible $\Lambda_{t,t+\tau}=\int_t^{t+\tau}\lambda(s)\d{s}$.
Such doubly stochastic Poisson processes were initially introduced by \citet{cox1955some}.
Of course, $\Lambda$ being an ``arrival rate'' it must always remain positive, 
but for the sake of the example, let us examine the case of Gaussian processes, and in which limits this holds.
As is know from exponentials of Gaussian variables, the correction to the probability \eqref{eq:P_lambda} is %
\[
    \pr{R_t>\tau}=\e^{-\esp{\Lambda_{t,t+\tau}}+\frac{1}{2}\var{\Lambda_{t,t+\tau}}}.
\]
For example, with the integrated intensity following an Ornstein-Uhlenbeck process 
\begin{equation}\label{eq:OU}
    \Lambda_{t,t+\tau}=\overline\lambda\tau+\sigma\int_{0}^{t+\tau}\e^{-\theta(\tau+t-s)}\d{z_s},
\end{equation}
the drift is $\esp{\Lambda_{t,t+\tau}}=\overline\lambda\tau$ 
and the variance $\var{\Lambda_{t,t+\tau}}=\frac{\sigma^2}{2\theta}\left(1-\e^{-\theta(2t+\tau)}\right)$.
Since $\Lambda$ must always remain positive, the Gaussian case can only be a valid approximation
as long as $\esp{\Lambda}\gg\var{\lambda}$, i.e.\ $\tau\gg\sigma^2/(2\theta\overline\lambda)$ in the stationary OU\ example.

The intensity process \eqref{eq:OU} is Gaussian and Markovian, but can be generalized 
to incorporate long-memory by replacing the exponential integration kernel by one with a slow decay,
and/or choosing positive-only innovations in place of the Wiener $\d{z_t}$ \cite{Basu2002297}. 
The translational invariance is broken at short times $t\to\infty$, if the intensity depends upon the past realizations of the counting process $N_t$,
but at long times, stationarity is able to settle down.
Such so-called Hawkes processes are obtained replacing the independent innovation $\d{z_t}$ by the counting process itself $\d{N_t}$,
and are thus self-exciting.
They have been recently the focus of intense research in earthquakes studies --- see also the related Epidemic-Type Aftershock Sequence (ETAS) model of triggered seismicity \cite{saichev2007theory,sornette2008solution} ---
and in finance \cite{pomponio2012,bacry2012non,filimonov2012quantifying}. 
The latter studies point toward a strong self-exciting mechanism that brings the system close to criticality,
what was also the conclusion of feedback QARCH model of Chapter~\ref{chap:QARCH}.

\subsection{Continuous processes}\label{ssec:cont_time}
The recurrence times problem for a continuous process can be formulated as a first-passage problem \cite{bray2013persistence}:
indeed, the probability density $$\pi(\tau|t)\d{\tau}=\d{\pr{X_\tau=\Xp, X_{t'}<\Xp\,\forall t'<\tau|X_t\geq \Xp}}$$
can be simply written in terms of the transition probability $f(\Xp,\tau;x,t)$ of a particle 
starting at $X_t=x$ at the initial time $t$ and ending at $X_\tau=\Xp$ without ever hitting the wall at $\Xp$ inbetween.
Of course, since there are no jumps in this continuous setting, the initial value must be conditioned to 
be $X_t=\Xp$ for all the subsequent values to be lower than $\Xp$, and thus
\[
    \pi(\tau|t)=f(\Xp,\tau;\Xp,t).
\]
Once the dynamics of the process is specified (by a Stochastic Differential Equation, a Langevin equation or a non-Markovian generalization thereof),
the transition density $f$ can be found by path-integral methods, or solving Fokker-Planck-like 
equations with appropriate boundary conditions: $f(\Xp,t'|\Xp,t)=0\,\forall t'<\tau$.

Importantly, specific methods of non-Markovian Fokker-Planck equations need to be used in case
the process has memory: f.ex.\ convolution of usual Fokker-Planck with a memory kernel \cite{sokolov2002solutions},
 fractional Fokker-Planck (see \cite{metzler1999anomalous} and many later works of the same authors), continuous limits of Master equations, etc.
 
 In the case of the Brownian Motion, the moments of the distribution $f(x,\tau-t;x_0,0)$ of 
 first passages satisfy a recursive relation \cite{ebeling2005statistical}.

\clearpage
\section{Financial self-copulas}
\begin{table}
    \center
    \begin{tabular}{|lr||c|c|c|}
        Stock Index &Country        &From   &To     &$T$\\\hline\hline
        S\&P-500    &USA            &Jan.\ 02, 1970 &Dec.\ 23, 2011& 10\,615\\
        KOSPI-200   &South Korea    &Jan.\ 03, 1990 &Dec.\ 26, 2011&  5\,843\\
        CAC-40      &France         &Jul.\ 09, 1987 &Dec.\ 23, 2011&  6\,182\\
        DAX-30      &Germany        &Jan.\ 02, 1970 &Dec.\ 23, 2011& 10\,564\\
        SMI-20      &Switzerland    &Jan.\ 07, 1988 &Dec.\ 23, 2011&  5\,902
    \end{tabular}
    \caption{Description of the dataset used: time series of daily returns of international stock indices.}
    \label{tab:index_data}
\end{table}
We illustrate some of the quantities introduced above on series of daily index returns. 
The properties of the time series used are summarized in Tab.~\ref{tab:index_data}.
\subsection{Conditional probabilities and 2-points dependences}
\begin{figure}[p]
    \center
    \subfigure[SP500]{\label{fig:SP500}\includegraphics[scale=.5,trim=0 0 0 25,clip]{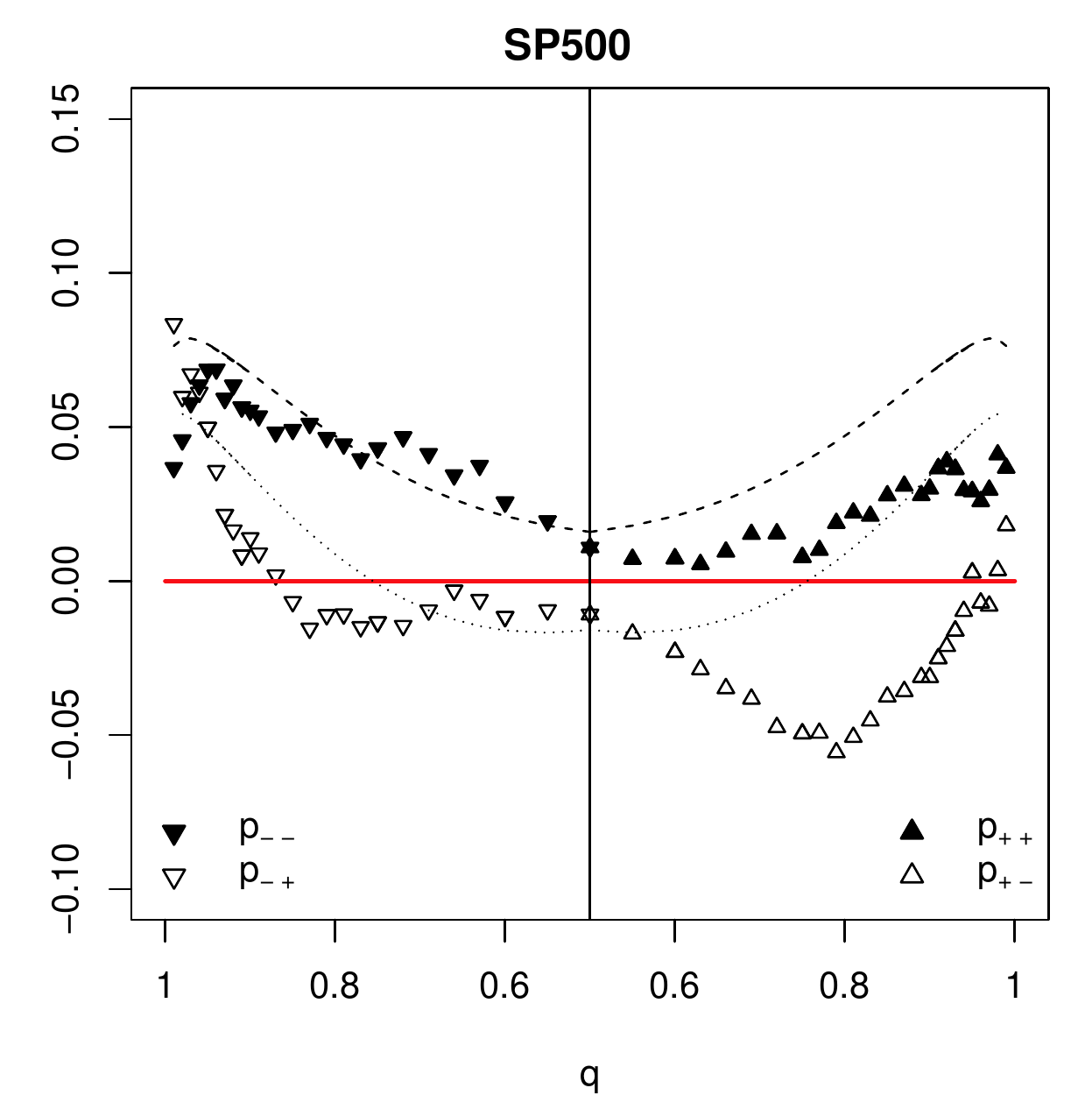}}
    \subfigure[KOSPI]{\label{fig:KOSPI}\includegraphics[scale=.5,trim=0 0 0 25,clip]{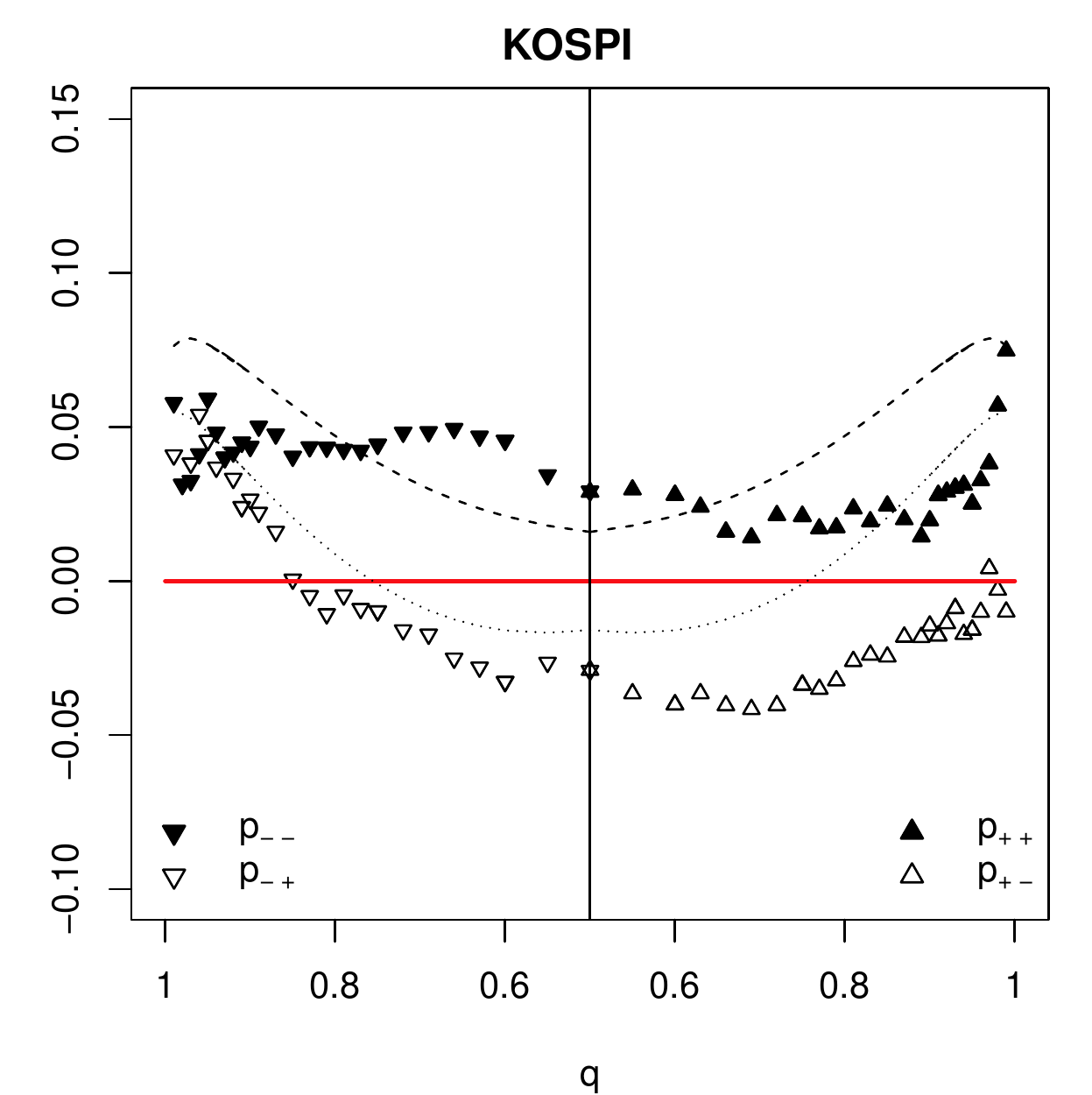}}\\
    \subfigure[CAC  ]{\label{fig:CAC}  \includegraphics[scale=.5,trim=0 0 0 25,clip]{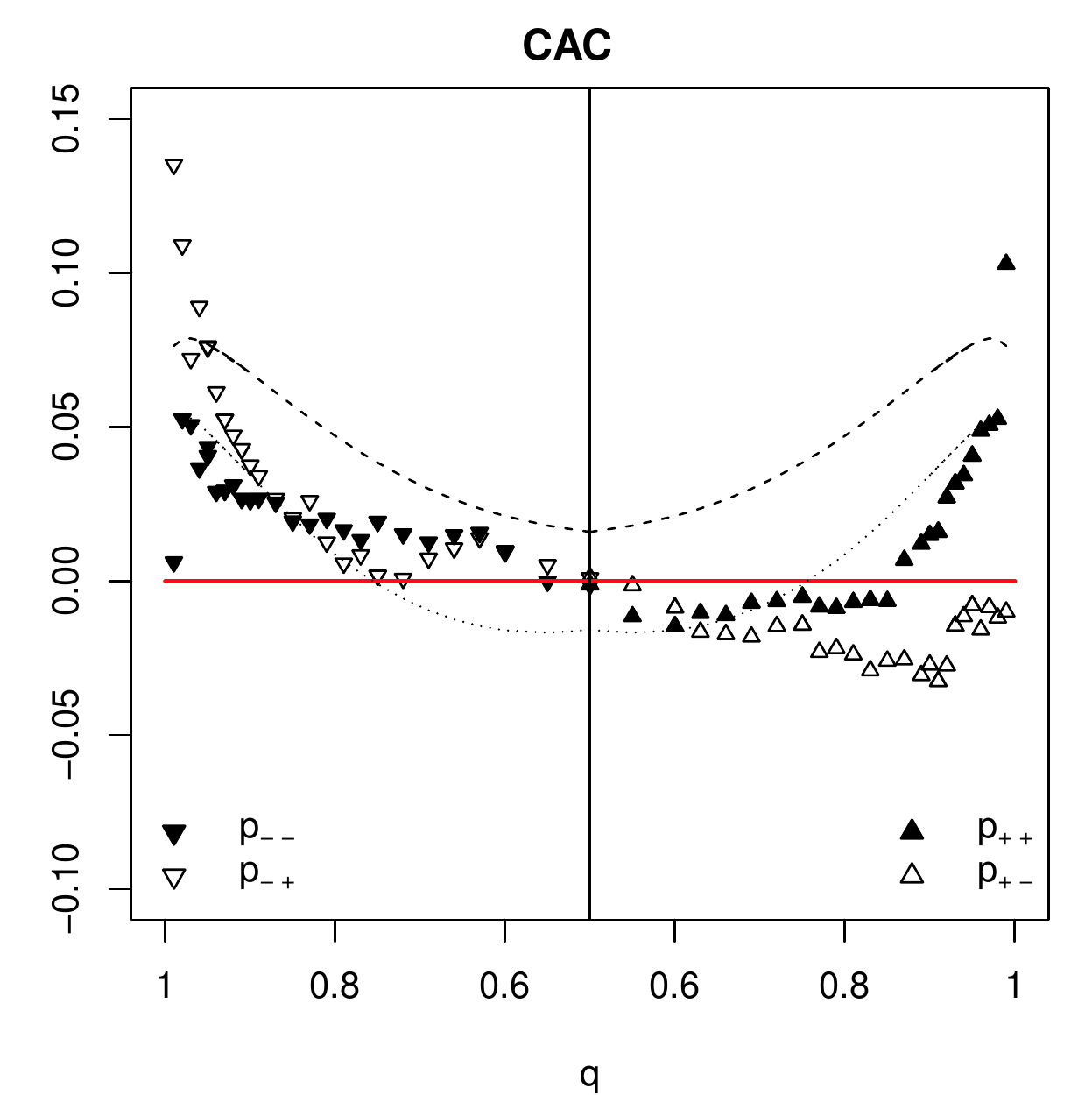}}
    \subfigure[DAX  ]{\label{fig:DAX}  \includegraphics[scale=.5,trim=0 0 0 25,clip]{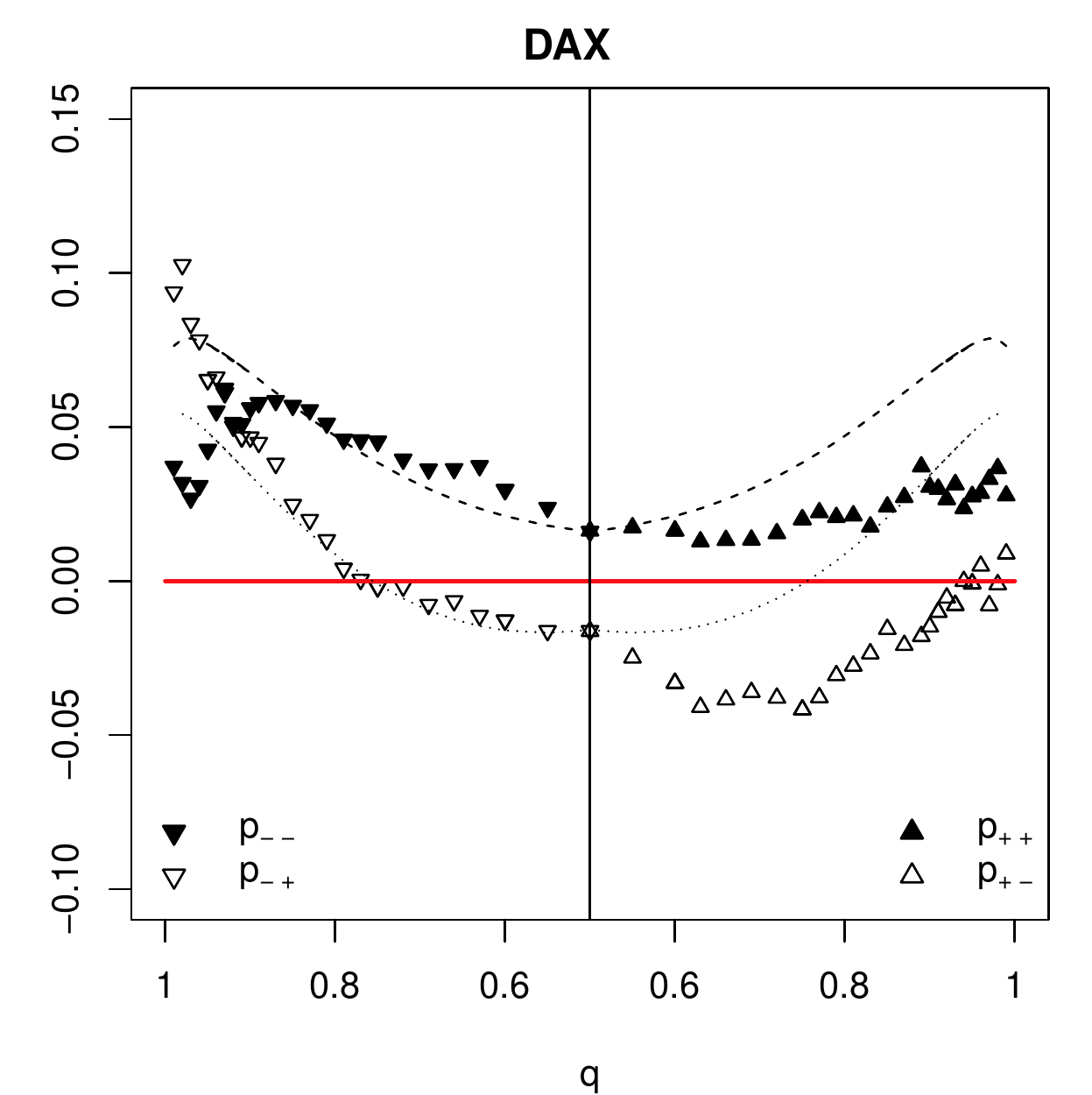}}\\
    \subfigure[SMI  ]{\label{fig:SMI}  \includegraphics[scale=.5,trim=0 0 0 25,clip]{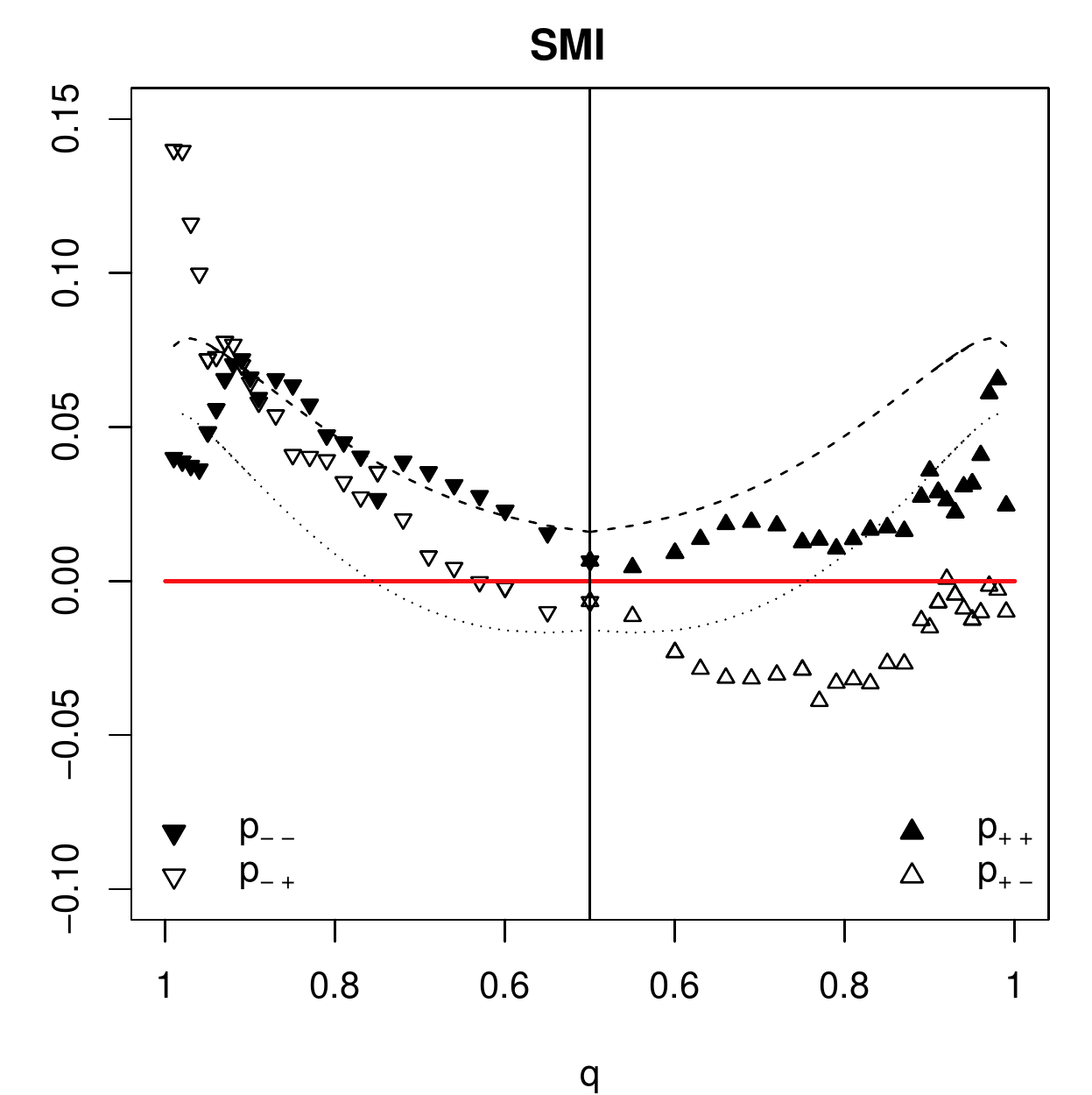}}
    \caption{Conditional extreme probabilities at $\ell=1$ (the WN contribution has been subtracted).
             Filled symbols are for persistence, and empty symbols for reversion.
             Upward pointing triangles are conditioned on positive jumps, and downward pointing triangles are conditioned on negative jumps.}
    \label{fig:condprob_idx}
\end{figure}
We reproduce the study of Ref.~\cite{boguna2004conditional} on the statistic of price changes conditionally on previous return sign,
and extend the analysis to any threshold $|\Xpm|\geq 0$ and to remote lags. 
We first illustrate on Fig.~\ref{fig:condprob_idx} the conditional probabilities $p_{\pm\pm}^{(\ell)}$ (filled symbols)
and $p_{\pm\mp}^{(\ell)}$ (empty symbols) with varying threshold $q=F(\Xp)=1-F(\Xm)$, for $\ell=1$.
To study the departure from the independent case,
 it is more convenient to subtract the White Noise contribution, to get the corresponding \emph{excess} probabilities.
Several features can be immediately observed:
    positive events (upward triangles)   trigger more subsequent positive (filled) than negative (empty) events;
    negative events (downward triangles) trigger more subsequent negative (filled) than positive (empty) events, 
    except in the far tails $q\gtrsim 0.9$ where reversion is stronger than persistence after a negative event.
    Both these effects dominate the WN benchmark, but the latter effect is however much stronger.
This overall behavior is similar for all time series.
    
The shapes of $p_{\pm\pm}$ and $p_{\pm\mp}$ versus $q$ are not compatible with the Student copula benchmarks 
(correlation $\rho=0.05$ and d.o.f.\ $\nu=5$)
shown in dashed and dotted lines, respectively.
Notice that, due to its non-trivial tail-correlations, see part~\ref{part:partI}, the Student copula 
does generate increased persistence with respect to WN, lower reversion in the core and higher reversion in the tails.
But empirically the reversion is asymmetric and typically stronger when conditioning on 
negative events rather than on positive events, a property reminiscent of the 
leverage effect which cannot be accounted for by a pure Student copulas.
Some of the indices exhibit more pronounced reversion and persistence effects. 
Interestingly, the CAC-40 returns have a regime $0.5\leq q\lesssim 0.9$ close to a white noise 
(with in particular a value of $p_{\pm\pm}^{(1)}=p_{\pm\mp}^{(1)}$ very close to the 0 at $q=0.5$,
revealing an inefficient conditioning, 
i.e.\ as many positive and negative returns immediately following positive or negative returns),
but the extreme positive events $q\gtrsim 0.9$ show a very strong persistence,
and the extreme negative events a very strong reversion.

In the next chapter, we study in detail the $p$ and $\ell$ dependence of 
$[\cop[\ell](p,p)-p^2]$ and $[\cop[\ell](p,1\!-\!p)-p\,(1\!-\!p)]$ 
--- which are straightforwardly related to $p_{\pm\pm}^{(\ell)}$ and $p_{\pm\mp}^{(\ell)}$, respectively --- 
and find that the self-copula
of stock returns can be modeled with a high accuracy by a log-normal volatility with log-decaying correlation, 
in agreement with multifractal volatility models. 
To anticipate over the discussion there, we give a preview of the results in Fig.~\ref{fig:previewCop}, 
for the aggregated copula of all stocks in the S\&P500 in 2000--2004.
We are able to show precisely how every kind of dependence present in the underlying process
materializes itself in $p_{++}^{(\ell)}$ for different $q$'s:
short ranged linear anti-correlation accounts for the central part ($p\approx 0.5$) departing from the WN prediction,
long-ranged amplitude clustering is responsible for the ``M'' and ``W'' shapes that reveal excess persistence and suppressed reversion, 
and the leverage effect can be observed in the asymmetric heights of the ``M'' and ``W''.
\begin{figure}
\center
	\includegraphics[scale=.74,trim=0    0 1620 0,clip]{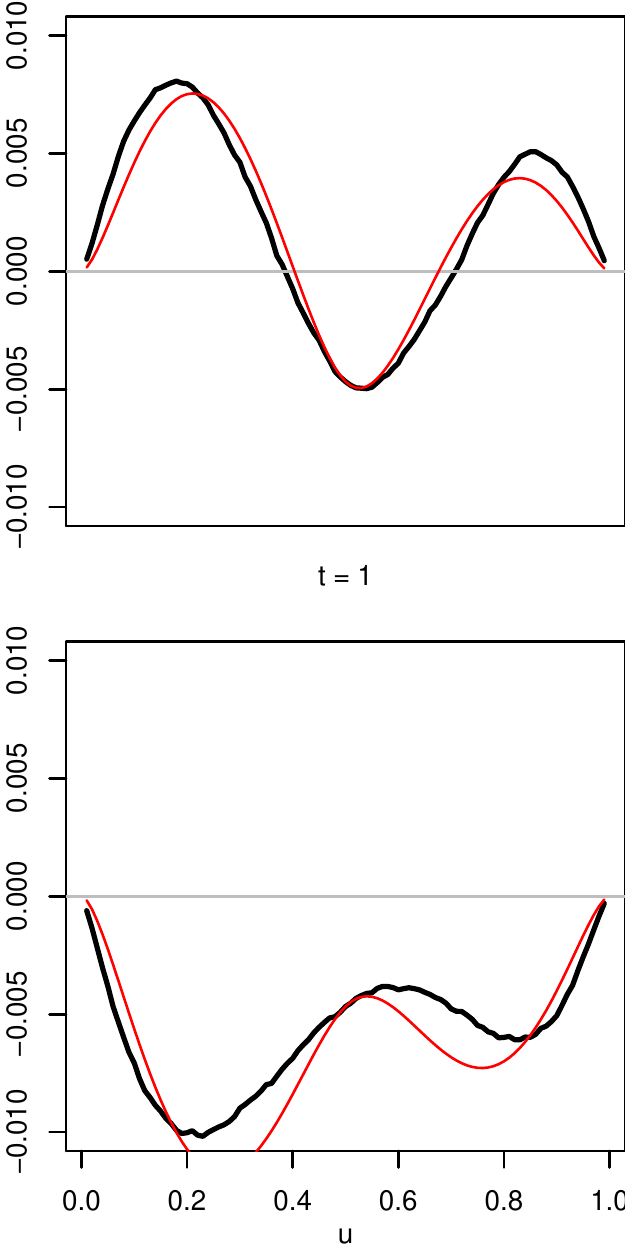}
	\includegraphics[scale=.74,trim=1275 0  350 0,clip]{GoFdependent/fig4}
	\caption{Diagonal (top) and anti-diagonal (bottom) of the self-copula for lags $\ell=1$ and $\ell=128$;
	the product copula has been subtracted. 
    The copula determined empirically on stock returns is in bold black, and 
    a fit with the model of Chapter~\ref{chap:cop_fin} is shown in thin red. }
    \label{fig:previewCop}
\end{figure}

Fig.~\ref{fig:condES} displays the behavior of $\vev{X}_\pm$ versus $q$ (we also show the median $\operatorname{med}(X)_\pm$) at 
lags corresponding to one day ($\ell=1$), one week ($\ell=5$) and one month ($\ell=20$).
The conditional amplitudes $\vev{X}_\pm$ measure ``how large'' a realization will be on average
after an event at a given threshold, 
whereas the conditional probabilities $p_{\pm\pm}$ and $p_{\pm\mp}$ quantify ``how often''
repeated such events occur.
Mind the \emph{unconditional} mean and median values, both above zero and distinct from each other.
At $\ell=1$, the reversion of \emph{extreme} events is revealed again by the change of monotonicity from $q\approx 0.8$ on, 
and more strongly for $q>0.9$ where $\vev{X}_-$ has an opposite sign than the preceding return;
this corroborates the observation made on conditional probabilities above.
Beyond the next day, the general picture is that dependences tend to vanish 
and the empirical measurements get more concentrated around the WN prediction.
However, tail effects are strongly present, with unexpectedly a typical behavior opposite
to that of $\ell=1$: weekly monthly reversion of extreme positive jumps. 
See the caption for a detailed discussion of the specificities of each stock index at every lag $\ell$.
\begin{figure}[ph]
    \center
    \subfigure[SP500]{\label{fig:SP500}\includegraphics[scale=.575]{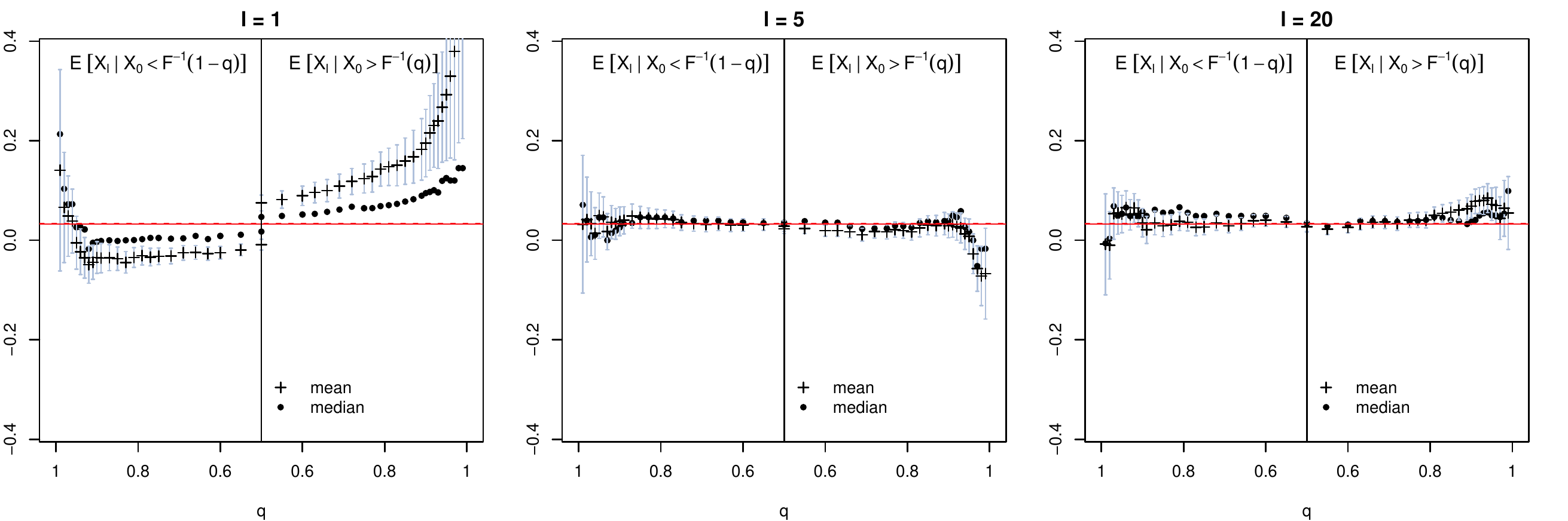}}
    \subfigure[KOSPI]{\label{fig:KOSPI}\includegraphics[scale=.575]{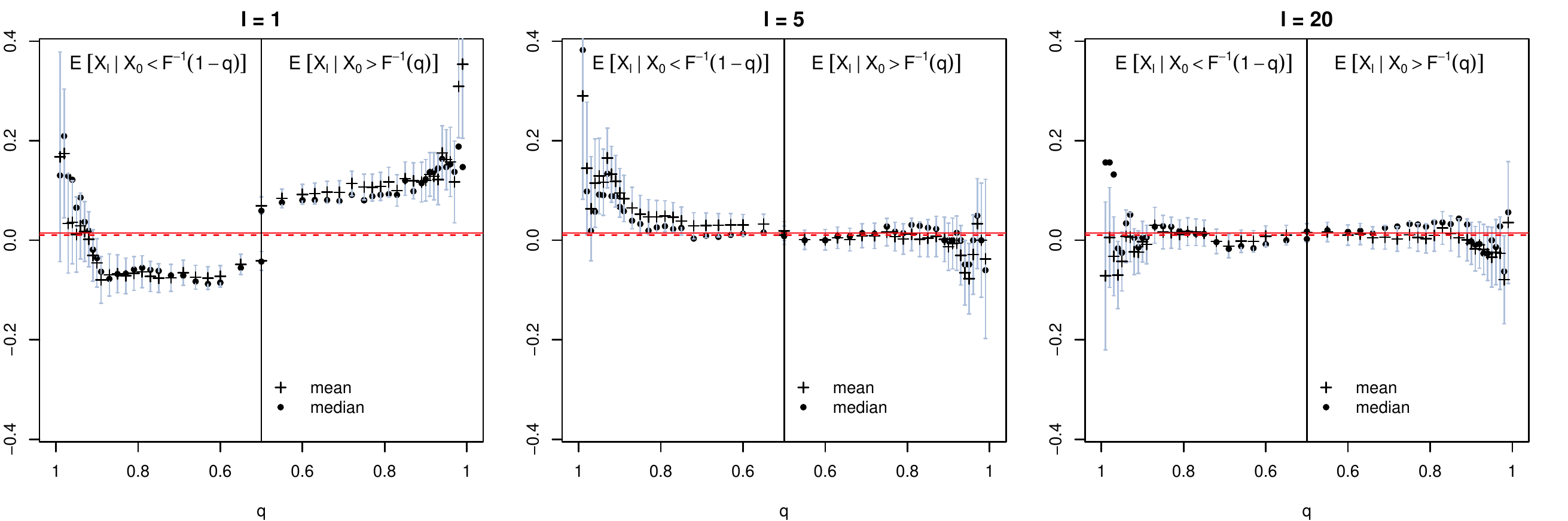}}
    \caption{Conditional extreme amplitudes, at lags $\ell=1, 5, 20$.
             The upper-right and lower-left quadrants express persistence,
             while the upper-left and lower-right quadrants reveal reversion.
             For a scale-free dependence structure, one would expect the magnitudes to decrease with the lag $\ell$
             but the global shape to be conserved.
             What we instead observe is important changes of configuration at different lags:
             For example, the strong reversion of negative tail events at $\ell=1$ vanishes
             at farther lags, and even turns into strong persistence for the CAC and DAX indices.
             That is to say that these indices tend to mean-revert after a negative event at the daily frequency,
             but to trend on the weekly scale.
             Similarly, the strong persistence of positive events at $\ell=1$ converts to a strong reversion 
             in the tails at $\ell=20$ for the European indices (CAC, DAX, SMI); 
             a weaker reversion is observed at intermediate scale ($\ell=5$) for most indices (including US and Korean).
             }
    \label{fig:condES}
\end{figure}
\begin{figure}[p]
    \addtocounter{figure}{-1}
    \addtocounter{subfigure}{2}
    \center
    \subfigure[CAC  ]{\label{fig:CAC}  \includegraphics[scale=.575]{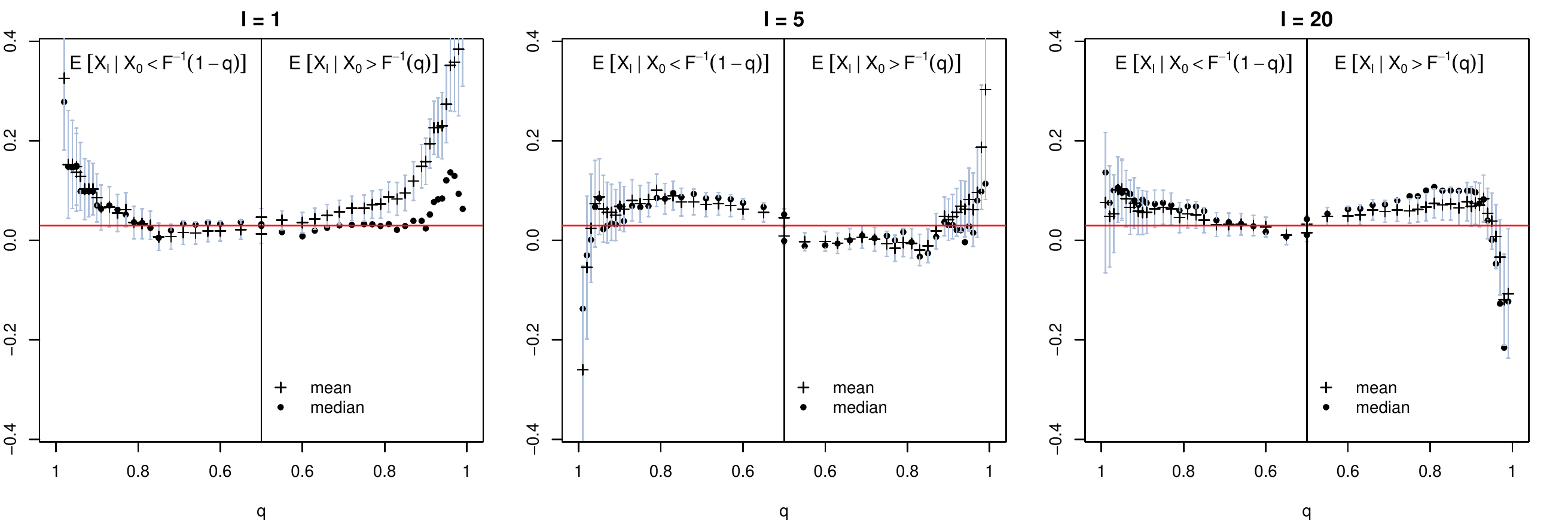}}
    \subfigure[DAX  ]{\label{fig:DAX}  \includegraphics[scale=.575]{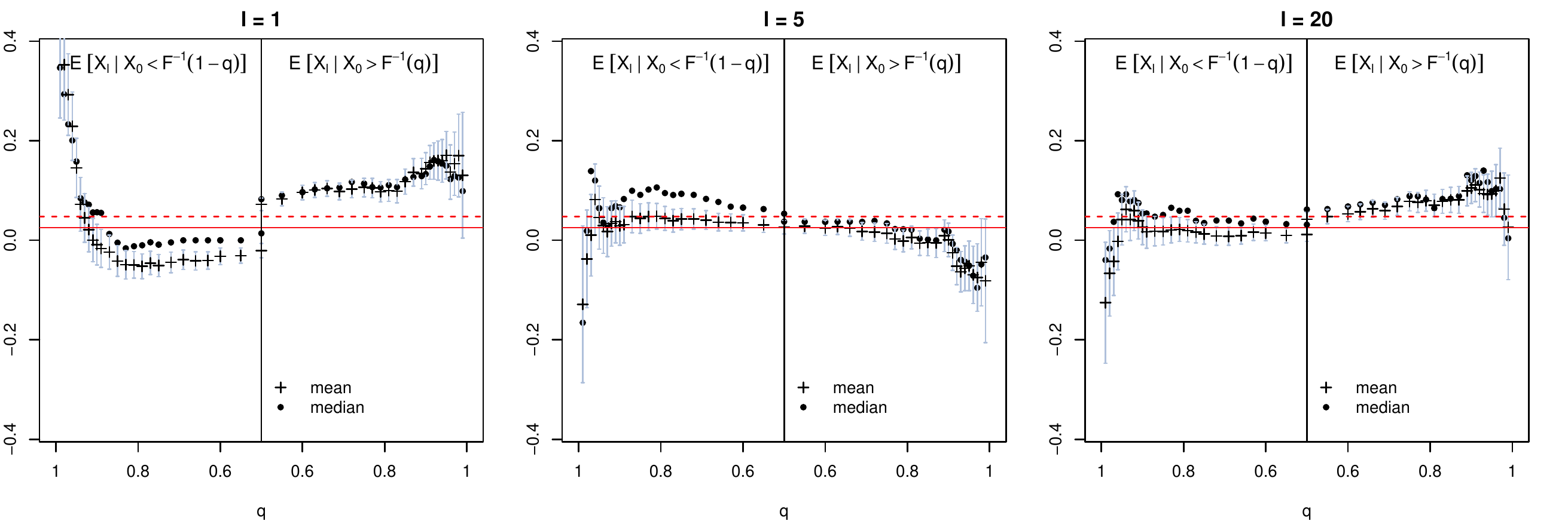}}
    \subfigure[SMI  ]{\label{fig:SMI}  \includegraphics[scale=.575]{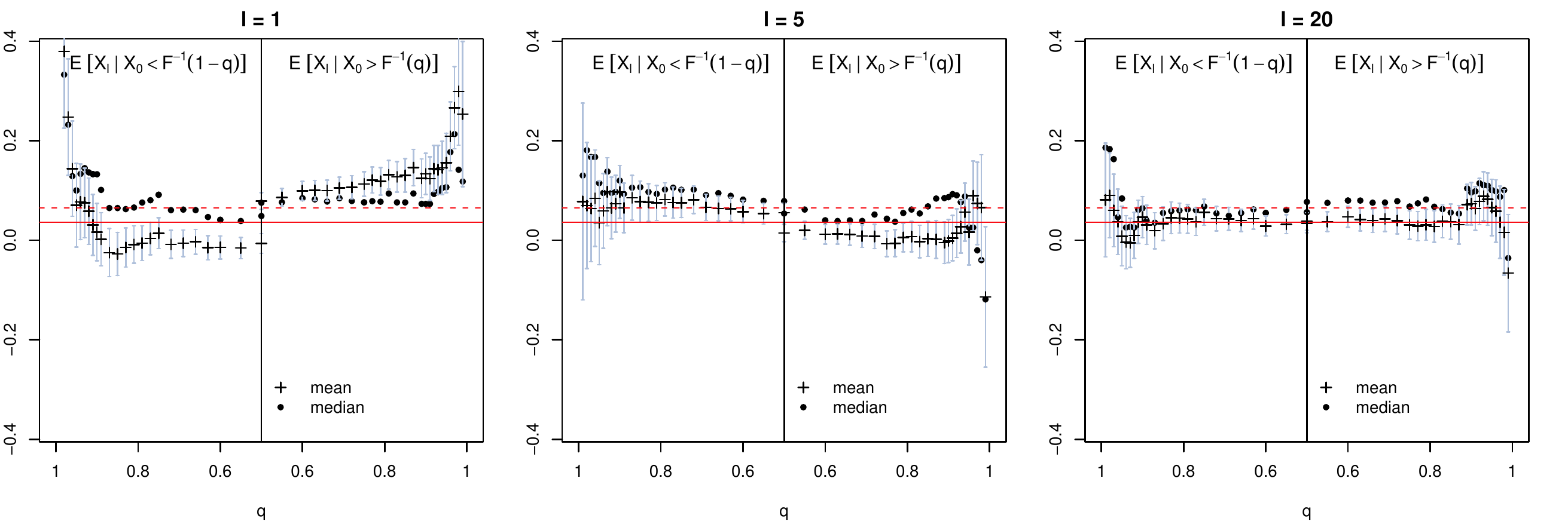}}
    \caption{(continued)}
\end{figure}

\subsection{Recurrence intervals and many-points dependences}

\begin{figure}
    \center
    \includegraphics[scale=.475,trim=0 0 0 40,clip]{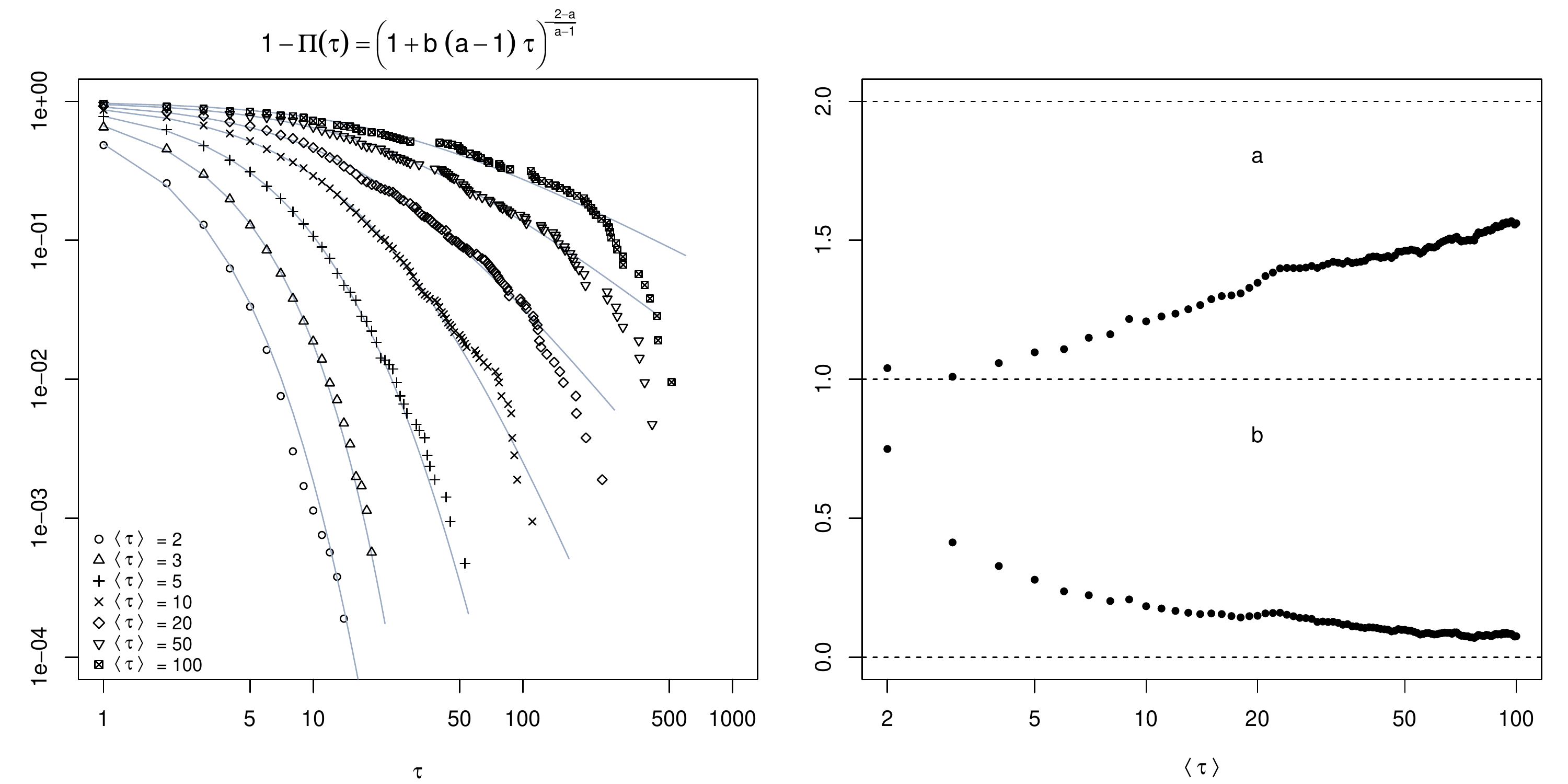}
    \caption{DAX index returns. 
    \textbf{Left:} 
    tail probability $1-\Pi(\tau)$ of the recurrence intervals, at several thresholds $p_+=1/\vev{\tau}$,
    in log-log scale. 
    Grey curves are best fits to $[1+b\,(a\!-\!1)\,\tau]^{(2\!-\!a)/(a\!-\!1)}$ suggested in Ref.~\cite{ludescher2011universal}.
    \textbf{Right:}
    estimated parameters $a$ and $b$ of the best fit.}
    \label{fig:RI_tCDF}
\end{figure}

Even the simple, two-points measures of self-dependence studied up to now show that
non-linearities and multi-scaling are two ingredients that must be taken into account
when attempting to describe financial time series;
we now examine their many-points properties.
As an example,
we compute the distribution of recurrence times of returns above a threshold $\Xp=F^{-1}(1\!-\!p_+)$.

Fig.~\ref{fig:RI_tCDF} shows the tail cumulative distribution 
 $1-\Pi(\tau)$ of the recurrence intervals of DAX returns, at several thresholds $p_+=1/\vev{\tau}$ 
 --- the distribution for other indices is very similar.
In the log-log representation used, an exponential distribution (corresponding to independent returns) 
would be concave and rapidly decreasing, while a power-law would decay linearly.
The empirical distributions fit neither of those, and the authors of Ref.~\cite{ludescher2011universal}
suggested a parametric fit of the form 
\[
    1-\Pi(\tau)=[1+b\,(a\!-\!1)\,\tau]^{(2\!-\!a)/(a\!-\!1)}.
\]
However, important deviations are present in the tail regions for thresholds at $\Xp\gtrsim F^{-1}(0.9)$, 
i.e.\ $\vev{\tau}\gtrsim 1/(1-0.9) =10$: as a consequence, there is no hope that 
the curves for different threshold collapse onto a single curve after a proper rescaling \cite{chicheportiche2013recurrences}, 
as is the case e.g.\ for seismic data.
A more fundamental determination of the form of $\Pi(\tau)$ should rely on Eq.~\eqref{eq:cumul_Pi}
and a characterization of  the $\tau$-points copula.

Similarly to the statistic of the recurrence intervals, their dynamics must be studied carefully. 
We have shown that the conditional distribution of recurrence intervals after a previous recurrence 
is very complex and involves long-ranged non-linear dependences, 
so that a simple assessment of recurrence times auto-correlation may not be informative enough for a
deep understanding of the mechanisms at stake.


\section{Conclusion}
We report several properties of recurrence times and the statistic of other 
observables (waiting times, cluster sizes, records, aftershocks) in light of their decription in terms of the diagonal copula,
and hope that these studies can shed light on the $n$-points properties of the process by 
assessing the statistics of simple variables
rather than positing an {\it a priori} dependence structure.
 
The exact universality of the mean recurrence interval imposes a natural scale in the system.
A scaling relation in the distribution of such recurrences is only possible in absence of any other characteristic time.
When such additional characteristic times are present (typically in the non-linear correlations), 
no such scaling is expected, in contrast with time series of earthquake magnitudes.

We also stress that recurrences are intrinsically multi-points objects 
related to the non-linear dependences in the underlying time-series.
As such, their autocorrelation is not a reliable measure of their dynamics,
for their conditional occurence probability is much history dependent.

In a continuous-time setting, the role of the $n$-points copula is played by a counting process.
The events to be counted can either be triggered by an underlying continuous process crossing a threshold, 
or more directly be modeled as a self-exciting point process, like a Hawkes process.

Ultimately, recurrences may be used to characterize risk in a new fashion.
Instead of --- or in addition to --- caring for the amplitude and probability of adverse events at a given horizon,
one should be able to characterize the risk in a dynamical point of view.
In this sense, an asset $A_1$ could be said to be ``more risky'' than another asset $A_2$
if its distribution of recurrence of adverse events has such and such ``bad'' properties that $A_1$ does not share.
This amounts to characterizing the disutility  by ``When ?''  shocks are expected to happen,
in addition to the usual ``How often ?'' and ``How large ?''.


The generalization of the $n$-points copula for continuous-time processes has been rapidly introduced,
and it would be interesting to study many-points dependences in continuous processes,
for example transaction times in a Limit Order Book.

The next chapter uses the copula approach to address the long-range nature of volatility-dependences 
and incorporate multi-scaling and non-linearities.
The theory of Goodness-of-fit tests for dependent observations developped in part~\ref{part:partI} is put in practice 
to characterize the unconditional distribution of stock returns.

\chapter{The long-ranged nature of volatility dependences}\label{chap:cop_fin}
\minitoc

When trying to extend GoF tests to dependent variables, self-copulas appear naturally, 
as we carefully showed in Section~\ref{sec:GoF} of Chapter~\ref{part:partI}.\ref{chap:GOF}. 
It is thus very important to estimate and/or model precisely the underlying dependence structure 
in order to apply the theory presented there.
This chapter proposes two applications of the GoF tests with dependent observations.

In Section~\ref{sec:example} we go through a detailed example when the dependences are described by a pseudo-elliptical copula
exhibiting log-amplitudes correlations.
We first investigate the Markovian case of weak, short-range dependences, 
and then relax the second assumption to investigate long-memory; 
we also add linear correlations and leverage effect in order to be as close as possible to financial copulas.
Indeed, Section~\ref{sec:application} is dedicated to an application of the theory to the case of financial data:
after defining our data set, we perform an empirical study of dependences in series of stock returns, 
and interpret the results in terms of the ``self-copula'';
implication of the dependences on GoF tests are illustrated for this special case using Monte-Carlo simulations.
In our empirical study of financial self-copulas, we rely on a non-parametric estimation rather than imposing, 
for example, a Markovian structure of the underlying process, as in e.g.~\cite{darsow1992copulas,ibragimov2008copulas}.

\section{An explicit example: the log-normal volatility}\label{sec:example}

In order to illustrate the above general formalism, 
we focus on the specific example of the product random variable $X=\sigma\xi$, 
with i.i.d.\ Gaussian residuals $\xi$ and log-normal stochastic amplitude $\sigma=\e^{\omega}$
(we denote generically by $F$ the cdf of $X$).
Such models are common in finance to describe stock returns, as will be discussed in the next section.
For the time being, let us consider the case where the $\omega$'s are of zero mean, and covariance given by:
\begin{equation}\label{eq:modelLogNorm}
	{\rm Cov}(\omega_n \omega_{n+t}) = \Sigma^2\, f\!\left(\frac{t}{T}\right), \qquad (t > 0).
\end{equation}
where $T$ is a typical cutoff scale.

The pairwise copulas can be explicitly written in the limit of weak correlations, $\Sigma^2 \to 0$.
One finds:
\begin{align}
	\cop[t](u,v)-uv&=\Sigma^2 f\!\left(\frac{t}{T}\right) \tilde{A}(u)\tilde{A}(v) + \mathcal{O}(\Sigma^4)\\\label{eq:def_A}
	\textrm{with}\quad \tilde{A}(u)&=\int\limits_{-\infty}^{\infty}\varphi(\omega)\varphi'(\frac{F^{-1}(u)}{\e^{\omega}})\d\omega
\end{align}
where here and in the following $\varphi(\cdot)$ denotes the univariate Gaussian pdf, and $\Phi(\cdot)$ the Gaussian cdf.
The spectrum of $A(u,v)=\tilde{A}(u)\tilde{A}(v)$ consists in a single non-degenerate eigenvalue 
$\lambda^A=\tr A=\int_0^1\tilde{A}(u)^2\, \d{u}$, 
and an infinitely degenerate null eigenspace. 
Assuming $f_\infty = \sum_{r=1}^\infty f(r) < +\infty$ (the memory might be long-ranged, yet must decay sufficiently fast),
the covariance kernel reads:
\[
	H(u,v)=I(u,v)+ 2\,T\,\Sigma^2 f_\infty \, A(u,v).
\]
Depending on the value of the parameters, the first term or the second term may be dominant. 
Note that one can be in the case of weak correlations ($\Sigma^2 \to 0$) but long range memory $T \gg 1$, 
such that the product $T\Sigma^2$ can be large (this is the case of financial time series, see below). 
If $T\Sigma^2$ is small, one can use perturbation theory around the Brownian bridge limit
(note that $\tr I\approx 10 \tr A$, see Tab.~\ref{tab:traces} in appendix),
whereas if $T\Sigma^2$ is large, it is rather the Brownian term $I(u,v)$ that can be treated as a perturbation. 
Elements of the algebra necessary to set up these perturbation theories are given in Appendix~\ref{chap:apx_algebra}.

We now turn to a numerical illustration of our theory, in this simple case where only volatility correlations are present.

\subsection{Short-range memory}
In order to remain in the framework of the theory exposed in Chapter~\ref{part:partI}.\ref{chap:GOF}
and satisfy in particular the conditions for the validity of the CLT
under weak dependences, we assume a Markovian dynamics for the log-volatilities:
\begin{equation}\label{eq:markov_logvol}
X_n=\xi_n\,\e^{\omega_{n}-\var{\omega}}, \quad\textrm{with}\quad\omega_{n+1} = g \omega_n + \Sigma \eta_n,
\end{equation}
where $g < 1$ and $\eta_n$ are i.i.d.\ Gaussian variables of zero mean and unit variance. In this case, 
\begin{equation}
	\alpha_t=\,\,{\rm Cov}(\omega_n \omega_{n+t}) = \frac{\Sigma^2}{1-g^2}\,g^t.
\end{equation}
In the limit where $\Sigma^2 \ll 1$, the weak dependence expansion holds and one finds explicitly:
\begin{equation}\label{eq:weak_dep_kernel}
	H(u,v)=I(u,v)+ {2}\frac{g \Sigma^2}{(1-g)^2(1+g)} \, A(u,v).
\end{equation}
In order to find the limit distribution of the test statistics, we proceed by Monte-Carlo simulations.
The range $[0,1]^2$ of the copula is discretized on a regular lattice of size $(M\times M)$.
The limit process is described as a vector with $M$ components and built from Equation~(\ref{eq:y_sum_z}), page~\pageref{eq:y_sum_z}, as  
$\mathbf{\tilde{y}}=U\Lambda^{\frac{1}{2}}\mathbf{z}$ 
where the diagonal elements of $\Lambda$ are the eigenvalues of $H$ (in decreasing order),
and the columns of $U$ are the corresponding eigenvectors. 
Clearly, ${\rm Cov}(\mathbf{\tilde{y}},\mathbf{\tilde{y}})=U\Lambda U^{\dagger}=H$.

For each Monte-Carlo trial, $M$ independent random values are drawn 
from a standard Gaussian distribution and collected in $\mathbf{z}$.
Then $\mathbf{y}$ is computed using the above representation. This allows one to determine the two relevant statistics:
\begin{align*}
	K\!S&=\max_{u=1\ldots M}|\tilde{y}_u|\\
	C\!M&=\frac{1}{M}\sum_{u=1}^M\tilde{y}_u^2=\frac{1}{M}\mathbf{\tilde{y}}^{\dagger}\mathbf{\tilde{y}}=\frac{1}{M}\mathbf{z}^{\dagger}\Lambda\mathbf{z}.
\end{align*}
The empirical cumulative distribution functions of the statistics for a large number of trials are shown in Figure~\ref{fig:ecdf_sim} 
together with the usual Kolmogorov-Smirnov and Cram\' er-von-Mises limit distributions 
corresponding to the case of independent variables.

\begin{figure}[p]
    \centering
	\includegraphics[scale=0.45]{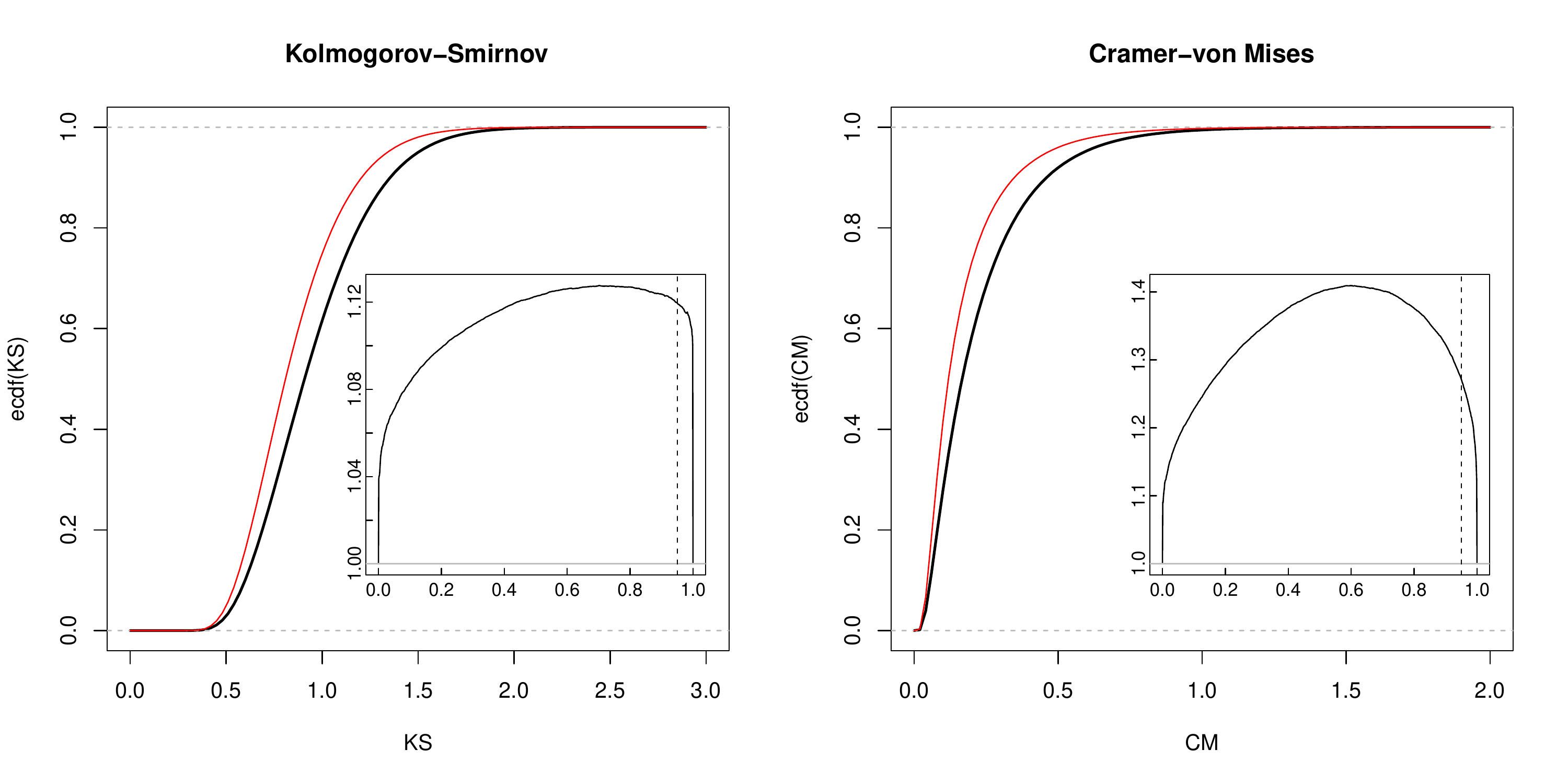}
	\caption{Markovian model. \textbf{Left:} Cumulative distribution function of the supremum of $\tilde{y}(u)$.
	         \textbf{Right:} Cumulative distribution of the norm-2 of $\tilde{y}(u)$.
	         The cases of independent draws (thin red) and dependent draws (bold black) are compared. 
	         The dependent observations are drawn according to the weak-dependence kernel (\ref{eq:weak_dep_kernel}) with parameters $g=0.88, \Sigma^2=0.05$. 
	         \textbf{Insets:} The effective reduction ratio $\sqrt{\frac{N}{N_{\mbox{\scriptsize{eff}}}(u)}}=\frac{\textrm{ecdf}^{-1}(u)}{\textrm{cdf}_{\mbox{\scriptsize{L}}}^{-1}(u)}$ 
	                          where $\textrm{L}=\textrm{KS,CM}$.
	                          The dashed vertical line is located at the 95-th centile and thus indicates the reduction ratio corresponding to the \mbox{p-value} $p=0.05$ (as the test is unilateral).}
	\label{fig:ecdf_sim}
\end{figure}

\begin{figure}[p]
    \centering
	\includegraphics[scale=0.45]{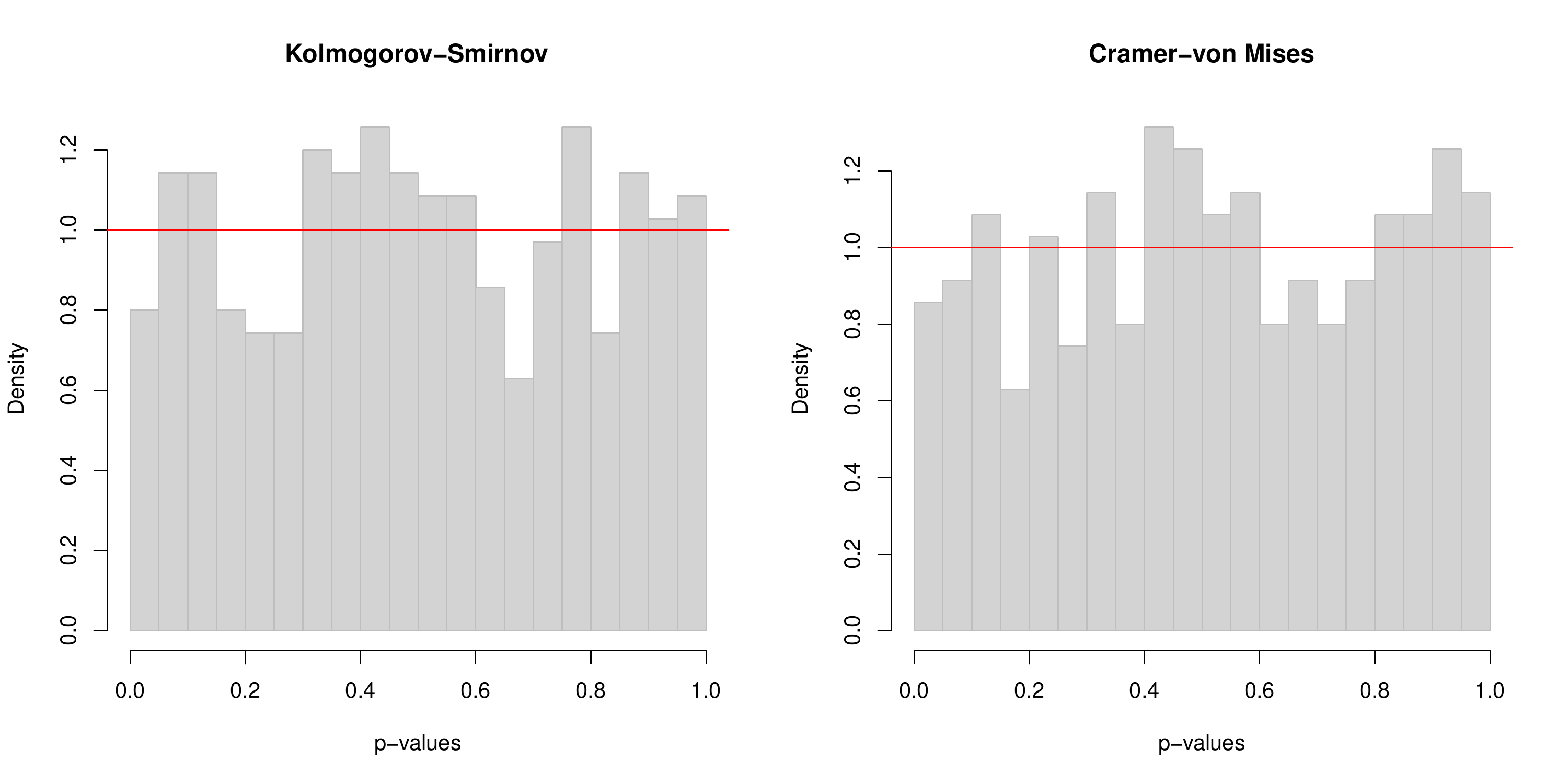}
	\caption{Histogram of the \mbox{p-values} in the GoF test on simulated data,
	according to Equation~(\ref{eq:markov_logvol}).
	Uniform distribution of the \mbox{p-values} of a test indicates that the correct law of the statistic is used.}
	\label{fig:KS_pvals_sim}
\end{figure}

In order to check the accuracy of the obtained limit distribution, 
we generate $350$ series of $N=2500$ dates according to Equation~(\ref{eq:markov_logvol}).
For each such series, we perform two GoF tests, namely KS and CM, and calculate the corresponding \mbox{p-values}. 
By construction, the \mbox{p-values} of a test should be uniformly distributed 
if the underlying distribution of the simulated data is indeed the same as the hypothesized distribution.
In our case, when using the usual KS and CM distributions for independent data,
the \mbox{p-values} are much too small and their histogram is statistically not compatible with the uniform distribution.
Instead, when using the appropriate limit distribution found by Monte-Carlo and corresponding to the correlation kernel (\ref{eq:weak_dep_kernel}),
the calculated \mbox{p-values} are uniformly distributed, as can be visually seen on Figure~\ref{fig:KS_pvals_sim}, 
and as revealed statistically by a KS test (on the KS test!), comparing the 350 \mbox{p-values} to ${\rm H}_0: p\sim\mathcal{U}[0,1]$.

\subsection{Long-range memory}
If, instead of the AR(1) (Markovian) prescription (\ref{eq:markov_logvol}), the dynamics of the $\omega_n$ is given by
a Fractional Gaussian Noise (i.e.\ the increments of a fractional Brownian motion) \cite{mandelbrot1968fractional}
with Hurst index $\frac{2-\nu}{2}>\frac{1}{2}$, the log-volatility has a long ranged autocovariance
\begin{equation}\label{eq:alpha_FBM}
	\alpha_t={\rm Cov}(\omega_n,\omega_{n+t})=\frac{\Sigma^2}{2}\left(|t+1|^{2-\nu}-2t^{2-\nu}+|t-1|^{2-\nu}\right),\quad t\geq 0
\end{equation}
that decays as a power law $\propto (2-3\nu+\nu^2)t^{-\nu}$ as $t\to\infty$, corresponding to long-memory, 
therefore violating the hypothesis under which the above theory is correct.
Still, we want to illustrate that the above methodology leads to a meaningful improvement of the test, even in the case of long-ranged dependences.
The corresponding covariance kernel of the $X$s, 
\begin{equation}\label{eq:kernel_FBM}
	H(u,v)=I(u,v)+2\Sigma^2A(u,v)\sum_{t=1}^{N}\left(1-\frac{t}{N}\right)\alpha_t,
\end{equation}
is used in a Monte-Carlo simulation like in the previous case in order to find the appropriate 
distribution of the test statistics $KS$ and $CM$ (shown in Figure~\ref{fig:FGN}, see caption for the choice of parameters).
We again apply the GoF tests to simulated series, and compute the \mbox{p-values} according to the theory above.
As stated above, our theory is designed for short-ranged dependences whereas 
the fGn\nomenclature{FGN, fGn}{Fractional Gaussian noise} process is long-ranged. 
The \mbox{p-values} are therefore not expected to be uniformly distributed. 
Nevertheless, the distribution of the \mbox{p-values} is significantly corrected 
toward the uniform distribution, see Figure~\ref{fig:ecdf_FGN}:
with the naive CM distribution (middle), the obtained \mbox{p-values} are localized around zero, 
suggesting that the test is strongly negative. 
If instead we use our prediction for short-range dependences (right), we find a clear improvement, 
as the \mbox{p-values} are more widely spread on $[0,1]$ (but still not uniformly).

\begin{figure}[p]
    \centering
    \subfigure[Cumulative distribution function of the norm-2 of $\tilde{y}(u)$, see Fig.~\ref{fig:ecdf_sim} for full caption.]{
              \includegraphics[scale=0.6,trim=425 0 0 0,clip]{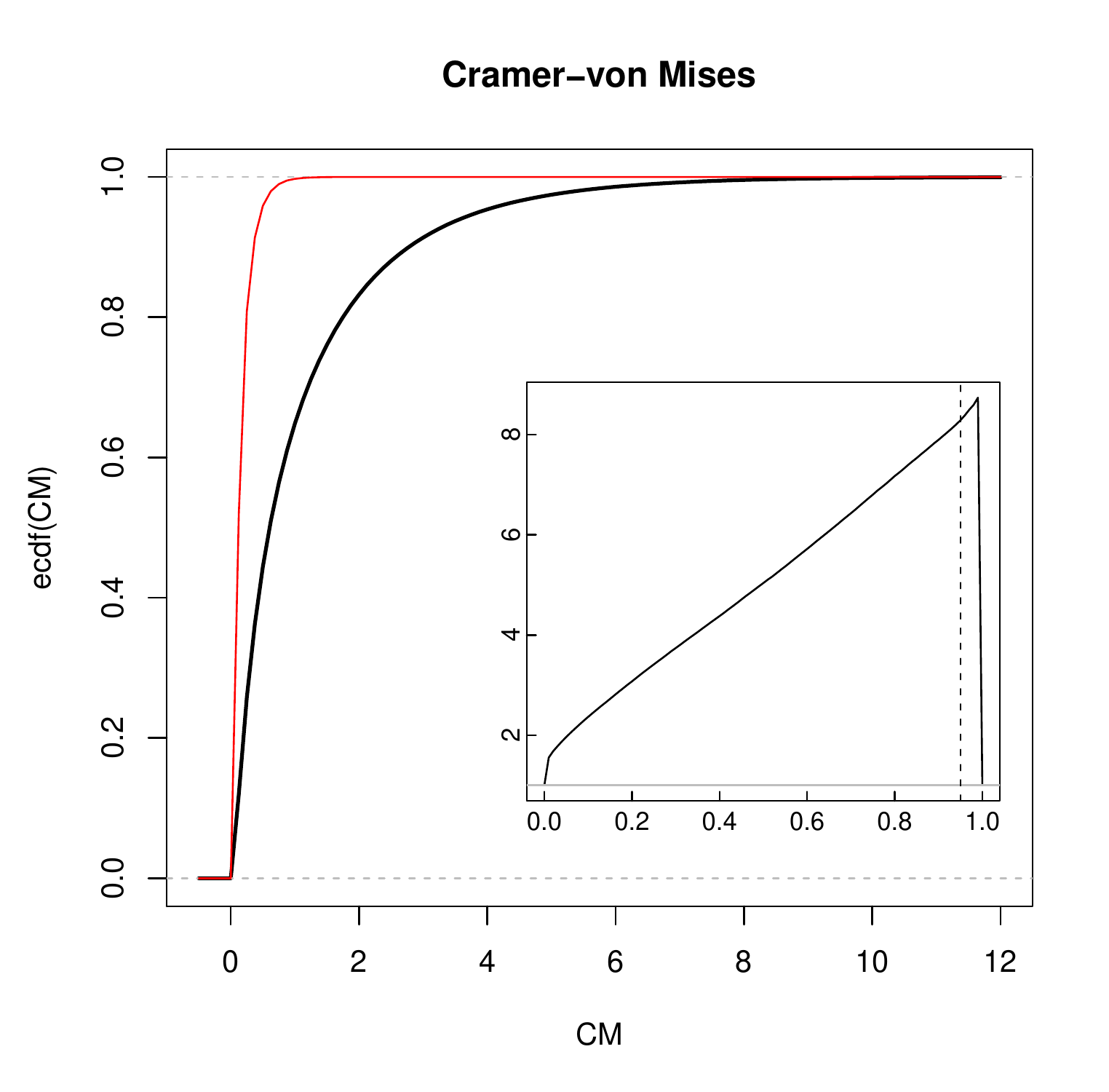}
              \label{fig:ecdf_FGN}}
	\subfigure[Histogram of the \mbox{p-values} in the CM test on simulated data, using the i.i.d.\ CM distribution (left) and the 
	         prediction obtained assuming short range correlations (right). 
	         The dependent observations are drawn according to Eq.~(\ref{eq:alpha_FBM}) with parameters $\nu=\frac{2}{5}$, $\Sigma^2=1$, $N = 1500$.]{
             \includegraphics[scale=0.7,trim=280 0 0 325,clip]{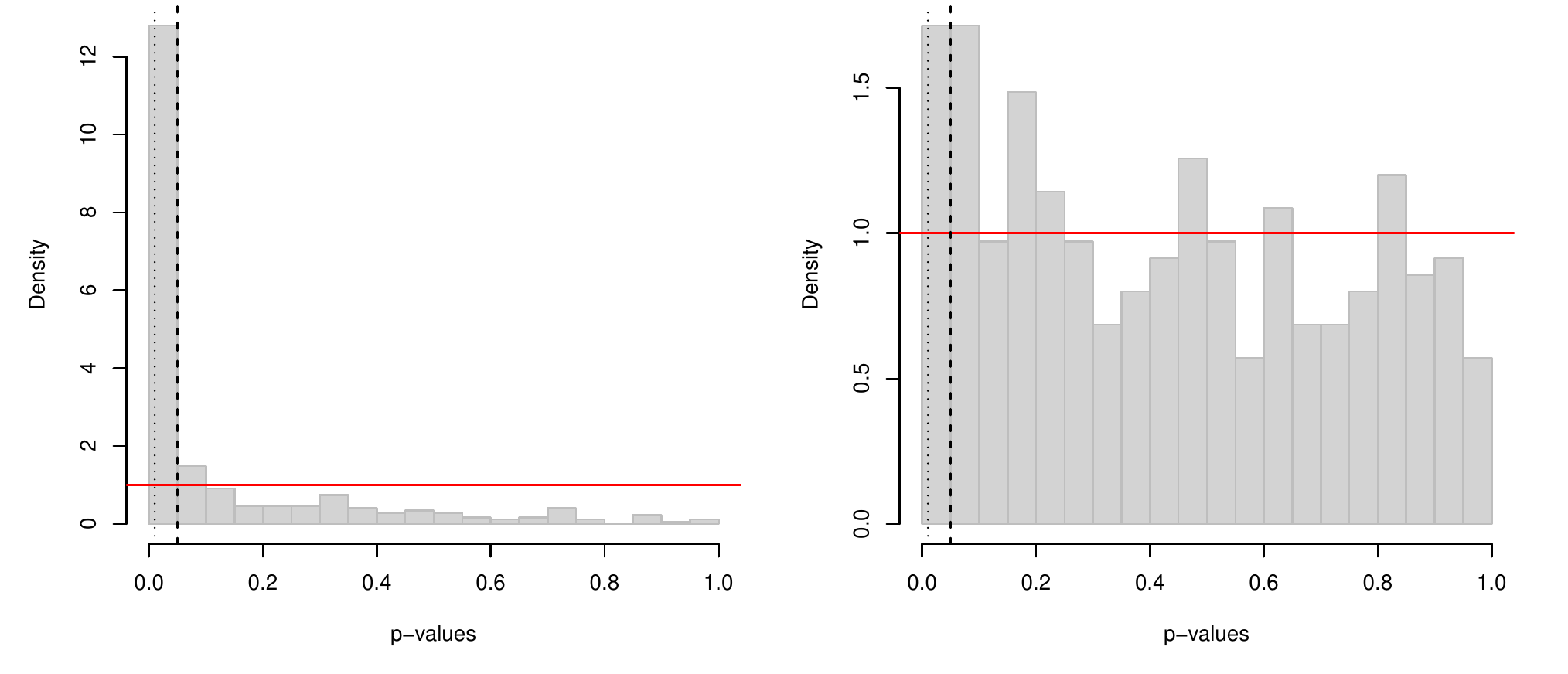}
             \label{fig:pvals_FGN}}
	\caption{Fractional Brownian Motion.}\label{fig:FGN}
\end{figure}

\subsection{Accounting for linear correlation and time reversal asymmetry}
It is interesting to generalize the above model to account for weak dependence between the residuals $\xi$ and between the residual and the volatility,
without spoiling the log-normal structure of the model. We therefore write:
\[
	X_0=\xi_0\e^{\omega_0};\quad X_t=\xi_t\e^{\alpha_t\omega_0+\beta_t\xi_0+\sqrt{1-\alpha_t^2-\beta_t^2}\omega_t}
	\quad\textrm{with}\quad
	\esp{\xi_0\xi_t}=r_t
\]
where all the variables are $\mathcal{N}(0,1)$, so that in particular 
\begin{align*}
	\rho_t={\rm Corr}(X_0,X_t)&=r_t(1+\beta_t^2)\e^{\alpha_t-1}\\
	       {\rm Corr}(X_0^2,X_t^2)&=\frac{\left(1+2r_t^2(1+10\beta_t^2+8\beta_t^4)+4\beta_t^2\right)\e^{4\alpha_t}-1}{3\e^4-1}\\
	       {\rm Corr}(X_0,X_t^2)&=2\beta_t\frac{\left(1+2r_t^2(1+2\beta_t^2)\right)\e^{2\alpha_t-\frac{1}{2}}}{\sqrt{\e^4-1}}
\end{align*}
The univariate marginal distributions of $X_0$ and $X_t$ are identical and their cdf is given by the integral
\begin{equation}\label{eq:CDF_vol_lognorm}
	F(x)=\int_{-\infty}^{\infty}\varphi(\omega)\Phi(\frac{x}{\e^{\omega}}) \d\omega.
\end{equation}
Expanding the bivariate cdf (or the copula) in the small dependence parameters $\alpha_t,\beta_t,\rho_t$ around $(0,0,0)$, we get
\begin{align}\label{eq:cop_perturb1}
	\cop[t](u,v)-uv  &\approx \alpha_t A(u,v) - \beta_t B(u,v) + \rho_t R(u,v)\\\nonumber\label{eq:cop_perturb2}
	               &\approx \alpha_t\tilde{A}(u)\tilde{A}(v)-\beta_t\tilde{R}(u)\tilde{A}(v)+\rho_t\tilde{R}(u)\tilde{R}(v)
\end{align}
where $\tilde{A}(u)$ was defined above in Equation~(\ref{eq:def_A}), and 
\begin{align}
	\tilde{R}(u)&=\int_{-\infty}^{\infty}\varphi(\omega)\varphi(\frac{F^{-1}(u)}{\e^\omega})\d\omega =\tilde{R}(1-u).
\end{align}
The contributions of $A(u,v), B(u,v)$ and $R(u,v)$ on the diagonal are illustrated in Figure~\ref{fig:correctionsToIndep}.
Notice that the term $B(u,v)$ (coming from cross-correlations between $\xi_0$ and $\omega_t$, i.e.\ the so-called leverage effect, see below) breaks the symmetry $\cop[t](u,v) \neq \cop[t](v,u)$.

\begin{figure}
    \centering
	\includegraphics[scale=0.4]{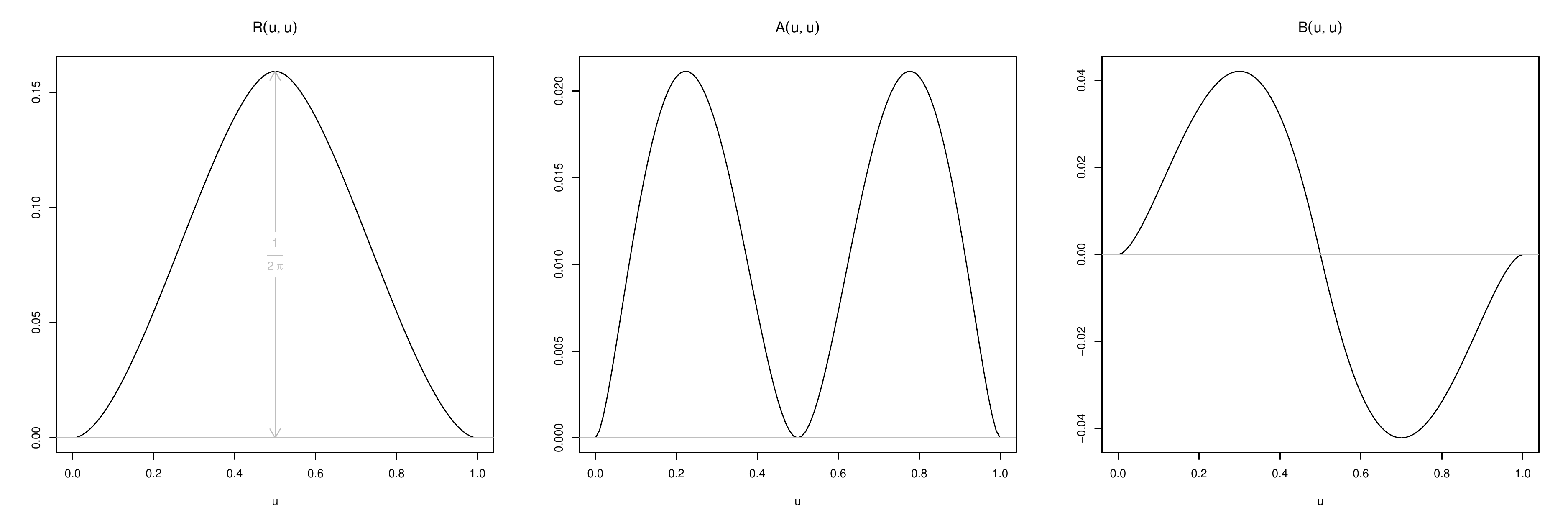}
	\caption{Copula diagonal of the log-normal volatility model: linear corrections to independence.
	         \textbf{Left:  } correction $R(u,u)$ due to correlation of the residuals (vertical axis in multiples of $\rho$)
	         \textbf{Middle:} correction $A(u,u)$ due to correlation of the log-vols (vertical axis in multiples of $\alpha$)
	         \textbf{Right: } correction $B(u,u)$ due to leverage effect (vertical axis in multiples of $-\beta$)}
	\label{fig:correctionsToIndep}
\end{figure}

\section{Application to financial time series}\label{sec:application}

\subsection{Stylized facts of daily stock returns}

One of the contexts where long-ranged persistence is present is time series of financial asset returns. 
At the same time, the empirical determination of the distribution of these returns is of utmost importance, 
in particular for risk control and derivative pricing. 
As we will see, the volatility correlations are so long-ranged that the number of effectively independent observations is strongly reduced, 
in such a way that the GoF tests are not very tight, 
even with time series containing thousands of raw observations. 

It is well-known that stock returns exhibit dependences of different kinds:
\begin{itemize}
\item at relatively high frequencies (up to a few days), returns show weak, but significant negative linear auto-correlations (see e.g.\ \cite{avramov2006liquidity});
\item the absolute returns show persistence over very long periods, an effect called multi-scale volatility clustering and
for which several interesting models have been proposed in the last ten years \cite{pasquini1999multiscaling,calvetfisher,bacrymuzy,zumbach2001heterogeneous,lynch2003market,borland2005dynamics};
\item past negative returns favor increased future volatility , an effect that goes under the name of ``leverage correlations'' in the
literature \cite{bouchaud2001more,perello2004multiple,pochart2002skewed,eisler2004multifractal,ahlgren2007frustration,reigneron2011principal}.
\end{itemize}

Our aim here is neither to investigate the origin of these effects 
and their possible explanations in terms of behavioral economics, nor to propose a new 
family of models to describe them. We rather want to propose a new way to characterize and 
measure the full structure of the temporal dependences of returns based on copulas, and
extract from this knowledge the quantities needed to apply  GoF tests to financial times series.

Throughout this section, the empirical results are based on a data set consisting of
the daily returns of the stock price of listed large cap US companies.
More precisely we keep only the 376 names present in the S\&P-500 index constantly over the five years period 2000--2004, corresponding to $N=1256$ days.
The individual series are standardized, but this does not change the determination of copulas, 
that are invariant under increasing and continuous transformations of the marginals.

\subsection{Empirical self-copulas}

For each $(u,v)$ on a lattice, we determine the lag dependent ``self-copula'' $\cop[t](u,v)$ by assuming stationarity, 
i.e.\ that the pairwise copula $\cop[nm](u,v)$ only depends on the time lag $t=m-n$. 
We also assume that all stocks are characterized by the same self-copula, 
and therefore an average over all the stock names in the universe is done in order to remove noise. 
Both these assumptions are questionable and we give at the end of this section an insight on
how non-stationarities as well as systematic effects of market cap, liquidity, tick size, etc, 
can be accounted for.

The self-copulas are estimated non-parametrically with a bias correction%
\footnote{Details on the copula estimator and the bias issue are given in appendix.}, 
then fitted to the parametric family of log-normal copulas introduced in the previous section. 
We assume (and check a posteriori) that the weak dependence expansion holds, leaving us with three functions of time, 
$\alpha_t$, $\beta_t$ and $\rho_t$, to be determined. 
We fit for each $t$ the copula diagonal $\cop[t](u,u)$ to Equation~(\ref{eq:cop_perturb1}) above, 
determine  $\alpha_t$, $\beta_t$ and $\rho_t$, and test for consistency on the anti-diagonal $\cop[t](u,1-u)$. 
Alternatively, we could determine these coefficients to best fit $\cop[t](u,v)$ in the whole $(u,v)$ plane, 
but the final results are not very different. 
The results are shown in Figure~\ref{fig:diag-adiag-tau} for lags $t=1,8,32,256$ days. 
Fits of similar quality are obtained up to $t=512$.  

\begin{figure}
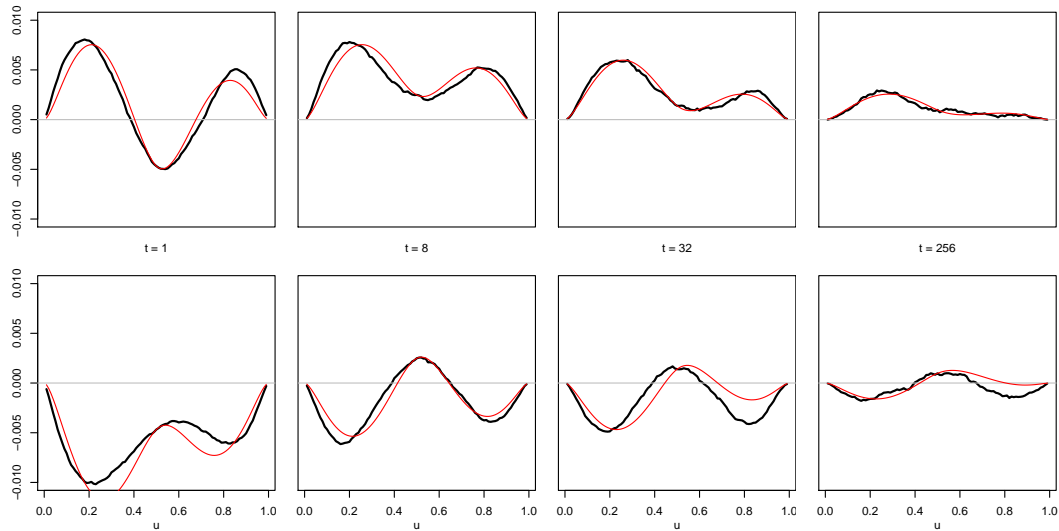

    \centering
	\includegraphics[scale=0.55,trim=   0 0 1620 0,clip]{GoFdependent/fig4}
	\includegraphics[scale=0.55,trim= 550 0 1075 0,clip]{GoFdependent/fig4}
	\includegraphics[scale=0.55,trim= 915 0  720 0,clip]{GoFdependent/fig4}
	\includegraphics[scale=0.55,trim=1450 0  175 0,clip]{GoFdependent/fig4}
	\caption{Diagonal (top) and anti-diagonal (bottom) of the self-copula for different lags;
	the product copula has been subtracted. A fit with Equation~(\ref{eq:cop_perturb1}) is shown in thin red. 
    Note that the $y$ scale is small, confirming that the weak dependence expansion is justified. 
    The dependence is still significant even for $t \sim 500$ days!}
	\label{fig:diag-adiag-tau}
\end{figure}

Before discussing the time dependence of the fitted coefficients  $\alpha_t$, $\beta_t$ and $\rho_t$, 
let us describe how the different effects show up in the plots of the diagonal and anti-diagonal copulas. 
The contribution of the linear auto-correlation can be directly observed at the central point $\cop[t](\frac{1}{2},\frac{1}{2})$ of the copula.
It is indeed known \cite{chicheportiche2012joint} that for any pseudo-elliptical model (including the present log-normal framework) one has:
\[
	\cop[t](\tfrac{1}{2},\tfrac{1}{2})=\frac{1}{4}+\frac{1}{2\pi}\arcsin \rho_t.
\]
Note that this relation holds beyond the weak dependence regime. 
If $\betaB_t=\cop[t](\frac{1}{2},\frac{1}{2})-\frac{1}{4}$ 
--- this is in fact Blomqvist's beta coefficient \cite{blomqvist1950measure} --- 
the auto-correlation is measured by $\rho_t=\sin(2\pi \betaB_t)$.

The volatility clustering effect can be visualized in terms of the diagonals of the self-copula;
indeed, the excess (unconditional) probability of large events following previous large events of the same sign is 
$\big[\cop[t](u,u)-u^2\big]$ with $u<\frac{1}{2}$ for negative returns, and $u>\frac{1}{2}$ for positive ones.
On the anti-diagonal, the excess (unconditional) probability of large positive events following large negative ones is, 
for small $u<\frac{1}{2}$, the upper-left volume
$\big[\cop[t](u,1)-u\cdot 1\big]-\big[\cop[t](u,1\!-\!u)-u(1\!-\!u)\big]=u(1\!-\!u)-\cop[t](u,1\!-\!u)$ 
and similarly the excess probability of large negative events following large positive ones is 
the same expression for large $u>\frac{1}{2}$ (lower-right volume).
As illustrated on Figure~\ref{fig:diag-adiag-tau}, these four quadrants exceed the independent case prediction, 
suggesting a genuine clustering of the amplitudes, measured by $\alpha_t$.
Finally, an asymmetry is clearly present: 
the effect of large negative events on future amplitudes is stronger than the effect of previous positive events.
This is an evidence for the leverage effect: negative returns cause a large volatility, 
which in turn makes future events (positive or negative) to be likely larger. This effect is captured by the coefficient $\beta_t$.

The evolution of the coefficients  $\alpha_t$, $\beta_t$ and $\rho_t$ for different lags reveals the following properties:
i) the linear auto-correlation $\rho_t$ is short-ranged (a few days), and negative;
ii) the leverage parameter $\beta_t$ is short-ranged and, as is well known, negative, revealing the asymmetry discussed above;
iii) the correlation of volatility is {\it long-ranged} and of relatively large positive amplitude (see Figure~\ref{fig:alpha_t}), in line with the known long range 
volatility clustering. 
More quantitatively, we find that the parameter $\alpha_t$ for lags ranging from $1$ to $768$ days
is consistent with an effective relation 
well fitted by the ``multifractal''  \cite{muzy2001multifractal,calvetfisher,lux,bacrymuzy} prediction for the 
volatility autocorrelations: $\alpha_t=-\Sigma^2\ln\frac{t}{T}$, with an amplitude $\Sigma^2=0.046$ and a horizon $T=1467$ days 
consistent, in order of magnitude, with previous determinations.

\begin{figure}
    \centering
	\includegraphics[scale=0.4]{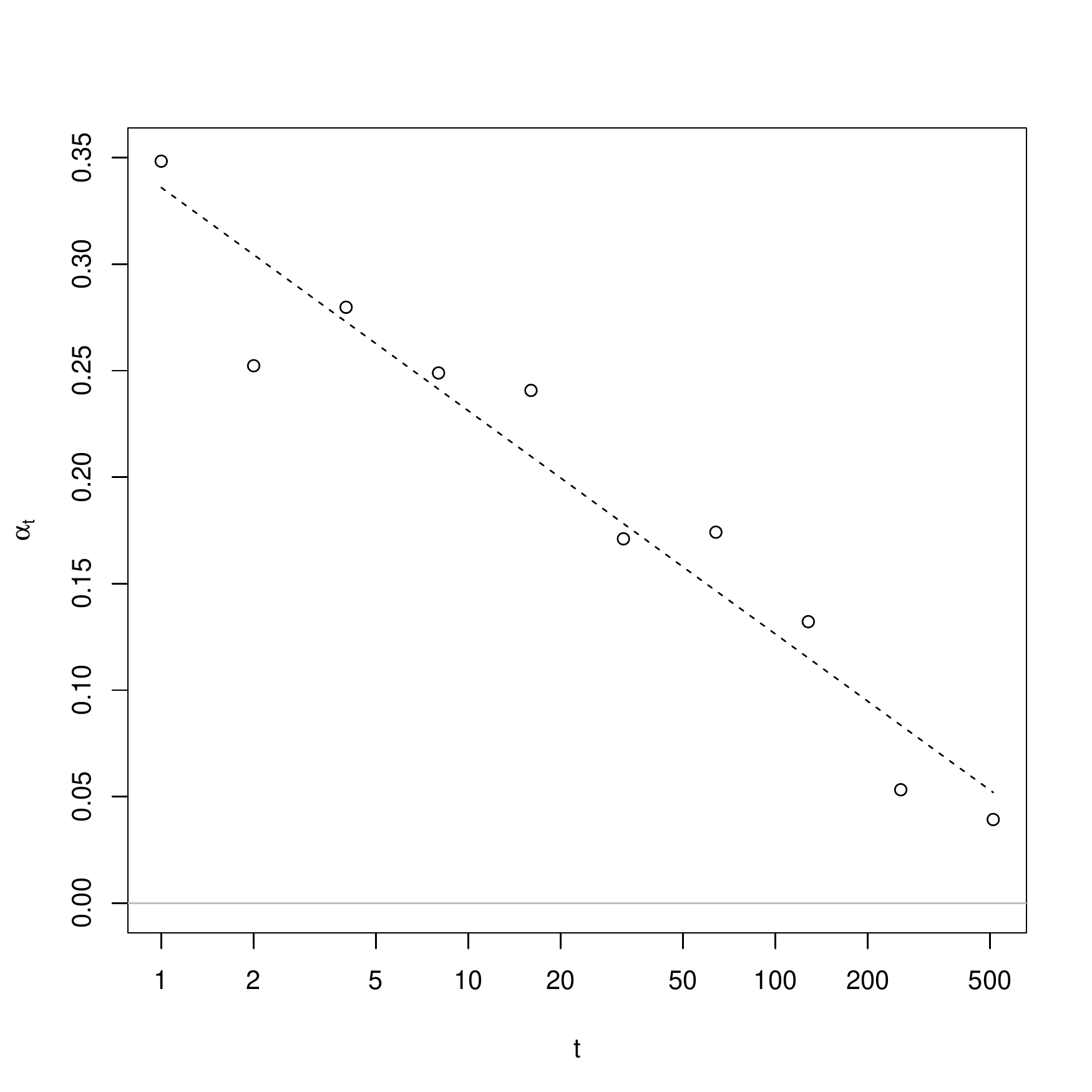}
	\caption{Auto-correlation of the volatilities, for lags  ranging from $1$ to $768$ days. 
		Each point represents the value of $\alpha_t$ extracted from a fit of the empirical copula diagonal 
		at a given lag to the relation (\ref{eq:cop_perturb1}). We also show the fit to a multifractal model, 
		$\alpha_t=-\Sigma^2\ln\frac{t}{T}$, with $\Sigma^2=0.046$ and $T=1467$ days.}
	\label{fig:alpha_t}
\end{figure}

The remarkable point, already noticed in previous empirical works on multifractals \cite{muzy2001multifractal,duchon2010forecasting}, 
is that the horizon $T$, beyond which the volatility correlations vanish, is found to be extremely long. 
In fact, the extrapolated horizon $T$ is larger than the number of points of our sample $N$! 
This long correlation time has two consequences: 
first, the parameter ${2}T\Sigma^2 f_\infty$ that appears in the kernel $H(u,v)$ is large, $\approx {135}$. 
This means that the dependence part $T\Sigma^2 f_\infty \, A(u,v)$ is dominant over the independent Brownian bridge part $I(u,v)$. 
This is illustrated in Figure~\ref{fig:eigen_vects}, where we show the first eigenvector of $H(u,v)$, 
which we compare to the non-zero eigenmode of $A(u,v)$, and to the first eigenvector of $I(u,v)$. Second, the hypothesis of a stationary process, which 
requires that $N \ll T$, is not met here, so we expect important pre-asymptotic corrections to the above theoretical results.

\begin{figure}
    \centering
	\includegraphics[scale=0.5]{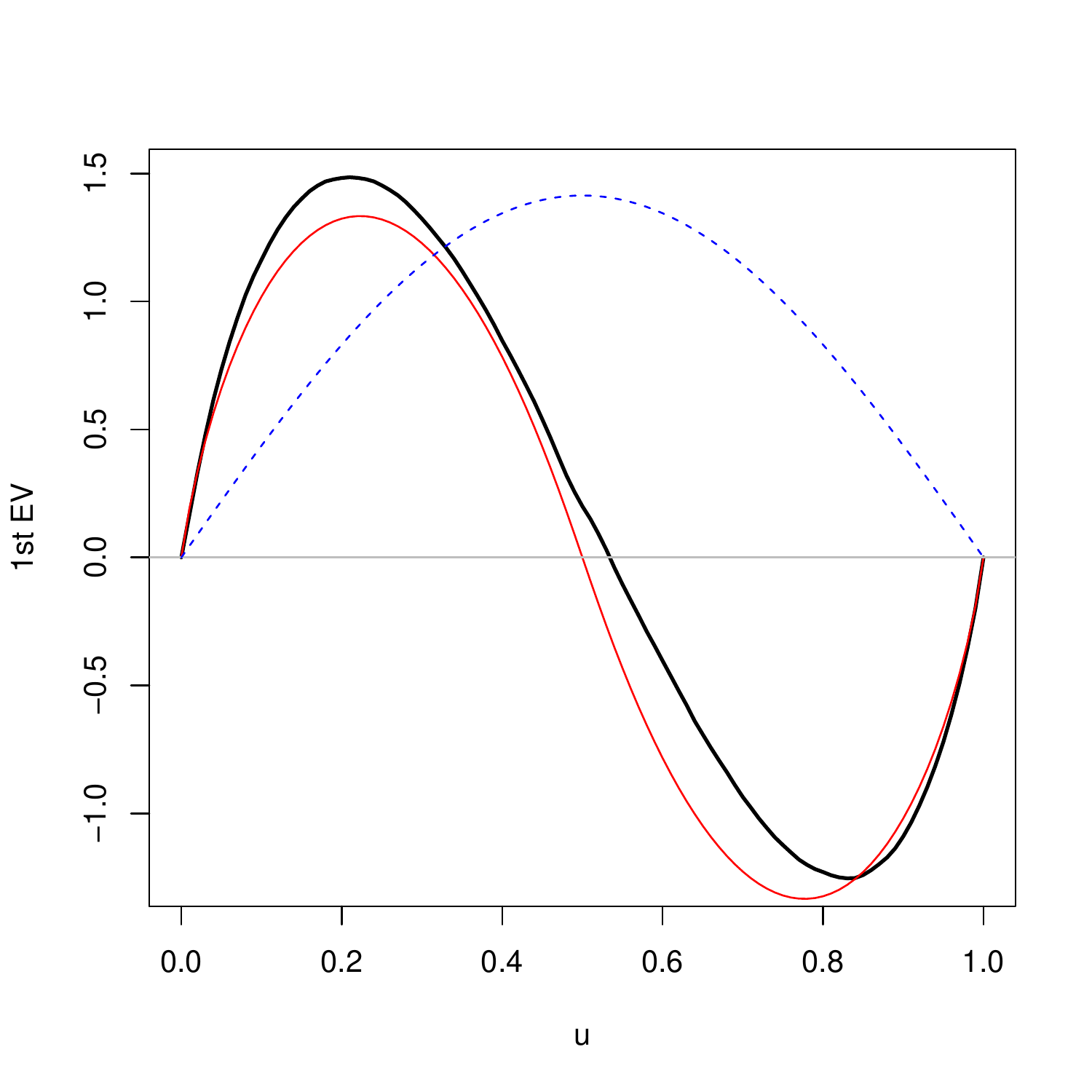}
	\caption{\textbf{Bold black:}
		 The first eigenvector of the empirical kernel $H(u,v)=I(u,v) \left[1 +\Psi_N(u,v)\right]$.
	         \textbf{Plain red:}
	         The function $\tilde{A}(u)$ (normalized), corresponding to the pure effect of volatility clustering in a log-normal model, in the
	         limit where the Brownian bridge contribution $I(u,v)$ becomes negligible. 
	         \textbf{Dashed blue:}
	         The largest eigenmode $\ket{1}=\sqrt{2}\sin(\pi u)$ of the independent kernel $I(u,v)$.
	         }
	\label{fig:eigen_vects}
\end{figure}

\subsection{Monte-Carlo estimation of the limit distributions}

Since $H(u,v)$ is copula-dependent, and considering the poor analytical progress made 
about the limit distributions of $K\!S$ and $C\!M$ in cases other than independence, 
the asymptotic laws will be computed numerically by Monte-Carlo simulations (like in the example of Section~\ref{sec:example})
with the empirically determined $H(u,v)$. 

The empirical cumulative distribution functions of the statistics for a large number of trials are shown in Figure~\ref{fig:ecdf_stats} 
together with the usual Kolmogorov-Smirnov and Cram\' er-von-Mises limit distributions 
corresponding to the case of independent variables.
One sees that the statistics adapted to account for dependences are stretched to the right,
meaning that they accept higher values of $K\!S$ or $C\!M$ (i.e.\ measures of the difference between the true and the empirical distributions).
In other words, the outcome of a test based on the usual $K\!S$ or $C\!M$ distributions is much more likely to be negative, 
as it will consider ``high'' values (like 2--3) as extremely improbable, whereas a test that accounts for the strong dependence in the time series 
would still accept the null-hypothesis for such values.

\begin{figure}
    \centering
	\includegraphics[scale=0.45]{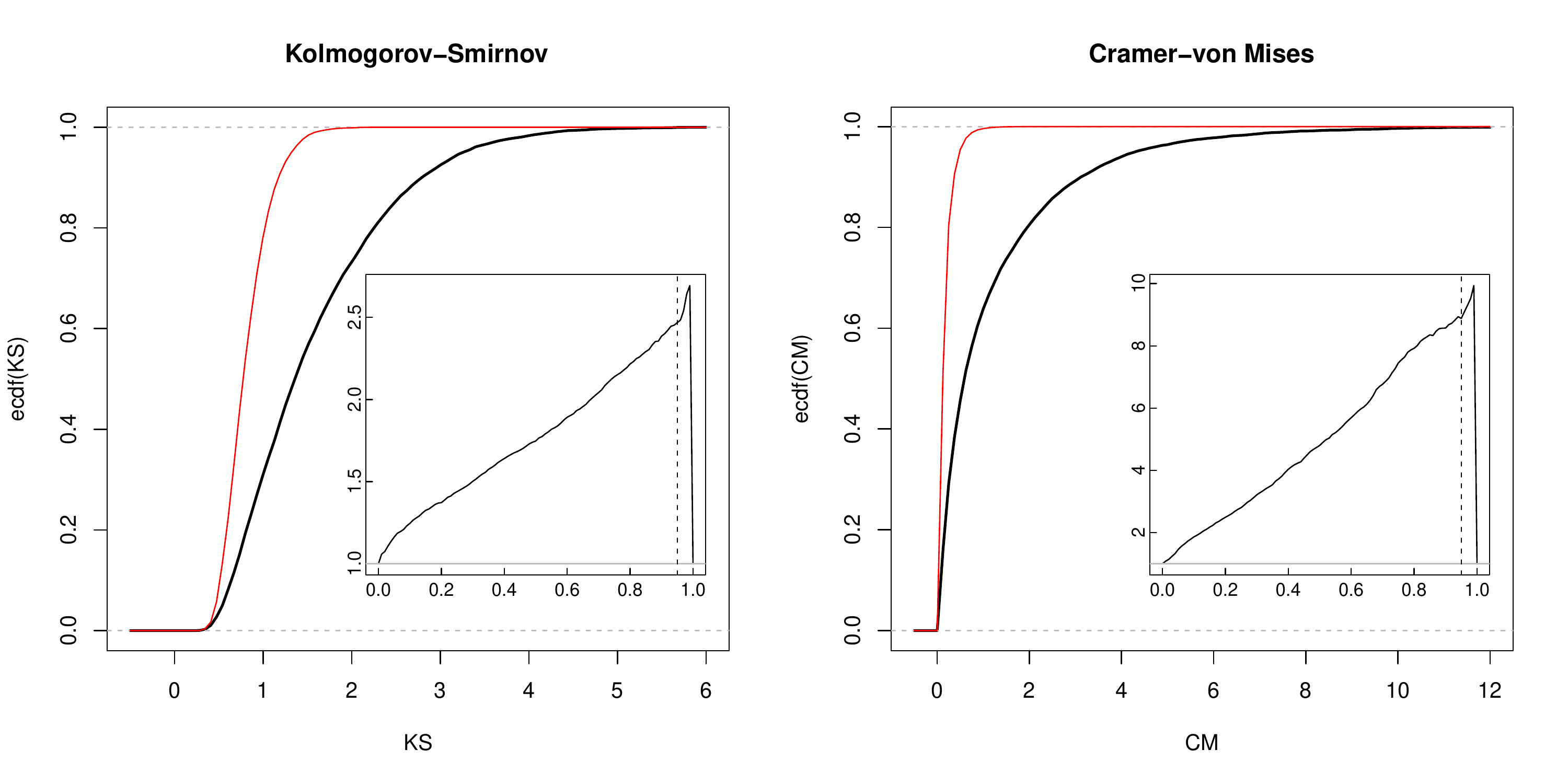}
	\caption{\textbf{Left:} Cumulative distribution function of the supremum of $\tilde{y}(u)$.
	         \textbf{Right:} Cumulative distribution of the norm-2 of $\tilde{y}(u)$.
	         The cases of independent draws (thin red) and dependent draws (bold black) are compared. 
	         The dependent observations are drawn according to the empirical average self-copula of US stock returns in 2000-2004. 
	         \textbf{Insets:} The effective reduction ratio $\sqrt{\frac{N}{N_{\mbox{\scriptsize{eff}}}(u)}}=\frac{\textrm{ecdf}^{-1}(u)}{\textrm{cdf}_{\mbox{\scriptsize{L}}}^{-1}(u)}$ 
	                          where $\textrm{L}=\textrm{KS,CM}$.}
	\label{fig:ecdf_stats}
\end{figure}

As an illustration, we apply the test of Cram\'er-von~Mises to our dataset, comparing the empirical univariate distributions of stock returns 
to a simple model of log-normal stochastic volatility, i.e.\ that the null-hypothesis cdf is similar to Eq.~\eqref{eq:CDF_vol_lognorm}:
\begin{equation}
	F(x)=\int_{-\infty}^{\infty}\varphi(\omega)\Phi(\frac{x}{\e^{s\omega-s^2}}) \d\omega.
\end{equation}
The volatility of volatility parameter $s$ can be calibrated from the time series $\{x_t\}_t$ as
\begin{equation}\label{eq:volvol}
	s^2=\ln\!\left(\frac{2}{\pi}\frac{\vev{x_t^2}_t}{\vev{x_t}_t^2}\right).
\end{equation}
We want to test the hypothesis that the log-normal model with a unique value of $s$ {\it for all stocks} is compatible with the data. 
In practice, for each stock $i$, $s_i$ is chosen as the average of (\ref{eq:volvol}) over all \emph{other} stocks
in order to avoid endogeneity issues and biases in the calculations of the \mbox{p-values} of the test. 
$s_i$ is found to be $\approx 0.5$ and indeed almost identical for all stocks. 
Then the GoF statistic $C\!M$ is computed for each stock $i$ and the corresponding \mbox{p-value} is calculated.%
\footnote{Another source of endogeneity is the estimated copula (either parametric or parameter-free) used in the asymptotic test law,
and is similar to the bias induced when applying usual univariate GoF test with an estimated null.
This issue can be dealt with by either estimating the copula on a subset and apply the test on another subset, 
or by using a parametric form for the copula, where the parameters are not estimated but scanned over given ranges.
A p-value is obtained for each combination of parameters, and optimal parameters can be determined among the accepted tests.
This of course can be very burdensome when the parameter space is large, and in addition it relies on a model for the copula.}

Figure~\ref{fig:KS_pvals} shows the distribution of the \mbox{p-values}, as obtained by using the usual asymptotic Cram\'er-von~Mises distribution 
for independent samples (left) and the modified version allowing for dependence (right).
We clearly observe that the standard Cram\'er-von~Mises test strongly rejects the hypothesis of a common log-normal model, 
as the corresponding \mbox{p-values} are concentrated around zero, which leads to an excessively high rejection rate. 
The use of the generalized Cram\'er-von~Mises test for dependent variables greatly improves the situation, with in fact now too many high values of $p$. 
Therefore, {\it the hypothesis of a common log-normal model for all stocks cannot be rejected} when the long-memory of volatility is taken into account. 
The overabundant large values of $p$ may be due to the fact that all stocks are in fact exposed to a common volatility factor 
(the ``market mode''), which makes the estimation of $s$ somewhat endogenous and generates an important bias. 
Another reason is that the hypothesis that the size of the sample $N$ is much larger than the correlation time $T$ does not hold for our sample, 
and corrections to our theoretical results are expected in that case.%
\footnote{Note that in practice,
we have estimated $\Psi_N(u,v)$ by summing the empirically determined copulas up to $t_{\max} = 512$, 
which clearly underestimates the contribution of large lags.}
It would be actually quite interesting to extend the above formalism to the long-memory case, where $T \gg N \gg 1$.

\begin{figure}
    \centering
	\includegraphics[scale=0.5,trim=350 0 0 400,clip]{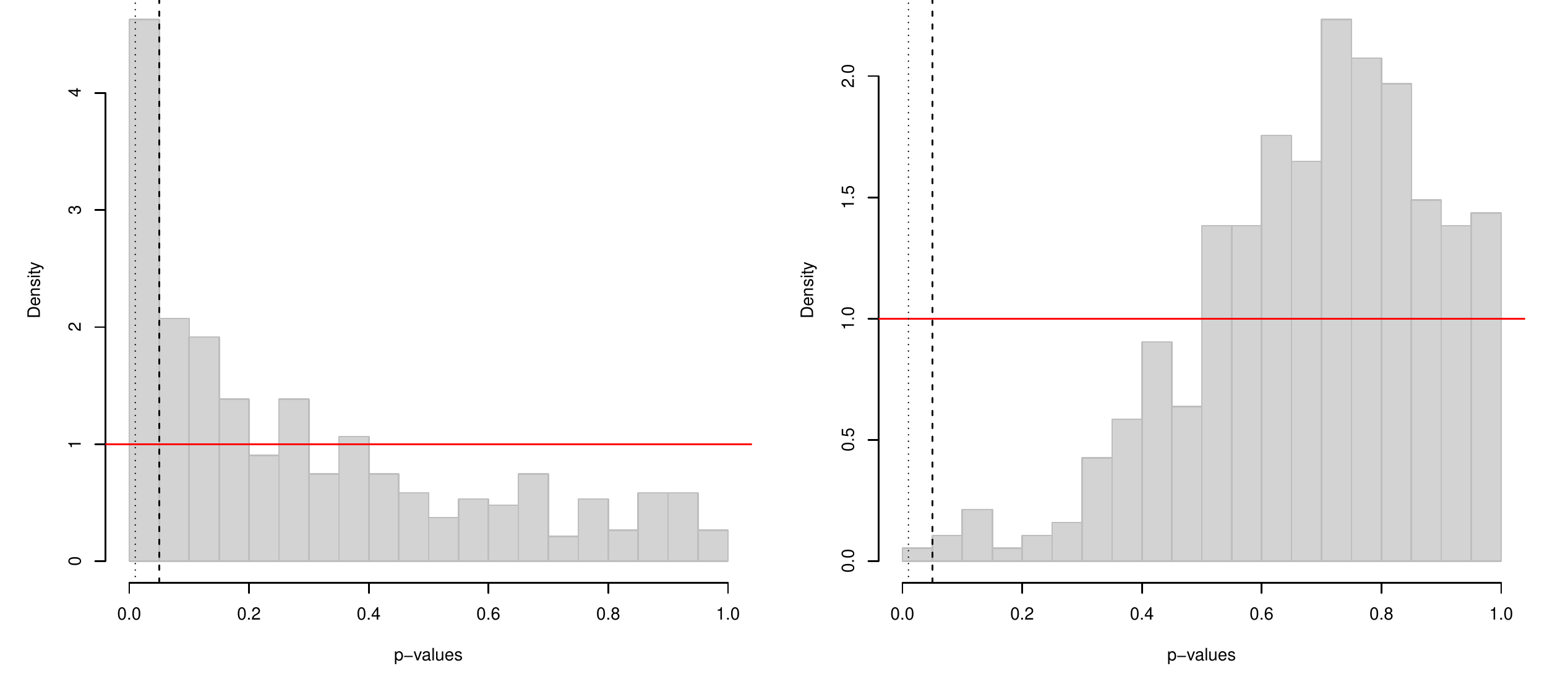}
	\caption{Histogram of the \mbox{p-values} in a Cram\'er-von~Mises-like test,
	         see the text for the test design.
	         \textbf{Left: } when using the standard Cram\'er-von~Mises law, the obtained \mbox{p-values} are
	         far from uniformly distributed and strongly localized under the threshold $p=0.05$ 
	         (dashed vertical line) occasioning numerous spurious rejections.
	         \textbf{Right: } when using the modified law taking dependences into account, 
	         the test rejects the hypothesis of an identical distribution for all stocks much less often.
	         }
	\label{fig:KS_pvals}
\end{figure}

\subsection{{Beyond stationarity and universality}}

The assumption that financial time series are stationary is far from granted.
In fact, several observations suggest that financial markets operate 
at the frontier between stationarity and non-stationarity. 
The multifractal model, for example, assumes that correlations are decaying 
marginally slowly (as a logarithm), which technically corresponds to the boundary 
between stationary models (when decay is faster) and non-stationary models. 
Furthermore, as stated above, the horizon $T$ that appears in the model is 
empirically found to be very long (on the order of several years) and therefore 
very difficult to measure precisely. 
Other models, like the ``multi-scale'' GARCH \cite{zumbach2001heterogeneous,borland2005multi,chicheportiche2012fine}, 
point in the same direction: when these models are calibrated on financial data, 
the parameters are always found to be very close to the limit of 
stability of the model (see \cite{borland2005multi} for a discussion of this point). 
Several very recent studies on high-frequency data point in the same direction, 
and even consider non-stationary extensions of the ``critically stationary'' multifractal or self-exciting models \cite{morales2012non,muzy2013random,hardiman2013critical}.

Furthermore, what is relevant in the present context is strong stationarity, 
i.e.\ the stationarity of the full self-copula. 
Hence, testing for stationarity amounts to comparing copulas, 
which amounts to a GoF for two-dimensional variables \ldots 
for which general statistical tools are missing, even in the absence of strong dependence! 
So in a sense this problem is like a snake chasing its own tail. 
A simple (and weak) argument is to perform the above study in different periods. 
When fitting the multifractal model to the self-copulas, we find the following values 
for $(\Sigma^2,\ln T)$: $(0.032,6.39)$ in 1995-1999, $(0.046,7.29)$ in 2000-2004, $(0.066,8.12)$ in 2000-2009 and $(0.078,7.86)$ in 2005-2009.
These numbers vary quite a bit, but this is exactly what is expected with the multifractal model itself 
when the size of the time series $N$ is not much larger than the horizon $T$! 
(see \cite{muzy2006extreme,bacry2006asset} for a detailed discussion of the finite $N$ properties of the model). 

As of the universality of the self-copula across the stocks, it is indeed a reasonable assumption that allows one to average over the stock ensemble.
We compared the averaged self-copula on two subsets of stocks obtained by randomly dividing the ensemble, and found the resulting copulas to be hardly distinguishable.
This can be understood if we consider that all stocks are, to a first approximation, driven by a common force --- the ``market'' --- and subject to idiosyncratic fluctuations.
We know that this picture is oversimplified, and that systematic effects of sector, market cap, etc. are expected and could in fact be treated separately 
inside the framework that we propose. Such issues are related to the cross-sectional non-linear dependence structure of stocks, and are far beyond the 
objectives of this article.

\section{Conclusion}
From the empirical estimation of the self-copula of US stock returns, 
long-ranged volatility clustering with multifractal properties is observed
as the dominant contribution to self-dependence, in line with previous studies. 
However, sub-dominant modes are present as well and a precise understanding of those involves an 
in-depth study of the spectral properties of the correlation kernel $H$.

One of the remarkable consequences of the long-memory nature of the volatility is 
that the number of effectively independent observations is significantly reduced, 
as both the Kolmogorov-Smirnov and Cram\'er-von~Mises tests accept much larger values of the deviations (see Figure~\ref{fig:ecdf_stats}). 
As a consequence, it is much more difficult to reject the adequation between any reasonable statistical model and empirical data. 
We suspect that many GoF tests used in the literature to test models of financial returns are fundamentally flawed 
because of the long-ranged volatility correlations. 
In intuitive terms, the long-memory nature of the volatility can be thought of as a sequence of volatility regime shifts, 
each with a different lifetime, and with a broad distribution of these lifetimes. 
It is clear that in order to fully sample the unconditional distribution of returns, 
all the regimes must be encountered several times. 
In the presence of volatility persistence, therefore, the GoF tests are much less stringent, 
because there is always a possibility that one of these regimes was not, or only partially, sampled. 
The multifractal model is an explicit case where this scenario occurs.

In terms of financial developments, we believe that an empirical exploration of the self-copulas 
for series of diverse asset returns and at diverse frequencies 
is of primordial importance in order to grasp the complexity of the non-linear time-dependences.
In particular, expanding the concept of the self-copula to pairs of assets is likely to reveal subtle dependence patterns.
From a practitioner's point of view, a multivariate generalization of the self-copula could lead to important progresses on 
such issues as causality, lead-lag effects and the accuracy of multivariate prediction.

\part*{Conclusion}
\clearpage\chapter{General conclusion and outlooks}\label{chap:Conclusion}

Partial conclusions were given at the end of every chapter,
summarizing the latter's content, discussing its conclusions in light of recent developments in the literature,
 and suggesting directions for possible extensions.

Here I want to give a general overview of the topics studied during the last three years
--- without reviewing in detail the original contributions in the thesis%
\footnote{See, for this, the abstracts in the Introduction, pages~\pageref{abstracts:begin}--\pageref{abstracts:end}.} ---
and situate them in the current state of the research.

\subsubsection*{Stochastic processes and applications}
The link between the theory of Goodness-of-fit tests and stochastic processes is a beautiful one, 
yet a poorly known one possibly due to the distance between the two concerned communities of statisticians and physicists.
Understanding the law of the Kolmogorov-Smirnov statistics as the survival probability of 
a randomly walking particle in an expanding absorbing cage allows to use techniques from statistical physics 
to solve problems in probability theory, which otherwise necessitate lengthy proofs.
More importantly, the analogy turns particularly useful in variants of the initial problem:
weighting the Kolmogorov-Smirnov statistics in order to increase the test resolution in specific ranges of the domain
in fact amounts to switching on a confining potential inside the cage.
In higher dimensions (i.e.\ when designing tests for multivariate distributions), the link to
random walkers is made nonintuitive by the trajectories being parameterized by more than one time index.
The lack of an obvious meaning to Markovianity or even to `causality' in such settings, 
is responsible for seemingly simple associated problems to remain unsolved: 
for example, the propagator of the 2D Brownian sheet (pinned or not) constrained between floor and ceiling is still unknown.
However, the discretization of one time direction suggests new perspectives to addressing the 2D GoF tests
via the survival of a random walker in an infinite-dimensional absorbing cage, see Appendix~\ref{chap:apx_KS2D}.

Other properties of stochastic processes like first-passage times can be encountered in financial applications. 
Typically, problems related to optimal selling times of an asset, exercising of an American option, 
barriers and stop-loss criteria, etc.\ explicitly involve the statistics of first-passages and/or recurrence intervals 
in threshold-crossing processes.
Less obvious is the link between optimal control (e.g.\ in trading systems) and first-passage problems: 
it turns out that the stationary solution of the Bellman equation can sometimes be addressed, in a continuous limit, 
as a Kolmogorov or Fokker-Planck equation with vanishing Dirichlet boundary conditions. \cite{chicheportiche2013fpt}

\subsubsection*{Multivariate statistics and time series analysis}

The study of multivariate data analysis typically begins with the structure of linear correlations.
Disentangling signal and noise in estimated covariance matrices is an old yet always actual problem.
Factor models, spectral analysis, and Random Matrix Theory are some of the tools that were used in 
this thesis to address the linear properties of transversal dependences. 
There is a growing interest in the ``sparse'' symmetric matrices $\Sigma=\frac{1}{T}X^\dagger X$, 
with the $(T\times N)$ random entries $X_{ti}$ distributed according to an asymptotically power-law distribution,
or even as Bernoulli variables with very low probability $p$ \emph{not scaling with $N$}.
As opposed to the GOE and GUE ensembles, such matrices have localized eigenvectors with rich properties.
As a consequence of this non-invariance under rotations, methods relying on free matrices and R-transforms to 
find the spectrum of $\Sigma$ in the limit of large matrices are not applicable. 
The replica trick provides partial results, but new methods from statistical condensed-matter physics like the `cavity approach'
for interacting particles on graphs has proven successful in obtaining precise results \cite{rogers2008cavity}: 
yet another bridge between cutting edge research in probability theory and statistical physics.

This thesis dissertation presented many more theoretical and applied topics in connection to 
multivariate time series modeling, including extreme value statistics and tail dependences, 
ellipticity, optimal portfolio design, non-linear dependences as measured by Blomqvist's, Kendall's or Spearman's coefficients.
In this context, the fashionable `copula' has proved a useful unifying theoretical framework, 
though issues remain on the estimation and interpretation of this somewhat complex object.

Other modern ways of addressing inter-dependences are attracting interest: 
graphs and complex networks are an alternative to simple undirected pairwise links.
The higher and higher connectedness of the world has obvious illustrations in the Web 2.0 and the global financial crisis, to name only two.
Social networks, systemic risk, inter-bank lending structure, shareholding capital and voting rights webs, 
corporate client-contractor relationships, genetic and proteomic bipartite interactions, etc.\ can all be studied and modeled on graphs.
The emergence of huge databases and the trendy field of ``Big Data'' will certainly allow to complement 
theoretical constructions by empirical evidences and help the global pictures rely on microscopical foundations,
be it in the fields of finance, economics, sociology, medicine or others.


\subsubsection*{Finance}
Beside the distributional properties of financial returns, 
many of the most important research subjects of quantitative finance deal with dependences:
the cross-sectional dependences must be well understood and modeled for optimal portfolio design and risk management, 
and the temporal dependences are exploited for signal prediction and optimal trading execution.

My first attempt at describing cross-sectional non-linear correlations was with hierarchical models, 
where each stock name is the leaf of a tree sharing branches with other leaves. 
The dependences between the leaves is hence caused by a ``common factor'' mechanism.
Although such a construction allowed for a powerful interpretation of the dependence structure and reproduced 
many required properties of dependences among stock return series, the fine structure of dependences could not
be well described. 
The alternative that was then finally chosen is a multi-factor model with a flat structure.
This construction, although less transparent from the point of view of economic interpretation, 
has several advantages: first it builds upon the traditional Principal Components Analysis, what provides a convenient benchmark;
then it does not involve such a hidden structures like a ``tree'' (although the statistical factors do not have an obvious meaning either);
and more importantly it conveniently reproduces many linear and non-linear properties of financial datasets, as studied at length in Part~\ref{part:partII}.

As of the temporal dependence, all the studies in the literature point toward the same conclusion:
financial time series have a long memory despite their being linearly uncorrelated!
In my own studies (Part~\ref{part:partIII}) this showed up in many places:
(i)   in the long-memory kernel of the auto-regressive mechanism of the volatility dynamics;
(ii)  in the waiting time and recurrence intervals statistics; and 
(iii) in the estimation of the self-copula in connection with the effective sample size diminution in GoF tests.
These many different approaches can be complemented by at least two others: the multifractal volatility models and non-stationary recent extensions thereof, 
and the Hawkes self-exciting mechanism. 
A convergence of (some of) these models is expected in the near future: 
first, ARCH and Hawkes description are conceptually similar, and in fact are equally able to generate power-law correlations with a power-law decreasing feedback kernel, see Appendix~\ref{QARCHapx:B} and Refs.~\cite{bacry2012non,filimonov2013effective};
and secondly auto-regressive deterministic constructions must be augmented by a stochastic volatility component {\it \`a la} 
multifractal in order to reconciliate the Time Reversal Asymmetry with empirical low (but finite) values.

Present and future developments in econophysics and financial data analysis include 
understanding the same properties of temporal and cross-sectional dependences but at higher frequency.
In this respect, I am involved in collaborations in view of assessing the finer structure of volatility dynamics 
at a scale lower than the day and develop intraday (Q)ARCH models \cite{Pierre_inprep}, 
and also apply GoF tests on high frequency data to address the distributional properties of order books \cite{gould2013power}.


\appendix

\cleardoublepage
\phantomsection 
\addcontentsline{toc}{part}{Appendices}
\part*{Appendices} 
\renewcommand{\cdf}[1][]{F_{\ifthenelse{\isempty{#1}}{}{#1}}}
\renewcommand{\qdf}[1][]{F_{\ifthenelse{\isempty{#1}}{}{#1}}^{-1}} 

\chapter{Non-parametric copula estimator}
\label{chap:appendix1}

The copula $\cop(u,v)$ of a random pair $(X,Y)$ is
\[
	\cop(u,v)=\pr{\cdf[X](X)\leq u, \cdf[Y](Y)\leq v}=\Esp{\1{X\leq \qdf[X](u)}\1{Y\leq \qdf[Y](v)}}.
\]
If the univariate marginals $\cdf[X],\cdf[Y]$ are known, the usual empirical counterpart to the expectation operator can be used to define the empirical copula
over a sample of $N$ i.d. realizations $(X_n,Y_n)$ of the random pair:
\begin{equation}\label{eq:copest1}
	\widehat{\cop{}}(u,v)=\frac{1}{N}\sum_{n=1}^N\1{X_n\leq \qdf[X](u)}\1{Y_n\leq \qdf[Y](v)}
\end{equation}
which is clearly unbiased for any $N$.
But if the marginals are unknown, they have to be themselves estimated, for example by their usual empirical counterpart. 
Since $\cdf[X](x)=\pr{X\leq x}=\esp{\1{X\leq x}}$, an unbiased ecdf is obtained as
$\widehat{\cdf[X]}(x)=\frac{1}{N}\sum_{i=1}^N\1{X_i\leq x}$, but now the expected value of
\begin{equation}\label{eq:copest2}
	\widehat{\cop{}}(u,v)=\frac{1}{N}\sum_{i=1}^N\1{\widehat{\cdf[X]}(X_i)\leq u}\1{\widehat{\cdf[Y]}(Y_i)\leq v}
\end{equation}
is not $\cop(u,v)$ anymore (only asymptotically is $\widehat{\cop{}}(u,v)$ unbiased), 
but rather
\begin{align*}
	\esp{\widehat{\cop{}}(u,v)}
	                        &=\int\! \d{\cdf[XY](x,y)}\:\Pr{B(x)\leq Nu\!-\!1,B(y)\leq Nv\!-\!1},
\end{align*}
where $B_X(x)=\sum_{n<N}\1{X_n\leq x}$ has a binomial distribution $\mathcal{B}(p,N\!-\!1)$ with $p=\cdf[X](x)$
and is not independent of $B_Y(y)$.
As an example, the expected value for the estimator of the independence (product) copula $\cop(u,v)=\Pi(u,v)=uv$ is 
\begin{align*}
	\esp{\widehat{\Pi}(u,v)}
	                        &=\frac{\lfloor Nu\rfloor}{N}\frac{\lfloor Nv\rfloor}{N}
\end{align*}
resulting in a relative bias $b(u,v)=\frac{\lfloor Nu\rfloor}{Nu}\frac{\lfloor Nv\rfloor}{Nv}-1$ vanishing only asymptotically.

As $\esp{\widehat{\cop{}}(u,v)}$ may not be computable in the general case, we define as 
\[
	\frac{1}{b(u,v)+1}\,\widehat{\cop{}}(u,v)=\frac{Nu}{\lfloor Nu\rfloor}\frac{Nv}{\lfloor Nv\rfloor}\,\widehat{\cop{}}(u,v)
\]
our non-parametric estimator of the copula with bias correction, even when $\cop$ is not the independence copula.
Therefore, our estimator is technically biased at finite $N$ but with a good bias correction, and asymptotically unbiased.

Notice that the copula estimator $\widehat{\cop{}}$ as defined in Eq.~\eqref{eq:copest1} 
\emph{is a copula} as long as $\qdf[X]$ and $\qdf[Y]$ are continuous, 
whereas $\widehat{\cop{}}$ as defined in Eq.~\eqref{eq:copest2} \emph{is not a copula} on $[0,1]^2$ for $N<\infty$.
Only on the discrete grid $\{\tfrac{1}{N},\tfrac{2}{N},\ldots,1\}^2$ does Eq.~\eqref{eq:copest2} define a copula,
provided the empirical inverse marginals define the ranks (or inverse ``order statistic'') \cite{deheuvels1979fonction,deheuvels1980};
in this case the bias $b(u,v)$ is zero.
Continuous interpolations are not trivial, see Sect.~5.1 in \cite{malevergne2006extreme}.
In practice the copula is numerically estimated on a grid with a resolution typically coarser than $\frac{1}{N}$.

\vfill

\begin{figure}[b!h!t]
    \center
	\includegraphics[scale=0.68,trim=0 0 0 15,clip]{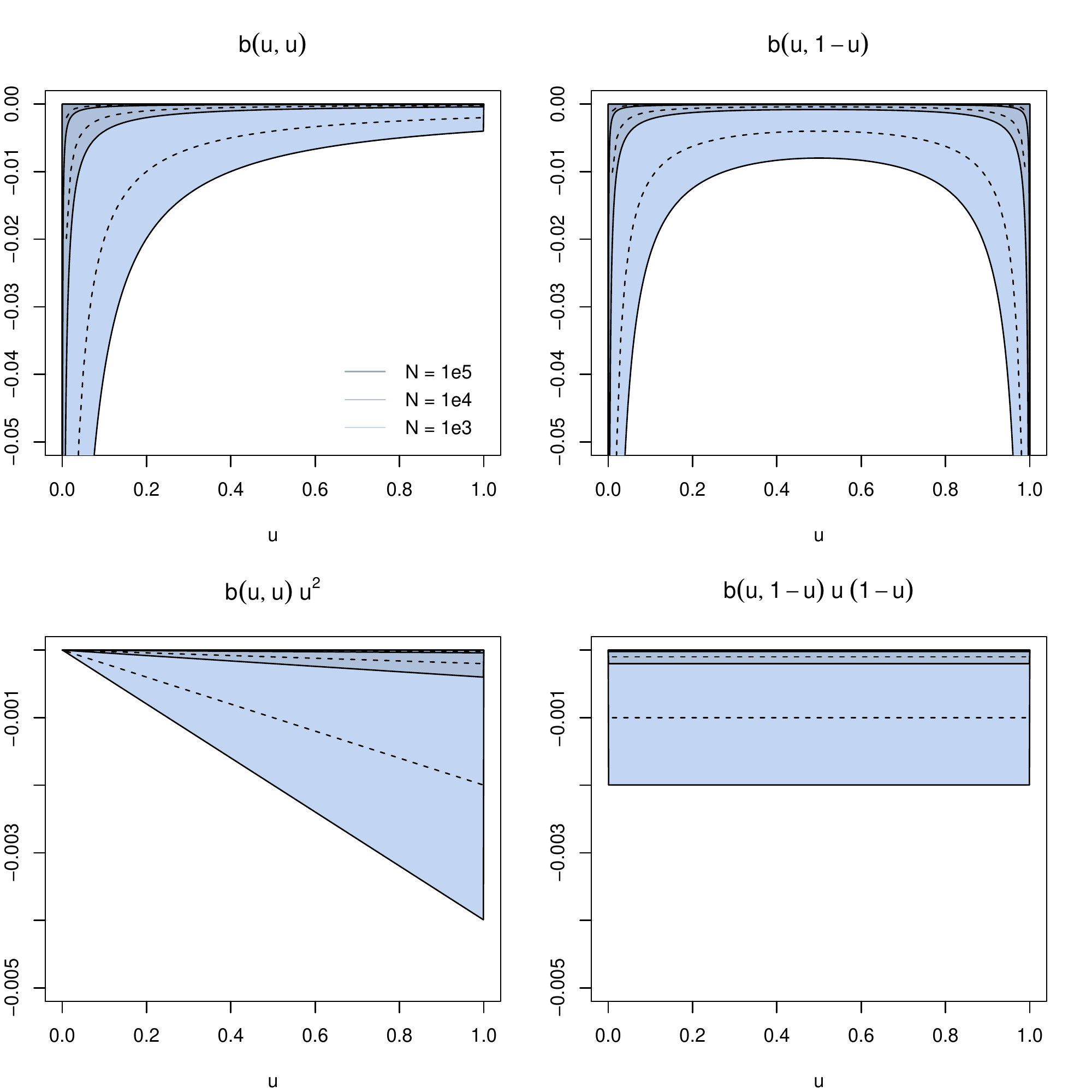}
	\caption{Bias of the non-parametric naive estimator of the independence copula with unknown marginals, 
	         for different sample sizes (see legend).
	\textbf{Top:} Relative bias.
	\textbf{Bottom:} Absolute bias.
    For a given sample size $N$, $\lfloor Nu\rfloor$ satisfies the following relations
    $
        Nu \geq {\lfloor Nu\rfloor} \approx Nu-1 \geq Nu-2,
    $
    where the equalities hold for some values of $u\in[0,1]$ !
    The dashed lines represent the approximate value, and the plain lines show the bounding envelope.
    }
	\label{fig:bias}
\end{figure}

\chapter[Gaussian sheets and bivariate distributions testing]{Gaussian sheets\\ and bivariate distributions testing}\label{chap:apx_KS2D}
Goodness-of-fit testing for univariate probability distributions has been addressed in
Chapter \ref{chap:GOF} of Part~\ref{part:partI}, both for independent and dependent samples.
%
%
We address in this appendix the generalization to multivariate samples.

We assume that the marginals were previously tested, and can work on the ranks only.

\subsubsection*{Empirical cumulative distribution and its fluctuations}
Consider a bivariate sample of $N$ joint observations $(U_n,V_n)$ with uniform marginals.
The empirical estimate of its copula,
\begin{equation}\label{eq:C_N}
    \cop[N](u,v)=\frac{1}{N}\sum_{n=1}^N\1{U_n\leq u}\1{V_n\leq v},
\end{equation}
converges to the true copula $\cop$ as the sample size $N$ tends to infinity,
but is biased for finite $N$.
In the following, we only care for asymptotic and pre-asymptotic (large $N\gg 1$) properties of $\cop[N]$ so that the bias is not relevant,
but the bias-corrected estimator discussed in Appendix~\ref{chap:appendix1} should be considered when 
turning to practical applications with finite samples, i.e.\ when the ranks $U_n$ and $V_n$ are estimated rather than known.
In particular, the bias correction is exact for the product copula $\cop=\Pi$, 
what may be of practical interest for the independence test worked out in Section~\ref{sec:testofInd} below.

The rescaled empirical copula
\begin{equation}
    Y_N(u,v)=\sqrt{N}\,\left[\cop[N](u,v)-\cop(u,v)\right]
\end{equation}
measures, for a given $(u,v)\in[0,1]^2$, the difference between the empirically determined copula 
and the theoretical one, evaluated at the $(u,v)$-th quantiles.
It does not shrink to zero as $N\to\infty$, and is therefore 
an appropriate quantity on which to build a statistic for bivariate GoF testing.

\subsubsection*{Limit properties}
One now defines the process $Y(u,v)$ as the limit of $Y_N(u,v)$ when $N \to \infty$.
According to the Central Limit Theorem (CLT), 
it is Gaussian and its covariance function is given by:
\begin{align}\label{eq:HtheoC}
	H_{\cop{}}(u,v;u',v')&=\cop(\min(u,u'),\min(v,v'))-\cop(u,v)\cop(u',v')\\\nonumber
                         &=\begin{cases}
                            \cop(u ,v )\left(1-\cop(u',v')\right)   &, u \leq u', v \leq v'\\
                            \cop(u ,v')-\cop(u,v)\cop(u',v')        &, u \leq u', v'\leq v \\
                            \cop(u',v )-\cop(u,v)\cop(u',v')        &, u'\leq u , v \leq v'\\
                            \cop(u',v')\left(1-\cop(u,v)\right)     &, u'\leq u , v'\leq v
                           \end{cases},
\end{align}
and characterizes a pinned Gaussian sheet, i.e.\  a two-times Gaussian process $Y(u,v)$ such that 
$Y(u\!=\!0,v)=Y(u,v\!=\!0)=Y(u\!=\!1,v\!=\!1)=0$. 
More precisely, $H_{\cop{}}$ is the covariance function of the so-called \emph{pinned $\cop$-Brownian sheet},
and characterizes the statistical fluctuations of the copula estimate $\cop[N]$ on an infinitely large sample.
Importantly, Eq.~(\ref{eq:HtheoC}) makes clear the fact that these asymptotic fluctuations still depend on the null copula $\cop$ 
as opposed to the univariate case, where GoF tests are universal.

A global, scalar measure of distance between $\cop[N]$ and $\cop$ is provided for example by the 
 Cram\'er-von~Mises statistic
\begin{subequations}
\begin{equation}\label{eq:CM2D}
    CM=\iint_0^1\psi(u,v)\,Y(u,v)^2\d{u}\d{v},
\end{equation}
or by the Kolmogorov-Smirnov statistic
\begin{equation}\label{eq:KS2D}
    KS=\sup_{(u,v)\in[0,1]^2}\sqrt{\psi(u,v)}\,|Y(u,v)|,
\end{equation}
\end{subequations}
where $\psi(u,v)$ is a weight function accounting for possible inhomogeneities to be flattened in the domain.

Finding the laws of the functionals~\eqref{eq:CM2D} and~\eqref{eq:KS2D} (and many others) of the pinned $\cop$-Brownian sheet 
 has been the focus of intense research, but their exact distributions are still not known, 
 even in the simplest cases when $\psi(u,v)=1$ and/or $\cop=\Pi$.
 
 To overcome the intrinsic difficulty --- related in large part to the covariance \eqref{eq:HtheoC} being not factorizable in $u,v$ ---
 two main approaches have been followed:
 \begin{itemize}
 \item Either changing the functional of $\widetilde{Y}$: 
 computing other functionals \cite{cabana1994goodness}
 or investigating separate subdomains of $[0,1]^2$ \cite{fasano1987multidimensional}.
 In this respect, there have been attemps at building single-time transformations, 
 along a path $(u,\vartheta(u))$, or introducing more complex processes like Kendall's process \cite{genest1993statistical,fermanian2012overview};
 \item Or alternatively use another specification of the test that leads not to the pinned Brownian sheets,
but rather to more convenient processes. 
This means take into account not only the global spreads of the copula $|\cop[N]-\cop|$ like above, 
but also integrate information of the marginals.
This approach is used by Deheuvels in Ref.~\cite{deheuvels2005weighted} to design a Cramer-von~Mis\'es test
for bivariate copulas, and is in the spirit of what was done by Justel, Pe\~na, Zamra \cite{justel1997multivariate} 
who build upon the Probability Integral Transformation suggested by Rosenblatt~\cite{rosenblatt1952remarks}, see also \cite{genest2006goodness,fermanian2012overview}.
\end{itemize}

We choose to remain in the traditional setup of the Brownian sheets \eqref{eq:HtheoC},
and to take advantage of the weighting possibilities offered by $\psi(u,v)$
to compute the Kolmogorov-smirnov like statistic \eqref{eq:KS2D}.
We focus on the KS case because of the link with first-passages in stochastic processes, 
see Chapter~\ref{part:partI}.\ref{chap:GOF} and Ref.~\cite{chicheportiche2013fpt}.
Indeed, the law of the KS statistic can be written as the survival probability of a constrained process
$\widetilde{Y}(u,v)=\sqrt{\psi(u,v)}\,Y(u,v)$:
\begin{align}\label{eq:KS2D_law}
    S(k)&=\Pr{\sup_{(u,v)\in[0,1]^2}|\widetilde{Y}(u,v)|\leq k}\\\nonumber
        &=\Pr{-k\leq \widetilde{Y}(u,v)\leq k, \forall (u,v)\in[0,1]^2}.
\end{align}

\newpage
\section{Test of independence}\label{sec:testofInd}
Consider first the case where the null-hypothesis copula is the product copula
$\cop(u,v)=\Pi(u,v)=uv$, which is appropriate when one wants to test for independence of the variables.
As a particular case of Eq.~\eqref{eq:HtheoC}, $Y(u,v)$ has the covariance function
of a standard pinned Brownian sheet (pBs)\nomenclature{pBs}{Pinned Brownian sheet}:
\[
    I(u,v;u',v')\equiv H_\Pi(u,v;u',v')=\min(u,u')\min(v,v')-uu'vv'.
\]

The unweighted Kolmogorov-Smirnov-like statistic $\sup|Y(u,v)|$
has an unknown distribution
\[
    S(k)=\Pr{\sup_{(u,v)\in[0,1]^2}|\sqrt{\psi(u,v)} Y(u,v)|\leq k}, \qquad\psi(u,v)=1.
\]
Still, there exist bounds and approximations
 \cite{adler1986tail,adler1987tail}, in particular:
\begin{align}
    1-S(k)&\geq (1+2k^2)\,\e^{-2k^2}\\
    1-S(k)&\xrightarrow{k\gg \infty} 4\ln(2)\,k^2\,\e^{-2k^2},
\end{align}
and similarly, for the $\cop$-pBS only approximations and numeric solutions are available
\cite{adler1990introduction,aldous1989probability,greenberg2006bivariate,Fermanian2005119}.


\subsection{Pinning the Brownian sheet ($\psi(u,v)=1$)}\label{ssec:pinning}
The pBs is not the simplest 2-times process one could imagine.
Indeed, like any pinned $\cop$-Brownian sheet it has strong constraints at the borders
\[
    Y(u,0) = Y(0,v)=Y(1,1)=0.
\]
However, it can be simply written as $Y(u,v)=W(u,v)-uv\,W(1,1)$, where
$W(u,v)$ is a standard free Brownian sheet (two-times Wiener process) with covariance
\begin{equation}\label{eq:cov_BS}
    \esp{W(u,v)W(u',v')}=\min(u,u')\min(v,v'),
\end{equation}
and $W(u,0)=W(0,v)=0$. 
The possible trajectories (or surfaces) of $Y(u,v)$ are thus a subset of all possible paths gone by $W(u,v)$,
whose free propagator is 
\[
    \mathcal{P}_{(u,v)}(w;\infty)=\frac{1}{\sqrt{2\pi uv}}\Exp{-\frac{1}{2}\frac{w^2}{uv}}.
\]
But of course the survival probability \eqref{eq:KS2D_law} involves constrained rather than free processes:
\begin{equation}\label{eq:Bs2pBs}
    S(k)=\Pr{-k\leq Y(u,v)\leq k, \forall (u,v)\in[0,1]^2}=
    \frac{\mathcal{P}_{(1,1)}(0;k)}{\mathcal{P}_{(1,1)}(0;\infty)},
\end{equation}
where
\[
    \mathcal{P}_{(u,v)}(w;k)     =\Pr{|W(u',v')|\leq k, \forall u<u',v<v'\ \text{ and }\ W(u,v)=w}.
\]
Equation~\eqref{eq:Bs2pBs} means that the law $S(k)$ of the supabs of the standard unweighted pinned Brownian sheet
is equal, up to a constant $\sqrt{2\pi}$, to the propagator of the Brownian sheet constrained in a stripe $[-k,k]$.

Unfortunately, we are not aware of a relevant Markov property that would make it possible to write (and eventually solve)
a Fokker-Planck-like equation with two times.
Instead, we present next an original way out, by discretizing one of the time directions.

\subsubsection{Law of the supabs of the Brownian sheet on $[0,1]^2$:\\ discretizing the time}\label{sec:sup.brownian_sheet}

$W(u,v)$ can be seen as a continuous accumulation of innovations along the two time directions.
Discretizing the time in one of the dimensions, f.ex.\ forcing $v$ to take values in the sequence $(v_j)_{j\leq N}$ of length $N$, 
it is possible to write $W(u,v_j)=W_j(u)$, and collect all the components in a $N$-dimensional vector $\vect{W}$
whose dynamics is driven by the joint SDE
\[
    \d{\vect{W}}(u)=\rho^{\frac{1}{2}}\,\d{\widetilde{\vect{W}}}(u),
\]
where $\widetilde{\vect{W}}$ is a vector of independent Brownian motions, 
and $\rho$ is the tensor of diffusion, with elements
$$\rho_{jj'}=\min(v_j,v_j').$$
Then, the transition density $f_u(\vect{w};k)$ of the vector process $\vect{W}$ with all components constrained in $[-k,k]$
is solution of the multivariate Fokker-Planck equation 
\[
	\frac{\partial f_u(\vect{w};k)}{\partial u}=\frac{1}{2}\nabla^{\dagger} \rho\nabla f_u(\vect{w};k),\quad 0\leq u\leq 1
\]
with the condition that $f_u(\vect{w};k)$ vanishes at the borders of the square box
\[
    \mathcal{D}_N=\big\{\vect{w} : |w_j|\leq k, \forall j\in\libracket 1,N\ribracket \big\}.
\]
Finally, the constrained two-times propagator writes
\begin{align*}
	\mathcal{P}_{(1,1)}(w;k)&=\lim_{N\to\infty}\Pr{\forall u\in[0,1], \forall j=1\ldots N, |W_j(u)|\leq k\ \text{ and }\ W_N(1)=w}\\
                            &=\lim_{N\to\infty}\int_{-k}^k\d{w_1}\cdots\int_{-k}^k\d{w_{N}}\,f_u(\vect{w};k)\,\delta(w_N-w).
\end{align*}
The problem is thus equivalent to the survival of a particle diffusing in an $N$-dimensional cage 
with absorbing walls at $\pm k$, and anisotropic diffusion tensor.
It can alternatively be stated as an isotropic diffusion toward the absorbing walls of an irregular box, 
by appropriately changing the system of coordinates.

\paragraph{Changing base: the isotropic problem.}
We first diagonalize $\rho$ in order to obtain an isotropic and rescaled diffusion, albeit in a cage with a new geometry. 
The corresponding eigenvalues are collected on the diagonal of $\Lambda$ and the eigenvectors in the columns of $\mat{U}$.
Changing base according to $\mat{U}\Lambda^{-1/2}$, any point writes $\hat{\vect{w}}=(\mat{U}\Lambda^{-1/2})^{\dagger}\vect{w}$ in the new coordinates.
When the discretized time stamps are equidistant ($v_j=j/N$), the diffusion coefficients are $\rho_{jj'}=\frac{1}{N}\min(j,j')$ and 
$\mat{Q}\equiv\frac{1}{N}\rho^{-1}=\frac{1}{N}\mat{U}\Lambda^{-1}\mat{U}^{\dagger}$ turns out to be 
the tridiagonal matrix corresponding to the discrete $N$-dimensional second-order finite difference operator:
$2$ on the main diagonal, $-1$ on both secondary diagonals, $1$ at position $(N,N)$, and $0$ elsewhere.
The eigenvectors and eigenvalues of $\rho$ are easily found to be
\begin{subequations}
\begin{align}
	\mat{U}_{ja}&=\sqrt{\frac{2}{N}}\sin\!\left(\frac{(2a-1)\pi j}{2N}\right)\\\label{eq:evals_rho}
	\lambda_a\equiv\Lambda_{aa}&=\frac{4N}{(2a-1)^2\pi^2}
\end{align}
\end{subequations}
In this new basis, the diffusion is isotropic
\begin{equation}
	\frac{\partial f_u({\hat{\vect{w}}};k)}{\partial u}=\frac{1}{2}\nabla^2 f_u(\hat{\vect{w}};k),\quad 0\leq u\leq 1
\end{equation}
but the box has a new reshaped geometry
\begin{equation}\label{eq:new_shape}
    -k\sqrt{\lambda_j}\leq \sum_a \mat{U}_{aj}\hat{w}_a\leq k\sqrt{\lambda_j},\quad\forall j=1\ldots N.
\end{equation}

The point on the boundary hypersurface that is the closest to the center is found minimizing the distance 
 $\hat{d}^2=\hat{\vect{w}}\cdot\hat{\vect{w}}=\vect{w}^{\dagger}\rho^{-1}\vect{w}$ to the center,
 under the constraint that at least one of the $|w_j|$ is equal to $k$
 (or equivalently maximizing the effective directional diffusion rate).
This special proximal point is found to have coordinates $w_j=k\,j/N=k\,\rho_{iN}$ and be at a distance $\hat{d}=k\sqrt{N}$ 
proportional to the square root of the number of faces (but recall that the box expands as $\sim \sqrt{N}$ too, see Eqs.~\eqref{eq:new_shape} and~\eqref{eq:evals_rho}).
\[
    \begin{array}{c||c|c||}
                                            & \text{Original box} & \text{Deformed geometry}\\\hline\hline
  \vect{w}_{\min} / \hat{\vect{w}}_{\min}   & k\,(1/N,2/N,\ldots,1)        & k\, (\mat{U}_{N1}/\sqrt{\lambda_1},\ldots,\mat{U}_{NN}/\sqrt{\lambda_N})\\
         d_{\min} /        \hat{d}_{\min}   & \frac{k}{N}\sqrt{\frac{N^3}{3}+\frac{N^2}{2}+\frac{N}{6}}                                                 & k\sqrt{N}\\
    \sigma_{\min} /     \hat\sigma_{\min}   & \sqrt{\frac{w_{\min}^{\dagger}w_{\min}}{w_{\min}^{\dagger}\rho^{-1}w_{\min}}}\approx\sqrt{\frac{N}{3}}    & 1\\
    \end{array}
\]

In fact, it is possible to calculate the distance from the center to every face. 
Looking for the vector $w^{(n)}$ such that
\begin{align*}
	w^{(n)}&=\argmin\{\hat{w}^{\dagger}    \hat{w} \ \text{ s.t. }\ w_n=k\}
	        =\argmin\{     w^{ \dagger}\rho^{-1}w  \ \text{ s.t. }\ w_n=k\}\\
	       &=\argmin\left\{k^2\mat{Q}_{nn}+2k\sum_{m\neq n}\mat{Q}_{nm}w_m+\sum_{m,m'\neq n}w_m'\mat{Q}_{m'm}w_m\right\}\\
	       &=\arg\left\{\sum_{m\neq n}\mat{Q}_{m'm}w_m=-k\,\mat{Q}_{m'n}, \ m'\neq n\right\},
\end{align*}
one finds
\[
        w^{(n)}_j=k\cdot\begin{cases}\frac{j}{n} &, j\leq n \leq N\\ 1&,n\leq j \leq N\end{cases},
\]
meaning that the closest point to \emph{any} face is on an edge, since  $w^{(n)}_j$ touches $N-n+1$ faces.
In particular, the closest exit in any direction is on the $N$-th face.
We already know that $\hat{d}_{\min}=\hat{d}^{(N)}=k\sqrt{N}$, and we now find more generally that
$\hat{d}^{(n)}=\sqrt{\hat{w}^{(n)\dagger}\hat{w}^{(n)}}=kN/\sqrt{n}$.

\newpage
\section{2D Kolmogorov-smirnov test of Goodness-of-fit}

The strategy:
\begin{itemize}
\item{Unweighted:}  Look for the time change that transforms any $\cop$-pBs into the standard pBs. 
                    This is achieved in Sect.~\ref{sec:time_change} just below.
                    Then, the 2D-Kolmogorov-Smirnov test is essentially related to the law of the supabs of the pBs, Sect.~\ref{sec:testofInd}.
\item{Weighted:}    Change of time and variable to recover nice processes: 
                    2D Brownian Motion (Brownian sheet) or 2D Ornstein-Uhlenbeck, or mixed Brownian Motion/Ornstein-Uhlenbeck.
\end{itemize}

\subsection{Flattening an arbitrary copula}\label{sec:time_change}
Let $\cop(u,v)$ be a two-dimensional copula function. We consider the following problem:
find a parametrization $s(u,v)$ increasing in $u$ for all $v$, and $t(u,v)$ increasing in $v$ for all $u$, 
that brings the copula $\cop(u,v)$ to the independence copula $\Pi(s,t)=st$.
A solution to this problem in \emph{any dimension} with no assumption but absolute continuity of the joint cdf
 is known as Rosenblatt's transform \cite{rosenblatt1952remarks},
but it relies on an arbitrary ordering of the variables.
We instead propose a solution for the 2D case with no particular order of the variables.
This is better stated combining integral and differential representations, and thus necessitates the appropriate differentiability conditions:
\begin{equation}
  \begin{dcases}
	\cop(u,v)            &=\Pi(s(u,v),t(u,v))  =s(u,v) t(u,v)	\\
	   c(u,v)\,\d{u}\d{v}&=\pi(s,t)\,\d{s}\d{t}=\left|\frac{\partial(s,t)}{\partial(u,v)}\right|\d{u}\d{v}
  \end{dcases}
\end{equation}
with $c$ and $\pi\equiv 1$ the densities of $\cop$ and $\Pi$ respectively: we want to stretch the probability space in a particular fashion 
so that the copula density in that space is uniform.
Eliminating $t(u,v)$ from the system above, the solution satisfies the Monge-Amp\`ere partial differential equation
\begin{equation}\label{eq:monge-ampere}
	c(u,v)=\left|\frac{1}{s}\frac{\partial s}{\partial u}\frac{\partial \cop}{\partial v}-\frac{1}{s}\frac{\partial s}{\partial v}\frac{\partial \cop}{\partial u}\right|
          =||\nabla \ln s(u,v) \wedge \nabla \cop(u,v)||.
\end{equation}
This equation for $\ln s$ can be solved by the ``method of characteristics'': 
first, define the parametric trajectory --- the ``characteristic curve'' --- $\vect{U}(\gamma)\equiv(u(\gamma),v(\gamma))$ satisfying
\begin{equation}\tag{\ref{eq:monge-ampere}'}\label{eq:monge-ampere2}
	\begin{cases}
\displaystyle		\frac{\d{u}}{\d{\gamma}}&\displaystyle=\phantom{-}\frac{\partial \cop(u,v)}{\partial v}\\
\displaystyle		\frac{\d{v}}{\d{\gamma}}&\displaystyle=         - \frac{\partial \cop(u,v)}{\partial u}
	\end{cases}
\qquad
\text{along which}
\qquad
    c(u(\gamma),v(\gamma))=\left|\frac{\d{s(u(\gamma),v(\gamma))}}{\d\gamma}\right|
\end{equation}
The algebraic interpretation is the following: 
\begin{align*}
	\d{\cop}(u(\gamma),v(\gamma))=\frac{\partial \cop}{\partial u}(u(z),v(z))\,\d{u(z)}+\frac{\partial \cop}{\partial v}(u(z),v(z))\,\d{v(z)}=&0\\
	\nabla \cop(\vect{U}(\gamma)) \cdot \d{\vect{U}(\gamma)} =&0
\end{align*}
so $\cop(u(\gamma),v(\gamma))=\mathcal{C}=\const$, what defines the function $v_{\mathcal{C}}(u')$
on $u'\in[\mathcal{C},1]$ as imposed by the Frechet bounds, see page~\pageref{eq:frechet}. 
Notice that $v_\mathcal{C}(u')$ is decreasing in $u'$ along a characteristic of fixed $\mathcal{C}$, 
increasing in $\mathcal{C}$ for a given $u'>0$, and that
$
    \lim_{\mathcal{C}\to 1}v_\mathcal{C}(u')=1
$, 
but the limit as $\mathcal{C}\to 0$ is degenerate since the copula is zero at any point on the edges $(u,0)$ and $(0,v)$.
Then, the solution of Eq.~\eqref{eq:monge-ampere2} \emph{along a given characteristic} is found integrating
\[
	\d{\ln s}=c(u(\gamma),v(\gamma))\d{\gamma}=c(u',v_{\mathcal{C}}(u'))\frac{\d{\gamma}}{\d{u'}}\,\d{u'},
\]
yielding
\begin{subequations}
\begin{equation}\label{eq:sC_solution}
	\ln s_{\mathcal{C}}(u)-\ln s_{\mathcal{C}}(u^*)=\int_{u*}^{u} {c(u',v_{\mathcal{C}}(u'))}\left[{\dfrac{\partial \cop}{\partial v}(u',v_{\mathcal{C}}(u'))}\right]^{-1}\,\d{u'}.
\end{equation}
Since the density $c(u,v)$ and the first derivative $\partial\cop/\partial v$ are always positive, 
$s_{\mathcal{C}}(u)$ is increasing in $u$ at given $\mathcal{C}$.
Because the problem is symmetric in $s\leftrightarrow t$, one could as well eliminate $s$ 
from the initial system of equations, and solve it to obtain
\begin{equation}
	\ln t_{\mathcal{C}}(v)-\ln t_{\mathcal{C}}(v^*)=\int_{v*}^{v} {c(u_{\mathcal{C}}(v'),v')}\left[{\dfrac{\partial \cop}{\partial u}(u_{\mathcal{C}}(v'),v')}\right]^{-1}\,\d{v'}.
\end{equation}
\end{subequations}
The final solution of Eq.~\eqref{eq:monge-ampere} is
\begin{subequations}
\begin{align}\label{eq:s_final_solution}
        s(u,v)&=s_{\cop(u,v)}(u)\\\label{eq:t_final_solution}
        t(u,v)&=t_{\cop(u,v)}(v)=\cop(u,v)/s(u,v),
\end{align}
\end{subequations}
and it is necessary to check that it satisfies the conditions set at the begining,
namely that $s(u,v)$ be increasing in $u$ and $t(u,v)$ be increasing in $v$.

The integration constants $s_{\mathcal{C}}(u^*)$ and $t_{\mathcal{C}}(v^*)$ are determined as follows. 
Evaluating simultaneously Eq.~\eqref{eq:s_final_solution} at a point $(u^*,v_\mathcal{C}(u^*))$  
                      and Eq.~\eqref{eq:t_final_solution} at a point $(u_\mathcal{C}(v^*),v^*)$, both on the characteristic curve,
and taking the product, one gets
\[
    s(u^*,v_\mathcal{C}(u^*))\,t(u_\mathcal{C}(v^*),v^*)=\mathcal{C}.
\]
But since one also has generically $s(u^*,v^*)\,t(u^*,v^*)=\mathcal{C}$,
the identification yields 
\begin{equation}
		u_\mathcal{C}(v^*)=u^*\qquad\text{and}\qquad
		v_\mathcal{C}(u^*)=v^*
\end{equation}
For a symmetric copula, one finds that $s_\mathcal{C}(u^*)=t_\mathcal{C}(v^*)=\sqrt{\mathcal{C}}$
for $u^*=v^*$ solution of $C(u^*,u^*)=\mathcal{C}$.
In this case, Eq.~\eqref{eq:t_final_solution} makes also  clear that $t(u,v)=s(v,u)$.

\begin{figure}[b!]
    \includegraphics[scale=.6,trim=360 0 0 0,clip]{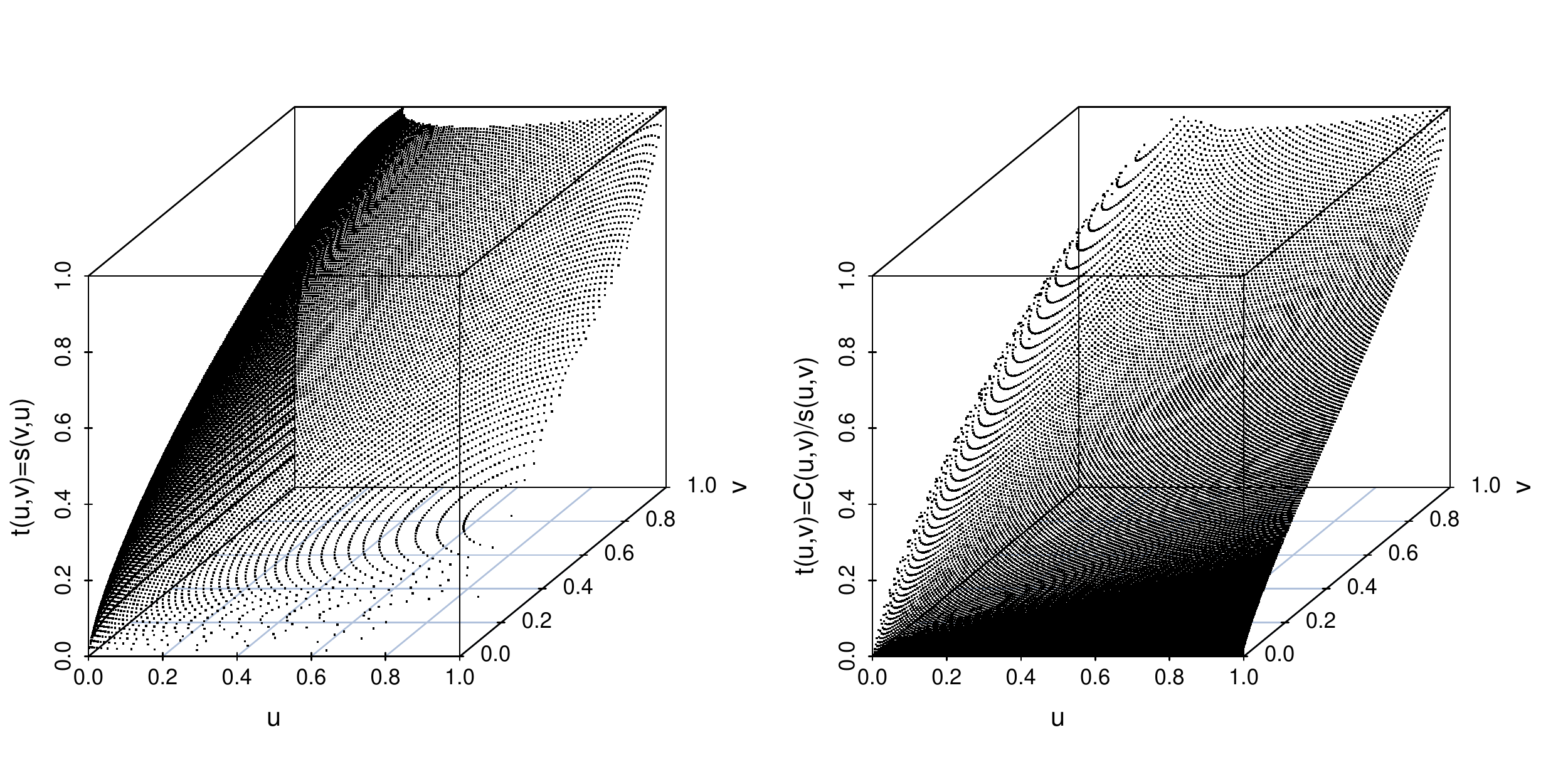}
    \caption{Flattening the Gaussian copula, example with a correlation coefficient $\rho=0.5$.
    The time change $s(u,v)$ is computed from Eqs.~(\ref{eq:sC_solution},\ref{eq:s_final_solution}),
    then $t(u,v)$ is checked for consistency:
    \textbf{Left:}  $t(u,v)=s(v,u)$;
    \textbf{Right:} $t(u,v)=C(u,v)/s(u,v)$.}
\end{figure}


\clearpage
\subsection{Variance-weighting the $\cop$-pBs ($\psi(u,v)=1/H_{\cop{}}(u,v;u,v)$)}\label{ssec:var-weighting}

Let $Y(u,v)$ be the pinned $\cop$-Brownian sheet on $(u,v)\in[\epsilon,1-\epsilon]^2$,
so that, in particular, 
$
    \var{Y(u,v)}=H_{\cop{}}(u,v;u,v)
$
as defined in Eq.~\eqref{eq:HtheoC}.
The time-changed rescaled process
\[
    Z(s,t)=Y(f(s,t),g(s,t))/\sqrt{\var{Y(f(s,t),g(s,t))}}.
\] 
is a 2D Ornstein-Uhlenbeck on $(s,t)\in[0,T_\epsilon]^2$, if
\[
\left\{
\begin{array}{rl}
    \cop(f(s,t),g(s,t)) &\displaystyle =\frac{\sqrt{\frac{\cop(\epsilon,\epsilon)}{1-\cop(\epsilon,\epsilon)}}\,\e^{s+t}}{\left[\sqrt{\frac{\cop(\epsilon,\epsilon)}{1-\cop(\epsilon,\epsilon)}}\,\e^{s+t}\right]^{-1}+\sqrt{\frac{\cop(\epsilon,\epsilon)}{1-\cop(\epsilon,\epsilon)}}\,\e^{s+t}}\\
    f(s,t)\leq f(s',t') &\Leftrightarrow g(s,t)\leq g(s',t')
\end{array}
\right. ,
\]
since one shows that the correlations are
\[
    \esp{Z(s,t)Z(s',t')}=\e^{-|s-s'|}\,\e^{-|t-t'|},
\]
and decay exponentially with the time lags, as expected for a \emph{stationary} OU process
with distribution $\mathcal{N}(0,1)$ (as long as $0<\epsilon<1)$.

Obviously, since the pBs $Y$ is not Markovian, the description in terms of the Markovian process $Z$ 
is only possible in a limit sense, where the corner $(u,v)=(1,1)$ is never reached,
thanks to a singular change of time:    
$(u,v)=(\epsilon,\epsilon)$     maps onto $(s,t)=(0,0)$, and 
$(u,v)=(1-\epsilon,1-\epsilon)$ maps onto $(s,t)=(T_\epsilon,T_\epsilon)$ where
\[
    T_\epsilon=\ln\sqrt[4]{\frac{\cop(1\!-\!\epsilon,1\!-\!\epsilon)}{1-\cop(1\!-\!\epsilon,1\!-\!\epsilon)}\frac{1-\cop(\epsilon,\epsilon)}{\cop(\epsilon,\epsilon)}}\xrightarrow{\epsilon\to0}{\infty}.
\]
The diagonal and anti-diagonal copula can be written in terms of tail dependence coefficients:
\[
    T_\epsilon= -4\,\ln\!\left[\epsilon^2\,\tLL(1\!-\!\epsilon)\,\big(2-\tUU(1\!-\!\epsilon)\big)\right]\xrightarrow{\epsilon\to0}{\infty},
\]
and the whole space $[0,1]^2$ tends to be covered only when $T_\epsilon$ diverges to $\infty$.

In the notations of page~\pageref{eq:KS2D}, $u=f(s,t)$, $v=g(s,t)$ and the weights are
\[
    \sqrt{\psi(f(s,t),g(s,t))}=\left[\sqrt{\frac{\epsilon}{1-\epsilon}}\,\e^{s+t}\right]^{-1}+\sqrt{\frac{\epsilon}{1-\epsilon}}\,\e^{s+t}.
\]
The law of the corresponding KS statistic \eqref{eq:KS2D} writes now
\begin{align*}
    S(k)&=\lim_{\epsilon\to 0}\Pr{-k\leq Z(s,t)\leq k, \forall (s,t)\in[0,T_\epsilon]^2}\\
        &=\int_{-k}^k\d{z}\int_{-k}^k\d{z_0}\,\frac{1}{\sqrt{2\pi}}\e^{-z_0^2/2}\,\mathcal{P}_{(T_\epsilon,T_\epsilon)}(z-z_0;k)
\end{align*}
and is the survival probability of the 2D mean-reverting process.
Note that we have averaged over the initial position $z_0$ since $Z(0)=Y(\epsilon,\epsilon)/[\epsilon\,(1-\epsilon)]\sim\mathcal{N}(0,1)$.
Unfortunately, although the free propagator of the OU process $Z(u,v)$ is known to be
\[
    \mathcal{P}_{(s,t)}(z;\infty)=\frac{1}{\sqrt{2\pi st\,(1-st)}}\Exp{-\frac{1}{2}\frac{z^2}{st\,(1-st)}},
\]
the propagator of the constrained process is unknown.
Again, there is no known Partial Derivatives Equation (PDE) of the Fokker-Planck type for two times,
that would describe the solution with arbitrary boundary conditions,
and whose free solution would be $\mathcal{P}_{(s,t)}(z;\infty)$.


\subsubsection{Discretizing one time direction}
Define $Z_n(t)=Z(s_n,t)$ for a $\mathds{R}$-valued increasing sequence $(s_n)_{n\in\libracket 0,N\ribracket}$.
The vector $\vect{Z}(t)$ represents a collection of ($N+1$) OU processes with 
covariance matrix
\[
    \rho_{nm}=\esp{Z_n(t)Z_m(t)}=\e^{-|s_n-s_m|}.
\]

The vectorial SDE writes $\d{\vect{Z}(t)}=-\vect{Z}(t)\,\d{t}+\rho^{\frac{1}{2}}\,\d{\vect{W}(t)}$
with the initial condition $\vect{Z}(0)\sim\mathcal{N}(0,\rho)$, and the FPE is   
\[
    \partial_tf_t(\vect{z};k)=f_t(\vect{z};k)+\vect{z}\cdot\nabla f_t(\vect{z};k)+\frac{1}{2}{\nabla^\dagger\rho\nabla f_t(\vect{z};k)},
\]
where $\vect{z}$ lives in the interior of the $(N+1)$-dimensional square box
\[
    \mathcal{D}_N=\big\{\vect{z} :  |z_n|\leq k, \forall n\in\libracket 0,N\ribracket \big\},
\]
and the density $f_t(\vect{z};k)$ vanishes at the border $\partial\mathcal{D}_N$.

If the sequence $(s_n)$ covers the entire interval $[0,T_\epsilon]$ as $N\to\infty$, then
\begin{align*}
    S(k)&=\lim_{\epsilon\to 0}\lim_{N\to\infty}\Pr{-k\leq Z(s_n,t)\leq k, \forall n\in\libracket 0,N\ribracket, \forall t \in[0,T_\epsilon]}\\
        &=\lim_{\epsilon\to 0}\lim_{N\to\infty}\int_{\mathcal{D}_N}\d{\vect{z}}\int_{-k}^k\d{\vect{z}_0}\,\frac{1}{\sqrt{(2\pi)^N \det\rho}}\e^{-\vect{z}_0\rho^{-1}\vect{z}_0/2}\,f_{T_\epsilon}(\vect{z}-\vect{z}_0;k)
\end{align*}
The order of the limits matters, because one is likely to be interested in the preasymptotic regime where $\epsilon>0$ !

\subsection{Unpinning the $\cop$-pBs}\label{sec:sup.brownian_bridge}
We saw above that it is possible to perform a change of time and variable in order to
write the unweighted pBs as a Brownian sheet that is eventually pinned at (1,1) (Section~\ref{ssec:pinning}),
and that the variance-weighted pBs could as well be rewritten as an unpinned process, 
namely the 2D Ornstein-Uhlenbeck (Section~\ref{ssec:var-weighting}).
Moreover, by discretizing one time direction, the problems reduced to those of
a collection of $N$ processes (Brownian or OU, respectively).
We now intend to write the $\cop$-pBs as a collection of Brownian Motions by a change of time.

We consider a pinned $\cop$-Brownian sheet,
and look for a mapping between the coordinates $(u,v)\in[\epsilon_s,1\!-\!\epsilon_s]\times[\epsilon_t,1\!-\!\epsilon_t]$ 
and $(s,t)\in [0,T_{\epsilon_s}]\times[0,T_{\epsilon_t}]$.
We write this mapping as 
\[
    u=f(s,t)\quad v=g(s,t),
\]
and define simultaneously $X(s,t)=\widetilde{Y}(f(s,t),g(s,t))$.
For every $s$, the covariance of $X(s,t)$ is that of a Brownian motion
\begin{equation}\label{eq:covX}
    \esp{X(s,t)X(s,t')}=\gamma(s)\,\min(t,t'),
\end{equation}
if there exist characteristic times $\tau(s)>0$ and increasing mappings $f,g$, such that 
$$
    \psi(f(s,t),g(s,t))=\gamma(s)\left(\tau(s)+2t+t^2/\tau(s)\right)=\frac{\gamma(s)}{\tau(s)}\,\left(\tau(s)+t\right)^2
$$
and
\begin{equation}\label{eq:contraintes}
    \cop(f(s,t),g(s,t))=\frac{t}{\tau(s)+t},\quad\forall s.
\end{equation}
Along two trajectories  $s\neq s'$ indexed by the same time $t'=t$, we find 
\begin{align*}
    \esp{X(s,t)X(s',t')}&=\sqrt{\gamma(s)\gamma(s')}\,\rho(s,s')\,\min(t,t')
\end{align*}
where
$
    \rho(s,s')=\sqrt{{\tau_{\max}}/{\tau_{\min}}}.
$

\paragraph{Example:}
By choosing $\gamma(s)=1$ and $\tau(s)=\tau_0\,\e^{-2s}$, the equal-time correlation has the exponential form $\rho(s,s')=\e^{-|s-s'|}$, 
so $X(s,t)$ is a Gauss-Markov process in $s$ at every $t$ !
We get back a case similar to that treated in Sect.~\ref{ssec:pinning}: 
a Brownian motion in $t$ and an Ornstein-Uhlenbeck in $s$, 
although there we considered a regular unpinned Brownian sheet on $[0,1]^2$, 
and we discretized one time direction uniformly.
According to Eq.~\eqref{eq:contraintes}, we can also write inversely
\begin{align}
        \e^{2s}\,t/\tau_0&=\frac{\cop(u,v)}{1-\cop(u,v)}\\
               T_{\epsilon_s}+\ln\sqrt{\frac{T_{\epsilon_t}}{\tau_0}}&=\ln\sqrt{\frac{\cop(1\!-\!\epsilon_s,1\!-\!\epsilon_t)}{1-\cop(1\!-\!\epsilon_s,1\!-\!\epsilon_t)}}
\end{align}
Since the times $s$ and $t$ are of different ``nature'' (one is exponential, the other is linear),
it is convenient to choose the horizon such that
 \[
     T_{\epsilon_t}=\tau_0\,\e^{2T_{\epsilon_s}}=\tau_0\,\sqrt{\frac{\cop(1\!-\!\epsilon,1\!-\!\epsilon)}{1-\cop(1\!-\!\epsilon,1\!-\!\epsilon)}}\equiv T_\epsilon,
 \]
where we simply impose $\epsilon_s=\epsilon_t\equiv\epsilon$.

\paragraph{Example (a path in the plane):}
By choosing $\tau(s)=\tau_0$, the equal-time correlation is simply $\rho(s,s')=1$, 
meaning that all the $X(s,\cdot)$ are equal: this is a ``Gaussian front''.
We can take arbitrarily $s=t$, which is equivalent to considering the joint mappings
 $(u,v)\equiv(u,\vartheta(u))$, s.t.\ $\vartheta(1\!-\!\epsilon_s)=1\!-\!\epsilon_t\equiv1\!-\!\epsilon$, i.e.\ 
$$u\equiv f(t,t) \quad\text{and}\quad v\equiv g(t,t)\equiv (\vartheta \circ f)(t,t).$$
From Eq.~\eqref{eq:contraintes}, we can also write inversely
\begin{align}
     \psi(u,\vartheta(u))&=\tau_0/[1-\cop(u,\vartheta(u))]^2\\
                 t/\tau_0&=\frac{\cop(u,\vartheta(u))}{1-\cop(u,\vartheta(u))}\\
               T_\epsilon&=\tau_0\frac{\cop(1\!-\!\epsilon,\vartheta(1\!-\!\epsilon))}{1-\cop(1\!-\!\epsilon,\vartheta(1\!-\!\epsilon))}
\end{align}
This is very similar to the unidimensional case, since we investigate only a path in the space $[0,1]^2$.

\subsubsection{Special case $\vartheta(u)=u^\alpha$ for independence copula}
We now look more carfully at the above example where $v=\vartheta(u)$,
in the particular case where the copula is $\cop(u,v)=\Pi(u,v)=uv$.
The weights are
\begin{subequations}
\label{eq:weights_specialcase}
\begin{align}
     \psi(u,u^\alpha)&=\tau_0/[1-u^{\alpha+1}]^2\\
     \psi(f(t,t),f(t,t)^\alpha)&=\tau_0\,[1+t/\tau_0]^2
\end{align}
\end{subequations}
and the time changes $f,g$ are
\[
    f_\alpha(t)=\left(\frac{t}{t+\tau_0}\right)^{\frac{1}{1+\alpha}}\quad\text{et}\quad 
    g_\alpha(t)=\left(\frac{t}{t+\tau_0}\right)^{\frac{\alpha}{1+\alpha}}
\]
whatever the exponent $\alpha\neq -1$, where $t\leq T_\epsilon(\alpha)\equiv(1-\epsilon)^{\alpha+1}/[1-(1-\epsilon)^{\alpha+1}]$. 
We exhibit a sequence $\{\alpha(n;N)\}_{-N< n< N}$ such that the whole space is covered in the limit $N\to\infty$,
and such that the functions $f_n,g_n$ have the required properties:
\begin{equation*}
\alpha(n;N) = 
  \begin{dcases}
    {\frac{N-n}{N}}  & \quad \text{si $  0\leq n< N$}\\
    {\frac{N}{N-|n|}}& \quad \text{si $ -N <n\leq 0$}\\
  \end{dcases} 
\end{equation*}

%

One notices that there is no condition on $\tau_0$, and it is possible to take $\tau_0=1$ for all $n$ !
The time change specific to each  $n$ is performed through $\alpha_n$, and 
the characteristic time scale is not important (as long as the change of time remains singular).
Finally we get $\psi(n,t)=(1+t)^2$, and
\[
    f(n,t;N)=\left(\frac{t}{t+1}\right)^{\frac{1}{1+\alpha(n;N)}}\quad\text{et}\quad 
    g(n,t;N)=\left(\frac{t}{t+1}\right)^{\frac{\alpha(n;N)}{1+\alpha(n;N)}}.
\]



It is important to notice that the weights \ref{eq:weights_specialcase} do not depend at all
on the function $\vartheta(\cdot)$ chosen to parameterize the second time.
They define a curve, and not a surface in $[0,1]^2$.
Nevertheless, by following several curves $\vartheta(u)=u^\alpha(n;N)$ (with $-N<n<N$), 
it is possible to cover $[0,1]^2$ in the limit $N\to\infty$, but with different weights $\psi$ for every $n$.

\section{Perspectives}

The diagram on Fig.~\ref{fig:diagramKS2D} summarizes the possible strategies outlined in this chapter to address
the problem of finding the sup of a pinned $\cop$-Brownian bridge.
Appropriate time changes allow to transform the problem and handle stochastic processes that are more common
and whose properties are better known, like Brownian motions (BM), 1D or 2D Ornstein-Uhlenbeck (OU) processes, Brownian sheets (Bs).
We have provided the explicit transformations needed (change of time and change of variable, weighting scheme)
for every strategy.

The most important original contribution of this chapter is however to suggest a discretization of
one time direction, that ultimately leads to the problem of finding the density of presence of a stochastic particle 
in an infinite-dimensional cage with absorbing walls.

This opens a new way of addressing stochastic processes indexed by two times, 
and we hope that future efforts can lead to a better characterization of the law of the supremum.

\begin{figure}[bp]
\center
    \includegraphics[scale=0.75,trim=0 20 0 10,clip]{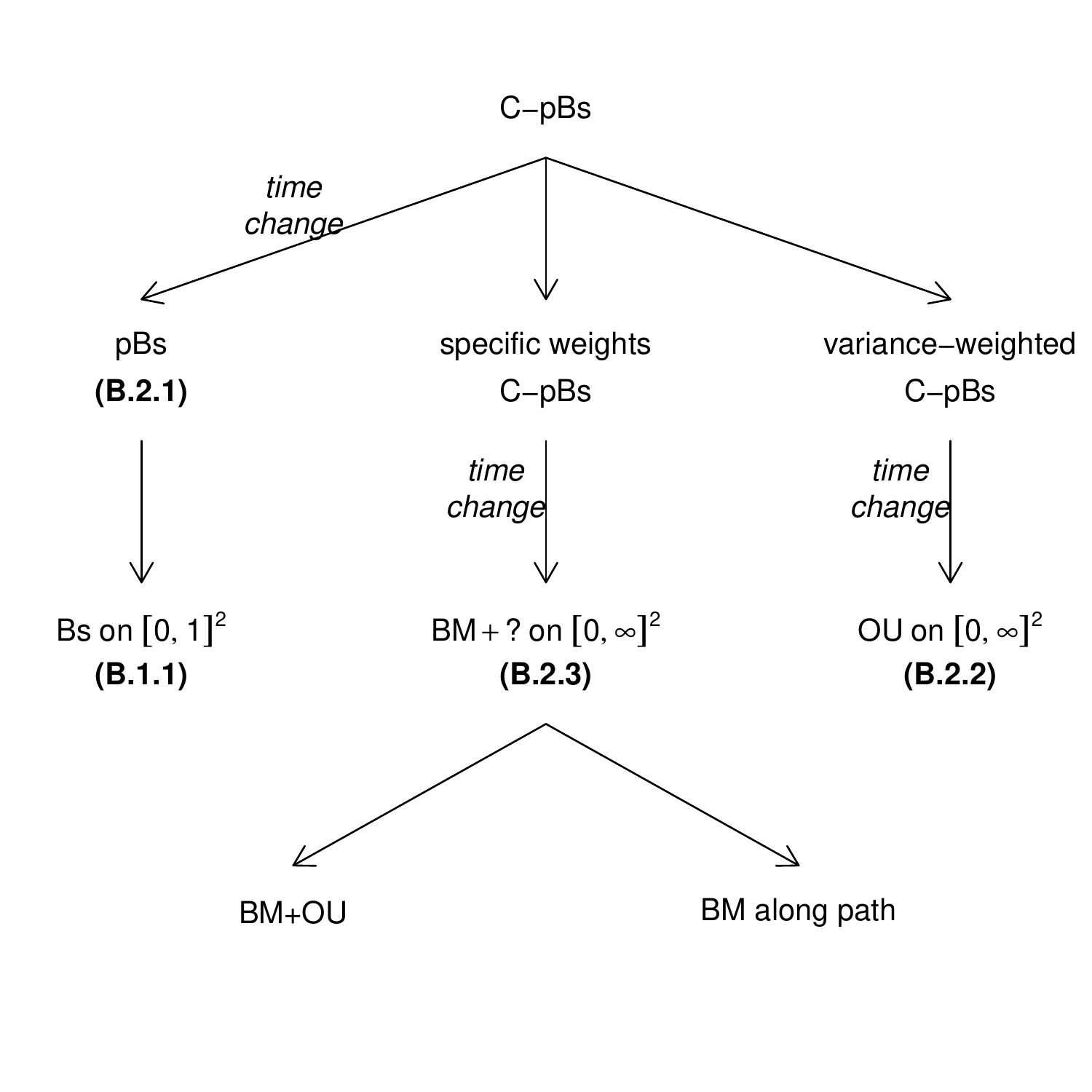}
    \caption{The different strategies to address the problem of finding the sup of the 
    pinned $\cop$-Brownian bridge.}
    \label{fig:diagramKS2D}
\end{figure}
\cleardoublepage

\chapter{One-factor linear model}\label{chap:APXonefactor}
\section{Diagonally perturbed rank-1 operator}\label{sec:APXrank1}
          \newcommand{\C}{\mathfrak{C}}

Let $\ket{\beta}$ be a vector whose $N$ components are required to satisfy the following $N(N+1)/2$ equations
\begin{align}
	\C_{ij}&=\epsilon\delta_{ij}+(1-\epsilon\delta_{ij})\,\beta_i\beta_j,
\end{align}
or in matrix form (and using Dirac's bra-ket notation):
\begin{align}\label{eq:qiqj_corrige}
	\C     &=\epsilon\,\big[\mathds{1}-\operatorname{diag}(\beta^2)\big]+\ket{\beta}\bra{\beta}
\end{align}
where $\C$ is a given symmetric positive definite matrix.
When $\epsilon=1$, this amounts to trying to explain the correlation structure
of $N$ random variables with only $N$ parameters. 
This case is typically encountered in factor models with linear exposition of every individual to a common hidden variable.

Of course this problem does not have a solution in the general case, 
due to the number of variables being much lower than the number of constraints.
It is then better stated in a ``best quadratic fit'' (least-squares estimation) sense:
 minimize the loss function (Hamiltonian) given by the Mean Squared Error (MSE)
\[
	\ket{\beta^*}=\argmin_\beta\left\{\frac{1}{N(N-1)}\sum_{i,j\neq i}\left(\C_{ij}-\beta_i\beta_j\right)^2\right\}.
\]
Differentiation yields the set of equations for the optimal solution (we drop the $^*$):
\begin{equation}\label{eq:RMS}
	\sum_{i=1}^N\C_{ki}\beta_i=\big[\braket{\beta}{\beta}+\epsilon\,(1-\beta_k^2)\big]\,\beta_k.
\end{equation}
$\ket{\beta}$ can be decomposed onto contributions longitudinal and orthogonal to the first (normalized) eigenvector $\ket{V}$ of $\C$ (corresponding to the largest eigenvalue $\lambda$): 
\begin{align}\label{eq:ketq}
	\ket{\beta}&=v\ket{V}+\epsilon\left[A_\|\ket{V}+\ket{V^\perp}\right].
\end{align}
The correction is explicitly written in multiples of $\epsilon$, since to zero-th order one indeed
recovers the eigenvalue equation $\C\ket{\beta}=\braket{\beta}{\beta}\ket{\beta}$ with solution $\ket{\beta}=\sqrt{\lambda}\,\ket{V}$.
\footnote{In fact any couple of eigenvalue/eigenvector of $\C$ is solution, but the projector $\lambda\ket{V}\bra{V}$ associated to the largest eigenvalue has the largest information content in the class of rank-one operators.}
At this order, $\braket{\beta}{\beta}=\lambda\sim N$, so it dominates any $\epsilon$-correction.

In full generality, the squared norm of $\ket{\beta}$ is 
\[
    \braket{\beta}{\beta}=(v+\epsilon A_\|)^2+\epsilon^2\braket{V^\perp}{V^\perp}=v^2+2v\epsilon A_\|+\epsilon^2 [A_\|^2+\braket{V^\perp}{V^\perp}]
\]
and, to first order in $\epsilon/N$, Eq.~\eqref{eq:RMS} writes
\[
	A_\|\lambda V_k+\sum_{i}\C_{ki}V^\perp_k=\left(\lambda A_\|+2\lambda A_\|+\sqrt{\lambda}(1-\lambda V^2_k)\right)V_k+\lambda V^\perp_k
\]
Now, decomposing
\[
	V^3_k=\braket{1}{V^4}V_k+W^\perp_k
\]
longitudinal and transverse components must satisfy
\begin{align*}
                      2\lambda A_\|+\sqrt{\lambda}&=\lambda^{3/2}\braket{1}{V^4}\\
	\left(\lambda\mathds{1}-\C\right)\ket{V^\perp}&=\lambda^{3/2}\ket{W^\perp},
\end{align*}
i.e.
\begin{align}
	A_\|&=\frac{1}{2}\left(\braket{1}{V^4}{\lambda}^{1/2}-{{\lambda}}^{-1/2}\right)&&\sim 1/\sqrt{N}\\
	\ket{V^\perp}&=\lambda^{{3/2}}\left(\lambda\mathds{1}-\C\right)^{-1}\ket{W^\perp}\\
	             &\approx\sqrt{\lambda}\left(\ket{V^3}-\braket{1}{V^4}\ket{V}\right)&&\sim 1
\end{align}
so that the final solution is
\[
\boxed{
    \ket{\beta}/\sqrt{\lambda}=\left[1-\frac{\epsilon}{2}\left(\braket{1}{V^4}+1/\lambda\right)\right]\ket{V}
                          +\epsilon\ket{V^3}
}
\]

\clearpage\newcommand{\Xo}{r^{\scriptscriptstyle (0)}}
\newcommand{\hXo}{\hat{r}^{\scriptscriptstyle (0)}}
\newcommand{\Xun}{r^{\scriptscriptstyle (1)}}
\newcommand{\hXun}{\hat{r}^{\scriptscriptstyle (1)}}
\newcommand{\Xde}{r^{\scriptscriptstyle (2)}}
\newcommand{\hXde}{\hat{r}^{\scriptscriptstyle (2)}}

\newcommand{\Q}{\mat{Q}}

%

\section{Nested single-factor linear model}\label{chap:APXnested}

\subsubsection*{One-factor model} 
Let $\vect{r}$ be an $S$-variate random vector $\vect{r}$, 
and $\C=\esp{\vect{r}\vect{r}^\dagger}$ its correlation matrix --- everything is normalized.
The linear correlation structure of the $S$ random variables $r_s$ is 
fully determined by their exposure to a common factor $\Xo$ of zero mean and unit variance:
\begin{equation}\label{eq:rel1}
	r_s = \beta_s\Xo+e_s.
\end{equation}
The residuals are assumed uncorrelated\footnote{They need not be independent. If they are, the whole non-linear dependence structure is fully determined by Eq.~\eqref{eq:rel1}.}
with variances $\var{e_s}=1-\beta_s^2$.

\subsection*{Estimating the model}
We address in this section the question of the estimation of the parameters $\vect{\beta}$. 
This simple setup is intended to be the building block of a more complex auto-similar description
of hierarchically organized variables.

\subsubsection*{When the factor $\Xo$ is known}
In this case, Eq.~\eqref{eq:rel1} is seen as a regression, and the OLS estimates of $\vect{\beta}$ are
\begin{equation}\label{eq:est1}
	\hat{\beta}_s=\vev{r_s\Xo},
\end{equation}
where $\vev{\cdot}$ means sample average over \emph{realizations} of the random variables.

The theoretical bias is $\esp{\hat{\beta}_s}-\beta_s=\esp{e_s\Xo}$, so that the OLS estimator is
unbiased only when the correlation of the residual with the explanatory variable is nil.
When the residuals are independent Gaussian, the OLS has optimal properties (minimal asymptotic variance).

\subsubsection*{When the factor $\Xo$ is not known}
In this case another set of model predictions should be used to estimate the exposure parameters,
for example the correlation structure
\begin{equation}\label{eq:corr1}
	\esp{r_sr_u}=\beta_s\beta_u+\beta_s\esp{e_u\Xo}+\beta_u\esp{e_s\Xo}+\esp{e_se_u}.
\end{equation}
With uncorrelated residuals, we can build an estimator
which leaves \eqref{eq:corr1} unbiased under the same conditions as \eqref{eq:est1}:
\begin{equation}\label{eq:est2}
	\hat{\vect{\beta}}=\argmin_{\vect{\beta}}\mathcal{L}_{\beta}(\vect{\beta}|\vect{r})
\end{equation}
where we have introduced the loss function 
	$\mathcal{L}_{\beta}(\vect{\beta}|\vect{r})=\left|\left|\C-\vect{\beta}\vect{\beta}^{\dagger}\right|\right|$ 
with an appropriate ``norm'' taking care of the peculiarities of the correlation matrix, 
f.ex. the RMS over off-diagonal elements of a symmetric matrix:
\[
	||\text{A}||=\sqrt{\frac{1}{S(S-1)}\sum_{j<i}\text{A}_{\ij}^2}.
\]
In this case, the perturbative method of Sect.~\ref{sec:APXrank1} can be used, 
otherwise a numerical solution must be looked for.

\paragraph{Reconstructing the factor}
For some applications, it might be necessary to have in hands the series of the residuals, 
or even that of the factor itself.
The factor has then to be reconstructed from the individual variables with some weights $w_s$: $\hXo(\vect{w})$.
It is clear that any such attempt will unavoidably introduce correlations of the form that bias the estimates of $\vect{\beta}$.
But this in turn precisely provides an optimality criterion for the choice of the weights:
\begin{equation}\label{eq:opt_w}
	\hat{\vect{w}}=\argmin_{\vect{w}}\underbrace{\left|\left|\vev{\vect{e}\hXo(\vect{w})^\dagger}\right|\right|}_{\mathcal{L}_w(\vect{w}|\vect{r},\vect{\beta})}=\argmin_{\vect{w}}\sqrt{\left(\vev{\vect{r}\hXo(\vect{w})}-\vect{\beta}\right)^2}
\end{equation}
For consistency, the residual correlation $\mathcal{L}_w(\hat{\vect{w}}|\vect{r},\hat{\vect{\beta}})$ has to be very small !

A simple and intuitive expression for the factor variable is a linear combination of its constituents:
	$\hXo(\vect{w})=\vect{r}^{\dagger}\vect{w}/(\vect{w}^{\dagger}\C \vect{w})^{\frac{1}{2}}$.
In this case, the loss function writes
\[
	\mathcal{L}_w(\vect{w}|\vect{r},\hat{\vect{\beta}})^2=\frac{\vect{w}^{\dagger}\C\C\vect{w}}{\vect{w}^{\dagger}\C \vect{w}}-2\frac{\hat{\vect{\beta}}^{\dagger}\C \vect{w}}{\sqrt{\vect{w}^{\dagger}\C \vect{w}}}+\hat{\vect{\beta}}^{\dagger}\hat{\vect{\beta}}.
\]
For computational convenience, the normalization of the weights can even be performed \emph{ex post} so that 
the minimization program \eqref{eq:opt_w} is quadratic in $\vect{w}$.

\paragraph{}Alternatively to the two-steps procedure exposed above,
a single optimization program can be run to estimate both sets of parameters $\vect{\beta}$ and $\vect{w}$ in a self-consistent way:
considering the system
\[
	\begin{cases}
		r_s = \beta_s\hXo(\vect{w})+e_s,\qquad\var{e_s}=1-\beta_s^2\\
		\hXo(\vect{w})=\sum_s w_s r_s\\
		\vect{w}^{\dagger}\C \vect{w} =1
	\end{cases}
\]
the joint estimator is
\[
	(\hat{\vect{\beta}},\hat{\vect{w}})=\argmin_{(\vect{\beta},\vect{w})}\left\{\mathcal{L}(\vect{\beta},\vect{w}|\vect{r})\left|\vect{w}^{\dagger}\C \vect{w} =1\right.\right\}
\]
with
\[
	\mathcal{L}(\vect{\beta},\vect{w}|\vect{r})^2=\frac{1}{S(S-1)}\sum_{u<s}\left(\C_{su}-\beta_s\beta_u-\frac{\beta_sw_u(1-\beta_u^2)+\beta_uw_s(1-\beta_s^2)}{1-\sum\limits_{k}\beta_kw_k}\right)^2
\]


\subsection*{Nested one-factor model} 
We now consider a model where two levels of the previous structure are nested:
\begin{subequations}
\begin{align}
	\Xun_s &= \beta_s\Xo+e_s,\qquad\var{e_s}=1-\beta_s^2\\\label{eq:x2i}
	\Xde_i &= \gamma_i\Xun_{s}+e_i,\qquad\var{e_i}=1-\gamma_i^2, \qquad i\in s
\end{align}
\end{subequations}
In each branch $s$ there are $N_s$ leaves, and the trunk/branch indices are constructed as linear combinations of their respective constituents:
\begin{align}
	\hXo(\vect{w})         = \sum_s w_s \Xun_s\quad\text{and}\quad
	\hXun_s(\vect{z}_{[s]}) = \sum_{i\in s}z_i \Xde_i
\end{align}

\subsubsection*{Estimation algorithms when all sectorial and global factors are unobservable}
All the estimation algorithms start with a prior classification, typically Bloomberg sectors.
\paragraph{Upward estimation algorithm:}
\begin{itemize}
	\item In each sector $s$, apply the one-factor calibration procedure (either 2-steps or self-consistent)
	\begin{itemize}
		\item compute the leaves' correlation: $\C_{[s]}^{\scriptscriptstyle (2)}=\vev{\vect{r}_{[s]}^{\scriptscriptstyle (2)}\vect{r}_{[s]}^{{\scriptscriptstyle (2)}\dagger}}$
		\item estimate $\vect{\gamma}$ and the weights $\vect{z}_{[s]}$
		\item define the branch index $\hXun_s(\hat{\vect{z}}_{[s]})$ and compute the leaves' residuals \mbox{$\vect{e}^{\scriptscriptstyle (2)}_{[s]}=\vect{r}^{\scriptscriptstyle (2)}_{[s]} - \hat{\vect{\gamma}}_{[s]}\Xun_s, i\in s$}
	\end{itemize}
	\item Apply the one-factor calibration procedure to the root (either 2-steps or self-consistent)
	\begin{itemize}
		\item compute the branches' correlation $\C^{\scriptscriptstyle (1)}=\vev{\vect{r}^{\scriptscriptstyle (1)}\vect{r}^{{\scriptscriptstyle (1)}\dagger}}$
		\item estimate $\vect{\beta}$ and the weights $\vect{w}$
		\item define the root index $\hXo(\hat{\vect{w}})$ and compute the branches' residuals \mbox{$\vect{e}^{\scriptscriptstyle (1)}=\vect{r}^{\scriptscriptstyle (1)} - \hat{\vect{\beta}}r^{\scriptscriptstyle (0)}$}
	\end{itemize}	
\end{itemize}
\emph{Pluses}: auto-similar recursive procedure (appropriate for many levels); quadratic minimization problems very efficient and quick. 
\emph{Minuses}: biases are propagated (and possibly amplified) from leaves to root; trouble with very small sectors (1 to 3 individuals).

\paragraph{Integrated estimation algorithm:}
\begin{itemize}
	\item compute all leaves' correlations $\C^{\scriptscriptstyle (2)}=\vev{\vect{r}^{\scriptscriptstyle (2)}\vect{r}^{{\scriptscriptstyle (2)}\dagger}}$
	\item the corresponding theoretical prediction is $$\Q_{ij}=\delta_{ij}+(1-\delta_{ij})\gamma_i\gamma_j\left[\delta_{s_is_j}+(1-\delta_{s_is_j})\beta_{s_i}\beta_{s_j}\right]$$
	\item estimate all exposure parameters jointly: $(\hat{\vect{\gamma}},\hat{\vect{\beta}})=\argmin||\C^{\scriptscriptstyle (2)}-\Q||$
	\item In each sector $s$,
	\begin{itemize}
		\item estimate the weights $\vect{z}_{[s]}$
		\item define the branch index $\hXun_s(\hat{\vect{z}})$ and compute the leaves' residuals \mbox{$\vect{e}^{\scriptscriptstyle (2)}_{[s]}=\vect{r}^{\scriptscriptstyle (2)}_{[s]} - \hat{\vect{\gamma}}_{[s]}\Xun_s, i\in s$}
	\end{itemize}
	\item compute the branches' correlation 
            $\C^{\scriptscriptstyle (1)}=\vev{\vect{r}^{\scriptscriptstyle (1)}\vect{r}^{{\scriptscriptstyle (1)}\dagger}}$ 
         and check that 
            $\left|\left|\C^{\scriptscriptstyle (1)}-\hat{\vect{\beta}}\hat{\vect{\beta}}^{\dagger}\right|\right|\approx 0$
	\item At root level,
	\begin{itemize}
		\item estimate the weights $\vect{w}$
		\item define the root index $\hXo(\hat{\vect{w}})$ and compute the branches' residuals $\vect{e}^{\scriptscriptstyle (1)}=\vect{r}^{\scriptscriptstyle (1)} - \hat{\vect{\beta}}r^{\scriptscriptstyle (0)}$
	\end{itemize}		
\end{itemize}
\emph{Pluses}: dissociation of the estimation of the exposure parameters and that of the weights; one single optimization program for all levels simultaneously; a posterior check of the correctness of intermediate correlations.
\emph{Minuses}: non-quadratic minimization with a large number of parameters !

\paragraph{Mixed estimation algorithm:}
\begin{itemize}
	\item compute all leaves' correlations $\C^{\scriptscriptstyle (2)}=\vev{\vect{r}^{\scriptscriptstyle (2)}\vect{r}^{{\scriptscriptstyle (2)}\dagger}}$
	\item In each sector $s$, apply the one-factor calibration procedure (either 2-steps or self-consistent)
	\begin{itemize}
		\item estimate $\vect{\gamma}_{[s]}$ and the weights $\vect{z}_{[s]}$, using intra-sector correlations: $\C_{[s]}$
		\item define the branch index $\hXun_s(\hat{\vect{z}}_{[s]})$ and compute the leaves' residuals \mbox{$\vect{e}^{\scriptscriptstyle (2)}_{[s]}=\vect{r}^{\scriptscriptstyle (2)}_{[s]} - \hat{\vect{\gamma}}_{[s]}\Xun_s, i\in s$}
	\end{itemize}
	\item Use inter-sector correlations to calibrate $\vect{\beta}$:
	\begin{itemize}
		\item Semi-theoretical prediction: $\Q_{ij}=\hat{\gamma}_i\hat{\gamma}_j\beta_{s_i}\beta_{s_j}$ for $s_i\neq s_j$
		\item $\hat{\vect{\beta}}=\argmin_{\vect{\beta}}||\C^{\scriptscriptstyle (2)}-\Q||$
	\end{itemize}	
		\item define the root index $\hXo(\hat{\vect{w}})$ and compute the branches' residuals $\vect{e}^{\scriptscriptstyle (1)}=\vect{r}^{\scriptscriptstyle (1)} - \hat{\vect{\beta}}r^{\scriptscriptstyle (0)}$
	\item compute the branches' correlation 
            $\C^{\scriptscriptstyle (1)}=\vev{\vect{r}^{\scriptscriptstyle (1)}\vect{r}^{{\scriptscriptstyle (1)}\dagger}}$ 
          and check that 
            $\left|\left|\C^{\scriptscriptstyle (1)}-\hat{\vect{\beta}}\hat{\vect{\beta}}^{\dagger}\right|\right|\approx 0$
\end{itemize}
\emph{Pluses}: only quadratic minimization problems; dissociation of the estimation of the exposure parameters and that of the weights; a posterior check of the correctness of intermediate correlations.
\emph{Minuses}: trouble with very small sectors (1 to 3 individuals).

\subsubsection*{Improving over the initial classification}
\paragraph{Allocating each leave to its correct branch}
The correlation of a leave variable $\Xde_i$ with a sector residual $e^{\scriptscriptstyle (1)}_s$ is
\begin{subequations}\label{eq:cor_res}
\begin{equation}
	\esp{\Xde_ie^{\scriptscriptstyle (1)}_s}=	\begin{cases}
				\gamma_i\left(1+\frac{\beta_{s}w_s}{1-\sum\limits_u\beta_uw_u}\right)(1-\beta_s^2)&, i\in s\\
				\gamma_i\frac{\beta_{s_i}w_s}{1-\sum\limits_u\beta_uw_u}&, i\notin s
				\end{cases}
\end{equation}
which means that, up to noise and the small bias introduced by endogenizing the factor variables, 
we should observe
\begin{equation}
	\vev{\Xde_ie^{\scriptscriptstyle (1)}_s}\approx	
	\begin{cases}\gamma_i\left(1-\beta_s^2\right)&, i\in s\\0&, i\notin s	\end{cases}
\end{equation}
\end{subequations}
Significant departure from this prediction indicates a bad initial classification \ldots and suggests the correct one.

\paragraph{Exposure to several factors}

It happens that projecting the leave $i$ onto the residuals of every branch $s$
reveals several non trivial correlations, ruling out the pure 1-factor model and the corresponding prediction \eqref{eq:cor_res}.
In particular, two distinct factors (according to the initial classification) can be simultaneously influential over given variables.
The relation \eqref{eq:x2i} needs to be corrected accordingly:
\[
	\Xde_i = \gamma_{i;a}\Xun_{a}+\gamma_{i;b}\Xun_{b}+e_i,
	\qquad\var{e_i}=1-\gamma_{i;a}^2-\gamma_{i;b}^2-2\gamma_{i;a}\gamma_{i;b}\beta_a\beta_b, \qquad i\in a\cap b
\]
and the corresponding prediction in terms of correlations updated: for $i\in a\cap b, j\in s$,
\[
	\esp{\Xde_i\Xde_j}=\gamma_{i;a}\gamma_j\left[\delta_{as}+(1-\delta_{as})\beta_a\beta_s\right]
	                  +\gamma_{i;b}\gamma_j\left[\delta_{bs}+(1-\delta_{bs})\beta_b\beta_s\right]
\]

\subsection*{Another norm: overlap distance}
Starting from a prior $\Q^0$ obtained e.g.\ by the previous estimation method,
we want to find the optimal corrections $\Q^1$ that minimize the overlap distance of 
the eigenvectors $\Psi$ of $\Q=\Q^0+\Q^1$ with the eigenvectors $\Phi$ of  $\C$.
%
The $M$-overlap distance is 
\[
	d_M(\C,\Q)=-\frac{1}{M}\ln|G_{M|}|
\]
where $G=\Phi^{\dagger}\Psi$ and 
the subscript denotes the restriction to the first $M$ columns and lines.
Minimizing the distance over the parameters $\Q^1$ necessitates, 
for a higher convergence rate, the gradient. 
We compute it explicitly to first order in perturbation theory, 
\begin{align*}
	\ket{\psi_n}&=\ket{\psi_n^0}+\sum_{m\neq n}\ket{\psi_m^0}\frac{\bra{\psi_m^0}\Q^1\ket{\psi_n^0}}{\lambda_n^0-\lambda_m^0}\\
	            &=\ket{\psi_n^0}  +\Psi_{-\!n}^0(\lambda_n^0-\Lambda_{-\!n}^0)^{-1}\Psi_{-\!n}^{0\dagger}\Q^1\ket{\psi_n^0}, \text{ or}\\
	     G      &=\underbrace{\Phi^{\dagger}\Psi^0}_{G^0} +G^1\\
	     d_p(\C,\Q)&=-\frac{1}{M}\ln|G_{M|}^0|-\tr\!\left[G_{M|}^1(G_{M|}^0)^{-1}\right]\\
	     \partial d_p(\C,\Q)&=-\tr\!\left[(\partial G_{M|}^1)(G_{M|}^0)^{-1}\right]
\end{align*}

\chapter[Expansion of the pseudo-elliptical copula]{Pseudo-elliptical copula:\\ expansion around independence} 
\label{chap:apx_algebra}

\begin{table}[t]
        \center
        \begin{tabular}{@{}cccc} 
                $I$&$A$&$R$\\\hline
                16.667&1.176&7.806\\ 
        \end{tabular} 
        \begin{tabular}{@{}cccc} 
                $I^2$&$IA$&$IR$\\\hline 
                111.139&2.948&79.067\\ 
        \end{tabular} 
        \caption{Traces of the operators appearing in the covariance functions (multiples of $10^{-2}$). 
                 Traces of the powers of the rank-one operators $A$ and $R$ equal powers of their traces.
                 The trace of $B+B^{\dagger}$ is zero.} 
        \label{tab:traces} 
\end{table} 

We compute here the spectrum and eigenvectors of the kernel $H(u,v)$ in the case of pseudo-elliptical copula with weak dependences, 
starting from the expansion (\ref{eq:cop_perturb1}) on page~\pageref{eq:cop_perturb1}. 

The situation is better understood in terms of operators acting in the Hilbert space 
of continuous functions on $[0,1]$ vanishing in the border. 
Using Dirac's bra-ket notations, 
$A=\ket{\tilde{A}}\bra{\tilde{A}}$, $B=\ket{\tilde{R}}\bra{\tilde{A}}$, $R=\ket{\tilde{R}}\bra{\tilde{R}}$. 
The sine functions $\ket{j}=\sqrt{2}\,\sin(j\pi u)$ build a basis of this Hilbert space, 
and interestingly they are the eigenvectors of the independent kernel $I(u,v)$ 
($I$ stands for `\emph{I}ndependence' and is the covariance matrix of the Brownian bridge, see page~\pageref{eq:Itheo}: 
$I=M-\Pi$ where $M$ denotes the bivariate upper Fr\'echet-Hoeffding copula and $\Pi$ the bivariate product copula defined in page~\pageref{pg:cop_examples}). 

It is then easy to find the spectra: rank-one operators have at most one non-null eigenvalue. 
Using the parities of $\tilde{A}(u)$ and $\tilde{R}(u)$ with respect to $\frac{1}{2}$ and imposing orthonormality of the eigenvectors, 
we can sketch the following table of the non zero eigenvalues and eigenvectors of the different operators:

\[
\begin{array}{r@{\,=\,}l@{\qquad\qquad}r@{\,=\,}l} 
        \lambda^I_j      &(j\pi)^{-2}                          &  U_j^I(u)     &\ket{j}\\ 
        \lambda^R        &\braket{\tilde{R}}{\tilde{R}}=\tr R  & \ket{U_0^R}   &\ket{\tilde{R}}/\sqrt{\tr R}\\ 
        \lambda^A        &\braket{\tilde{A}}{\tilde{A}}=\tr A  & \ket{U_0^A}   &\ket{\tilde{A}}/\sqrt{\tr A}.
\end{array} 
\] 

For the pseudo-elliptical copula with weak dependence, $H$ has the following general form: 
\begin{equation}
	H = I + \tilde\rho R + \tilde\alpha A - \frac{\tilde\beta}{2} (B+B^{\dagger}). 
\end{equation}
The operator $B+B^{\dagger}$ has two non zero eigenvalues $\pm \sqrt{\lambda^R\lambda^A}$, with eigenvectors $[\ket{U_0^R} \pm \ket{U_0^A}]/\sqrt{2}$. 
In order to approximately diagonalize $H$, it is useful to notice that in the present context $A$ and $R$ are close to commuting with $I$. 
More precisely, it turns out that $\ket{U_0^A}$ is very close to $\ket{2}$, and $\ket{U_0^R}$ even closer to $\ket{1}$. 
Indeed, \mbox{$a_2=\braket{U_0^A}{2}\approx 0.9934$} and \mbox{$r_1=\braket{U_0^R}{1}\approx 0.9998$}. Using the symmetry of $A$ and $R$, we can therefore 
write: 
\begin{align*} 
        \ket{U_0^A}&=a_2\ket{2}+\epsilon_{a}\ket{2_{\perp}}\quad\textrm{with}\quad\braket{2}{2_{\perp}}=\braket{2j\!-\!1}{2_{\perp}}=0,\forall j\geq 1\\ 
        \ket{U_0^R}&=r_1\ket{1}+\epsilon_{r}\ket{1_{\perp}}\quad\textrm{with}\quad\braket{1}{1_{\perp}}=\braket{2j}{1_{\perp}}=0,\forall j\geq1 
\end{align*} 
where $\epsilon_a = \sqrt{1-a_2^2} \ll 1$ and $\epsilon_r = \sqrt{1-r_1^2} \ll 1$. 
The components of $\ket{2_{\perp}}$ on the even eigenvectors of $I$ are determined as the projection: 
\[ 
\braket{2_{\perp}}{2j} = \frac{\braket{U_0^A}{2j}}{\epsilon_a} \qquad j \geq 2, 
\] 
and similarly: 
\[ 
\braket{1_{\perp}}{2j\!-\!1} = \frac{\braket{U_0^R}{2j\!-\!1}}{\epsilon_r} \qquad j \geq 2. 
\] 

Using the definition of the coefficients $\alpha_t$, $\beta_t$ and $\rho_t$ given in section~\ref{sec:example} of chapter~\ref{part:partIII}.\ref{chap:cop_fin} (page~\pageref{eq:cop_perturb1}),
 we introduce the following notations: 
\begin{align*} 
        \tilde\alpha&=2\,\tr A\lim_{N\to\infty}\sum_{t=1}^{N-1}\left(1-\frac{t}{N}\right)\alpha_t\\ 
        \tilde\rho  &=2\,\tr R\lim_{N\to\infty}\sum_{t=1}^{N-1}\left(1-\frac{t}{N}\right)\rho_t\\ 
        \tilde\beta &=2\,\sqrt{\tr A\,\tr R}\lim_{N\to\infty}\sum_{t=1}^{N-1}\left(1-\frac{t}{N}\right)\beta_t 
\end{align*} 
so that $H$ writes: 
\begin{align*} 
        H=\phantom{I_0}I &+\tilde\alpha\,\ket{U_0^A}\bra{U_0^A}+\tilde\rho\,\ket{U_0^R}\bra{U_0^R}-\tilde\beta\,\overleftrightarrow{\ket{U_0^R}\bra{U_0^A}}\\ 
         =H_0&
          +\epsilon_a \left( \tilde\alpha\, a_2\, \overleftrightarrow{\ket{2}\bra{2_{\perp}}}-\tilde\beta\, r_1\, a_{\perp}\overleftrightarrow{\ket{1}\bra{2_{\perp}}}\right)\\ 
         &+\epsilon_r \left( \tilde\rho\, r_1\, \overleftrightarrow{\ket{1_{\perp}}\bra{1}}-\tilde\beta\, a_2\,\overleftrightarrow{\ket{1_{\perp}}\bra{2}}\right)\\ 
         &+\left(\tilde\alpha\, \epsilon_a^2\,\ket{2_{\perp}}\bra{2_{\perp}}+\tilde\rho\, \epsilon_r^2\, \ket{1_{\perp}}\bra{1_{\perp}}-\tilde\beta\, \epsilon_a\, \epsilon_r\,\overleftrightarrow{\ket{2_{\perp}}\bra{1_{\perp}}}\right) ,
\end{align*} 
where $\overleftrightarrow{\ket{\psi_1}\bra{\psi_2}}=\frac{1}{2}\big[\ket{\psi_1}\bra{\psi_2}+\ket{\psi_2}\bra{\psi_1} \big]$
and $H_0$ is the unperturbed operator (0-th order in both $\epsilon$s)
\[
	H_0=\sum_{j\geq 3}\lambda_j^I\,\ket{j}\bra{j}+(\lambda_2^I+\tilde\alpha\, a_2^2)\,\ket{2}\bra{2}+(\lambda_1^I+\tilde\rho\, r_1^2)\,\ket{1}\bra{1}-\tilde\beta\, r_1\,a_2\,\overleftrightarrow{\ket{2}\bra{1}}.
\]
The spectrum of the latter is easy to determine as: 
\[ 
\begin{array}{r@{=}lp{10pt}r@{=}l} 
        \lambda_1^{H_0}&\lambda_-\xrightarrow{\tilde\rho,\tilde\beta\to 0}{\lambda_1^I}&& 
        \ket{U_1^{H_0}}&-\displaystyle\frac{\ket{-}}{\sqrt{\braket{-}{-}}} \xrightarrow{\tilde\rho,\tilde\beta\to 0}{\ket{1}}\\ 
        \lambda_2^{H_0}&\lambda_+\xrightarrow{\tilde\rho,\tilde\beta\to 0}{\lambda_2^I+\tilde\alpha a_2^2}&& 
        \ket{U_2^{H_0}}&\phantom{-}\displaystyle \frac{\ket{+}}{\sqrt{\braket{+}{+}}} \xrightarrow{\tilde\rho,\tilde\beta\to 0}{\ket{2}}\\ 
        \lambda_j^{H_0}&\lambda_j^I&& 
        \ket{U_j^{H_0}}& \ket{j} \qquad (j \geq 3) 
\end{array} 
\] 
where 
\[ 
\lambda_{\pm}=\frac{\lambda_1^I+\tilde\rho\, r_1^2+\lambda_2^I+\tilde\alpha\, a_2^2\,\pm\sqrt{(\lambda_1^I+\tilde\rho\, r_1^2-\lambda_2^I-\tilde\alpha\, a_2^2)^2+4\,(\tilde\beta\, r_1\,a_2)^2}}{2} 
\] 
and $\ket{\pm}$
the corresponding eigenvectors, which are linear combination of $\ket{1}$ and $\ket{2}$ only. 
Therefore, $\braket{1_{\perp}}{\pm}=\braket{2_{\perp}}{\pm}=0$, which implies that there is no corrections to the eigenvalues of $H_0$ to first order in the $\epsilon$s. 
At the next order, instead, some corrections appear. We call: 
\begin{align*}
	V_{i,j}&=\big[\tilde\rho\,   r_1\,\braket{1}{U_i^{H_0}}-\tfrac12\tilde\beta\, a_2\braket{2}{U_i^{H_0}}\big]\,\braket{j}{1_{\perp}}\,\epsilon_r\\
	       &+\big[\tilde\alpha\, a_2\,\braket{2}{U_i^{H_0}}-\tfrac12\tilde\beta\, r_1\braket{1}{U_i^{H_0}}\big]\,\braket{j}{2_{\perp}}\,\epsilon_a 
\end{align*} 
the matrix elements of the first order perturbation of $H$, whence 
\begin{align*} 
        \lambda_1^{H}&=\lambda_1^{H_0}+\sum_{j\geq 3}\frac{V_{1,j}^2}{\lambda_1^{H_0}-\lambda_j^{H_0}}\\ 
        \lambda_2^{H}&=\lambda_2^{H_0}+\sum_{j\geq 3}\frac{V_{2,j}^2}{\lambda_2^{H_0}-\lambda_j^{H_0}}\\ 
        \lambda_j^{H}&=\lambda_j^{H_0}+\sum_{i=1,2  }\frac{V_{i,j}^2}{\lambda_j^{H_0}-\lambda_i^{H_0}}\\
                     &+\tilde\alpha\, \epsilon_a^2\,\braket{j}{2_{\perp}}^2+\tilde\rho\, \epsilon_r^2\,\braket{j}{1_{\perp}}^2-\tilde\beta\, \epsilon_a\, \epsilon_r\, \braket{j}{1_{\perp}}\braket{j}{2_{\perp}}.
\end{align*} 
As of the eigenvectors, it is enough to go to first order in $\epsilon$s to get a non-trivial perturbative correction: 
\begin{align*} 
        \ket{U_1^H}&=\ket{U_1^{H_0}}+\sum_{j\geq 3}\frac{V_{1,j}}{\lambda_1^{H_0}-\lambda_j^{H_0}}\ket{j}\\ 
        \ket{U_2^H}&=\ket{U_2^{H_0}}+\sum_{j\geq 3}\frac{V_{2,j}}{\lambda_2^{H_0}-\lambda_j^{H_0}}\ket{j}\\ 
        \ket{U_j^H}&=\ket{j}        +\sum_{i=1,2  }\frac{V_{i,j}}{\lambda_j^{H_0}-\lambda_i^{H_0}}\ket{U_i^{H_0}} .
\end{align*}

\paragraph{}
The special case treated numerically in chapter~\ref{part:partIII}.\ref{chap:cop_fin} corresponds to $\tilde\rho =\tilde\beta = 0$, such that the above expressions simplify 
considerably, since in that case $V_{1,j} \equiv 0$ and $V_{2,2j-1}=0$, while $V_{2,2j} = \tilde\alpha\, a_2 \braket{U_0^A}{2j}$.
To first order in the $\epsilon$s, the spectrum is not perturbed and calls $\lambda_i^{H}=\lambda_i^{H_0}=\lambda_i^{I}+\tilde\alpha\, a_2^2\,\delta_{i2}$, so
that the characteristic function of the modified CM distribution is, according to Eq.~(\ref{eq:charctCM}),
\[
	\phi(t)=\prod_{j}\left(1-2\imath t/(j\pi)^2\right)^{-\frac{1}{2}} \times \sqrt{\frac{1-2\imath t\lambda_2^{I}}{1-2\imath t\lambda_2^{H_0}}}.
\]
Its pdf is thus the convolution of the Fourier transform of $\phi_I(t)$ (characteristic function associated to the usual CM distribution $\mathcal{P}_I$ \cite{anderson1952asymptotic})
and the Fourier transform (FT) \nomenclature[aFT]{$\ft$}{Fourier transform} of the correction $\phi_c(t)=\sqrt{1-2\imath t\lambda_2^{I}}/\sqrt{1-2\imath t\lambda_2^{H_0}}$.
Noting that \mbox{$(1-2\imath \sigma^2t)^{-\frac{1}{2}}$} is the characteristic function of the chi-2 distribution, it can be shown that for $k > 0$, and
with $\mu\equiv\lambda_2^{H_0}$ for the sake of readability:%
\begin{align*}
	\frac{1}{\sqrt{2\pi}}\ft[\phi_c]&=\delta(k)-\int_{\lambda_2^{I}}^{\mu}\d{\lambda}\frac{\partial}{\partial k} \left(\chi^2(k;\mu)*\chi^2(k;\lambda)\right)\\
                                    &=\delta(k)-\int_{\lambda_2^{I}}^{\mu}\d{\lambda}\frac{\e^{-\frac{\lambda+\mu}{4\lambda\mu}k}}{8(\lambda\mu)^{\frac{3}{2}}}
                                      \left((\mu-\lambda)\,I_1(\frac{\mu-\lambda}{4\lambda\mu}k)-(\mu+\lambda)\,I_0(\frac{\mu-\lambda}{4\lambda\mu}k)\right)\\
                                    &\approx\delta(k)+\e^{-{k}/{2\lambda}}\,\frac{\tilde\alpha a_2^2}{4\lambda^2}\, I_0(\frac{\tilde\alpha a_2^2}{4\lambda^2}k),
\end{align*}
where $\chi^2(k;\sigma^2)=(2\pi\sigma^2k\,\e^{k/\sigma^2})^{-\frac{1}{2}}$ is the pdf of the chi-2 distribution, 
$I_{n}$ are the modified Bessel functions of the first kind, and $*$ denotes the convolution operation.
The approximation on the last line holds as long as $\tilde\alpha\ll\lambda_2^I=(2\pi)^{-2}$ and 
in this regime we obtain finally the distribution of the CM statistic with asymptotic kernel $H$, $\mathcal{P}_H(k)=\d\pr{C\!M\leq k}/\d{k}$:
\begin{align*}
	\mathcal{P}_H(k)&=\sqrt{2\pi}\,\ft[\phi](k)=(\ft[\phi_I]*\ft[\phi_c])(k)\\
	                &=\mathcal{P}_I(k)+{4\tilde\alpha a_2^2\pi^4}\int_{0}^{k}\mathcal{P}_I(z)\e^{-2\pi^2{(k-z)}} I_0(4\tilde\alpha a_2^2\pi^4(k-z))\,\d{z}\\
	                &=\mathcal{P}_I(k)+{4\tilde\alpha a_2^2\pi^4}\int_{0}^k\mathcal{P}_I(k-z)\e^{-2\pi^2{z}} I_0(4\tilde\alpha a_2^2\pi^4z)\,\d{z}.
\end{align*}
The second term characterizes the perturbative correction due to the departure of $H_0$ from the independent kernel $I$.

\chapter{Appendices to Chapter~\ref{part:partIII}.\ref{chap:QARCH}}
 \section{The exact spectrum of the \mbox{Borland-Bouchaud} model}
 \label{QARCHapx:A}

In this appendix to chapter~\ref{chap:QARCH} of part~\ref{part:partIII}, we compute the spectral properties of a
continuous time kernel with the BB structure of Ref.~\cite{borland2005multi}:
\[
	K(\tau',\tau'')=k(\max(\tau',\tau'')),\qquad\varepsilon\leq\tau',\tau''\leq q.
\]
The eigenvalue equation
\[
	\int_{\varepsilon}^q K(\tau,\tau')v(\tau')\,\d{\tau'}=\lambda v(\tau)
\]
is differentiated to obtain a second order linear differential equation with appropriate boundary conditions:
\begin{equation}\label{QARCHeq:schrodinger}
	\left\{
	\begin{array}{rl}
		\lambda V''(\tau)       &=k'(\tau)V(\tau)\\
		\lambda V_{\lambda}'( q)&= k(q)V_{\lambda}(q)\\
		V(\varepsilon)&=0
	\end{array}
	\right.
\end{equation}
where $V(q)\equiv\int_{\varepsilon}^qv(\tau)\d{\tau}$. 
The resolution of this differential problem depends on the choice of $k(\tau)$,
and we investigate below three particular cases: a linearly, exponentially and power-law decreasing kernel.

For the problem in continuous time, $\varepsilon$ can be taken as 0, but in case this 
differential problem is seen as an approximation to a discrete time problem, 
it is important to keep $\varepsilon=1$. 
For the sake of simplicity of the solutions, we set $\varepsilon=0$ in the following, 
but more precise numerical results for the first eigenmodes are obtained with $\varepsilon=1$ (see Fig.~\ref{QARCHfig:eigenvectsBB}), 
although for higher modes and at large $q$ the choice of $\varepsilon$ is hardly relevant.

\subsection*{Linearly decreasing}
When $k(\tau)=g\cdot(1-\tau/q)\1{\tau\leq q}$, the general solution is a superposition of the two linearly independent solutions: 
\[
	V^s_{\lambda}(\tau)=\sin\!\left(\sqrt{\frac{gq}{\lambda}}\frac{\tau}{q}\right)\quad\text{and}\quad 
	V^c_{\lambda}(\tau)=\cos\!\left(\sqrt{\frac{gq}{\lambda}}\frac{\tau}{q}\right).
\]
The boundary condition $V(0)=0$ disqualifies $V^c$, and $V'(q)=0$ selects only those values of $\lambda$ which satisfy
\[
	\lambda_n=\frac{gq}{\pi^2}\cdot\frac{1}{(n-\tfrac{1}{2})^2},\quad{}n\in\mathds{N}\backslash\{0\},
\]
so that finally 
\[
	v_{n}(\tau)=V'_{\lambda_n}(\tau)\propto \cos\!\left((n-\tfrac{1}{2})\pi\frac{\tau}{q}\right).
\]
In fact, we see straight from Eq.~\eqref{QARCHeq:schrodinger} that when $k'$ is constant, the problem for $V$ amounts to that of a 
free quantum particle in a box with absorbing left wall and reflecting right wall.

\subsection*{Exponentially decreasing}
When $k(\tau)=g\,\e^{-\alpha\tau}$, the general solution is a superposition of the stretched Bessel functions 
\[
	Z_0\!\left(\pm\gamma\,\e^{-\alpha\tau/2}\right),
\]
where $\gamma=2\sqrt{{g}{(\lambda\alpha)}}$.
Bessel functions with negative argument are complex, so we keep only $+\gamma>0$.
The Bessel functions $J_{\nu}$ and $Y_{\nu}$ of the first and second kind respectively, 
are linearly independent, but
the lower boundary condition imposes the coefficients of the combination:
\[
	V(\tau)=Y_0(\gamma)\,J_0\!\left(\gamma\,\e^{-\alpha\tau/2}\right)-
	        J_0(\gamma)\,Y_0\!\left(\gamma\,\e^{-\alpha\tau/2}\right).
\]
Using recursion formulas for Bessel functions \cite{gradshteyn1980table}, the upper boundary condition becomes
\[
	Y_0(\gamma_n)\,J_{2}\!\left(\gamma_n\e^{-\alpha q/2}\right)=
	J_0(\gamma_n)\,Y_{2}\!\left(\gamma_n\e^{-\alpha q/2}\right),
\]
and the eigenvalues $\lambda_n$ are quantized according to the corresponding zeros.

\subsection*{Power-law decreasing}
Taking $k(\tau)=g\,\tau^{-\alpha}$, 
the solutions are given in terms of rescaled Bessel functions $Z_{\nu}$ (formula~8.491.12 in~\cite{gradshteyn1980table}): 
\begin{equation}\label{QARCHeq:sol_bessel}
	V^{\pm}(\tau)=\tau^{\frac{1}{2}}\,Z_{\nu}\!\left(\pm\gamma\tau^{\frac{1-\alpha}{2}}\right),
\end{equation}
with $\gamma=\frac{2}{|1-\alpha|}\sqrt{{g\alpha}/{\lambda}}$ and $\nu={1}/{(1-\alpha)}$.
Bessel functions with negative argument are complex, so we keep only $+\gamma>0$.
In the cases we consider, $\alpha>1$ (see discussion in page~\pageref{QARCHssec:second_moment}) so that \mbox{$\nu<0$}.
Although the Bessel function of the first kind $J_{\nu}(x)$ with negative non-integer $\nu$ diverges at the origin (as $\sim x^{\nu}$),
$V(\tau)$ vanishes linearly in $0^+$ since $V(\tau)\to\sqrt{\tau}(\tau^{\frac{1}{2\nu}})^{\nu}=\tau$.
The first boundary condition is thus satisfied for $Z_{\nu}=J_{\nu}$.
Using recursion formulas for Bessel functions, the boundary condition becomes
\[
	J_{\nu-2}\!\left(\gamma_nq^{\frac{1-\alpha}{2}}\right)=0,
\]
and the eigenvalues $\lambda_n$ are quantized according to the corresponding zeros.

\begin{figure}[p]
	\center
    \subfigure[Linear decay ($g=0.1$)]{
    \centering
	\includegraphics[scale=0.45,trim=-70 0 70 0,clip]{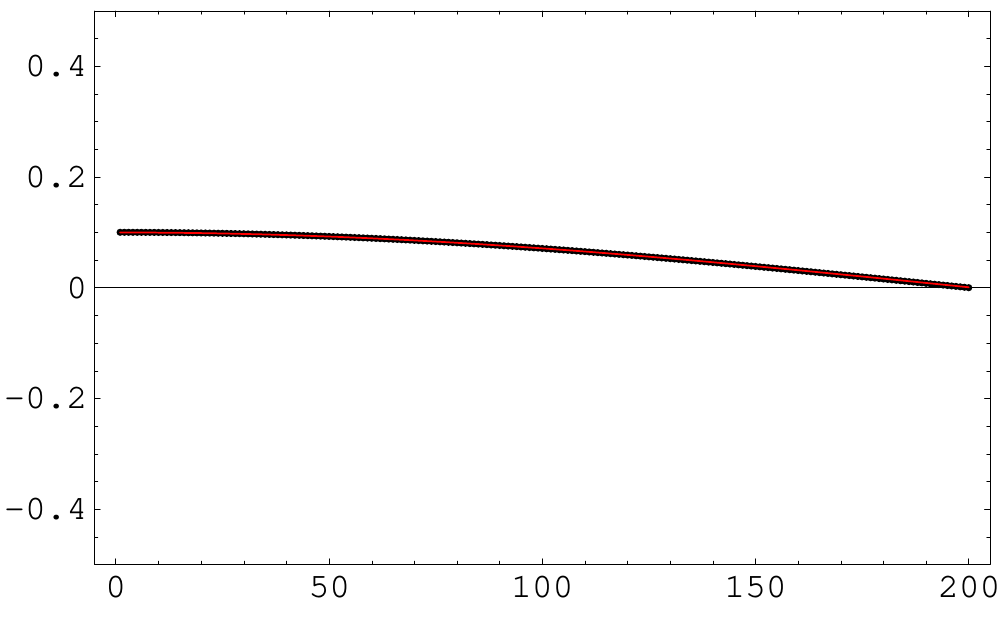}
	\includegraphics[scale=0.45,trim=-70 0 70 0,clip]{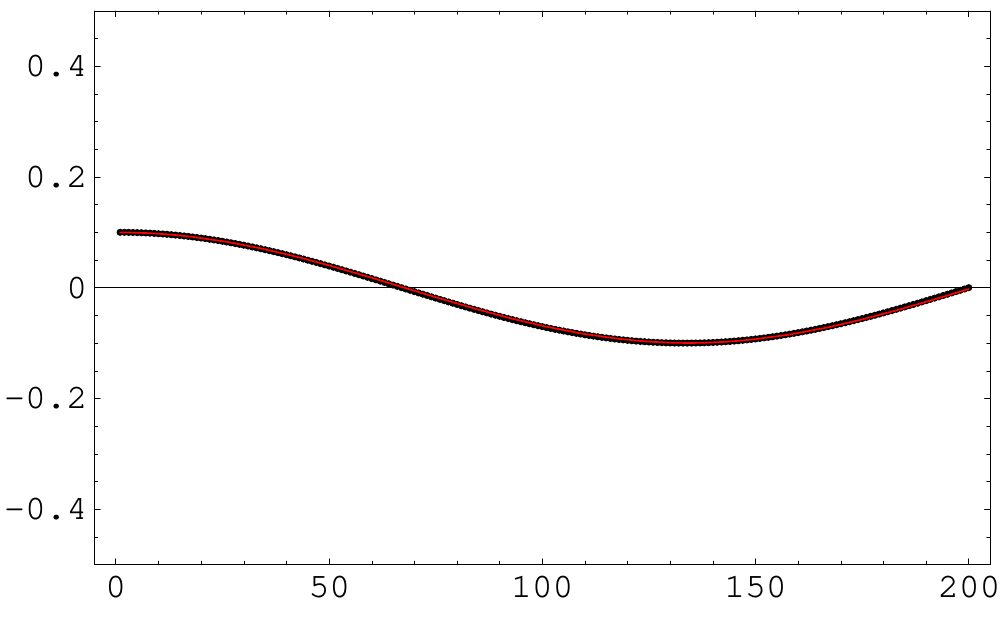}
	\includegraphics[scale=0.45,trim=-70 0 70 0,clip]{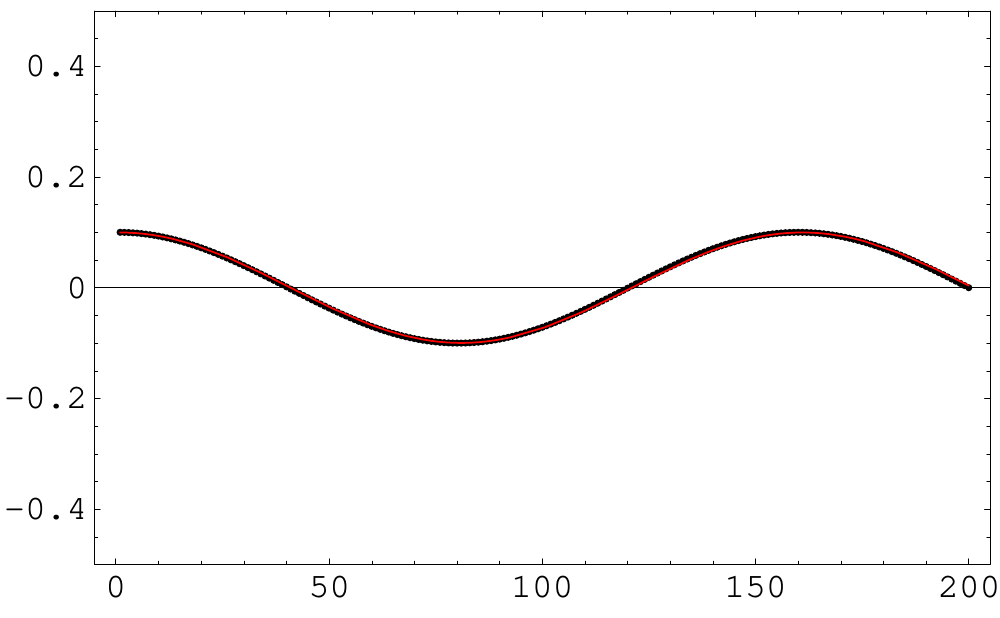}
    }
	
    \subfigure[Exponential decay ($g=0.1, \alpha=0.15$)]{
    \centering
	\includegraphics[scale=0.45,trim=-70 0 70 0,clip]{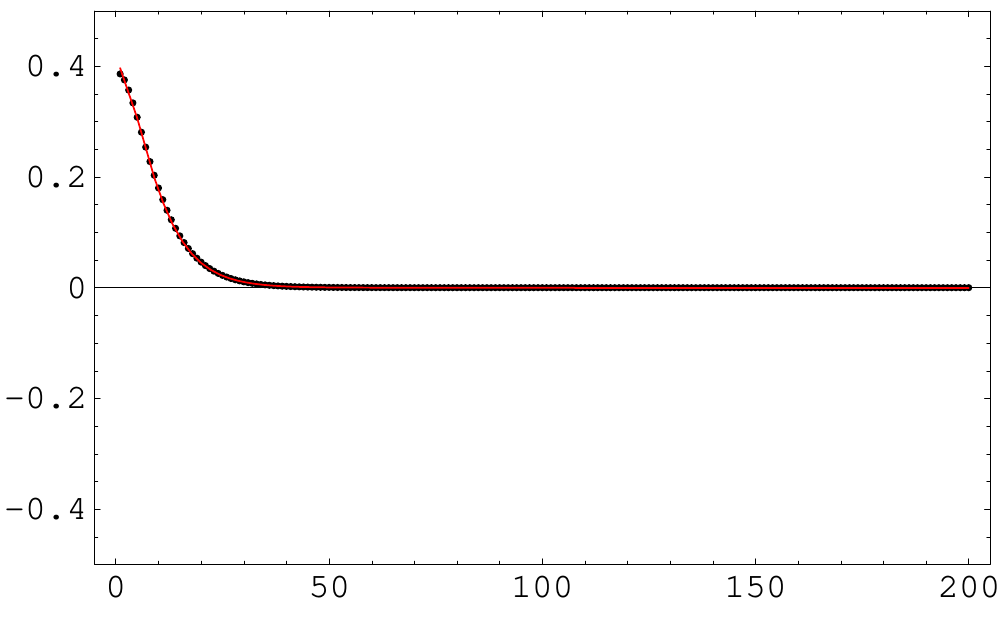}
	\includegraphics[scale=0.45,trim=-70 0 70 0,clip]{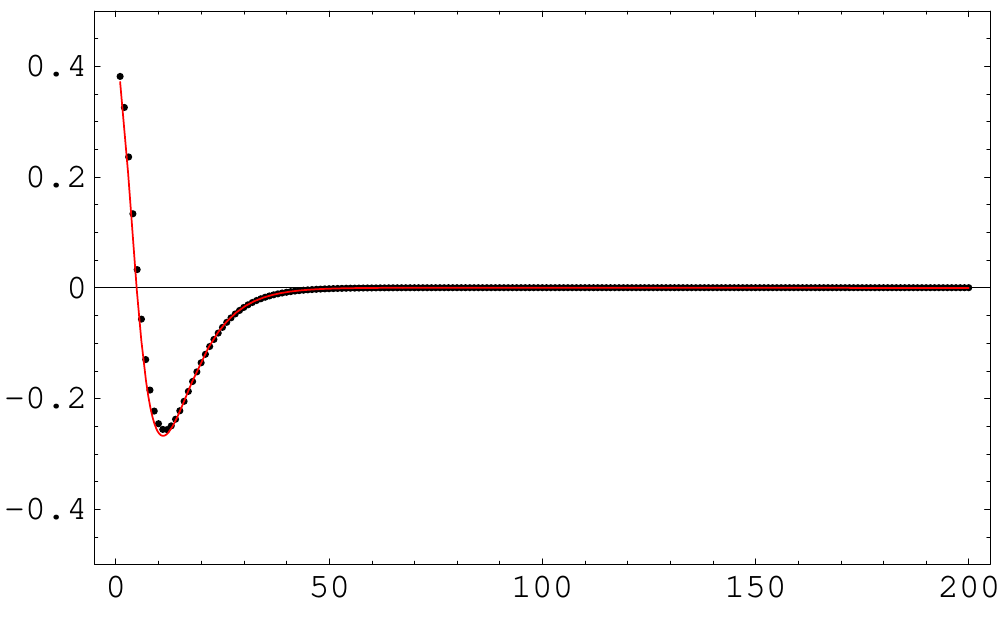}
	\includegraphics[scale=0.45,trim=-70 0 70 0,clip]{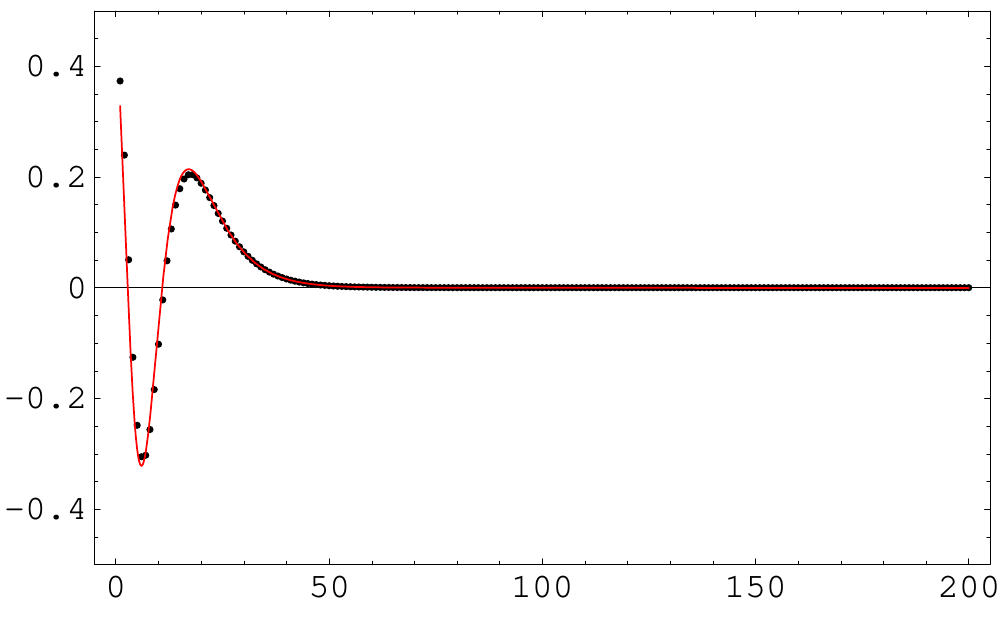}
    }
	
    \subfigure[Power-law decay ($g=0.1, \alpha=1.15$)]{
    \centering
	\includegraphics[scale=0.45,trim=-70 0 70 0,clip]{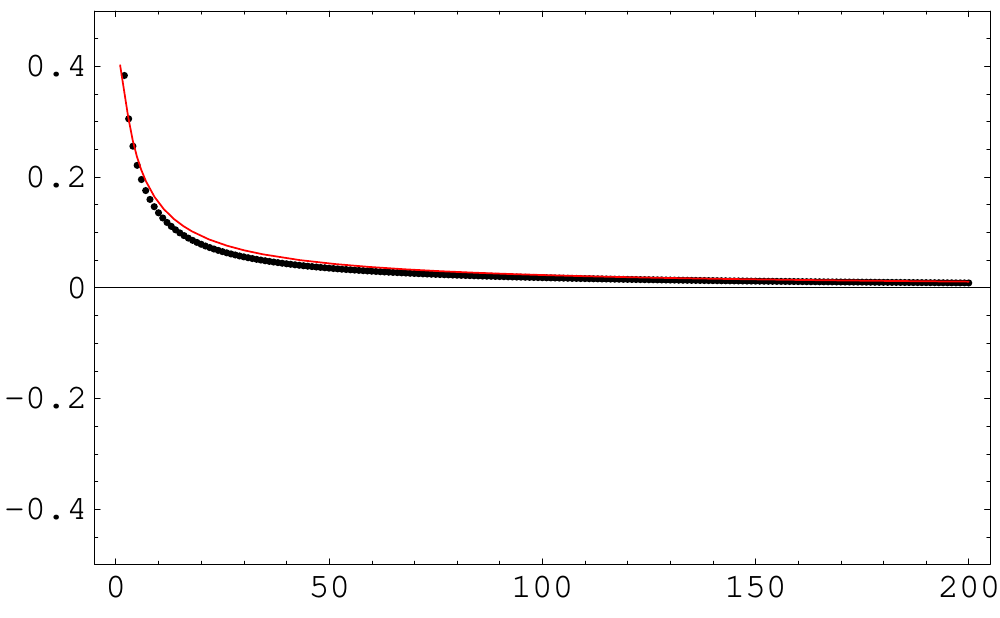}
	\includegraphics[scale=0.45,trim=-70 0 70 0,clip]{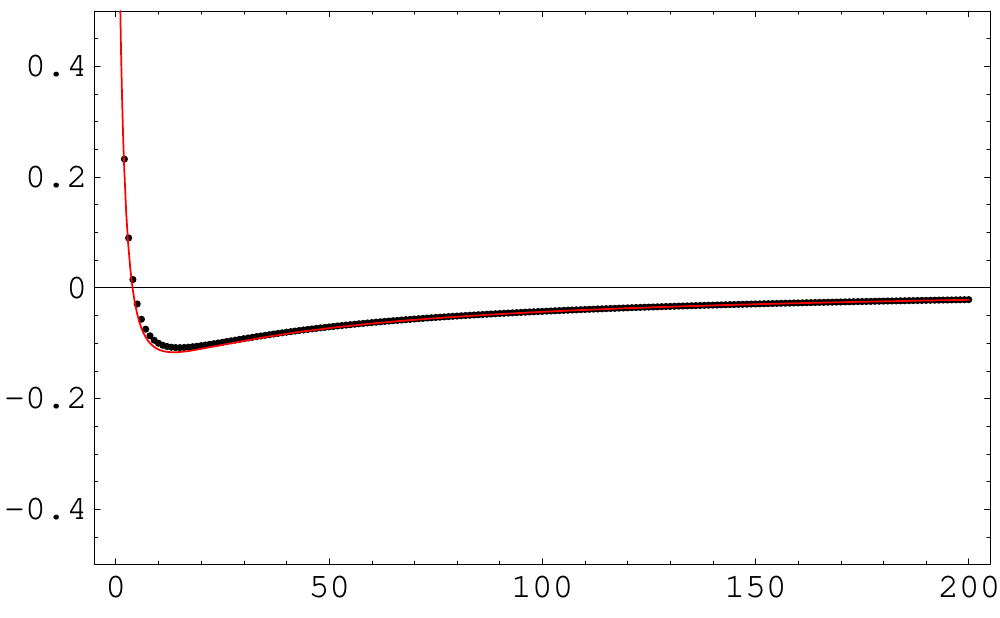}
	\includegraphics[scale=0.45,trim=-70 0 70 0,clip]{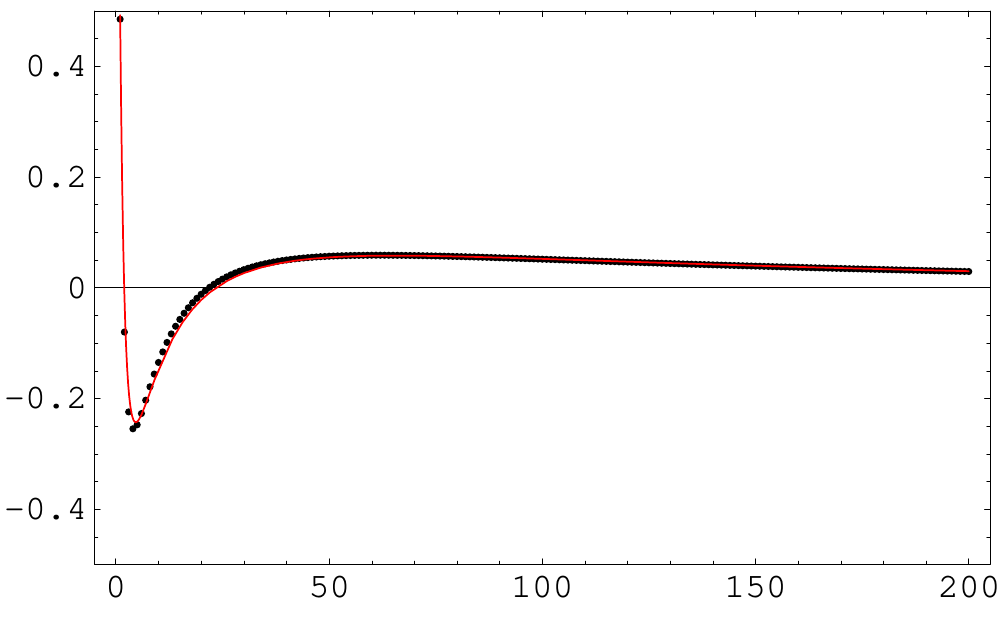}
    }
	
	\caption{First three eigenvectors of the BB Kernel ($q=200$), from left to right.
	The theoretical solutions (red line) are hardly distinguishable from 
    the results of the numerical diagonalization (black dots).}
	\label{QARCHfig:eigenvectsBB}
\end{figure}

\clearpage
 \section{Power-law volatility correlations in FIGARCH}
 \label{QARCHapx:B}
In Chapter~\ref{chap:QARCH} of Part~\ref{part:partIII}, we studied theoretical and empirical properties of Quadratic ARCH models.
In particular, 2nd and 4th order stationarity conditions were summarized, for the power-law decreasing diagonal case,
 in the parameter space of Fig.~\ref{QARCHfig:sig2crit}, page~\pageref{QARCHfig:sig2crit},
 by numerically solving Eq.~\eqref{QARCHeq:vol4}.
 
We want here to get a more precise insight on the behavior of 
the quadratic correlation $\mathcal{C}^{(2)}(\tau)$ when the input kernel $k(\tau)$ is long-ranged asymptotically power-law:
\[
    k(\tau)\xrightarrow{\tau\to\infty}A/\tau^{1+\epsilon}, \quad 0<\epsilon,
\]
where the bound on $\epsilon$ ensures the integrability of the kernel $\int_0^\infty k(\tau)\d{\tau}=1-s^2$.

In order to address this question analytically, we assume that the feedback kernel is infinitely-ranged, 
and consider the continuous-time approximation.
The sum in Eq.~\eqref{QARCHeq:vol4} is approximated by an integral, and decomposed as follows
\begin{align*}
    \mathcal{C}^{(2)}(\tau)=\mathcal{C}^{(2)}(-\tau)&=             \int_0^\infty    k(\tau')\mathcal{C}^{(2)}(\tau-\tau')\,\d{\tau'}\\
                    &= \underbrace{\int_0^\tau      k(\tau')\mathcal{C}^{(2)}(\tau-\tau')\,\d{\tau'}}_{\mathcal{C}^{(2)}_-(\tau)}+
                       \underbrace{\int_\tau^\infty k(\tau')\mathcal{C}^{(2)}(\tau'-\tau)\,\d{\tau'}}_{\mathcal{C}^{(2)}_+(\tau)}
\end{align*}

The behavior at large $\tau\to\infty$ is studied by taking the Laplace transform 
and investigating $\omega\to0$ while keeping a non-diverging product $\omega\tau$:
\begin{align*}
    \widehat{k}(\omega)&=\int_0^\infty 1-\left(1-\e^{-\omega\tau}\right)k(\tau)\,\d{\tau}
                    =\widehat{k}(0)-\int_0^\infty \left(1-\e^{-x}\right)k(x/\omega)\,\d{x}/\omega\\
                   &=\widehat{k}(0)-\int_0^\infty \left(1-\e^{-x}\right)A\left(\frac{\omega}{x}\right)^{1+\epsilon}\,\d{x}/\omega\\
                   &=\widehat{k}(0)-A\omega^{\epsilon}\int_0^\infty \left(1-\e^{-x}\right)x^{-1-\epsilon}\,\d{x}
                    =\widehat{k}(0)+A\omega^{\epsilon}\int_0^\infty \e^{-x}\,\frac{x^{-\epsilon}}{-\epsilon}\,\d{x} \\
                   &=\widehat{k}(0)+A\omega^{\epsilon}\frac{\Gamma(1-\epsilon)}{-\epsilon}=
                     \widehat{k}(0)+A\,\Gamma(-\epsilon)\omega^{\epsilon}
\end{align*}

\clearpage
\subsubsection*{Power-law resulting correlation}
Empirical observations motivate the following Ansatz for the quadratic correlation:
\[
    \mathcal{C}^{(2)}(\tau)\xrightarrow{\tau\to\infty}B/\tau^\beta,\quad 0<\beta<1.
\]
We hope to be able to reconciliate the ``fast'' decay of $k$ (since integrable)
with a very slow asymptotic decay of the solution, by finding a 
relationship between $\beta$ and $\epsilon$.
In Laplace space, we have
\begin{align*}
    \widehat{\mathcal{C}^{(2)}}(\omega)&=\int_0^\infty \e^{-\omega\tau}\,\mathcal{C}^{(2)}(\tau)\,\d{\tau}
                                        =\int_0^\infty \mathcal{C}^{(2)}(x/\omega)\,\e^{-x}\,\d{x}/\omega\\
                   &\xrightarrow{\omega\to0}\int_0^\infty B\left(\frac{\omega}{x}\right)^{\beta}\e^{-x}\frac{\d{x}}{\omega}
                                        =B\,\Gamma(1-\beta)\,\omega^{\beta-1}\\
    \widehat{\mathcal{C}^{(2)}_-}(\omega)&=\widehat{k}(\omega)\, \widehat{\mathcal{C}^{(2)}}(\omega)\\
                     &\xrightarrow{\omega\to0}B\,\Gamma(1-\beta)\,\omega^{\beta-1} \left[(1-s^2)-A\frac{\Gamma(1-\epsilon)}{\epsilon}\omega^{\epsilon}\right]\\
        {\mathcal{C}^{(2)}_+}(\tau)  &=\int_0^\infty k(\tau+u)\,\mathcal{C}^{(2)}(u)\,\d{u}\qquad\text{dominated by the large $u\sim\tau$}\\
                     &\xrightarrow{\tau\to\infty}\int_0^\infty \frac{B}{u^\beta}\frac{A}{(\tau+u)^{1+\epsilon}}\,\d{u}
                                                =\tau^{-\beta-(1+\epsilon)+1}\int_0^\infty \frac{B}{x^\beta}\frac{A}{(1+x)^{1+\epsilon}}\d{x}\\
                                               &=\tau^{-(\beta+\epsilon)}AB\int_0^1 \left(\frac{y}{1-y}\right)^\beta y^{1+\epsilon}\frac{\d{y}}{y^2}
                                                =\tau^{-(\beta+\epsilon)}AB\frac{\Gamma(\epsilon+\beta)\,\Gamma(1-\beta)}{\Gamma(1+\epsilon)}\\
    \widehat{\mathcal{C}^{(2)}_+}(\omega)&=\int_0^\infty\e^{-\omega\tau}\,\mathcal{C}^{(2)}_+(\tau)\,\d{\tau}=\int_0^\infty \mathcal{C}^{(2)}_+(x/\omega)\,\e^{-x}\,\d{x}/\omega\\
                     &\xrightarrow{\omega\to0}AB\frac{\Gamma(\epsilon+\beta)\,\Gamma(1-\beta)}{\epsilon\Gamma(\epsilon)}\,\omega^{\epsilon+\beta-1}\,\Gamma(1-\epsilon-\beta)
\end{align*}
Collecting all the terms, we finally get
\begin{align*}
    B\,\Gamma(1-\beta)\,\omega^{\beta-1}&=B\,\Gamma(1-\beta)\,\omega^{\beta-1} \left[(1-s^2)-A\frac{\Gamma(1-\epsilon)}{\epsilon}\omega^{\epsilon}\right]\\
                                    &+AB\,\frac{\Gamma(\epsilon+\beta)\,\Gamma(1-\beta)}{\epsilon\Gamma(\epsilon)}\,\omega^{\epsilon+\beta-1}\,\Gamma(1-\epsilon-\beta)\\
    s^2B\,\Gamma(1-\beta)\,\omega^{\beta-1}&=AB\,\omega^{\epsilon+\beta-1}\,\Gamma(1-\beta) \left[\frac{\Gamma(\epsilon+\beta)\,\Gamma(1-\epsilon-\beta)}{\epsilon\,\Gamma(\epsilon)}-\frac{\Gamma(1-\epsilon)}{\epsilon}\right]
\end{align*}
Very surprisingly, the functional dependence is perfectly equalized in the limit $s^2\to0$: 
the different terms have compatible powers of $\omega$.
For the relationship to be an equality, it is necessary that $s^2\to0$ 
(i.e.\ the model is at the critical limit of quadratic non-stationarity),
and simultaneously the RHS term vanishes:
\begin{align*}
    \Gamma(\epsilon+\beta)\,\Gamma(1-\epsilon-\beta)&=\Gamma(1-\epsilon)\,\Gamma(\epsilon)\\
    \sin\left(\pi(\epsilon+\beta)\right)&=\sin\left(\pi\epsilon\right)\\
    (\epsilon+\beta)-\epsilon=2n&\quad\text{or}\quad(\epsilon+\beta)+\epsilon=1+2n,\qquad n\in\mathds{Z}
\end{align*}
For $\widehat{k}(0)=1-s^2\to 1$ to hold, $\epsilon$ must be close to $0$, 
and in this case there is only one solution 
\[
    \boxed{\beta=1-2\epsilon, \qquad 0<\epsilon<\tfrac{1}{2}}.
\]

\clearpage
\subsubsection*{Exponential resulting correlation}
We now check whether there might exist a crossover to an exponentially vanishing asymptotic correlation:
\[
    \mathcal{C}^{(2)}(\tau)\xrightarrow{\tau\to\infty}B\,\e^{-\beta\tau},\quad 0<\beta.
\]
Again, the Laplace transform are computed asymptotically: 
\begin{align*}
    \widehat{\mathcal{C}^{(2)}}(\omega)  &=\frac{B}{\omega+\beta}\xrightarrow{\omega\to0}\frac{B}{\beta}\left(1-\frac{\omega}{\beta}\right)\\
    \widehat{\mathcal{C}^{(2)}_-}(\omega)&\xrightarrow{\omega\to0}\frac{B}{\omega+\beta} \left[(1-s^2)+A\,\Gamma(-\epsilon)\omega^{\epsilon}\right]\\
        {\mathcal{C}^{(2)}_+}(\tau)  &=\int_0^\infty k(\tau+u)\,\mathcal{C}^{(2)}(u)\,\d{u}\qquad\text{dominated by the large $u\sim\tau$}\\
                     &\xrightarrow{\tau\to\infty}\int_0^\infty B\,\e^{-\beta u}\frac{A}{(\tau+u)^{1+\epsilon}}\,\d{u}
                                                =\tau^{-(1+\epsilon)+1}\int_0^\infty B\,\e^{-\beta\tau u}\frac{A}{(1+x)^{1+\epsilon}}\,\d{x}\\
                                               &=\tau^{-\epsilon}AB\,\e^{\beta\tau}\int_1^\infty y^{-1-\epsilon}\,\e^{-\beta\tau y}\,\d{y}\\
    \widehat{\mathcal{C}^{(2)}_+}(\omega)&=\int_0^\infty\e^{-\omega\tau}\mathcal{C}^{(2)}_+(\tau)\,\d{\tau}=\int_0^\infty \mathcal{C}^{(2)}_+(x/\omega)\e^{-x}\,\d{x}/\omega\\
                     &\xrightarrow{\omega\to0}\omega^{-1}\int_0^\infty \d{x}\, \e^{-x} \left(\frac{x}{\omega}\right)^{-\epsilon}AB\int_1^\infty \d{y}\,y^{-1-\epsilon}\e^{-\beta\frac{x}{\omega}y}\\
                     &                       =\omega^{\epsilon-1}AB\int_1^\infty \d{y}\,y^{-1-\epsilon}\int_0^\infty \d{x}\, x^{-\epsilon}\e^{-x(1+\frac{\beta}{\omega}y)}\\
                     &                       =\omega^{\epsilon-1}AB\int_1^\infty \d{y}\,y^{-1-\epsilon} \left(1+\beta\frac{y}{\omega}\right)^{-(1-\epsilon)}\Gamma(1-\epsilon),\quad 1-\epsilon>0\\
                     &                       =AB\,\Gamma(1-\epsilon)\int_1^\infty \d{y}\,y^{-1-\epsilon} (\omega+\beta y)^{\epsilon-1},\quad 1-\epsilon>0\\
                     &                       =AB\,\Gamma(1-\epsilon)\,\beta^\epsilon\int_\beta^\infty \d{z}\,\frac{(\omega+z)^{\epsilon-1}}{z^{\epsilon+1}}\\
                     &                       =AB\,\Gamma(1-\epsilon)\,\frac{(\beta+\omega)^\epsilon-\beta^\epsilon}{\omega\epsilon}
\end{align*}
Bringing everything together,
\[
    \frac{\omega^\epsilon}{\omega+\beta}=\frac{(\omega+\beta)^\epsilon-\beta^\epsilon}{\omega},
\]
which only works if $\beta=0$.

\nocite{dempster2002risk}
\nocite{tilak2012study,klein2005capital}
\bibliographystyle{Thesis_Template/ThesisStyle}
\bibliography{../paper/biblio_all}


\end{document}